\newtheorem{thm}{Theorem}[section]
\newtheorem{lem}[thm]{Lemma}
\newtheorem{assum}[thm]{Assumption}
\newtheorem{proposition}[thm]{Proposition}
\newtheorem{remark}[thm]{Remark}
\newtheorem{definition}[thm]{Definition}
\newtheorem{corollary}[thm]{Corollary}
\newcommand\reallywidehat[1]{%
\savestack{\tmpbox}{\stretchto{%
  \scaleto{%
    \scalerel*[\widthof{\ensuremath{#1}}]{\kern-.6pt\bigwedge\kern-.6pt}%
    {\rule[-\textheight/2]{1ex}{\textheight}}
  }{\textheight}%
}{0.5ex}}%
\stackon[1pt]{#1}{\tmpbox}%
}
\newcommand*{\rom}[1]{\expandafter\@slowromancap\romannumeral #1@}
\newcommand{\Cov}{\mathrm{Cov}}
\DeclareMathOperator{\diag}{diag}
\newcommand{\abs}[1]{\left|#1\right|}
\newcommand{\N}{\mathbb{N}}
\newcommand{\Z}{\mathbb{Z}}
\newcommand{\E}{\mathbb{E}}
\newcommand{\cN}{\mathcal{N}}
\DeclareMathOperator*{\argmax}{arg max}
\DeclareMathOperator*{\argmin}{arg min}
\newcommand{\p}[1]{\left(#1\right)}
\newcommand{\pp}[1]{\left[#1\right]}
\newcommand{\ppp}[1]{\left\{#1\right\}}
\newcommand{\norm}[1]{\left\|#1\right\|}
\newcommand{\s}[1]{\mathsf{#1}}
\numberwithin{equation}{section}
\newcommand{\calN}{{\cal N}}
\newcommand{\calP}{{\cal P}}
\newcommand{\calY}{{\cal Y}}
\titleformat{\part}[display]
  {\centering\bfseries\LARGE}
  {}{0pt}{}
\titlespacing*{\part}{0pt}{1.0em}{1.0em}
\newcolumntype{L}[1]{>{\raggedright\arraybackslash}p{#1}}
\setlist[itemize]{leftmargin=*,nosep,topsep=0pt,parsep=0pt,partopsep=0pt}
\renewcommand\paragraph{\@startsection{paragraph}{4}{\z@}%
  {1ex}
  {-1em}
  {\normalfont\normalsize\bfseries}}
\begin{document}

\title{Expectation-Maximization for Low-SNR Multi-Reference Alignment }

\author[1]{Amnon Balanov\thanks{Corresponding author: \url{amnonba15@gmail.com}}}
\author[1]{Wasim Huleihel}
\author[1]{Tamir Bendory}

\affil[1]{\normalsize School of Electrical and Computer Engineering, Tel Aviv University, Tel Aviv 69978, Israel}

\maketitle

\begin{abstract}
We study the multi-reference alignment (MRA) problem of recovering a signal from noisy observations acted on by unknown random circular shifts. While the information-theoretic limits of MRA are well characterized in many settings, the algorithmic behavior at low signal-to-noise ratio (SNR), the regime of practical interest, remains poorly understood. In this paper, we analyze the expectation–maximization (EM) algorithm, a widely used method for MRA, and characterize its convergence dynamics and initialization dependence in the low-SNR limit.

On the convergence side, we prove a two-phase phenomenon near the ground truth as $\mathrm{SNR}\to 0$: an initial contraction with error decaying as $\exp(-\, \mathrm{SNR} \cdot t)$ followed by a much slower phase scaling as $\exp(- \,\mathrm{SNR}^2 \cdot t)$, where $t$ is the iteration number. This yields an iteration-complexity lower bound $T \gtrsim \mathrm{SNR}^{-2}$ to reach a small fixed target accuracy, revealing a severe computational bottleneck at low SNR. We also identify a finite-sample instability, which we term \emph{Ghost of Newton}, in which EM  initially approaches the ground truth but later diverges, degrading reconstruction quality.

On the bias side, we analyze EM in the noise-only setting ($\mathrm{SNR}=0$), a regime  referred to as Einstein from Noise, to highlight its pronounced sensitivity to initialization. We prove that the EM map preserves the Fourier phases of the initialization across all iterations, while the corresponding Fourier magnitudes contract toward zero at a slow rate of $(1+T)^{-1/2}$. Consequently, although the amplitudes vanish in the limit of $T \to \infty$ iterations, the reconstructed structure continues to reflect the geometry encoded by the template's Fourier phases. We complement this analysis by studying model bias in the low-SNR regime, as well as finite-sample effects. 
Together, these results expose fundamental computational and initialization-driven limitations of EM for MRA in the low-SNR regime.

\end{abstract}

\newpage
\tableofcontents

\newpage
\section{Introduction}

Reconstructing signals from noisy observations subject to unknown group transformations is a fundamental challenge in numerous scientific fields, including signal processing, computer vision, medical imaging, and structural biology. A widely studied abstraction of this challenge is the multi-reference alignment (MRA) problem \cite{bandeira2014multireference,bendory2017bispectrum,romanov2021multi,bandeira2023estimation}.

\subsection{Problem description} 
In this work, we study the classical MRA problem, where the objective is to recover an unknown signal $x^\star \in \mathbb{R}^d$ from noisy measurements consisting of randomly circularly shifted copies of $x^\star$.
For each shift $\ell \in \{0,1,\ldots,d-1\}$, we define the circular-shift operator $\mathcal{T}_\ell:\mathbb{R}^d \to \mathbb{R}^d$ by
\[
    [\mathcal{T}_\ell z]_j \triangleq z_{(j - \ell) \bmod d}, \quad \text{for any } z \in \mathbb{R}^d \text{ and } 0 \leq j \leq d-1.
\]
Under this model, the observations are given by,
\begin{align}
    (\text{The MRA Model}) \quad y_i = \mathcal{T}_{\ell_i} x^\star + \xi_i, \label{eqn:mainModel1D}
\end{align}
for $1\leq i\leq n$, where $\{\xi_i\}_i\stackrel{\text{i.i.d.}}{\sim} \mathcal{N}\p{0, \sigma^2 I_{d}}$ are independent and identically distributed (i.i.d.) Gaussian random variables with zero mean and variance $\sigma^2$. The shifts $\ppp{\ell_i}_i$ are drawn independently and uniformly at random, each over the set $\{0,1,\ldots, d-1\}$, independently of $\{\xi_i\}_i$. The objective is to recover the underlying signal $x^\star$,  up to a global circular shift, from the noisy and randomly shifted observations $\{y_i\}_{i=1}^{n}$. Throughout, we define the signal-to-noise ratio (SNR) by 
\begin{align}
    \mathrm{SNR} \triangleq \frac{\norm{x^\star}^2}{d \sigma^2}. \label{eqn:snr-def}
\end{align}

At high SNR, one can often recover the underlying structure by first aligning the observations and then averaging them, for example via synchronization techniques~\cite{singer2011three,singer2011angular,perry2018message}.
However, in the low-SNR regime, which is common in many real-world applications and the main focus of this work, reliable alignment becomes infeasible, as the noise overwhelms the signal~\cite{dadon2024detection}. Despite this challenge in the low-SNR regime, it is still possible to recover the underlying structure by modeling the unknown shifts as nuisance variables and applying statistical inference techniques to marginalize them out. 
  
\paragraph{The expectation-maximization algorithm.}
To estimate the unknown signal $x^\star$ in the MRA model~\eqref{eqn:mainModel1D}, a widely used approach is the expectation-maximization (EM) algorithm~\cite{dempster1977maximum,sigworth1998maximum,scheres2012relion}. 
In practice, EM underpins many MRA pipelines, most notably in cryo-EM~\cite{bai2015cryo,nogales2016development,renaud2018cryo,yip2020atomic} and cryo-electron tomography (cryo-ET)~\cite{chen2019complete,turk2020promise,watson2024advances}. 

EM is an iterative refinement scheme that alternates between (i) estimating, for each observation, a probability distribution over all possible shifts (E-step), and (ii) updating the current signal estimate as the corresponding probability-weighted average (M-step). Each iteration is guaranteed to monotonically (but not necessarily strictly) increase the likelihood, and the procedure converges to a stationary point of the likelihood.

A well-known limitation of EM is its sensitivity to initialization~\cite{bubeck2012initialization, balanov2025confirmation}. In practice, the iterations start from an initial template $x_{\mathrm{template}}$, chosen either at random or using domain knowledge. While a good template can accelerate convergence and improve reconstruction quality~\cite{singer2020computational,bendory2020single}, finding one is often challenging. Moreover, handcrafted templates can introduce confirmation bias, potentially steering reconstructions toward preconceived structures~\cite{henderson2013avoiding, balanov2024einstein,balanov2025confirmation}. This issue is especially pronounced in cryo-electron microscopy (cryo-EM), where substantial effort has been devoted to designing effective initialization strategies for EM, e.g.,~\cite{barnett2017rapid,levin20183d}.

\paragraph{The sample complexity of the MRA problem.}
Recent theoretical advances have shown that, even when individual translations cannot be reliably estimated, the underlying signal in the MRA problem can still be recovered provided that the number of observations is sufficiently large~\cite{abbe2018estimation,perry2019sample,bandeira2023estimation}.  
In the asymptotic regime where both the number of observations $n \to \infty$ and the $\mathrm{SNR} \to 0$, a necessary condition for unique recovery is that the number of samples satisfies $n = \omega(\mathrm{SNR}^{-3})$,  that is, $n / \mathrm{SNR}^{-3} \to \infty$, as $n \to \infty$, and $\mathrm{SNR} \to 0$;  see Section~\ref{sec:MRAreconstructionRegimes} for further discussion. Moreover, this information-theoretic scaling is achieved by method-of-moments based estimators, which recover $x^\star$ via low-order moments such as autocorrelations and the bispectrum~\cite{perry2019sample,bendory2017bispectrum}.  
However, these sample-complexity guarantees address identifiability rather than algorithmic performance: they do not imply that the EM algorithm will converge to the maximum-likelihood estimate (MLE) or yield accurate estimates.
Despite this gap, EM remains a widely used workhorse for MRA and related latent-group models, making it important to understand its behavior in the high-noise, low-SNR regime, where its practical performance is still far from fully characterized.

\paragraph{The EM algorithm in the low SNR regime.}
Recent studies have begun to shed light on this gap. Fan \emph{et~al.}~\cite{fan2023likelihood} showed that the likelihood landscape of discrete orbit-recovery models, i.e., latent-group models in which observations are transformed by an unknown element of a finite group (such as cyclic shifts in MRA), can contain spurious local optima that may trap EM iterations.
Extending this analysis to the cryo-EM setting, they further demonstrated that the Fisher information captures intrinsic group-invariant degeneracies that slow or prevent convergence~\cite{fan2024maximum}.  
Complementary work by Katsevich and Bandeira~\cite{katsevich2023likelihood} revealed that in the low-SNR regime, EM behaves approximately as a moment-matching algorithm, offering additional insight into its local dynamics.  
Despite these advances, the quantitative convergence rate of EM at low SNR remains insufficiently characterized. 

\subsection{Main results}
In this work, we develop a unified theory of EM for MRA in low-SNR regimes, showing that the EM dynamics exhibit complementary behaviors as a function of both the SNR and the initialization scale.
Table~\ref{tab:regimes_contribution} provides a schematic overview of the regimes studied, and summarizes the corresponding contributions in both the population limit ($n\to\infty$) and the finite-sample setting (finite $n < \infty$). 

We next elaborate on the key results in each regime below. Throughout, we write $\hat{x}^{(t)}$ for the $t$-th EM iterate, $\hat{x}^{(0)}$ for the initialization, and $\hat{\s{X}}^{(t)}[k]$ for the $k$-th discrete Fourier coefficient of $\hat{x}^{(t)}$.
We complement the theoretical analysis with population-level numerical experiments.
In these simulations, we evaluate the expectations defining the population EM operator via Gauss--Hermite quadrature, and we restrict to small ambient dimensions (typically $d=5$) to keep the quadrature computations tractable.

\begin{table*}[t]
\centering
\begingroup
\setlength{\tabcolsep}{3.5pt}
\renewcommand{\arraystretch}{1.05}
\footnotesize

\caption{\textbf{Regimes and main contributions of this work.}
We summarize the behaviors studied as a function of signal to noise ratio (SNR) and initialization.}
\label{tab:regimes_contribution}

\begin{tabular}{|L{2.7cm}||L{2.3cm}|L{2.4cm}|L{4cm}|L{3.5cm}|}
\hline
\textbf{Regime} &
\textbf{SNR} &
\textbf{Initialization} &
\textbf{Main results (population)} &
\textbf{Main results (finite-sample)} \\
\hline\hline

\textbf{Convergence in the vicinity of ground-truth (Section~\ref{sec:fundemental-properties})} &
$0<\mathrm{SNR}\ll 1$ &
In the basin of attraction &
Two-phase convergence dynamics (Theorem~\ref{thm:lowSNR-two-phase}); Iteration complexity $T \gtrsim \mathrm{SNR}^{-2}$ (Corollary~\ref{cor:flat-iter-lower}).
& The Ghost of Newton phenomenon (Section~\ref{sec:GoN}); sample complexity $n \gtrsim \mathrm{SNR}^{-3}$ (Theorem~\ref{thm:tight-1d-branch}).
\\
\hline

\textbf{Bias to initialization (Section~\ref{sec:bias-to-initalization})} &
$0<\mathrm{SNR}\ll 1$ &
$\|\hat{x}^{(0)}\| \ll \sigma$ &
Iteration barrier: $T \gtrsim \mathrm{SNR}^{-1}$ to lose dependence on initialization (Theorem~\ref{thm:mra-lowSNR-iteration-bias-to-init}).
&

\\
\hline

\textbf{Einstein from Noise (Section~\ref{sec:EfN})} &
$\mathrm{SNR} = 0$ &
$\|\hat{x}^{(0)}\| > 0$ & Fourier phases stay at initialization (Corollary~\ref{thm:3.5});  Fourier magnitudes contract toward $0$ at a non-geometric rate~\eqref{eqn:asymp-low-mag-main} (Theorem~\ref{thm:low-mag-rate}).
&
Fourier phases drift due to finite-sample (Proposition~\ref{prop:finite_sample_one_step} and Proposition~\ref{thm:lowerBoundAfterTiterations_rewrite}).
\\
\hline

\end{tabular}
\endgroup
\end{table*}

\subsubsection{Convergence in the vicinity of ground-truth}

The first part of this work (Section~\ref{sec:fundemental-properties}) studies the behavior of EM in a \emph{neighborhood of the ground-truth} in the low-SNR regime, focusing on the population EM map. 

\paragraph{Two phases of convergence.} 
Figure~\ref{fig:2}(a) highlights a striking phenomenon: EM exhibits two phases of convergence in the vicinity of the ground-truth: first, a moderate convergence rate regime with decay proportional to $\exp \p{-c \, \mathrm{SNR} \cdot t}$ followed by a markedly slower phase scaling as $\exp\p{- c' \, \mathrm{SNR}^2 \cdot  t}$. We provide a theoretical explanation of this behavior in Section~\ref{sec:fundemental-properties} and Theorem~\ref{thm:lowSNR-two-phase}.

\paragraph{Iteration complexity.}
As a direct consequence of this two-phase convergence behavior, Section~\ref{sec:iterationComplexity} derives iteration-complexity bounds for EM in MRA. In particular, even within the vicinity of the ground-truth, EM requires at least $T \gtrsim \mathrm{SNR}^{-2}$ iterations to attain any fixed accuracy, so the runtime degrades significantly as the noise level increases. We summarize these findings informally below; see Corollary~\ref{cor:flat-iter-lower} for iteration complexity. 

\begin{thm}[Informal: Iteration complexity of EM in MRA]
In the low-SNR regime of the MRA model~\eqref{eqn:mainModel1D} (i.e., $\mathrm{SNR}\to 0$), EM needs at least $T\gtrsim \mathrm{SNR}^{-2}$ iterations to reach a small, fixed target accuracy.
\end{thm}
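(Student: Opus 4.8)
The plan is to extract the iteration-complexity lower bound directly from the slow stage of the two-stage convergence established in Theorem~\ref{thm:lowSNR-two-phase}, by isolating the spectral mode of the linearized population EM map that governs late-stage behavior. The starting point is the low-SNR Taylor expansion of the population EM operator in the inverse-noise parameter $\beta = 1/\sigma^2$, which scales linearly with $\mathrm{SNR}$ for fixed signal norm and dimension. Passing to the Fourier domain diagonalizes the circular-shift symmetry, so that each frequency splits into a magnitude and a phase, and the error $\hat{x}^{(t)} - x^\star$ decomposes accordingly. The crucial point to exploit is that magnitudes (the power spectrum) are second-moment information and hence shift-invariant, whereas the phases are fixed only through the bispectrum.

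First I would linearize the EM map around the fixed point $x^\star$ and compute its Jacobian in this Fourier basis. The key structural fact is that the first-order $O(\mathrm{SNR})$ term of the expansion, which encodes second-moment matching, contracts the magnitude perturbations but leaves the phase perturbations neutral (eigenvalue exactly $1$ at this order); the phases are moved only at second order $O(\mathrm{SNR}^2)$, through the bispectrum. Consequently the Jacobian has magnitude eigenvalues of the form $1 - \Theta(\mathrm{SNR})$ and phase eigenvalues of the form $1 - \Theta(\mathrm{SNR}^2)$, and the slowest phase mode, with eigenvalue $\lambda = 1 - c'\,\mathrm{SNR}^2$, sets the asymptotic rate.

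Next I would track the slow-mode component of the error. Decomposing $\hat{x}^{(t)} - x^\star$ in the eigenbasis, its projection onto the slowest phase eigenvector evolves to leading order as $\lambda^t \approx \exp(-c'\,\mathrm{SNR}^2\, t)$ times its initial value. Provided the initialization in the basin of attraction has $\Omega(1)$ overlap with this mode, i.e.\ a nonvanishing phase misalignment, which is generic, we obtain the matching lower bound $\norm{\hat{x}^{(t)} - x^\star} \gtrsim \exp(-c'\,\mathrm{SNR}^2\, t)$. Requiring this to drop below a fixed target accuracy $\epsilon$ then forces $T \gtrsim (c'\,\mathrm{SNR}^2)^{-1}\log(1/\epsilon) = \Omega(\mathrm{SNR}^{-2})$, which is the claimed bound; the formal statement is Corollary~\ref{cor:flat-iter-lower}.

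The main obstacle is pinning the $\Theta(\mathrm{SNR}^2)$ scaling of the slow (phase) modes from both sides: one must show that the $O(\mathrm{SNR})$ term of the EM map vanishes identically on the phase subspace, so that no magnitude-speed contraction leaks into the phases, and then verify that the $O(\mathrm{SNR}^2)$ term is genuinely nonzero and contracting, so the rate is neither faster nor degenerate. A secondary difficulty is controlling the neglected higher-order and nonlinear terms near $x^\star$ uniformly in $t$, to ensure they neither accelerate the phase convergence nor destroy the clean two-timescale separation; managing the crossover from the fast magnitude-driven stage to the slow phase-driven stage is where the bulk of the technical care is required.
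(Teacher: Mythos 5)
Your proposal is correct and follows essentially the same route as the paper: the Fourier-domain Jacobian expansion separating magnitude modes (eigenvalues $1-\Theta(\mathrm{SNR})$) from flat phase modes (eigenvalues $1+O(\mathrm{SNR}^2)$) is exactly the content of Proposition~\ref{prop:K2-MRA} and Corollary~\ref{cor:block-spectral}, and your slow-mode tracking under a generic-overlap initialization with uniform control of the quadratic remainder is precisely what Theorem~\ref{thm:joint-t-e0} and Corollary~\ref{cor:iter-complex-flow} provide, assembled into Corollary~\ref{cor:flat-iter-lower}. One small refinement: the lower bound $T\gtrsim \mathrm{SNR}^{-2}$ only requires the one-sided estimate $1-\rho(J(\beta))\lesssim \mathrm{SNR}^{2}$ on the spectral gap, so of the two halves of your $\Theta(\mathrm{SNR}^2)$ obstacle, only the vanishing of the $O(\mathrm{SNR})$ term on the phase subspace is needed here; the ``genuinely contracting'' half matters for convergence, not for the iteration lower bound.
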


\begin{figure*}[!t]
    \centering
    \includegraphics[width=1.0 \linewidth]{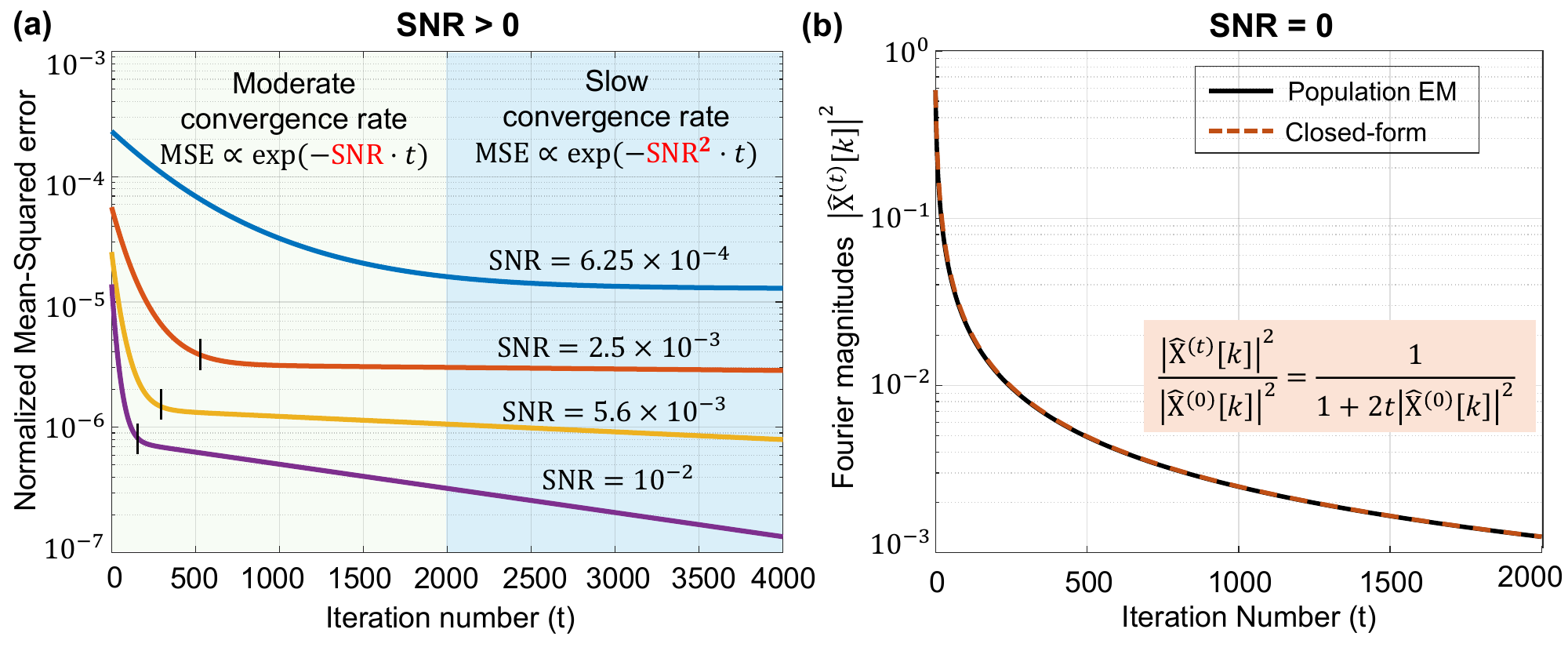}
    \caption{\textbf{Population EM dynamics in the low-SNR regime.}
    \textbf{(a)} Two-phase convergence at positive $0 < \mathrm{SNR} \ll 1$. The plot shows the normalized population MSE $\|\hat{x}^{(t)}-x^\star\|_2^2/\|x\|_2^2$ versus iteration $t$ for several low SNR values (annotated on the curves).
    The population EM map exhibits a clear two-phase behavior:
    an initial moderate-rate geometric decay, $\mathrm{MSE}\propto \exp\p{-c \, \mathrm{SNR} \cdot t}$, followed by a slow phase where the error decreases much more slowly,  $\mathrm{MSE}\propto\exp\p{- c' \, \mathrm{SNR}^2 \cdot t}$. 
    \textbf{(b)} Fourier magnitudes contraction to 0 at $\mathrm{SNR}=0$.
    The Fourier magnitudes $|\s{\hat{X}}^{(t)}[k]|^2$ are shown for a representative non-mean Fourier mode under the population EM operator.
    The dashed curve is the closed-form prediction from Theorem~\ref{thm:low-mag-rate}, $|\s{\hat{X}}^{(t)}[k]|^2/|\s{\hat{X}}^{(0)}[k]|^2 = (1+2t|\s{\hat{X}}^{(0)}[k]|^2)^{-1}$.
    The agreement corroborate that, in the regime of $\mathrm{SNR} = 0$, Fourier magnitudes decay toward zero at a rate $\asymp(1+t)^{-1}$, while the corresponding Fourier phases remain frozen along the population trajectory (not shown in the Figure).
    Simulation parameters: Population expectations defining the EM update are evaluated numerically via Gauss-Hermite quadrature over the Gaussian noise with a fixed signal dimension $d=5$. }
    \label{fig:2} 
\end{figure*}

\subsubsection{Bias to initialization at low-SNR}
In addition to the local convergence analysis, we characterize a complementary \emph{bias to initialization} regime of EM in the low (but non-vanishing) SNR setting (Section~\ref{sec:bias-to-initalization}). 
Whereas the convergence results describe the dynamics once the iterate is already close to the ground truth, the bias to initialization regime applies more broadly whenever the EM initialization has small norm relative to the noise level.
Theorem~\ref{thm:mra-lowSNR-iteration-bias-to-init} shows that while the mean component is estimated after a single iteration, the remaining population drift in the other non-mean components is extremely weak: the displacement per iteration is $O(\mathrm{SNR})$. 
Consequently, the EM ``forgets'' its initialization only on a long time scale: achieving a constant fraction change relative to $\|\hat{x}^{(0)}\|$ requires at least $T\gtrsim \mathrm{SNR}^{-1}$ iterations.

\subsubsection{Model bias at vanishing SNR}

A complementary contribution of this work is a quantitative analysis of the Einstein from Noise phenomenon as a form of algorithmic confirmation bias. In the extreme case where $\mathrm{SNR} = 0$ (i.e., when the signal is absent), the reconstruction tends to align with the \emph{initial template}  (the user’s prior guess) over many iterations. This phenomenon may also be viewed as the limiting case of vanishing signal strength: $\mathrm{SNR}\to 0$ while the initialization $\hat{x}^{(0)}$ remains nonvanishing (i.e., does not scale to zero with the signal).
While this effect was previously observed and studied in single-iteration hard-assignment algorithms \cite{shatsky2009method, sigworth1998maximum,balanov2024einstein}, in this work, we extend the analysis to the EM algorithm across multiple iterations.

Our analysis characterizes the Einstein from Noise dynamics both for finite-dimensional signals $d \in \mathbb{N}$ and the asymptotic signal length $d \to \infty$. For fixed dimension $d$, Theorem~\ref{thm:1} shows that each population EM step acts diagonally in the Fourier basis: all non-mean Fourier phases are preserved from one iteration to the next, while their magnitudes contract. 
Consequently, Corollary~\ref{thm:3.5} establishes that along the entire population trajectory the phases remain identical to those of the initialization, even as the iterates converge to the origin. 
Moreover, Theorem~\ref{thm:low-mag-rate} yields an explicit contraction law: for any fixed non-mean frequency $k$ and iteration $T$,
\begin{align}
    \frac{|\s{\hat{X}}^{(T)}[k]|}{|\s{\hat{X}}^{(0)}[k]|} = \frac{1}{\sqrt{1+2T|\s{\hat{X}}^{(0)}[k]|^2}}.
    \label{eqn:asymp-low-mag-main}
\end{align}
This result empirically matches the convergence of the population EM as seen in Figure~\ref{fig:2}(b). 
Thus, at $\mathrm{SNR}=0$, the Fourier magnitudes decay toward zero as $T\to\infty$, but only at the slow rate $\asymp(1+T)^{-1/2}$, while the Fourier phases remain identical to their initialized values.

\subsubsection{Finite-sample analysis}

We also study the finite-sample manifestations of the regimes described above.
For finite-sample numerical studies (Monte-Carlo experiments), we use a running example that appears throughout the paper (Figure~\ref{fig:1}). The ground-truth signal $x^\star$ is a grayscale image of Newton; each observation is generated by applying a random 2D cyclic shift to $x^\star$ and then adding Gaussian noise.
We compare two iterative procedures: (i) the standard EM algorithm, and (ii) a hard-assignment variant~\cite{bendory2020single} which, at each iteration, assigns each $y_i$ to its most likely transformation $\mathcal{T}_{\ell_i}$ given the current estimate, and then updates the signal by averaging the aligned observations. Both methods require initialization, and in all experiments we use a grayscale image of Einstein as a representative initial template.

\begin{figure*}
    \centering
     \includegraphics[width=0.9 \linewidth]{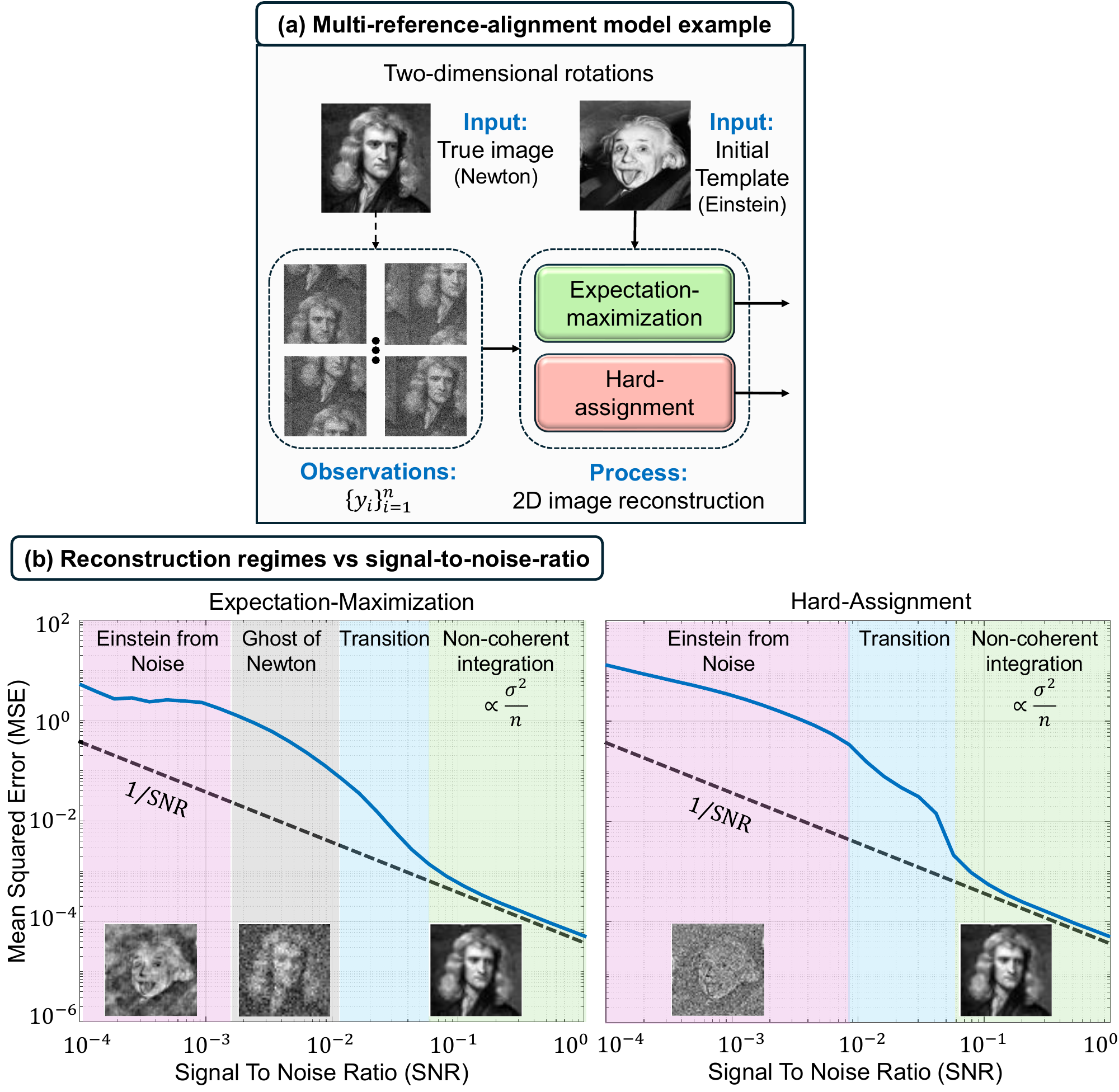}
    \caption{ \textbf{Empirical setup used in this work and reconstruction regimes.}
    \textbf{(a)} Motivating application: Illustration of the MRA model \eqref{eqn:mainModel1D} in a two-dimensional setting. 
    The goal is to reconstruct a 2D Newton image from noisy, randomly shifted observations, starting from an initial template of Einstein's image. We study the \emph{expectation-maximization (EM) algorithm as the primary method analyzed in this work}, and include a \emph{hard-assignment method as a baseline for comparison}.
    \textbf{(b)} Reconstruction performance for both methods is quantified by the MSE, defined in~\eqref{eqn:mseDef}, and plotted as a function of SNR under cyclic 2D shifts. Four distinct regimes are observed:  
    1) \textit{Einstein from Noise}: at very low SNR, the reconstruction resembles the initial template (Einstein);  
    2) \textit{Ghost of Newton}: at moderately low SNR, the EM algorithm initially converges to the true structure (Newton), but in later iterations, the noise term causes the reconstruction to gradually deteriorate, leading to increasingly corrupted estimates (see Figure \ref{fig:4} for further illustration). 
    3) a \textit{transition region}, 
    4) \textit{high-SNR regime}, where alignment is accurate and performance approaches the baseline case, following the scaling law $\s{MSE} \propto 1/\mathrm{SNR}$.  
    All experiments use $d = 64 \times 64$ images, $n = 2 \times 10^4$ observations, $T = 200$ iterations, and average results over 30 Monte Carlo trials per data point.  
    }
    \label{fig:1} 
\end{figure*}

\paragraph{The Ghost of Newton phenomenon.}
We identify a finite-sample artifact, which we term the \emph{Ghost of Newton} (named after the Newton image used as ground truth in our running example). This phenomenon appears at intermediate SNR levels and can be seen when the EM algorithm initially converges toward the correct structure (Newton), but subsequent iterations introduce a degradation in accuracy (see Figure~\ref{fig:4}). This results in a non-monotonic MSE curve: the error initially decreases as the reconstruction improves, then increases as the estimate begins to diverge from the ground truth (Figure~\ref{fig:4}(a)).

The underlying mechanism is the discrepancy between the empirical and population EM operators. In the population setting ($n \to \infty$), and with an initialization in the vicinity of the ground-truth, EM converges to the true signal $x^\star$. For finite $n$, however, the empirical EM map converges instead to a sample-dependent fixed point $\hat{x}_n^\star$ that maximizes the finite-sample likelihood. Consequently, when initialized near $x^\star$, the empirical trajectory initially tracks the population dynamics but eventually peels away under accumulated finite-sample perturbations, drifting toward $\hat{x}_n^\star$. An analysis of this effect is provided in Section~\ref{sec:GoN}.

\paragraph{Finite-sample Einstein from Noise.}
Finally, we turn to the finite-sample Einstein from Noise regime. 
Proposition~\ref{prop:finite_sample_one_step} shows that each empirical EM step introduces a small but nonzero phase drift: the phase MSE between successive iterates is of order $1/n$. 
Proposition~\ref{thm:lowerBoundAfterTiterations_rewrite} further shows that these per-iteration drifts accumulate at most quadratically in the iteration count, yielding a template-referenced phase error that scales as $\asymp T^2/n$ over the first $T$ iterations. 

\section{Preliminaries and fundamental properties}
\label{sec:problem_formulation}

This section introduces the notation, problem setup, and EM framework used throughout the paper. We first present the EM algorithm and define the associated fixed-point maps: the population operator $M$ and its empirical counterpart $M_n$. For context, we also describe a widely used alternative, the hard-assignment method, which can be viewed as a simplified (“poor-man’s”) version of EM. We then prove a one-step consistency result in the high-SNR regime, which leads to global convergence from arbitrary initializations. 

\subsection{Notation}
Throughout, $\xrightarrow[]{\mathcal{D}}$, $\xrightarrow[]{\mathcal{P}}$, $\xrightarrow[]{\text{a.s.}}$, and $\xrightarrow[]{\mathcal{L}^p}$ denote convergence in distribution, in probability, almost surely, and in $\mathcal{L}^p$, respectively, for sequences of random variables. We denote the index set $\ppp{0, 1, \dots, d-1}$ by $\pp{d}$, and $\mathbb{Z}_d$ for the cyclic group of order $d$.
For vectors and matrices over $\mathbb{R}$ or $\mathbb{C}$, $(\cdot)^\top$ denotes the (ordinary) transpose. The notation $(\cdot)^{\ast}$ denotes the Hermitian transpose (conjugate transpose). The Euclidean inner product is written as either $\langle a,b\rangle$ or $a^\top b$ when $a,b\in\mathbb{R}^d$, and as $\langle a,b\rangle\triangleq a^{\ast} b$ when $a,b\in\mathbb{C}^d$.
We use $\|\cdot\|_2$ for the vector Euclidean norm, $\|x\|_2=\sqrt{x^{\ast}x}$, and $\|\cdot\|_F$ for the matrix Frobenius norm, $\|A\|_F=\sqrt{\mathrm{trace}(A^{\ast}A)}$; for vectors, $\|x\|_F=\|x\|_2$. When no confusion can arise, we  write $\|\cdot\|$ for $\|\cdot\|_2$. 

\paragraph{Asymptotic notations.}
We employ the following standard asymptotic notation. For nonnegative sequences ${a_n}$ and ${b_n}$, we write $a_n=O(b_n)$ if there exists a constant $C<\infty$ and $n_0$ such that $a_n\le C b_n$ for all $n\ge n_0$, and $a_n=o(b_n)$ if $a_n/b_n\to 0$. We write $a_n=\omega(b_n)$ if $a_n/b_n\to\infty$, and $a_n=\Theta(b_n)$ if both $a_n=O(b_n)$ and $b_n=O(a_n)$ hold. 

We also use the notation $a_n \gtrsim b_n$ to denote $b_n = O(a_n)$; equivalently, there exist constants $c>0$ and $n_0$ such that $a_n \ge c b_n$ for all $n\ge n_0$. Finally, we write $a_n \asymp b_n$ as a synonym for $a_n = \Theta(b_n)$, that is, when $a_n \gtrsim b_n$ and $b_n \gtrsim a_n$ both hold. All asymptotic statements are taken in the appropriate limit specified in context (e.g., $n\to\infty$, $\mathrm{SNR}\to 0$, or $d\to\infty$).

\paragraph{Fourier-space notation.}
To facilitate the analysis of EM for the MRA model, we work in the discrete Fourier domain. 
For a complex number $\s{Z}\in\mathbb{C}$, let $\phi_{\s{Z}}\triangleq \arg(\s{Z})$ denote its phase.  
Let $F\in\mathbb{C}^{d\times d}$ be the unitary discrete Fourier transform (DFT) matrix with columns $\{f_k\}_{k=0}^{d-1}$, whose entries are
\begin{align}
    [f_k]_\ell  =  \frac{1}{\sqrt d}\,e^{j\frac{2\pi}{d}k\ell},
    \qquad k,\ell\in\{0,\dots,d-1\}. \label{eqn:DFT-col}
\end{align}
For a signal $y\in\mathbb{R}^d$, we write its DFT as $\s{Y}=F^\ast y$, i.e.,
\begin{align}
    \s{Y}[k]
    \triangleq
    \frac{1}{\sqrt d}\sum_{\ell=0}^{d-1} y_\ell\, 
    e^{-j\frac{2\pi}{d}k\ell},
    \qquad k\in\{0,\dots,d-1\},
    \label{eqn:DFT-def}
\end{align}
where $j\triangleq\sqrt{-1}$.  
Throughout, we use the notational convention $\s{Y}_k=\s{Y}[k]$ interchangeably. 

\paragraph{Metrics.} 
Recall the MRA model introduced in \eqref{eqn:mainModel1D}, where the objective is to recover the underlying signal $x^\star$, up to a global shift, from $n$ observations $\ppp{y_i}_{i=1}^{n}$.  Let $\mathcal{Y} = \{y_i\}_{i=1}^{n}$ denote the set of observations. We denote the estimator of $x^\star\in\mathbb{R}^d$ based on $\mathcal{Y}$ as $\hat{x} (\mathcal{Y})$. Because the statistics of $y_i$ is invariant to cyclic shifts, i.e., it is the same for $x^\star$ and $\mathcal{T}_\ell x^\star$, for any $\ell \in \pp{d}$, we aim to recover the orbit of $x^\star$, namely, $\{\mathcal{T}_\ell  x^\star |  \mathcal{T}_\ell \in \mathbb{Z}_d\}$; we thus define the group-invariant distance
\begin{align}
d_{\mathrm{orb}}(u,v)
      \triangleq 
    \min_{\widetilde{\mathcal{T}}_\ell \in \mathbb{Z}_d} \| u - \widetilde{\mathcal{T}}_\ell v \|, \label{eqn:orbit-distance}
\end{align}
for any $u,v\in\mathbb{R}^d$, and the normalized mean-square-error (MSE) by,
\begin{align}
    \text{MSE}\p{\hat{x}, x} \triangleq \frac{1}{\norm{x}^2} \mathbb{E} \pp{d_{\mathrm{orb}}^2 (\hat{x}, x)},\label{eqn:mseDef}
\end{align}
where the expectation is with respect to the joint distribution of $\calY$.

\subsection{Reconstruction algorithms}
We present the EM algorithm as the primary iterative method studied in this work. For comparison, we also introduce the popular hard-assignment approach. The two methods differ in the objective functions they effectively optimize and, more fundamentally, in how they treat uncertainty in the latent variables (i.e., the unknown cyclic shifts): EM uses soft responsibilities, whereas hard assignment commits to a single shift estimate at each iteration.
\paragraph{EM algorithm.}  
The EM algorithm, described in Algorithm~\ref{alg:generalizedEfNsoft}, adopts a probabilistic approach to shift estimation~\cite{dempster1977maximum}. At each iteration $t$, with current estimate $\hat{x}^{(t)}$, the algorithm proceeds in two steps. First, it computes the posterior distribution over all shifts $\mathcal{T}_\ell$ for each observation $y_i$, conditioned on $\hat{x}^{(t)}$ (E-step). Second, it updates the signal estimate by computing a weighted average of the observations, where the weights reflect the shift probabilities (M-step), resulting in the next iterate $\hat{x}^{(t+1)}$.

\begin{algorithm}[t!]
  \caption{Expectation-Maximization (soft-assignment)} \label{alg:generalizedEfNsoft}
\textbf{Input:} Initial template $\hat{x}^{(0)} = x_{\s{template}}$, observations $\{y_i\}_{i=1}^{n}$, noise level $\sigma^2$, and number of iterations $T$.\\
\textbf{Output:} EM estimate after $T$ iterations, $\hat{x}^{(T)}$. \\
\textbf{Each Iteration, for $t=0,\ldots,T-1$:}
\begin{enumerate}
  \item \textbf{E-step:} For $i=1,\ldots,n$ and $\ell=0,\ldots,d-1$, compute the responsibilities
  \begin{align}
    \gamma_{\ell} (\hat{x}^{(t)}, y_i)
      \triangleq 
    \frac{\exp \p{y_i^\top(\mathcal{T}_\ell  \hat{x}^{(t)})/\sigma^2}}
         {\sum_{r=0}^{d-1}\exp \p{y_i^\top(\mathcal{T}_r  \hat{x}^{(t)})/\sigma^2}}.
    \label{eqn:softmaxGamma}
  \end{align}
  \item \textbf{M-step:} Update the estimate
  \begin{align}
    \hat{x}^{(t+1)}
      \triangleq 
    \frac{1}{n}\sum_{i=1}^{n}\sum_{\ell=0}^{d-1}
    \gamma_{\ell}(\hat{x}^{(t)}, y_i) \p{\mathcal{T}_\ell^{-1}  y_i}.
    \label{eqn:generalizedEfNEqnSoft}
  \end{align}
\end{enumerate}
\end{algorithm}

The EM algorithm maximizes the log-likelihood of the data under the MRA model:
\begin{align}
    \mathcal{L}(x, \mathcal{Y}) \triangleq \sum_{i=1}^{n} \log \left( \sum_{\ell=0}^{d-1} p(\mathcal{T}_\ell) \cdot \exp\left(-\frac{\|y_i - \mathcal{T}_\ell  x\|^2}{2\sigma^2} \right) \right) - \frac{dn}{2} \log(2\pi\sigma^2), \label{eqn:logLikelihoodMRA}
\end{align}
where $p(\mathcal{T}_\ell) = 1/d$ reflects a uniform prior over shifts. The maximum likelihood estimator is defined as 
\begin{align}
    \hat{x}_{\text{ML}}(\mathcal{Y}) \triangleq \argmax_{x \in \mathbb{R}^d} \mathcal{L}(x, \mathcal{Y}).
\end{align}
Although alternative optimization strategies may be used to minimize $\mathcal{L}$, the EM algorithm is a practical and widely adopted method. It enjoys the desirable property of non-decreasing likelihood across iterations~\cite{dempster1977maximum}.
By accounting for the full posterior distribution over shifts, the EM algorithm offers greater robustness to noise than the hard-assignment method. This increased robustness, however, comes at the cost of higher computational complexity, since the algorithm requires accounting for all possible shifts for each observation. Additional details and derivation of the EM update rule are provided in Appendix~\ref{sec:softAssignmentUpdateStep}.

\paragraph{Hard-assignment algorithm.}  
The hard-assignment method, described in Algorithm~\ref{alg:generalizedEfNhard}, provides a simple iterative procedure for estimating the signal $x^\star$. At each iteration $t$, it assigns to each observation $y_i$ the cyclic shift $\mathcal{T}_{\ell_i}$ that best aligns it with the current estimate $\hat{x}^{(t)}$, that is, the shift minimizing the Euclidean distance $\|y_i - \mathcal{T}_\ell  \hat{x}^{(t)}\|$. After aligning all observations according to their assigned shifts, the estimate is updated by averaging the aligned data to produce $\hat{x}^{(t+1)}$.

This algorithm can be interpreted as an alternating minimization scheme for the following non-convex objective function:
\begin{align}
    \mathcal{L}_{\text{hard}}(x, \mathcal{Y}) \triangleq \sum_{i=1}^{n} \min_{\mathcal{T}_{\ell_i} \in \mathbb{Z}_d} \|y_i - \mathcal{T}_{\ell_i} x\|^2. \label{eqn:mraHardTargetFunction}
\end{align}
The objective is to optimize the function with respect to the signal:
\begin{align}
    \hat{x}_{\text{hard}}(\mathcal{Y}) \triangleq \argmin_{x \in \mathbb{R}^d} \mathcal{L}_{\text{hard}}(x, \mathcal{Y}).
\end{align}
In particular, the algorithm alternates between (i) estimating, for each observation, the shift that best aligns it with the current signal estimate, and (ii) updating the signal estimate given these inferred shifts.
As with the EM case, other optimization strategies may be used to maximize $\mathcal{L}_{\text{hard}}$, but the hard-assignment algorithm is widely used in practice due to its simplicity. The hard-assignment algorithm is computationally efficient and performs well in moderate-noise regimes, where the selection of the most likely shift per observation is typically correct. However, its performance degrades in high-noise settings due to its reliance on a single shift per observation.

\begin{algorithm}[t!]
  \caption{\texttt{Hard-assignment} 
  \label{alg:generalizedEfNhard}}
\textbf{Input:} Initial template $\hat{x}^{(0)} = x_{\s{template}}$, observations $\ppp{y_i}_{i=1}^{n}$, and number of iterations $T$.\\
\textbf{Output:} Hard assignment estimate after $T$ iterations, $\hat{x}^{(T)}$. \\
\textbf{Each Iteration, for $t=0\ldots,T-1$:}
\begin{enumerate}
    \item For $i=1,\ldots,n$, compute
     \begin{align}
        \s{\hat{R}}_i^{(t)}\triangleq \underset{0 \leq \ell \leq d-1}{\argmax} {    \langle{y_i}, \mathcal{T}_\ell  \hat{x}^{(t)} \rangle}.
    \label{eqn:OptShiftRealSpace}
    \end{align}
\item  Compute: 
    \begin{align}
        \hat{x}^{(t+1)}\triangleq \frac{1}{n} \sum_{i=1}^{n} \mathcal{T}_{\s{\hat{R}}_i^{(t)}}^{-1} \cdot y_i \label{eqn:generalizedEfNEqn}.
    \end{align}
\end{enumerate}
\end{algorithm}

\subsection{Population and finite-sample EM dynamics}
\label{subsec:pop-vs-finite-1d-mra}

In modern analyses of the EM algorithm, it is standard to distinguish between two related operators: the \emph{population} EM map, in which all expectations are taken under the true data-generating distribution (equivalently, the limit corresponding to infinitely many samples, $n \to \infty$), and the \emph{finite-sample} or \emph{empirical} EM map, in which these expectations are replaced by empirical averages over the observed dataset~\cite{balakrishnan2017statistical,daskalakis2017ten}. 
This distinction enables a precise separation between the deterministic population dynamics and the random fluctuations arising from finite-sample estimation.

\paragraph{Finite-sample EM operator.}
Recall that $\mathcal{Y}=\{y_i\}_{i=1}^{n}$ denote $n$ i.i.d. observations drawn from the MRA model~\eqref{eqn:mainModel1D}.  
The empirical EM operator $M_n:\mathbb{R}^d\to\mathbb{R}^d$, corresponds to the empirical update rule introduced in~\eqref{eqn:generalizedEfNEqnSoft}, is defined by
\begin{align}
    M_n(x;\mathcal{Y}) = \frac{1}{n}\sum_{i=1}^{n}\sum_{\ell=0}^{d-1} \gamma_\ell(x;y_i) \p{\mathcal{T}_\ell^{-1}  y_i},
    \label{eq:Phi-emp-1d}
\end{align}
where $\gamma_\ell(x;y_i)$, 
defined in~\eqref{eqn:softmaxGamma}, denotes the posterior responsibility of shift $\ell$ given the current signal estimate $x$, and $\mathcal{T}_\ell$ is the circular-shift operator.  
The finite-sample EM iteration then proceeds as
\begin{align}
    \hat{x}^{(t+1)} = M_n    \p{\hat{x}^{(t)};\mathcal{Y}},
    \qquad t=0,1,\ldots. \label{eqn:empirical-EM-update}
\end{align}

\paragraph{Population EM operator.}
The population EM operator is defined by replacing the empirical average in~\eqref{eq:Phi-emp-1d} with expectation under the true data-generating distribution, yielding a deterministic update.  
For any fixed $x\in\mathbb{R}^d$, we set
\begin{align}
    M(x) \triangleq \E \left[\sum_{\ell=0}^{d-1} \gamma_\ell(x;Y)\, \p{\mathcal{T}_\ell^{-1}Y}\right],
    \label{eq:Phi-pop-1d}
\end{align}
where $Y$ is an independent draw from the MRA model~\eqref{eqn:mainModel1D}. 
The corresponding population EM dynamics are
\begin{align}
    \hat{x}_{\mathrm{pop}}^{(t+1)} = M \p{\hat{x}_{\mathrm{pop}}^{(t)}}. \label{eqn:population-EM-update}
\end{align}
By the strong law of large numbers, for a fixed $x$,
\begin{align}
    M_n(x;\mathcal{Y}) \xrightarrow[n\to\infty]{\mathrm{a.s.}} M(x),
\end{align}
so the empirical operator coincides almost surely with the population operator in the infinite-sample limit. 
Accordingly, all randomness in $M_n$ stems from the finite dataset $\mathcal{Y}$, whereas $M$ is deterministic.

Throughout the population-level analysis (Sections~\ref{sec:fundemental-properties}-\ref{sec:EfN}), we write $\hat{x}^{(t)}$ for the population EM iterates generated by $M(\cdot)$, rather than $\hat{x}^{(t)}_{\mathrm{pop}}$, for simplicity. In Section~\ref{sec:finite-sample-analysis}, where we analyze the finite-sample behavior, we explicitly distinguish between the empirical and population EM updates using the notation in~\eqref{eqn:empirical-EM-update} and~\eqref{eqn:population-EM-update}, respectively.

\subsection{Global convergence at high-SNR} \label{sec:asymptoticSNRproperties}
Here, we analyze the asymptotic high-SNR regime, i.e., in the low-noise limit $\sigma \to 0$ (equivalently, $\mathrm{SNR}\to\infty$) with $d$ fixed. In this limit, the posterior over shifts concentrates on the maximizer of the cross-correlation\footnote{In our setting, the hard-assignment rule in~\eqref{eqn:mraHardTargetFunction} can be written equivalently as either minimizing an $\ell_2$ distance or maximizing a cross-correlation since each shift operator $\mathcal{T}_\ell$ is orthonormal, and it preserves the Euclidean norm, $\|\mathcal{T}_\ell x\|_2=\|x\|_2$.}, so the EM update reduces to a hard alignment step. As a consequence, EM and hard assignment become asymptotically identical, and both recover the ground-truth orbit after a single iteration. This asymptotic prediction is consistent with the empirical behavior at moderately high SNR (small but nonzero $\sigma$): in that regime both algorithms converge rapidly after a few iterations and exhibit essentially indistinguishable reconstructions.

\begin{proposition} [Global convergence at high-SNR]
\label{prop:relationBetweenSoftAndHard}
Let $\ppp{y_i}_{i=1}^{n}$ be observations drawn from the MRA model in \eqref{eqn:mainModel1D}, with ground-truth signal $x^\star$. Let $\hat{x}_{\mathrm{hard}}^{(t+1)}$ and $\hat{x}_{\mathrm{EM}}^{(t+1)}$ denote the reconstructions after $t+1$ iterations of the hard-assignment (Algorithm~\ref{alg:generalizedEfNhard}) and  EM (Algorithm~\ref{alg:generalizedEfNsoft}), respectively. Denote $\hat{x}^{(0)} = x_{\s{template}}$ as the initialization used by both algorithms. Assume that the cross-correlation between $x^\star$ and $\hat{x}^{(t)}$ attains a unique maximum, that is $\argmax_{\ell \in \pp{d}} \langle x^\star, \mathcal{T}_\ell\hat{x}^{(t)} \rangle$ is unique. Then,
\begin{enumerate}
    \item For any $n \in \mathbb{N}$, the EM and hard-assignment reconstructions are asymptotically equivalent in the low-noise limit, i.e.,
    \begin{align}
        \lim_{\sigma \to 0} \ \hat{x}_{\mathrm{EM}}^{(t)} - \hat{x}_{\mathrm{hard}}^{(t)} = 0, \label{eqn:equivalenceSoftAndHardHighSNR}
    \end{align}
    almost surely, for every $t \geq 0$.
    \item For any $n \in \mathbb{N}$, 
    \begin{align}
        \lim_{\sigma \to 0} \ d_{\mathrm{orb}}(\hat{x}_{\mathrm{hard}}^{(1)}, x^\star) = 0, \label{eqn:highSNRconvergneToSignal}
    \end{align}
    almost surely.
\end{enumerate}
\end{proposition}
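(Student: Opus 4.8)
The plan is to argue pathwise and almost surely: write the noise as $\xi_i = \sigma z_i$ with $z_i \sim \mathcal{N}(0, I_d)$ held fixed while $\sigma \to 0$, so that each observation satisfies $y_i = \mathcal{T}_{\ell_i} x^\star + \sigma z_i \to \mathcal{T}_{\ell_i} x^\star$. I would introduce a common \emph{noiseless-limit trajectory} $\bar{x}^{(t)}$ and prove by induction on $t$ that both $\hat{x}_{\mathrm{EM}}^{(t)} \to \bar{x}^{(t)}$ and $\hat{x}_{\mathrm{hard}}^{(t)} \to \bar{x}^{(t)}$ as $\sigma \to 0$. The difference in~\eqref{eqn:equivalenceSoftAndHardHighSNR} then follows immediately, and part~2 drops out by specializing the construction to $t=0$. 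The base case holds because both algorithms share the initialization $\hat{x}^{(0)} = x_{\s{template}}$.

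The core step is a stability lemma for the shift selection. For each $i$, the cross-correlations satisfy $\langle y_i, \mathcal{T}_\ell \hat{x}^{(t)}\rangle \to \langle \mathcal{T}_{\ell_i} x^\star, \mathcal{T}_\ell \bar{x}^{(t)}\rangle = \langle x^\star, \mathcal{T}_{\ell - \ell_i}\bar{x}^{(t)}\rangle$, using the orthogonality identity $\mathcal{T}_{\ell_i}^\top \mathcal{T}_\ell = \mathcal{T}_{\ell - \ell_i}$. By the uniqueness hypothesis (applied at the limiting iterate $\bar{x}^{(t)}$), this limiting objective is maximized at a single shift $\ell_i^\dagger = \ell_i + \ell^*$, where $\ell^* = \argmax_m \langle x^\star, \mathcal{T}_m \bar{x}^{(t)}\rangle$, and there is a strictly positive gap $\delta$ between the largest and second-largest limiting cross-correlation. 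For the hard map this pins $\s{\hat{R}}_i^{(t)} = \ell_i^\dagger$ for all sufficiently small $\sigma$, since the noise perturbation is $O(\sigma)$ and the estimate perturbation $\|\hat{x}_{\mathrm{hard}}^{(t)} - \bar{x}^{(t)}\|$ is $o(1)$, both eventually below $\delta/2$. For the EM map I would examine the exponent differences in the softmax~\eqref{eqn:softmaxGamma}: after dividing by $\sigma^2$, the leading contribution for any $\ell \neq \ell_i^\dagger$ is $(c_\ell - c_{\ell_i^\dagger})/\sigma^2 \le -\delta/\sigma^2$, while all perturbation terms contribute at most $(2\|x^\star\|\,\|\epsilon^{(t)}\| + O(\sigma))/\sigma^2$ with $\epsilon^{(t)} = \hat{x}_{\mathrm{EM}}^{(t)} - \bar{x}^{(t)} \to 0$. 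The crucial observation is that the common $1/\sigma^2$ factor cancels in the comparison, so once $\|\epsilon^{(t)}\|$ and $\sigma$ are small enough the exponent difference diverges to $-\infty$; hence $\gamma_\ell \to 0$ for every $\ell \neq \ell_i^\dagger$ and $\gamma_{\ell_i^\dagger} \to 1$, and the responsibilities collapse onto exactly the hard assignment.

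Closing the induction is then routine: both updates become $\frac{1}{n}\sum_i \mathcal{T}_{\ell_i^\dagger}^{-1} y_i \to \frac{1}{n}\sum_i \mathcal{T}_{-(\ell_i+\ell^*)} \mathcal{T}_{\ell_i} x^\star = \mathcal{T}_{-\ell^*} x^\star$, the same aligned signal for every $i$, which I take to be $\bar{x}^{(t+1)}$. This yields~\eqref{eqn:equivalenceSoftAndHardHighSNR}. For part~2 I specialize to $t=0$: the above gives $\hat{x}_{\mathrm{hard}}^{(1)} \to \mathcal{T}_{-\ell^*} x^\star$ with $\ell^* = \argmax_m \langle x^\star, \mathcal{T}_m x_{\s{template}}\rangle$, which lies in the orbit of $x^\star$, so $d_{\mathrm{orb}}(\hat{x}_{\mathrm{hard}}^{(1)}, x^\star) \to 0$ by~\eqref{eqn:orbit-distance}. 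I would also note that for $t \ge 1$ the trajectory is constant, since the autocorrelation $\langle x^\star, \mathcal{T}_k x^\star\rangle$ is maximized at $k=0$, confirming $\bar{x}^{(t)} = \mathcal{T}_{-\ell^*}x^\star$ remains in the orbit.

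The main obstacle is the discontinuity of the $\argmax$ together with the interaction between the vanishing softmax temperature $\sigma^2$ and the fact that, for finite $\sigma$, the two algorithms feed \emph{different} current estimates into the selection rule. The delicate point is verifying that the estimate error $\|\epsilon^{(t)}\|$, although not controlled relative to $\sigma^2$, enters the exponent on the same $1/\sigma^2$ scale as the gap $\delta$, so that the comparison is governed by the $\sigma$-independent inequality $2\|x^\star\|\,\|\epsilon^{(t)}\| < \delta$; establishing this cleanly, and ensuring the uniqueness hypothesis delivers $\delta > 0$ at each $\bar{x}^{(t)}$, is where the argument must be most careful. Everything else (the a.s. convergence with $z_i$ fixed, the finite sum over $i$, and the finite shift set) is uniform and benign.
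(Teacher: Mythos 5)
Your proof is correct, but it takes a genuinely different route from the paper's. The paper proves the proposition in the general compact-group setting of~\eqref{eqn:mainModel} and obtains the collapse of the soft responsibilities onto the hard assignment by invoking a Laplace-type concentration result for ratios of integrals (\cite[Theorem~5.10]{robert1999monte}), applied with $\lambda = 1/\sigma^2$ to $h_{i,t}(g) = \langle y_i, g\cdot \hat{x}^{(t)}\rangle$; part~2 then follows because, in the limit, all observations align to the same orbit representative. Your argument instead stays within the cyclic group and replaces the cited theorem with an elementary gap analysis: you fix the noise realization via $\xi_i = \sigma z_i$, define the noiseless trajectory $\bar{x}^{(t)}$, and run an induction showing both algorithms' iterates converge to it, with the softmax collapse following from the $\sigma$-independent comparison $2\|x^\star\|\,\|\epsilon^{(t)}\| + O(\sigma) < \delta$ inside the common $1/\sigma^2$ scaling. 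What your approach buys is rigor on two points the paper treats loosely: (i) the paper applies the concentration theorem as if $y_i$ were fixed while $\sigma \to 0$, whereas $y_i$ itself depends on $\sigma$ — your explicit splitting of the exponent into the $O(1)$ gap, the $O(\|\epsilon^{(t)}\|)$ iterate error, and the $O(\sigma)$ noise term handles this coupling cleanly; and (ii) the paper implicitly feeds the same current estimate $\hat{x}^{(t)}$ into both algorithms, whereas for finite $\sigma$ the trajectories differ — your induction through the common limit $\bar{x}^{(t)}$ is exactly the device needed to close this. What the paper's approach buys is generality (arbitrary compact $G$ with Haar measure, where a finite-gap argument over group elements is unavailable) and brevity. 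One point to state explicitly if you write this up: your induction invokes the uniqueness hypothesis at each limiting iterate $\bar{x}^{(t)}$ (including $\bar{x}^{(t)} = \mathcal{T}_{-\ell^*}x^\star$ for $t \ge 1$, where it amounts to $x^\star$ having no nontrivial shift symmetry), which is the natural reading of the proposition's assumption and is also what the paper's proof implicitly requires.
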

The proof of Proposition~\ref{prop:relationBetweenSoftAndHard} is given in Appendix~\ref{sec:highSNRasymptotics} for a broad family of models that includes~\eqref{eqn:mainModel1D}. It builds on~\cite[Theorem~5.10]{robert1999monte} and extends the argument from cyclic shifts to more general group actions. 
While in the high-SNR limit, both EM and hard assignment recover the ground-truth orbit from arbitrary initializations, the remainder of the paper is devoted to the low-SNR regime, where global convergence fails and the dynamics are instead governed by initialization bias and finite-sample drift.


\section{EM convergence at low SNR}\label{sec:fundemental-properties}

In this section we characterize the behavior of the \emph{population} EM estimator in the low-SNR regime. 
We begin by establishing a spectral-radius criterion for local linear convergence of the population EM map $M$ via the Jacobian $J(x^\star)$ evaluated at the ground truth $x^\star$. 
Building on this, we derive a low-SNR Taylor expansion of $J(x^\star)$ and obtain a spectral decomposition that governs the dynamics of EM in this regime. 
These tools yield two main results: Theorem~\ref{thm:lowSNR-two-phase}, which explains the two-phase convergence behavior observed in Figure~\ref{fig:2}, and Corollary~\ref{cor:flat-iter-lower}, which quantifies the iteration complexity of EM at low SNR. 
For completeness, Appendix~\ref{sec:lowSNRapproxMain} also provides a global (i.e., not necessarily in the vicinity of the ground-truth) low-SNR approximation of the population EM map.

\subsection{Local convergence of the population EM }

We first characterize the local convergence behavior of the population EM dynamics defined in~\eqref{eq:Phi-pop-1d}.
It is well established that the local behavior of the EM algorithm around a fixed point is governed by the spectral properties of the Jacobian of the population update operator.
Classical analyzes (see, for instance,~\cite{wu1983convergence,balakrishnan2017statistical,daskalakis2017ten,mclachlan2008algorithm,redner1984mixture}), show that if the spectral radius of the Jacobian at the true parameter is strictly smaller than one, the population EM iteration converges linearly to the fixed point. 
Moreover, directions corresponding to eigenvalues close to one give rise to slow convergence modes.
Our goal in this section is to establish an analogous characterization for the MRA model, demonstrating that its population EM operator exhibits the same spectral radius-based local convergence behavior.

\paragraph{Jacobian, spectral radius, and symmetry considerations.}
Before analyzing the convergence of the population EM dynamics, we formalize the notions of the Jacobian and its spectral radius, which govern the local behavior of iterative algorithms. Detailed preliminaries and results on nonlinear dynamical systems can be found in Appendix~\ref{subsec:dynamical-preliminaries}.

\begin{definition}[Jacobian and spectral radius]
\label{def:Jacobian-spectral-radius}
Let $F:\mathbb{R}^d\to\mathbb{R}^d$ be a continuously differentiable multivariate function in a neighborhood of a point $x^\star\in\mathbb{R}^d$. The \emph{Jacobian matrix} of $F$ at $x^\star$ is the linear map
\begin{align}
    J_F(x^\star) \triangleq \nabla F(x)\big|_{x=x^\star}\in\mathbb{R}^{d\times d},
\end{align}
characterized by the first-order expansion: for any vector $h\in\mathbb{R}^d$ with $\|h\|_2\to 0$,
\begin{align}
    F(x^\star+h)-F(x^\star) = J_F(x^\star)\,h  +  o(\|h\|_2).
\end{align}
For a square 
matrix $J\in\mathbb{R}^{d\times d}$, we denote by $\mathrm{spec}(J)\subset\mathbb{C}$ its spectrum, i.e., the set of eigenvalues of $J$. The \emph{spectral radius} of $J$ is then defined as
\begin{align}
    \rho(J) \triangleq  \max\{\,|\lambda|:\lambda\in\mathrm{spec}(J)\,\}.
\end{align}
\end{definition}

\paragraph{Symmetry of the EM operator.}
In the MRA model~\eqref{eqn:mainModel1D} with uniform distribution of cyclic shifts, the log-likelihood and hence the EM update operator are equivariant with respect to circular shifts (see Lemma~\ref{lem:shift-equivariance-M}), namely,
\begin{align}
    M(\mathcal{T}_\ell  x) =  \mathcal{T}_\ell  M(x),
    \qquad \forall \ell \in \{0,1, \dots d-1 \}.
\end{align}
and the ground-truth signal $x^\star$ is not an isolated fixed point but lies on an orbit of equivalent solutions
\begin{align}
    \mathrm{Orb}(x^\star)
      = 
    \{\mathcal{T}_\ell  x^\star : \ell = 0,1,\ldots,d-1\}.
\end{align}
This inherent symmetry implies that the EM fixed points are not unique:
every circularly shifted version of $x^\star$ yields the same likelihood value and defines an equivalent solution.  
To analyze local convergence meaningfully, one must therefore disambiguate among these equivalent representatives.  
Thus, we restrict the dynamics to a canonical neighborhood of a single representative of the orbit by introducing a condition that guarantees the identity shift is the nearest orbit element in a small neighborhood. 

\begin{assum}
\label{assump:local-tiebreak-finite}
Let $x^\star \in \mathbb{R}^d$ denote the ground–truth signal, and let $\{\mathcal{T}_\ell\}_{\ell=0}^{d-1}$ be the finite set of circular shifts acting on $\mathbb{R}^d$.  
Assume there exists a radius $r > 0$ such that, for every $x$ in the open ball $\mathcal{B}(x^\star, r) \triangleq  \{ x \in \mathbb{R}^d : \|x - x^\star\|_2 < r \}$, the nearest shifted version of $x^\star$ is uniquely attained at the identity shift:
\begin{align}
    \argmin_{0 \le \ell < d} \|x - \mathcal{T}_\ell x^\star\|_2 = \{0\}, \qquad \forall x\in\mathcal{B}(x^\star, r).
\end{align}
In other words, $x^\star$ is locally separated from all its nontrivial circular shifts.
\end{assum}

We are now ready to state the local linear convergence theorem for the MRA model; its proof is given in Appendix~\ref{sec:proofOfEmSpectralRadius}. The first part of the theorem is specific to the MRA model~\eqref{eqn:mainModel1D}, whereas the second and third parts follow from standard results in nonlinear dynamical systems; see Appendix~\ref{subsec:dynamical-preliminaries} for background.

\begin{thm}[Local linear convergence of the population EM operator]
\label{thm:EM-MRA-spectral-radius}
Assume that the observation $Y$ is generated according to the MRA model~\eqref{eqn:mainModel1D} with a uniform distribution over cyclic shifts, and let the ground-truth signal $x^\star \in \mathbb{R}^d$.  
Suppose further that Assumption~\ref{assump:local-tiebreak-finite} holds.  
Let $M:\mathbb{R}^d \to \mathbb{R}^d$ denote the population EM operator defined in~\eqref{eq:Phi-pop-1d}.  
Then:
\begin{enumerate}
    \item \emph{(Regularity and Jacobian of $M$).}
    The mapping $M \in C^\infty (\mathbb{R}^d)$ is smooth on $\mathbb{R}^d$.  
    For any $x \in \mathbb{R}^d$, its Jacobian $J \in \mathbb{R}^{d \times d}$ is
    \begin{align}
        J(x) \triangleq \nabla_x M(x) = \frac{1}{\sigma^2} \mathbb{E} \left[\sum_{\ell=0}^{d-1} \gamma_\ell(x;Y) (z_\ell - \bar{z})(z_\ell-\bar{z})^\top\right],
        \label{eq:Jacobian-MRA-proof}
    \end{align}
    where $z_\ell = \mathcal{T}_\ell^{-1}  Y$ and $\bar{z}(x;Y) = \sum_{\ell=0}^{d-1}\gamma_\ell(x;Y) z_\ell$. In particular,  $J(x^\star)$ is a real symmetric matrix, hence orthogonally diagonalizable, and $\mathrm{spec}(J(x^\star))\subset[0,1]$.
    
    \item \emph{(Local linear convergence).}
    If $\rho (J(x^\star)) < 1$, then for any constant $q \in \p{\rho(J(x^\star)),1}$, there exists $\delta > 0$ and a neighborhood $\, \mathcal{U} = \mathcal{B}(x^\star, \delta)$ of $x^\star$, such that for any initialization $\hat{x}^{(0)} \in \mathcal{U}$, the EM iterates $\hat{x}^{(t+1)}=M(\hat{x}^{(t)})$ satisfy
    \begin{align}
        \|\hat{x}^{(t)}-x^\star\|_2 \leq q^{ t} \|\hat{x}^{(0)}-x^\star\|_2, \qquad t=0,1,2,\ldots,
    \end{align}
    and, asymptotically for $t \to \infty$,
    \begin{align}
        \limsup_{t\to\infty} \frac{\|\hat{x}^{(t+1)}-x^\star\|_2}{\|\hat{x}^{(t)}-x^\star\|_2} \le \rho \p{J(x^\star)}.
    \end{align}
    \item \emph{(Infinitesimal initialization regime).}
    Let $u_1$ be a unit eigenvector associated with the largest eigenvalue of $J(x^\star)$, $\lambda_1 = \rho\p{J(x^\star)}$.  
    Consider EM iterates initialized at $\hat{x}^{(0)}$ with a small but generic error, in the sense that $\hat{x}^{(0)} \to x^\star$ and $\langle \hat{x}^{(0)} - x^\star,\, u_1 \rangle \neq 0$.
    Then,    \begin{align}\label{eq:Lyap-EM-MRA}
        \lim_{t\to\infty}     \,\lim_{\substack{\hat{x}^{(0)} \to x^\star \\ \langle \hat{x}^{(0)}-x^\star,\,u_1\rangle \neq 0}}       \frac{1}{t}\,        \log\frac{\|\hat{x}^{(t)} - x^\star\|_2}{\|\hat{x}^{(0)} - x^\star\|_2} = \log \rho\p{J(x^\star)}.
    \end{align}
    \end{enumerate}
\end{thm}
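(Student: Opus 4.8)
The plan is to handle the three parts in the order stated, noting that essentially all of the model-specific content is concentrated in Part~1: once the explicit Jacobian formula and the spectral containment $\mathrm{spec}(J(x^\star))\subset[0,1]$ are established, Parts~2 and~3 follow as instances of standard fixed-point / dynamical-systems facts that I would invoke directly from Appendix~\ref{subsec:dynamical-preliminaries}. For smoothness, I would first observe that for each fixed realization $Y$ the responsibilities $\gamma_\ell(\cdot;Y)$ are softmax functions of the linear scores $s_\ell(x)=z_\ell^\top x/\sigma^2$, where $z_\ell=\mathcal{T}_\ell^{-1}Y$ (using $\mathcal{T}_\ell^\top=\mathcal{T}_\ell^{-1}$); hence $\gamma_\ell(\cdot;Y)\in C^\infty(\mathbb{R}^d)$ with values in $[0,1]$ and derivatives that are polynomials in $z_\ell/\sigma^2$ with bounded coefficients. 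Since $Y$ is Gaussian, these are dominated to all orders by integrable polynomials in $\|Y\|$, so dominated convergence licenses differentiation under the expectation in~\eqref{eq:Phi-pop-1d}, giving $M\in C^\infty(\mathbb{R}^d)$. For the Jacobian I would use the softmax identity $\partial\gamma_\ell/\partial s_m=\gamma_\ell(\delta_{\ell m}-\gamma_m)$ together with $\partial s_m/\partial x=z_m/\sigma^2$ to obtain $\nabla_x\gamma_\ell=\sigma^{-2}\gamma_\ell(z_\ell-\bar z)$; since $z_\ell$ is independent of $x$, this yields $J(x)=\sigma^{-2}\mathbb{E}[\sum_\ell\gamma_\ell\,z_\ell(z_\ell-\bar z)^\top]$. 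The centered form~\eqref{eq:Jacobian-MRA-proof} then follows from the elementary identity $\sum_\ell\gamma_\ell\,\bar z(z_\ell-\bar z)^\top=0$ (because $\sum_\ell\gamma_\ell(z_\ell-\bar z)=0$), which I may add freely to symmetrize each summand into $\gamma_\ell(z_\ell-\bar z)(z_\ell-\bar z)^\top$. This expression is manifestly symmetric and positive semidefinite, so $J(x^\star)$ is orthogonally diagonalizable and $\mathrm{spec}(J(x^\star))\subset[0,\infty)$.

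The crux, and the step I expect to be the main obstacle, is the upper bound $\rho(J(x^\star))\le 1$, i.e.\ $J(x^\star)\preceq I$. Here I would exploit that at the \emph{true} parameter the responsibilities coincide with the exact Bayesian posterior over shifts: with a uniform prior on the latent shift $L$ and Gaussian likelihood, the $\|Y\|^2$ and $\|x^\star\|^2$ terms cancel in the posterior ratio, giving $\mathbb{P}(L=\ell\mid Y)=\gamma_\ell(x^\star;Y)$. Consequently the pair $(L,Y)$ obtained by drawing $Y$ from the MRA model~\eqref{eqn:mainModel1D} and $L\mid Y\sim\gamma(x^\star;Y)$ follows exactly the generative law, under which the realigned variable $z_L=\mathcal{T}_L^{-1}Y=x^\star+\mathcal{T}_L^{-1}\xi$ satisfies $z_L\sim\mathcal{N}(x^\star,\sigma^2 I)$ (orthogonality of $\mathcal{T}_L$), hence $\mathrm{Cov}(z_L)=\sigma^2 I$. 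Recognizing $\mathbb{E}[z_L\mid Y]=\bar z$ and $\mathrm{Cov}(z_L\mid Y)=\sum_\ell\gamma_\ell(z_\ell-\bar z)(z_\ell-\bar z)^\top$, the law of total covariance yields $\sigma^2 I=\mathbb{E}_Y[\mathrm{Cov}(z_L\mid Y)]+\mathrm{Cov}_Y(\bar z)=\sigma^2 J(x^\star)+\mathrm{Cov}(\bar z)$. Since $\mathrm{Cov}(\bar z)\succeq 0$, this forces $J(x^\star)\preceq I$ and therefore $\mathrm{spec}(J(x^\star))\subset[0,1]$, completing Part~1. Shift-equivariance (Lemma~\ref{lem:shift-equivariance-M}) and Assumption~\ref{assump:local-tiebreak-finite} are what make $x^\star$ a locally well-separated fixed point, ensuring this local Jacobian analysis is meaningful.

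Parts~2 and~3 I would then obtain from the general theory. Given $\rho(J(x^\star))<1$, the standard attracting-fixed-point / linearization result (Ostrowski's theorem, Appendix~\ref{subsec:dynamical-preliminaries}) supplies, for every $q\in(\rho(J(x^\star)),1)$, a neighborhood on which $\|\hat x^{(t)}-x^\star\|_2\le q^{t}\|\hat x^{(0)}-x^\star\|_2$, together with the asymptotic contraction ratio bounded by $\rho(J(x^\star))$. For Part~3, I would expand the error in the orthonormal eigenbasis of the symmetric matrix $J(x^\star)$ and use $M(\hat x^{(t)})-x^\star=J(x^\star)(\hat x^{(t)}-x^\star)+o(\|\hat x^{(t)}-x^\star\|_2)$; whenever $\langle\hat x^{(0)}-x^\star,u_1\rangle\neq 0$ the component along the top eigenvector dominates, so $\|\hat x^{(t)}-x^\star\|_2$ grows like $\lambda_1^{t}$ up to lower-order corrections, and taking the iterated limits produces the Lyapunov exponent $\log\rho(J(x^\star))$ in~\eqref{eq:Lyap-EM-MRA}. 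The only delicate point here is the interchange of the $t\to\infty$ and $\hat x^{(0)}\to x^\star$ limits, which the stated generic-initialization condition $\langle\hat x^{(0)}-x^\star,u_1\rangle\neq 0$ is precisely designed to accommodate.
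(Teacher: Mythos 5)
Your proposal is correct and follows essentially the same route as the paper's proof: smoothness via dominated convergence, the Jacobian via the softmax derivative identity $\nabla_x\gamma_\ell=\sigma^{-2}\gamma_\ell(z_\ell-\bar z)$ plus centering, the spectral bound $J(x^\star)\preceq I$ via the observation that $\gamma_\ell(x^\star;Y)$ is the exact posterior over shifts together with the law of total covariance applied to $z_L=\mathcal{T}_L^{-1}Y$ (whose unconditional covariance is $\sigma^2 I$), and Parts 2--3 by invoking the standard contraction and Lyapunov-exponent results of Appendix~\ref{subsec:dynamical-preliminaries}. The only difference is cosmetic: the paper phrases the total-covariance step as an inequality $\operatorname{Cov}(z)\succeq\mathbb{E}[\operatorname{Cov}(z\mid Y)]$ while you write the exact identity and drop the PSD term $\operatorname{Cov}(\bar z)$, which is the same argument.
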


Theorem~\ref{thm:EM-MRA-spectral-radius} characterizes the local behavior of the population EM operator through the eigenstructure of the Jacobian at the fixed point $x^\star$. 
Item~(1) establishes smoothness of $M$ and provides an explicit expression for the Jacobian, implying in particular that $J(x^\star)$ is symmetric with spectrum contained in $[0,1]$. 
Item~(2) is a uniform local result: if $\rho\p{J(x^\star)}<1$, then for any $q \in \p{\rho\!\p{J(x^\star)},\,1}$ there exists $\delta>0$ such that $M$ is a contraction on the ball $\mathcal{U} = \mathcal{B}(x^\star,\delta)$; hence, for any initialization $\hat{x}^{(0)} \in \mathcal{U}$, the EM iterates converge linearly to $x^\star$, and the asymptotic contraction rate is at most $\rho\!\p{J(x^\star)}$.
Item~(3) is a sharp asymptotic statement in the infinitesimal-initialization regime: for initializations approaching $x^\star$ with a generic component along the top eigenvector $u_1$, the per-iteration decay rate converges to $\log \rho(J(x^\star))$, i.e., the long-run behavior is dominated by the slowest contracting eigendirection.
Overall, these guarantees are inherently local: they apply only to initializations within a sufficiently small neighborhood around $x^\star$ (a local basin contained in the basin of attraction).
The size and geometry of this neighborhood depend on higher-order terms beyond the Jacobian (i.e., on nonlinear terms of $M$) and are not captured by the linearization alone.

\subsection{Spectral decomposition of the Jacobian} \label{subsec:local-convergence}

Having established in Theorem~\ref{thm:EM-MRA-spectral-radius} that the local convergence rate of the population EM operator is governed by the spectral radius of its Jacobian at the ground-truth $x^\star$, we now quantify this rate in the low-SNR regime of the MRA model~\eqref{eqn:mainModel1D}. We parameterize the ground truth as
\begin{align}\label{eqn:def-v}
    x^\star =  \beta  v,
    \qquad \|v\|_2 = 1,
\end{align}
and consider the asymptotic low-SNR regime in which the noise level $\sigma^2$ and the dimension $d$ are fixed while $\beta \to 0$. 
For notational brevity, write
\begin{align}
    J(\beta) \triangleq J(\beta v) = J(x^\star).
\end{align}
In this regime, the responsibilities $\gamma_\ell(x;Y)$ are nearly uniform over shifts, and the Jacobian admits a Taylor expansion in the parameter $\beta/\sigma$. The results established below formalize this small-$\beta$ expansion and yield an explicit leading-order contraction rate in this low SNR regime.

Recall that $F$ denotes the unitary DFT matrix on $\mathbb{C}^d$ with columns $\{f_k\}_{k=0}^{d-1}$ defined in~\eqref{eqn:DFT-col}. Define the complex Fourier coefficients of $v$ by $\s{V} \triangleq F^\ast {v} \in \mathbb{C}^d$. 
Since $v$ is real, the coefficients satisfy the conjugate symmetry $\s{V}_{-k} = \overline{\s{V}_k}$, for $k\in[d]$. We work under the standard  non-vanishing assumption $\s{V}_k \neq 0$, for all $k\neq  0$, commonly used in the MRA literature~\cite{bendory2017bispectrum,bandeira2023estimation,perry2019sample}. 
For notational convenience, we assume $d$ is odd; all statements extend to even $d$ with only minor modifications. 

The next proposition analyzes the spectral properties of the population EM Jacobian in the low-SNR regime. A proof of the proposition appears in Appendix~\ref{sec:proofOfJacobStructure}; empirical demonstration of the findings is shown in Figure~\ref{fig:3}(a). 

\begin{proposition}
[Second-order Jacobian expansion in the low-SNR regime]
\label{prop:K2-MRA}
Let $d\ge 3$ and consider the population EM map for the MRA model~\eqref{eqn:mainModel1D}, evaluated at the ground truth $x^\star=\beta v$ with fixed noise variance $\sigma^2>0$.
Define for $k=1,\dots,d-1$ the scalars
\begin{align}
    a_k \triangleq 1-\frac{\beta^2}{\sigma^2} |\s{V}_k|^2,
    \qquad b_k \triangleq -\frac{\beta^2}{\sigma^2} \s{V}_k^2.
\end{align}
Then, 
\begin{align}
    \widehat{J}(\beta)
     \triangleq  F^*J(\beta)F
    =
    \begin{pmatrix}
    0      & 0      & 0      & 0      & \cdots & 0      & 0     \\
    0      & a_1    & 0      & 0      & \cdots & 0      & b_1    \\
    0      & 0      & a_2    & 0      & \cdots & b_2    & 0     \\
    0      & 0      & 0      & a_3    & \cdots & 0      & 0     \\
    \vdots & \vdots & \vdots & \vdots & \ddots & \vdots & \vdots \\
    0      & 0      & \overline{b_2} & 0 & \cdots & a_{d-2} & 0  \\
    0      & \overline{b_1} & 0 & 0 & \cdots & 0      & a_{d-1} \\   
    \end{pmatrix}
     + 
    O \p{\frac{\beta^4}{\sigma^4}},
    \label{eq:Jhat_explicit_matrix_prop}
\end{align}
as $\beta\to 0$. In particular, for $k,\ell\in\{0,\dots,d-1\}$, as $\beta\to 0$,
\begin{align}
    \big[\widehat{J}(\beta)\big]_{k\ell}=
    \begin{cases}
        0, & k=\ell=0,\\[2pt]
        1-\frac{\beta^2}{\sigma^2} |\s{V}_k|^2, & k=\ell\neq 0,\\[2pt]
        -\frac{\beta^2}{\sigma^2} \s{V}_k^2, & \ell\equiv -k\pmod d,\ k\neq 0,\\[2pt]
        0, & \text{otherwise},
    \end{cases}
    \qquad
    + O \p{\frac{\beta^4}{\sigma^4}}.
\end{align}
\end{proposition}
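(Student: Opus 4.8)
The plan is to start from the exact Jacobian formula of Theorem~\ref{thm:EM-MRA-spectral-radius}(1), namely $J(x)=\sigma^{-2}\,\E\pp{\sum_\ell \gamma_\ell(x;Y)(z_\ell-\bar z)(z_\ell-\bar z)^\top}$ with $z_\ell=\mathcal{T}_\ell^{-1}Y$ and $\bar z=\sum_\ell\gamma_\ell z_\ell$, and to conjugate it into the Fourier basis, $\widehat{J}=F^*JF$. The single most useful structural fact is that circular shifts are diagonalized by $F$: one has $F^*\mathcal{T}_\ell^{-1}=D_\ell^*F^*$ with $[D_\ell]_{kk}=e^{-j2\pi k\ell/d}$, and for any real vector $a$ the real rank-one matrix transforms as $F^*(aa^\top)F=\hat a\,\hat a^*$ with $\hat a=F^*a$. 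Consequently every sum over $\ell$ collapses through the orthogonality relations $\tfrac1d\sum_\ell e^{j2\pi(k-m)\ell/d}=\delta_{k,m}$ and $\tfrac1d\sum_\ell e^{j2\pi(k+m)\ell/d}=\delta_{k,-m}$: the first forces the two frequencies to coincide and produces diagonal entries, the second forces them to be opposite and produces the anti-diagonal $(k,-k)$ entries. This dichotomy is precisely the origin of the two-band structure of~\eqref{eq:Jhat_explicit_matrix_prop}, with a conjugated pairing giving $|\s{V}_k|^2$ and an unconjugated pairing giving $\s{V}_k^2$.

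Next I would set up the joint low-SNR expansion. Writing $Y=\beta\,\mathcal{T}_{\ell_0}v+\xi$ with $\ell_0$ uniform and $\xi\sim\mathcal{N}(0,\sigma^2 I)$, both the softmax weights $\gamma_\ell$ and the recentred vectors $z_\ell-\bar z$ carry $\beta$-dependence, so I would expand both to second order in $\beta$. A Taylor expansion of the log-sum-exp gives $\gamma_\ell=\tfrac1d\big[1+\beta\tilde n_\ell+\beta^2 r_\ell\big]+O(\beta^3)$, where $n_\ell=\xi^\top\mathcal{T}_\ell v/\sigma^2$ is the noise correlation, $\tilde n_\ell=n_\ell-\bar n$ its shift-centred version, and $r_\ell$ collects a deterministic signal-correlation term built from $u^\top\mathcal{T}_\ell v/\sigma^2$ together with the quadratic softmax curvature $\tfrac12(\tilde n_\ell^2-\overline{\tilde n^2})$. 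Propagating this through $\bar z=\sum_\ell\gamma_\ell z_\ell$ with $z_\ell=\beta q_\ell+p_\ell$, where $p_\ell=\mathcal{T}_\ell^{-1}\xi$ and $q_\ell=\mathcal{T}_{\ell_0-\ell}v$, yields $z_\ell-\bar z=A_\ell+\beta B_\ell+\beta^2 C_\ell+O(\beta^3)$, with $A_\ell=p_\ell-\bar p$ (linear in $\xi$), $B_\ell$ carrying the shifted-signal part $q_\ell-\bar q$ (degree $0$ in $\xi$) plus a quadratic noise correction, and $C_\ell$ odd in $\xi$.

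I would then collect the expansion of $J$ order by order. At order $\beta^0$ the integrand is the empirical covariance of the shifted noise copies; its Fourier transform and expectation give exactly $\mathrm{diag}(0,1,\dots,1)$, using $\E|\Xi_k|^2=\sigma^2$ for $\Xi=F^*\xi$ and the vanishing of the DC fluctuation after recentring, reproducing the constant part of~\eqref{eq:Jhat_explicit_matrix_prop}. The order-$\beta^1$ contribution vanishes identically by an odd-in-$\xi$ parity argument. The heart of the proof is the order-$\beta^2$ term, which is the sum of four pieces, $r_\ell A_\ell A_\ell^\top$, $\tilde n_\ell(A_\ell B_\ell^\top+B_\ell A_\ell^\top)$, $A_\ell C_\ell^\top+C_\ell A_\ell^\top$, and $B_\ell B_\ell^\top$; each has even $\xi$-parity and therefore survives the expectation. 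I would evaluate these by Wick's theorem for the Gaussian noise together with the uniform average over $\ell_0$, then transform to Fourier. The bookkeeping is the real obstacle: one must track which contractions leave the frequencies equal, contributing $-\tfrac{\beta^2}{\sigma^2}|\s{V}_k|^2$ to the diagonal (turning the leading $1$ into $a_k$), versus opposite, contributing $-\tfrac{\beta^2}{\sigma^2}\s{V}_k^2$ to the $(k,-k)$ entry (yielding $b_k$), and verify that all cross-terms not of these two types cancel and that the DC row and column stay zero.

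Finally, rather than computing odd orders to control the remainder, I would invoke a symmetry argument: under $x^\star\mapsto-x^\star$ (equivalently $\beta\mapsto-\beta$) combined with the noise sign-flip $\xi\mapsto-\xi$ (which preserves the Gaussian law), the exponent $Y^\top\mathcal{T}_\ell x/\sigma^2$ is invariant while $z_\ell-\bar z$ merely changes sign, so the rank-one outer product and hence $J$ are unchanged. Thus $J(\beta)$ is an even function of $\beta$, all odd-order Taylor coefficients vanish, and the remainder past the $\beta^2$ term is genuinely $O(\beta^4/\sigma^4)$; the uniform boundedness of the softmax derivatives and the finiteness of Gaussian polynomial moments make this estimate rigorous. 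Assembling the $\beta^0$ and $\beta^2$ contributions then gives the stated matrix, the odd-$d$ assumption ensuring that the nonzero frequencies pair cleanly as $k\leftrightarrow-k$ with no self-paired Nyquist mode.
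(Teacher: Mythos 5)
Your proposal is correct and follows essentially the same route as the paper's proof: the covariance representation of the Jacobian, a second-order Taylor expansion of the softmax responsibilities, Wick/Isserlis evaluation of the surviving $\beta^2$ terms, and Fourier conjugation using the shift-diagonalization together with the orthogonality relations to produce the diagonal ($|\mathsf{V}_k|^2$) and anti-diagonal ($\mathsf{V}_k^2$) entries. The only cosmetic differences are that the paper organizes the quadratic term into two named real-space operators ($K_{2,\mathrm{sig}}$ and $K_{2,\mathrm{resp}}$, each given a closed form before the final Fourier pass), and eliminates odd orders by term-by-term Gaussian parity rather than your equally valid global $(\beta,\xi)\mapsto(-\beta,-\xi)$ symmetry.
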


We note that the expansion in~\eqref{eq:Jhat_explicit_matrix_prop} couples each Fourier mode $k$ only with its conjugate mode $-k$ (and leaves $k=0$ decoupled). Consequently, $\widehat{J}(\beta)$ preserves each subspace $\mathrm{span}\{f_0\}$ and $\mathrm{span}\{f_k,f_{-k}\}$, and hence it is \emph{block-diagonal with respect to} the orthogonal decomposition
\begin{align}\label{eq:Fourier-splitting}
    \mathbb{R}^d = \mathrm{span}\{f_0\}\ \oplus 
    \bigoplus_{k=1}^{(d-1)/2} \mathrm{span}\{f_k,f_{-k}\}.
\end{align}
Equivalently, consider the matrix representation of $\widehat{J}(\beta)$ in the permuted Fourier basis ordered as $(f_0,\ f_1,\ f_{-1},\ f_2,\ f_{-2},\ \ldots,\ f_{(d-1)/2},\ f_{-(d-1)/2})$.
In this reordered basis, $\widehat{J}(\beta)$ is block-diagonal, consisting of a single $1\times 1$ block associated with $f_0$ and $(d-1)/2$ blocks of size $2\times 2$ associated with the conjugate pairs $(f_k,f_{-k})$.
In particular, on each two-dimensional block $\mathrm{span}\{f_k,f_{-k}\}$ ($k\neq 0$) one has
\begin{align}\label{eq:block-2x2}
    \widehat{J}(\beta)\big|_{\{k,-k\}}
    = I_2 - \frac{\beta^2}{\sigma^2} 
    \begin{bmatrix}
    |\s{V}_k|^2 & \s{V}_k^2\\[2pt]
    \overline{\s{V}_k}^{\,2} & |\s{V}_k|^2
    \end{bmatrix}
    + O\p{\frac{\beta^4}{\sigma^4}},
\end{align}
whose eigenvalues are $1$ and $1-2(\beta^2/\sigma^2)|\s{V}_k|^2$ to second order, with corresponding eigenvectors $w_k$ and $u_k$ defined below. Diagonalizing these $2\times2$ blocks yields the following corollary, proved in Appendix~\ref{sec:proof-of-eigenstructure}.

\begin{corollary}[Jacobian spectral decomposition]\label{cor:block-spectral}
The eigenvalues of $J(\beta)$ decompose as follows.
\begin{enumerate}
    \item \emph{(Mean block.)} For the mean component, $k=0$, of the Jacobian, 
    \begin{align}
        \lambda_0\p{J(\beta)}
          = 0.
    \end{align}
    \item \emph{(Non-mean pairs.)} For each pair $\{k,-k\}$ with $1\le k\le (d-1)/2$, write $\s{V}_k=|\s{V}_k|e^{i\phi_k}$, where $\phi_k \triangleq \phi_{\s{V}}[k]$, and define the following vectors in the basis $\{f_k,f_{-k}\}$,
    \begin{align}\label{eq:uk-wk-def}
        u_k =  \frac{1}{\sqrt{2}} \begin{bmatrix} e^{i\phi_k}\\ \  e^{-i\phi_k}\end{bmatrix}, \qquad
        w_k = \frac{1}{i\sqrt{2}} \begin{bmatrix} e^{i\phi_k}\\ - e^{-i\phi_k}\end{bmatrix}.
    \end{align}
    Then, 
    \begin{align}        
        F^\ast J(\beta)F u_k
        &   =  \p{1-2 \frac{\beta^2}{\sigma^2} |\s{V}_k |^2+O\left(\frac{\beta^4}{\sigma^4}\right)} u_k, \\
        F^\ast J(\beta)F w_k
        & = \p{1+O\left(\frac{\beta^4}{\sigma^4}\right)} w_k.
    \end{align}

    \item \emph{(Spectral radius.)} There exists a number $\kappa_{\max}\geq 0$, depending on $v$ and $d$, such that
    \begin{align}
        \rho\p{J(\beta)}
          = 1 - \kappa_{\max} \frac{\beta^4}{\sigma^4} + O\p{\frac{\beta^6}{\sigma^6}}.
    \end{align}    
\end{enumerate}
\end{corollary}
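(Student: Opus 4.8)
The plan is to read off all three claims from the block-diagonal structure already established in Proposition~\ref{prop:K2-MRA}, handling the mean mode, the conjugate pairs, and the spectral radius in turn. Throughout I would work in the permuted Fourier basis $(f_0, f_1, f_{-1}, f_2, f_{-2}, \dots)$ of~\eqref{eq:Fourier-splitting}, in which $\widehat{J}(\beta) = F^\ast J(\beta) F$ is block-diagonal: one $1\times 1$ block attached to $f_0$ and $(d-1)/2$ Hermitian $2\times 2$ blocks attached to the pairs $\{f_k,f_{-k}\}$. Since the spectrum of a block-diagonal matrix is the union of the spectra of its blocks, it suffices to diagonalize each block separately, and the remaining work is essentially the diagonalization of a single $2\times 2$ Hermitian matrix.

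For the mean block (part 1), Proposition~\ref{prop:K2-MRA} already shows the $(0,0)$ entry vanishes and $f_0$ decouples at leading order, giving $\lambda_0 = 0$ up to the displayed $O(\beta^4/\sigma^4)$ error. To upgrade this to the exact identity $\lambda_0=0$, I would invoke the shift-invariance of the DC Fourier coefficient: because every $\mathcal{T}_\ell^{-1} Y$ has the same coordinate sum as $Y$ and the responsibilities satisfy $\sum_\ell \gamma_\ell = 1$, the zeroth coefficient $[F^\ast M(x)]_0 = \E[\s{Y}_0]$ is independent of $x$. Hence $\nabla_x \p{f_0^\ast M(x)} = 0$, so the $k=0$ row of $\widehat{J}(x^\star)$ is identically zero; since $J(x^\star)$ is symmetric (Theorem~\ref{thm:EM-MRA-spectral-radius}(1)), $\widehat{J}(x^\star)$ is Hermitian, a vanishing $k=0$ row forces a vanishing $k=0$ column, and the mean block is exactly $[0]$.

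For the conjugate pairs (part 2), I would start from the explicit $2\times 2$ block~\eqref{eq:block-2x2}. Writing $\s{V}_k = |\s{V}_k| e^{i\phi_k}$, the Hermitian perturbation becomes $|\s{V}_k|^2$ times the matrix with unit diagonal and off-diagonal entries $e^{\pm 2i\phi_k}$, which has trace $2$ and determinant $0$, hence eigenvalues $0$ and $2|\s{V}_k|^2$. A direct substitution then checks that $u_k$ and $w_k$ of~\eqref{eq:uk-wk-def} are exactly the corresponding eigenvectors, so the block eigenvalues are $1 - 2\frac{\beta^2}{\sigma^2}|\s{V}_k|^2$ (for $u_k$) and $1$ (for $w_k$) at leading order; the $O(\beta^4/\sigma^4)$ remainder of Proposition~\ref{prop:K2-MRA} propagates as the stated correction, the cross-direction leakage being of the same order. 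It is worth recording the geometric meaning: $w_k$ is, to leading order, the tangent to the shift orbit, which is why its eigenvalue sits at $1$, whereas $u_k$ is the transverse magnitude direction that contracts at the faster rate $\Theta(\beta^2/\sigma^2)$.

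For the spectral radius (part 3), I would assemble the eigenvalue list, namely $\{0\}$ from the mean, the $u_k$-eigenvalues $1 - 2\frac{\beta^2}{\sigma^2}|\s{V}_k|^2 + O(\beta^4/\sigma^4)$, and the $w_k$-eigenvalues $1 + O(\beta^4/\sigma^4)$; for small $\beta$ the latter are the largest, so $\rho(J(\beta))$ is attained among them. The main obstacle is that Proposition~\ref{prop:K2-MRA} controls the Jacobian only modulo $O(\beta^4/\sigma^4)$, so it reveals neither the value nor the sign of the $\beta^4$ coefficient of these flat eigenvalues. Here I would bring in the global constraint $\mathrm{spec}(J(\beta)) \subset [0,1]$ from Theorem~\ref{thm:EM-MRA-spectral-radius}(1): each near-unit eigenvalue is $\le 1$, which forces its expansion to take the form $1 - \kappa_k \frac{\beta^4}{\sigma^4} + O(\beta^6/\sigma^6)$ with $\kappa_k \ge 0$. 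Setting $\kappa_{\max}$ to be the coefficient of the maximal eigenvalue (equivalently $\min_k \kappa_k \ge 0$) then yields $\rho(J(\beta)) = 1 - \kappa_{\max}\frac{\beta^4}{\sigma^4} + O(\beta^6/\sigma^6)$. The one point requiring care is the clean $\beta^4$ expansion of $\rho$ itself, which I would justify from the $C^\infty$ smoothness of $M$ (Theorem~\ref{thm:EM-MRA-spectral-radius}(1)) together with analytic perturbation theory for the symmetric family $J(\beta)$, each flat eigenvalue being simple within its own $2\times 2$ block.
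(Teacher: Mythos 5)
Your proposal is correct, and for parts (1) and (2) it follows essentially the same route as the paper: diagonalize the $2\times 2$ block of~\eqref{eq:block-2x2} (you via trace/determinant, the paper via the characteristic polynomial, which is the same computation), verify $u_k,w_k$ directly, and control the remainder by Hermitian perturbation (the paper invokes Weyl's inequality where you note the cross-direction leakage is of the same order). Your exactness argument for the mean block, i.e.\ that $f_0^\ast M(x)=\E[\s{Y}_0]$ is constant in $x$ because shifts preserve coordinate sums and $\sum_\ell\gamma_\ell=1$, is the same fact the paper imports from Theorem~\ref{thm:mra-lowSNR-iteration-bias-to-init}, just proved inline.

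Where you genuinely diverge is part (3), and there your treatment is \emph{more} complete than the paper's. The paper's proof stops at $\rho(J(\beta))=1+O(\beta^4/\sigma^4)$ and does not actually justify the stated form $1-\kappa_{\max}\beta^4/\sigma^4+O(\beta^6/\sigma^6)$ with $\kappa_{\max}\ge 0$; your two ingredients, namely $\mathrm{spec}(J(\beta))\subset[0,1]$ from Theorem~\ref{thm:EM-MRA-spectral-radius}(1) to force nonnegativity of the $\beta^4$ coefficient, and perturbation theory for the symmetric family $J(\beta)$ (together with evenness in $\beta$, cf.\ Lemma~\ref{lem:M-odd-in-beta}) to get a clean expansion with $O(\beta^6)$ remainder, are exactly what is needed to close that gap. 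Two small caveats on your justification: first, Theorem~\ref{thm:EM-MRA-spectral-radius}(1) gives smoothness of $M$ in $x$ for a \emph{fixed} data distribution, whereas here the distribution of $Y$ also depends on $\beta$, so real-analyticity of $\beta\mapsto J(\beta)$ needs its own (routine, dominated-convergence/Gaussian) argument; second, simplicity of the flat eigenvalue \emph{within its $2\times 2$ block} does not give simplicity in the full matrix, since all flat eigenvalues across blocks coalesce at $\beta=0$ and the $O(\beta^4)$ remainder can couple blocks --- but this is harmless, because Rellich-type analytic perturbation theory for symmetric families organizes eigenvalues into analytic branches without any simplicity assumption. With those two points patched, your argument stands.
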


\begin{figure*}[!t]
    \centering
    \includegraphics[width=1.0\linewidth]{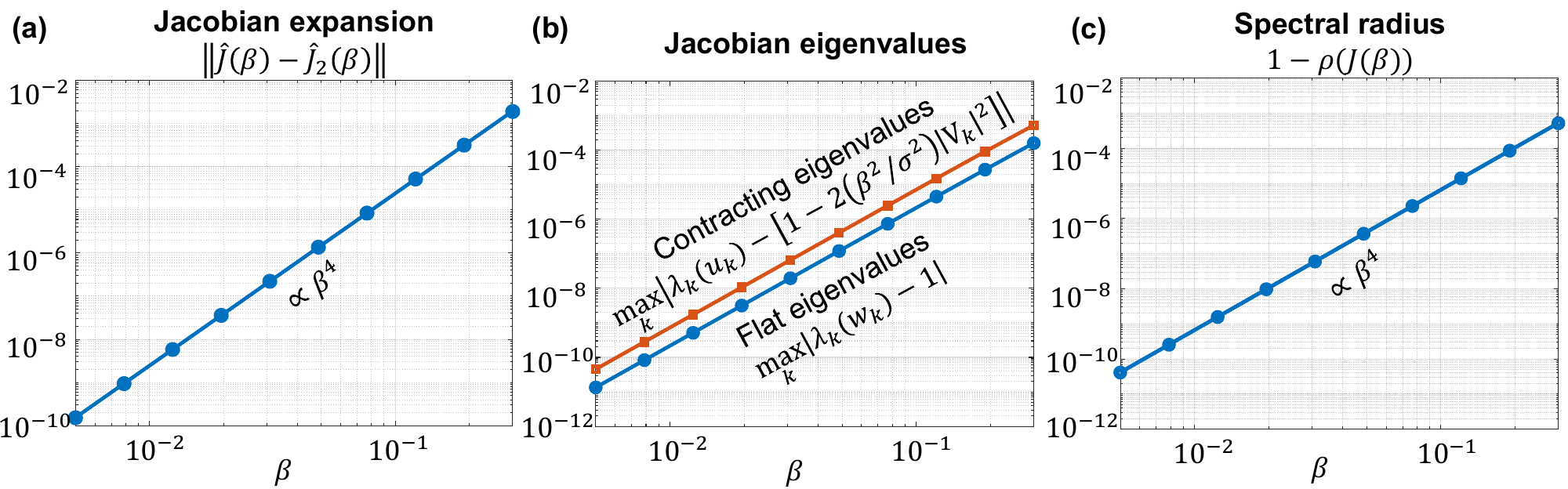}
    \caption{\textbf{Empirical demonstration of the Jacobian's second-order expansion, block spectrum, and spectral radius.}
    \textbf{(a)} Numerical support for the second-order expansion in~\eqref{eq:Jhat_explicit_matrix_prop}: the operator-norm discrepancy between the exact Jacobian $\widehat{J}(\beta)$ and its second-order approximation $\widehat{J}_2(\beta)$ (given by~\eqref{eq:Jhat_explicit_matrix_prop}) decays proportionally to $\beta^4$, consistent with a remainder term of order $O(\beta^4/\sigma^4)$.
    \textbf{(b)} Numerical support for the blockwise spectral characterization in Corollary~\ref{cor:block-spectral}. In the $\{k,-k\}$ contracting blocks, the eigenvalues satisfy
    $\lambda_k(u_k)=1-2(\beta^2/\sigma^2)|\widehat{V}_k|^2+O(\beta^4/\sigma^4)$; correspondingly, the plotted deviation    $\max_k\bigl|\lambda_k(u_k)-\bigl(1-2(\beta^2/\sigma^2)|\widehat{V}_k|^2\bigr)\bigr|$ exhibits slope $4$ on the log--log scale.
    Along the flat directions, $\lambda_k(w_k)=1-O(\beta^4/\sigma^4)$, and the quantity   $\max_k|\lambda_k(w_k)-1|$ also scales as $\beta^4$, in agreement with the corollary.
    \textbf{(c)} Scaling of the spectral radius: the spectral gap $1-\rho({J}(\beta))$ follows $O(\beta^4/\sigma^4)$ and displays slope $4$ on the log--log scale, consistent with Corollary~\ref{cor:block-spectral}(c).
    All curves are computed using Gauss--Hermite quadrature to evaluate the population Jacobian, with signal dimension $d=5$ and noise level $\sigma^2=1$.}
    \label{fig:3}
\end{figure*}

\subsection{Two-phase convergence and iteration complexity} \label{sec:iterationComplexity}
By Corollary~\ref{cor:block-spectral}, for each frequency pair $\{k,-k\}$, the population EM Jacobian $J(\beta)$ has a $2 \times 2$ DFT block with orthonormal eigenvectors $\{u_k,w_k\}$ and eigenvalues
\begin{align}
    \lambda_{u_k}^{\mathrm{ctr}}(J(\beta))   =  1 - 2 \frac{\beta^2}{\sigma^2} |\s{V}_k |^2 + O \p{\frac{\beta^4}{\sigma^4}},
    \qquad
    \lambda_{w_k}^{\mathrm{flat}}(J(\beta))   =  1 + O \p{\frac{\beta^4}{\sigma^4}}. \label{eqn:Jacobian-eigenvalues-def}
\end{align}
Hence, each block decomposes into a \emph{contracting} signal-dependent direction $u_k$ with geometric contraction factor $1-\Theta(\beta^2/\sigma^2)=1-\Theta(\mathrm{SNR})$, and a \emph{flat} direction $w_k$ whose gap to $1$ is only $O(\beta^4/\sigma^4)=O(\mathrm{SNR}^2)$.
The resulting local dynamics are two-phase: (i) a \emph{transient fast phase} while the error has nontrivial components along $\{u_k\}$, yielding per-step shrinkage $\approx 1-2(\beta^2/\sigma^2)|\s{V}_k|^2$; (ii) a \emph{slow tail} once the error mass is concentrated in the flat subspace $\mathrm{span}\{w_k\}$, for which the asymptotic contraction factor becomes $1-O(\beta^4/\sigma^4)=1-O(\mathrm{SNR}^2)$.
Consequently, the spectral radius is governed by the flat branches (see Figure~\ref{fig:3}(c)).
We summarize this two-phase behavior in the following theorem, proved in Appendix~\ref{sec:proof-of-two-phases}.

\begin{thm}[Two-phase convergence]\label{thm:lowSNR-two-phase}
Consider the setting described in Theorem~\ref{thm:EM-MRA-spectral-radius} with $x^\star=\beta v$, $\|v\|_2=1$, fixed $\sigma^2>0$, and varying $0 < \beta \ll \sigma$. Recall the eigenvectors $\{ u_k, w_k \}$ of the Jacobian $J(\beta)$ from~\eqref{eq:uk-wk-def}.
Define the subspaces
\begin{align}
    \mathcal{V}_{\mathrm{ctr}} &   \triangleq  \mathrm{span}\{F u_k: 1\le k\le (d-1)/2\}, \label{eqn:Vctr-def}
    \\
    \mathcal{V}_{\mathrm{flat}} &   \triangleq  \mathrm{span}\{F w_k: 1\le k\le (d-1)/2\}, \label{eqn:Vflat-def}
\end{align}
which are the subspaces from Corollary~\ref{cor:block-spectral}.
Write $e^{(t)}=\hat{x}^{(t)}-x^\star$ and decompose $e^{(t)}=e_{\mathrm{mean}}^{(t)} +  e_{\mathrm{ctr}}^{(t)}+e_{\mathrm{flat}}^{(t)}$ with $e_{\mathrm{ctr}}^{(t)}\in \mathcal{V}_{\mathrm{ctr}}$,  $e_{\mathrm{flat}}^{(t)}\in\mathcal{V}_{\mathrm{flat}}$, and $e_{\mathrm{mean}}^{(t)}\in\mathrm{span}\{\mathbf{1}\}$. Then, there exist a neighborhood $\mathcal{U}_\beta$ of $x^\star$ such that, for any initialization $\hat{x}^{(0)} \in \mathcal{U}_\beta$, the following holds:

\begin{enumerate}
\item \underline{Early transient (contracting modes).}
Fix any $\delta_0\in(0,1)$ and define the index
\begin{align}
    t_\ast (\delta_0) \triangleq \inf\Big\{t \ge 0: \ \|e_{\mathrm{ctr}}^{(t)}\|_2 \le \delta_0 \|e^{(t)}\|_2\Big\}\ \in \ \mathbb{N}. \label{eqn:t_ast_def}
\end{align}
Then, there exist a constant $C(\delta_0)>0$, and a constant $\beta_0>0$, such that for all $0 < \beta \le \beta_0$ and for all integers $t$ with $0\le t<t_\ast(\delta_0)$,
\begin{align}
    \frac{\|e_{\mathrm{ctr}}^{(t+1)}\|_2}{\|e_{\mathrm{ctr}}^{(t)}\|_2}
     \in \pp{ 1-2 \frac{\beta^2}{\sigma^2}\max_{k\neq 0} |\s{V}_k|^2 - C\frac{\beta^4}{\sigma^4} ,\
    1-2 \frac{\beta^2}{\sigma^2}\min_{k\neq 0} |\s{V}_k|^2 + C\frac{\beta^4}{\sigma^4} }.
\end{align}

\item \underline{Asymptotic tail (flat modes).}
Denote by $k_{\max}\in\{1,\dots,(d-1)/2\}$ the index that satisfies $\lambda_{w_{k_{\max}}}(\beta) =\rho(J(\beta))$, so that $F w_{k_{\max}}$ spans an eigendirection attaining the spectral radius of $J(\beta)$, and assume that the initial error has a nonzero component along this direction, i.e., $\langle e^{(0)}, F w_{k_{\max}}\rangle \neq 0$. Then, 
there exists $\kappa_{\max}\ge 0$ (depending on $v$ and $d$), 
\begin{align}
    \lim_{t \to \infty} \lim_{e^{(0)} \to 0} \frac{\|e^{(t+1)}\|_2}{\|e^{(t)}\|_2} = 1-\kappa_{\max} \frac{\beta^4}{\sigma^4} + O\p{\frac{\beta^6}{\sigma^6}}.
\end{align}
\end{enumerate}
\end{thm}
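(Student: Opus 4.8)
The plan is to analyze the two phases separately, in each case reducing to the linearization $e^{(t+1)} = J(\beta)\,e^{(t)} + R(e^{(t)})$, which is available from the $C^\infty$ regularity of $M$ (Theorem~\ref{thm:EM-MRA-spectral-radius}(1)) together with the fixed-point identity $M(x^\star)=x^\star$, where the remainder obeys $R(h)=O(\|h\|_2^2)$. The spectral backbone throughout is Corollary~\ref{cor:block-spectral}: on $\mathcal{V}_{\mathrm{ctr}}=\mathrm{span}\{Fu_k\}$ the eigenvalues of $J(\beta)$ are $1-2(\beta^2/\sigma^2)|\s{V}_k|^2+O(\beta^4/\sigma^4)$, on $\mathcal{V}_{\mathrm{flat}}=\mathrm{span}\{Fw_k\}$ they are $1+O(\beta^4/\sigma^4)$, and the mean direction $\mathrm{span}\{\mathbf 1\}$ is a null direction; moreover $J(\beta)$ is block-diagonal with respect to $\mathrm{span}\{\mathbf 1\}\oplus\mathcal{V}_{\mathrm{ctr}}\oplus\mathcal{V}_{\mathrm{flat}}$ up to $O(\beta^4/\sigma^4)$ by Proposition~\ref{prop:K2-MRA}.

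For the asymptotic tail (item~2) I would first take the inner limit $e^{(0)}\to 0$ along a fixed direction $\hat e$ with $\langle \hat e, Fw_{k_{\max}}\rangle\neq 0$. By smoothness the remainder is negligible at fixed $t$, so $e^{(t)}=J(\beta)^t e^{(0)}+o(\|e^{(0)}\|_2)$ and the inner limit of the per-step ratio equals $\|J(\beta)^{t+1}\hat e\|_2/\|J(\beta)^{t}\hat e\|_2$. Since $J(\beta)$ is real symmetric with nonnegative spectrum (Theorem~\ref{thm:EM-MRA-spectral-radius}(1)), I expand $\hat e$ in an orthonormal eigenbasis; the hypothesis $\langle \hat e, Fw_{k_{\max}}\rangle\neq 0$ guarantees a nonzero coefficient on the top eigenvalue $\rho(J(\beta))=\lambda_{w_{k_{\max}}}(\beta)$, so letting $t\to\infty$ the ratio converges to $\rho(J(\beta))$. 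Substituting the spectral-radius expansion of Corollary~\ref{cor:block-spectral}(3) yields $1-\kappa_{\max}\beta^4/\sigma^4+O(\beta^6/\sigma^6)$, the claim.

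For the early transient (item~1) I would project the one-step update onto $\mathcal{V}_{\mathrm{ctr}}$, writing $e_{\mathrm{ctr}}^{(t+1)} = J_{\mathrm{ctr}}\,e_{\mathrm{ctr}}^{(t)} + P_{\mathrm{ctr}}J(\beta)\big(e_{\mathrm{flat}}^{(t)}+e_{\mathrm{mean}}^{(t)}\big) + P_{\mathrm{ctr}}R(e^{(t)})$, where $J_{\mathrm{ctr}}=P_{\mathrm{ctr}}J(\beta)\big|_{\mathcal{V}_{\mathrm{ctr}}}$ is symmetric positive definite with spectrum in $[\,1-2(\beta^2/\sigma^2)\max_k|\s{V}_k|^2,\ 1-2(\beta^2/\sigma^2)\min_k|\s{V}_k|^2\,]$ up to $O(\beta^4/\sigma^4)$. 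The Rayleigh-quotient bound for a symmetric positive operator pins $\|J_{\mathrm{ctr}}e_{\mathrm{ctr}}^{(t)}\|_2/\|e_{\mathrm{ctr}}^{(t)}\|_2$ between these extreme eigenvalues. It then remains to show the two correction terms are $O(\beta^4/\sigma^4)\,\|e_{\mathrm{ctr}}^{(t)}\|_2$. The inter-block leakage $P_{\mathrm{ctr}}J(\beta)(e_{\mathrm{flat}}^{(t)}+e_{\mathrm{mean}}^{(t)})$ has norm $O(\beta^4/\sigma^4)(\|e_{\mathrm{flat}}^{(t)}\|_2+\|e_{\mathrm{mean}}^{(t)}\|_2)$ by Proposition~\ref{prop:K2-MRA}; here the transient condition $t<t_\ast(\delta_0)$ is exactly what I need, since $\|e_{\mathrm{ctr}}^{(t)}\|_2>\delta_0\|e^{(t)}\|_2$ forces $\|e_{\mathrm{flat}}^{(t)}\|_2+\|e_{\mathrm{mean}}^{(t)}\|_2\le \delta_0^{-1}\|e_{\mathrm{ctr}}^{(t)}\|_2$, making the leakage $O\!\big(\beta^4/(\sigma^4\delta_0)\big)\|e_{\mathrm{ctr}}^{(t)}\|_2$ as desired.

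The main obstacle is the nonlinear remainder. Because $R(h)=O(\|h\|_2^2)$, to force $\|P_{\mathrm{ctr}}R(e^{(t)})\|_2 = O(\beta^4/\sigma^4)\|e_{\mathrm{ctr}}^{(t)}\|_2$ I must take $\mathcal{U}_\beta$ to shrink with $\beta$ (radius of order $\delta_0\beta^4/\sigma^4$), using $\|e^{(t)}\|_2\le\delta_0^{-1}\|e_{\mathrm{ctr}}^{(t)}\|_2$ to convert the quadratic bound into the required relative one. I then have to confirm the iterates remain in this $\beta$-dependent neighborhood throughout the transient: this follows by induction, since the contracting modes shrink by a factor strictly below $1$ while the flat and mean modes move by at most $O(\beta^4/\sigma^4)$ per step, so $\|e^{(t)}\|_2$ does not grow. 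A secondary point I would verify is that the constant in $R(h)=O(\|h\|_2^2)$ is $\beta$-independent at fixed $\sigma$: derivatives of the responsibilities $\gamma_\ell$ bring down factors of $1/\sigma^2$ but no negative powers of $\beta$, so the Hessian of $M$ at $x^\star=\beta v$ stays bounded as $\beta\to 0$. Collecting the Rayleigh bound, the $O(\beta^4/\sigma^4)$ leakage, and the absorbed remainder then yields the two-sided ratio bound with a single constant $C(\delta_0)$.
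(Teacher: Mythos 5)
Your proposal is correct and follows essentially the same route as the paper's proof: linearize the error recursion around $x^\star$, use the block spectrum of Corollary~\ref{cor:block-spectral} together with a $\beta$-dependent neighborhood of radius $\asymp \delta_0\beta^4/\sigma^4$ and the condition $t<t_\ast(\delta_0)$ to absorb the quadratic remainder into the $C\beta^4/\sigma^4$ band for item (1), and reduce item (2) to the linear dynamics whose per-step ratio converges to $\rho(J(\beta))$, then invoke the expansion of Corollary~\ref{cor:block-spectral}(3). The only differences are refinements rather than a different route: you bound the inter-block leakage $P_{\mathrm{ctr}}J(\beta)\bigl(e_{\mathrm{flat}}^{(t)}+e_{\mathrm{mean}}^{(t)}\bigr)$ explicitly (the paper treats $\mathcal{V}_{\mathrm{ctr}}$ and $\mathcal{V}_{\mathrm{flat}}$ as exactly $J(\beta)$-invariant), you check the $\beta$-uniformity of the quadratic-remainder constant, and you prove the tail limit by a direct power-iteration computation instead of citing Theorem~\ref{thm:Lyap-symmetric}.
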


\begin{remark}[Basin of attraction]
\label{rem:basin-shrinks}
At $\beta=0$ (equivalently,  $\mathrm{SNR}=0$), the population EM Jacobian has eigenvalues $1$ for all Fourier blocks $k \neq 0$. Hence, the population map is no longer locally contractive in any direction orthogonal to the mean, and the basin of attraction $\mathcal{U}_\beta$ in Theorem~\ref{thm:lowSNR-two-phase} cannot be taken uniformly in $\beta$ as $\beta \to 0$. However, for every $\beta>0$, $J(\beta)$ admits a strict spectral gap below $1$, and therefore there exists a $\beta$-dependent neighborhood $\mathcal{U}_\beta$ of $x^\star$ on which the Jacobian governs the local dynamics. All results in Sections~\ref{sec:fundemental-properties}-\ref{sec:finite-sample-analysis} are proved on such a local neighborhood. 
\end{remark}

This theorem explains the two-phase convergence visible in Figure~\ref{fig:2}: a fast transient driven by contracting modes, followed by a slow tail governed by the flat modes. A direct result of Theorem~\ref{thm:lowSNR-two-phase} is the iteration complexity of the EM in the MRA problem, as stated next, and proved in Appendix~\ref{sec:proof-of-iteration-complexity}. 

\begin{corollary}[Tail iteration complexity of EM in MRA]\label{cor:flat-iter-lower}
Consider the setting described in Theorem~\ref{thm:lowSNR-two-phase}. Assume that the initialization is generic in the sense that the error $e^{(0)}=\hat{x}^{(0)}-x^\star$ has nonzero projection onto the eigenvector of $J(\beta)$ corresponding to the largest eigenvalue $\rho\p{J(\beta)}$ (the slowest flat eigenspace). Then,  there exist constants $\beta_0>0$ and $C>0$ such that the following holds: for all $0<\beta\le\beta_0$ and any target accuracy $\varepsilon_{\mathrm{abs}}\in(0,\|e^{(0)}\|_2]$, any iteration index $t$ for which $\|e^{(t)}\|_2 \le \varepsilon_{\mathrm{abs}}$ must satisfy
\begin{align}\label{eq:necessary-iter-flat}
  t \ge \frac{1}{C}\,\frac{\sigma^{4}}{\beta^{4}}\, \log  \p{\frac{\|e^{(0)}\|_2}{\varepsilon_{\mathrm{abs}}}}.
\end{align}
\end{corollary}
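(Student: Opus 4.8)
The plan is to reduce the corollary to a pointwise-in-$t$ lower bound on the error norm of the form $\|e^{(t)}\|_2 \ge c\,\rho(J(\beta))^{t}$ for some constant $c>0$, and then to invert this inequality using the fact that $-\log\rho(J(\beta)) = \Theta(\beta^4/\sigma^4)$ supplied by Corollary~\ref{cor:block-spectral}(c). Since $J(x^\star)$ is real symmetric (Theorem~\ref{thm:EM-MRA-spectral-radius}(1)), it is orthogonally diagonalizable; I would let $v_1 \triangleq F w_{k_{\max}}$ be the unit eigenvector attaining the spectral radius, with eigenvalue $\lambda_1 = \rho(J(\beta))$, write $\rho \triangleq \rho(J(\beta))$ for brevity, and set $p_t \triangleq \langle e^{(t)}, v_1\rangle$. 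By the generic-initialization hypothesis $|p_0| = |\langle e^{(0)}, v_1\rangle| > 0$. The entire argument rests on showing that $|p_t|$ decays no faster than $\rho^t$, since $\|e^{(t)}\|_2 \ge |p_t|$.

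First I would linearize around the fixed point. Because $M(x^\star)=x^\star$ and $M \in C^\infty$, a second-order Taylor expansion gives $e^{(t+1)} = J(\beta)\,e^{(t)} + r(e^{(t)})$ with a quadratic remainder bound $\|r(e)\|_2 \le L\|e\|_2^2$ on a fixed ball. Projecting onto $v_1$ and using symmetry, $\langle J(\beta)e, v_1\rangle = \lambda_1\langle e, v_1\rangle$, yields $|p_{t+1}| \ge \lambda_1|p_t| - L\|e^{(t)}\|_2^2$. To control the remainder I would invoke the uniform local contraction of Theorem~\ref{thm:EM-MRA-spectral-radius}(2): choosing the contraction constant $q \in (\rho, \sqrt{\rho})$ close to $\rho$ — a nonempty interval since $0<\rho<1$ — gives $\|e^{(t)}\|_2 \le q^t\|e^{(0)}\|_2$ with $q^2 < \rho$. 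Normalizing $b_t \triangleq |p_t|/\rho^t$ turns the recursion into $b_{t+1} \ge b_t - \tfrac{L\|e^{(0)}\|_2^2}{\rho}(q^2/\rho)^t$; summing the convergent geometric series gives $b_t \ge |p_0| - \tfrac{L\|e^{(0)}\|_2^2}{\rho(1-q^2/\rho)}$ for all $t$.

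The main obstacle — and the step requiring care — is that $1 - q^2/\rho = \Theta(\beta^4/\sigma^4)$ is small, so the accumulated remainder is of order $L\|e^{(0)}\|_2^2\,\sigma^4/\beta^4$; to keep it below $\tfrac12|p_0| = \Theta(\|e^{(0)}\|_2)$ I must restrict to a $\beta$-dependent neighborhood with $\|e^{(0)}\|_2 \lesssim \beta^4/\sigma^4$. This is exactly the shrinking basin anticipated in Remark~\ref{rem:basin-shrinks}, and it is legitimate because the corollary already operates inside the $\beta$-dependent neighborhood $\mathcal{U}_\beta$ of Theorem~\ref{thm:lowSNR-two-phase}, which may be shrunk further. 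Under this restriction $b_t \ge \tfrac12|p_0|$, i.e. $\|e^{(t)}\|_2 \ge \tfrac12|\langle e^{(0)}, v_1\rangle|\,\rho^t$ for every $t$.

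Finally I would invert. If $\|e^{(t)}\|_2 \le \varepsilon_{\mathrm{abs}}$, then $\tfrac12|\langle e^{(0)}, v_1\rangle|\,\rho^t \le \varepsilon_{\mathrm{abs}}$, and taking logarithms (noting $\log\rho < 0$) gives $t \ge \log\!\big(\tfrac{|\langle e^{(0)}, v_1\rangle|}{2\varepsilon_{\mathrm{abs}}}\big)\big/(-\log\rho)$. Using $-\log\rho \le C''\beta^4/\sigma^4$, which follows from $1-\rho = \kappa_{\max}\beta^4/\sigma^4 + O(\beta^6/\sigma^6)$ together with $-\log(1-x)\le x/(1-x)$, produces $t \ge \tfrac{\sigma^4}{C''\beta^4}\log\!\big(\tfrac{|\langle e^{(0)}, v_1\rangle|}{2\varepsilon_{\mathrm{abs}}}\big)$. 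Since $|\langle e^{(0)}, v_1\rangle|$ is a fixed positive multiple of $\|e^{(0)}\|_2$, the logarithm equals $\log(\|e^{(0)}\|_2/\varepsilon_{\mathrm{abs}})$ up to an additive $O(1)$ term; for the small target accuracies at which the iteration-complexity statement is meaningful this additive term is dominated, so the logarithm is $\ge \tfrac12\log(\|e^{(0)}\|_2/\varepsilon_{\mathrm{abs}})$, and enlarging $C$ to absorb both the factor of $2$ and $C''$ yields the stated bound $t \ge \tfrac1C\tfrac{\sigma^4}{\beta^4}\log(\|e^{(0)}\|_2/\varepsilon_{\mathrm{abs}})$.
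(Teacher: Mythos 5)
Your proposal is correct, and its skeleton (linearize at $x^\star$, control the quadratic remainder, extract a lower bound $\|e^{(t)}\|_2\gtrsim \rho(J(\beta))^t\|e^{(0)}\|_2$ along the top eigendirection, then invert using $|\log\rho(J(\beta))|=\Theta(\beta^4/\sigma^4)$ from Corollary~\ref{cor:block-spectral}) matches the paper's strategy, but the mechanism you use for the remainder is genuinely different and, in one respect, cleaner. The paper (via Lemma~\ref{lem:uniform-remainder} and Lemma~\ref{lem:linear-bracket}) unrolls the vector recursion $e_{t+1}=Je_t+R(e_t)$ and bounds the \emph{accumulated} remainder uniformly in $t$ by $K\|e^{(0)}\|_2^2$ with $K=c/(1-\lambda)$, using only non-expansiveness $\|e_t\|\le\|e_0\|$; since that bound does not decay in $t$, the resulting bracket $\|e_t\|_2\ge\tfrac{\theta_0}{2}\lambda^t\|e_0\|_2$ holds only inside the time window $t\le\frac{1}{|\log\lambda|}\log\bigl(\theta_0/(2K\|e_0\|_2)\bigr)$, and Corollary~\ref{cor:iter-complex-flow} must additionally argue that the first crossing time $\|e_t\|_2\le\varepsilon_{\mathrm{abs}}$ falls inside that window (implicitly requiring $\varepsilon_{\mathrm{abs}}\gtrsim K\|e^{(0)}\|_2^2$). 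You instead track the scalar $p_t=\langle e^{(t)},v_1\rangle$, exploit the contraction $\|e^{(t)}\|_2\le q^t\|e^{(0)}\|_2$ with $q^2<\rho$ so that the per-step remainder $Lq^{2t}\|e^{(0)}\|_2^2$ is \emph{geometrically summable} relative to $\rho^t$, and obtain $\|e^{(t)}\|_2\ge\tfrac12|p_0|\rho^t$ for \emph{all} $t\ge0$ with no window or lower restriction on $\varepsilon_{\mathrm{abs}}$. The price is identical in both treatments: your summed remainder is $\Theta(L\|e^{(0)}\|_2^2\,\sigma^4/\beta^4)$, forcing the shrinking basin $\|e^{(0)}\|_2\lesssim\beta^4/\sigma^4$, which is exactly the same restriction hidden in the paper's window-nonemptiness condition ($K=c/(1-\lambda)=\Theta(\sigma^4/\beta^4)$), and you correctly justify it via the $\beta$-dependent neighborhood $\mathcal{U}_\beta$ and Remark~\ref{rem:basin-shrinks}. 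Your final absorption of the additive $\log(\theta_0/2)$ term into the multiplicative constant (valid once $\varepsilon_{\mathrm{abs}}$ is bounded away from $\|e^{(0)}\|_2$ by a fixed factor) is at the same level of constant-chasing as the paper's own $\gtrsim$ in Corollary~\ref{cor:iter-complex-flow}, so no gap there either.
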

Corollary~\ref{cor:flat-iter-lower} shows that, in the low–SNR regime $\beta/\sigma\ll 1$, the iteration complexity is governed by the flat Fourier modes whose one-step contraction is only $1-O(\beta^4/\sigma^4) = 1 - O(\mathrm{SNR}^2)$. Consequently, even when initialized inside the local basin, EM requires at least on the order of $(\sigma^4/\beta^4) \asymp \mathrm{SNR}^{-2}$ iterations to reduce the error to a fixed tolerance. This slow rate creates a pronounced computational bottleneck: as SNR decreases, the number of iterations (and hence total runtime) increases at  least quadratically in $\mathrm{SNR}^{-1}$.

\subsection{Fourier magnitudes and phases at low SNR} \label{sec:Fourier-mag-and-phases-convergence}

The next result refines the picture obtained in Corollary~\ref{cor:block-spectral} and  Theorem~\ref{thm:lowSNR-two-phase} by working directly in the Fourier basis and decomposing, on each nonzero block $\{k,-k\}$, the error into radial and tangential components, that is, into errors in the Fourier magnitudes and phases.

\begin{proposition}[Local expansion of Fourier magnitudes/phases around $x^\star$]
\label{prop:mag-phase-first-order}
Consider the setting of Theorem~\ref{thm:EM-MRA-spectral-radius} and Corollary~\ref{cor:block-spectral}.
Let $x^\star$ denote the ground truth and let $x$ be any signal in $\mathbb{R}^d$.
Let $F$ be the unitary DFT matrix and define $\s{X}^\star \triangleq F^\ast x^\star$ and $\s{X}\triangleq F^\ast x$.
Assume that $\s{X}^\star[k]\neq 0$ for all $1\le k\le (d-1)/2$.
Let $\{u_k,w_k\}_{k=1}^{(d-1)/2}$ be the eigenvectors from Corollary~\ref{cor:block-spectral} and set $e\triangleq x-x^\star$.

Then, there exist constants $r_0>0$ and $C>0$, depending only on $x^\star$ and $d$, such that whenever $\|e\|_2\le r_0$, the following first-order expansions hold for every $k=1,\dots,(d-1)/2$:
\begin{align}
    \label{eq:delta-rk-alpha-statement}
    \text{(Magnitude error)} \qquad
    |\s{X}[k]| - |\s{X}^\star[k]|
    &= \frac{1}{\sqrt{2}}\;\langle e, u_k\rangle + R^{\mathrm{mag}}_k(e),
      \\[4pt]
    \label{eq:delta-phik-beta-statement}
    \text{(Phase error)} \qquad
    \phi_{\s{X}}[k] - \phi_{\s{X}^\star}[k]
    &= -\,\frac{1}{\sqrt{2}\,|\s{X}^\star[k]|}\;\langle e, w_k\rangle + R^{\mathrm{ph}}_k(e),
\end{align}
where the remainder terms satisfy the uniform bound
\begin{align}
    \max_{1\le k\le (d-1)/2}\bigl(|R^{\mathrm{mag}}_k(e)| + |R^{\mathrm{ph}}_k(e)|\bigr) \le C\,\|e\|_2^2.
\end{align}
\end{proposition}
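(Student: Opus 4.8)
The plan is to reduce both claims to a per-frequency scalar computation followed by a first-order Taylor expansion of the smooth maps $z\mapsto|z|$ and $z\mapsto\arg(z)$ about the point $z_0=\s{X}^\star[k]\neq0$. Write $\s{E}\triangleq F^\ast e=\s{X}-\s{X}^\star$, so that $\s{E}[k]=\s{X}[k]-\s{X}^\star[k]$, and observe that since $e$ is real, $\s{E}$ obeys the conjugate symmetry $\s{E}[-k]=\overline{\s{E}[k]}$. Because $x^\star=\beta v$ with $\beta>0$, the phase $\phi_k$ entering~\eqref{eq:uk-wk-def} equals $\phi_{\s{V}}[k]=\phi_{\s{X}^\star}[k]$, so $e^{-i\phi_k}\s{X}^\star[k]=|\s{X}^\star[k]|\ge0$. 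The first step is to rewrite the two inner products in the statement (understood via unitarity of $F$ as $\langle e,u_k\rangle=\langle\s{E},u_k\rangle$, and likewise for $w_k$) purely in terms of $\s{E}[k]$. Using $f_{-k}=\overline{f_k}$ from~\eqref{eqn:DFT-col} together with the convention $\langle a,b\rangle=a^\ast b$, the two-term sums over the pair $\{k,-k\}$ collapse under the conjugate symmetry of $\s{E}$ to
\begin{align}
  \langle e,u_k\rangle=\sqrt2\,\re\p{e^{-i\phi_k}\s{E}[k]},
  \qquad
  \langle e,w_k\rangle=-\sqrt2\,\ima\p{e^{-i\phi_k}\s{E}[k]},
  \label{eq:ip-identities-plan}
\end{align}
both of which are real.

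Next I would perform the scalar expansion. Fixing $k$ and writing $z_0=\s{X}^\star[k]$, $\epsilon=\s{E}[k]$, and $\eta=\epsilon/z_0$, the elementary expansions $|z_0+\epsilon|=|z_0|\,|1+\eta|=|z_0|\p{1+\re\eta+O(|\eta|^2)}$ and $\arg(z_0+\epsilon)-\arg(z_0)=\ima\log(1+\eta)=\ima\eta+O(|\eta|^2)$, combined with the identities $\re\eta=|z_0|^{-1}\re\p{e^{-i\phi_k}\epsilon}$ and $\ima\eta=|z_0|^{-1}\ima\p{e^{-i\phi_k}\epsilon}$, give
\begin{align}
  |\s{X}[k]|-|\s{X}^\star[k]|=\re\p{e^{-i\phi_k}\s{E}[k]}+O(|\s{E}[k]|^2),
  \qquad
  \phi_{\s{X}}[k]-\phi_{\s{X}^\star}[k]=\frac{\ima\p{e^{-i\phi_k}\s{E}[k]}}{|\s{X}^\star[k]|}+O(|\s{E}[k]|^2).
  \label{eq:scalar-expansion-plan}
\end{align}
Substituting~\eqref{eq:ip-identities-plan} into~\eqref{eq:scalar-expansion-plan} produces exactly the claimed linear terms $\tfrac{1}{\sqrt2}\langle e,u_k\rangle$ and $-\tfrac{1}{\sqrt2\,|\s{X}^\star[k]|}\langle e,w_k\rangle$, with $R^{\mathrm{mag}}_k$ and $R^{\mathrm{ph}}_k$ absorbing the quadratic remainders.

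Finally, I would establish the uniform remainder bound. Since each Fourier coefficient satisfies $|\s{E}[k]|\le\norm{\s{E}}_2=\norm{e}_2$, setting $r_0\triangleq\tfrac12\min_{1\le k\le(d-1)/2}|\s{X}^\star[k]|>0$ forces $|\eta|\le\tfrac12$ whenever $\norm{e}_2\le r_0$, so every $z_0+\epsilon$ remains in a fixed compact region bounded away from the origin. On that region the implied constants in the $|\cdot|$ and $\arg(\cdot)$ expansions depend only on the lower bound $|z_0|\ge2r_0$, which is uniform in $k$; hence there is a single $C$, depending only on $\min_k|\s{X}^\star[k]|$ (and thus on $x^\star$ and $d$), with $|R^{\mathrm{mag}}_k(e)|+|R^{\mathrm{ph}}_k(e)|\le C|\s{E}[k]|^2\le C\norm{e}_2^2$ for every $k$. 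I do not expect a genuinely hard step: the assumption $\s{X}^\star[k]\neq0$ keeps both the magnitude and argument maps real-analytic on the relevant region, so the only care-points are the bookkeeping behind the conjugate-symmetry collapse leading to~\eqref{eq:ip-identities-plan} and verifying that the quadratic remainders in~\eqref{eq:scalar-expansion-plan} are uniform in $k$ rather than merely pointwise.
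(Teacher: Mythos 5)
Your proposal is correct and follows essentially the same route as the paper's proof: a per-frequency polar linearization of $|\cdot|$ and $\arg(\cdot)$ about $\s{X}^\star[k]\neq 0$, combined with the identities expressing $\langle e,u_k\rangle$ and $\langle e,w_k\rangle$ as the real and (negative) imaginary parts of $\sqrt{2}\,e^{-i\phi_k}(F^\ast e)[k]$, and a uniform remainder bound via $|(F^\ast e)[k]|\le\|e\|_2$ together with $\min_k|\s{X}^\star[k]|>0$. The only cosmetic differences are that you derive the inner-product identities by a direct conjugate-symmetry computation rather than the paper's decomposition of $e$ in the $\{u_j,w_j\}$ eigenbasis, and that you expand $|1+\eta|$ and $\arg(1+\eta)$ via the complex logarithm, whereas the paper's Lemma~\ref{lem:polar-expansion} carries out the same expansion through bivariate real Taylor formulas with explicit Hessian bounds.
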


Proposition~\ref{prop:mag-phase-first-order} (proved in Appendix~\ref{sec:proof-Fourier-phases-magnitudes-convergence}) may be viewed as a Fourier-coordinate refinement of the two-phase convergence result in Theorem~\ref{thm:lowSNR-two-phase}.
On each nonzero block $\{k,-k\}$, the eigenvector $u_k$ is aligned, to first order, with the radial direction in the complex plane: the magnitude error $|\s{\hat{X}}^{(t)}[k]| - |\s{X}^\star[k]|$ is a fixed multiple of $\langle e^{(t)},u_k\rangle$ up to $O(\|e^{(t)}\|_2^2)$ corrections.
Thus, the eigenvalue $\lambda_{u_k} = 1 - O(\mathrm{SNR})$, defined in~\eqref{eqn:Jacobian-eigenvalues-def}, governs the leading-order decay of the magnitude error. In contrast, $w_k$ spans the tangential direction corresponding to infinitesimal rotations of $\s{\hat{X}}^{(t)}[k]$, and Proposition~\ref{prop:mag-phase-first-order} shows that the phase error $\phi_{\s{\hat{X}}^{(t)}}[k] - \phi_{\s{X}^\star}[k] $ is controlled, to first order, by $\langle e^{(t)},w_k\rangle$, again up to $O(\|e^{(t)}\|_2^2)$ terms.
Its eigenvalue satisfies $\lambda_{w_k} = 1 - O (\mathrm{SNR}^2)$, so any contraction of the phase error can occur, at best, on the much slower $\mathrm{SNR}^2$ scale at the level of the Jacobian.

This first-order dichotomy is consistent with the classical moment-based identifiability picture in MRA. In the Fourier domain, the second-order autocorrelation (equivalently, the power spectrum) determines the Fourier magnitudes $\{|\s{X}^\star[k]|^2\}$ but contains no phase information. Accordingly, in the low-SNR expansion, the magnitude (radial) directions already appear in the EM linearization at order $O(\mathrm{SNR})$, leading to a comparatively faster contraction. In contrast, identifying the phases requires genuinely third-order information, such as the bispectrum~\cite{bendory2017bispectrum,bandeira2023estimation,perry2019sample}; correspondingly, the phase (tangential) directions enter the EM linearization only through higher-order terms in $\mathrm{SNR}$, and thus can contract only on a much slower scale. This result is in line with the result established in~\cite{katsevich2023likelihood}.

\section{EM bias to initialization at low SNR}\label{sec:bias-to-initalization}

In contrast to the previous section, which studied convergence to the ground truth once the iterate enters a basin of attraction, here we focus on the complementary phenomenon of \emph{bias to initialization} in the population low-SNR limit. In this regime, both the true signal and the initialization are small relative to the noise level, and as a result the population EM dynamics exhibit only a very weak per-iteration drift away from the initialization.

We implement the low-SNR limit via the scaling
\begin{align}\label{eq:mra-beta-scaling}
    x^\star = \beta v, \qquad \hat{x}^{(t)} = \beta \hat{v}^{(t)}, \qquad \beta\to 0,
\end{align}
and assume that the normalized ground truth and initialization are uniformly bounded, $\|v\|\le P$, and $\|\hat{v}^{(0)}\|\le P$, for some fixed constant $P<\infty$ (independent of $\beta$). 
To state the result, let $\mathbf{1}\in\mathbb{R}^d$ denote the all-ones vector and define the \emph{mean projector} $\Pi_{\mathrm{mean}} \triangleq \frac{1}{d}\mathbf{1}\mathbf{1}^\top$, so that $\Pi_{\mathrm{mean}}x$ is the constant vector whose entries equal the sample mean $\frac{1}{d}\sum_{j=1}^d x_j$.

\begin{thm}[Bias to initialization at low SNR]\label{thm:mra-lowSNR-iteration-bias-to-init}
Consider the population EM iteration \eqref{eq:Phi-pop-1d} for the MRA model \eqref{eqn:mainModel1D}.
Fix $T \in \mathbb{N}$ and assume the low-SNR scaling \eqref{eq:mra-beta-scaling}, such that $\mathrm{SNR} \asymp \beta^2 / \sigma^2$~\eqref{eqn:snr-def} with $\|v\|\le P$ and $\|\hat{v}^{(0)}\|\le P$
for some fixed $P<\infty$.
Then, there exists a constant $C \triangleq C(d,P)<\infty$, depending only on $(d,P)$, such that
\begin{align}\label{eq:mra-iter-bound}
    \limsup_{\beta\to 0} \; \frac{1}{\mathrm{SNR}}\,  \frac{\big\|\hat{x}^{(T)}-\hat{x}^{(0)}-\Pi_{\mathrm{mean}}\big(x^\star-\hat{x}^{(0)}\big)\big\|}{\|\hat{x}^{(0)}\|}  \leq  C\,T.
\end{align}
In particular:
\begin{enumerate}
\item \emph{(One-time mean  correction.)} The mean component is corrected in a single iteration:
\begin{align}\label{eq:mean-one-step}
    \Pi_{\mathrm{mean}}\hat{x}^{(1)}=\Pi_{\mathrm{mean}}x^\star,
\end{align}
independently of the initialization $\hat{x}^{(0)}$.
\item \emph{(Mean-aligned initialization.)} If the initialization is mean-aligned,
$\Pi_{\mathrm{mean}}\hat{x}^{(0)}=\Pi_{\mathrm{mean}}x^\star$, then the offset in \eqref{eq:mra-iter-bound} vanishes and
\begin{align}\label{eq:mra-aligned-bound}
    \limsup_{\beta\to 0} \;
    \frac{1}{\mathrm{SNR}}\, \frac{\|\hat{x}^{(T)}-\hat{x}^{(0)}\|}{\|\hat{x}^{(0)}\|} \leq  C\,T.
\end{align}
\end{enumerate}
\end{thm}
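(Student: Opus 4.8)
The plan is to dispose of the mean component exactly, and then show that the \emph{non-mean} component of the displacement accumulates at rate $O(\mathrm{SNR})$ per iteration. For the mean, I would note that $\Pi_{\mathrm{mean}}$ is shift invariant, $\Pi_{\mathrm{mean}}\mathcal{T}_\ell^{-1}=\Pi_{\mathrm{mean}}$, and $\sum_{\ell}\gamma_\ell(x;Y)=1$; applying $\Pi_{\mathrm{mean}}$ to the population map~\eqref{eq:Phi-pop-1d} therefore gives $\Pi_{\mathrm{mean}}M(x)=\E[\Pi_{\mathrm{mean}}Y]=\Pi_{\mathrm{mean}}x^\star$ for \emph{every} $x$. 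This is exactly~\eqref{eq:mean-one-step}, and it forces $\Pi_{\mathrm{mean}}\hat{x}^{(t)}=\Pi_{\mathrm{mean}}x^\star$ for all $t\ge1$. Writing $P_\perp\triangleq I-\Pi_{\mathrm{mean}}$, a one-line computation using this identity shows that for $T\ge1$ the vector inside the norm in~\eqref{eq:mra-iter-bound} equals precisely $P_\perp(\hat{x}^{(T)}-\hat{x}^{(0)})$. Since $P_\perp(\hat{x}^{(T)}-\hat{x}^{(0)})=\sum_{t=0}^{T-1}P_\perp\big(M(\hat{x}^{(t)})-\hat{x}^{(t)}\big)$ by telescoping, the whole problem reduces to a per-iteration estimate $\|P_\perp(M(x)-x)\|=O(\mathrm{SNR})\,\|\hat{x}^{(0)}\|$ valid along the trajectory.

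For the per-step estimate I would pass to the dimensionless variables $\eta=\xi/\sigma$ and $u=\beta/\sigma$, under which $Y=\sigma(u\,\mathcal{T}_{\ell_0}v+\eta)$, the responsibilities $\gamma_\ell$ become functions of $(u,\eta)$ alone, and
\begin{align}
M(\beta\hat{v})=\sigma\,\Psi(u), \qquad \Psi \text{ independent of } \sigma.
\end{align}
The key structural input is a sign-flip symmetry: substituting $\xi\mapsto-\xi$ shows the EM map obeys $M(-x;-x^\star)=-M(x;x^\star)$ (viewing $M$ as a function of both its argument and the ground truth). Consequently $g(\beta)\triangleq M(\beta\hat{v};\beta v)$ is an \emph{odd}, smooth function of $\beta$, so $\Psi(u)=uA_1+u^3\Psi_3+O(u^5)$ \emph{with no even-order term}. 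A direct first-order computation gives $A_1=\Pi_{\mathrm{mean}}v+P_\perp\hat{v}$, whence $P_\perp A_1=P_\perp\hat{v}$ and
\begin{align}
P_\perp\big(M(x)-x\big)=\frac{\beta^3}{\sigma^2}\,P_\perp\Psi_3+O\!\Big(\frac{\beta^5}{\sigma^4}\Big).
\end{align}
Since $\mathrm{SNR}\asymp\beta^2/\sigma^2$ and $\|x\|\asymp\beta$, this is exactly of size $O(\mathrm{SNR}\,\|x\|)$, and the explicit factor $\sigma^{-2}$ is what makes the constant $\sigma$-free. The vanishing of the $O(\beta^2)$ term is indispensable: an $O(\beta^2)$ non-mean drift, divided by $\mathrm{SNR}\,\|\hat{x}^{(0)}\|\asymp\beta^3/\sigma^2$, would diverge as $\beta\to0$.

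To upgrade this to a uniform bound along the trajectory I would use a second structural fact: the softmax responsibilities are invariant under $\hat{v}\mapsto\hat{v}+c\mathbf{1}$, so $\Psi$, and hence $\Psi_3$, depends on $\hat{v}$ only through $P_\perp\hat{v}$. This keeps the per-step drift governed by $\|P_\perp\hat{v}^{(t)}\|$ rather than $\|\hat{v}^{(t)}\|$, which is essential because the first step enlarges the mean of the iterate to $\Pi_{\mathrm{mean}}x^\star$. The leading-order dynamics freeze the non-mean part, $P_\perp\hat{v}^{(t)}\to P_\perp\hat{v}^{(0)}$, and fix the mean, so a short induction shows $\|\hat{v}^{(t)}\|\le 2P+o(1)$ uniformly for $t\le T$; feeding the uniform per-step bound into the telescoped sum yields $\|P_\perp(\hat{x}^{(T)}-\hat{x}^{(0)})\|\le T\cdot O(\mathrm{SNR})\cdot\|\hat{x}^{(0)}\|$. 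Dividing by $\mathrm{SNR}\,\|\hat{x}^{(0)}\|$ and taking $\limsup_{\beta\to0}$ proves~\eqref{eq:mra-iter-bound} with $C=C(d,P)$. Item~2 is then immediate: when $\Pi_{\mathrm{mean}}\hat{x}^{(0)}=\Pi_{\mathrm{mean}}x^\star$ the offset vanishes and $\hat{x}^{(T)}-\hat{x}^{(0)}=P_\perp(\hat{x}^{(T)}-\hat{x}^{(0)})$, so~\eqref{eq:mra-aligned-bound} follows from the same estimate.

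The main obstacle is making the cubic expansion rigorous and uniform. I would justify differentiating under the expectation by dominated convergence — the responsibilities lie in $[0,1]$ and $Y$ has Gaussian tails, so every $\beta$-derivative of the integrand is dominated by a polynomial in $\eta$ — thereby establishing $g\in C^\infty$ near $0$ together with a Taylor remainder $O(\beta^5/\sigma^4)$ that is uniform over the compact set of iterate directions $\{\hat{v}^{(t)}\}_{t\le T}$. The remaining delicate point is the scaling bookkeeping: tracking that $\Psi_3$ collects $\|v\|$ and $\|P_\perp\hat{v}\|$ to combined degree three (degrees $1,2,3$ in $P_\perp\hat{v}$ across the three contributing terms) and that $\mathrm{SNR}\asymp\beta^2/\sigma^2$ forces $\|v\|\asymp\sqrt{d}$, so that all ratios collapse to a single constant depending only on $(d,P)$.
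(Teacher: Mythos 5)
Your proposal is correct and follows essentially the same route as the paper: an exact mean fixed-point identity $\Pi_{\mathrm{mean}}M(x)=\Pi_{\mathrm{mean}}x^\star$, a one-step non-mean drift bound of order $\beta^3/\sigma^2$ obtained by showing the quadratic-in-$\beta$ term vanishes, an induction keeping $\|\hat{v}^{(t)}\|\le 2P$ for $t\le T$, and a telescoping sum divided by $\mathrm{SNR}\,\|\hat{x}^{(0)}\|$. The only presentational difference is that you kill the even-order terms via the global sign-flip symmetry $M_{-\beta}(-x)=-M_{\beta}(x)$ — which is precisely the paper's own Lemma~\ref{lem:M-odd-in-beta} — whereas the paper's displayed proof expands the responsibilities (Lemma~\ref{lem:resp-near-uniform-MRA-gamma}) and invokes Gaussian parity term by term inside the quadratic contribution $A_2$.
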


Theorem~\ref{thm:mra-lowSNR-iteration-bias-to-init}, proved in Appendix~\ref{sec:proof-of-bias-to-initalization}, formalizes an intrinsic \emph{iteration-complexity barrier} underlying the bias to initialization in low SNR.
The population EM normalized drift in the non-mean components is bounded by $O(\mathrm{SNR})$ per iteration. Thus, a constant-fraction displacement from a generic initialization requires at least $T \gtrsim \mathrm{SNR}^{-1}$ iterations.

\section{Einstein from Noise at vanishing SNR} \label{sec:EfN}

In this section, we analyze the Einstein from Noise phenomenon, extending previous analyses from the single-iteration hard-assignment algorithm~\cite{balanov2024einstein} to the more general setting of multiple iterations in the EM algorithm. 

The original Einstein from Noise phenomenon illustrates a striking instance of model bias in scientific data analysis. Consider the following scenario: researchers aim to reconstruct an underlying signal based on the assumption that each observation is a noisy, randomly shifted copy of a known template. 
The phenomenon was first demonstrated using an image of Einstein as the template, which gave rise to its name~\cite{shatsky2009method,henderson2013avoiding}. However, in reality, the observations are purely noise and do not contain a signal. This corresponds to the extreme case of $\mathrm{SNR} = 0$, where the data no longer follow the MRA model described in~\eqref{eqn:mainModel1D}, but instead adhere to the distribution:
\begin{align} \label{eqn:calPModel} 
    y_1, y_2, \ldots, y_{n} \stackrel{\text{i.i.d.}}{\sim} \mathcal{N}(\mathbf{0}, \sigma^2 I_{d}).
\end{align}
Despite the absence of a signal, researchers, who mistakenly believe the data contain meaningful information (i.e., an image of Einstein), may still apply standard hard-assignment methods, such as aligning observations via cross-correlation with the template and averaging the aligned outputs. Interestingly, empirical results have shown that when the data is pure noise, the reconstructed image resembles the initial template (e.g., Einstein's image)~\cite{shatsky2009method, sigworth1998maximum}. This effect, where structure is seemingly recovered from pure noise, is not just a theoretical phenomenon, but has played a prominent role in real-world scientific debates, most notably the structural analysis of an HIV molecule~\cite{mao2013molecular, henderson2013avoiding, van2013finding, subramaniam2013structure, mao2013reply}, as recently revisited in~\cite{balanov2024einstein}. Here, we extend this analysis beyond the single-step hard-assignment setting to understand how the iterative EM algorithm interacts with this form of confirmation bias in the absence of a true signal.

\paragraph{Summary of prior work.} 
Previous studies investigated the behavior of the Einstein from Noise phenomenon in the context of the hard-assignment estimator. It was shown that in the asymptotic regime, as $n \to \infty$, the Fourier phases of the estimator converge to those of the template \cite{balanov2024einstein}. Specifically, the MSE between the template's and the estimator's Fourier phases decays at a rate of $1/n$. Since Fourier phases are key to defining geometric structures like edges and contours, this explains why the reconstructed image retains structural similarities to the template. In high-dimensional settings, where the signal dimension increases, the rate of convergence of the Fourier phases was found to be inversely proportional to the square of the template's Fourier magnitudes. Additionally, in this regime, the Fourier magnitudes of the estimator converge to a scaled version of the template's Fourier magnitudes. Similar results were extended to Gaussian mixture models, showing that the initialization of the centroids in algorithms such as $K$-means and EM can induce confirmation bias \cite{balanov2025confirmation}. See~\cite{silva2025observation} for a related recent result along similar lines.

\paragraph{Notations.}
For clarity of notation, and since the observations contain no signal, we denote them as $y_i = \xi_i$ for all $i \in \{1,2, \dots, n\}$, where $\xi_i \sim \mathcal{N}(\mathbf{0}, \sigma^2 I_{d})$ represents isotropic Gaussian noise, according to~\eqref{eqn:calPModel}. To further simplify the exposition, we assume throughout this section that the noise variance is $\sigma^2 = 1$.

Let  $\s{\hat{X}}^{(t)}$, and $\s{N}_i$, denote the unitary DFTs of $\hat{x}^{(t)}$, and $\xi_i$, respectively, for $i\in \{1,2, \ldots, n\}$. These DFT sequences can be equivalently represented in polar coordinates as follows, 
\begin{align}\label{FourierSpace}
        \s{\hat{X}}^{(t)}  = \{|\s{\hat{X}}^{(t)}[k]|e^{j\phi_{\s{\hat{X}}^{(t)}}[k]}\}_{k=0}^{d-1}, \quad 
        \s{N}_i = \{\abs{\s{N}_i[k]} e^{j\phi_{\s{N}_i}[k]}\}_{k=0}^{d-1}.
\end{align}
Note that the random variables $\ppp{|\s{N}_i[k]|}_{k=0}^{d/2}$ and $\ppp{\phi_{\s{N}_i}[k]}_{k=0}^{d/2}$ are two independent sequences of i.i.d. random variables, such that, $|\s{N}_i[k]| \sim \s{Rayleigh} \left({\sigma^2}\right)$ has Rayleigh distribution, and the phase $\phi_{\s{N}_i}[k] \sim \s{Unif}[-\pi,\pi)$ is uniformly distributed over $[-\pi,\pi)$.


In this section, we analyze the behavior of Einstein from Noise in the population limit, where $n \to \infty$, so all quantities are understood in this limit, and the EM update is deterministic (i.e., there is no sampling randomness). The finite-sample analysis can be found in Section~\ref{sec:finite-sample-EfN}.
We begin with the finite-dimensional setting (fixed $d$), and then treat the high-dimensional regime (where $d\to\infty$) in Section~\ref{sec:high-dim-EfN}.

\subsection{Fixed dimension: phase invariance and magnitude decay}

We start with analyzing the Einstein from Noise population EM for successive iterations. 

\begin{thm}[Finite dimension: Successive population iterations]
\label{thm:1}
Fix $d\geq 2$, and consider the population EM operator $M$ associated with the model~\eqref{eqn:calPModel}, obtained in the limit $n\to\infty$. Let $\hat{x}^{(t)}$ denote the $t$-th population iterate $\hat{x}^{(t+1)} = M(\hat{x}^{(t)})$ for $t=0,1,\dots,T-1$, and let ${\s{\hat{X}}}^{(t)}[k]$ denote the discrete Fourier transform of $x^{(t)}$. Assume that $\s{\hat{X}}^{(t)}[k] \neq 0$ for all $0<k\leq d-1$ and $t=0,\dots,T-1$.
Then:
\begin{enumerate}
    \item \emph{(Contraction of Fourier components.)} 
    For each $t\in\{0,\dots,T-1\}$ and each $0<k\leq d-1$, there exists a deterministic real constant $\alpha_k^{(t)}\in(0,1)$ such that
    \begin{align}
        \s{\hat{X}}^{(t+1)}[k] = \alpha_k^{(t)} \, \s{\hat{X}}^{(t)}[k].        \label{eqn:magnitudeConvergenceAsymptoticM}
    \end{align}

    \item \emph{(Fourier phases preservation.)} 
    For any $0<k\leq d-1$ and any $t\in\{0,\dots,T-1\}$,
    \begin{align}
        \phi_{{\s{\hat{X}}}^{(t+1)}}[k]  =  \phi_{{\s{\hat{X}}}^{(t)}}[k].
        \label{eqn:fourierPhasesConvergence}
    \end{align}

    \item \emph{(Positive correlation.)} 
    For any two successive population iterates,
    \begin{align}
        \langle \hat{x}^{(t+1)}, \hat{x}^{(t)} \rangle  >  0, \label{eqn:positiveCorrelation}
    \end{align}
    for all $t=0,\dots,T-1$.
\end{enumerate}
\end{thm}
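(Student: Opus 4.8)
The plan is to work entirely in the Fourier domain and reduce all three claims to a single structural identity for the population operator $M$ at $\mathrm{SNR}=0$. Recall that a circular shift acts diagonally on the DFT, $\widehat{\mathcal{T}_\ell^{-1}Y}[k]=\omega^{k\ell}\,\s{Y}[k]$ with $\omega\triangleq e^{j2\pi/d}$, and that for $Y=\xi\sim\mathcal{N}(\mathbf 0,I_d)$ the coefficients $\{\s{Y}[k]\}$ are circularly symmetric complex Gaussians with $\mathbb{E}|\s{Y}[k]|^2=1$ and $\mathbb{E}[\s{Y}[k]^2]=0$ for $k\neq 0$. Writing $x=\hat{x}^{(t)}$ and $\s{X}=F^\ast x$, the $k$-th Fourier coefficient of one EM step is $\widehat{M(x)}[k]=\mathbb{E}\big[\s{Y}[k]\,\widehat\gamma[k]\big]$, where $\widehat\gamma[k]\triangleq\sum_{\ell=0}^{d-1}\gamma_\ell(x;Y)\,\omega^{k\ell}$ is the characteristic function of the responsibility vector $\gamma=(\gamma_0,\dots,\gamma_{d-1})$; since $\gamma$ is a probability distribution on $\mathbb{Z}_d$, one has $|\widehat\gamma[k]|\le 1$ for every realization of $Y$. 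I claim it suffices to prove that $\widehat{M(x)}[k]=\alpha_k\,\s{X}[k]$ with $\alpha_k\in(0,1)$ \emph{real} for each $0<k\le d-1$. This immediately yields parts (1) and (2); and for part (3) the mean mode is annihilated, since $\widehat{M(x)}[0]=\mathbb{E}[(\sum_\ell\gamma_\ell)\,\s{Y}[0]]=\mathbb{E}[\s{Y}[0]]=0$, so that $\langle \hat{x}^{(t+1)},\hat{x}^{(t)}\rangle=\sum_{k=1}^{d-1}\alpha_k^{(t)}\,|\s{\hat{X}}^{(t)}[k]|^2>0$ follows from $\alpha_k^{(t)}>0$ together with the standing hypothesis $\s{\hat{X}}^{(t)}[k]\neq 0$.

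The engine is complex Gaussian integration by parts (Stein's identity) applied to the independent Fourier degrees of freedom $\{\s{Y}[k]:1\le k\le (d-1)/2\}$ (with $\s{Y}[-k]=\overline{\s{Y}[k]}$ built in). Since $\mathbb{E}|\s{Y}[k]|^2=1$ and $\mathbb{E}[\s{Y}[k]^2]=0$, Stein gives $\widehat{M(x)}[k]=\mathbb{E}\big[\partial_{\overline{\s{Y}[k]}}\widehat\gamma[k]\big]$. The responsibilities are the softmax of the cross-correlations $c_\ell\triangleq\langle Y,\mathcal{T}_\ell x\rangle$; writing $c_r=\s{Y}[0]\s{X}[0]+\sum_{m\ge 1}(\overline{\s{Y}[m]}\,\omega^{-mr}\s{X}[m]+\s{Y}[m]\,\omega^{mr}\overline{\s{X}[m]})$, the Wirtinger derivative is $\partial_{\overline{\s{Y}[k]}}c_r=\omega^{-kr}\,\s{X}[k]$. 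Combining this with the softmax Jacobian $\partial\gamma_\ell/\partial c_r=\gamma_\ell(\delta_{\ell r}-\gamma_r)$ (here $\sigma^2=1$), one finds $\partial_{\overline{\s{Y}[k]}}\gamma_\ell=\gamma_\ell\,\s{X}[k]\big(\omega^{-k\ell}-\overline{\widehat\gamma[k]}\big)$, whence $\partial_{\overline{\s{Y}[k]}}\widehat\gamma[k]=\s{X}[k]\big(1-|\widehat\gamma[k]|^2\big)$. Taking expectations yields the key identity
\begin{align}
    \widehat{M(x)}[k]=\s{X}[k]\,\mathbb{E}\big[\,1-|\widehat\gamma[k]|^2\,\big],
    \qquad 0<k\le d-1,
\end{align}
so that $\alpha_k=\mathbb{E}[1-|\widehat\gamma[k]|^2]$ is manifestly real; this settles the phase invariance (part 2) and the real-scaling structure of part (1), and it is consistent with the explicit contraction law of Theorem~\ref{thm:low-mag-rate}, since for small magnitudes $1-\mathbb{E}|\widehat\gamma[k]|^2=1-O(|\s{X}[k]|^2)$.

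The bounds $\alpha_k\in(0,1)$ then drop out of the characteristic-function inequality $|\widehat\gamma[k]|\le 1$. For the upper bound, $\alpha_k<1$ is equivalent to $\mathbb{E}|\widehat\gamma[k]|^2>0$; this holds because the signal magnitude $|\s{X}[k]|>0$ genuinely tilts the responsibilities away from the exactly-uniform distribution (for which $\widehat\gamma[k]=\tfrac1d\sum_\ell\omega^{k\ell}=0$), so $\widehat\gamma[k]\neq 0$ on a set of positive probability. For the lower bound, $\alpha_k>0$ is equivalent to $\mathbb{E}[1-|\widehat\gamma[k]|^2]>0$, i.e.\ $\Pr[|\widehat\gamma[k]|<1]>0$; since $\gamma$ is the softmax of almost-surely distinct, finite cross-correlation values, $\gamma$ is almost surely not a point mass, giving $|\widehat\gamma[k]|<1$ a.s. Combining, $0<\alpha_k<1$, which completes parts (1) and (2), and hence (3) as noted above.

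The main obstacle will be making the two strict inequalities rigorous, namely verifying $\mathbb{E}|\widehat\gamma[k]|^2\in(0,1)$ rather than merely $|\widehat\gamma[k]|\le 1$: one must rule out both degeneracy (the responsibilities collapsing to a single shift) and exact uniformity on a full-measure set, which requires a genuine non-degeneracy argument for the softmax of the Gaussian cross-correlation field. A secondary technical point is the careful Stein/Wirtinger bookkeeping under the conjugate-symmetry constraints of the DFT of a real vector, so that the constant in the identity is exactly $1$ (for odd $d$; the even-$d$ modes $k=0,d/2$ are real and are handled separately, the former giving $\widehat{M(x)}[0]=0$). As an independent cross-check of the phase invariance alone, one can bypass integration by parts and instead invoke two distributional symmetries of the isotropic noise: translating the i.i.d.\ uniform Fourier phases by $\{\phi_{\s{X}}[k]\}$ extracts the factor $e^{j\phi_{\s{X}}[k]}$ (showing $\widehat{M(x)}[k]$ is a phase-free multiple of $\s{X}[k]$), and reflecting the noise phases $\psi\mapsto-\psi$ (together with $\ell\mapsto-\ell$) shows that multiple is real; however, this route does not deliver the quantitative bounds $0<\alpha_k<1$, for which the Stein identity appears to be the most economical tool.
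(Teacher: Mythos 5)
Your proposal takes a genuinely different route from the paper's, and its core computation is correct; however, one of the two strict inequalities is left as an acknowledged but unresolved gap, which happens to be exactly where the paper invests its real effort.

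On the comparison: both arguments rest on Gaussian integration by parts, but the paper applies Stein's identity in \emph{real} space (Lemma~\ref{lemma:3}), obtaining $\hat{x} = x - \sum_{r,\ell}\p{\mathcal{T}_{r-\ell}x}\,\mathbb{E}\pp{\gamma_\ell\gamma_r}$ (Proposition~\ref{prop:realSpaceConvergence}), then diagonalizes in the Fourier basis via cyclo-stationarity of the responsibilities (Lemma~\ref{lem:cycloStationary}) together with a separate symmetry argument killing the imaginary part (Proposition~\ref{prop:fourierSpaceConvergence}); the strict bounds then require two further lemmas (Lemma~\ref{lem:positivityOfSoftMax} for $\alpha_k>0$, the positive-definiteness Lemma~\ref{lem:gamma-cov-PD} for $\alpha_k<1$). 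Your Wirtinger/complex-Stein computation performed directly on the Fourier coefficients collapses all of this into the single identity $\widehat{M(x)}[k]=\s{X}[k]\,\mathbb{E}\pp{1-|\widehat\gamma[k]|^2}$. This is algebraically equivalent to the paper's formula $\alpha_k = 1-d\sum_\tau\cos(2\pi k\tau/d)\,\mathbb{E}\pp{\gamma_0\gamma_\tau}$ in~\eqref{eqn:app-A65} (expand $|\widehat\gamma[k]|^2$ and use stationarity of $(\ell,r)\mapsto\mathbb{E}[\gamma_\ell\gamma_r]$), but it buys two things simultaneously: realness of $\alpha_k$ is automatic (no separate vanishing-imaginary-part argument), and $\alpha_k\in[0,1]$ is immediate from $|\widehat\gamma[k]|\le 1$, so only the two strict inequalities remain. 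Your lower bound is in fact cleaner than you state: softmax weights are strictly positive for \emph{every} realization of $Y$ (finiteness of the correlations suffices; distinctness is irrelevant), and the roots of unity $\{\omega^{k\ell}\}_\ell$ are not all equal for $k\neq 0$, so $|\widehat\gamma[k]|<1$ holds surely, hence $\alpha_k>0$.

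The genuine gap is the strict upper bound $\alpha_k<1$, i.e.\ $\mathbb{E}|\widehat\gamma[k]|^2>0$. Your justification — that the signal tilts $\gamma$ away from the uniform distribution, ``so $\widehat\gamma[k]\neq 0$ on a set of positive probability'' — is a non sequitur: for $d\ge 4$ the zero set of $\widehat\gamma[k]$ inside the simplex is a positive-dimensional set containing far more than the uniform distribution, so ruling out exact uniformity does not rule out $\widehat\gamma[k]=0$. This is precisely what the paper's Lemma~\ref{lem:gamma-cov-PD} handles: it proves the circulant matrix $\big[\mathbb{E}\gamma_{\ell_1}\gamma_{\ell_2}\big]$ is positive definite, and its $k$-th eigenvalue equals $\tfrac{1}{d}\mathbb{E}|\widehat\gamma[k]|^2$. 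You flag this as your ``main obstacle'' but do not close it. It can be closed inside your framework with a short real-analyticity argument: $\widehat\gamma[k]$ is real-analytic in $Y$ and not identically zero, since at $Y=tx$ with $t\to\infty$ the responsibilities concentrate on $\ell=0$ (the hypothesis $\s{X}[m]\neq 0$ for all $m\neq 0$ gives $x$ a trivial stabilizer in $\mathbb{Z}_d$, so $\langle x,\mathcal{T}_\ell x\rangle<\|x\|^2$ strictly for $\ell\neq 0$), whence $\widehat\gamma[k]\to 1$; therefore the zero set of $\re\widehat\gamma[k]$ has Lebesgue measure zero, $|\widehat\gamma[k]|>0$ a.s., and $\mathbb{E}|\widehat\gamma[k]|^2>0$. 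Note, incidentally, that this patch needs only $\s{X}[k]\neq 0$ for $k\neq 0$, whereas Lemma~\ref{lem:gamma-cov-PD} as stated also assumes $\s{X}[0]\neq 0$ — a hypothesis that fails for the iterates with $t\ge 1$, whose mean component is exactly zero — so your route, once patched, is arguably the more robust one. With that insertion (plus the conjugate-symmetry bookkeeping for $k>(d-1)/2$ and the real mode $k=d/2$ for even $d$, which you already note), your proof is complete.
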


Theorem~\ref{thm:1} (proved in Appendix~\ref{thm:proofs1}) shows that, at the population level and for fixed~$d$, the EM operator acts diagonally in the Fourier basis: each non-mean Fourier mode is multiplied by a real contraction factor $\alpha_k^{(t)}\in(0,1)$. Consequently, the Fourier phases of all non-mean components remain the same as in the previous iteration estimate, while their magnitudes are strictly decreased at each iteration. The next corollary extends the population Einstein from Noise to the multi-iteration case, and is proved in Appendix~\ref{sec:proofOfMultiIterationEfN}.

\begin{corollary}
[Finite dimension: Multi-iteration population behavior]
\label{thm:3.5}
Fix $d\geq 2$, and consider the population EM operator $M$ associated with the model~\eqref{eqn:calPModel}, obtained in the limit $n\to\infty$. Let $\hat{x}^{(t)}$ denote the $t$-th population iterate $\hat{x}^{(t+1)} = M\p{\hat{x}^{(t)}}$ for $t=0,1,\dots,T-1$, and let ${\s{\hat{X}}}^{(t)}[k]$ denote the discrete Fourier transform of $x^{(t)}$. Assume that $\s{\hat{X}}^{(0)}[k] \neq 0$ for all $0<k\leq d-1$.
Then:
\begin{enumerate}
    \item \emph{(Multi-iteration Fourier phases.)} 
    For any fixed $T\in\mathbb{N}$ and any $0<k\leq d-1$, the Fourier phases are preserved along the population trajectory:
    \begin{align}
        \phi_{\s{\hat{X}}^{(T)}}[k] = \phi_{\s{\hat{X}}^{(0)}}[k].
    \end{align}

    \item \emph{(Magnitudes converge to zero as $t\to\infty$.)}
    For every $0<k\leq d-1$,
    \begin{align}
        \lim_{t\to\infty} |{\s{\hat{X}}}^{(t)}[k]| = 0.
    \end{align}
    In particular, $\hat{x}^{(t)} \to 0$ as $t\to\infty$.
\end{enumerate}
\end{corollary}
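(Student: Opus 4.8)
The plan is to obtain both claims by iterating the single-step result Theorem~\ref{thm:1}, handling phases and magnitudes separately. The one preliminary point to settle is that the hypothesis of Theorem~\ref{thm:1} (namely $\s{\hat{X}}^{(t)}[k]\neq 0$ at every intermediate iterate) is self-propagating. By part~(1) of Theorem~\ref{thm:1}, each nonzero mode updates as $\s{\hat{X}}^{(t+1)}[k]=\alpha_k^{(t)}\,\s{\hat{X}}^{(t)}[k]$ with $\alpha_k^{(t)}\in(0,1)$, so in particular $\alpha_k^{(t)}>0$. Hence, starting from the assumption $\s{\hat{X}}^{(0)}[k]\neq 0$, a straightforward induction on $t$ shows $\s{\hat{X}}^{(t)}[k]\neq 0$ for all $t\geq 0$ and all $0<k\le d-1$, so Theorem~\ref{thm:1} may legitimately be invoked at every step of the trajectory.

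For part~(1), I would telescope the single-step phase identity~\eqref{eqn:fourierPhasesConvergence}. Since $\s{\hat{X}}^{(t)}[k]\neq 0$ for every $t$, part~(2) of Theorem~\ref{thm:1} gives $\phi_{\s{\hat{X}}^{(t+1)}}[k]=\phi_{\s{\hat{X}}^{(t)}}[k]$ at each step, and chaining these equalities from $t=0$ to $t=T-1$ yields $\phi_{\s{\hat{X}}^{(T)}}[k]=\phi_{\s{\hat{X}}^{(0)}}[k]$. Equivalently, iterating~\eqref{eqn:magnitudeConvergenceAsymptoticM} gives $\s{\hat{X}}^{(T)}[k]=\big(\prod_{s=0}^{T-1}\alpha_k^{(s)}\big)\s{\hat{X}}^{(0)}[k]$, and since the prefactor is a strictly positive real scalar, it leaves the phase unchanged, delivering phase invariance directly.

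For part~(2), the same product representation shows the magnitudes $r_t^{(k)}\triangleq|\s{\hat{X}}^{(t)}[k]|$ satisfy $r_{t+1}^{(k)}=\alpha_k^{(t)}r_t^{(k)}$ with $\alpha_k^{(t)}\in(0,1)$, so $\{r_t^{(k)}\}_{t\ge 0}$ is strictly decreasing and bounded below by $0$, hence converges to some $r_\infty^{(k)}\ge 0$. Combined with the frozen phases from part~(1), each Fourier coordinate converges, and in finite dimension this means the full iterate converges, $\hat{x}^{(t)}\to\hat{x}^{(\infty)}$. By continuity (indeed smoothness) of the population operator $M$, as in Theorem~\ref{thm:EM-MRA-spectral-radius}(1), the limit is a fixed point, $\hat{x}^{(\infty)}=M(\hat{x}^{(\infty)})$. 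I then close by contradiction: if $r_\infty^{(k)}>0$ for some $0<k\le d-1$, then $\s{\hat{X}}^{(\infty)}[k]\neq 0$, so applying part~(1) of Theorem~\ref{thm:1} at $\hat{x}^{(\infty)}$ gives $\s{\hat{X}}^{(\infty)}[k]=\alpha_k^{(\infty)}\s{\hat{X}}^{(\infty)}[k]$ with $\alpha_k^{(\infty)}\in(0,1)$, forcing $\alpha_k^{(\infty)}=1$, a contradiction. Thus $r_\infty^{(k)}=0$ for every $0<k\le d-1$. Finally, the mean mode $k=0$ is annihilated after a single iteration, since at $\mathrm{SNR}=0$ the noise is centered ($\E[Y]=0$) and $\mathbf{1}^\top M(x)=\E[\mathbf{1}^\top Y]=0$ for every $x$ (using that $\mathbf{1}^\top\mathcal{T}_\ell^{-1}Y=\mathbf{1}^\top Y$ and $\sum_\ell\gamma_\ell=1$); combining, every Fourier mode of $\hat{x}^{(t)}$ tends to $0$, whence $\hat{x}^{(t)}\to 0$.

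The main obstacle is precisely the last step of part~(2): a product of factors in $(0,1)$ need not tend to $0$, so strict per-step contraction alone does \emph{not} force $r_\infty^{(k)}=0$. The fixed-point/continuity argument circumvents this by ruling out a nonzero limit rather than quantifying the decay rate, which I expect to be the cleanest route. An alternative, more quantitative approach would establish a lower bound of the form $1-\alpha_k^{(t)}\gtrsim (r_t^{(k)})^2$ (the sharp version of which is the later rate~\eqref{eqn:asymp-low-mag-main}); such a bound makes $\sum_t(1-\alpha_k^{(t)})$ diverge whenever $r_\infty^{(k)}>0$, again forcing $r_\infty^{(k)}=0$, but it requires opening up the explicit small-magnitude expansion of $\alpha_k^{(t)}$ and is therefore heavier than the soft compactness argument above.
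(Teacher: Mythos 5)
Your proposal is correct, and for part (1) it coincides with the paper's proof (induction on the single-step phase identity of Theorem~\ref{thm:1}). For part (2) you take a genuinely different, though closely related, route to the same conclusion. The paper also starts from the monotone sequence $r_t=|\s{\hat{X}}^{(t)}[k]|^2\to r_\infty$, but it then extracts only a \emph{subsequential} limit point $x^\infty$ of the bounded iterates, invokes continuity of the map $x\mapsto\alpha_k(x)$ to get $\alpha_k(x^\infty)<1$ and hence $\alpha_k^{(t)}\le 1-\eta$ for large $t$, and derives a contradiction from the resulting geometric decay $r_t\le(1-\eta)^{2(t-t_0)}r_{t_0}$. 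You instead observe that the frozen phases plus converging magnitudes force convergence of the \emph{entire} iterate $\hat{x}^{(t)}\to\hat{x}^{(\infty)}$, identify $\hat{x}^{(\infty)}$ as a fixed point of $M$ by continuity, and then obtain the contradiction $\alpha_k(\hat{x}^{(\infty)})=1$ directly from Theorem~\ref{thm:1}(1). Your version is arguably cleaner: it avoids the paper's somewhat delicate step of upgrading closeness along a subsequence to a uniform contraction bound for all large $t$, and it actually exploits the phase preservation, which the paper's part-(2) argument never uses. Both arguments lean on the same per-mode reading of Theorem~\ref{thm:1}(1), namely that $\alpha_k(x)<1$ as soon as the single coefficient $\s{X}[k]$ is nonzero, even though the theorem's stated hypothesis requires all non-mean modes to be nonzero at the point of application; so you are no worse off than the paper on that score. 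Two details you handle that the paper's proof glosses over are worth keeping: the induction showing that $\s{\hat{X}}^{(t)}[k]\neq 0$ propagates from $t=0$ (so Theorem~\ref{thm:1} is legitimately applicable at every step), and the explicit annihilation of the mean mode after one iteration, which is needed for the final claim $\hat{x}^{(t)}\to 0$. Your closing remark is also accurate: the quantitative alternative via $1-\alpha_k^{(t)}\gtrsim (r_t^{(k)})^2$ is exactly the mechanism of Theorem~\ref{thm:low-mag-rate}, and the paper likewise defers it rather than using it here.
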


Corollary~\ref{thm:3.5} highlights a distinctive feature of the Einstein from Noise regime: independently of the initialization, the population EM trajectory converges to 0, while retaining the initialization phases. However, the contraction towards zero becomes extremely weak near the origin (as shown in Figure~\ref{fig:1}(b)), which motivates the statement below (proved in Appendix~\ref{sec:proofOfEfNlowSNR}).  

\begin{thm}[Low-magnitude regime: contraction expansion and global rate]
\label{thm:low-mag-rate}
Assume the conditions of Corollary~\ref{thm:3.5}, and suppose that the initialization is in the low-magnitude region:
$\|\hat{x}^{(0)}\|\le \delta$, where $\delta>0$ is sufficiently small. 
Then:
\begin{enumerate}
    \item \emph{(Successive-step expansion.)}
    For every $t\ge 0$ and every $1\le k\le d-1$,
    \begin{align}
        \alpha_k^{(t)}
         = 1-|{\s{\hat{X}}}^{(t)}[k]|^2
        + E_{t,k},
        \label{eqn:alpha-low-mag-expansion}
    \end{align}
    where $\alpha_k^{(t)}$ is the contraction coefficient of the $k$-th Fourier component defined in~\eqref{eqn:magnitudeConvergenceAsymptoticM}, with a uniform remainder satisfying $|E_{t,k}| \le C\,|{\s{\hat{X}}}^{(t)}[k]|^4$, for a global constant $C$.

    \item \emph{(Uniform multi-iteration bounds.)}
    For any $\varepsilon\in(0,1)$,  there exists $\delta(\varepsilon)>0$ such that if
    $\big| \s{\hat{X}}^{(0)}[k]\big|^2\le \delta(\varepsilon)$, then for all $T\ge 0$,
    \begin{align}
        \frac{1}{\sqrt{1 + 2T(1+\varepsilon)\big|  \s{\hat{X}}^{(0)}[k]\big|^2}}
         \le 
        \frac{\big| \s{\hat{X}}^{(T)}[k]\big|}{\big|\s{\hat{X}}^{(0)}[k]\big|}
         \le 
        \frac{1}{\sqrt{1 + 2T(1-\varepsilon)\big|  \s{\hat{X}}^{(0)}[k]\big|^2}}.
        \label{eqn:multi-iter-logistic-bounds}
    \end{align}
\end{enumerate}
\end{thm}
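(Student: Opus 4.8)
The plan is to establish Part~1 (a sharp second-order expansion of the per-mode contraction factor $\alpha_k^{(t)}$) by Taylor-expanding the population EM operator and invoking Gaussian moment identities, and then to deduce Part~2 by reducing the resulting scalar recursion for the squared magnitude to the exactly solvable recursion for its reciprocal. Throughout I use that, by Theorem~\ref{thm:1}, $M$ acts diagonally in the Fourier basis, so with $\s X := \s{\hat X}^{(t)}$ one has $\alpha_k^{(t)} = [F^\ast M(\hat x^{(t)})]_k / \s X[k]$ with $\alpha_k^{(t)}\in(0,1)$ real.

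For Part~1, I would write $Y=\xi$ with DFT $\s N = F^\ast \xi$ and $\omega = e^{j2\pi/d}$; the shift theorem gives $[F^\ast M(x)]_k = \E[\s N[k]\sum_{\ell}\gamma_\ell(x;\xi)\,\omega^{k\ell}]$. Since the softmax is invariant under a common additive shift of the logits $s_\ell = \langle \xi,\mathcal T_\ell x\rangle$, I would center them and expand $\gamma_\ell$ to third order in the (small) logits. Four facts drive the computation: the second-moment identities $\E[\s N[a]\overline{\s N[b]}]=\delta_{a,b}$ and $\E[\s N[a]\s N[b]]=\delta_{a,-b}$; the Hermitian symmetry $\s N[-k]=\overline{\s N[k]}$ together with $\E[|\s N[k]|^2]=1$ and $\E[|\s N[k]|^4]=2$ (valid for $\sigma^2=1$); and the oddness of $M$, which forces $[F^\ast M(x)]_k$ to carry only odd-order Taylor terms and $\alpha_k$ only even ones. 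The order-one term reproduces the factor $1$ (using $\E[|\s N[k]|^2]=1$); the order-two-in-$s$ contribution vanishes by oddness; and the two order-three contributions — from the cubic logit term and from the normalization correction — are evaluated by the fourth-moment (Isserlis/Wick) expansion. The decisive point is a cancellation: each of these two contributions separately produces a spurious cross-mode term proportional to $\s X[k]\sum_{m\neq 0}|\s X[m]|^2$, and these cancel (the mismatch being supplied by the $\E[|\s N[k]|^4]=2$ diagonal), leaving exactly $-|\s X[k]|^2\,\s X[k]$. Dividing by $\s X[k]$ gives $\alpha_k^{(t)} = 1 - |\s{\hat X}^{(t)}[k]|^2 + E_{t,k}$, i.e.\ \eqref{eqn:alpha-low-mag-expansion}. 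To bound $E_{t,k}$ I would collect the order-$\ge 4$ terms, continue the Wick bookkeeping (now sixth Gaussian moments), and use both the smoothness of $M$ from Theorem~\ref{thm:EM-MRA-spectral-radius}(1) and the invariance of $\alpha_k$ under independent phase rotations of the Fourier modes (which makes $\alpha_k$ a smooth even function of the magnitude profile); this yields the uniform bound $|E_{t,k}|\le C|\s{\hat X}^{(t)}[k]|^4$ on $\|\hat x^{(0)}\|\le\delta$.

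For Part~2, set $a_t := |\s{\hat X}^{(t)}[k]|^2$. Since $\alpha_k^{(t)}\in(0,1)$ by Theorem~\ref{thm:1}, $a_t$ is nonincreasing, so $a_t\le a_0\le\delta$ for every $t$ and the expansion of Part~1 is valid at each step. Squaring \eqref{eqn:alpha-low-mag-expansion} gives $a_{t+1} = (\alpha_k^{(t)})^2 a_t = a_t - 2a_t^2 + O(a_t^3)$. Passing to the reciprocal $b_t := 1/a_t$ turns this into $b_{t+1}-b_t = 2 + r_t$ with $|r_t|\le C' a_t \le C'\delta$. Telescoping yields $b_T = b_0 + 2T + \sum_{t=0}^{T-1} r_t$ with $|\sum_t r_t|\le C'\delta\,T$, hence $b_T \in [\,b_0 + 2T(1-\varepsilon),\ b_0 + 2T(1+\varepsilon)\,]$ as soon as $\delta=\delta(\varepsilon)$ is chosen so that $C'\delta\le 2\varepsilon$. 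Inverting $a_T = 1/b_T$ and substituting $b_0 = 1/|\s{\hat X}^{(0)}[k]|^2$ gives the two-sided bounds on $a_T/a_0$, and taking square roots produces exactly \eqref{eqn:multi-iter-logistic-bounds}.

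I expect the main obstacle to be Part~1: executing the fourth-moment Wick expansion of the two cubic contributions, verifying the exact cancellation of the cross-mode terms that isolates the coefficient $-|\s X[k]|^2$, and then controlling the higher-order remainder uniformly in the claimed $|\s{\hat X}^{(t)}[k]|^4$ form (the phase-rotation invariance and smoothness of $M$ being the tools that keep the remainder from acquiring spurious cross-mode contributions). By contrast, once \eqref{eqn:alpha-low-mag-expansion} is in hand, Part~2 is a routine reciprocal-telescoping argument, made rigorous by the monotonicity $\alpha_k^{(t)}\in(0,1)$ supplied by Theorem~\ref{thm:1}, which confines every iterate to the low-magnitude region where the expansion holds.
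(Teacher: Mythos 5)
Your Part~2 is essentially the paper's own argument: the paper (Proposition~\ref{prop:multi_iter_logistic}) likewise reduces to the scalar recursion $r_{t+1}=r_t-2r_t^2+O(r_t^3)$ for $r_t=|\s{\hat X}^{(t)}[k]|^2$, uses the monotonicity from Theorem~\ref{thm:1}/Corollary~\ref{thm:3.5} to keep every iterate in the low-magnitude region, and obtains the two-sided bound by comparison with the exactly solvable reciprocal recursions $1/u_{t+1}-1/u_t=2(1\mp\varepsilon)$; your direct telescoping of $b_t=1/a_t$ is an equivalent, slightly cleaner way to run the same comparison, and your inversion back to \eqref{eqn:multi-iter-logistic-bounds} is correct.

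Part~1 is where you genuinely diverge, and the comparison is instructive. You expand $[F^\ast M(x)]_k=\E\big[\s N[k]\sum_\ell\gamma_\ell\,\omega^{k\ell}\big]$ directly, which forces a third-order softmax expansion and fourth-moment Wick sums, and your computation hinges on a cross-mode cancellation that you assert but do not execute. The paper sidesteps this entirely: Gaussian integration by parts (Stein's identity, Lemma~\ref{lemma:3}) yields the exact formula $\alpha_k = 1 - d\sum_{\ell}\cos(2\pi k\ell/d)\,\E[\gamma_0\gamma_\ell]$ (Propositions~\ref{prop:realSpaceConvergence}--\ref{prop:fourierSpaceConvergence}), so that only a \emph{second}-order expansion of $\E[\gamma_0\gamma_\ell]$, i.e.\ second moments of the centered logits, is needed (Lemma~\ref{prop:alpha-k-small-signal}). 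Since the logit covariance is circulant, it is diagonalized by the DFT with eigenvalue $\beta^2 d|\s V[k]|^2$ at frequency $k\neq 0$, and the cosine sum extracts exactly that eigenvalue, producing $-|\s X[k]|^2$ with no cross-mode terms ever arising — the cancellation you flag as the "decisive point" is automatic in this parametrization. Your route should reach the same answer (it must, given the paper's result), but at the cost of substantially heavier moment bookkeeping. One caveat applies to both proofs: the per-mode remainder bound $|E_{t,k}|\le C|\s{\hat X}^{(t)}[k]|^4$ is not actually delivered by either argument — the paper's lemma proves only a remainder of order $\|\hat x^{(t)}\|^4$ and then restates it in per-mode form, and your appeal to phase-rotation invariance plus smoothness is similarly incomplete — so this is a shared gap rather than a defect of your approach relative to the paper's.
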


The bounds \eqref{eqn:multi-iter-logistic-bounds} imply the asymptotic decay
\begin{align}
    \frac{\big|\s{\hat{X}}^{(T)}[k]\big|}{\big|\s{\hat{X}}^{(0)}[k]\big|}
     \asymp 
    \frac{1}{\sqrt{1+2T\big|  \s{\hat{X}}^{(0)}[k]\big|^2}},
    \label{eqn:asymp-low-mag}
\end{align}
uniformly over all iteration counts $T$ as the initialization magnitude tends to zero. 
Thus, the EM iterates produces a convergence of the Fourier magnitudes toward $0$ in a slow rate of $1/\sqrt{T}$, while leaving phases unchanged. 

\paragraph{Comparison with geometric convergence near the ground truth.}
The bounds in~\eqref{eqn:multi-iter-logistic-bounds}  exhibit a qualitatively different convergence mechanism from the Jacobian-based analysis around a nonzero ground truth, as analyzed in Section~\ref{sec:fundemental-properties}. When $x^\star=\beta v$ with $\beta>0$, the Jacobian expansion in Corollary~\ref{thm:3.5} shows that the population EM map is locally contractive around $x^\star$ so that, as long as the iterates remain in a sufficiently small neighborhood of $x^\star$, the error decays geometrically at rate $\rho(J(x^\star))^t$. In the Einstein  from Noise settings, where $\mathrm{SNR} = 0$, the Jacobian is flat in the sense that $\rho(J(0))=1$ and the first nontrivial term in the Taylor expansion of the EM map is quadratic in the Fourier magnitudes, as captured by the expansion~\eqref{eqn:alpha-low-mag-expansion}. Consequently, the linear term provides no contraction, and the dynamics are governed by the second-order term only, which integrates into the bound~\eqref{eqn:multi-iter-logistic-bounds}.

\subsection{High dimension asymptotics}\label{sec:high-dim-EfN}

We now study the population Einstein from Noise dynamics in a high-dimensional regime where the signal length $d\to\infty$.
For each dimension $d$, let $\hat{x}^{(t)}_d\in\mathbb{R}^d$ denote the $t$-th population iterate. 
For each $d$ and $t$, we define the shift-autocorrelations of $\hat{x}^{(t)}_d$ by
\begin{align}
    \rho^{(t)}_{d,\ell}
     \triangleq  
    \big\langle \mathcal{T}_{\ell} \hat{x}^{(t)}_d, \hat{x}^{(t)}_d\big\rangle,
    \qquad \ell\in\{0,\dots,d-1\}.
    \label{eqn:app_H1_1_rewrite}
\end{align}
We impose a mild asymptotic decorrelation condition: far-apart circular shifts of the iterate become orthogonal as $d$ grows.
Formally, for every fixed $t\in\{0,\dots,T\}$,
\begin{align}
    \lim_{L\to\infty} \limsup_{d\to\infty} \max_{L\le \ell\le d-L}\big|\rho^{(t)}_{d,\ell}\big| = 0.
    \label{eqn:rho_decay_assump}
\end{align}
Finally, we assume that the initialization has dimension-independent energy: there exists $\tau \ge 0$ such that for all $d \ge 2$,
\begin{align}
    \| \hat{x}^{(0)}_d\|_2 = \tau.
    \label{eqn:init_fixed_energy}
\end{align}
We are now ready to state the high-dimensional Einstein from Noise theorem, which is proved in Appendix~\ref{sec:proofOfHighDimensionEfN}.

\begin{thm}[High dimension population iterations]
\label{thm:2}
Fix $1\le T<\infty$ and let $ \hat{x}^{(t)}_d$ denote the $t$-th population EM iterate produced by Algorithm~\ref{alg:generalizedEfNsoft} in dimension $d$. 
Assume \eqref{eqn:rho_decay_assump} holds for all $t\in\{0,\dots,T\}$ and that the initialization satisfies \eqref{eqn:init_fixed_energy}. 
Then, 
\begin{enumerate}
    \item \emph{(Successive iterations.)} For every $t\in\{0,\dots,T-1\}$,
    \begin{align}
        \lim_{d\to\infty}\ 
        \big\| \hat{x}^{(t+1)}_d- \hat{x}^{(t)}_d\big\|_2 = 0 .
        \label{eqn:succ_iter_highdim}
    \end{align}

    \item \emph{(Multi-iteration.)} For any fixed $1\le T<\infty$,
    \begin{align}
        \lim_{d\to\infty}\ 
        \big\| \hat{x}^{(T)}_d- \hat{x}^{(0)}_d\big\|_2
         = 0 .
        \label{eqn:multi_iter_highdim}
    \end{align}
\end{enumerate}
\end{thm}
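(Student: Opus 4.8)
The plan is to reduce both statements to a single quantitative fact: the non-mean Fourier contraction factors $\alpha_k^{(t)}$ of Theorem~\ref{thm:1} all tend to $1$ as $d\to\infty$, at a rate governed by the decorrelation hypothesis~\eqref{eqn:rho_decay_assump}. By Theorem~\ref{thm:1} the population step is diagonal in the Fourier basis, $\s{\hat{X}}^{(t+1)}[k]=\alpha_k^{(t)}\s{\hat{X}}^{(t)}[k]$ with $\alpha_k^{(t)}\in(0,1)$ for $0<k\le d-1$, while the DC mode is annihilated, $\s{\hat{X}}^{(t+1)}[0]=0$ (because $\sum_\ell\gamma_\ell=1$ and $\E[\s{N}[0]]=0$, where $\s{N}\triangleq F^\ast Y$ is the DFT of the noise draw $Y$). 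Hence, by Parseval,
\[
\big\|\hat{x}^{(t+1)}_d-\hat{x}^{(t)}_d\big\|_2^2
 = |\s{\hat{X}}^{(t)}[0]|^2
 + \sum_{k=1}^{d-1}\big(1-\alpha_k^{(t)}\big)^2\,|\s{\hat{X}}^{(t)}[k]|^2 .
\]
Using $1-\alpha_k^{(t)}\in[0,1]$ and $\sum_k|\s{\hat{X}}^{(t)}[k]|^2=\|\hat{x}^{(t)}_d\|_2^2\le\tau^2$ (each Fourier magnitude is non-increasing by Theorem~\ref{thm:1}, and $\|\hat{x}^{(0)}_d\|_2=\tau$ by~\eqref{eqn:init_fixed_energy}), the non-mean sum is at most $\tau^2\max_{1\le k\le d-1}(1-\alpha_k^{(t)})$. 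It therefore suffices to prove that, as $d\to\infty$, both $|\s{\hat{X}}^{(t)}[0]|\to0$ and $\max_{k\ge1}(1-\alpha_k^{(t)})\to0$.

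For the mean term, $\s{\hat{X}}^{(t)}[0]=0$ for every $t\ge1$, so only $t=0$ requires attention; there the DC component enters every autocorrelation lag~\eqref{eqn:app_H1_1_rewrite} with equal weight, $\rho^{(0)}_{d,\ell}=|\s{\hat{X}}^{(0)}[0]|^2+\sum_{k\ne0}|\s{\hat{X}}^{(0)}[k]|^2 e^{j2\pi k\ell/d}$, so applying~\eqref{eqn:rho_decay_assump} to the bulk lags forces $|\s{\hat{X}}^{(0)}[0]|\to0$. For the contraction factors I would first derive an exact closed form. Writing the cross-correlation scores $s_\ell\triangleq\langle Y,\mathcal{T}_\ell\hat{x}^{(t)}_d\rangle$, these form a centered stationary Gaussian vector with covariance $\E[s_\ell s_{\ell'}]=\rho^{(t)}_{d,\ell-\ell'}$, and the responsibilities are $\gamma_\ell=e^{s_\ell}/Z$ with $Z\triangleq\sum_r e^{s_r}$. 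Since the Fourier-domain M-step reads $\s{\hat{X}}^{(t+1)}[k]=\E[\s{N}[k]\,W_k]$ with $W_k\triangleq\sum_{\ell=0}^{d-1}\gamma_\ell e^{j2\pi k\ell/d}$, complex Gaussian integration by parts (Stein's lemma) yields the exact identity
\[
1-\alpha_k^{(t)}=\E\big[|W_k|^2\big],
\]
the squared magnitude of the frequency-$k$ DFT coefficient of the random responsibility vector; since $|W_k|\le\sum_\ell\gamma_\ell=1$, this also recovers $\alpha_k^{(t)}\in[0,1]$.

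It then remains to show $\max_{k\ge1}\E[|W_k|^2]\to0$. Decomposing $W_k=U_k/Z$ with $U_k\triangleq\sum_\ell e^{s_\ell}e^{j2\pi k\ell/d}$, the Gaussian moment-generating function gives, for $k\ne0$,
\[
\E\big[|U_k|^2\big]=d\,e^{\rho^{(t)}_{d,0}}\sum_{m}\big(e^{\rho^{(t)}_{d,m}}-1\big)e^{j2\pi km/d},
\qquad \E[Z]=d\,e^{\rho^{(t)}_{d,0}/2},
\]
where the constant part of $e^{\rho^{(t)}_{d,m}}$ drops out because $\sum_m e^{j2\pi km/d}=0$. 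Thus the mean-field ratio satisfies
\[
\frac{\E[|U_k|^2]}{\E[Z]^2}=\frac1d\sum_m\big(e^{\rho^{(t)}_{d,m}}-1\big)e^{j2\pi km/d},
\]
whose modulus is at most $\tfrac{e^{\tau^2}}{d}\sum_m|\rho^{(t)}_{d,m}|$ (using $|e^{\rho}-1|\le|\rho|e^{|\rho|}$ and $|\rho^{(t)}_{d,m}|\le\rho^{(t)}_{d,0}\le\tau^2$). Splitting the sum into the $2L$ edge lags (each $\le\tau^2$) and the $d-2L$ bulk lags (each $\le\sup_{L\le m\le d-L}|\rho^{(t)}_{d,m}|$) gives the bound $e^{\tau^2}\big(2L\tau^2/d+\sup_{\mathrm{bulk}}|\rho^{(t)}_{d,m}|\big)$; letting $d\to\infty$ and then $L\to\infty$ and invoking~\eqref{eqn:rho_decay_assump} drives it to $0$ uniformly in $k\ge1$. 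Part~2 then follows from Part~1 by the triangle inequality $\|\hat{x}^{(T)}_d-\hat{x}^{(0)}_d\|_2\le\sum_{t=0}^{T-1}\|\hat{x}^{(t+1)}_d-\hat{x}^{(t)}_d\|_2$, a finite sum (as $T$ is fixed) of terms each tending to $0$, using that~\eqref{eqn:rho_decay_assump} is assumed at every $t\le T$.

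The main obstacle is justifying the passage from the exact $\E[|W_k|^2]$ to the mean-field ratio $\E[|U_k|^2]/\E[Z]^2$: since $U_k$ and $Z=U_0$ are dependent, one cannot simply replace $Z$ by $\E[Z]$ inside the expectation. The rigorous route is a quantitative weak law of large numbers for the stationary, weakly dependent log-normal sequence $\{e^{s_\ell}\}$, establishing $Z/\E[Z]\to1$ with controlled variance, and then using the deterministic bound $|W_k|\le1$ to absorb the contribution of the low-probability event on which $Z$ departs from its mean. This is exactly where~\eqref{eqn:rho_decay_assump} carries the argument: summability of the covariances $\rho^{(t)}_{d,m}$ relative to $d$ simultaneously controls $\mathrm{Var}(Z)$ and makes the edge/bulk split above effective. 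Once this concentration is secured, the remaining ingredients—the Stein identity, the Gaussian MGF computation, and the two-stage limit $d\to\infty$ then $L\to\infty$—are routine.
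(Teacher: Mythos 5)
Your proposal is correct, but it proves the theorem by a genuinely different route than the paper. The paper's own proof stays in real space: it invokes Proposition~\ref{prop:auxToTheorem2Main}, whose engine is Bernstein's weak law of large numbers for the weakly correlated log-normal sums $\frac{1}{d}\sum_r e^{\langle\xi,\mathcal{T}_r x\rangle}\to e^{\|x\|^2/2}$ (Lemma~\ref{lemma:5}), followed by the continuous mapping theorem, uniform integrability (via the domination $|\xi\gamma_\ell|\le|\xi|$), and the exponential-tilting identity $\E\bigl[\xi\,e^{\langle\xi,\mathcal{T}_\ell x\rangle-\|x\|^2/2}\bigr]=\mathcal{T}_\ell x$, to conclude that the population map is asymptotically the identity; part (2) is then the same telescoping you use. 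Your route instead works in the Fourier domain: you reduce everything to the single quantity $\max_{k\ge1}(1-\alpha_k^{(t)})$ via Parseval, establish the exact identity $1-\alpha_k^{(t)}=\E[|W_k|^2]$ (which indeed follows from the paper's Proposition~\ref{prop:fourierSpaceConvergence} together with the stationarity of $\{\gamma_\ell\}$, since $\E[|W_k|^2]=d\sum_m \E[\gamma_0\gamma_m]\cos(2\pi km/d)$), and then control this by explicit Gaussian MGF computations. What your approach buys is quantitativeness: an explicit rate $e^{\tau^2}\bigl(2L\tau^2/d+\sup_{\mathrm{bulk}}|\rho^{(t)}_{d,m}|\bigr)$, uniformity over $k$ for free, and a cleaner replacement of the paper's uniform-integrability step by the deterministic bound $|W_k|\le 1$. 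Your handling of the DC mode (annihilated after one step; at $t=0$ controlled by averaging the autocorrelations over all lags) is also a point the paper glosses over.

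The obstacle you flag — that one cannot simply replace $Z$ by $\E[Z]$ inside $\E[|U_k/Z|^2]$ — is real, but it closes exactly as you anticipate, and by the \emph{same} computation you already performed. Indeed,
\begin{align}
    \frac{\operatorname{Var}(Z)}{\E[Z]^2}
    =\frac{1}{d^2 e^{\rho^{(t)}_{d,0}}}\sum_{\ell,\ell'} e^{\rho^{(t)}_{d,0}}\bigl(e^{\rho^{(t)}_{d,\ell-\ell'}}-1\bigr)
    =\frac{1}{d}\sum_{m}\bigl(e^{\rho^{(t)}_{d,m}}-1\bigr)
    \le e^{\tau^2}\Bigl(\frac{2L\tau^2}{d}+\sup_{L\le m\le d-L}\bigl|\rho^{(t)}_{d,m}\bigr|\Bigr),
\end{align}
which is precisely your mean-field bound at $k=0$. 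Hence Chebyshev gives $\mathbb{P}\bigl(Z<\E[Z]/2\bigr)\le 4\operatorname{Var}(Z)/\E[Z]^2\to 0$, and splitting on this event,
\begin{align}
    \E\bigl[|W_k|^2\bigr]
    \le \frac{4\,\E\bigl[|U_k|^2\bigr]}{\E[Z]^2}
    +\mathbb{P}\Bigl(Z<\tfrac{1}{2}\E[Z]\Bigr),
\end{align}
using $|W_k|\le 1$ on the complement; both terms tend to $0$ uniformly in $k\ge 1$ under the two-stage limit $d\to\infty$ then $L\to\infty$. So the concentration step is a two-line Chebyshev argument rather than a separate weak-LLN development, and with it your proof is complete. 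One cosmetic caveat: your invocation of Theorem~\ref{thm:1} nominally inherits its hypothesis $\s{\hat{X}}^{(t)}[k]\neq 0$, which Theorem~\ref{thm:2} does not assume; this is harmless, since the diagonal action and the identity $1-\alpha_k^{(t)}=\E[|W_k|^2]$ hold (with the convention that zero coefficients stay zero) without that hypothesis.
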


Theorem~\ref{thm:2}(1) shows that, under asymptotic decorrelation of distant shifts, each population update becomes increasingly
close to the identity as $d\to\infty$. 
In particular, the contraction factors from the finite-dimensional analysis
(see \eqref{eqn:magnitudeConvergenceAsymptoticM}) satisfy $\alpha^{(t)}_{d,k}\to 1$ for every fixed Fourier mode $k$,
so the Fourier magnitudes of $ \hat{x}^{(t)}_d$ remain essentially those of the initialization over any fixed number of iterations.
As a direct consequence, Theorem~\ref{thm:2}(2) shows a high-dimensional stagnation phenomenon, in which the population EM trajectory does not move appreciably away from its initialization over any finite number of iterations. 
Thus, unlike the finite-dimensional Einstein from Noise behavior of Corollary~\ref{thm:3.5}, the high-dimensional population dynamics do not exhibit perceptible magnitude convergence to 0.

\section{Finite-sample analysis}\label{sec:finite-sample-analysis}

This section develops a complementary, finite-sample analysis of EM. Throughout this section, we use the notation $\hat{x}^{(t)}$ for the empirical EM iterates and $\hat{x}^{(t)}_{\mathrm{pop}}$ for the corresponding population iterates initialized at the same point.
That is, for a fixed number of iterations $T\in\mathbb{N}$, and for $t=0,1,\ldots,T-1$,
\begin{align}
\label{eq:EM-iters}
    \hat{x}^{(0)}_{\mathrm{pop}}= \hat{x}^{(0)},\qquad \hat{x}^{(t+1)}_{\mathrm{pop}} = M\big(\hat{x}^{(t)}_{\mathrm{pop}}\big),\qquad \hat{x}^{(t+1)} = M_n\big(\hat{x}^{(t)};\mathcal{Y}\big) .
\end{align}
Our objective is to control the deviation $\|\hat{x}^{(t)}-\hat{x}^{(t)}_{\mathrm{pop}}\|$ uniformly over
$t\le T$, and to translate such bounds into quantitative statements about finite-sample effects on the EM
trajectory in the regimes studied in this work.

Section~\ref{sec:finite-sample-tracking} provides \emph{sufficient} conditions under which the empirical EM operator $M_n$ tracks its population counterpart $M$: under local contraction and a uniform deviation bound, the empirical iterates follow the population trajectory geometrically up to a statistical neighborhood. 
Section~\ref{sec:GoN} then shows how this same finite-sample perturbation can qualitatively reshape the dynamics at moderately low SNR, producing the \emph{Ghost of Newton} phenomenon: the iterates initially improve by tracking the contracting population map, but after a crossover time the accumulated deviation dominates and the trajectory peels away toward a sample-dependent fixed point.
Section~\ref{sec:finite-sample-necessary} complements this with a \emph{necessary} perspective, proving that even inside the local basin there is an intrinsic finite-sample error floor for any empirical fixed point.
Finally, Section~\ref{sec:finite-sample-EfN} turns to the $\mathrm{SNR}=0$ setting and quantifies how finite-sample fluctuations manifest as a per-iteration $n^{-1}$ Fourier-phase drift and its accumulation across iterations.

\subsection{Finite-sample EM tracking} \label{sec:finite-sample-tracking}
The following theorem formalizes the finite-sample tracking behavior of EM in a local basin of attraction under standard regularity conditions. It adapts a classical result for Gaussian mixture models (GMMs) to the MRA setting~\cite{balakrishnan2017statistical}.
For a map $M:\mathbb{R}^d\to\mathbb{R}^d$ and a set $\mathcal{S}\subseteq\mathbb{R}^d$, we denote the image of $\mathcal{S}$ under $M$ by $M(\mathcal{S})\triangleq \{M(x):x\in\mathcal{S}\}$.

\begin{assum}[Finite-sample EM tracking conditions]
\label{assump:finite-sample-EM}
Let $M$ and $M_n(\cdot;\mathcal{Y})$ be the population and empirical EM operators defined in
\eqref{eq:Phi-pop-1d} and \eqref{eq:Phi-emp-1d}. Fix a compact set $\mathcal{B}\subset\mathbb{R}^d$ containing the ground-truth $x^\star$, such that $M(x^\star) = x^\star$. 
Assume that the following hold:

\begin{enumerate}
    \item \emph{(A1) Local contraction of the population map.}
    There exists $\kappa\in[0,1)$ such that for all $x\in\mathcal{B}$, \begin{align}
        \|M(x)-x^\star\| \le \kappa\,\|x-x^\star\|.  \label{eq:A1_local_contraction}
    \end{align}

    \item \emph{(A2) Uniform finite-sample accuracy.}
    For every sample size $n\in\N$ and confidence level $\delta\in(0,1)$, there exists a deterministic tolerance $\varepsilon_M(n,\delta)>0$ such that 
    \begin{align}
       \mathbb{P} \pp{\sup_{x\in\mathcal{B}}  \|M_n(x;\mathcal{Y})-M(x)\|\le \varepsilon_M(n,\delta)} \ge 1-\delta.
        \label{eq:En_delta_def}
    \end{align}

    \item \emph{(A3) Initialization and invariance.}
    The initialization satisfies $\hat{x}^{(0)}\in\mathcal{B}$, and on the event in \eqref{eq:En_delta_def} the set $\mathcal{B}$ is invariant under both maps:
    \begin{align}
        M(\mathcal{B}) \subseteq \mathcal{B},
        \qquad M_n(\mathcal{B};\mathcal{Y}) \subseteq \mathcal{B}.
        \label{eq:A3_invariance}
    \end{align}
\end{enumerate}
\end{assum}

Assumption~\ref{assump:finite-sample-EM} postulates a local basin $\mathcal{B}$ that is invariant under both the population map $M$ and the empirical map $M_n$, and assumes that $M$ is contractive on $\mathcal{B}$.
In practice, the contraction factor $\kappa$ may be chosen using the spectral-radius bound on the Jacobian $J(x^\star)$ from Theorem~\ref{thm:EM-MRA-spectral-radius}.
The tolerance $\varepsilon_M(n,\delta)$ is the deterministic quantity appearing in~\eqref{eq:En_delta_def} that upper-bounds, with probability at least $1-\delta$, the uniform deviation of the empirical operator from its population counterpart over the basin.
Under these conditions, the empirical EM iterates follow the population iterates at a geometric rate until they reach a neighborhood whose radius is proportional to $\varepsilon_M(n,\delta)$.

\begin{thm}[Finite-sample EM tracking]
\label{thm:finite-sample-tracking-groundtruth}
Recall the definition of the empirical EM iterates~\eqref{eq:EM-iters}. Then, under Assumption~\ref{assump:finite-sample-EM}, with probability at least $1-\delta$, 
\begin{align}
    \| \hat{x}^{(t)} - x^\star\| \leq \kappa^t \|\hat{x}^{(0)} - x^\star \| + \frac{1-\kappa^t}{1-\kappa}  \varepsilon_M(n,\delta).
    \label{eq:tracking-to-groundtruth}
\end{align}
In particular, the asymptotic tracking bias satisfies
\begin{align}
    \limsup_{t\to\infty} \|\hat{x}^{(t)} - x^\star\| \leq \frac{\varepsilon_M(n,\delta)}{1-\kappa}.
    \label{eq:asymptotic-bias-floor}
\end{align}
\end{thm}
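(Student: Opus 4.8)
The plan is to establish the recursion \eqref{eq:tracking-to-groundtruth} by a one-step decomposition of the error $\|\hat{x}^{(t+1)}-x^\star\|$ into a population-contraction term and a finite-sample deviation term, and then to unroll the resulting affine recursion. First I would condition on the high-probability event in \eqref{eq:En_delta_def} from Assumption~(A2), which holds with probability at least $1-\delta$; all subsequent estimates are deterministic on this event. The key observation is that the empirical iterate can be written as $\hat{x}^{(t+1)} = M_n(\hat{x}^{(t)};\mathcal{Y})$, and we add and subtract the population map evaluated at the same point:
\begin{align*}
    \hat{x}^{(t+1)} - x^\star
    = \underbrace{\big(M_n(\hat{x}^{(t)};\mathcal{Y}) - M(\hat{x}^{(t)})\big)}_{\text{finite-sample deviation}}
    + \underbrace{\big(M(\hat{x}^{(t)}) - x^\star\big)}_{\text{population contraction}},
\end{align*}
where the second grouping uses $M(x^\star)=x^\star$.

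Next I would bound each piece by the triangle inequality. For the population-contraction term, I need $\hat{x}^{(t)}\in\mathcal{B}$ so that (A1) applies; this is where Assumption~(A3) is essential, since the invariance $M_n(\mathcal{B};\mathcal{Y})\subseteq\mathcal{B}$ together with $\hat{x}^{(0)}\in\mathcal{B}$ guarantees by induction that every iterate stays in $\mathcal{B}$ on the good event. Given $\hat{x}^{(t)}\in\mathcal{B}$, the contraction bound \eqref{eq:A1_local_contraction} gives $\|M(\hat{x}^{(t)})-x^\star\|\le \kappa\|\hat{x}^{(t)}-x^\star\|$, while the uniform deviation bound gives $\|M_n(\hat{x}^{(t)};\mathcal{Y})-M(\hat{x}^{(t)})\|\le \varepsilon_M(n,\delta)$ because $\hat{x}^{(t)}\in\mathcal{B}$ and we are on the event in \eqref{eq:En_delta_def}. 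Writing $a_t\triangleq \|\hat{x}^{(t)}-x^\star\|$, these combine into the scalar affine recursion $a_{t+1}\le \kappa\,a_t + \varepsilon_M(n,\delta)$.

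Finally I would solve this recursion explicitly. Iterating and using the finite geometric sum $\sum_{s=0}^{t-1}\kappa^s=(1-\kappa^t)/(1-\kappa)$ (valid since $\kappa\in[0,1)$) yields $a_t\le \kappa^t a_0 + \frac{1-\kappa^t}{1-\kappa}\varepsilon_M(n,\delta)$, which is exactly \eqref{eq:tracking-to-groundtruth}. The asymptotic statement \eqref{eq:asymptotic-bias-floor} follows by taking $t\to\infty$: since $\kappa<1$, the transient term $\kappa^t a_0\to 0$ and $(1-\kappa^t)\to 1$, so $\limsup_{t\to\infty} a_t \le \varepsilon_M(n,\delta)/(1-\kappa)$. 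I do not anticipate a serious analytical obstacle here, as the argument is a standard contraction-plus-perturbation scheme; the only point requiring care is the inductive verification that the iterates remain in $\mathcal{B}$, which is precisely what (A3) is designed to supply, and ensuring that the single good event of probability $\ge 1-\delta$ simultaneously controls all iterations (it does, because \eqref{eq:En_delta_def} is a uniform statement over $x\in\mathcal{B}$ and hence applies to every iterate at once without any union bound over $t$).
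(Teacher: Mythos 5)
Your proposal is correct and follows essentially the same route as the paper's proof: a one-step triangle-inequality decomposition into the population contraction term (bounded via (A1) and $M(x^\star)=x^\star$) and the uniform finite-sample deviation (bounded via (A2) on the single good event), with (A3) supplying the inductive invariance $\hat{x}^{(t)}\in\mathcal{B}$, followed by unrolling the affine recursion $a_{t+1}\le\kappa a_t+\varepsilon_M(n,\delta)$ into the geometric-sum bound and taking $t\to\infty$. Your explicit remark that no union bound over $t$ is needed because \eqref{eq:En_delta_def} is uniform over $x\in\mathcal{B}$ is a point the paper leaves implicit, but otherwise the arguments coincide.
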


Theorem~\ref{thm:finite-sample-tracking-groundtruth} shows that the gap between the empirical and population EM trajectories is governed by $\varepsilon_M(n,\delta)$. 
To make this quantitative, we next bound $\varepsilon_M(n,\delta)$ for MRA observations~\eqref{eqn:mainModel1D}; the proof appears in Appendix~\ref{sec:proofOfUniformBoundGauss}.

\begin{proposition}[Uniform sample-population deviation]\label{lem:subgaussian-epsn}
Let $\mathcal{Y} = \{ y_i\}_{i=1}^{n}$ be the observations drawn according to 
the MRA model~\eqref{eqn:mainModel1D}, with a ground-truth signal 
$x^\star \in \mathbb{R}^d$.  
Fix a compact ball for the parameter space $\mathcal{B}=\{x:\|x\|_2\le R\}$ and assume $\|x^\star\|\lesssim \sigma\sqrt d$.
Then, there exist universal constants $C_0,C_1>0$ such that for every $\delta\in(0,1)$, if
$n  \geq C_0 \p{\log(2/\delta)}^3$, then with probability at least $1-\delta$,
\begin{align}
    \sup_{x\in\mathcal{B}} \big\| M_n(x; \mathcal{Y}) - M(x)\big\| \leq  C_1 \left(\sigma \sqrt{\frac{d}{n}} \sqrt{d+\log\frac{2}{\delta}} + \frac{R d^{3/2}}{\sqrt n} \right). \label{eq:MRA-uniform-rate}
\end{align}
\end{proposition}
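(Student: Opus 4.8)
The plan is to treat the claim as a uniform law of large numbers for a vector-valued empirical process. Write the per-sample map $g(x;y)\triangleq\sum_{\ell=0}^{d-1}\gamma_\ell(x;y)\,\mathcal{T}_\ell^{-1}y$, so that $M_n(x;\mathcal{Y})=\frac1n\sum_{i=1}^n g(x;y_i)$ and $M(x)=\mathbb{E}[g(x;Y)]$. The goal is then to control $\sup_{x\in\mathcal{B}}\bigl\|\frac1n\sum_i g(x;y_i)-\mathbb{E}[g(x;Y)]\bigr\|$, which I would do by the standard recipe: pointwise (vector) concentration of the centered summands, followed by a discretization (covering) argument that upgrades pointwise control to a uniform bound via a Lipschitz estimate.

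First I would record two structural facts about $g$. (i) \emph{Boundedness.} Since $\gamma_\ell(x;y)\ge0$, $\sum_\ell\gamma_\ell(x;y)=1$, and each $\mathcal{T}_\ell$ is orthonormal, $g(x;y)$ is a convex combination of vectors of norm $\|y\|$, so $\|g(x;y)\|\le\|y\|$ for every $x$. Hence the centered summands $V_i(x)\triangleq g(x;y_i)-M(x)$ have variance proxy $\nu^2\le\mathbb{E}\|Y\|^2=\|x^\star\|^2+d\sigma^2\lesssim d\sigma^2$ under the hypothesis $\|x^\star\|\lesssim\sigma\sqrt d$. (ii) \emph{Lipschitz continuity.} The per-sample Jacobian is exactly the integrand in the formula of Theorem~\ref{thm:EM-MRA-spectral-radius} (before taking the expectation), namely $\nabla_x g(x;y)=\frac1{\sigma^2}\sum_\ell\gamma_\ell(x;y)(z_\ell-\bar z)(z_\ell-\bar z)^\top$ with $z_\ell=\mathcal{T}_\ell^{-1}y$ and $\bar z=\sum_\ell\gamma_\ell z_\ell$; its operator norm is at most $\frac1{\sigma^2}\sum_\ell\gamma_\ell\|z_\ell-\bar z\|^2\le\|y\|^2/\sigma^2$, so $g(\cdot;y)$ is $(\|y\|^2/\sigma^2)$-Lipschitz \emph{uniformly in $x$}. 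Consequently $M_n-M$ has Lipschitz constant at most $\frac1n\sum_i\|y_i\|^2/\sigma^2+\mathbb{E}\|Y\|^2/\sigma^2$, which concentrates at $\lesssim d$ with high probability by sub-exponential ($\chi^2$-type) Bernstein concentration of $\frac1n\sum_i\|y_i\|^2$.

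Next I would combine a net with pointwise concentration. For a fixed $x$, the $V_1(x),\dots,V_n(x)$ are i.i.d., mean-zero, with $\|V_i(x)\|\lesssim\|y_i\|$ sub-Gaussian and variance $\lesssim d\sigma^2$; a vector Bernstein inequality (applied after truncating $\|y_i\|$ at scale $\sigma\sqrt{d+\log n}$) gives $\bigl\|\frac1n\sum_iV_i(x)\bigr\|\lesssim\sigma\sqrt{d/n}\,(1+\sqrt{\log(1/\delta')})$ with probability at least $1-\delta'$. Taking an $\epsilon$-net $\mathcal{N}_\epsilon$ of $\mathcal{B}$ with $\log|\mathcal{N}_\epsilon|\le d\log(1+2R/\epsilon)$ and a union bound at level $\delta'=\delta/|\mathcal{N}_\epsilon|$ makes $\log(1/\delta')\asymp d+\log(2/\delta)$, producing the first term $\sigma\sqrt{d/n}\,\sqrt{d+\log(2/\delta)}$. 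Passing from the net to all of $\mathcal{B}$ costs $L\epsilon$, where $L\lesssim d$ is the high-probability Lipschitz bound from step (ii); choosing $\epsilon\asymp R\sqrt{d/n}$ (which changes $\log(R/\epsilon)$ by only a logarithmic factor, absorbed into the constant) yields the discretization term $\asymp Rd^{3/2}/\sqrt n$. Adding the two contributions gives~\eqref{eq:MRA-uniform-rate}.

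The main obstacle is that the observations are unbounded: both $\|y_i\|$ (sub-Gaussian) and the per-sample Lipschitz constant $\|y_i\|^2/\sigma^2$ (sub-exponential) have heavy tails, so neither the summands nor their Lipschitz constants admit a deterministic uniform bound. The delicate part is therefore to condition on a good event $\{\max_i\|y_i\|\lesssim\sigma\sqrt{d+\log n}\}$, control the truncated contributions in the vector-Bernstein step, and simultaneously bound the \emph{random} empirical Lipschitz constant $\frac1n\sum_i\|y_i\|^2/\sigma^2$ uniformly over $\mathcal{B}$, all while keeping the rate free of spurious factors. The sample-size requirement $n\ge C_0(\log(2/\delta))^3$ enters precisely here: it ensures that the sub-Weibull tails produced by these products sit in their benign (effectively sub-Gaussian) regime, so that the truncation error is negligible and the Bernstein bound is sharp.
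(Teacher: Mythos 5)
Your two structural facts are right and match the paper's own lemmas (the boundedness $\|g(x;y)\|\le\|y\|$ is the paper's Lemma~\ref{lem:EF-bounds-MRA-fixedx}, and the Jacobian bound $\|\nabla_x g(x;y)\|_{\mathrm{op}}\le\|y\|^2/\sigma^2$ is the paper's Lemma~\ref{lem:Lipschitz-in-x}), but the single-scale net argument you build on them cannot deliver the stated rate. Quantitatively: with mesh $\epsilon$, the union bound over an $\epsilon$-net of $\mathcal{B}$ forces $\log(1/\delta')\asymp d\log(R/\epsilon)+\log(2/\delta)$, so your net maximum is $\sigma\sqrt{d/n}\,\sqrt{d\log(R/\epsilon)+\log(2/\delta)}$ while the discretization cost is $\asymp d\epsilon$. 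To make $d\epsilon\lesssim Rd^{3/2}/\sqrt{n}$ you must take $\epsilon\lesssim R\sqrt{d/n}$, and then $\log(R/\epsilon)\gtrsim\log(n/d)$, which grows with $n$ and is \emph{not} a constant; conversely, keeping $\log(R/\epsilon)=O(1)$ forces $\epsilon\asymp R$ and a discretization error $\asymp dR$ that does not decay in $n$. So for every choice of $\epsilon$ your bound is at best $\sigma\sqrt{d/n}\,\sqrt{d\log(n/d)+\log(2/\delta)}+Rd^{3/2}/\sqrt{n}$, which exceeds \eqref{eq:MRA-uniform-rate} by an unbounded $\sqrt{\log n}$ factor whenever $R\sqrt{d}\ll\sigma\sqrt{\log n}$; no universal $C_1$ can absorb it. The sentence ``which changes $\log(R/\epsilon)$ by only a logarithmic factor, absorbed into the constant'' is exactly the false step.

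The missing idea is multi-scale chaining. The paper splits off an anchor term $\|M_n(x_0)-M(x_0)\|$, bounded pointwise by directional sub-Gaussian concentration plus a $1/2$-net of the \emph{sphere} (Lemma~\ref{lem:EF-bounds-MRA-fixedx}), yielding the first term of \eqref{eq:MRA-uniform-rate}; it then controls the oscillation $\sup_{x\in\mathcal{B}}\|(M_n-M)(x)-(M_n-M)(x_0)\|$ by ghost-sample Rademacher symmetrization followed by Dudley's entropy integral (Lemma~\ref{lem:symm-chaining-osc}, Proposition~\ref{prop:finite-max-u}), where $\int_0^{2R}\sqrt{\log(cR/t)}\,\mathrm{d}t\asymp R$ converges with no logarithmic penalty, producing $Rd^{3/2}/\sqrt{n}$ cleanly. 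Replacing your single net by this chaining step (or any equivalent $\mathbb{E}\sup$ bound plus concentration of the supremum) is what repairs the proof. A secondary inaccuracy: the hypothesis $n\gtrsim(\log(2/\delta))^3$ is not about truncation in a vector Bernstein step; in the paper it enters because the random chaining scale $L_n\propto\bigl(\tfrac1n\sum_i\|Y_i\|^4\bigr)^{1/2}$ involves fourth powers, whose summands are sub-Weibull of order $1/2$, and Bernstein-type concentration then gives a deviation $\max\{\sqrt{s/n},\,s^2/n\}$ with $s=\log(2/\delta)$, so $n\gtrsim s^3$ is needed for the sub-Gaussian branch to dominate. Your Lipschitz control uses only second moments (sub-exponential tails), for which this condition would be unnecessary.
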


\subsection{Ghost of Newton: finite-sample artifacts} 
\label{sec:GoN}

At moderately low SNR, we observe an unexpected finite-sample artifact in the EM dynamics. 
Figure~\ref{fig:4} illustrates the effect. Specifically, Figure~\ref{fig:4}(a) plots the reconstruction MSE (defined in~\eqref{eqn:mseDef} with respect to the ground truth) as a function of the iteration index $t$. 
Starting from the initial template, the EM iterates first move towards the true signal, yielding a clear initial decrease in error. 
After a transient period, however, the trajectory reverses direction: the error begins to increase and the estimate gradually drifts away from the ground truth. 
We emphasize that this behavior is not tied to the specific Newton/Einstein pair used in the running example; analogous non-monotone error curves appear for other ground-truth signals and various initial templates. 
We adopt the term ``Ghost of Newton" for consistency with the illustrative example in Figure~\ref{fig:4}.

Empirically, this drift-after-improvement regime is specific to EM. 
The hard-assignment algorithm does not exhibit a comparable intermediate phase; instead, as the SNR increases, it undergoes a sharp transition from the Einstein from Noise behavior (template reconstruction at very low SNR) to accurate recovery of the ground truth, without passing through a gradual deterioration stage. 
Additional numerical evidence and parameter sweeps supporting this distinction are provided in Appendix~\ref{sec:empiricalDetails}.

\begin{figure*}[!t]
    \centering
    \includegraphics[width=1.0 \linewidth]{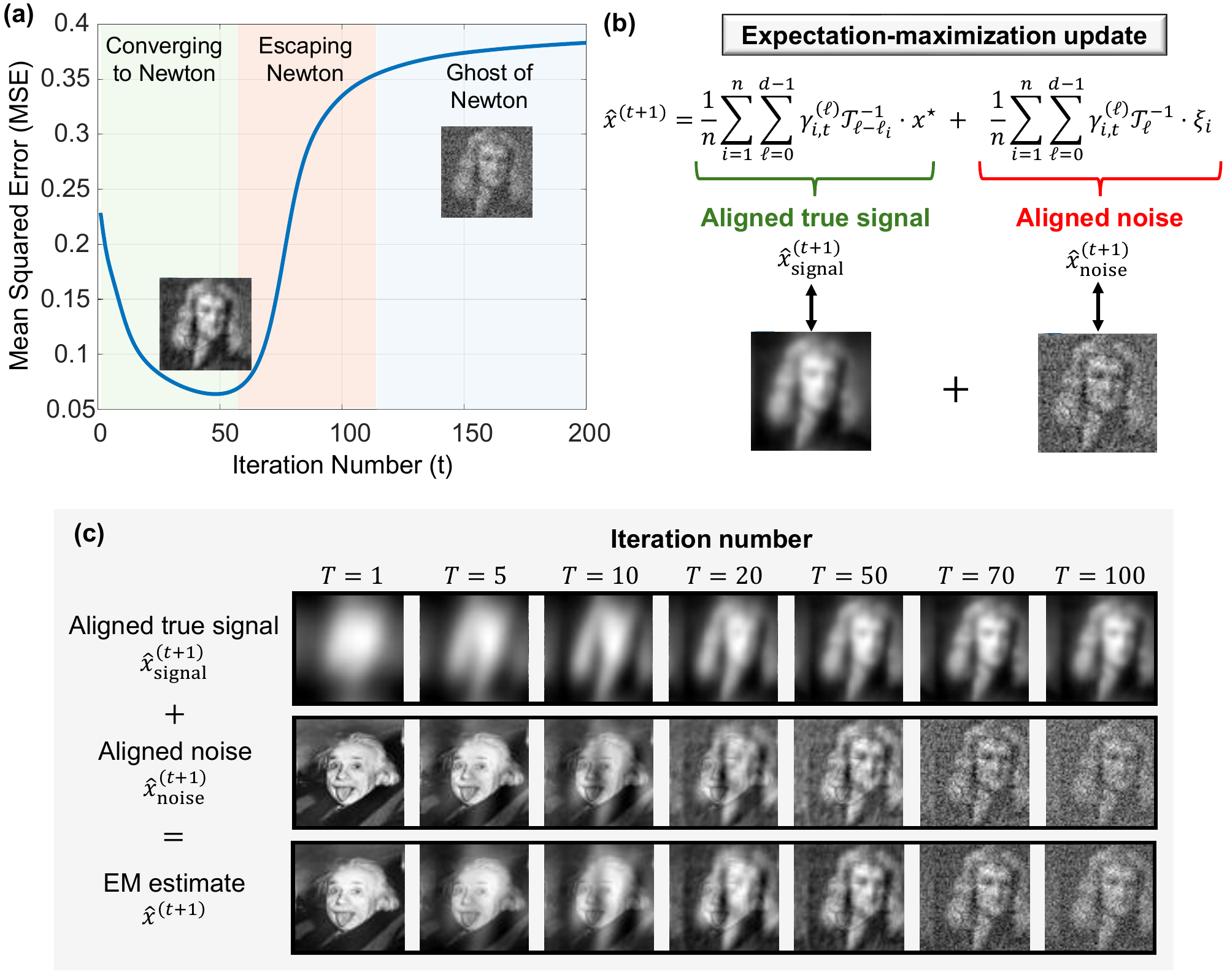}
    \caption{\textbf{The Ghost of Newton phenomenon.} \textbf{(a)} MSE, as defined in \eqref{eqn:mseDef}, between the EM estimator at iteration $t$ and the true signal $x^\star$, plotted as a function of iteration $t$. Initially, the algorithm converges toward the true signal of Newton, but after several iterations, it begins to diverge, ultimately yielding a noisy reconstruction.
    \textbf{(b)} The EM update consists of two components: the aligned signal, $\hat{x}_{\mathrm{signal}}^{(t+1)}$, and the aligned noise, $\hat{x}_{\mathrm{noise}}^{(t+1)}$. In the Ghost of Newton regime, the aligned noise, despite originating from pure noise, resembles a noisy version of the Newton image. This misleading signal steers the algorithm away from the ground truth.     
    \textbf{(c)} Visualization of the aligned signal $\hat{x}_{\mathrm{signal}}^{(t+1)}$, aligned noise $\hat{x}_{\mathrm{noise}}^{(t+1)}$, and the total estimate $\hat{x}^{(t+1)}$ at representative iterations. In the early stages, the aligned noise and the estimate resemble the initial template of Einstein. As the iterations progress, the Newton image gradually emerges. However, after approximately 50 iterations, the influence of the aligned noise dominates, steering the reconstruction away from Newton and resulting in a noisy and degraded estimate.
    The simulation corresponds to a fixed signal-to-noise ratio of $\mathrm{SNR} = 5 \times 10^{-3}$, with an image size of $d = 64 \times 64$ and $n = 2 \times 10^4$ observations.
    }   
    \label{fig:4} 
\end{figure*}

\paragraph{Finite-sample artifact.}
The empirical EM map $M_n$ is a small random perturbation of its deterministic population counterpart $M$.  
While population EM converges to $x^\star$ whenever initialized in its basin of attraction, the perturbed map $M_n$ converges instead to a sample-dependent fixed point $\hat{x}_n^\star= M_n(\hat{x}_n^\star; \mathcal{Y})$ that maximizes the empirical log-likelihood.  
Thus, if $\hat{x}^{(0)}$ lies in the population basin, the empirical iterates initially track the contracting population dynamics, but eventually peel away as the small per-iteration perturbations accumulate, drifting toward~$\hat{x}_n^\star$.

Formally, let $\hat{x}^{(t+1)}=M_n(\hat{x}^{(t)};\mathcal{Y})$ denote the empirical EM iterates.  
Under the tracking guarantee of Theorem~\ref{thm:finite-sample-tracking-groundtruth}, the group-invariant error satisfies, for all $t\ge0$,
\begin{align}
    \|\hat{x}^{(t)}-x^\star\|
     \le 
    \kappa^t \|\hat{x}^{(0)}-x^\star\|
     + 
    \frac{1-\kappa^t}{1-\kappa}\,\varepsilon_M(n,\delta),
    \label{eq:ghost-to-truth}
\end{align}
where the first term captures population contraction and the second term reflects the amplified finite-sample discrepancy between $M_n$ and $M$.  
Consequently, the Ghost-of-Newton regime begins with the population term dominating, but after sufficiently many iterations the finite-sample term overtakes it, producing a rebound away from the truth.

Let $e_0\triangleq\|\hat{x}^{(0)}-x^\star\|$ and define $t_{\mathrm{eq}}$ as the iteration index where the two terms in~\eqref{eq:ghost-to-truth} balance:
\begin{align}
    \kappa^{t_{\mathrm{eq}}} e_0
     = \frac{1-\kappa^{t_{\mathrm{eq}}}}{1-\kappa}\,\varepsilon_M(n,\delta). \label{eqn:two-terms-balance}
\end{align}
Solving~\eqref{eqn:two-terms-balance} yields
\begin{align}
    \kappa^{t_{\mathrm{eq}}}
     = \frac{\frac{\varepsilon_M}{1-\kappa}}{e_0+\frac{\varepsilon_M}{1-\kappa}}
     = \frac{1}{1+\frac{(1-\kappa)e_0}{\varepsilon_M}}
     \quad \longrightarrow
    \quad
    t_{\mathrm{eq}}
     = \frac{\log \p{1+\frac{(1-\kappa)e_0}{\varepsilon_M}}}{-\log(\kappa)} .
    \label{eq:teq-exact-nolin}
\end{align}
This expression interpolates between two regimes:
\begin{align}\label{eqn:EM-GoN-regimes}
    t_{\mathrm{eq}}
     = 
    \begin{cases}
        \dfrac{e_0}{\varepsilon_M(n,\delta)}\,[1+o(1)],
        & \text{if }\dfrac{(1-\kappa)e_0}{\varepsilon_M}\ll 1,\\[0.8em]
        \dfrac{1}{1-\kappa}\,
        \log \p{\dfrac{(1-\kappa)e_0}{\varepsilon_M}}\,[1+o(1)],
        & \text{if }\dfrac{(1-\kappa)e_0}{\varepsilon_M}\gg 1.
    \end{cases}
\end{align}

Since $\varepsilon_M(n,\delta)\asymp n^{-1/2}$ by~\eqref{eq:MRA-uniform-rate}, increasing the number of observations simultaneously delays the rebound (through a larger $t_{\mathrm{eq}}$ in~\eqref{eq:teq-exact-nolin}) and lowers the eventual error floor $\varepsilon_\infty\asymp \varepsilon_M/(1-\kappa)$.  
Conversely, decreasing the SNR increases $\kappa$ toward $1$, weakening population contraction, reducing $-\log(\kappa)$, and thereby advancing the rebound while inflating the floor by the factor $(1-\kappa)^{-1}$.

\paragraph{Empirical demonstration.}
Figures~\ref{fig:5} demonstrate that the Ghost of Newton behavior is governed by the number of observations $n$ and the SNR.
Equation~\eqref{eqn:EM-GoN-regimes} predicts that, in the finite-sample-dominated regime, the crossover time satisfies $t_{\mathrm{eq}}\propto \sqrt{n}$, in agreement with the empirical scalings in Figures~\ref{fig:5}(a-b).
A striking additional signature is that the iteration at which the MSE begins to increase coincides with a sharp rise in the empirical log-likelihood; this alignment persists across SNR levels and sample sizes.
Crucially, although EM monotonically increases the log-likelihood at every step~\cite{little2019statistical}, likelihood ascent alone does not guarantee monotone improvement in group-invariant MSE relative to the ground truth.

\begin{figure*}[!t]
    \centering
    \includegraphics[width=0.9 \linewidth]{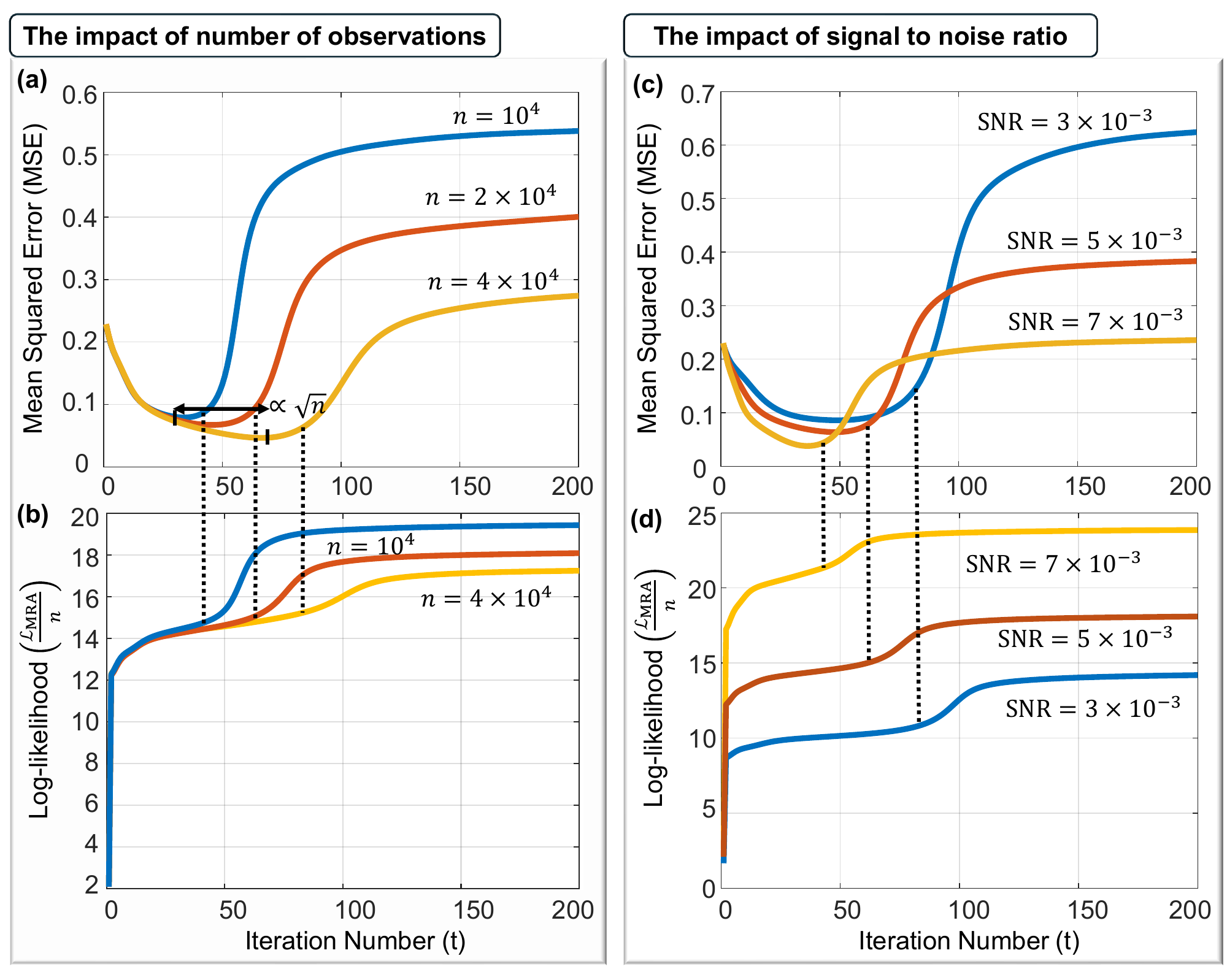}
    \caption{\textbf{The Impact of the number of observations and SNR on the Ghost of Newton phenomenon.}
    \textbf{(a, b)} Effect of increasing the number of observations at a fixed SNR. Parameters: image size $d = 64 \times 64$, SNR $= 5 \times 10^{-3}$. A larger number of observations reduces the impact of noise and mitigates the Ghost of Newton effect.
    \textbf{(c, d)} Effect of increasing the SNR at a fixed number of observations. Parameters: image size $d = 64 \times 64$, $n = 2 \times 10^4$ observations. Higher SNR leads to better alignment and convergence to the true signal.
    The $\text{MSE}$ plotted in both cases is defined in~\eqref{eqn:mseDef}, while the log-likelihood corresponds to~\eqref{eqn:logLikelihoodMRA}, with the constant final term omitted as it does not depend on the iteration number.
    In both scenarios, a sharp increase in the log-likelihood ($\mathcal{L}_{\s{MRA}}$) is observed at the region where the estimator begins to diverge from Newton’s image---this transition is indicated by dashed lines. Notably, this pattern consistently appears across different numbers of observations and SNR levels.   }
    \label{fig:5} 
\end{figure*}

\subsection{Sample complexity threshold for EM} \label{sec:finite-sample-necessary}

Next, we turn to \emph{necessary} sample-size requirements for EM in MRA.
Up to this point, we have derived \emph{sufficient} conditions under which the empirical EM operator $M_n$ closely tracks its population counterpart $M$, and hence inherits its local contraction behavior.
Here we prove a converse statement: under mild regularity assumptions (listed in Appendix~\ref{sec:app-sample-complexity} and verified for the MRA model), finite-sample fluctuations prevent any empirical fixed point inside the local basin from approaching $x^\star$ beyond the intrinsic statistical noise floor.

\begin{thm}[Necessary sample complexity of EM]
\label{thm:tight-1d-branch}
Consider the MRA model in the low-$\mathrm{SNR}$ regime $x^\star=\beta v$, $\|v\|_2=1$, with fixed $\sigma^2>0$ and varying $0<\beta\ll\sigma$.
Fix any target absolute accuracy $\varepsilon_{\mathrm{abs}}\in(0,R_0]$, where $R_0>0$ is chosen so that the Euclidean ball $\mathcal{B}(x^\star,R_0)$ lies within the local neighborhood $\mathcal U$ guaranteed by Theorem~\ref{thm:EM-MRA-spectral-radius}. 
Then, there exists a constant $C_{\mathrm{nec}}>0$ (depending only on the normalized spectrum of $v$ and on $d$, but not on $n,\beta,\sigma$) such that, with constant positive probability, any empirical fixed point $\hat{x}_n^\star = M_n(\hat{x}_n^\star; \, \mathcal{Y})$ lying in a sufficiently small basin of $x^\star$ and satisfying
\begin{align}
    \frac{\|\hat{x}_n^\star-x^\star\|_2}{\|x^\star\|_2} \leq \varepsilon_{\mathrm{abs}}
\end{align}
must obey the necessary sample-size lower bound
\begin{align}\label{eq:necessary-sample-1d}
    n  \geq  C_{\mathrm{nec}}\,\frac{\sigma^{6}}{\beta^{6}}\,\frac{1}{\varepsilon_{\mathrm{abs}}^{2}}.
\end{align}
\end{thm}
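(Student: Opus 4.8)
The plan is to linearize the empirical fixed-point equation $\hat{x}_n^\star = M_n(\hat{x}_n^\star;\mathcal{Y})$ about the population fixed point $x^\star=\beta v$ and show that the slowest (flat) Fourier eigendirection forces a statistical error floor whose size is dictated by SNR$^{-2}$ amplification. Writing $e\triangleq\hat{x}_n^\star-x^\star$ and $\eta\triangleq M_n(x^\star;\mathcal{Y})-M(x^\star)$, subtracting $x^\star=M(x^\star)$ from the fixed-point identity and using smoothness of $M$ (Theorem~\ref{thm:EM-MRA-spectral-radius}(1)) gives $(I-J(\beta))\,e=\eta+r$, where $J(\beta)=J(x^\star)$ is the Jacobian and $r$ collects the quadratic remainder $M(\hat{x}_n^\star)-M(x^\star)-J(\beta)e=O(\|e\|_2^2)$ together with the fluctuation increment $[M_n-M](\hat{x}_n^\star)-[M_n-M](x^\star)$. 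Inverting and projecting onto the flat eigenvector $Fw_{k_{\max}}$ — the direction attaining $\rho(J(\beta))$, for which Corollary~\ref{cor:block-spectral} gives $1-\lambda_{w_{k_{\max}}}=\kappa_{\max}\beta^4/\sigma^4+O(\beta^6/\sigma^6)\asymp\mathrm{SNR}^2$ — yields $\langle e,Fw_{k_{\max}}\rangle=(1-\lambda_{w_{k_{\max}}})^{-1}\langle\eta,Fw_{k_{\max}}\rangle+(\text{amplified remainder})$, so the amplification factor is $\asymp\sigma^4/\beta^4\asymp\mathrm{SNR}^{-2}$. This blow-up is the engine of the bound.

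The heart of the argument is the leading $\beta$-order of the flat-direction fluctuation $\langle\eta,Fw_{k_{\max}}\rangle$. In Fourier coordinates the per-observation EM contribution $g(y)=\sum_\ell\gamma_\ell(x^\star;y)\mathcal{T}_\ell^{-1}y$ satisfies $[F^\ast g(y)]_k=c_k(y)\,\s{Y}[k]$ with $c_k(y)=\sum_\ell\gamma_\ell(x^\star;y)\,e^{i2\pi k\ell/d}$, and a short computation gives $\langle g(y),Fw_k\rangle=-\sqrt{2}\,\mathrm{Im}\big(c_k(y)\,e^{-i\phi_k}\,\s{Y}[k]\big)$. Expanding the soft-max responsibilities at $x^\star=\beta v$ shows $c_k=\frac{\beta}{\sigma^2}\s{V}_k\,\overline{\s{N}[k]}+O(\beta^2)$, so the $O(\beta)$ term of $c_k e^{-i\phi_k}\s{Y}[k]$ equals $\frac{\beta}{\sigma^2}|\s{V}_k|\,|\s{N}[k]|^2$, which is \emph{real}; its imaginary part vanishes, and the naive $O(\beta)$ contribution is therefore purely radial (it feeds $Fu_k$, not $Fw_k$). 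Consequently the leading phase-direction fluctuation is $O(\beta^2)$ per sample, giving $\mathrm{Var}\,\langle\eta,Fw_{k_{\max}}\rangle\asymp\beta^4/(\sigma^2 n)$. A central-limit / Berry–Esseen bound then supplies the anti-concentration statement needed: with constant positive probability $|\langle\eta,Fw_{k_{\max}}\rangle|\gtrsim\beta^2/(\sigma\sqrt{n})$, provided the limiting variance coefficient (a bispectrum-type quantity built from $\s{V}_k$ and the signal component $\beta e^{-i2\pi k\ell_0/d}\s{V}_k$ of $\s{Y}[k]$) is strictly positive, which is exactly where the standing non-degeneracy $\s{V}_k\neq 0$ is used.

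Combining the two ingredients, with constant probability $\|e\|_2\ge|\langle e,Fw_{k_{\max}}\rangle|\gtrsim\frac{\sigma^4}{\beta^4}\cdot\frac{\beta^2}{\sigma\sqrt{n}}=\frac{\sigma^3}{\beta^2\sqrt{n}}$, once the amplified remainder $\langle r,Fw_{k_{\max}}\rangle$ is shown to be lower-order. Imposing the hypothesized accuracy $\|e\|_2\le\varepsilon_{\mathrm{abs}}\|x^\star\|_2=\varepsilon_{\mathrm{abs}}\beta$ and solving for $n$ yields $n\gtrsim\sigma^6/(\beta^6\varepsilon_{\mathrm{abs}}^2)$, i.e.\ the claim with $C_{\mathrm{nec}}$ absorbing $\kappa_{\max}^{-2}$ and the variance constant. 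As a consistency check, repeating the computation along the contracting direction $Fu_k$, whose amplification is only $\asymp\sigma^2/\beta^2$ and whose per-sample fluctuation is the nonvanishing $O(\beta)$ radial term, gives the weaker requirement $n\gtrsim\sigma^4/\beta^4$; hence the phase (flat) direction is the binding constraint and correctly produces the $\sigma^6/\beta^6$ rate.

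I expect the main obstacle to be twofold and intertwined. First, rigorously establishing the $O(\beta)$ cancellation and the strictly positive $O(\beta^2)$ variance coefficient in the flat direction: this is the analytic counterpart of the fact that Fourier phases are a third-order (bispectrum) quantity, so the soft-max expansion must be carried carefully to second order, including the $\beta$-dependence $\s{Y}[k]=\s{N}[k]+\beta e^{-i2\pi k\ell_0/d}\s{V}_k$ of the observation. Second, controlling the remainder $r$ \emph{after} it is amplified by $\sigma^4/\beta^4$: one must ensure that $\frac{\sigma^4}{\beta^4}\big(\|e\|_2^2+\|e\|_2\sup_{\mathcal{B}}\|\nabla(M_n-M)\|\big)$ stays negligible relative to the leading $\sigma^3/(\beta^2\sqrt{n})$ term, which is handled by restricting to a sufficiently small basin (so $\|e\|_2$ is small) together with the low-SNR regularity assumptions of Appendix~\ref{sec:app-sample-complexity}. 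Keeping this error budget consistent with the large SNR$^{-2}$ amplification is the technically demanding step.
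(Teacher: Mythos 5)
Your proposal is correct and follows essentially the same route as the paper: a fixed-point linearization $(I-J(\beta))e=\eta+r$ amplified along the flat eigendirection $Fw_{k_{\max}}$ with gap $1-\lambda\asymp\mathrm{SNR}^2$, a first-order ($O(\beta)$) cancellation of the per-sample fluctuation in that direction leaving a genuinely $O(\beta^2)$ (cubic-in-Gaussian) term whose variance is positive under $\s{V}_k\neq 0$, anti-concentration to get a constant-probability lower bound, and remainder control in a small basin (the paper packages these steps as Assumption~\ref{assum:gap-sb-mr-P-general} with $k=2$, $\alpha=1/2$, verified via Proposition~\ref{prop:uk-plateau-updated} and Theorem~\ref{thm:necessary-small-target-P-general}). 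The only cosmetic difference is your use of a Berry--Esseen/CLT anti-concentration step where the paper uses Paley--Zygmund with Gaussian hypercontractivity; both suffice.
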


The proof of Theorem~\ref{thm:tight-1d-branch} appears in Appendix~\ref{sec:proof-of-sample-complexity-mra}.
The necessary condition
\eqref{eq:necessary-sample-1d} can be rewritten as
\begin{align}
    n \gtrsim  \mathrm{SNR}^{-3}\,\varepsilon_{\mathrm{abs}}^{-2}.
\end{align}
Thus, even within the best-case local basin where the population EM map is contractive,
finite-sample fluctuations impose an intrinsic statistical floor: achieving relative accuracy
$\varepsilon_{\mathrm{abs}}$ requires on the order of $\mathrm{SNR}^{-3}$ samples.
In particular, Theorem~\ref{thm:tight-1d-branch} matches the information-theoretic
$\mathrm{SNR}^{-3}$ scaling~\cite{bandeira2023estimation}.

\subsection{Finite-sample Einstein from Noise: phase drift}
\label{sec:finite-sample-EfN}

We now turn to the finite-sample Einstein from Noise regime, focusing on the drift of Fourier phases induced by finite-sample fluctuations.
Recall the Fourier-domain notation introduced in~\eqref{FourierSpace}: we write $\{\hat{\s{X}}^{(t)}[k]\}_{k=0}^{d-1}$ for DFT coefficients of the \emph{empirical} EM iterate $\hat{x}^{(t)}$, and $\{\hat{\s{N}}_i[k]\}_{k=0}^{d-1}$ for the DFT coefficients of the noise vector $\xi_i$, for $i = \{1,2, \ldots n\}$.
We further use the polar representation $\hat{\s{X}}^{(t)}[k]=|\hat{\s{X}}^{(t)}[k]|\,e^{i\phi_{\hat{\s{X}}^{(t)}}[k]}$ to denote Fourier magnitudes and phases.
Throughout this subsection, we use the term MSE to refer to the initialization-discrepancy error, namely, the expected squared difference between the Fourier phases of the iterate and those of the initialization. 
Specifically, for a frequency $k$ and iteration $t$, we define 
\begin{align}
    \mathrm{MSE}_{\mathrm{phase}}^{(t)}(k) \triangleq \mathbb{E} \pp{|\phi_{\s{\hat{X}}^{(t)}}[k]-\phi_{\s{\hat{X}}^{(0)}}[k]|^{2}}. \label{eqn:fourier-phases-MSE}
\end{align}

\paragraph{Fourier-phases drift: Single iteration.}
Figure~\ref{fig:6} illustrates the finite-sample bias after a single Einstein from Noise step, comparing the hard-assignment update (Algorithm~\ref{alg:generalizedEfNhard}) and the EM update (Algorithm~\ref{alg:generalizedEfNsoft}). 
The upper panels show the appearance of structural similarity to initialization as the number of observations $n$ increases. 
The lower panels quantify this effect through the MSE of the Fourier phases~\eqref{eqn:fourier-phases-MSE} between successive iterates. 
Empirically, the phase MSE decays inversely proportional to the number of observations at a rate $n^{-1}$ for both algorithms, with stronger Fourier modes converging faster. 
Notably, EM exhibits a larger bias toward the template than hard assignment, as reflected both visually and in the phase-MSE curves.

\begin{figure*}[!t]
    \centering
    \includegraphics[width=0.95 \linewidth]{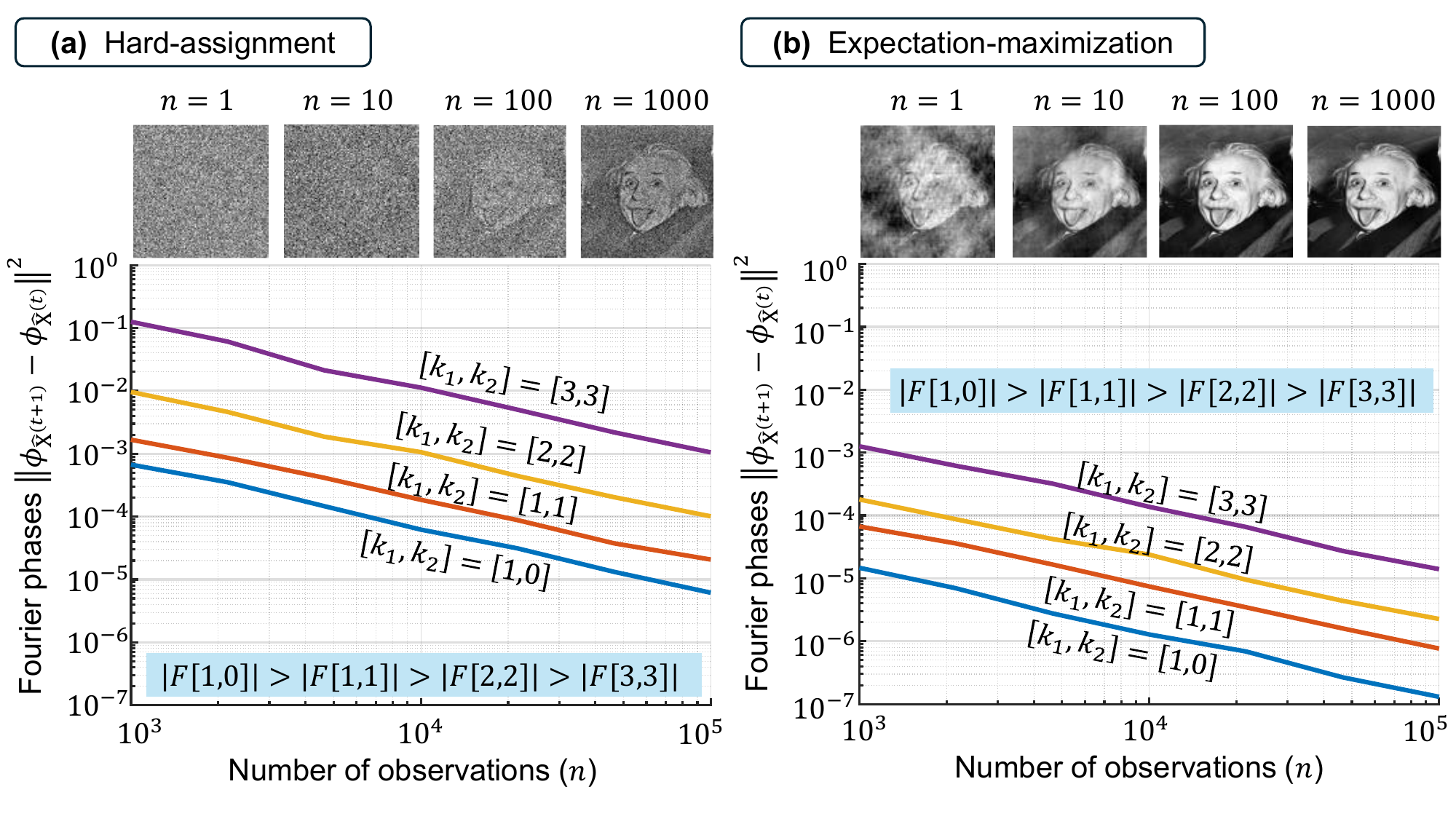}
    \caption{\textbf{Single Einstein from Noise iteration at finite sample size.}  
    \textbf{(a)} Hard assignment (Algorithm~\ref{alg:generalizedEfNhard}). 
    \textbf{(b)} Expectation-maximization (Algorithm~\ref{alg:generalizedEfNsoft}). 
    Top: Structural similarity between the Einstein from Noise estimate and the initialization as a function of $n$. 
    Bottom: MSE between the Fourier phases of the estimate and those of the template, shown versus $n$ across frequencies. 
    In both algorithms the phase MSE scales as $\asymp n^{-1}$, and dominant spectral components exhibit smaller errors (that is, they are more biased). 
    EM is more strongly biased toward the template, producing a visible Einstein reconstruction after only a few observations. 
    Both panels (a)--(b) averages $300$ Monte Carlo trials with an Einstein template of size $d=32\times 32$.}
    \label{fig:6} 
\end{figure*}

These trends are consistent with the finite-sample guaranties developed in the following for EM; the corresponding single-iteration analysis for hard assignment appears in~\cite{balanov2024einstein}.
\begin{proposition}[Finite-sample one-step Fourier phase drift]
\label{prop:finite_sample_one_step}
Fix a finite number of iterations $T \in \mathbb{N}$. Let $\{\hat{x}^{(t)}\}_{t=0}^T$ be the empirical EM iterates defined in~\eqref{eq:EM-iters}, and write ${\s{\hat{X}}}^{(t)}[k]$ for the DFT coefficients of $\hat{x}^{(t)}$ as in~\eqref{FourierSpace}, with polar representation $\s{\hat{X}}^{(t)}[k]=|\s{\hat{X}}^{(t)}[k]|e^{i\phi_{\hat{\s X}^{(t)}}[k]}$.
Assume the regularity conditions of Theorem~\ref{thm:1}.
Then, for every frequency $1\le k\le d-1$ with $|\hat{\s{X}}^{(t)}[k]|>0$, there exists a finite constant $C_k \big(\hat{x}^{(t)}\big)<\infty$ such that
\begin{align}\label{eq:phase_mse_conditional_limit}
   \lim_{n\to\infty} n\,\E\left[\,\Big|\phi_{\hat{\s X}^{(t+1)}}[k]-\phi_{\hat{\s X}^{(t)}}[k]\Big|^2 \,\Big|\, \hat{x}^{(t)}\right]\;=\; C_k(\hat{x}^{(t)}).
\end{align}
The explicit expression for $C_k \big(\hat{x}^{(t)}\big)$ is provided in Remark~\ref{remark:constant-Ck}.
\end{proposition}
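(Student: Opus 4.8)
The plan is to treat the conditioning value $x\triangleq\hat{x}^{(t)}$ as fixed and to recognize the one-step update $\hat{x}^{(t+1)}=M_n(x;\mathcal{Y})$ as an empirical mean of i.i.d.\ summands, then to propagate the resulting $\sqrt{n}$-Gaussian fluctuation through the phase map via a delta-method argument. Concretely, for each $k$ define the per-sample Fourier summand $g_k(y;x)\triangleq\big[F^\ast\big(\sum_{\ell=0}^{d-1}\gamma_\ell(x;y)\,\mathcal{T}_\ell^{-1}y\big)\big]_k$, so that $\s{\hat{X}}^{(t+1)}[k]=\frac1n\sum_{i=1}^n g_k(y_i;x)$. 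Its conditional mean is the $k$-th Fourier coefficient of the population map, $\mu_k\triangleq\E[g_k(Y;x)\mid x]=[F^\ast M(x)]_k$. The key structural input is Theorem~\ref{thm:1}: the population operator scales each non-mean mode by a real $\alpha_k^{(t)}\in(0,1)$, so $\mu_k=\alpha_k^{(t)}\s{\hat{X}}^{(t)}[k]$ is a nonzero complex number with $\arg\mu_k=\phi_{\s{\hat{X}}^{(t)}}[k]$. Writing $R_{n,k}\triangleq\s{\hat{X}}^{(t+1)}[k]-\mu_k$ for the centered empirical fluctuation, the classical CLT gives $\sqrt{n}\,R_{n,k}\xrightarrow{\mathcal D}W_k$, a (possibly degenerate) centered complex Gaussian whose covariance is determined by that of $g_k(Y;x)$.

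First I would establish the moment control that makes everything go through. Because the responsibilities satisfy $\gamma_\ell\in[0,1]$, $\sum_\ell\gamma_\ell=1$, and each $\mathcal{T}_\ell$ is norm-preserving, the M-step average obeys $\|\sum_\ell\gamma_\ell(x;y)\mathcal{T}_\ell^{-1}y\|_2\le\|y\|_2$, hence $|g_k(y;x)|\le\|y\|_2$ for every $k$. Since $Y=\xi\sim\mathcal{N}(\mathbf{0},\sigma^2 I_d)$ has finite moments of all orders, so does $g_k(Y;x)$; in particular the CLT limit $W_k$ exists and $\E|g_k|^4<\infty$, which I will use to upgrade distributional convergence to convergence of the scaled second moment.

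Next comes the delta method for the phase. Since $\mu_k\neq 0$, for small perturbations $\arg(\mu_k+R)-\arg(\mu_k)=\arg\!\big(1+R/\mu_k\big)=\frac{\mathrm{Im}(\overline{\mu_k}R)}{|\mu_k|^2}+O(|R|^2)$, and $\arg\mu_k=\phi_{\s{\hat{X}}^{(t)}}[k]$ by Theorem~\ref{thm:1}. Therefore the one-step drift is $\Delta_k\triangleq\phi_{\s{\hat{X}}^{(t+1)}}[k]-\phi_{\s{\hat{X}}^{(t)}}[k]=\frac{\mathrm{Im}(\overline{\mu_k}R_{n,k})}{|\mu_k|^2}+O(|R_{n,k}|^2)$ whenever $R_{n,k}$ is small. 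Substituting $R_{n,k}=W_{n,k}/\sqrt n$ with $W_{n,k}\triangleq\sqrt n\,R_{n,k}$ and squaring, the leading term scales like $n^{-1}(\mathrm{Im}(\overline{\mu_k}W_{n,k}))^2/|\mu_k|^4$; multiplying by $n$ and passing to the limit identifies the candidate constant $C_k(x)=\E[(\mathrm{Im}(\overline{\mu_k}W_k))^2]/|\mu_k|^4$, i.e.\ the tangential variance of the Gaussian fluctuation rescaled by $|\mu_k|^{-4}$, matching the explicit expression recorded in Remark~\ref{remark:constant-Ck}.

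The step I expect to be the main obstacle is upgrading this weak convergence to convergence of $n\,\E[|\Delta_k|^2\mid x]$, i.e.\ establishing uniform integrability while controlling the non-smoothness of $\arg$ near the origin and across its branch cut. I would handle this by splitting on the event $\mathcal{G}_n=\{|R_{n,k}|\le|\mu_k|/2\}$. On $\mathcal{G}_n$ the expansion above is valid with remainder $O(|R_{n,k}|^3)$; after multiplying by $n$, the cross and remainder terms are $O(\E|W_{n,k}|^3/\sqrt n)\to 0$, while the quadratic term converges to $C_k(x)$ by the CLT together with uniform integrability of $(\mathrm{Im}(\overline{\mu_k}W_{n,k}))^2$, a consequence of $\E|g_k|^4<\infty$ via a Rosenthal/Marcinkiewicz--Zygmund bound giving $\sup_n\E|W_{n,k}|^4<\infty$. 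On the complement the drift is deterministically bounded, $|\Delta_k|\le\pi$, and a fourth-moment Markov bound yields $\P(\mathcal{G}_n^c\mid x)=\P(|W_{n,k}|>\tfrac{\sqrt n}{2}|\mu_k|)\le\frac{16\,\E|W_{n,k}|^4}{n^2|\mu_k|^4}=O(n^{-2})$, so $n\,\E[|\Delta_k|^2\Ind_{\mathcal{G}_n^c}\mid x]\le\pi^2 n\,\P(\mathcal{G}_n^c\mid x)\to 0$. Combining the two contributions gives $n\,\E[|\Delta_k|^2\mid x]\to C_k(x)<\infty$. A secondary technical point, addressed in the same spirit, is the data reuse across iterations: conditioning on $\hat{x}^{(t)}$ constrains only a fixed $d$-dimensional functional of the $n$ samples, so its effect on the per-sample summands is asymptotically negligible and does not alter the leading $n^{-1}$ law.
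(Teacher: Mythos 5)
Your proposal is correct and follows essentially the same route as the paper: the paper writes the one-step drift as $\arctan\big(n^{-1/2}\,\s{A}_n/\s{B}_n\big)$ (the rotated real-coordinate form of your $\arg$-linearization around $\mu_k$, whose argument equals $\phi_{\s{\hat X}^{(t)}}[k]$ by Theorem~\ref{thm:1}), splits on the event that the denominator is bounded away from zero exactly as you split on $\{|R_{n,k}|\le|\mu_k|/2\}$, bounds the complementary event's contribution by $O(n^{-2})$, and passes CLT-type convergence through the nonlinearity to obtain $\sigma_{\s{A},k}^2/\mu_{\s{B},k}^2$, which coincides with your tangential-variance constant $\E[(\mathrm{Im}(\overline{\mu_k}W_k))^2]/|\mu_k|^4$. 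The only substantive difference is that the paper outsources the uniform-integrability and bad-event estimates to lemmas proved in \cite{balanov2024einstein}, whereas you carry them out self-contained via Rosenthal and fourth-moment Markov bounds (and you at least flag the data-reuse subtlety of conditioning on $\hat{x}^{(t)}$, which the paper passes over silently).
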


Proposition~\ref{prop:finite_sample_one_step}, proved in Appendix~\ref{sec:finite_sample_one_step_proof}, characterizes the one-step finite-sample phase drift at frequency $k$.
In particular, conditional on the current iterate $\hat{x}^{(t)}$, the per-iteration phase mean-squared error is of order $1/n$, with leading term $C_k(\hat{x}^{(t)})/n$.
Next, we specialize this constant to the low-magnitude scaling regime of Theorem~\ref{thm:low-mag-rate} and derive an explicit approximation for $C_k(\hat{x}^{(t)})$ in that setting.

\begin{proposition}[Low-magnitude scaling of $C_k(\hat{x}^{(t)})$] \label{prop:low-magnitude-scaling-of-phases-convergence-rate}
Fix a finite iteration $T \in \mathbb{N}$, a confidence level $\delta\in(0,1)$, and an index $t\in\{0,\dots,T-1\}$.
Let $\{\hat{x}^{(t)}\}_{t=0}^T$ be the empirical EM iterates defined in~\eqref{eq:EM-iters}.
Assume in addition that $\hat{x}^{(t)}$ lies in the low-magnitude regime of Theorem~\ref{thm:low-mag-rate}. Then, 
\begin{align}\label{eq:Ck_lowmag_scaling}
    C_k(\hat{x}^{(t)})=\Theta(\|\hat{x}^{(t)} \|^2).
\end{align}
Moreover, if the sample size is large enough so that $n \gtrsim \|\hat{x}^{(t)}\|^{-6} \log(1/\delta)$, then with probability at least $1-\delta$, 
\begin{align}\label{eq:slow_variation}
    \big|C_k(\hat{x}^{(t+1)})-C_k(\hat{x}^{(t)})\big| \le c\,\|\hat{x}^{(t)}\|^4,
\end{align}
where the probability is with respect to the randomness in forming the next empirical update $\hat{x}^{(t+1)}$, given the current iterate $\hat{x}^{(t)}$, and $c$ is a constant independent of $\|\hat{x}^{(t)}\|$.
\end{proposition}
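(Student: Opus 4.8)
Both claims are read off from the closed-form expression for $C_k(\hat{x}^{(t)})$ recorded in Remark~\ref{remark:constant-Ck}. Writing the single-sample Fourier contribution of the $M$-step~\eqref{eqn:generalizedEfNEqnSoft} at frequency $k$ as
\[
    W_k(\hat{x}^{(t)};y)\;\triangleq\;\s{N}[k]\sum_{\ell=0}^{d-1}\gamma_\ell(\hat{x}^{(t)};y)\,e^{j\frac{2\pi}{d}k\ell},
\]
so that $\s{\hat{X}}^{(t+1)}[k]=\tfrac1n\sum_i W_k(\hat{x}^{(t)};y_i)$ and $M_k\triangleq\E[W_k]$ is the population coefficient, the constant has the form $C_k(\hat{x}^{(t)})=\mathrm{Var}\!\big(\mathrm{Im}(e^{-j\phi_k}W_k)\big)/|M_k|^2$ with $\phi_k\triangleq\phi_{\s{\hat{X}}^{(t)}}[k]$: the variance of the \emph{tangential} (phase-changing) part of the per-sample update, normalized by the squared population magnitude. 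The whole argument rests on one structural fact, consistent with the exact population phase preservation of Theorem~\ref{thm:1}: expanding the softmax responsibilities for small $\hat{x}^{(t)}$ gives $\gamma_\ell=\tfrac1d\big[1+(a_\ell-\bar a)+O(\|\hat{x}^{(t)}\|^2)\big]$ with $a_\ell=y^\top\mathcal{T}_\ell\hat{x}^{(t)}$; the uniform term is annihilated by Fourier orthogonality for $k\neq0$, and the first correction yields $W_k=|\s{N}[k]|^2\,\s{\hat{X}}^{(t)}[k]+O(\|\hat{x}^{(t)}\|^2)$, whose leading fluctuation is \emph{purely radial} (parallel to $\s{\hat{X}}^{(t)}[k]$) and hence contributes nothing to $\mathrm{Im}(e^{-j\phi_k}W_k)$. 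Thus the phase drift is produced only by a sub-leading term that is smaller by an extra factor $\|\hat{x}^{(t)}\|^2$, which is exactly what drives both parts.

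\paragraph{Part 1: the $\Theta(\|\hat{x}^{(t)}\|^2)$ scaling.}
For the denominator, the nonvanishing assumption $\s{\hat{X}}^{(t)}[k]\neq0$ together with fixed $d$ gives $|\s{\hat{X}}^{(t)}[k]|\asymp\|\hat{x}^{(t)}\|$, and since $M_k=\s{\hat{X}}^{(t)}[k]\,(1+o(1))$ in the low-magnitude regime, $|M_k|^2=\Theta(\|\hat{x}^{(t)}\|^2)$. For the numerator, the radial leading term drops out, and the tangential fluctuation is carried by the second-order part of $W_k$, which takes the form $\tfrac12\s{N}[k]\sum_{m\neq0,k}\overline{\s{N}[m]\s{N}[k-m]}\,\s{\hat{X}}^{(t)}[m]\,\s{\hat{X}}^{(t)}[k-m]$. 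The upper bound $\mathrm{Var}(\mathrm{Im}(e^{-j\phi_k}W_k))=O(\|\hat{x}^{(t)}\|^4)$ follows from Cauchy–Schwarz, $\sum_m|\s{\hat{X}}^{(t)}[m]|\,|\s{\hat{X}}^{(t)}[k-m]|\le\|\hat{x}^{(t)}\|^2$, while the matching lower bound $\Omega(\|\hat{x}^{(t)}\|^4)$ follows by factoring out $\|\hat{x}^{(t)}\|^2$ and noting the residual variance is a fixed positive functional of the normalized, nonvanishing Fourier profile. Dividing, $C_k(\hat{x}^{(t)})=\Theta(\|\hat{x}^{(t)}\|^4)/\Theta(\|\hat{x}^{(t)}\|^2)=\Theta(\|\hat{x}^{(t)}\|^2)$.

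\paragraph{Part 2: slow variation.}
The plan is a first-order Taylor bound, $|C_k(\hat{x}^{(t+1)})-C_k(\hat{x}^{(t)})|\le\|\nabla C_k(\hat{x}^{(t)})\|\,\|\hat{x}^{(t+1)}-\hat{x}^{(t)}\|+O(\|\Hess C_k\|\,\|\hat{x}^{(t+1)}-\hat{x}^{(t)}\|^2)$, controlled by two ingredients. First, differentiating the explicit formula and using the homogeneity structure from Part~1 (numerator degree $4$, denominator degree $2$) gives the \emph{sharp} gradient scaling $\|\nabla C_k(x)\|=O(\|x\|)$ uniformly over the low-magnitude ball; the crude bound $O(1)$ would only yield $\|\hat{x}^{(t)}\|^3$, so this refinement is essential. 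Second, I split the one-step displacement as $\hat{x}^{(t+1)}-\hat{x}^{(t)}=\big[M(\hat{x}^{(t)})-\hat{x}^{(t)}\big]+\big[M_n(\hat{x}^{(t)};\mathcal{Y})-M(\hat{x}^{(t)})\big]$: by Theorem~\ref{thm:low-mag-rate} the contraction factors obey $\alpha_k^{(t)}-1=-|\s{\hat{X}}^{(t)}[k]|^2+O(\|\hat{x}^{(t)}\|^4)$, so the population drift is $O(\|\hat{x}^{(t)}\|^3)$, while the fluctuation is bounded via Proposition~\ref{lem:subgaussian-epsn} (with $x^\star=0$ and radius $R\asymp\|\hat{x}^{(t)}\|$), giving $\|M_n(\hat{x}^{(t)};\mathcal{Y})-M(\hat{x}^{(t)})\|\lesssim\sqrt{(d+\log(1/\delta))/n}$ with probability at least $1-\delta$. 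Imposing $n\gtrsim\|\hat{x}^{(t)}\|^{-6}\log(1/\delta)$ forces this fluctuation below the cubic drift, so $\|\hat{x}^{(t+1)}-\hat{x}^{(t)}\|\lesssim\|\hat{x}^{(t)}\|^3$ (which also keeps $\hat{x}^{(t+1)}$ in the low-magnitude regime); multiplying by the gradient bound yields $|C_k(\hat{x}^{(t+1)})-C_k(\hat{x}^{(t)})|\le c\,\|\hat{x}^{(t)}\|^4$. Note the exponent $-6$ is exactly the reciprocal square of the target cubic displacement, arising because the per-sample fluctuation has $\|\hat{x}^{(t)}\|$-independent (order-one) variance.

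\paragraph{Main obstacle.}
The crux is not the bookkeeping but two quantitative points: the \emph{lower} bound in Part~1, i.e. verifying that the second-order tangential variance does not degenerate (so that $C_k=\Omega(\|\hat{x}^{(t)}\|^2)$), which I expect to handle through the nonvanishing Fourier assumption and a genericity computation of the residual noise functional; and the sharp gradient estimate $\|\nabla C_k\|=O(\|\hat{x}^{(t)}\|)$ in Part~2, since only this refinement—rather than a generic Lipschitz bound—upgrades the conclusion from $\|\hat{x}^{(t)}\|^3$ to the claimed $\|\hat{x}^{(t)}\|^4$. Both reduce to carefully tracking the cancellation of the radial leading order in the explicit expression of Remark~\ref{remark:constant-Ck}.
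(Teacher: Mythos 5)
Your proposal is correct and follows essentially the same route as the paper's proof: the same radial-cancellation argument (the linear softmax correction produces a contribution parallel to $\s{\hat{X}}^{(t)}[k]$, so the tangential variance is $O(\|\hat{x}^{(t)}\|^4)$ against a denominator $\Theta(\|\hat{x}^{(t)}\|^2)$), and the same slow-variation scheme — split the one-step move into population drift plus empirical fluctuation, control the fluctuation by Proposition~\ref{lem:subgaussian-epsn} under $n\gtrsim\|\hat{x}^{(t)}\|^{-6}\log(1/\delta)$, and multiply by the $O(\|\hat{x}^{(t)}\|)$ local Lipschitz/gradient bound coming from the homogeneity $C_k(y)\approx\|y\|^2 c_k(y/\|y\|)$. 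The only cosmetic differences are that you bound the population displacement via the contraction expansion of Theorem~\ref{thm:low-mag-rate} where the paper invokes Theorem~\ref{thm:mra-lowSNR-iteration-bias-to-init}, and your two acknowledged obstacles (non-degeneracy of the quadratic variance functional, and the sharp gradient estimate) are handled at the same level of rigor in the paper itself, via the asserted positivity in Lemma~\ref{lem:muB-sigmaA-low-magnitude-short} and the local Lipschitz bound in Lemma~\ref{lem:Ck-empirical-drift}.
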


Proposition~\ref{prop:low-magnitude-scaling-of-phases-convergence-rate}, proved in Appendix~\ref{sec:proof-of-low-magnitude-scaling-of-phases}, describes the behavior of the one-step phase-drift constant in the low-magnitude regime. When $\|\hat{x}^{(t)}\|$ is small, it yields the envelope estimate $C_k(\hat{x}^{(t)})=\Theta(\|\hat{x}^{(t)}\|^2)$. In particular, $\mathbb{E}|\Delta\phi^{(t)}[k]|^2=\Theta(\|\hat{x}^{(t)} \|^2/n)$.
Moreover, the slow-variation bound in~\eqref{eq:slow_variation} implies that the constants $C_k(\hat{x}^{(t)})$ change only gradually across iterations; that is, the phase-drift rate remains essentially stable over a moderate number of steps while the iterate magnitude stays small.

\paragraph{Fourier-phases drift: Multi-iteration.}
In the population limit, Proposition~\ref{thm:3.5} shows that the EM phases remain identical to those of the initialization along the entire trajectory. 
However, at  finite $n$, each iteration introduces a small random phase drift, which accumulates over time. 
Figure~\ref{fig:7} demonstrates this multi-iteration behavior. 
Panel~(a) plots the Pearson cross-correlation (PCC) between the estimate at iteration $t$ and the initialization; the correlation increases with $n$, indicating a strengthening bias toward the template as more observations are available. 
Panel~(b) shows the Fourier phase MSE~\eqref{eqn:fourier-phases-MSE} between the estimate and the template, which empirically scales as $\propto t^2/n$ over the transient regime. 

\begin{figure*}[!t]
    \centering
    \includegraphics[width=0.95 \linewidth]{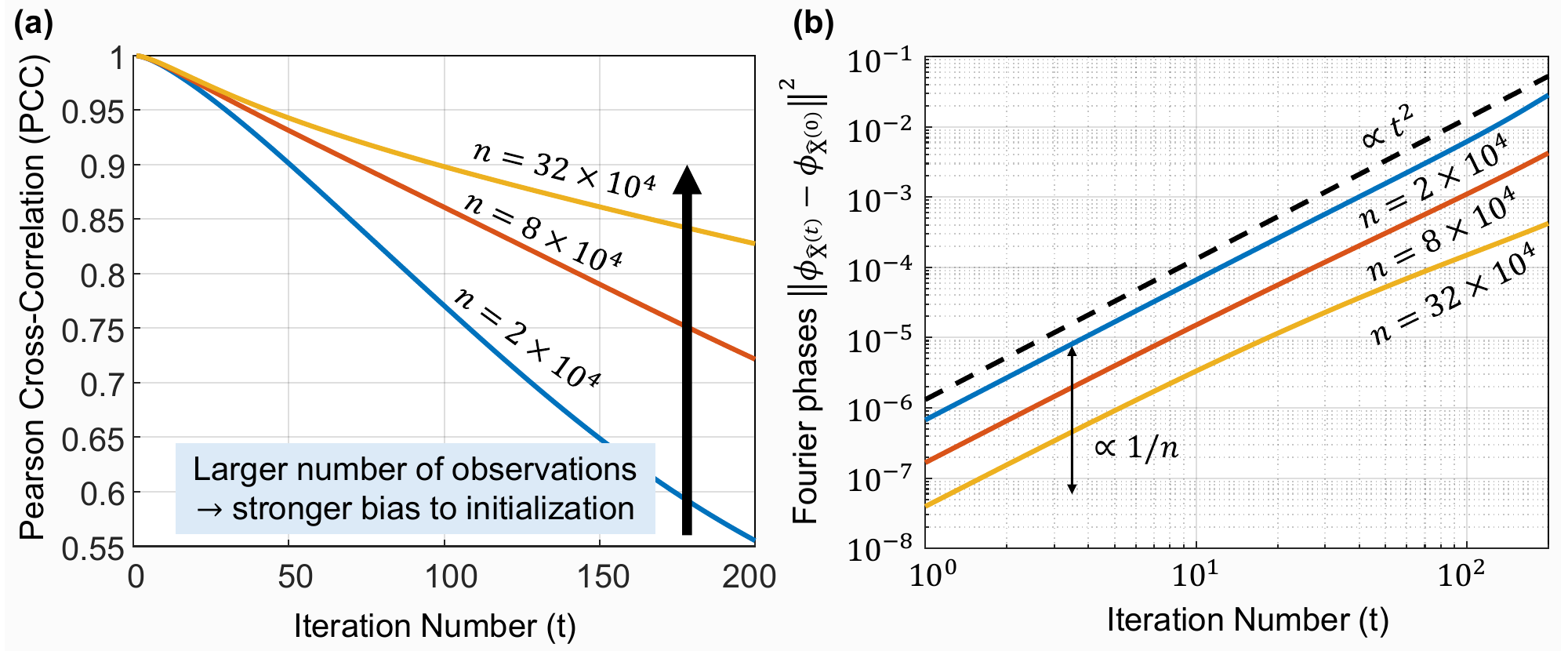}
    \caption{\textbf{Finite-sample Einstein from Noise over multiple EM iterations.}
    \textbf{(a)} PCC between $ \hat{x}^{(t)}$ and the initialization $x_{\mathrm{template}}$ versus iteration $t$ for several sample sizes $n$. Larger $n$ yields stronger and more persistent bias toward the template. 
    \textbf{(b)} Fourier-phase MSE between $ \hat{x}^{(t)}$ and $x_{\mathrm{template}}$ versus $t$, illustrating approximate scaling $\propto t^2/n$. 
    Simulation settings: $d=16\times 16$; $16$ Monte Carlo trials per curve. Panel~(b) displays the phase MSE for the $[1,2]$ Fourier component.}
    \label{fig:7} 
\end{figure*}

We now formalize the accumulation of finite-sample phase errors across iterations, which is proved in Appendix~\ref{sec:proofOfPhaseErrorAfterTiterations}.

\begin{proposition}[Accumulated Fourier phase error across EM iterations]
\label{thm:lowerBoundAfterTiterations_rewrite}
Fix $d\ge 2$ and let $\{y_i\}_{i=1}^n$ follow the model~\eqref{eqn:calPModel}. 
Let $ \hat{x}^{(t)}$ be the EM iterates from Algorithm~\ref{alg:generalizedEfNsoft}, with DFT $\hat{\s{X}}^{(t)}[k]$ and initialization $\hat{\s{X}}^{(0)}[k]=\s{X}_{\mathrm{template}}[k]$. 
Assume that for a given frequency $k$ and
all $t\in\{0,\dots,T-1\}$, 
\begin{align}
    \mathbb{E}| \, \phi_{\hat{\s{X}}^{(t+1)}}[k]-\phi_{\hat{\s{X}}^{(t)}}[k]|^2 \le \epsilon^{(t)}. \label{eqn:eps_assump_rewrite}
\end{align}
Then:
\begin{enumerate}
    \item \emph{(General bound.)} For any $T\ge 1$,
    \begin{align}
        \mathbb{E}\,\big|\phi_{\hat{\s{X}}^{(T)}}[k]-\phi_{\hat{\s{X}}^{(0)}}[k]\big|^2 \le \p{\sum_{t=0}^{T-1}\sqrt{\epsilon^{(t)}}}^2 .        \label{eqn:lowerBoundMultipleIterations_rewrite}
    \end{align}

    \item \emph{(Slow variation.)}
    If $\epsilon^{(t)}$ varies slowly in the sense that
    \begin{align}
        \epsilon^{(t)}=\epsilon^{(0)}+\alpha t + O((\alpha t)^2),
        \qquad \alpha t\ll 1,        \label{eqn:slowly_varying_envelope_rewrite}
    \end{align}
    then
    \begin{align}
        \mathbb{E}\,\big|\phi_{\hat{\s{X}}^{(T)}}[k]-\phi_{\hat{\s{X}}^{(0)}}[k]\big|^2
         \le 
        T^2\,\epsilon^{(0)} + O(\alpha^2T^3).
        \label{eqn:T2eps_rewrite}
    \end{align}
\end{enumerate}
\end{proposition}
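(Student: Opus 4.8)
The plan is to reduce both parts to the telescoping decomposition of the cumulative phase displacement, controlled by two elementary inequalities, with the only genuine care being that Fourier phases are circle-valued. First I would treat Part~1. Since each $\phi_{\hat{\s{X}}^{(t)}}[k]$ is an argument of a complex number, it lives on the circle $S^1=\mathbb{R}/2\pi\mathbb{Z}$, and I interpret every phase discrepancy $|\phi_{\hat{\s{X}}^{(s)}}[k]-\phi_{\hat{\s{X}}^{(s')}}[k]|$ as the geodesic (principal-value) distance $d_{S^1}$. Because $(S^1,d_{S^1})$ is a metric space, the triangle inequality applies along the chain $\phi_{\hat{\s{X}}^{(0)}}[k],\dots,\phi_{\hat{\s{X}}^{(T)}}[k]$, giving the deterministic, per-realization bound
\begin{align}
  \big|\phi_{\hat{\s{X}}^{(T)}}[k]-\phi_{\hat{\s{X}}^{(0)}}[k]\big|
  \le \sum_{t=0}^{T-1}\big|\phi_{\hat{\s{X}}^{(t+1)}}[k]-\phi_{\hat{\s{X}}^{(t)}}[k]\big|.
\end{align}
I would then take the $L^2(\mathbb{P})$ norm of both sides: pointwise domination of nonnegative quantities gives monotonicity of the $L^2$ norm, and Minkowski's inequality moves the norm inside the sum, so that
\begin{align}
  \Big\|\phi_{\hat{\s{X}}^{(T)}}[k]-\phi_{\hat{\s{X}}^{(0)}}[k]\Big\|_{L^2}
  \le \sum_{t=0}^{T-1}\Big\|\phi_{\hat{\s{X}}^{(t+1)}}[k]-\phi_{\hat{\s{X}}^{(t)}}[k]\Big\|_{L^2}
  \le \sum_{t=0}^{T-1}\sqrt{\epsilon^{(t)}},
\end{align}
where the final step invokes the per-step hypothesis \eqref{eqn:eps_assump_rewrite}. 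Squaring both sides yields \eqref{eqn:lowerBoundMultipleIterations_rewrite}.

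For Part~2 I would substitute the slow-variation envelope \eqref{eqn:slowly_varying_envelope_rewrite} into the bound just obtained and Taylor-expand the square root, $\sqrt{\epsilon^{(t)}}=\sqrt{\epsilon^{(0)}}+\frac{\alpha t}{2\sqrt{\epsilon^{(0)}}}+O\big((\alpha t)^2\big)$, then sum using $\sum_{t=0}^{T-1}t=T(T-1)/2$. Squaring the resulting partial sum isolates the leading term $T^2\epsilon^{(0)}$, with the remaining contributions being of strictly lower order in the slowly-varying regime $\alpha t\ll 1$, which gives \eqref{eqn:T2eps_rewrite}. This step is purely bookkeeping once the regime $\alpha t\ll 1$ is invoked to discard the quadratic remainder of the expansion.

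The main obstacle is conceptual rather than computational, and it sits entirely in the telescoping step of Part~1: a literal identity $\phi_{\hat{\s{X}}^{(T)}}[k]-\phi_{\hat{\s{X}}^{(0)}}[k]=\sum_{t}\big(\phi_{\hat{\s{X}}^{(t+1)}}[k]-\phi_{\hat{\s{X}}^{(t)}}[k]\big)$ breaks down the moment any single-step increment wraps across the branch cut of the argument. Passing to the geodesic distance on $S^1$ repairs this, since the triangle inequality there is unconditional and requires no smallness of the increments; the per-iteration smallness (of order $n^{-1/2}$, from Proposition~\ref{prop:finite_sample_one_step}) then enters only through the magnitudes $\epsilon^{(t)}$ and not through the validity of the chaining. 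A secondary point worth verifying is that the hypothesis \eqref{eqn:eps_assump_rewrite} is stated under this same circle-distance convention, so that the term-by-term bound $\|\cdot\|_{L^2}\le\sqrt{\epsilon^{(t)}}$ is legitimate.
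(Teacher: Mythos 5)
Your proof follows essentially the same route as the paper's: telescope the phase displacement, bound the $L^2$ norm of the sum by the sum of $L^2$ norms (Minkowski) together with the per-step hypothesis \eqref{eqn:eps_assump_rewrite}, and square; Part~2 is the identical Taylor expansion of $\sqrt{\epsilon^{(t)}}$ and summation over $t$. The only difference is your care with circle-valued phases via the geodesic distance on $S^1$ — the paper simply telescopes real-valued representatives, for which the identity is exact — so your remark is a harmless (indeed slightly more careful) refinement of the same argument rather than a different one.
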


In particular, in the regime of Propositions~\ref{prop:finite_sample_one_step}-\ref{prop:low-magnitude-scaling-of-phases-convergence-rate}, the one-step phase error satisfies the slowly varying condition~\eqref{eqn:slowly_varying_envelope_rewrite} with
\begin{align}
    \epsilon^{(t)} = \frac{C_k(\hat{x}^{(0)})}{n} \Big(1+O(t \,  \| \hat{x}^{(t)}\|^2)\Big) \qquad\text{for } t \ll \| \hat{x}^{(t)}\|^{-2},
    \label{eq:eps-slow-envelope}
\end{align}
Hence, as long as $T \ll \| \hat{x}^{(T)}\|^{-2}$, the cumulative phase MSE scales as $\propto T^2/n$, matching Figure~\ref{fig:7}(b). 
This provides a quantitative mechanism for the persistence of initialization bias in large low-SNR datasets.

As the Fourier-phase MSE scales as $1/n$, Appendix~\ref{sec:adam} explores mitigation strategies based on mini-batching, where each iteration uses only a smaller subset of $n$ observations. Our results show that such mini-batching schemes reduce sensitivity to initialization and improve computational efficiency, while achieving reconstruction accuracy comparable to full-batch EM.

\section{Discussion and outlook}
\label{sec:discussion}

\subsection{Extension to additional group actions}

Although our analysis focuses on the classical MRA model~\eqref{eqn:mainModel1D} with a cyclic-shift group action, the structural features revealed by our study of the EM dynamics suggest that similar phenomena may extend to more general latent group actions.
In a general MRA formulation, one considers a compact group $G$ acting continuously and orthogonally on a finite-dimensional real vector space $V$, with an unknown signal $x\in V$ 
and observations
\begin{align}
    \text{(General MRA model)} \qquad
    y_i = g_i \cdot x + \xi_i,\quad g_i \in G, \label{eqn:mainModel}
\end{align}
for $i\in\{0,\dots,n-1\}$, where $\xi_i$ are i.i.d. noise terms and $(g\cdot x)(w)\triangleq x(g^{-1}w)$.
In this setting, the natural goal is to recover the $G$-orbit of $x$ from the noisy samples $\{y_i\}_{i=1}^n$.

It is useful to distinguish which aspects of our analysis rely on the special structure of cyclic shifts, namely that the acting group is discrete and abelian, and which reflect more general features of EM under group actions. 
The finite-sample bounds and the Ghost of Newton phenomenon rely primarily on the unitarity of the group action and on the resulting covariance structure of the per-sample EM update.
We therefore expect these aspects of the theory to extend, with suitable technical modifications, to general MRA models in which $G$ acts unitarily on $V$.
By contrast, the explicit second-order expansion of the Jacobian and its block-diagonal Fourier structure make essential use of the commutative representation theory of the cyclic group.
Extending these spectral calculations to more general compact groups, especially non-abelian groups, such as $\mathrm{SO}(n)$ for $n>2$, would both broaden the range of applications and clarify which geometric features of the group control the local contraction properties of EM.

We also conjecture that the two-phase convergence behavior established in Section~\ref{sec:fundemental-properties} is not specific to the cyclic-shift MRA model~\eqref{eqn:mainModel1D}. Heuristically, the initial fast phase reflects contraction along magnitude-like directions that are identifiable from low-order invariants, whereas the slow tail is governed by nearly flat directions corresponding to phase-like degrees of freedom.
Making this intuition precise for general compact groups would require identifying the appropriate analog of Fourier magnitudes and phases, which is closely related to the decomposition of $V$ into irreducible representations, and tracking EM on each component.
For $\mathrm{SO}(n)$, for example, natural analogs of these quantities are the second-order invariants (Gram matrices of the spherical harmonic coefficients~\cite{bendory2024sample}) and the third-order invariants (bispectrum)~\cite{bendory2025orbit}, which respectively play the roles of magnitude-and phase-type information.
Such a theory could lead to a general iteration-complexity bounds and explain why certain representation components dominate the asymptotic behavior. 

Our analysis of the Einstein from Noise phenomenon in the cyclic-shift MRA model suggests a broader mechanism of algorithmic confirmation bias in latent group-action models, which is also supported by classical results in the cryo-EM literature~\cite{henderson2013avoiding}. In the noise-only setting, where no signal present, the population EM dynamics drive the Fourier magnitudes toward zero while leaving the Fourier phases equal to those of the initialization, so the iterate retains a persistent imprint of the initial template. More generally, we conjecture that for other group-action models there exist analogous nearly invariant directions
that remain stable over many iterations and likewise preserve memory of the initialization. Characterizing these invariant directions for general compact groups, and relating them to the hierarchy of invariant features required to identify the orbit, may provide a general principle for predicting when and how EM exhibits confirmation bias toward its starting point.

\subsection{Mitigation strategies} 

\paragraph{Mitigation strategies of the Ghost of Newton phenomenon.}
Addressing the Ghost of Newton phenomenon remains a challenging open problem with practical implications. One potential solution is to apply early termination to the EM algorithm. However, this approach is difficult to implement reliably because the optimal stopping point depends on both the SNR and the number of observations, making it challenging to generalize across different scenarios. 
One possible approach is to apply change point detection methods~\cite{aminikhanghahi2017survey} in combination with gap statistics~\cite{tibshirani2001estimating}. 

\paragraph{Beyond EM: second-order and acceleration schemes.}
A natural question is whether second-order or acceleration techniques can mitigate the slow-convergence bottleneck we identify for EM in the low-SNR regime, both analytically and numerically. There are many plausible variants to explore, including Newton--Raphson and quasi-Newton methods~\cite{nocedal2006numerical, nelder1972generalized}, as well as accelerated EM schemes based on extrapolation or fixed-point acceleration (e.g., Anderson-type)~\cite{walker2011anderson,zhou2011quasi}, and momentum-based first-order methods (e.g., Nesterov acceleration)~\cite{nesterov1983method}. A systematic study of these alternatives, however, is beyond the scope of the present paper and merits dedicated investigation. That said, our analysis suggests that the bottleneck is rooted in the local geometry of the objective EM map in low SNR (e.g., near-flat directions of the Jacobian), and thus we expect that purely first-order schemes may continue to exhibit a similar slowdown.

\subsection{Comparison with the method of moments} 
Our discouraging findings for EM in the low-SNR regime suggest considering alternative estimation approaches, such as the method of moments. 
A key distinction is that several limitations of EM persist even at the population level and therefore do not vanish with more observations; for instance, the slow convergence phenomenon in low SNR holds regardless of the sample size.
In contrast, method of moments can leverage additional data directly: by collecting more samples one can estimate the relevant moments with increasing accuracy, and in the limit obtain essentially exact moment estimates.

The main drawback of the method of moments is computational rather than statistical, namely the challenge of inverting the (typically nonlinear) moment equations to recover the underlying signal.
For the MRA models considered in this paper, however, provable inversion algorithms are available; see, e.g.,~\cite{bendory2017bispectrum,perry2019sample}, and further extensions to more general group actions such as $\mathrm{SO}(3)$ and $\mathrm{SE}(3)$ in~\cite{bendory2025orbit,balanov2025orbit}.

\section*{Data Availability}
The detailed implementation and code are available at \href{https://github.com/AmnonBa/EM-MRA-low-SNR}{https://github.com/AmnonBa/EM-MRA-low-SNR}.

\section*{Acknowledgment}
T.B. is supported in part by BSF under Grant 2020159, in part by NSF-BSF under Grant 2019752, in part by ISF under Grant 1924/21, and in part by a grant from The Center for AI and Data Science at Tel Aviv University (TAD).
W.H. is supported by ISF under Grant 1734/21. We thank Amit Singer for pointing out possible extensions to second-order methods, as discussed in Section~\ref{sec:discussion}.

\bibliographystyle{plain}

\begin{appendices}

{\centering{\section*{Appendix}}}

\paragraph{Appendix organization.}
This appendix is organized as follows. Appendix~\ref{sub:Preliminaries} provides foundational background, including a derivation of the likelihood function for the MRA model, the EM update rule, and a brief overview of prior theoretical work relevant to our setting. Appendix~\ref{subsec:dynamical-preliminaries} introduces the nonlinear dynamical systems framework and collects general properties that are later applied to the EM population map; these are used in Appendix~\ref{sec:proofOfEmSpectralRadius}, where we establish the local spectral properties and convergence behavior of the population EM operator. In Appendix~\ref{sec:proofOfJacobStructure} we derive the population EM Jacobian in the low-SNR regime and prove the associated statements, including the two-phase convergence behavior and the tail iteration complexity.
Appendix~\ref{sec:appendix-bias-to-initalization} provides the proof of the bias-to-initialization property (Theorem~\ref{thm:mra-lowSNR-iteration-bias-to-init}).

Appendix~\ref{sec:preliminaries-finite-sample} presents the preliminaries for the finite-sample analysis, relying on tools from empirical process theory, and Appendix~\ref{sec:finite-sample-analysis-results} contains the corresponding finite-sample convergence results. Appendix~\ref{sec:highSNRasymptotics} focuses on the high-SNR regime, while Appendix~\ref{sec:lowSNRapproxMain} provides approximation formulas and expansions tailored to the low-SNR regime.

Finally, Appendix~\ref{sec:EfNpreliminaries} introduces the framework and auxiliary lemmas for the Einstein from Noise phenomenon. Appendix~\ref{sec:finite-dimentional-EfN} contains the proofs for the finite-dimensional Einstein from Noise results, Appendix~\ref{sec:proofOfHighDimensionEfN} addresses the high-dimensional case, and Appendix~\ref{sec:finite-sample-EfN-Main} establishes the corresponding finite-sample guarantees. Appendix~\ref{sec:adam} concludes by presenting a mitigation strategy for Einstein from Noise based on mini-batch EM and adaptive optimization.

\section{Preliminaries}\label{sub:Preliminaries}

\subsection{The log-likelihood function and the EM algorithm for MRA} \label{sec:softAssignmentUpdateStep}
\paragraph{The likelihood function.} 
Let $x$ denote the underlying signal in the MRA model in \eqref{eqn:mainModel1D}. The probability density function (pdf) of this model is given by,
\begin{align}
    f_{\s{MRA}}(y; x) = \frac{1}{(2\pi\sigma^2)^{d/2}} \sum_{\ell=0}^{d-1} p(\mathcal{T}_\ell) \cdot \exp\left(-\frac{\norm{y - \mathcal{T}_\ell  x}^2}{2\sigma^2} \right), \label{eqn:A0}
\end{align}
where $ p(\mathcal{T}_\ell) $ is the prior probability of the cyclic shift $ \mathcal{T}_\ell $. Recall that we assume a uniform distribution over the shifts, i.e., $ p(\mathcal{T}_\ell) = 1/d $. Accordingly, the likelihood function, given all observations $ \mathcal{Y} = \{y_i\}_{i=1}^{n} $, is given by,
\begin{align}
    \mathcal{L}_{\s{MRA}}(x; \mathcal{Y}) = \frac{1}{(2\pi\sigma^2)^{dM/2}} \prod_{i=1}^{n} \sum_{\ell=0}^{d-1} p(\mathcal{T}_\ell) \cdot \exp\left(-\frac{\norm{y_i - \mathcal{T}_\ell  x}^2}{2\sigma^2} \right), \label{eqn:A1}
\end{align}
and the corresponding log-likelihood is,
\begin{align}
    \log \mathcal{L}_{\s{MRA}}(x; \mathcal{Y}) 
    = \sum_{i=1}^{n} \log \pp{\sum_{\ell=0}^{d-1} p(\mathcal{T}_\ell) \cdot \exp\left(-\frac{\norm{y_i - \mathcal{T}_\ell  x}^2}{2\sigma^2} \right)} - \frac{dM}{2} \log(2\pi\sigma^2). \label{eqn:A2}
\end{align}
Thus, the maximum likelihood estimator (MLE) of $ x $ is,
\begin{align}
    \hat{x} = \argmax_{x\in\mathbb{R}^d} \log \ \mathcal{L}_{\s{MRA}}(x; \mathcal{Y}). \label{eqn:A3}
\end{align}
Since the norm is invariant under cyclic shifts, i.e., 
 \begin{align}
     \norm{\mathcal{T}_\ell  x} = \norm{x}, \label{eqn:A3_1}
 \end{align}
for every $\ell \in \pp{d}$, the MLE is equivalent to
\begin{align}
    \hat{x} = \argmax_{x\in\mathbb{R}^d} \left\{ \frac{1}{n} \sum_{i=1}^{n} \log \pp{\sum_{\ell=0}^{d-1} p(\mathcal{T}_\ell) \cdot \exp\left(\frac{y_i^\top (\mathcal{T}_\ell  x)}{\sigma^2} \right)} - \frac{\norm{x}^2}{2\sigma^2} \right\}. \label{eqn:A4}
\end{align}

\paragraph{The EM update rule.}
We use the EM algorithm to compute soft assignments. EM is widely used for Gaussian mixture models (GMMs) \cite{dempster1977maximum}, and has been adapted for the MRA setting \cite{sigworth1998maximum,scheres2012relion}. The EM update at iteration $t$ is given by,
\begin{align}
    \hat{x}^{(t+1)}
    = \arg\max_{x \in \mathbb{R}^d} \,   \mathbb{E}_{\mathcal{T}_{\ell_1},\dots,\mathcal{T}_{\ell_n} \sim p(\cdot \mid \mathcal{Y}, \hat{x}^{(t)})}
    \left[\log p \p{\mathcal{Y}, \{\mathcal{T}_{\ell_i}\}_{i=1}^{n} \mid x }\right].
    \label{eqn:A5-MRA}
\end{align}
The posterior distribution over shifts given the data and the current estimate is given by,
\begin{align}
    p(\mathcal{T}_\ell \vert y_i, \hat{x}^{(t)}) 
    &= \frac{\exp\left(-\frac{\norm{y_i - \mathcal{T}_\ell  \hat{x}^{(t)}}^2}{2\sigma^2} \right)}{\sum_{r=0}^{d-1} \exp\left(-\frac{\norm{y_i - \mathcal{T}_r  \hat{x}^{(t)}}^2}{2\sigma^2} \right)}= \frac{\exp\left(\frac{y_i^\top (\mathcal{T}_\ell  \hat{x}^{(t)})}{\sigma^2} \right)}{\sum_{r=0}^{d-1} \exp\left(\frac{y_i^\top (\mathcal{T}_r  \hat{x}^{(t)})}{\sigma^2} \right)}, \label{eqn:A6}
\end{align}
where the second equality  follows from \eqref{eqn:A3_1}. This expression corresponds to the softmax probability, $\gamma_\ell(\hat{x}^{(t)}, y_i)$, defined in \eqref{eqn:softmaxGamma}. Thus,
\begin{align}
    p(\mathcal{T}_\ell \vert y_i, \hat{x}^{(t)}) = \gamma_\ell(\hat{x}^{(t)}, y_i).
\end{align}

In the M–step, we maximize the expected complete–data log-likelihood with respect to $x^\star$.  
For a single observation $y_i$ and shift $\mathcal{T}_\ell$, the complete–data log-likelihood (up to an additive constant independent of $x$) is
\begin{align}
    \log p(y_i, \mathcal{T}_\ell \mid x)
    = -\frac{\|y_i - \mathcal{T}_\ell x\|_2^2}{2\sigma^2}.
    \label{eqn:A7}
\end{align}
Plugging \eqref{eqn:A6} and \eqref{eqn:A7} into the EM formulation \eqref{eqn:A5-MRA}, we obtain the following update,
\begin{align}
    \hat{x}^{(t+1)} 
    &= \argmin_{x} \sum_{i=1}^{n} \sum_{\ell=0}^{d-1} \frac{\norm{y_i - \mathcal{T}_\ell  x}^2}{2\sigma^2} \cdot \gamma_\ell(\hat{x}^{(t)}, y_i) \\
    &= \argmin_{x} \sum_{\ell=0}^{d-1} \sum_{i=1}^{n} \norm{y_i - \mathcal{T}_\ell  x}^2 \cdot \gamma_\ell(\hat{x}^{(t)}, y_i) \nonumber. \label{eqn:softAssignmentUpdateStepInNonExplicitForm}
\end{align}
Define the objective function,
\begin{align}
    \mathcal{H}(x,\calY) = \sum_{\ell=0}^{d-1} \sum_{i=1}^{n} \norm{y_i - \mathcal{T}_\ell  x}^2 \cdot \gamma_\ell(\hat{x}^{(t)}, y_i).
\end{align}
Then, the gradient with respect to $x$ is given by,
\begin{align}
    \frac{1}{2} \frac{\partial \mathcal{H}(x,\calY)}{\partial x} = \sum_{i=1}^{n} \left(x - \sum_{\ell=0}^{d-1} \gamma_\ell(\hat{x}^{(t)}, y_i) \cdot \p{\mathcal{T}_\ell^{-1}  y_i} \right).
\end{align}
Setting this derivative to zero yields the update rule,
\begin{align}
    \hat{x}^{(t+1)} = \frac{1}{n} \sum_{i=1}^{n} \sum_{\ell=0}^{d-1} \gamma_\ell(\hat{x}^{(t)}, y_i) \cdot \p{\mathcal{T}_\ell^{-1}   y_i},
\end{align}
as in \eqref{eqn:generalizedEfNEqnSoft}. We note that the derivation for the cyclic group actions $\mathbb{Z}_d$ presented here can be extended to the general MRA model in \eqref{eqn:mainModel} with a general compact group action $G$.

\subsection{The sample complexity of the MRA model}
\label{sec:MRAreconstructionRegimes}

The sample complexity of the MRA problem quantifies how the number of observations $n$ must scale with the noise variance $\sigma^2$ in order to reliably recover the underlying signal.  
Figure~\ref{fig:1}(b) summarizes the different reconstruction regimes as a function of the SNR.
In the high-SNR regime, direct observation alignment is feasible, enabling accurate alignment and averaging. The sample complexity in this setting is similar to that of incoherent integration, where the underlying group actions (e.g., shifts or rotations) are assumed to be known. 

As the SNR decreases, however, alignment becomes increasingly unreliable, and the sample complexity required for accurate recovery grows rapidly.  
In the asymptotic regime where both the number of samples $n \to \infty$ and the noise variance $\sigma^2 \to \infty$, recent theoretical analyses show that a necessary condition for unique recovery is
\begin{align}
    n = \omega(\sigma^{6}),
\end{align}
as established in~\cite{perry2019sample,bandeira2023estimation,bendory2017bispectrum}.  
The notation $n = \omega(\sigma^{6})$ means that $n / \sigma^{6} \to \infty$, implying that the number of observations must grow faster than $\sigma^{6}$ for consistent estimation of the signal.  
This scaling law characterizes an information-theoretic lower bound that applies to any estimator, regardless of computational constraints. In particular, the authors in~\cite{bendory2017bispectrum} provide an algorithm that uses this estimation rate, thus providing a matching upper bound.  
These results are derived for the maximum-likelihood objective in~\eqref{eqn:logLikelihoodMRA}, and need not apply to alternative formulations such as the hard-assignment objective in~\eqref{eqn:mraHardTargetFunction}.


\section{Preliminaries: Nonlinear dynamical systems}
\label{subsec:dynamical-preliminaries}

To analyze the local convergence properties of the population EM iteration in the MRA problem, we first provide several basic facts from the theory of nonlinear dynamical systems near a fixed point. Throughout this appendix we work in a general deterministic setting: all results apply to an arbitrary nonlinear fixed-point iteration with a symmetric Jacobian, and in particular to the population EM operator considered below.

\subsection{Local linearization and contraction}

Let $M:\mathbb{R}^d\to\mathbb{R}^d$ be $C^2$ in a neighborhood of a fixed point $x^\star$, with $M(x^\star)=x^\star$. 
We consider the associated fixed-point iteration
$\hat{x}^{(t+1)} = M(\hat{x}^{(t)})$, for $t=0,1,\dots$, and define the error sequence
\begin{align}
    e_t \triangleq \hat{x}^{(t)} - x^\star,\qquad t\ge 0.
\end{align}
Thus, by the definition of $e_t$,
\begin{align}\label{eq:error-recursion}
    e_{t+1} = M(x^\star + e_t) - x^\star.
\end{align}
In addition, by Taylor's theorem with integral remainder, there exist constants $c>0$ and $r_0>0$ such that the errors satisfy
\begin{align}\label{eq:nonlinear-perturb-flow}
    e_{t+1} = J e_t + R(e_t),
\end{align}
where $J \triangleq J_M(x^\star)$ denotes the Jacobian of $M$ in $x^\star$ (see, Definition~\ref{def:Jacobian-spectral-radius}), and the remainder $R:\mathbb{R}^d\to\mathbb{R}^d$ satisfies the quadratic bound
\begin{align}\label{eq:remainder-quadratic}
    \|R(e)\|_2 \le  c \|e\|_2^2
    \qquad\text{for all }\|e\|_2\le r_0.
\end{align}
In particular, $\lim_{\|e\|_2\to 0} \frac{\|R(e)\|_2}{\|e\|_2} = 0$, so the dynamics is locally governed by the linear map $J$, up to a perturbation of second order.

We start with a local contraction result in the vicinity of the fixed-point, proved in Appendix~\ref{sec:proofOfLemmaLocalContraction}.
\begin{lem}[Local contraction near the fixed point]
\label{lem:local-contraction}
Let $M:\mathbb{R}^d\to\mathbb{R}^d$ be $C^2$ with a fixed point $M(x^\star)=x^\star$, and let $J(x)$ denote its Jacobian. Assume that $J(x^\star)$ is symmetric with spectral radius $\rho(J(x^\star))<1$.
Then:
\begin{enumerate}
    \item \emph{(Local contraction.)} For every 
    $\rho(J(x^\star))<q<1$, there exists a neighborhood $\mathcal{U}$ of $x^\star$ such that
    \begin{align}\label{eq:local-contractivity}
        \|M(x)-M(x^\star)\|_2 \le q \|x-x^\star\|_2, \qquad \forall x\in\mathcal{U}.
    \end{align}
    Consequently, $M$ is a strict contraction on $\mathcal{U}$, and any iterates $\hat{x}^{(t+1)} = M(\hat{x}^{(t)})$ initialized in $\mathcal{U}$ satisfy
    \begin{align}
        \|\hat{x}^{(t)}-x^\star\|_2 \le q^t \|\hat{x}^{(0)}-x^\star\|_2,
        \qquad t=0,1,2,\dots.
    \end{align}
    \item \emph{(Asymptotic rate bound.)} The asymptotic linear convergence rate is upper bounded by the spectral radius of the Jacobian, namely
    \begin{align}\label{eq:asymptotic-rate}
        \limsup_{t\to\infty}\frac{\|\hat{x}^{(t+1)}-x^\star\|_2}{\|\hat{x}^{(t)}-x^\star\|_2} \le \rho(J(x^\star)).
    \end{align}
\end{enumerate}
\end{lem}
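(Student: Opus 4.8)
The plan is to reduce everything to the first-order expansion recorded in \eqref{eq:nonlinear-perturb-flow}--\eqref{eq:remainder-quadratic}, with the one crucial input being that symmetry pins the operator norm of the Jacobian to its spectral radius. Concretely, because $J(x^\star)$ is symmetric it is orthogonally diagonalizable, so its spectral and operator norms coincide: $\|J(x^\star)\|_2 = \rho(J(x^\star)) =: \rho_\star$. This identity is the only place where the symmetry hypothesis is genuinely used, and it is what keeps the argument elementary; for a general (non-normal) Jacobian one would have only $\rho_\star \le \|J(x^\star)\|_2$ and would instead have to pass to an adapted norm in which $J(x^\star)$ is nearly contractive.

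For part~(1), I would fix $q \in (\rho_\star, 1)$ and apply the Taylor expansion \eqref{eq:nonlinear-perturb-flow}--\eqref{eq:remainder-quadratic} at a general displacement $e = x - x^\star$ (rather than at an iterate), giving
\begin{align}
    \|M(x) - M(x^\star)\|_2 \le \|J(x^\star)\|_2\,\|x - x^\star\|_2 + \|R(x - x^\star)\|_2 \le \p{\rho_\star + c\,\|x - x^\star\|_2}\,\|x - x^\star\|_2,
\end{align}
valid whenever $\|x - x^\star\|_2 \le r_0$. I then choose $\delta \triangleq \min\{r_0,\,(q-\rho_\star)/c\}$ and set $\mathcal{U} = \mathcal{B}(x^\star,\delta)$, so that the prefactor is at most $\rho_\star + c\delta \le q$ on $\mathcal{U}$; this is exactly the contractivity estimate \eqref{eq:local-contractivity}. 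Since $q<1$, the same bound shows $\|M(x)-x^\star\|_2 \le q\|x-x^\star\|_2 < \delta$ for $x\in\mathcal{U}$, so $\mathcal{U}$ is forward-invariant, and a one-line induction propagates the estimate to $\|\hat{x}^{(t)}-x^\star\|_2 \le q^t\|\hat{x}^{(0)}-x^\star\|_2$.

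For part~(2), I would work directly with the error recursion \eqref{eq:nonlinear-perturb-flow}. In the non-degenerate case $e_t\neq 0$ for all $t$ (otherwise the iteration reaches $x^\star$ in finitely many steps and the claim is trivial), the same splitting gives
\begin{align}
    \frac{\|e_{t+1}\|_2}{\|e_t\|_2} \le \rho_\star + c\,\|e_t\|_2.
\end{align}
Part~(1) guarantees $\|e_t\|_2 \le q^t\|e_0\|_2 \to 0$, so the correction term vanishes, and taking $\limsup_{t\to\infty}$ recovers \eqref{eq:asymptotic-rate}.

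I expect the argument to be essentially routine once the spectral identity $\|J(x^\star)\|_2 = \rho_\star$ is in hand; the only genuinely load-bearing step — and the one I would state most carefully — is precisely this reduction, since it is what converts the abstract hypothesis $\rho_\star<1$ into a quantitative contraction factor through the quadratic remainder. The only secondary bookkeeping point is ensuring that $\delta$ is taken small enough to respect both the domain $r_0$ on which the remainder bound \eqref{eq:remainder-quadratic} holds and the margin $q-\rho_\star$ needed to absorb the nonlinear term.
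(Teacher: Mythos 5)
Your proposal is correct, but it takes a slightly different route from the paper's, so a comparison is worth recording. For part~(1), the paper does not invoke the quadratic-remainder expansion \eqref{eq:nonlinear-perturb-flow}--\eqref{eq:remainder-quadratic} at all: it uses continuity of the Jacobian map $x\mapsto J(x)$ (so only $C^1$ regularity is really needed there), chooses a ball on which $\|J(z)\|_2\le q$ by picking $\varepsilon = q-\|J(x^\star)\|_2$, and then bounds $M(x)-M(x^\star)=\int_0^1 J(x^\star+t(x-x^\star))(x-x^\star)\,dt$ by the uniform Jacobian bound along the segment. You instead freeze the Jacobian at $x^\star$ and absorb the discrepancy into the quadratic Taylor remainder, which leans on the $C^2$ hypothesis; your choice $\delta=\min\{r_0,(q-\rho_\star)/c\}$ plays the role of the paper's continuity neighborhood. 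Both arguments use the symmetry hypothesis in exactly the same way, namely to identify $\|J(x^\star)\|_2=\rho(J(x^\star))$, and your remark that a non-normal Jacobian would force an adapted norm is accurate. For part~(2) your proof is actually more self-contained than the paper's: the paper simply cites classical perturbation results (Ortega--Rheinboldt), whereas your ratio bound $\|e_{t+1}\|_2/\|e_t\|_2\le \rho_\star + c\|e_t\|_2$ combined with $\|e_t\|_2\to 0$ from part~(1) gives a complete two-line argument. Two bookkeeping points in your favor: you explicitly verify forward invariance of $\mathcal{U}$ (which the paper's induction uses but leaves implicit), and you correctly note that the remainder bound in part~(2) applies because $\|e_t\|_2\le\|e_0\|_2\le\delta\le r_0$. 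The only loose phrasing is calling the degenerate case $e_t=0$ ``trivial''; strictly the ratio is then undefined, so the limsup claim should be read as restricted to trajectories with $e_t\neq 0$ for all $t$, but this is cosmetic rather than a gap.
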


\subsection{Spectral structure and asymptotic rate}

We henceforth assume that the Jacobian $J \succeq 0$ is real, positive semidefinite (PSD), symmetric and admits the spectral decomposition
\begin{align}\label{eq:spectral-decomp-dyn}
    J = U\Lambda U^\top,
\end{align}
where $U\in\mathbb{R}^{d\times d}$ is orthogonal with eigenvectors $\{u_1,\dots,u_d\}$, and nonnegative eigenvalues $\{\lambda_1,\dots,\lambda_d \}$ satisfying
\begin{align}
    \Lambda=\mathrm{diag}(\lambda_1,\dots,\lambda_d),\qquad 1 > \lambda_1 \ge \lambda_2 \ge \dots \ge \lambda_d \ge 0.
\end{align}
In particular, the spectral radius is $\rho(J) = \lambda_1$. Throughout, we assume that $\lambda_1$ is simple, i.e., $\lambda_1 > \lambda_2$, for notational simplicity; the main results extend to the non-simple case by replacing the eigenvector $u_1$ with the dominant eigenspace $E_1 \triangleq \ker(J-\lambda_1 I)$ (and all projection/genericity conditions with the orthogonal projection onto $E_1$).

It is convenient to express the error in this eigenbasis. Writing $e_t = U \alpha_t$, or equivalently, $\alpha_t = U^\top e_t$, and substituting into~\eqref{eq:nonlinear-perturb-flow} gives
\begin{align}\label{eq:alpha-recursion}
    \alpha_{t+1} = \Lambda \alpha_t + U^\top R(U\alpha_t).
\end{align}
Thus, in the absence of the remainder $R$, the coordinates $\{\alpha_{t,i}\}_{i=1}^d$ would be decoupled according to the scalar recursions $\alpha_{t+1,i}= \lambda_i \alpha_{t,i}$, with no interaction between the different indices $i$. The term $U^\top R(U\alpha_t)$ represents a higher-order nonlinear perturbation that generally mixes the coordinates (i.e., introduces cross-terms between different modes). However, this mixing vanishes asymptotically: by~\eqref{eq:remainder-quadratic}, its magnitude is $O(\|\alpha_t\|_2^2)$, and hence it becomes negligible compared to the linear term as $\|e_t\|\to 0$.

\paragraph{Linearized Lyapunov exponent.}

We next formalize the fact that, in the infinitesimal regime $e_0 \to 0$, the asymptotic exponential convergence rate is governed by the spectral radius $\rho(J)$. This is naturally expressed in a Lyapunov-exponent type limit (see, e.g., ~\cite{katok1995introduction, ortega2000iterative}), in which we first consider infinitesimal initial errors and then let $t \to \infty$. In what follows, we say that a nonzero vector $v$ is \emph{generic} (with respect to the dominant eigenvalue) if $\langle v, u_1 \rangle \neq 0$, where $u_1$ is a unit eigenvector of $J$ associated with the eigenvalue $\lambda_1$.

For a given initialization $e_0\in\mathbb{R}^d$, let $e_t(e_0)$ denote the solution of~\eqref{eq:error-recursion} at time $t$.

\begin{thm}[Linearized Lyapunov exponent]\label{thm:Lyap-symmetric}
Let $J$ be symmetric with spectral decomposition as in~\eqref{eq:spectral-decomp-dyn}, and let $u_1$ be a unit eigenvector associated with $\lambda_1=\rho(J)$. Then
\begin{align}\label{eq:Lyap-flow}
    \lim_{t\to\infty}\,\lim_{\substack{e_0\to 0\\ |\langle u_1,e_0\rangle|\ge \theta_0\|e_0\|_2}} \frac{1}{t}\log\frac{\|e_t(e_0)\|_2}{\|e_0\|_2} = \log \rho(J),
\end{align}
for any fixed $\theta_0\in(0,1]$.
\end{thm}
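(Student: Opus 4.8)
The plan is to exploit the fact that, in the infinitesimal-initialization limit, the nonlinear remainder $R$ becomes negligible, so the iteration collapses to the linear map $e_t = J^t e_0$, whose exponential growth rate is read directly off the spectrum of $J$. I would split the argument into a \emph{linearization step} (controlling the inner limit $e_0\to 0$) and a \emph{spectral/asymptotic step} (controlling the outer limit $t\to\infty$), in that order, since the two limits cannot be interchanged.

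For the linearization, I would fix $t$ and prove by induction that $e_t(e_0) = J^t e_0 + \eta_t(e_0)$ with $\|\eta_t(e_0)\|_2 = O(\|e_0\|_2^2)$ as $e_0\to 0$, the implicit constant depending on $t$ but uniform over the direction $\hat e \triangleq e_0/\|e_0\|_2$. The base case is immediate; for the induction step I use the recursion \eqref{eq:nonlinear-perturb-flow}, the bound $\|e_t\|_2 = O(\|e_0\|_2)$ (which follows from $\|J\|_{\mathrm{op}}=\lambda_1<1$ together with the inductive hypothesis), and the quadratic remainder bound \eqref{eq:remainder-quadratic}, giving $\|\eta_{t+1}\|_2 \le \|J\|_{\mathrm{op}}\,\|\eta_t\|_2 + c\,\|e_t\|_2^2 = O(\|e_0\|_2^2)$. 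Dividing by $\|e_0\|_2$ and sending $e_0\to 0$ shows that the inner limit eliminates the nonlinearity and retains only the linear growth rate along $\hat e$:
\begin{align*}
    \lim_{e_0\to 0}\frac{1}{t}\log\frac{\|e_t(e_0)\|_2}{\|e_0\|_2} = \frac{1}{t}\log\|J^t \hat e\|_2 .
\end{align*}

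For the spectral step I would use the orthonormal decomposition \eqref{eq:spectral-decomp-dyn} to write $\|J^t\hat e\|_2^2 = \sum_{i=1}^d \lambda_i^{2t} c_i^2$, where $c_i=\langle u_i,\hat e\rangle$ and $\sum_i c_i^2 = 1$, and then factor out the dominant mode:
\begin{align*}
    \frac{1}{t}\log\|J^t\hat e\|_2 = \log\lambda_1 + \frac{1}{t}\log|c_1| + \frac{1}{2t}\log\Big(1 + \sum_{i\ge 2}(\lambda_i/\lambda_1)^{2t}(c_i/c_1)^2\Big).
\end{align*}
The genericity hypothesis $|c_1| = |\langle u_1,\hat e\rangle| \ge \theta_0 > 0$ forces $|c_1|\in[\theta_0,1]$, so $\tfrac{1}{t}\log|c_1|\to 0$; the strict ordering $\lambda_1>\lambda_2\ge\cdots\ge 0$ forces each $(\lambda_i/\lambda_1)^{2t}\to 0$, while $|c_i/c_1|\le \theta_0^{-1}$ is uniformly bounded, so the last term vanishes as well. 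Hence $\tfrac{1}{t}\log\|J^t\hat e\|_2 \to \log\lambda_1 = \log\rho(J)$, and the uniform bounds $\theta_0\le|c_1|\le 1$ and $|c_i/c_1|\le\theta_0^{-1}$ make this convergence uniform over all generic directions, so the outer limit is genuinely direction-independent and equals $\log\rho(J)$.

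The main obstacle I anticipate is the clean bookkeeping of the limit order and of uniformity. The linearization constant in $\|\eta_t\|_2 = O(\|e_0\|_2^2)$ grows with $t$, so one must send $e_0\to 0$ \emph{first} at each fixed $t$ and only afterward let $t\to\infty$; establishing that the linearization is uniform over the admissible directions $\hat e$ (satisfying $|\langle u_1,\hat e\rangle|\ge\theta_0$) is precisely what guarantees that the subsequent $t\to\infty$ analysis is direction-independent and that the double limit is well-defined as a single number. The non-simple case $\lambda_1=\lambda_2$ is handled by replacing $u_1$ with the top eigenspace $E_1$ and $|c_1|$ with the norm of the orthogonal projection of $\hat e$ onto $E_1$, leaving the argument otherwise unchanged. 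Finally, I note that $\rho(J)>0$ must hold for $\log\rho(J)$ to be defined, which is guaranteed in the regime of interest where $\rho(J)$ is close to $1$.
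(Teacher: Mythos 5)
Your proof is correct, and it reaches the conclusion through a different architecture than the paper. The paper does not prove Theorem~\ref{thm:Lyap-symmetric} directly: it deduces it from the quantitative finite-time statement of Theorem~\ref{thm:joint-t-e0}, whose proof rests on a remainder bound that is \emph{uniform in $t$} (Lemma~\ref{lem:uniform-remainder}: unroll $e_t=J^te_0+\sum_{k<t}J^{t-1-k}R(e_k)$ and sum the geometric series, exploiting $\lambda_1<1$ and non-expansiveness $\|e_k\|_2\le\|e_0\|_2$), followed by the two-sided bracket $\tfrac{\theta_0}{2}\lambda_1^t\|e_0\|_2\le\|e_t\|_2\le\tfrac{3}{2}\lambda_1^t\|e_0\|_2$ valid on a time window of length $\asymp\log(1/\|e_0\|_2)$ (Lemma~\ref{lem:linear-bracket}). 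You instead linearize at \emph{fixed} $t$ with a constant allowed to grow with $t$, which is exactly enough because the iterated limit sends $e_0\to0$ first; your spectral step then factors out the dominant mode and uses the cone condition $|c_1|\ge\theta_0$ to control $\tfrac1t\log|c_1|$ and the subdominant terms uniformly over admissible directions. Your route buys simplicity (no geometric-series bookkeeping, no joint $(t,\|e_0\|_2)$ window); the paper's route buys strength, since Theorem~\ref{thm:joint-t-e0} gives the quantitative window on which the observed rate is within $\eta$ of $\log\rho(J)$, which is what feeds the iteration-complexity bound in Corollary~\ref{cor:iter-complex-flow} — your fixed-$t$ linearization alone would not yield that. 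Two smaller remarks: your spectral step is heavier than necessary, since the bounds $\theta_0\lambda_1^t\le\|J^t\hat e\|_2\le\lambda_1^t$ already give $\bigl|\tfrac1t\log\|J^t\hat e\|_2-\log\lambda_1\bigr|\le|\log\theta_0|/t$ without expanding the lower modes; and in your expansion the strict gap $\lambda_1>\lambda_2$ is not actually needed, as boundedness of each $(\lambda_i/\lambda_1)^{2t}(c_i/c_1)^2$ by $\theta_0^{-2}$ already kills the last term at rate $O(1/t)$.

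One point of care rather than a gap: your displayed identity $\lim_{e_0\to 0}\tfrac1t\log\bigl(\|e_t(e_0)\|_2/\|e_0\|_2\bigr)=\tfrac1t\log\|J^t\hat e\|_2$ is an abuse of notation, since $\hat e$ varies with $e_0$ and the inner limit need not exist as a single number at fixed $t$ (an imprecision latent in the theorem statement itself). The correct reading, which your uniformity bounds do establish, is the sandwich $\limsup_{e_0\to 0}\bigl|\tfrac1t\log\bigl(\|e_t(e_0)\|_2/\|e_0\|_2\bigr)-\log\rho(J)\bigr|\le C(\theta_0,d)/t$ over the cone, after which $t\to\infty$ finishes the proof; this matches how the paper extracts \eqref{eq:Lyap-flow} from Theorem~\ref{thm:joint-t-e0}.
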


Theorem~\ref{thm:Lyap-symmetric} is a standard result in dynamical systems~\cite{katok1995introduction, ortega2000iterative}, and is a direct consequence of Theorem~\ref{thm:joint-t-e0}, as stated next. 

\subsection{Finite-time linearization and iteration complexity}

We now quantify how close the finite-time convergence rate is to the linearized rate, as a joint function of the time $t$ and the size of the initialization $\|e_0\|_2$. The next theorem (proved in Appendix~\ref{subsec:proofOfIterationComplexityMainThm}) shows that, for sufficiently small $\|e_0\|_2$, there is a substantial time window in which the observed exponential decay rate is arbitrarily close to $\log\rho(J)$.

\begin{thm}[Joint control of $t$ and $\|e_0\|$]\label{thm:joint-t-e0}
Let $J$ be symmetric with a spectral decomposition as in~\eqref{eq:spectral-decomp-dyn}, and let $u_1$ be a unit eigenvector associated with $\lambda_1=\rho(J)$. Assume that the initialization $e_0$ satisfies $|\langle e_0, u_1 \rangle| \ge \theta_0\|e_0\|_2$, i.e., $e_0$ has a nontrivial component along the slow eigenvector $u_1$ associated with $\lambda_1$. 
Fix any tolerance $\eta>0$. Then there exist  $r(\eta) > 0$, and a minimal time $T_{\min}(\eta)$, such that whenever $e_0$ satisfies $0<\|e_0\|_2\le r(\eta)$, and $t$ is an integer obeying
\begin{align}\label{eq:joint-time-window-updated}
    T_{\min}(\eta) \le t \le \frac{1}{|\log \rho(J)|} \log \p{\frac{\theta_0}{2K\|e_0\|_2}},
\end{align}
where $K \triangleq \frac{c}{1- \rho(J)}$, we have
\begin{align}\label{eq:joint-rate-eta}
    \left| \frac{1}{t}\log\frac{\|e_t\|_2}{\|e_0\|_2} - \log \rho(J) \right| \le \eta.
\end{align}
In particular, for sufficiently small $\|e_0\|_2$, the interval in~\eqref{eq:joint-time-window-updated} is nonempty and has length of order $\log(1/\|e_0\|_2)$, and on this interval the observed exponential decay rate of $\|e_t\|_2$ stays within $\eta$ of the linearized rate $\log\rho(J)$.
\end{thm}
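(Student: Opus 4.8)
The plan is to track the error $e_t$ in the eigenbasis of $J$ and isolate the dominant mode $u_1$, showing that the slowest eigenvalue $\lambda_1 = \rho(J)$ controls the decay rate to within $\eta$ precisely on the stated time window. First I would record the scalar recursion for the $u_1$-component. Since $J$ is symmetric with $Ju_1 = \lambda_1 u_1$, setting $a_t \triangleq \langle u_1, e_t\rangle$ and using the perturbed linearization \eqref{eq:nonlinear-perturb-flow} gives $a_{t+1} = \lambda_1 a_t + r_t$ with $r_t \triangleq \langle u_1, R(e_t)\rangle$, so that $|r_t| \le \|R(e_t)\|_2 \le c\|e_t\|_2^2$ by \eqref{eq:remainder-quadratic} whenever $\|e_t\|_2 \le r_0$.

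For the upper bound on the rate I would invoke the contraction lemma. Fix $q \in (\lambda_1, \min(1, \lambda_1 e^\eta))$, which is nonempty since $\eta > 0$. By Lemma~\ref{lem:local-contraction} there is a radius $\delta_q > 0$ such that if $\|e_0\|_2 \le \delta_q$ then $\|e_t\|_2 \le q^t\|e_0\|_2 \le \|e_0\|_2$ for all $t$; in particular $\tfrac1t\log(\|e_t\|_2/\|e_0\|_2) \le \log q \le \log\lambda_1 + \eta$ for every $t\ge 1$, so the upper deviation needs no restriction on $t$.

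The heart of the argument is the lower bound. Unrolling the scalar recursion yields $a_t = \lambda_1^t a_0 + \sum_{s=0}^{t-1}\lambda_1^{t-1-s}r_s$. Using the crude contraction bound $\|e_s\|_2 \le \|e_0\|_2$ from the previous step to control each $|r_s| \le c\|e_0\|_2^2$, the geometric sum is bounded by $c\|e_0\|_2^2/(1-\lambda_1) = K\|e_0\|_2^2$; this is where the constant $K = c/(1-\rho(J))$ enters, and, crucially, the accumulated nonlinear contribution does not grow with $t$. Combined with the genericity hypothesis $|a_0| = |\langle u_1, e_0\rangle| \ge \theta_0\|e_0\|_2$, this gives $|a_t| \ge \lambda_1^t\theta_0\|e_0\|_2 - K\|e_0\|_2^2$. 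The stated upper time limit $t \le |\log\rho(J)|^{-1}\log(\theta_0/(2K\|e_0\|_2))$ is exactly equivalent to $K\|e_0\|_2^2 \le \tfrac12\theta_0\lambda_1^t\|e_0\|_2$, i.e. to the quadratic remainder being at most half the surviving linear term; under it, $|a_t| \ge \tfrac12\theta_0\lambda_1^t\|e_0\|_2$, and since $\|e_t\|_2 \ge |a_t|$ we obtain $\tfrac1t\log(\|e_t\|_2/\|e_0\|_2) \ge \log\lambda_1 + \tfrac1t\log(\theta_0/2)$.

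To finish, I would set $T_{\min}(\eta) \triangleq \lceil \eta^{-1}\log(2/\theta_0)\rceil$, so that $t \ge T_{\min}(\eta)$ forces $\tfrac1t\log(\theta_0/2) \ge -\eta$ and completes the lower bound, and choose $r(\eta) > 0$ small enough that $r(\eta) \le \min(\delta_q, r_0)$ and that the upper endpoint of the window exceeds $T_{\min}(\eta)$, which is possible because that endpoint diverges like $\log(1/\|e_0\|_2)$ as $\|e_0\|_2 \to 0$; this simultaneously certifies that the window is nonempty and of length $\Theta(\log(1/\|e_0\|_2))$. The main obstacle is the lower bound: one must prevent the quadratic remainder from swamping the slowest linear mode, and the clean resolution is that the accumulated remainder stays $O(\|e_0\|_2^2)$ uniformly in $t$ while the surviving linear mode is $\Theta(\lambda_1^t\|e_0\|_2)$, so the observed rate remains accurate exactly until $\lambda_1^t$ shrinks to the scale of $\|e_0\|_2$, which is the content of the upper time bound. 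A minor point to handle carefully is the simplicity of $\lambda_1$ ensuring $u_1$ is well-defined; in the degenerate case one replaces $u_1$ by the orthogonal projection onto the top eigenspace $E_1 = \ker(J-\lambda_1 I)$, as already noted in the text.
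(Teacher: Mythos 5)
Your proof is correct, and its crucial lower bound is mechanically identical to the paper's: in both arguments the surviving linear mode $\lambda_1^t\theta_0\|e_0\|_2$ is compared against an accumulated nonlinear contribution that stays bounded by $K\|e_0\|_2^2=\frac{c}{1-\rho(J)}\|e_0\|_2^2$ uniformly in $t$, and the upper endpoint of the time window is exactly the condition that this remainder be at most half the linear term. Indeed, your scalar unrolling $a_t=\lambda_1^t a_0+\sum_{s=0}^{t-1}\lambda_1^{t-1-s}r_s$ is precisely the $u_1$-coordinate of the paper's vector identity $e_t=J^te_0+\widetilde R^{(t)}$ from Lemma~\ref{lem:uniform-remainder}, so the two lower bounds coincide. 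The one genuine difference is the upper bound on the rate: the paper proves a two-sided bracket $\frac{\theta_0}{2}\lambda^t\|e_0\|_2\le\|e_t\|_2\le\frac{3}{2}\lambda^t\|e_0\|_2$ (Lemma~\ref{lem:linear-bracket}) and therefore needs $T_{\min}(\eta)$ large enough to absorb both $|\log(\theta_0/2)|$ and $|\log(3/2)|$, whereas you invoke Lemma~\ref{lem:local-contraction} with a contraction factor $q\in\bigl(\rho(J),\min\{1,\rho(J)e^{\eta}\}\bigr)$, which yields the upper deviation bound for every $t\ge 1$ with no lower restriction on $t$; your $T_{\min}(\eta)$ then only needs to absorb $\log(2/\theta_0)$. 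Both routes are valid and deliver the same window up to constants: yours is marginally cleaner on the upper side, while the paper's bracket additionally produces matching upper and lower envelopes for $\|e_t\|_2$ itself, the lower of which (which you also obtain via $\|e_t\|_2\ge|a_t|$) is what gets reused in the proof of Corollary~\ref{cor:iter-complex-flow}. Your choice of $r(\eta)$ to guarantee a nonempty window of length $\Theta(\log(1/\|e_0\|_2))$, and your remark on replacing $u_1$ by the projection onto the top eigenspace in the degenerate case, both match the paper.
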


The lower bound $t \ge T_{\min}(\eta)$ in~\eqref{eq:joint-time-window-updated} reflects the time needed for the faster modes (those with eigenvalues $<\lambda_1$) to decay, so that the dynamics are effectively dominated by the slow eigendirection $u_1$.
The upper bound on $t$ ensures that the error remains small enough for the quadratic remainder $R(e_t)$ to stay negligible compared with the linear term $J e_t$: the linear part decays exponentially at rate $\rho(J)^t$, whereas the accumulated nonlinear remainder is uniformly of order $\|e_0\|_2^2$, so the validity of the linear approximation persists only up to times of order $\log(1/\|e_0\|_2)$.
Thus, the window in~\eqref{eq:joint-time-window-updated} is precisely the regime in which (i) the slowest eigenmode dominates the dynamics and (ii) higher-order nonlinear effects remain subdominant.

Finally, we extract a generic lower bound on the number of iterations required to reach a prescribed accuracy, showing that the tail iteration complexity is governed by the spectral radius $\rho(J)$. The following corollary is proved in Appendix~\ref{subsec:proofOfCorollaryIterationComplexity}.

\begin{corollary}[Tail iteration complexity in the linear regime]\label{cor:iter-complex-flow}
Suppose the assumptions of Theorem~\ref{thm:joint-t-e0} hold, and let $\varepsilon_{\mathrm{abs}}\in(0,\|e_0\|_2]$ be a target accuracy. Then, for $\|e_0\|_2$ sufficiently small so that we remain in the regime of Theorem~\ref{thm:joint-t-e0}, any integer $t$ with $\|e_t\|_2 \le \varepsilon_{\mathrm{abs}}$ must satisfy
\begin{align}\label{eq:iter-complex-rough}
    t \gtrsim \frac{1}{|\log\rho(J)|}\,\log \left(\frac{\|e_0\|_2}{\varepsilon_{\mathrm{abs}}}\right).
\end{align}
Thus, the tail iteration complexity is governed by the spectral radius $\rho(J)$.
\end{corollary}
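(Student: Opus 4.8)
The plan is to obtain the lower bound directly by inverting the two–sided rate estimate of Theorem~\ref{thm:joint-t-e0}; only its lower half is needed. I would fix a tolerance $\eta\in(0,|\log\rho(J)|)$ and take an initialization $e_0$ obeying the genericity bound $|\langle e_0,u_1\rangle|\ge\theta_0\|e_0\|_2$ and the smallness condition $\|e_0\|_2\le r(\eta)$ of that theorem. For every integer $t$ in the admissible window~\eqref{eq:joint-time-window-updated}, the estimate~\eqref{eq:joint-rate-eta} yields in particular $\tfrac1t\log(\|e_t\|_2/\|e_0\|_2)\ge\log\rho(J)-\eta$, and since $\log\rho(J)=-|\log\rho(J)|<0$ this rearranges to the pointwise floor $\|e_t\|_2\ge\|e_0\|_2\exp(-t(|\log\rho(J)|+\eta))$.

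Next I would impose the accuracy requirement. If $t$ is an index in the window with $\|e_t\|_2\le\varepsilon_{\mathrm{abs}}$, combining the two displays gives $\varepsilon_{\mathrm{abs}}\ge\|e_0\|_2\exp(-t(|\log\rho(J)|+\eta))$; taking logarithms and solving for $t$ produces $t\ge\frac{1}{|\log\rho(J)|+\eta}\log(\|e_0\|_2/\varepsilon_{\mathrm{abs}})$. Choosing $\eta$ to be a fixed fraction of $|\log\rho(J)|$—or simply absorbing it into the implied constant—collapses this into the claimed estimate~\eqref{eq:iter-complex-rough}.

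The step that requires genuine care, and which I expect to be the main obstacle, is verifying that the binding index actually lies inside the window $[T_{\min}(\eta),\,\cdot\,]$ of~\eqref{eq:joint-time-window-updated}, because the rate estimate is valid only there; this is what the corollary's hypothesis ``$\|e_0\|_2$ sufficiently small so that we remain in the regime of Theorem~\ref{thm:joint-t-e0}'' must be made to guarantee. It suffices to control the first hitting time $t^\star\triangleq\inf\{t:\|e_t\|_2\le\varepsilon_{\mathrm{abs}}\}$, since any larger $t$ with the same property automatically obeys a stronger bound. For the upper endpoint, note that the floor of the linear regime sits at $\|e_t\|_2\asymp\|e_0\|_2^2$, so every target $\varepsilon_{\mathrm{abs}}$ attainable while the linearization is accurate satisfies $\varepsilon_{\mathrm{abs}}\gtrsim\|e_0\|_2^2$, and for such $\varepsilon_{\mathrm{abs}}$ the derived lower bound does not exceed the upper endpoint of~\eqref{eq:joint-time-window-updated}. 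For the lower endpoint I would exploit the genericity hypothesis: projecting $e_{t+1}=Je_t+R(e_t)$ from~\eqref{eq:nonlinear-perturb-flow} onto the top eigenvector $u_1$ and using $Ju_1=\lambda_1u_1$ together with the quadratic remainder bound~\eqref{eq:remainder-quadratic} gives $\|e_t\|_2\ge|\langle e_t,u_1\rangle|\gtrsim\theta_0\lambda_1^{t}\|e_0\|_2$ once $\|e_0\|_2$ is small. Because $\lambda_1^{T_{\min}(\eta)}$ is a fixed positive constant, the error cannot reach $\varepsilon_{\mathrm{abs}}$ before time $T_{\min}(\eta)$ as soon as $\varepsilon_{\mathrm{abs}}/\|e_0\|_2$ is small; hence $t^\star>T_{\min}(\eta)$, so $t^\star$ lies in the window and the rate estimate legitimately applies at $t^\star$.

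Finally, I would note that the resulting complexity is genuinely dictated by $\rho(J)$: the prefactor $|\log\rho(J)|^{-1}$ diverges as $\rho(J)\uparrow1$, which is precisely the near-flat regime of Corollary~\ref{cor:block-spectral}, and substituting the low–SNR scaling $1-\rho(J)\asymp\beta^4/\sigma^4$ into~\eqref{eq:iter-complex-rough} reproduces the $\mathrm{SNR}^{-2}$ bottleneck of Corollary~\ref{cor:flat-iter-lower}.
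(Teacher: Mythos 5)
Your proposal is correct and follows essentially the same route as the paper: a geometric lower floor on $\|e_t\|_2$ inside the linear-regime window, inverted against the accuracy requirement $\|e_t\|_2 \le \varepsilon_{\mathrm{abs}}$. The paper reaches that floor slightly more directly---it invokes the pointwise bracket $\|e_t\|_2 \ge \tfrac{\theta_0}{2}\,\rho(J)^{t}\,\|e_0\|_2$ of Lemma~\ref{lem:linear-bracket}, which is valid for all $t$ up to the window's upper endpoint with no $T_{\min}(\eta)$ restriction, and simply assumes the first hitting time lies in the window---so your extra hitting-time verification (which in effect re-derives that bracket by projecting the recursion onto $u_1$) is sound but becomes unnecessary once one works with the bracket rather than the rate estimate~\eqref{eq:joint-rate-eta}.
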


\subsection{Proofs}

\subsubsection{Proof of Lemma~\ref{lem:local-contraction}} \label{sec:proofOfLemmaLocalContraction}
We prove first the local contraction result, and then the asymptotic rate expression.

\paragraph{Local contraction.}
As $M \in C^2$, $J$ is continuous at $x^\star$. Thus, for every $\varepsilon > 0$ there exists a neighborhood $\mathcal{U}$ of $x^\star$ such that
\begin{align}
    \|J(x) - J(x^\star)\|_2 < \varepsilon, \qquad \forall x \in \mathcal{U}.
\end{align}
Choose $q$ satisfying $\rho(J(x^\star)) < q < 1$.
Because $J(x^\star)$ is symmetric, its spectral radius equals its operator norm,
$\rho(J(x^\star)) = \|J(x^\star)\|_2$.
Set $\varepsilon \triangleq  q - \|J(x^\star)\|_2 > 0$.
Then, by continuity, there exists a neighborhood $\mathcal{U}_\varepsilon$ of $x^\star$ such that
\begin{align}
    \|J(x) - J(x^\star)\|_2 \le q - \|J(x^\star)\|_2, \qquad \forall x \in \mathcal{U}_\varepsilon.
\end{align}
Consequently, by triangle inequality, we have,
\begin{align}
    \|J(x)\|_2 \le \|J(x^\star)\|_2 + \|J(x) - J(x^\star)\|_2 \le \|J(x^\star)\|_2 + (q - \|J(x^\star)\|_2) = q < 1,
\end{align}
which ensures that the operator norm of the Jacobian remains strictly below one throughout $\mathcal{U}_\varepsilon$, that is,
\begin{align}\label{eq:local-norm-bound}
    \sup_{x\in\mathcal{U}_\varepsilon }\|J(x)\|_2 \leq  q < 1.
\end{align}

To establish the local contraction property, we apply the mean-value form of the first-order Taylor expansion for vector-valued functions.
By the fundamental theorem of calculus applied along the line segment connecting $x^\star$ and $x$, we have,
\begin{align}
    M(x)-M(x^\star) = \int_0^1 \frac{d}{dt}M(x^\star+t(x-x^\star)) dt = \int_0^1 J(x^\star+t(x-x^\star))(x-x^\star) dt.
\end{align}
Thus, for any $x\in\mathcal{U}_\varepsilon$, using the submultiplicative property of the operator norm,
\begin{align}
    \|M(x)-M(x^\star)\|_2 &\le \int_0^1 \|J(x^\star+t(x-x^\star))\|_2 \|x-x^\star\|_2 dt 
    \notag\\
    &\le \sup_{z\in\mathcal{U}_\varepsilon}\|J(z)\|_2 \|x-x^\star\|_2. \label{eq:jacobian-bound}
\end{align}
By construction of the neighborhood $\mathcal{U}_\varepsilon$ (see~\eqref{eq:local-norm-bound}), $\sup_{z\in\mathcal{U}_\varepsilon}\|J(z)\|_2\le q<1$.
Substituting into~\eqref{eq:jacobian-bound} gives
\begin{align}\label{eq:M-contraction}
    \|M(x)-M(x^\star)\|_2  \le q \|x-x^\star\|_2,
\end{align}
which shows that $M$ is a strict contraction on $\mathcal{U}_\varepsilon$ with contraction constant $q<1$.

Finally, the contractive bound~\eqref{eq:local-contractivity} implies geometric convergence by direct iteration. Indeed, if $\hat{x}^{(0)}\in\mathcal{U}$, then applying~\eqref{eq:local-contractivity} with $x=\hat{x}^{(t)}$ yields
\begin{align}
    \|\hat{x}^{(t+1)}-x^\star\|_2  = \|M(\hat{x}^{(t)})-M(x^\star)\|_2 \le q\,\|\hat{x}^{(t)}-x^\star\|_2,
    \qquad t=0,1,2,\dots.
\end{align}
Iterating this inequality gives
\begin{align}\label{eq:geo-conv}
    \|\hat{x}^{(t)}-x^\star\|_2 \le q^t \|\hat{x}^{(0)}-x^\star\|_2, \qquad t=0,1,2,\dots.
\end{align}
In particular, $\hat{x}^{(t)} \to x^\star$ at a linear (geometric) rate with contraction factor $q$.

\paragraph{Asymptotic rate bound.} This result follows from the classical perturbation analysis of nonlinear iterations with asymptotically vanishing remainder terms (see, e.g.,~\cite{ortega2000iterative}).

\subsubsection{Proof of Theorem~\ref{thm:joint-t-e0}} \label{subsec:proofOfIterationComplexityMainThm}
Recall that the error sequence satisfies the Taylor expansion~\eqref{eq:nonlinear-perturb-flow}
\begin{align}\label{eq:taylor-split-lemma}
    e_{t+1} = J e_t + R(e_t),
\end{align}
where the remainder term obeys the quadratic bound
\begin{align}\label{eq:R-quadratic-lemma}
    \|R(e)\|_2 \le c \,\|e\|_2^2 \quad\text{for all}\quad\|e\|_2 \le r_0,
\end{align}
for some constants $c>0$ and $r_0>0$.

Throughout, we denote by $\lambda \triangleq \rho(J)$ the spectral radius of $J$. Since $J$ is real symmetric, we have $\|J\|_2 = \lambda < 1$, so $J$ is a strict contraction.
In addition, we work in a regime where the nonlinear dynamics remain in the local neighborhood of $x^\star$ and are non-expansive (by Lemma~\ref{lem:local-contraction}), namely
\begin{align}\label{eq:contraction-assumption}
    \|e_t\|_2 \le \|e_0\|_2 \le r_0
    \qquad\text{for all }t\ge 0.
\end{align}

We start by stating and proving two auxiliary lemmas that will be used to prove the theorem.

\begin{lem}\label{lem:uniform-remainder}
For every $t\ge 0$ there exists a vector $\widetilde R^{(t)}$ such that
\begin{align}
    e_t &= J^t e_0 + \widetilde R^{(t)}, \label{eq:linear-plus-rem-lemma} 
\end{align}
where $\|\widetilde R^{(t)}\|_2$ satisfies,
\begin{align}
    \|\widetilde R^{(t)}\|_2 &\le \frac{c}{1-\lambda}\,\|e_0\|_2^2.
    \label{eq:Rtilde-min-bound}
\end{align}

\end{lem}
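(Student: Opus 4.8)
The plan is to obtain $\widetilde R^{(t)}$ explicitly by unrolling the one-step recursion~\eqref{eq:taylor-split-lemma} and then bounding the resulting sum using the fact that $J$ is a strict contraction together with the quadratic remainder bound~\eqref{eq:R-quadratic-lemma}. Concretely, I would define $\widetilde R^{(t)}\triangleq e_t - J^t e_0$ and show by induction (or direct telescoping) that
\begin{align*}
    e_t = J^t e_0 + \sum_{s=0}^{t-1} J^{t-1-s}\,R(e_s),
\end{align*}
so that $\widetilde R^{(t)} = \sum_{s=0}^{t-1} J^{t-1-s} R(e_s)$. This identity follows by substituting $e_{s+1}=J e_s + R(e_s)$ repeatedly, each application peeling off one factor of $J$ and one remainder term.

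Next I would bound each summand. Since $J$ is real symmetric with spectral radius $\lambda=\rho(J)<1$, its operator norm equals its spectral radius, hence $\norm{J^{k}}_2 \le \lambda^{k}$ for every $k\ge 0$. For the remainder factors, I invoke the standing non-expansiveness assumption~\eqref{eq:contraction-assumption}, which guarantees $\norm{e_s}_2 \le \norm{e_0}_2 \le r_0$ for all $s$; this keeps every iterate inside the region where~\eqref{eq:R-quadratic-lemma} applies, giving the uniform bound $\norm{R(e_s)}_2 \le c\,\norm{e_s}_2^2 \le c\,\norm{e_0}_2^2$. Combining these two estimates with submultiplicativity of the operator norm yields
\begin{align*}
    \norm{\widetilde R^{(t)}}_2 \le \sum_{s=0}^{t-1} \lambda^{\,t-1-s}\, c\,\norm{e_0}_2^2
    = c\,\norm{e_0}_2^2 \sum_{j=0}^{t-1}\lambda^{\,j}.
\end{align*}

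Finally I would close the argument by summing the geometric series: $\sum_{j=0}^{t-1}\lambda^{j} \le \sum_{j=0}^{\infty}\lambda^{j} = (1-\lambda)^{-1}$, which delivers the claimed uniform bound $\norm{\widetilde R^{(t)}}_2 \le \frac{c}{1-\lambda}\,\norm{e_0}_2^2$ independently of $t$. The proof is essentially routine, so there is no serious obstacle; the only point requiring care is ensuring that the non-expansiveness hypothesis~\eqref{eq:contraction-assumption} is genuinely available so that the quadratic bound can be applied at \emph{every} iterate $e_s$ and not merely at $e_0$. This is precisely what allows the per-step quadratic errors to accumulate into a geometric series with a $t$-independent limit, which is the crux of the $O(\norm{e_0}_2^2)$ control.
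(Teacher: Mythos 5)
Your proposal is correct and follows essentially the same route as the paper's proof: unroll the recursion to get $\widetilde R^{(t)}=\sum_{s=0}^{t-1}J^{t-1-s}R(e_s)$, bound each summand via $\|J^k\|_2\le\lambda^k$ (using symmetry of $J$) together with the uniform quadratic bound $\|R(e_s)\|_2\le c\|e_0\|_2^2$ from the non-expansiveness hypothesis~\eqref{eq:contraction-assumption}, and sum the geometric series to obtain the $t$-independent constant $c/(1-\lambda)$. No gaps; this matches the paper's argument step for step.
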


\begin{proof}[Proof of Lemma~\ref{lem:uniform-remainder}]
Unrolling the recursion \eqref{eq:taylor-split-lemma} yields for $t\ge 1$,
\begin{align}\label{eq:VOC-lemma}
    e_t = J^t e_0 + \sum_{k=0}^{t-1} J^{t-1-k} R(e_k),
\end{align}
while for $t=0$ we have $e_0 = J^0 e_0$.
Define the right-hand-side of~\eqref{eq:VOC-lemma} by
\begin{align}\label{eq:Rtilde-def}
    \widetilde R^{(t)} \triangleq \sum_{k=0}^{t-1} J^{t-1-k} R(e_k) \quad (t\ge 1), \qquad \widetilde R^{(0)} \triangleq 0.
\end{align}
Then \eqref{eq:linear-plus-rem-lemma} holds by construction.

Next we bound $\|\widetilde R^{(t)}\|_2$.  
Using \eqref{eq:contraction-assumption} and \eqref{eq:R-quadratic-lemma}, we have $\|R(e_k)\|_2 \le c \|e_k\|_2^2 \le c \|e_0\|_2^2$, for all $k\ge 0$. Combining this with $\|J^{t-1-k}\|_2 \le \|J\|_2^{\,t-1-k} \le \lambda^{t-1-k}$ gives
\begin{align}
    \|\widetilde R^{(t)}\|_2 \, \le \, \sum_{k=0}^{t-1} \|J^{t-1-k}\|_2 \,\|R(e_k)\|_2 \,  \, \le \, c \|e_0\|_2^2 \sum_{k=0}^{t-1} \lambda^{t-1-k}.
    \label{eq:Rtilde-sum}
\end{align}
The sum in \eqref{eq:Rtilde-sum} is a geometric series:
\begin{align}
    \sum_{k=0}^{t-1} \lambda^{t-1-k} = \sum_{j=0}^{t-1} \lambda^j \le \frac{1}{1-\lambda}.
\end{align}
Substituting into \eqref{eq:Rtilde-sum} gives
\begin{align}\label{eq:Rtilde-uniform}
    \|\widetilde R^{(t)}\|_2 \le \frac{c}{1-\lambda}\,\|e_0\|_2^2 \qquad\text{for all } t\ge 0.
\end{align}
Combining \eqref{eq:Rtilde-def} and \eqref{eq:Rtilde-uniform} yields the stated bound \eqref{eq:Rtilde-min-bound}, completing the proof.
\end{proof}

\begin{lem}\label{lem:linear-bracket}
Recall that by~\eqref{eq:Rtilde-min-bound}, there exists a constant $0 < K = \frac{c}{1-\lambda}$ such that for every $t\ge 0$,
\begin{align}\label{eq:lem-uniform-Rtilde}
    e_t = J^t e_0 + \widetilde R^{(t)}, 
    \qquad \|\widetilde R^{(t)}\|_2 \le K \|e_0\|_2^2.
\end{align}
Assume the initialization $e_0$ has nontrivial projection onto the dominant eigenspace, in the sense that $0 < \|e_0\|_2$, and $|\langle e_0, u_1\rangle| \ge \theta_0 \,\|e_0\|_2$, for some fixed $\theta_0\in(0,1]$.
Then for every 
\begin{align}\label{eq:lem-linear-regime-cond}
    0 < t \le \frac{1}{|\log\lambda|} \log \p{\frac{\theta_0}{2K\|e_0\|_2}}
\end{align}
we have,
\begin{align}\label{eq:lem-et-bracket}
    \frac{\theta_0}{2}\,\lambda^t \|e_0\|_2 \le \|e_t\|_2 \le \frac{3}{2}\,\lambda^t \|e_0\|_2.
\end{align}
\end{lem}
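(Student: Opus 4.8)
The plan is to reduce both inequalities in~\eqref{eq:lem-et-bracket} to a single algebraic consequence of the time-window hypothesis~\eqref{eq:lem-linear-regime-cond}, using the uniform remainder decomposition~\eqref{eq:lem-uniform-Rtilde} supplied by Lemma~\ref{lem:uniform-remainder}. First I would rewrite the window condition in exponential form. Since $\lambda=\rho(J)<1$ gives $|\log\lambda|=-\log\lambda$, the bound $t\,|\log\lambda|\le \log\!\p{\theta_0/(2K\|e_0\|_2)}$ is equivalent, after exponentiating, to $\lambda^{-t}\le \theta_0/(2K\|e_0\|_2)$, i.e.
\[
    K\,\|e_0\|_2 \;\le\; \frac{\theta_0}{2}\,\lambda^{t}.
\]
This displayed inequality is the single driver of both estimates.

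For the upper bound I would combine the decomposition~\eqref{eq:lem-uniform-Rtilde} with the triangle inequality and the operator-norm identity $\|J\|_2=\lambda$ (valid since $J$ is symmetric PSD), giving $\|e_t\|_2 \le \lambda^{t}\|e_0\|_2 + K\|e_0\|_2^2$. The displayed inequality above, together with $\theta_0\le 1$, controls the quadratic remainder via $K\|e_0\|_2^2 \le \tfrac{\theta_0}{2}\lambda^{t}\|e_0\|_2 \le \tfrac12\lambda^{t}\|e_0\|_2$, which yields $\|e_t\|_2\le \tfrac32\lambda^{t}\|e_0\|_2$ as claimed.

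For the lower bound the extra ingredient is a spectral lower estimate on the linear part, and this is the only place the genericity hypothesis $|\langle e_0,u_1\rangle|\ge \theta_0\|e_0\|_2$ is used. Expanding $e_0$ in the orthonormal eigenbasis $\{u_i\}$ from~\eqref{eq:spectral-decomp-dyn} gives $\|J^{t}e_0\|_2^2=\sum_i \lambda_i^{2t}\langle e_0,u_i\rangle^2 \ge \lambda^{2t}\langle e_0,u_1\rangle^2$, so $\|J^{t}e_0\|_2 \ge \lambda^{t}|\langle e_0,u_1\rangle| \ge \theta_0\,\lambda^{t}\|e_0\|_2$. The reverse triangle inequality applied to~\eqref{eq:lem-uniform-Rtilde} then gives $\|e_t\|_2 \ge \theta_0\,\lambda^{t}\|e_0\|_2 - K\|e_0\|_2^2$, and subtracting the remainder via $K\|e_0\|_2^2 \le \tfrac{\theta_0}{2}\lambda^{t}\|e_0\|_2$ leaves $\|e_t\|_2 \ge \tfrac{\theta_0}{2}\lambda^{t}\|e_0\|_2$.

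No step constitutes a genuine obstacle; the work is essentially bookkeeping. The two delicate points are translating~\eqref{eq:lem-linear-regime-cond} into the correct algebraic threshold and using $\theta_0\le 1$ to pass from the $\theta_0/2$ bound to the $1/2$ bound needed for the upper estimate. The lower bound is the more sensitive of the two, since it is precisely where the dominant-eigenvector projection guarantees that the linear term $\|J^{t}e_0\|_2$ does not fall below the accumulated quadratic remainder inside the admissible time window.
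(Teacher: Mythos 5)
Your proposal is correct and follows essentially the same route as the paper's proof: both translate the time-window hypothesis~\eqref{eq:lem-linear-regime-cond} into the key algebraic inequality $K\|e_0\|_2^2 \le \tfrac{\theta_0}{2}\lambda^{t}\|e_0\|_2$, use the eigenbasis expansion and the genericity condition to lower-bound $\|J^{t}e_0\|_2$ by $\theta_0\lambda^{t}\|e_0\|_2$, bound it above by $\lambda^{t}\|e_0\|_2$ via the operator norm, and finish with the triangle and reverse-triangle inequalities applied to the decomposition~\eqref{eq:lem-uniform-Rtilde}. The only cosmetic difference is that the paper keeps the upper bound as $(1+\theta_0/2)\lambda^{t}\|e_0\|_2$ before invoking $\theta_0\le 1$, whereas you absorb $\theta_0\le 1$ one step earlier; the content is identical.
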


\begin{proof}[Proof of Lemma~\ref{lem:linear-bracket}]
Recall the notation for the largest eigenvalue $\lambda = \lambda_1 = \rho(J)$. Write $e_0$ in the eigenbasis of $J$:
\begin{align}
    e_0 = \sum_{i=1}^d a_i u_i,
    \qquad a_i = \langle e_0, u_i\rangle.
\end{align}
Then $J^t e_0 = \sum_{i=1}^d \lambda_i^t a_i u_i$, and hence
\begin{align}\label{eq:lem-Jt-lower}
    \|J^t e_0\|_2^2 = \sum_{i=1}^d \lambda_i^{2t} a_i^2 \ge \lambda_1^{2t} a_1^2 = \lambda^{2t} |\langle e_0, u_1\rangle|^2.
\end{align}
Using the genericity condition $|\langle e_0, u_1\rangle| \ge \theta_0 \,\|e_0\|_2$, we obtain
\begin{align}\label{eq:lem-Jt-lower-final}
    \|J^t e_0\|_2 \ge \lambda_1^t |\langle e_0, u_1\rangle| \ge \theta_0 \lambda^t \|e_0\|_2.
\end{align}
On the other hand, since $J$ is symmetric with eigenvalues bounded by $\lambda_1$, we have
\begin{align}\label{eq:lem-Jt-upper}
    \|J^t e_0\|_2 \le \|J^t\|_2 \|e_0\|_2  \le \lambda_1^t \|e_0\|_2 = \lambda^t \|e_0\|_2.
\end{align}

By the uniform remainder representation~\eqref{eq:lem-uniform-Rtilde},
\begin{align}
    e_t = J^t e_0 + \widetilde R^{(t)},
    \qquad \|\widetilde R^{(t)}\|_2 \le K\|e_0\|_2^2.
\end{align}
Using the inverse triangle inequality together with \eqref{eq:lem-Jt-lower-final}, \eqref{eq:lem-Jt-upper}, and the condition in~\eqref{eq:lem-linear-regime-cond}, we obtain
\begin{align}
    \|e_t\|_2
    &\ge \|J^t e_0\|_2 - \|\widetilde R^{(t)}\|_2
      \notag\\
    &\ge \theta_0 \lambda_1^t \|e_0\|_2 - K\|e_0\|_2^2
      \notag\\
    &\ge \theta_0 \lambda_1^t \|e_0\|_2 - \frac{\theta_0}{2}\lambda_1^t\|e_0\|_2
      \notag\\
    &= \frac{\theta_0}{2}\,\lambda_1^t\|e_0\|_2 = \frac{\theta_0}{2}\,\lambda^t\|e_0\|_2,
    \label{eq:lem-et-lower}
\end{align}
and similarly
\begin{align}
    \|e_t\|_2
    &\le \|J^t e_0\|_2 + \|\widetilde R^{(t)}\|_2
      \notag\\
    &\le \lambda_1^t \|e_0\|_2 + K\|e_0\|_2^2
      \notag\\
    &\le \lambda_1^t \|e_0\|_2 + \frac{\theta_0}{2}\lambda_1^t\|e_0\|_2
      \notag\\
    &=  \p{1 + \frac{\theta_0}{2}}\lambda_1^t\|e_0\|_2
    \le \frac{3}{2}\,\lambda^t\|e_0\|_2,
    \label{eq:lem-et-upper}
\end{align}
where in the last step we used $\theta_0\le 1$.  
Combining \eqref{eq:lem-et-lower} and \eqref{eq:lem-et-upper} yields
\eqref{eq:lem-et-bracket}, completing the proof.
\end{proof}

\begin{proof} [Proof of Theorem~\ref{thm:joint-t-e0}]

By Lemma~\ref{lem:linear-bracket}, for every 
\begin{align}\label{eq:thm-t-upper}
    t \le \frac{1}{|\log\lambda|} \log \p{\frac{\theta_0}{2K\|e_0\|_2}}.
\end{align}
we have the two-sided estimate
\begin{align}\label{eq:thm-et-bracket}
    \frac{\theta_0}{2}\,\lambda^t\|e_0\|_2 \le \|e_t\|_2 \le \frac{3}{2}\,\lambda^t\|e_0\|_2.
\end{align}

\paragraph{Step 1: Control the exponential rate.}
Divide \eqref{eq:thm-et-bracket} by $\|e_0\|_2$ and take logarithms:
\begin{align}
    \log \p{\frac{\theta_0}{2}} + t\log\lambda \le \log\frac{\|e_t\|_2}{\|e_0\|_2} \le \log \p{\frac{3}{2}} + t\log\lambda.
\end{align}
Dividing by $t>0$ gives
\begin{align}\label{eq:thm-rate-bracket}
    \log\lambda + \frac{1}{t}\log \p{\frac{\theta_0}{2}} \le \frac{1}{t}\log\frac{\|e_t\|_2}{\|e_0\|_2} \le \log\lambda + \frac{1}{t}\log \p{\frac{3}{2}}.
\end{align}
Hence
\begin{align}\label{eq:thm-rate-error}
    \left|\frac{1}{t}\log\frac{\|e_t\|_2}{\|e_0\|_2} - \log\lambda \right| \le \frac{1}{t} \max\Bigl\{ \bigl|\log(\theta_0/2)\bigr|, \bigl|\log(3/2)\bigr| \Bigr\}.
\end{align}
Fix $\eta>0$. Define
\begin{align}\label{eq:thm-Tmin-def}
    T_{\min}(\eta) \triangleq \max\Bigl\{1, \frac{1}{\eta} \max\bigl\{ \bigl|\log(\theta_0/2)\bigr|, \bigl|\log(3/2)\bigr| \bigr\} \Bigr\}.
\end{align}
Then, for all integers $t\ge T_{\min}(\eta)$, the right-hand side of \eqref{eq:thm-rate-error} is at most $\eta$, and we obtain \eqref{eq:joint-time-window-updated} assuming $t$ satisfies the linear-regime condition \eqref{eq:thm-t-upper}.

\paragraph{Step 2: Existence of a nonempty time window and choice of $r(\eta)$.}
We now choose $r(\eta)$ so that, whenever $\|e_0\|_2\le r(\eta)$, the interval
\begin{align}\label{eq:thm-window}
    \ppp{t\in\mathbb{N}: T_{\min}(\eta) \le t \le \frac{1}{|\log\lambda|} \log(\frac{\theta_0}{2K\|e_0\|_2}) }
\end{align}
is nonempty.
This is equivalent to requiring
\begin{align}
    T_{\min}(\eta) \le \frac{1}{|\log\lambda|} \log \p{\frac{\theta_0}{2K\|e_0\|_2}},
\end{align}
or, rearranging,
\begin{align}
    \|e_0\|_2 \le \frac{\theta_0}{2K}\, \exp\, \p{-|\log\lambda|\,T_{\min}(\eta)}.
\end{align}
Define
\begin{align}\label{eq:thm-r-eta-def}
    r(\eta) \triangleq \min\Bigl\{r_0,\frac{\theta_0}{2K}\, \exp\, \p{-|\log\lambda|\,T_{\min}(\eta)}\Bigr\}.
\end{align}
Then, whenever $0<\|e_0\|_2\le r(\eta)$, the interval described in \eqref{eq:joint-time-window-updated} is nonempty, and for any integer $t$ in this interval, both the linear-regime condition \eqref{eq:thm-t-upper} and the lower bound $t\ge T_{\min}(\eta)$ hold, completing the proof.
\end{proof}

\subsubsection{Proof of Corollary~\ref{cor:iter-complex-flow}} \label{subsec:proofOfCorollaryIterationComplexity}
Fix $\eta>0$, and assume we are in the regime of Theorem~\ref{thm:joint-t-e0}, so that for all sufficiently small $\|e_0\|_2$ there is a nonempty time window \eqref{eq:joint-time-window-updated} on which the linear approximation is valid. In particular, for any integer $t$ in this window, Lemma~\ref{lem:linear-bracket} and Theorem~\ref{thm:joint-t-e0} yields the lower bound
\begin{align}\label{eq:cor-lower-bracket}
    \|e_t\|_2 \ge \frac{\theta_0}{2}\,\lambda^t \|e_0\|_2,
\end{align}
where $\lambda = \rho(J)$.

Let $\varepsilon_{\mathrm{abs}}\in(0,\|e_0\|_2]$ be the target accuracy, and assume $\|e_0\|_2$ and $\varepsilon_{\mathrm{abs}}$ are small enough that the first time $t$ with $\|e_t\|_2 \le \varepsilon_{\mathrm{abs}}$ lies inside the window of Theorem~\ref{thm:joint-t-e0}.
For such a $t$ we then have
\begin{align}
    \frac{\theta_0}{2}\,\lambda^t \|e_0\|_2 \le \|e_t\|_2 \le \varepsilon_{\mathrm{abs}} .
\end{align}
Rearranging gives
\begin{align}
    \lambda^t \le \frac{2}{\theta_0} \, \frac{\varepsilon_{\mathrm{abs}}}{\|e_0\|_2},
\end{align}
and taking logarithms (using $\lambda\in(0,1)$ so that $\log\lambda<0$) yields
\begin{align}
    t \ge \frac{1}{|\log\lambda|} \p{\log\frac{\|e_0\|_2}{\varepsilon_{\mathrm{abs}}} - \log\frac{2}{\theta_0} }.
\end{align}
Thus, up to an additive constant depending only on $\theta_0$, we have
\begin{align}
    t  \gtrsim \frac{1}{|\log\lambda|}\, \log \left(\frac{\|e_0\|_2}{\varepsilon_{\mathrm{abs}}}\right),
\end{align}
where the notation $\gtrsim$ denotes universal constants independent of $\|e_0\|_2$ and $\varepsilon_{\mathrm{abs}}$.
Recalling that $\lambda = \rho(J)$ gives \eqref{eq:iter-complex-rough}, and shows that the tail iteration complexity is governed by the spectral radius $\rho(J)$.

\section{Local linear convergence of the population EM operator} \label{sec:proofOfEmSpectralRadius}

In this appendix, we show that the population EM map $M$ is $C^\infty$ and derive an explicit covariance-form expression for its Jacobian $J(x)$ (Lemma~\ref{lem:Jacobian-M}), which in particular implies that $J(x^\star)$ is symmetric, positive semidefinite, and its spectrum is contained in $[0,1]$. Finally, we combine these results to conclude the proof of Theorem~\ref{thm:EM-MRA-spectral-radius}.

Throughout, $S\sim \mathrm{Unif}\{0,\dots,d-1\}$ denotes the latent shift distributed uniformly, and the random vector $Y$ has the following distribution,
\begin{align}
    Y=\mathcal{T}_S x^\star + \xi, 
    \qquad \xi\sim\mathcal{N}(0,\sigma^2 I_d),
\end{align}
where $x^\star$ is the ground-truth signal. Recall the definition of $\gamma_\ell$ in \eqref{eqn:softmaxGamma},
\begin{align}
    \gamma_\ell(x;Y) =\frac{\exp\p{\langle Y,\mathcal{T}_\ell x\rangle/\sigma^2}} {\sum_{r=0}^{d-1}\exp\p{\langle Y,\mathcal{T}_r  x\rangle/\sigma^2}}, \label{eqn:app-B2}
\end{align}
and of the population EM map in~\eqref{eq:Phi-pop-1d}
\begin{align}
    M(x)=\mathbb{E} \left[\sum_{\ell=0}^{d-1}\gamma_\ell(x;Y) z_\ell\right], \label{eq:M-def-proof}
\end{align}
where we define
\begin{align}
    z_\ell=\mathcal{T}_\ell^{-1} Y,
    \qquad
    \bar{z}(x;Y)=\sum_{\ell=0}^{d-1}\gamma_\ell(x;Y) z_\ell. \label{eqn:app-B4}
\end{align}

\subsection{Shift-equivariance of the population EM operator}
\begin{lem}[Shift-equivariance of the population EM operator]
\label{lem:shift-equivariance-M}
Let the population EM operator $M:\mathbb{R}^d \to \mathbb{R}^d$ be defined by~\eqref{eq:M-def-proof}. Then:
\begin{align}
    \label{eq:shift-equivariance}
    M(\mathcal{T}_s x) = \mathcal{T}_s   M(x),
    \qquad \forall s \in\{0,\dots,d-1\}.
\end{align}
\end{lem}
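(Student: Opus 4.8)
The plan is to prove the identity \emph{pathwise}: I would show that for each fixed realization of $Y$ the inner map $x\mapsto\sum_{\ell=0}^{d-1}\gamma_\ell(x;Y)\,\mathcal{T}_\ell^{-1}Y$ is itself $\mathcal{T}_s$-equivariant, and then pass to the expectation by linearity. This is cleaner than arguing at the level of the distribution of $Y$, and it shows that equivariance does not rely on any shift-invariance of the data-generating law: it is a purely algebraic consequence of the group structure of $\{\mathcal{T}_\ell\}$.

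First I would record the cyclic-shift group law, $\mathcal{T}_\ell\mathcal{T}_s=\mathcal{T}_{(\ell+s)\bmod d}$ and $\mathcal{T}_\ell^{-1}=\mathcal{T}_{(-\ell)\bmod d}$, both of which follow directly from the coordinate definition $[\mathcal{T}_\ell z]_j=z_{(j-\ell)\bmod d}$. The key step is then the responsibility identity
\[
\gamma_\ell(\mathcal{T}_s x;Y)=\gamma_{(\ell+s)\bmod d}(x;Y),
\]
obtained from the definition in~\eqref{eqn:app-B2}: in the numerator $\langle Y,\mathcal{T}_\ell\mathcal{T}_s x\rangle=\langle Y,\mathcal{T}_{\ell+s}x\rangle$, while the denominator is unchanged because summing $\exp(\langle Y,\mathcal{T}_{r+s}x\rangle/\sigma^2)$ over $r\in\{0,\dots,d-1\}$ equals the sum over $r'=(r+s)\bmod d$; that is, the softmax normalizer is invariant under the cyclic reindexing.

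Next I would substitute this identity into the integrand and reindex the outer sum by $m=(\ell+s)\bmod d$, which gives $\mathcal{T}_\ell^{-1}=\mathcal{T}_{(s-m)\bmod d}=\mathcal{T}_s\mathcal{T}_m^{-1}$, so that
\[
\sum_{\ell}\gamma_\ell(\mathcal{T}_s x;Y)\,\mathcal{T}_\ell^{-1}Y
=\sum_{m}\gamma_m(x;Y)\,\mathcal{T}_s\mathcal{T}_m^{-1}Y
=\mathcal{T}_s\sum_{m}\gamma_m(x;Y)\,\mathcal{T}_m^{-1}Y,
\]
where the last equality uses the linearity of the permutation matrix $\mathcal{T}_s$. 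Taking expectations of both sides and pulling the deterministic linear operator $\mathcal{T}_s$ outside $\mathbb{E}[\cdot]$ yields $M(\mathcal{T}_s x)=\mathcal{T}_s M(x)$, which is~\eqref{eq:shift-equivariance}.

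There is essentially no analytic obstacle here; the only care needed is bookkeeping of indices modulo $d$ and checking that the two reindexings (of the softmax normalizer and of the outer sum) are genuine bijections of $\{0,\dots,d-1\}$, which they are since translation is a bijection of $\mathbb{Z}_d$. To justify interchanging $\mathcal{T}_s$ with the expectation I would note that the integrand is integrable: $\sum_\ell\gamma_\ell(x;Y)=1$ with $\gamma_\ell\ge 0$, and $\|\mathcal{T}_\ell^{-1}Y\|=\|Y\|$ since shifts are orthogonal, so the summand is dominated by $\|Y\|\in L^1$, and $\mathcal{T}_s$ is a bounded linear map.
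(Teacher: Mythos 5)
Your proof is correct, and it takes a genuinely different route from the paper's. The paper argues \emph{distributionally}: it first shows $\mathcal{T}_s Y \stackrel{\mathcal{D}}{=} Y$ (which uses the uniform prior on the latent shift and the isotropy of the Gaussian noise), and then combines the change of variables $Y \mapsto \mathcal{T}_s Y$ inside the expectation with the joint-equivariance identities $\gamma_\ell(\mathcal{T}_s x;\mathcal{T}_s Y)=\gamma_\ell(x;Y)$ and $z_\ell(\mathcal{T}_s Y)=\mathcal{T}_s z_\ell(Y)$. You instead prove the identity \emph{pathwise}, for every fixed realization of $Y$, via the index-shift identity $\gamma_\ell(\mathcal{T}_s x;Y)=\gamma_{(\ell+s)\bmod d}(x;Y)$ and a reindexing of the outer sum; the expectation then enters only through linearity. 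Your argument is strictly more informative in two ways: it shows the result holds for \emph{any} distribution of $Y$ (so the paper's remark that the uniform prior and noise isotropy are ``used to guarantee'' the equivariance reflects only that method of proof, not a genuine hypothesis of the lemma), and, since the identity holds per sample, it immediately yields equivariance of the empirical operator $M_n(\cdot;\mathcal{Y})$ for every fixed dataset, not just of the population map $M$. What the paper's route buys is mainly expository: it foregrounds the distributional invariance $\mathcal{T}_s Y\stackrel{\mathcal{D}}{=}Y$, a fact the paper reuses when discussing the orbit structure of fixed points, but that invariance is not needed for this lemma. Your integrability check (domination by $\|Y\|\in L^1$, using $\sum_\ell\gamma_\ell=1$ and orthogonality of the shifts) is exactly the right justification for pulling the bounded linear map $\mathcal{T}_s$ through the expectation.
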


\begin{proof}[Proof of Lemma~\ref{lem:shift-equivariance-M}]
First recall that each shift operator $\mathcal{T}_s$ is unitary: $\langle \mathcal{T}_s u,\mathcal{T}_s v\rangle=\langle u,v\rangle$ and $\mathcal{T}_s^{*}=\mathcal{T}_s^{-1}=\mathcal{T}_{-s}$. Moreover, the family $\{\mathcal{T}_s\}$ forms a commutative cyclic group representation, so
\begin{align}
    \mathcal{T}_a \mathcal{T}_b   =  \mathcal{T}_{a+b},
\end{align}
with indices understood modulo $d$.

Fix $s \in\{0,\dots,d-1\}$. We show two identities.
\begin{enumerate}
    \item \emph{Invariance of the data distribution.}
    Set $Y' \triangleq  \mathcal{T}_s Y$. Using the model and group law,
    \begin{align}
        Y' = \mathcal{T}_s (\mathcal{T}_S x^\star + \xi) = \mathcal{T}_{S+s}   x^\star + \mathcal{T}_s \xi.
    \end{align}
    Since $S+s \bmod d \sim \mathrm{Unif}\{0,\dots,d-1\}$ and $\mathcal{T}_s \cdot \xi \stackrel{\mathcal{D}}{=} \xi$ (because $\xi$ is isotropic Gaussian), it follows that $Y'$ has the same distribution as $Y$, that is $Y' \stackrel{\mathcal{D}}{=} Y$.
    
    \item \emph{Equivariance of the E-step ingredients.}
    For any $s$, using the unitarity of $\mathcal{T}_s$, we have,
    \begin{align}
        \nonumber 
        \gamma_\ell(\mathcal{T}_s x; \mathcal{T}_s Y)
        &= \frac{
            \exp \p{\langle \mathcal{T}_s Y, \mathcal{T}_\ell \mathcal{T}_s x\rangle/\sigma^2}
        }{
            \sum_{r=0}^{d-1}\exp \p{\langle \mathcal{T}_s Y, \mathcal{T}_r \mathcal{T}_s x\rangle/\sigma^2}
        } \\[1ex]
        &= 
        \frac{
            \exp \p{\langle Y, \mathcal{T}_\ell x\rangle/\sigma^2}
        }{
            \sum_{r=0}^{d-1}\exp \p{\langle Y, \mathcal{T}_r x\rangle/\sigma^2}
        }
        = \gamma_\ell(x;Y), \label{eqn:app-B7}
    \end{align}
    where indices are taken modulo $d$ and we used $\mathcal{T}_a^{-1}=\mathcal{T}_{-a}$ and $\mathcal{T}_a\mathcal{T}_b=\mathcal{T}_{a+b}$. Similarly, we have,
    \begin{align}
        z_\ell(\mathcal{T}_s Y) = \mathcal{T}_\ell^{-1}(\mathcal{T}_s Y) = \mathcal{T}_{-\ell}\mathcal{T}_s Y = \mathcal{T}_s \mathcal{T}_{-\ell} Y = \mathcal{T}_s z_\ell(Y). \label{eqn:app-B8}
    \end{align}
\end{enumerate}

We now compute $M(\mathcal{T}_s x)$ by a change of variables in the expectation.
Since $Y'=\mathcal{T}_s Y \stackrel{\mathcal{D}}{=} Y$, we have
\begin{align}
    M(\mathcal{T}_s x)
    &= \mathbb{E}_Y \left[\sum_{\ell=0}^{d-1} \gamma_\ell(\mathcal{T}_s x;Y)  z_\ell(Y)\right] \\[1ex]
    &= \mathbb{E}_{Y} \left[\sum_{\ell=0}^{d-1} \gamma_\ell(\mathcal{T}_s x;\mathcal{T}_s Y)  z_\ell(\mathcal{T}_s  Y)\right]. \label{eqn:app-B9}
\end{align}
Now, since $\gamma_\ell(\mathcal{T}_s x;\mathcal{T}_s Y)=\gamma_\ell(x;Y)$ (by~\eqref{eqn:app-B7}) and $z_\ell(\mathcal{T}_s Y)=\mathcal{T}_s z_\ell(Y)$ (by~\eqref{eqn:app-B8}), we substitute these identities into the right-hand-side of~\eqref{eqn:app-B9},
\begin{align}
    M(\mathcal{T}_s x) & = \mathbb{E}_Y \left[\sum_{\ell=0}^{d-1} \gamma_\ell(x;Y)  \mathcal{T}_s z_\ell(Y)\right] 
    \\[1ex]
    &= \mathcal{T}_s  \mathbb{E}_Y \left[\sum_{\ell=0}^{d-1} \gamma_\ell(x;Y)  z_\ell(Y)\right] 
    \\[1ex] &= \mathcal{T}_s  M(x)
\end{align}

This establishes the shift-equivariance identity~\eqref{eq:shift-equivariance}.
\end{proof}

\begin{remark}
The uniform prior on $S$ and isotropy of the Gaussian noise are used to guarantee $Y$ and $\mathcal{T}_s  Y$ have the same distribution, enabling the change of variables above. 
\end{remark}

\subsection{The Jacobian of the population EM operator}
\begin{lem}[The Jacobian of the population EM operator]
\label{lem:Jacobian-M}
Let the population EM operator $M:\mathbb{R}^d \to \mathbb{R}^d$ be defined by~\eqref{eq:M-def-proof}. Then:
\begin{enumerate}
    \item The map $M \in C^\infty (\mathbb{R}^d)$ is smooth on $\mathbb{R}^d$.
    \item Its Jacobian is given explicitly by
    \begin{align}
        J(x) \triangleq \nabla_x M(x) = \frac{1}{\sigma^2} \mathbb{E} \left[\sum_{\ell=0}^{d-1} \gamma_\ell(x;Y) (z_\ell - \bar{z})(z_\ell - \bar{z})^\top \right],
        \label{eq:Jacobian-MRA}
    \end{align}
    where $z_\ell, \bar{z}$ are defined in~\eqref{eqn:app-B4}.
    In particular, $J(x)$ is symmetric for all $x$.
    \item At the ground-truth point $x^\star$, the spectrum of $J(x^\star)$ lies in $\mathrm{spec}(J(x^\star))\subset[0,1]$.
\end{enumerate}
\end{lem}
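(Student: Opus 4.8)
The plan is to treat the three assertions in order, each building on the previous one, and to isolate the Gaussian/posterior structure that makes the spectral bound clean.

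\emph{Smoothness and differentiation under the expectation.} For each fixed realization of $Y$ the integrand $x\mapsto\sum_{\ell}\gamma_\ell(x;Y)\,z_\ell$ is a composition of the affine logits $x\mapsto\langle z_r,x\rangle/\sigma^2$ with the softmax map, hence $C^\infty$ in $x$ with a strictly positive denominator and no singularities. To push smoothness through the expectation in~\eqref{eq:M-def-proof}, I would exhibit local integrable envelopes for all partial derivatives. The first derivative of a softmax weight is $\partial_x\gamma_\ell=\sigma^{-2}\gamma_\ell\,(z_\ell-\bar z)$, and every higher-order derivative is a finite sum of products of bounded weights $\gamma_\ell\in[0,1]$ with polynomials in the vectors $z_\ell=\mathcal{T}_\ell^{-1}Y$. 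Since each shift is orthogonal, $\|z_\ell\|_2=\|Y\|_2$, and $Y$ is Gaussian with finite moments of all orders, every such derivative is dominated, uniformly for $x$ in a neighborhood, by a fixed polynomial in $\|Y\|_2$, which is integrable. Dominated convergence then licenses repeated differentiation under the expectation and yields $M\in C^\infty(\mathbb{R}^d)$.

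\emph{The Jacobian formula and symmetry.} Differentiating under the expectation, and using that $z_\ell$ is independent of $x$, gives $J(x)=\mathbb{E}\!\left[\sum_\ell z_\ell(\nabla_x\gamma_\ell)^\top\right]=\sigma^{-2}\,\mathbb{E}\!\left[\sum_\ell\gamma_\ell\,z_\ell(z_\ell-\bar z)^\top\right]$. To recast this in the centered form of~\eqref{eq:Jacobian-MRA}, I would subtract the identically vanishing term $\bar z\,\bigl(\sum_\ell\gamma_\ell(z_\ell-\bar z)\bigr)^\top$, which is zero because $\sum_\ell\gamma_\ell=1$ and $\sum_\ell\gamma_\ell z_\ell=\bar z$. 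This produces $J(x)=\sigma^{-2}\,\mathbb{E}\!\left[\sum_\ell\gamma_\ell(z_\ell-\bar z)(z_\ell-\bar z)^\top\right]$. In this representation each summand is a nonnegatively weighted sum of outer products, i.e.\ a conditional covariance matrix of $z_{\tilde L}$ given $Y$; hence $J(x)$ is symmetric, and in fact positive semidefinite, for \emph{every} $x$.

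\emph{Spectrum in $[0,1]$ (the crux).} Positive semidefiniteness already gives $\mathrm{spec}(J(x^\star))\subset[0,\infty)$, so it remains to prove $J(x^\star)\preceq I$. The key observation is that at the true parameter the weights coincide with the exact Bayesian posterior over shifts: expanding $\|Y-\mathcal{T}_\ell x^\star\|_2^2$ and discarding the $\ell$-independent terms shows $\gamma_\ell(x^\star;Y)=p(\ell\mid Y)$ under the uniform prior. Fix a unit vector $w$ and let $\tilde L\mid Y\sim\gamma_\cdot(x^\star;Y)$, setting $Z=\langle w,\mathcal{T}_{\tilde L}^{-1}Y\rangle$; then $w^\top J(x^\star)w=\sigma^{-2}\,\mathbb{E}_Y[\mathrm{Var}(Z\mid Y)]$. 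By Bayes consistency the pair $(\tilde L,Y)$ has the same joint law as $(L,Y)$ in the generative model, so the law of total variance yields $\mathbb{E}_Y[\mathrm{Var}(Z\mid Y)]\le\mathrm{Var}(Z)$. Finally, substituting $Y=\mathcal{T}_L x^\star+\xi$ gives $Z=\langle w,x^\star\rangle+\langle\mathcal{T}_L w,\xi\rangle$, and conditioning on $L$ shows the second term is $\mathcal{N}(0,\sigma^2\|w\|_2^2)$ with zero mean, so $\mathrm{Var}(Z)=\sigma^2$. Hence $w^\top J(x^\star)w\le 1$ for all unit $w$, which gives $\mathrm{spec}(J(x^\star))\subset[0,1]$.

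I expect this third part to be the main obstacle: identifying $\gamma_\ell(x^\star;\cdot)$ with the genuine posterior, and the ensuing total-variance reduction to the noise level $\sigma^2$, is precisely what forces the eigenvalues below one; the first two parts are routine once the uniform Gaussian envelopes are in hand and the softmax recentering identity is applied.
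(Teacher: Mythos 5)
Your proposal is correct and follows essentially the same route as the paper's proof: dominated convergence with polynomial-in-$\|Y\|$ envelopes for smoothness, the softmax derivative identity $\nabla_x\gamma_\ell=\sigma^{-2}\gamma_\ell(z_\ell-\bar z)$ plus the recentering trick for the covariance form of $J(x)$, and the identification of $\gamma_\ell(x^\star;\cdot)$ with the Bayes posterior combined with the law of total variance and $\mathrm{Cov}(\mathcal{T}_S^{-1}\xi)=\sigma^2 I_d$ for the spectral bound. The only cosmetic difference is that you run the total-variance argument on quadratic forms $w^\top J(x^\star)w$ while the paper states it as a matrix PSD inequality; these are equivalent.
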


\begin{proof}[Proof of Lemma~\ref{lem:Jacobian-M}]
We prove the lemma in steps.

\paragraph{Smoothness of $M$.}
For each fixed realization of $Y$, the map $x \mapsto \gamma_\ell(x;Y)$ is a composition of analytic mappings: a linear form $x \mapsto \langle Y,\mathcal{T}_\ell x\rangle$ followed by the exponential and a log-sum-exp normalization. Hence,  $\gamma_\ell(\cdot;Y)$ is $C^\infty$ in $x$. For a multi-index $\alpha = (\alpha_1,\dots,\alpha_d)\in\mathbb{N}^d$ we use the notation
\begin{align}
    |\alpha|=\sum_{j=1}^d \alpha_j,  \qquad \partial_x^\alpha  = \frac{\partial^{|\alpha|}}{\partial x_1^{\alpha_1}\cdots \partial x_d^{\alpha_d}} .
\end{align}
We next bound $\partial_x^\alpha\gamma_\ell(x;Y)$ uniformly over a compact set $K\subset\mathbb{R}^d$. The first derivative of the softmax responsibilities gives
\begin{align}
    \nabla_x\gamma_\ell(x;Y) = \frac{1}{\sigma^2}\gamma_\ell(x;Y) \p{z_\ell-\bar{z}(x;Y)}. \label{eqn:first-derivative-of-softmax}
\end{align}
Since $0\le \gamma_r\le 1$ and $\|z_r\|=\|Y\|$ for all $r$, we have $\|\bar{z}(x;Y)\| =\bigl\|\sum_{r=0}^{d-1}\gamma_r(x;Y)z_r\bigr\| \le \sum_{r=0}^{d-1}\gamma_r(x;Y)\|z_r\| = \|Y\|$.
Therefore,
\begin{align}
    \sup_{x\in K}\|\nabla_x\gamma_\ell(x;Y)\| \le \frac{1}{\sigma^2}\sup_{x\in K}\gamma_\ell(x;Y)\,\|z_\ell-\bar{z}(x;Y)\| \le \frac{2}{\sigma^2}\|Y\|.
\end{align}

Higher-order derivatives follow by repeatedly differentiating \eqref{eqn:first-derivative-of-softmax}.
Each differentiation contributes at most one additional factor of $\sigma^{-2}$ and one factor of a vector of the form $(z_r-\bar{z})$,  whose norm is bounded by $2\|Y\|$, while all remaining coefficients are polynomials in $\{\gamma_r\}_{r=0}^{d-1}$ with magnitudes bounded by a constant depending only on $\alpha$.
Thus, by induction on $|\alpha|$, there exist constants $C_{\alpha,K}>0$ such that for all $x\in K$,
\begin{align}\label{eq:uniform-deriv-bound}
    \bigl|\partial_x^\alpha\gamma_\ell(x;Y)\bigr| \le C_{\alpha,K}\,\sigma^{-2|\alpha|} \p{1+\|Y\|}^{|\alpha|}.
\end{align}
Because $Y$ is sub-Gaussian with finite moments of all orders, the right-hand side of \eqref{eq:uniform-deriv-bound} is integrable, uniformly over $x\in K$. Since $z_\ell(Y)=\mathcal{T}_\ell^{-1}Y$ does not depend on $x$, dominated convergence allows differentiation under the expectation in \eqref{eq:M-def-proof} for every multi-index $\alpha$:
\begin{align}
    \partial_x^\alpha M(x) = \mathbb{E} \left[ \sum_{\ell=0}^{d-1}\partial_x^\alpha \p{\gamma_\ell(x;Y)\,z_\ell(Y)}\right].
\end{align}
Therefore $M\in C^\infty(\mathbb{R}^d)$.

\paragraph{Deriving the Jacobian.}
Define the expression within the exponent of $\gamma_\ell$ in~\eqref{eqn:app-B2} by
\begin{align}
    a_\ell(x) \triangleq  \frac{\langle Y,  \mathcal{T}_\ell  x\rangle}{\sigma^2}.
\end{align}
Its derivative is given by,
\begin{align}
    \nabla_x a_\ell(x) = \frac{1}{\sigma^2}  \mathcal{T}_\ell^{-1} Y = \frac{1}{\sigma^2} z_\ell. \label{eqn:a_l_derivative}
\end{align}
The derivative of the softmax function~\eqref{eqn:app-B2} satisfies the identity (see, e.g., ~\cite[Eq.~(A.31)]{balanov2025confirmation})
\begin{align}
    \nabla_x\gamma_\ell(x;Y) = \gamma_\ell(x;Y) \sum_{r=0}^{d-1} (\delta_{\ell r} - \gamma_r(x;Y)) \nabla_x a_r(x). \label{eqn:softmax-derivative}
\end{align}
Substituting the expression for $\nabla_x a_r(x)$ in~\eqref{eqn:a_l_derivative} into~\eqref{eqn:softmax-derivative} gives
\begin{align}
    \nabla_x \gamma_\ell(x;Y) = \frac{1}{\sigma^2} \gamma_\ell(x;Y) \p{z_\ell - \bar{z}(x;Y)}, \label{eq:dGamma-proof}
\end{align}
where,
\begin{align}
    \bar{z}(x;Y) =  \sum_{r=0}^{d-1}\gamma_r(x;Y) z_r.
\end{align}
Substituting~\eqref{eq:dGamma-proof} into~\eqref{eq:Jacobian-MRA}, while exchanging between the expectation and derivative, justified by the dominated convergence theorem, we have,
\begin{align}
    J(x) = \nabla_x M(x) = \frac{1}{\sigma^2}  \mathbb{E} \left[\sum_{\ell=0}^{d-1} \gamma_\ell(x;Y) z_\ell (z_\ell - \bar{z}(x;Y))^\top\right]. \label{eqn:app-B12}
\end{align}
Expanding $z_\ell = (z_\ell - \bar{z}) + \bar{z}$ in the definition of $J(x)$ gives
\begin{align}
    \sum_{\ell=0}^{d-1} \gamma_\ell(x;Y)  z_\ell (z_\ell - \bar{z})^\top
    &= \sum_{\ell=0}^{d-1} \gamma_\ell(x;Y) \pp{(z_\ell - \bar{z}) + \bar{z}}(z_\ell - \bar{z})^\top \notag\\
    &= \sum_{\ell=0}^{d-1} \gamma_\ell(x;Y) (z_\ell - \bar{z})(z_\ell - \bar{z})^\top + \bar{z} \sum_{\ell=0}^{d-1}\gamma_\ell(x;Y)(z_\ell - \bar{z})^\top. \label{eq:cov-expand}
\end{align}
The second term vanishes because $\sum_\ell \gamma_\ell(x;Y)(z_\ell - \bar{z})=0$, and we obtain
\begin{align}
    \sum_{\ell=0}^{d-1} \gamma_\ell(x;Y)  z_\ell (z_\ell - \bar{z})^\top = \sum_{\ell=0}^{d-1} \gamma_\ell(x;Y) (z_\ell - \bar{z})(z_\ell - \bar{z})^\top. \label{eqn:app-B14}
\end{align}
Substituting~\eqref{eqn:app-B14} into~\eqref{eqn:app-B12} proves~\eqref{eq:Jacobian-MRA}.
Since each element in the sum in~\eqref{eqn:app-B14} is symmetric and positive semidefinite, their expectation is symmetric and positive semidefinite as well. Therefore $J(x) = J(x)^\top$ for all $x \in \mathbb{R}^d$.

\paragraph{The spectrum of the Jacobian $\mathrm{spec}(J(x^\star))\subset[0,1]$.}
Define $z$ to be a discrete random vector taking values in the set $\{z_\ell\}_{\ell=0}^{d-1}$, where $z_\ell$'s are defined in~\eqref{eqn:app-B4}, with conditional probabilities $\mathbb{P}(z=z_\ell\vert Y)=\gamma_\ell(x^\star;Y)$. Equivalently, we have
\begin{align}
    z=\mathcal{T}_S^{-1}Y = x^\star + \mathcal{T}_S^{-1}\xi, \label{eqn:app-B27}
\end{align}
where, $S\sim\mathrm{Unif}\{0,\dots,d-1\}$, and $\xi\sim\mathcal{N}(0,\sigma^2 I_d)$. 

Now, at the ground truth, the responsibilities $\gamma_\ell(x^\star;Y)$ coincide with the posterior probabilities $\mathbb{P}(S=\ell\vert Y)$ under the data model $Y=\mathcal{T}_S x^\star+\xi$, where $S\sim\mathrm{Unif}\{0,\dots,d-1\}$ and $\xi\sim\mathcal{N}(0,\sigma^2 I_d)$.
Hence, conditioned on $Y$,
\begin{align}
    \mathbb{E}[z\vert Y] =\sum_{\ell=0}^{d-1}\gamma_\ell(x^\star;Y) z_\ell = \bar{z}(x^\star;Y),
\end{align}
and,
\begin{align}
    \operatorname{Cov}(z\vert Y) =\sum_{\ell=0}^{d-1}\gamma_\ell(x^\star;Y) (z_\ell-\bar{z}(x^\star;Y))(z_\ell-\bar{z}(x^\star;Y))^\top.
\end{align}
Substituting this expression into~\eqref{eq:Jacobian-MRA} shows that $J(x^\star)$ is the expectation of a scaled conditional covariance:
\begin{align}
    J(x^\star) = \frac{1}{\sigma^2} \mathbb{E}\pp{\operatorname{Cov}(z\vert Y)}.
\end{align}
Since each conditional covariance $\operatorname{Cov}(z\vert Y)$ is symmetric and positive semidefinite, their expectation is also symmetric and positive semidefinite.
Therefore $J(x^\star)=J(x^\star)^\top$, and,
\begin{align}
    J(x^\star)\succeq 0.\label{eqn:app-B30}
\end{align}
To upper bound the spectrum of $J(x^\star)$, we apply the law of total variance:
\begin{align}\label{eq:totalvar-operator}
    \operatorname{Cov}(z) =\mathbb{E}   \pp{\operatorname{Cov}(z\vert Y)} +\operatorname{Cov} \p{\mathbb{E}[z\vert Y]} \succeq \mathbb{E} \pp{\operatorname{Cov}(z\vert Y)},
\end{align}
where both terms on the right-hand side are symmetric positive semidefinite. By the definition of $z$ in~\eqref{eqn:app-B27}, we have,
\begin{align}
    \operatorname{Cov}(z) = \operatorname{Cov}(\mathcal{T}_S^{-1}\xi)=\mathbb{E} \pp{\mathcal{T}_S^{-1}(\sigma^2 I_d)\mathcal{T}_S} = \sigma^2 I_d. \label{eqn:app-B33}
\end{align}
Substituting~\eqref{eqn:app-B33} into~\eqref{eq:totalvar-operator} yields
\begin{align}
    \mathbb{E}[\operatorname{Cov}(z\vert Y)] \preceq \sigma^2 I_d.
\end{align}
Combining with~\eqref{eqn:app-B30} gives
\begin{align}
    0 \preceq J(x^\star) =\frac{1}{\sigma^2} \mathbb{E}[\operatorname{Cov}(z\vert Y)] \preceq I_d,
\end{align}
so all eigenvalues of $J(x^\star)$ lie in $[0,1]$.
In particular, $J(x^\star)$ is symmetric, positive semidefinite, and orthogonally diagonalizable with real nonnegative eigenvalues not exceeding~1.

\end{proof}

\subsection{Proof of Theorem~\ref{thm:EM-MRA-spectral-radius}}

The proof builds on Lemmas~\ref{lem:Jacobian-M} and~\ref{lem:local-contraction}.
By Lemma~\ref{lem:Jacobian-M}, $M\in C^\infty(\mathbb{R}^d)$ and its Jacobian is given by~\eqref{eq:Jacobian-MRA}. At $x^\star$, $J(x^\star)$ is symmetric, positive semidefinite, and $\mathrm{spec}(J(x^\star))\subset[0,1]$.

\paragraph{Local contraction.}
Fix any $q\in\p{\rho(J(x^\star)), 1}$. By Lemma~\ref{lem:local-contraction}(1), there exists a neighborhood $\mathcal{U}$ of $x^\star$ such that
\begin{align}\label{eq:local-contractivity-2}
    \|M(x)-M(x^\star)\|_2  \leq  q \|x-x^\star\|_2,
    \qquad \forall x\in\mathcal{U}.
\end{align}
In particular, if $\hat{x}^{(0)}\in\mathcal{U}$ and $\hat{x}^{(t+1)}=M(\hat{x}^{(t)})$, then applying~\eqref{eq:local-contractivity-2} with $x=\hat{x}^{(t)}$ yields
\begin{align}
    \|\hat{x}^{(t+1)}-x^\star\|_2 =\|M(\hat{x}^{(t)})-M(x^\star)\|_2 \le q\,\|\hat{x}^{(t)}-x^\star\|_2, \qquad t=0,1,2,\dots.
\end{align}
Iterating this inequality gives the geometric bound
\begin{align}
    \|\hat{x}^{(t)}-x^\star\|_2  \le q^t \|\hat{x}^{(0)}-x^\star\|_2,
    \qquad t=0,1,2,\dots.
\end{align}

\paragraph{Asymptotic rate.}
Let $e_t\triangleq \hat{x}^{(t)}-x^\star$. By Lemma~\ref{lem:local-contraction} (2),
\begin{align}
    \limsup_{t\to\infty}\frac{\|e_{t+1}\|_2}{\|e_t\|_2} \le \rho \p{J(x^\star)}.
\end{align}

\paragraph{Infinitesimal initialization regime. }
Apply Theorem~\ref{thm:Lyap-symmetric} with $J=J(x^\star)$ and $e_0 = e^{(0)} = \hat{x}^{(0)}-x^\star$, and note that $e_t(e_0)=e^{(t)}=\hat{x}^{(t)}-x^\star$.
The genericity assumption $\langle \hat{x}^{(0)}-x^\star,u_1\rangle\neq 0$ is exactly the condition $\langle u_1,e_0\rangle\neq 0$ in Theorem~\ref{thm:Lyap-symmetric}.
Therefore,
\begin{align}
    \lim_{t\to\infty} \,\limsup_{\substack{\hat{x}^{(0)} \to x^\star \\ \langle \hat{x}^{(0)}-x^\star,\,u_1\rangle \neq 0}} \frac{1}{t}\, \log\frac{\|\hat{x}^{(t)} - x^\star\|_2}{\|\hat{x}^{(0)} - x^\star\|_2} =  \log \rho\p{J(x^\star)},
\end{align}
which is exactly~\eqref{eq:Lyap-EM-MRA}. This proves item~(3).

\section{EM Jacobian at low-SNR: Spectral decomposition} \label{sec:proofOfJacobStructure}

In this section, we derive the population EM Jacobian expansion in the low-SNR regime $x^\star=\beta v$ with a fixed noise variance $\sigma^2$ and a varying signal amplitude $0 < |\beta| \ll \sigma$.
First, we expand the softmax responsibilities $\gamma_\ell(x^\star)$ and obtain a second-order Taylor expansion with uniform remainders (Lemma~\ref{lem:resp-near-uniform-MRA-gamma}).
Next, we insert this expansion into the covariance representation of the Jacobian and organize the $O(\beta^2)$ contribution (Lemma~\ref{lem:J-up-to-Delta2-correct} and Lemma~\ref{lem:Delta2-closed}).
Passing to the Fourier basis diagonalizes the circular–shift operators, so $\widehat{J}(\beta)=F^\ast J(\beta)F$ is block–diagonal with a $2\times 2$ block for each non-mean frequency pair $\{k,-k\}$.
This block structure yields the spectral decomposition and eigenvalue asymptotics stated in Proposition~\ref{prop:K2-MRA} and Corollary~\ref{cor:block-spectral}. 
In turn, these results directly imply the two-phase convergence (Theorem~\ref{thm:lowSNR-two-phase}) and the iteration–complexity bounds (Corollary~\ref{cor:flat-iter-lower}).

\paragraph{Notations.}
Throughout this section, we denote the mean of any collection $\{a_\ell\}_{\ell=0}^{d-1}\subset\mathbb{R}^{m}$ (scalars or vectors) by,
\begin{align}
    \overline{a} \triangleq \frac{1}{d}\sum_{\ell=0}^{d-1} a_\ell .
\end{align}
We work in the low–SNR scaling
\begin{align}\label{eq:low-snr-scaling}
    x^\star = \beta v, \qquad \|v\|_2=1, 
\end{align}
with a fixed unit-norm template $v\in\mathbb{R}^d$.
For each observation $Y$ and shift index $\ell\in\{0,\ldots,d-1\}$, set
\begin{align}
    z_\ell \triangleq \mathcal{T}_\ell^{-1}Y, \label{eq:zl}
\end{align}
and define the (unnormalized) logits and their mean–centered versions by
\begin{align}
    s_\ell(Y) & \triangleq \frac{1}{\sigma^2} \big\langle Y, \mathcal{T}_\ell x^\star \big\rangle= \frac{\beta}{\sigma^2} \big\langle z_\ell, v \big\rangle,
    \label{eq:s-logits}
    \\[0.3em]
    \overline{s}(Y) & \triangleq \frac{1}{d}\sum_{r=0}^{d-1} s_r(Y),
    \label{eq:s-bar}
    \\[0.3em]
    \eta_\ell(Y) & \triangleq s_\ell(Y) - \overline{s}(Y),
    \label{eq:eta-def}
\end{align}
so that $\sum_{\ell=0}^{d-1}\eta_\ell(Y)=0$ identically. 
According to the definition in~\eqref{eq:s-logits}--\eqref{eq:eta-def}, define the responsibilities
\begin{align}\label{eq:gamma-def}
    \gamma_\ell^{(\beta)}(Y) \triangleq  \frac{\exp \p{s_\ell(Y)}}{\sum_{r=0}^{d-1}\exp \p{s_r(Y)}} = \frac{\exp \p{\eta_\ell(Y)}}{\sum_{r=0}^{d-1}\exp \p{\eta_r(Y)}}, \qquad \ell=0,\dots,d-1,
\end{align}
where the second equality uses the softmax invariance to additive shifts
$s_\ell \mapsto s_\ell - c$ for any $c\in\mathbb{R}$. We also introduce quadratic and mixed empirical averages:
\begin{align}
    \overline{\eta^2}(Y) &  \triangleq  \frac{1}{d}\sum_{r=0}^{d-1}\eta_r(Y)^2,
    \label{eq:eta2-bar}
    \\[0.3em]
    \overline{\eta(Y) z} &  \triangleq  \frac{1}{d}\sum_{\ell=0}^{d-1}\eta_\ell(Y) z_\ell .
    \label{eq:eta-z-bar}
\end{align}

For $Y=\beta \mathcal{T}_S v+\xi$, we have
\begin{align}
    z_\ell = \mathcal{T}_\ell^{-1} (\xi + \beta \mathcal{T}_S v) = \mathcal{T}_\ell^{-1}\xi + \beta \mathcal{T}_\ell^{-1}\mathcal{T}_S v.
\end{align}
Thus, for the centered logits, the linear term in $\beta$ of the expansion around $\beta=0$ is
\begin{align}
    \eta_\ell(Y)   =  \beta \eta_\ell^{(0)}  +  O(\beta^2), \qquad   \eta_\ell^{(0)} \triangleq  \frac{1}{\sigma^2}\p{\langle \xi,\mathcal{T}_\ell v\rangle - \overline{\langle \xi,\mathcal{T} v\rangle}}.
\end{align}
We also denote
\begin{align}
    \overline{\mathcal{T}^{-1}\xi}=\frac{1}{d}\sum_{\ell=0}^{d-1} \mathcal{T}_\ell^{-1}\xi,\qquad  \overline{(\eta^{(0)})^2}=\frac{1}{d}\sum_{r=0}^{d-1}\p{\eta_r^{(0)}}^2,\qquad   \overline{\eta^{(0)} \mathcal{T}^{-1}\xi}=\frac{1}{d}\sum_{\ell=0}^{d-1}\eta_\ell^{(0)} \p{ \mathcal{T}_\ell^{-1}\xi}.
\end{align}
Our first goal is to derive the spatial-domain expansion of the Jacobian:
\begin{align} \label{eqn:Jacobian-spatial-domain-exp}
    J(\beta) &= (I-\Pi_{\mathrm{mean}}) - \frac{\beta^2}{\sigma^2} \frac{1}{d} \left[\sum_{s=0}^{d-1}\p{\mathcal{T}_s \tilde{v}} \tilde{v}^\top \mathcal{T}_s + \sum_{s=0}^{d-1}\big\langle \tilde{v},\mathcal{T}_s \tilde{v}\big\rangle \mathcal{T}_s \right] + O \p{\frac{\beta^4}{\sigma^4}}. 
\end{align}

\subsection{Second-order expansion of responsibilities}

\begin{lem}[Responsibilities near uniform: second--order expansion with uniform remainders]
\label{lem:resp-near-uniform-MRA-gamma}
Fix an observation $Y \in \mathbb{R}^d$ under the MRA model~\eqref{eqn:mainModel1D}.
Recall the definitions of the logits $s_\ell(Y)$, their centered versions $\eta_\ell(Y)$, and the responsibilities $\gamma_\ell^{(\beta)}(Y)$, in~\eqref{eq:s-logits}--\eqref{eq:gamma-def}.
Then, for a fixed $Y$, each $\gamma_\ell^{(\beta)}(Y)$ admits the second--order Taylor expansion:
\begin{align}
    \gamma_\ell^{(\beta)}(Y) &= \frac{1}{d} + \frac{1}{d} \eta_\ell(Y) + \frac{1}{2d}\p{\eta_\ell(Y)^2-\overline{\eta^2}(Y)} + R_\ell(\beta;Y),
    \label{eq:MRA-gamma-eta}
\end{align}
where the uniform remainder satisfies
\begin{align}
    |R_\ell(\beta;Y)| \leq C \frac{|\beta|^3}{\sigma^6} \|Y\|^3 \|v\|^3,
    \label{eq:MRA-R-bound}
\end{align}
uniformly in $(\ell,Y)$, for some constant $C>0$, which depends on $d$, but is independent of $(\beta,\sigma,v)$. 
\end{lem}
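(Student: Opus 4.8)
The plan is to treat $\gamma_\ell^{(\beta)}(Y)$ as a softmax of the centered logits $\{\eta_\ell(Y)\}$ and to exploit the fact that, under the scaling $x^\star=\beta v$, all logits are uniformly small, of order $\beta$. The first step is to record an a priori bound on the logits. Since each shift operator is orthonormal, $\|z_\ell\|_2=\|\mathcal{T}_\ell^{-1}Y\|_2=\|Y\|_2$, so by Cauchy--Schwarz $|s_\ell(Y)|=\frac{|\beta|}{\sigma^2}|\langle z_\ell,v\rangle|\le \frac{|\beta|}{\sigma^2}\|Y\|_2\|v\|_2$, and hence $|\eta_\ell(Y)|=|s_\ell-\overline{s}|\le 2\frac{|\beta|}{\sigma^2}\|Y\|_2\|v\|_2$. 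I abbreviate $M\triangleq \frac{|\beta|}{\sigma^2}\|Y\|_2\|v\|_2$, so that $|\eta_\ell|\le 2M$ for every $\ell$, with $M=O(\beta)$ for fixed $Y$.

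Next I would carry out the second-order expansion of the softmax. Writing $\gamma_\ell^{(\beta)}=e^{\eta_\ell}/Z$ with $Z=\sum_r e^{\eta_r}$, I Taylor-expand each exponential as $e^{\eta_r}=1+\eta_r+\tfrac12\eta_r^2+\rho_r$, with Lagrange remainder $|\rho_r|\le \tfrac16|\eta_r|^3 e^{|\eta_r|}$. Because the centering enforces $\sum_r\eta_r=0$ identically, the linear contribution to the denominator cancels and $Z=d\bigl(1+\tfrac12\overline{\eta^2}+\tfrac1d\sum_r\rho_r\bigr)$. Expanding $1/Z$ and multiplying by the numerator expansion, then discarding all terms beyond second order in $\{\eta_\ell\}$, reproduces precisely the leading part $\frac1d+\frac1d\eta_\ell+\frac{1}{2d}(\eta_\ell^2-\overline{\eta^2})$ of \eqref{eq:MRA-gamma-eta}; here the $+\frac{1}{2d}\eta_\ell^2$ comes from the numerator and the $-\frac{1}{2d}\overline{\eta^2}$ from the leading denominator correction. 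The remainder $R_\ell(\beta;Y)$ then collects the pure cubic term $\rho_\ell$, the cross term $\tfrac1d\eta_\ell\cdot(\tfrac12\overline{\eta^2})$, and the higher-order expansion of $1/Z$, each of which is at most cubic in the $\eta$'s.

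The delicate step, and the one I expect to be the main obstacle, is converting ``at most cubic in $\eta$'' into the stated uniform bound $|R_\ell|\le C\frac{|\beta|^3}{\sigma^6}\|Y\|_2^3\|v\|_2^3=CM^3$ that holds for \emph{all} $Y$, not merely asymptotically. The difficulty is that the Lagrange remainder carries a factor $e^{|\eta_\ell|}$, which cannot be bounded uniformly once $\|Y\|_2$ (hence $M$) is large. I would resolve this by a dichotomy on the size of $M$. When $M\le 1$, all $|\eta_r|\le 2$, so $e^{|\eta_r|}\le e^2$ and every remainder term is bounded by a constant (depending only on $d$) times $M^3$. When $M>1$, I discard the Taylor estimates entirely and bound $R_\ell$ directly from its definition via the triangle inequality: $\gamma_\ell^{(\beta)}\in[0,1]$ is bounded, while the explicit terms $\frac1d,\ \frac1d\eta_\ell,\ \frac{1}{2d}(\eta_\ell^2-\overline{\eta^2})$ are controlled by constants times $1,\ M,\ M^2$ respectively, all of which are $\le M^3$ since $M>1$. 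Combining the two regimes yields $|R_\ell|\le C M^3$ uniformly in $(\ell,Y)$ with $C=C(d)$ independent of $(\beta,\sigma,v)$, which is exactly \eqref{eq:MRA-R-bound}.
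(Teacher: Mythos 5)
Your proposal is correct, but it takes a genuinely different route from the paper's proof. The paper avoids your dichotomy entirely: it Taylor-expands the softmax map $f_\ell(u)=e^{u_\ell}/\sum_r e^{u_r}$ as a multivariate function around $u=0$ and observes that \emph{every} third partial derivative of $f_\ell$ is a polynomial in the softmax outputs $f_{r_1}(u)\cdots f_{r_m}(u)$, which lie in $[0,1]$ for all $u\in\mathbb{R}^d$; hence the third derivatives are globally bounded by a constant $B_d$, and the Lagrange remainder satisfies $|R_\ell(u)|\le C_d\|u\|_2^3$ uniformly on all of $\mathbb{R}^d$, with no exponential factors and no case analysis. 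Combined with softmax shift-invariance (to pass to the centered logits $\eta$) and Cauchy--Schwarz ($\|\eta(Y)\|_2\le\sqrt{d}\,\frac{|\beta|}{\sigma^2}\|Y\|_2\|v\|_2$), this yields \eqref{eq:MRA-R-bound} in one stroke. Your route instead expands the scalar exponentials in numerator and denominator separately, which produces Lagrange remainders carrying $e^{|\eta_\ell|}$ factors, and you correctly identify that this breaks uniformity in $Y$ and repair it with the small-$M$/large-$M$ dichotomy (trivial bound via $\gamma_\ell\in[0,1]$ when $M>1$). This works, and is arguably more elementary, but two details need care in a full write-up: (i) when expanding $1/Z=\frac{1}{d(1+\epsilon)}$ you need $1+\epsilon$ bounded below, which follows from Jensen's inequality ($Z=\sum_r e^{\eta_r}\ge d\,e^{\overline{\eta}}=d$, so $\epsilon\ge 0$), or alternatively from choosing a smaller dichotomy threshold so that $|\epsilon|\le 1/2$; (ii) with threshold $M\le 1$ the quantity $\epsilon$ can exceed $1$ in magnitude estimates, so you should use the exact identity $\frac{1}{1+\epsilon}=1-\epsilon+\frac{\epsilon^2}{1+\epsilon}$ rather than a geometric series. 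Neither issue is a gap in the plan, only in the execution details; the paper's global-derivative-bound argument is what lets it sidestep both.
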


\begin{proof}[Proof of Lemma~\ref{lem:resp-near-uniform-MRA-gamma}]
We expand the softmax to second order around $\beta = 0$ using centered coordinates, then substitute the MRA logits and derive uniform remainder bounds. 

\paragraph{Step 1: Softmax map and its derivatives.}
For $u\in\mathbb{R}^d$, set the softmax function
\begin{align}
    f_\ell(u) \triangleq  \frac{e^{u_\ell}}{\sum_{r=0}^{d-1} e^{u_r}},\qquad \ell=0,\dots,d-1.
\end{align}
The gradient and Hessian of $f_\ell$ are
\begin{align}
    \frac{\partial f_\ell}{\partial u_s}(u) &= f_\ell(u) \p{\delta_{\ell s}-f_s(u)},\\
    \frac{\partial^2 f_\ell}{\partial u_s\partial u_t}(u) &= f_\ell(u)\p{\delta_{\ell s}-f_s(u)}\p{\delta_{\ell t}-f_t(u)} - f_\ell(u) f_s(u)\p{\delta_{st}-f_t(u)}.
\end{align}
At $u=0$, since $f_r(0)=1/d$ for all $r$,
\begin{align}
\label{eq:softmax-grad-0-gamma}
    \frac{\partial f_\ell}{\partial u_s}(0) &= \frac{1}{d}\p{\delta_{\ell s}-\frac{1}{d}},\\
\label{eq:softmax-hess-0-gamma}
    \frac{\partial^2 f_\ell}{\partial u_s\partial u_t}(0) &= \frac{1}{d}\p{\delta_{\ell s}-\frac{1}{d}}\p{\delta_{\ell t}-\frac{1}{d}} - \frac{1}{d^2}\p{\delta_{st}-\frac{1}{d}}.
\end{align}

\paragraph{Step 2: Taylor expansion with Lagrange remainder.}
By Taylor expansion,
\begin{align}
\label{eq:taylor-gamma}
    f_\ell(u) =  f_\ell(0) +  \sum_{s} \frac{\partial f_\ell}{\partial u_s}(0) u_s + \frac{1}{2}\sum_{s,t} \frac{\partial^2 f_\ell}{\partial u_s\partial u_t}(0) u_s u_t + R_\ell(u),
\end{align}
with
\begin{align}
\label{eq:remainder-gamma}
    R_\ell(u) = \frac{1}{6}\sum_{s,t,w} \frac{\partial^3 f_\ell}{\partial u_s\partial u_t\partial u_w}\p{\xi u} u_s u_t u_w, \qquad \text{for some }\xi\in(0,1).
\end{align}
Since $f(u)$ satisfies $0\le f_r(u)\le 1$ and $\sum_{r=0}^{d-1}f_r(u)=1$ for all $u\in\mathbb{R}^d$, and every third partial derivative $\partial^3 f_\ell/\partial u_s\partial u_t\partial u_w$ is a finite linear combination of products of the form $f_{r_1}(u)\cdots f_{r_m}(u)$ with coefficients depending only on $d$, these third derivatives are globally bounded on $\mathbb{R}^d$: there exists a constant $B_d<\infty$, depending only on $d$, such that
\begin{align}
    \sup_{u\in\mathbb{R}^d}\max_{\ell,s,t,w} \Bigl|\frac{\partial^3 f_\ell}{\partial u_s\partial u_t\partial u_w}(u)\Bigr| \le  B_d.
\end{align}
Substituting this bound into the Lagrange form of the remainder
\eqref{eq:remainder-gamma} yields, for all $u\in\mathbb{R}^d$,
\begin{align}
\label{eq:remainder-bound-gamma}
    |R_\ell(u)| \le \frac{B_d}{6}\sum_{s,t,w}|u_su_tu_w| \le C_d\,\|u\|_2^3,
\end{align}
where $C_d$ depends only on $d$ (e.g.,  $C_d=\frac{B_d}{6}\,d^{3/2}$ using $\sum_s|u_s|\le\sqrt d\,\|u\|_2$).

\paragraph{Step 3: Centering.}
Let $\bar{u}\triangleq \frac{1}{d}\sum_r u_r$ and $\tilde{u} \triangleq  u-\bar{u} \mathbf{1}$, where $\mathbf{1} \in \mathbb{R}^d$ is the all-ones vector. Then,  $f_\ell(u)=f_\ell(\tilde{u})$ and $\sum_\ell \tilde{u}_\ell=0$. Plugging \eqref{eq:softmax-grad-0-gamma}–\eqref{eq:softmax-hess-0-gamma} into \eqref{eq:taylor-gamma} with $u=\tilde{u}$ gives
\begin{align}
\label{eq:softmax-centered-gamma}
    f_\ell(\tilde{u}) = \frac{1}{d} + \frac{1}{d}\tilde{u}_\ell + \frac{1}{2d}\p{\tilde{u}_\ell^2 - \overline{\tilde{u}^{ 2}}} + R_\ell(\tilde{u}),
    \qquad \overline{\tilde{u}^{ 2}} \triangleq  \frac{1}{d}\sum_s \tilde{u}_s^2 .
\end{align}

\paragraph{Step 4: Substitute MRA logits and bound remainders.}
In the MRA, the Taylor expansion in~\eqref{eq:taylor-gamma} is applied to,
\begin{align}
    u \triangleq s(Y), \qquad \tilde{u} \triangleq \eta(Y), \label{eqn:app-C21}
\end{align}
where 
\begin{align}
    u_\ell = s_\ell (Y) = \frac{\beta}{\sigma^2} \langle Y,\mathcal{T}_\ell v\rangle, \qquad \tilde{u}_\ell \triangleq \eta_\ell (Y) = s_\ell (Y) - \overline{s} (Y), \label{eqn:app-C22}
\end{align}
for $\ell \in \{0 ,1, \ldots, d-1\}$. Substituting~\eqref{eqn:app-C21} into~\eqref{eq:softmax-centered-gamma} gives
\begin{align}
    f_\ell(\eta(Y)) = \frac{1}{d} + \frac{1}{d}\eta_\ell(Y) + \frac{1}{2d}\p{\eta_\ell(Y)^2 - \overline{\eta^{ 2}(Y)}} + R_\ell(\eta(Y)),
\end{align}
which is~\eqref{eq:MRA-gamma-eta}. It remains to bound the remainder $R_\ell(\eta(Y))$. To that end, applying Cauchy–Schwarz inequality we get
\begin{align}
    \|u\|_2 = \big\|s(Y)\big\|_2 &= \frac{|\beta|}{\sigma^2}\p{\sum_{\ell=0}^{d-1}\langle Y,\mathcal{T}_\ell v\rangle^2}^{1/2} \leq  \frac{|\beta|}{\sigma^2} \sqrt{d} \|Y\|_2 \|v\|_2. \label{eqn:app-C24}
\end{align}
Hence, using the definition of $\eta(Y)$ in \eqref{eqn:app-C21}--~\eqref{eqn:app-C22}, we have $\|\eta(Y)\|_2\le \|s(Y)\|_2$. Together with the bound in~\eqref{eqn:app-C24}, we get
\begin{align}
    \|\eta(Y)\|_2 \le \|s(Y)\|_2 \le  \frac{|\beta|}{\sigma^2}\sqrt d\|Y\|_2\|v\|_2.
\end{align}
Since the remainder estimate \eqref{eq:remainder-bound-gamma} holds for all $u\in\mathbb{R}^d$, we obtain
\begin{align}
\label{eq:rem-final-gamma}
    |R_\ell(\eta(Y))| \le C_d \|\eta(Y)\|_2^3 \le C_d\, d^{3/2}\frac{|\beta|^3}{\sigma^6}\|Y\|_2^3\|v\|_2^3 \triangleq  \bar C_d\frac{|\beta|^3}{\sigma^6}\|Y\|_2^3\|v\|_2^3,
\end{align}
where $\bar C_d=C_d d^{3/2}$ depends on $d$ only.
\end{proof}

\subsection{EM Jacobian for MRA: second--order expansion}

In this section we derive the second--order expansion of the population EM Jacobian using the second-order expansion of the EM responsibilities, derived in Lemma~\ref{lem:resp-near-uniform-MRA-gamma}. 

\paragraph{Notations and assumptions.}
Let $\mathbf{1}\in\mathbb{R}^d$ be the all-ones vector and define the (rank-one) orthogonal projector onto the mean subspace by
\begin{align}
    \Pi_{\mathrm{mean}} = \frac{1}{d} \mathbf{1}\mathbf{1}^\top,
    \qquad \Pi_{\mathrm{mean}} x = \frac{1}{d}(\mathbf{1}^\top x) \mathbf{1},
\end{align}
and let $P_0$ be the projections on the complement subspace of the mean subspace, namely,
\begin{align}
    P_0\triangleq I-\Pi_{\mathrm{mean}}.
\end{align}
We first state the following lemma, which gives the expansion of the Jacobian up to order $O(\beta^4/\sigma^4)$.
\begin{lem}[Second-order expansion of the EM Jacobian]
\label{lem:J-up-to-Delta2-correct}
Let $v$ be the normalized parameter vector $x^\star = \beta v \in \mathbb{R}^d$. Recall the EM Jacobian~\eqref{eq:Jacobian-MRA-proof}.
Then, the EM Jacobian admits the expansion
\begin{align}
    \label{eq:J-second-order-main}
    J(\beta) = J(0) + \frac{\beta^2}{\sigma^2}\p{K_{2,\mathrm{sig}}+K_{2,\mathrm{resp}}} + O \p{\frac{\beta^4}{\sigma^4}},
\end{align}
with $J(0)=I-\Pi_{\mathrm{mean}}$ and
\begin{align}
    \label{eq:K2sig}
    K_{2,\mathrm{sig}} =\frac{1}{d}\sum_{m=0}^{d-1} (\mathcal{T}_m v)(\mathcal{T}_m v)^\top -(\Pi_{\mathrm{mean}}v)(\Pi_{\mathrm{mean}}v)^\top,
\end{align}
and, 
\begin{align}\label{eq:K2resp}
    K_{2,\mathrm{resp}} = \mathbb{E}_\xi \left[\frac{1}{2d}\sum_{\ell=0}^{d-1} \p{(\eta_\ell^{(0)})^2-\overline{(\eta^{(0)})^2}} \p{ \mathcal{T}_\ell^{-1}\xi}\p{\mathcal{T}_\ell^{-1}\xi}^\top - \p{\overline{\eta^{(0)}  \mathcal{T}^{-1}\xi}}\p{\overline{\eta^{(0)}  \mathcal{T}^{-1}\xi}}^\top \right].
\end{align}
In particular, all odd (in $\beta$) contributions cancel by Gaussian parity, and the remainder is quartic:
\begin{align}
    \big\|J(\beta)-J(0)-\frac{\beta^2}{\sigma^2} (K_{2,\mathrm{sig}}+K_{2,\mathrm{resp}})\big\|_F \leq C \frac{\beta^4}{\sigma^4} \|v\|^4,
\end{align}
for some constant $C$, which depends on $d$, but is independent of $(\beta,\sigma,v)$.
\end{lem}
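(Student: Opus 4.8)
The plan is to substitute the second-order responsibility expansion of Lemma~\ref{lem:resp-near-uniform-MRA-gamma} into the covariance-form Jacobian~\eqref{eq:Jacobian-MRA-proof}, expand in powers of $\beta$ about $\beta=0$, and match coefficients. Throughout I treat the randomness as the pair $(\xi,S)$ and write $z_\ell=\mathcal{T}_\ell^{-1}Y=w_\ell+\beta a_\ell$ with $w_\ell\triangleq\mathcal{T}_\ell^{-1}\xi$ and $a_\ell\triangleq\mathcal{T}_{S-\ell}v$, so that every quantity entering the Jacobian is an explicit, smooth function of $\beta$ for fixed $(\xi,S)$. The organizing observation is a parity symmetry: under the joint sign flip $(\beta,\xi)\mapsto(-\beta,-\xi)$ we have $Y\mapsto -Y$, hence $z_\ell-\bar z\mapsto-(z_\ell-\bar z)$, while each responsibility $\gamma_\ell$ is invariant (the logits $\tfrac1{\sigma^2}\langle Y,\mathcal{T}_\ell\beta v\rangle$ are unchanged). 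The integrand $\sum_\ell\gamma_\ell(z_\ell-\bar z)(z_\ell-\bar z)^\top$ is thus invariant for every fixed $S$; since $-\xi\stackrel{\mathcal{D}}{=}\xi$, taking the expectation shows $J(\beta)$ is an \emph{even} function of $\beta$. This immediately annihilates all odd-order coefficients (the $O(\beta)$ and $O(\beta^3)$ terms), which is exactly what produces a purely quadratic leading correction and a quartic remainder.

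For the zeroth-order term I set $\beta=0$, so $\gamma_\ell=1/d$, $z_\ell=w_\ell$, and $\bar z=\bar w\triangleq\frac1d\sum_\ell w_\ell$, giving $\sigma^2 J(0)=\mathbb{E}\big[\frac1d\sum_\ell(w_\ell-\bar w)(w_\ell-\bar w)^\top\big]$. Using $\mathbb{E}[w_\ell w_\ell^\top]=\mathcal{T}_\ell^{-1}(\sigma^2 I)\mathcal{T}_\ell=\sigma^2 I$ together with the identity $\sum_k\mathcal{T}_k=\mathbf{1}\mathbf{1}^\top$ to evaluate $\mathbb{E}[\bar w\bar w^\top]=\sigma^2\Pi_{\mathrm{mean}}$, I obtain $J(0)=I-\Pi_{\mathrm{mean}}$, as claimed.

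The core of the proof is the second-order coefficient. Writing $\gamma_\ell=\frac1d(1+\beta\eta_\ell^{(0)}+\beta^2 q_\ell)+O(\beta^3)$ with $\eta_\ell^{(0)}=\frac1{\sigma^2}\big(\langle\xi,\mathcal{T}_\ell v\rangle-\overline{\langle\xi,\mathcal{T}v\rangle}\big)$ and $q_\ell$ collecting the explicit $O(\beta^2)$ terms of Lemma~\ref{lem:resp-near-uniform-MRA-gamma}, and expanding $z_\ell-\bar z=\delta_\ell+\beta c_\ell+O(\beta^2)$ with $\delta_\ell=w_\ell-\bar w$ and $c_\ell=(a_\ell-\bar a)-\overline{\eta^{(0)}w}$, I collect the $\beta^2$ coefficient $B_2$ of the integrand and take its expectation. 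I then split $\mathbb{E}[B_2]$ by parity in $\xi$. The $\xi$-free part comes entirely from $(a_\ell-\bar a)(a_\ell-\bar a)^\top$; using $\bar a=\frac1d\sum_k\mathcal{T}_k v=\Pi_{\mathrm{mean}}v$ and re-indexing the shifts, it collapses to $\frac1d\sum_m(\mathcal{T}_m v)(\mathcal{T}_m v)^\top-(\Pi_{\mathrm{mean}}v)(\Pi_{\mathrm{mean}}v)^\top=K_{2,\mathrm{sig}}$, independently of $S$. The remaining $\xi$-dependent contributions---arising from $q_\ell\,\delta_\ell\delta_\ell^\top$, the cross terms $\eta_\ell^{(0)}(\delta_\ell c_\ell^\top+c_\ell\delta_\ell^\top)$, the $\overline{\eta^{(0)}w}$ parts of $c_\ell c_\ell^\top$, and the $O(\beta^2)$ correction of $\bar z$---must be reduced via Gaussian second- and fourth-moment computations (equivalently, Stein/Isserlis identities) and regrouped using the centering relations $\sum_\ell\eta_\ell^{(0)}=0$ and $\sum_\ell\delta_\ell=0$. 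I expect this regrouping to be the main obstacle: many terms appear, and showing that they assemble exactly into the stated closed form $K_{2,\mathrm{resp}}$ in~\eqref{eq:K2resp} (in particular, replacing the centered $\delta_\ell$ by $w_\ell$ and producing the precise signs) requires careful bookkeeping and several intermediate cancellations.

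Finally, for the remainder I use that $\mathbb{E}_\xi[\,\cdot\,]$ of the integrand is even and $C^\infty$ in $\beta$ (smoothness follows from Lemma~\ref{lem:Jacobian-M} via dominated convergence, justified because $x^\star=\beta v$ is linear in $\beta$), so it may be written as $g(\beta^2)$ with $g$ smooth, whence $g(\beta^2)=g(0)+g'(0)\beta^2+\tfrac12 g''(\zeta)\beta^4$ for some $\zeta\in[0,\beta^2]$. To convert this into the explicit bound $\|J(\beta)-J(0)-\tfrac{\beta^2}{\sigma^2}(K_{2,\mathrm{sig}}+K_{2,\mathrm{resp}})\|_F\le C\beta^4/\sigma^4\|v\|^4$, I bound the fourth $\beta$-derivative of the integrand uniformly for bounded $\beta$: each differentiation of $\gamma_\ell$ contributes a factor $\sigma^{-2}\|Y\|\|v\|$ through the chain rule on the logits, exactly as in~\eqref{eq:uniform-deriv-bound}, the data factors $z_\ell$ are bounded by $\|Y\|$, and $\|Y\|\le\|\xi\|+|\beta|\|v\|$ has all Gaussian moments uniformly over $|\beta|$ in a fixed neighborhood. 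Tracking the resulting powers of $\sigma^{-2}$ and $\|v\|$ yields the stated quartic estimate and completes the proof.
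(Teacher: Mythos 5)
Your overall strategy coincides with the paper's proof: substitute the responsibility expansion of Lemma~\ref{lem:resp-near-uniform-MRA-gamma} into the covariance form~\eqref{eq:Jacobian-MRA-proof}, eliminate odd orders by Gaussian parity, identify the $\beta^2$ coefficient, and bound a quartic remainder. Several of your ingredients are fine, and in places cleaner than the paper: the joint sign flip $(\beta,\xi)\mapsto(-\beta,-\xi)$ gives evenness of $J(\beta)$ as a function (stronger than the paper's term-by-term odd-moment vanishing), your computation of $J(0)=I-\Pi_{\mathrm{mean}}$ is correct, the identification of the $\xi$-free quadratic part with $K_{2,\mathrm{sig}}$ is correct, and the remainder bound via evenness plus uniform bounds on the fourth $\beta$-derivative is a legitimate alternative to the paper's entrywise Isserlis estimate.

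The gap is that you stop exactly at the step that constitutes the content of the lemma: verifying that the $\xi$-dependent part of the $\beta^2$ coefficient equals $K_{2,\mathrm{resp}}$ in~\eqref{eq:K2resp}. You explicitly defer this (``I expect this regrouping to be the main obstacle''), but it is not routine bookkeeping, because the raw expansion contains even-in-$\xi$ terms that do \emph{not} appear in~\eqref{eq:K2resp} and are individually nonzero; the stated formula holds only after specific cancellations. Concretely, write $w_\ell=\mathcal{T}_\ell^{-1}\xi$, $a_\ell=\mathcal{T}_\ell^{-1}\mathcal{T}_S v$, $\bar a=\frac1d\sum_\ell a_\ell=\Pi_{\mathrm{mean}}v$, $\tilde v=v-\Pi_{\mathrm{mean}}v$, $\overline{\eta^{(0)}w}=\frac1d\sum_\ell \eta^{(0)}_\ell w_\ell$, and let $\eta_\ell=\beta\eta^{(0)}_\ell+\beta^2\eta^{(1)}_\ell$ be the exact logit expansion. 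Then: (a) the signal--noise cross term $\frac1d\sum_\ell\mathbb{E}\pp{\eta^{(0)}_\ell\p{w_\ell a_\ell^\top+a_\ell w_\ell^\top}}$ equals $\tilde v\,(\Pi_{\mathrm{mean}}v)^\top+(\Pi_{\mathrm{mean}}v)\,\tilde v^\top$ (using $\mathbb{E}\pp{\eta^{(0)}_\ell w_\ell}=\tilde v$), which is nonzero whenever $\Pi_{\mathrm{mean}}v\neq 0$, and must be shown to cancel exactly against $-\mathbb{E}\pp{\bar a\,(\overline{\eta^{(0)}w})^\top+(\overline{\eta^{(0)}w})\,\bar a^\top}=-\pp{(\Pi_{\mathrm{mean}}v)\,\tilde v^\top+\tilde v\,(\Pi_{\mathrm{mean}}v)^\top}$ coming from the rank-one product; (b) the terms carrying $\eta^{(1)}_\ell$, e.g.\ $\frac1d\sum_\ell\eta^{(1)}_\ell\,\mathbb{E}\pp{w_\ell w_\ell^\top}$, vanish only because $\sum_\ell\eta^{(1)}_\ell=0$ and $\mathbb{E}\pp{w_\ell w_\ell^\top}=\sigma^2 I$ is $\ell$-independent; (c) the quartic cross term $\mathbb{E}\pp{\bar w\,(\overline{c\,w})^\top}$ with $c_\ell=(\eta^{(0)}_\ell)^2-\overline{(\eta^{(0)})^2}$, produced by pairing $\bar w$ with the cubic-in-$\xi$ part of $\sum_\ell\gamma_\ell z_\ell$, vanishes only after an Isserlis computation combined with the identity $\Pi_{\mathrm{mean}}\mathcal{T}_\ell\tilde v=0$. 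Until (a)--(c) are carried out, your argument establishes only that $J(\beta)=J(0)+\beta^2K_2+O(\beta^4)$ for \emph{some} matrix $K_2$ whose $\xi$-free part is $K_{2,\mathrm{sig}}$, not the stated closed form of $K_{2,\mathrm{resp}}$, which is precisely what the lemma asserts.
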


\begin{proof}[Proof of Lemma~\ref{lem:J-up-to-Delta2-correct}]
Recall that conditioned on the latent shift $S\sim \mathrm{Unif} \{0,1, \ldots d-1 \}$, we have
\begin{align}
    Y = \beta \mathcal{T}_S v + \xi, \qquad \xi \sim \mathcal{N}(0,\sigma^2 I_d),
\end{align}
and 
\begin{align}
    z_\ell = \mathcal{T}_\ell^{-1}Y = \mathcal{T}_\ell^{-1}\xi + \beta \mathcal{T}_\ell^{-1}\mathcal{T}_S v,
    \qquad \eta_\ell(Y) = \beta \eta_\ell^{(0)} + O(\beta^2),
\end{align}
with $\eta_\ell^{(0)}$ begin linear in $\xi$. Furthermore, recall the definition of $\gamma_\ell^{(\beta)}$ in ~\eqref{eq:gamma-def}. We start with the covariance representation of the population Jacobian given in Lemma~\ref{lem:Jacobian-M},
\begin{align}
    J(\beta) &=\frac{1}{\sigma^2}  \mathbb{E} \pp{\sum_{\ell=0}^{d-1}\gamma_\ell^{(\beta)}(Y) z_\ell z_\ell^\top -\p{\sum_{\ell=0}^{d-1}\gamma_\ell^{(\beta)}(Y) z_\ell}\p{\sum_{r=0}^{d-1}\gamma_r^{(\beta)}(Y) z_r}^\top }.
    \label{eq:J-as-covar-rep}
\end{align}
By Lemma~\ref{lem:resp-near-uniform-MRA-gamma}, the responsibilities admit the second-order softmax expansion,
\begin{align}
    \gamma_\ell^{(\beta)}(Y) &=\frac{1}{d} +\frac{1}{d} \eta_\ell(Y) +\frac{1}{2d}\p{\eta_\ell(Y)^2-\overline{\eta^2}(Y)} +R_\ell(\beta;Y),
    \label{eq:resp-expansion}
\end{align}
with the remainder bounded by
\begin{align}
    |R_\ell(\beta;Y)| \leq C \|\eta(Y)\|_\infty^3 \leq  C \frac{|\beta|^3}{\sigma^6} \|Y\|_2^3\|v\|_2^3 .
    \label{eq:resp-remainder}
\end{align}
We now insert \eqref{eq:resp-expansion} into \eqref{eq:J-as-covar-rep} and retain terms up to order $O(\beta^4)$.

\paragraph{Step 1: $\beta=0$.}
Setting $\beta=0$ gives $Y=\xi\sim\mathcal{N}(0,\sigma^2 I_d)$, $\gamma_\ell^{(0)}= 1/d$.
By using the orthonormality of the shifts $\{ \mathcal{T}_\ell \}_{\ell = 0}^{d-1}$ and $\mathbb{E}[\xi\xi^\top]=\sigma^2 I$ we obtain
\begin{align}
    \frac{1}{\sigma^2} \mathbb{E} \pp{\frac{1}{d}\sum_{\ell=0}^{d-1} (\mathcal{T}_\ell^{-1}\xi) (\mathcal{T}_\ell^{-1}\xi)^\top} &= I,
    \label{eq:baseline-second}
\end{align}
and, using $\Pi_{\mathrm{mean}}^2=\Pi_{\mathrm{mean}}$, we get
\begin{align}
    \frac{1}{\sigma^2} \mathbb{E} \pp{\p{\frac{1}{d}\sum_{\ell=0}^{d-1} \mathcal{T}_\ell^{-1}\xi}\p{\frac{1}{d}\sum_{r=0}^{d-1} \mathcal{T}_r^{-1}\xi}^\top} &= \Pi_{\mathrm{mean}}.
    \label{eq:baseline-first}
\end{align}
Subtracting \eqref{eq:baseline-first} from \eqref{eq:baseline-second} in \eqref{eq:J-as-covar-rep} yields
\begin{align}
    J(0)=I-\Pi_{\mathrm{mean}}.
    \label{eq:J0}
\end{align}

\paragraph{Step 2: Odd orders in $\beta$ vanish.}
Any contribution to $J(\beta)$ of odd order in $\beta$ can be expressed (after expanding $\gamma_\ell^{(\beta)}(Y)$ and $z_\ell$ in powers of $\beta$) as a finite linear combination of conditional expectations of the form $\mathbb{E}\left[P(\xi)\,\middle|\,S\right]$, where $P$ is a polynomial in the noise vector $\xi$ of odd total degree (with coefficients depending on $v$ and the shifts, and possibly on $S$). Conditioned on $S$, the random vector $\xi\sim\mathcal{N}(0,\sigma^2 I_d)$ remains a centered Gaussian, hence $\mathbb{E}\left[P(\xi)\,\middle|\,S\right]=0$ for every such odd polynomial $P$ by Gaussian symmetry. Averaging over $S$ preserves zero, so all odd-order terms in the expansion of $J(\beta)$ vanish.

\paragraph{Step 3: Organizing the $O(\beta^2)$ terms.}
Keeping only the $O(\beta^2)$ pieces produces two non-vanishing contributions:

\begin{enumerate}
    \item \emph{Responsibility perturbation with fixed signal amplitude.}
    Use $\eta_\ell=\beta\eta_\ell^{(0)}+O(\beta^2)$ in \eqref{eq:resp-expansion} and set $z_\ell=\mathcal{T}_\ell^{-1}\xi$.
    \begin{align}
        \frac{1}{\sigma^2} &         \mathbb{E} \left[\frac{1}{2d}\sum_{\ell=0}^{d-1}\p{(\beta\eta_\ell^{(0)})^2-\overline{(\beta\eta^{(0)})^2}} \p{ \mathcal{T}_\ell^{-1}\xi}\p{ \mathcal{T}_\ell^{-1}\xi}^\top - \p{\overline{\beta\eta^{(0)}  \mathcal{T}^{-1}\xi}}\p{\overline{\beta\eta^{(0)}  \mathcal{T}^{-1}\xi}}^\top \right]
      \nonumber  \\ &  = \frac{\beta^2}{\sigma^2} K_{2,\mathrm{resp}} .
        \label{eq:resp-block}
    \end{align}

    \item \emph{Pure–signal shift under uniform responsibilities.}
    With $\gamma_\ell^{(0)} = 1/d$ and $z_\ell=\mathcal{T}_\ell^{-1}\xi+\beta\mathcal{T}_\ell^{-1} \mathcal{T}_S v$,
    \begin{align}
        \frac{1}{\sigma^2} &        \mathbb{E} \left[\frac{1}{d}\sum_{\ell=0}^{d-1}(\beta\mathcal{T}_\ell^{-1} \mathcal{T}_S v)(\beta\mathcal{T}_\ell^{-1} \mathcal{T}_S v)^\top - \p{\frac{1}{d}\sum_{\ell=0}^{d-1}\beta\mathcal{T}_\ell^{-1} \mathcal{T}_S v}\p{\frac{1}{d}\sum_{r=0}^{d-1}\beta\mathcal{T}_r^{-1} \mathcal{T}_S v}^\top\right]
       \nonumber \\ & \qquad   =  \frac{\beta^2}{\sigma^2} K_{2,\mathrm{sig}},
        \label{eq:sig-block}
    \end{align}
    and averaging over the uniform random shift $S\sim \mathrm{Unif} \{0,1, \ldots d-1 \}$ yields the closed form for $K_{2,\mathrm{sig}}$ stated in \eqref{eq:K2sig}.

\end{enumerate}

Combining \eqref{eq:resp-block}–\eqref{eq:sig-block} with the baseline \eqref{eq:J0} gives
\begin{align}
    J(\beta) = J(0) + \frac{\beta^2}{\sigma^2}\p{ K_{2,\mathrm{sig}} + K_{2,\mathrm{resp}}} + R(\beta, v).
    \label{eq:J-quad}
\end{align}

\paragraph{Step 4: Quartic remainder.}
Step (2) implies that all cubic terms vanish by Gaussian parity. We now bound $R(\beta,v)$ at order $\beta^4$. From the softmax expansion in Lemma~\ref{lem:resp-near-uniform-MRA-gamma} and the decomposition $z_\ell = \mathcal{T}_\ell^{-1}\xi + \beta \mathcal{T}_\ell^{-1}\mathcal{T}_S v$, every term contributing to $R(\beta,v)$ is of the form
\begin{align}
    \frac{\beta^4}{\sigma^8}\mathbb{E}\pp{ Q(\xi,v,S)},
\end{align}
where $Q$ is a polynomial in $(\xi,v)$ of total degree at most $4$ in $\xi$ and at most $4$ in $v$.
In particular, every matrix entry of $R(\beta,v)$ is a linear combination of expectations of products of the form $\mathbb{E}\pp{\xi_{i_1}\xi_{i_2}\xi_{i_3}\xi_{i_4}}$. By Isserlis' theorem for centered Gaussian vectors, there exists a constant $C_0=C_0(d)>0$, depending on $d$ only, such that
\begin{align}
    \abs{\mathbb{E}\pp{\xi_{i_1}\xi_{i_2}\xi_{i_3}\xi_{i_4}}}\le C_0\sigma^4,
\end{align}
for all quadruples $(i_1,i_2,i_3,i_4)$. Moreover, every coefficient multiplying such moment terms is a polynomial in $v$ of degree at most $4$, and is therefore bounded by $C_1 \|v\|_2^4$ for some $C_1=C_1(d)>0$.

Combining these bounds and summing over the finitely many index patterns, we obtain an entrywise estimate
\begin{align}
    \big|[R(\beta,v)]_{ij}\big| \le C'\,\frac{\beta^4}{\sigma^4}\,\|v\|_2^4, \qquad 1\le i,j\le d,
\end{align}
for a constant $C'>0$ depending only on $d$.
Since the Frobenius norm satisfies
\begin{align}
    \|R(\beta,v)\|_F^2 = \sum_{i,j} [R(\beta,v)]_{ij}^2 \le d^2 \p{\max_{i,j} |[R(\beta,v)]_{ij}|}^2,
\end{align}
we conclude that there exists a constant $C>0$, depending on $d$ only, such that
\begin{align}
    \|R(\beta,v)\|_F = \big\|J(\beta)-J(0)-\frac{\beta^2}{\sigma^2}\p{K_{2,\mathrm{sig}}+K_{2,\mathrm{resp}}}\big\|_F \le C\,\frac{\beta^4}{\sigma^4}\,\|v\|_2^{4}.
\end{align}
This proves the remainder bound in \eqref{eq:J-second-order-main}.
\end{proof}

Next, we derive a closed-form for the quadratic coefficient $K_{2,\mathrm{resp}}$.

\begin{lem}[Closed form of the second–order term $K_{2,\mathrm{resp}}$]
\label{lem:Delta2-closed}
Let $v\in\mathbb{R}^d$ and decompose $v=\bar{v}+\tilde{v}$ with $\bar{v}=\Pi_{\mathrm{mean}}v$ and $\tilde{v}=P_0v=v-\bar{v}$.
Then the coefficient $K_{2,\mathrm{resp}}$ defined in Lemma~\ref{lem:J-up-to-Delta2-correct} admits the explicit representation
\begin{align}
    K_{2,\mathrm{resp}} = -\frac{1}{d}\left[\sum_{s=0}^{d-1}\mathcal{T}_s \tilde{v} \tilde{v}^\top \mathcal{T}_s^\top + \sum_{s=0}^{d-1}(\mathcal{T}_s\tilde{v}) \tilde{v}^\top \mathcal{T}_s + \sum_{s=0}^{d-1}\big\langle \tilde{v},\mathcal{T}_s \tilde{v}\big\rangle \mathcal{T}_s \right].
    \label{eq:K2resp-closed}
\end{align}
\end{lem}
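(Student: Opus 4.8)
The plan is to reduce the fourth-order Gaussian expectation defining $K_{2,\mathrm{resp}}$ to elementary tensor contractions via Isserlis' (Wick's) theorem, after a preliminary simplification that eliminates the mean component of $v$. The first step I would carry out is to show that the linear coefficient $\eta_\ell^{(0)}$ depends only on $\tilde v=P_0v$. Writing $v=\Pi_{\mathrm{mean}}v+\tilde v=\bar v\mathbf 1+\tilde v$ and using $\mathcal T_\ell\mathbf 1=\mathbf 1$, one has $\langle\xi,\mathcal T_\ell v\rangle=\bar v\langle\xi,\mathbf 1\rangle+\langle\xi,\mathcal T_\ell\tilde v\rangle$, while the shift-average $\overline{\langle\xi,\mathcal T v\rangle}$ equals $\bar v\langle\xi,\mathbf 1\rangle$ since $\tfrac1d\sum_\ell\mathcal T_\ell\tilde v=\Pi_{\mathrm{mean}}\tilde v=0$. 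Subtracting yields the clean identity
\begin{align}
    \eta_\ell^{(0)}=\frac{1}{\sigma^2}\big\langle\xi,\mathcal T_\ell\tilde v\big\rangle .
\end{align}
A rescaling $\xi=\sigma\zeta$, $\zeta\sim\mathcal N(0,I)$, then shows that every summand in \eqref{eq:K2resp} is $\sigma$-free, so I may set $\sigma=1$ and work with $\eta_\ell^{(0)}=\langle\xi,p_\ell\rangle$, where $p_\ell\triangleq\mathcal T_\ell\tilde v$.

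Next I would split $K_{2,\mathrm{resp}}=A-B$ along the two terms of \eqref{eq:K2resp} and evaluate each block by Wick's theorem. For the diagonal block $A$ I would use $\mathbb E[(\xi^\top Q\xi)\,\xi\xi^\top]=\tr(Q)\,I+2Q$ for symmetric $Q$. Taking $Q=p_\ell p_\ell^\top$ and conjugating by $\mathcal T_\ell^{-1}$ collapses the $\ell$-dependence through $\mathcal T_\ell^{-1}p_\ell p_\ell^\top\mathcal T_\ell=\tilde v\tilde v^\top$ (by unitarity of the shifts), while the centering term $\overline{(\eta^{(0)})^2}=\xi^\top G\xi$ with $G\triangleq\tfrac1d\sum_r p_r p_r^\top$ produces, after the reindexing $\sum_\ell\mathcal T_\ell^{-1}G\mathcal T_\ell=dG$, a cancellation of the identity pieces; the net outcome is $A=\tilde v\tilde v^\top-G$. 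For the rank-one block $B$ I would use the companion identity $\mathbb E[(\xi^\top p_\ell)(\xi^\top p_m)\,\xi\xi^\top]=\langle p_\ell,p_m\rangle I+p_\ell p_m^\top+p_m p_\ell^\top$, conjugate by $\mathcal T_\ell^{-1}(\cdot)\,\mathcal T_m$, and reindex the double sum by $s=m-\ell\bmod d$. Using $\mathcal T_\ell^{-1}p_\ell=\tilde v$, $p_m^\top\mathcal T_m=\tilde v^\top$, and $\mathcal T_a\mathcal T_b=\mathcal T_{a+b}$, the three contributions become $\tfrac1d\sum_s\langle\tilde v,\mathcal T_s\tilde v\rangle\mathcal T_s$, $\tilde v\tilde v^\top$, and $\tfrac1d\sum_s(\mathcal T_s\tilde v)\tilde v^\top\mathcal T_s$.

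Finally I would assemble $A-B$. The $\tilde v\tilde v^\top$ terms cancel, and recognizing $G=\tfrac1d\sum_s(\mathcal T_s\tilde v)(\mathcal T_s\tilde v)^\top=\tfrac1d\sum_s\mathcal T_s\tilde v\tilde v^\top\mathcal T_s^\top$, what remains is precisely the three-term expression \eqref{eq:K2resp-closed}. The main obstacle is the bookkeeping in the second step: conjugating the Gaussian fourth-moment tensors by the shift operators and carrying out the two double-sum reindexings over the cyclic group without index or transpose errors, since $\mathcal T_\ell^{-1}(\cdot)\mathcal T_m$ mixes the left and right actions. The conceptual content is light once the reduction $\eta_\ell^{(0)}=\langle\xi,\mathcal T_\ell\tilde v\rangle$ is in place; everything else is a disciplined application of Isserlis' formula together with the group relations $\mathcal T_\ell^\top=\mathcal T_\ell^{-1}$ and $\mathcal T_a\mathcal T_b=\mathcal T_{a+b}$.
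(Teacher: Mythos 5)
Your proposal is correct and follows essentially the same route as the paper's proof: both reduce $\eta_\ell^{(0)}$ to $\langle\xi,\mathcal{T}_\ell\tilde v\rangle$, split $K_{2,\mathrm{resp}}$ into the quadratic-weight block and the rank-one block, evaluate each with the same two Isserlis identities, reindex the double sums over the cyclic group, and observe the cancellation of the $\tilde v\tilde v^\top$ terms. The only cosmetic difference is that you conjugate the fourth-moment tensors of $\xi$ directly by the shifts, whereas the paper first passes to the standardized per-shift variables $g_\ell=\sigma^{-1}\mathcal{T}_\ell^{-1}\xi$ and uses their distributional shift-equivariance; the two bookkeeping schemes are equivalent.
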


\begin{proof}[Proof of Lemma~\ref{lem:Delta2-closed}]
Recall the definition from Lemma~\ref{lem:J-up-to-Delta2-correct} (using the overline conventions):
\begin{align}
    K_{2,\mathrm{resp}} = \mathbb{E}_\xi \left[\frac{1}{2d}\sum_{\ell=0}^{d-1} \p{(\eta_\ell^{(0)})^2-\overline{(\eta^{(0)})^2}} \p{ \mathcal{T}_\ell^{-1}\xi}\p{ \mathcal{T}_\ell^{-1}\xi}^\top - \p{\overline{\eta^{(0)}  \mathcal{T}^{-1}\xi}}\p{\overline{\eta^{(0)} \mathcal{T}^{-1}\xi}}^\top \right],
    \label{eq:K2resp-def}
\end{align}
where
\begin{align}
    \eta_\ell^{(0)} &= \frac{1}{\sigma^2}\p{\langle \xi,\mathcal{T}_\ell v\rangle-\overline{\langle \xi,\mathcal{T} v\rangle}} = \frac{1}{\sigma^2} \big\langle \xi,\mathcal{T}_\ell \tilde{v}\big\rangle .
\end{align}
Introduce the standardized noise $g_\ell\triangleq \sigma^{-1} \mathcal{T}_\ell^{-1}\xi\sim\mathcal{N}(0,I_d)$. Then
\begin{align}
    \eta_\ell^{(0)}=\frac{1}{\sigma} \big\langle g_\ell,\tilde{v}\big\rangle,
    \qquad \mathcal{T}_\ell^{-1}\xi = \sigma g_\ell.
\end{align}
Substituting these into \eqref{eq:K2resp-def}, all factors of $\sigma$ cancel, and $K_{2,\mathrm{resp}}$ reduces to expectations of quadratic and bilinear Gaussian forms in $\{g_\ell\}$.

Define the first term in~\eqref{eq:K2resp-def},
\begin{align}
    \mathsf{A} & \triangleq \mathbb{E} \left[\frac{1}{2d}\sum_{\ell=0}^{d-1}\p{\langle g_\ell,\tilde{v}\rangle^2-\overline{\langle g,\tilde{v}\rangle^2}} g_\ell g_\ell^\top\right] \notag\\
    &=\frac{1}{2d}\sum_{\ell=0}^{d-1}\left\{\mathbb{E} \pp{g_\ell g_\ell^\top \langle g_\ell,\tilde{v}\rangle^2} -\mathbb{E} \pp{g_\ell g_\ell^\top  \overline{\langle g,\tilde{v}\rangle^2}}\right\}.
    \label{eq:A-split}
\end{align}
For each fixed $\ell$, using the shift relation $g_r \stackrel{\mathcal{D}}{=} \mathcal{T}_{r-\ell}^{-1}g_\ell$ we have
\begin{align}
    \overline{\langle g,\tilde{v}\rangle^2} = \frac{1}{d}\sum_{r=0}^{d-1}\langle g_r,\tilde{v}\rangle^2 \stackrel{\mathcal{D}}{=}\frac{1}{d}\sum_{s=0}^{d-1}\langle g_\ell,\mathcal{T}_s \tilde{v}\rangle^2,
\end{align}
hence
\begin{align}
\mathbb{E} \pp{g_\ell g_\ell^\top  \overline{\langle g,\tilde{v}\rangle^2}}
=\frac{1}{d}\sum_{s=0}^{d-1}\mathbb{E} \pp{g_\ell g_\ell^\top \langle g_\ell,\mathcal{T}_s \tilde{v}\rangle^2}.
\label{eq:centered-to-s}
\end{align}
By Isserlis's theorem, for any $u\in\mathbb{R}^d$ and $g\sim\mathcal{N}(0,I_d)$,
\begin{align}
    \mathbb{E} \pp{g g^\top \langle g,u\rangle^2}=2uu^\top+\|u\|^2I,
    \qquad \mathbb{E}[g g^\top]=I.
\end{align}
Applying this with $u=\tilde{v}$ and $u=\mathcal{T}_s\tilde{v}$ yields
\begin{align}
    \mathbb{E} \pp{g_\ell g_\ell^\top \langle g_\ell,\tilde{v}\rangle^2} &=2\tilde{v}\tilde{v}^\top+\|\tilde{v}\|^2 I, \label{eq:wick-first}\\
    \mathbb{E} \pp{g_\ell g_\ell^\top \langle g_\ell,\mathcal{T}_s\tilde{v}\rangle^2} &=2(\mathcal{T}_s\tilde{v})(\mathcal{T}_s\tilde{v})^\top+\|\tilde{v}\|^2 I. \label{eq:wick-second}
\end{align}
Plugging \eqref{eq:centered-to-s}–\eqref{eq:wick-second} into \eqref{eq:A-split},
\begin{align}
    \mathsf{A} &\triangleq\frac{1}{2}\p{2\tilde{v}\tilde{v}^\top+\|\tilde{v}\|^2 I} -\frac{1}{2d}\sum_{s=0}^{d-1}\p{2(\mathcal{T}_s\tilde{v})(\mathcal{T}_s\tilde{v})^\top+\|\tilde{v}\|^2 I} \notag\\    &=\tilde{v}\tilde{v}^\top+\frac{\|\tilde{v}\|^2}{2}I -\frac{1}{d}\sum_{s=0}^{d-1}(\mathcal{T}_s\tilde{v})(\mathcal{T}_s\tilde{v})^\top -\frac{\|\tilde{v}\|^2}{2}I \notag\\
    &=\tilde{v}\tilde{v}^\top-\frac{1}{d}\sum_{s=0}^{d-1}\mathcal{T}_s \tilde{v}\tilde{v}^\top \mathcal{T}_s^\top.
\end{align}
Next, define the second term in~\eqref{eq:K2resp-def},
\begin{align}
    \mathsf{B} &\triangleq \mathbb{E} \left[\left(\frac{1}{d}\sum_{\ell=0}^{d-1}\langle g_\ell,\tilde{v}\rangle g_\ell\right) \left(\frac{1}{d}\sum_{r=0}^{d-1}\langle g_r,\tilde{v}\rangle g_r\right)^\top\right]
    \notag\\
    &=
    \frac{1}{d^2}\sum_{\ell=0}^{d-1}\sum_{r=0}^{d-1} \mathbb{E} \pp{ \langle g_\ell,\tilde{v}\rangle \langle g_r,\tilde{v}\rangle  g_\ell g_r^\top }.
    \label{eq:B-double-sum}
\end{align}
Fix $\ell$ and set $s\equiv r-\ell \ (\mathrm{mod}\ d)$.
By circular–shift invariance, $(g_\ell,g_r)\stackrel{\mathcal{D}}{=}(g_0,g_s)$ with $g_s=\mathcal{T}_s^{-1}g_0$.
Hence $\langle g_r,\tilde{v}\rangle=\langle g_0,\mathcal{T}_s\tilde{v}\rangle$, and $g_r=\mathcal{T}_s^{-1}g_0$.
Since the summand in \eqref{eq:B-double-sum} no longer depends on $\ell$, summing over $\ell$ yields a factor $d$:
\begin{align}
    \mathsf{B}
    &=
    \frac{1}{d}\sum_{s=0}^{d-1} \mathbb{E} \pp{ \langle g_0,\tilde{v}\rangle \langle g_0,\mathcal{T}_s\tilde{v}\rangle  g_0(\mathcal{T}_s^{-1}g_0)^\top }
    \notag\\
    &=
    \frac{1}{d}\sum_{s=0}^{d-1} \mathbb{E} \pp{ \langle g,\tilde{v}\rangle \langle g,\mathcal{T}_s\tilde{v}\rangle  g g^\top } \mathcal{T}_s,
    \label{eq:B-reduced}
\end{align}
where in the last step we used $(\mathcal{T}_s^{-1}g)^\top = g^\top \mathcal{T}_s$ and wrote $g\equiv g_0\sim\mathcal{N}(0,I_d)$.
By Isserlis's theorem, for $g\sim\mathcal{N}(0,I_d)$ and any $u,v\in\mathbb{R}^d$,
\begin{align}
    \mathbb{E} \pp{g g^\top  \langle g,u\rangle \langle g,v\rangle} &= \mathbb{E} \pp{(g g^\top)(g^\top u)(g^\top v)} = u v^\top + v u^\top + \langle u,v\rangle I.
    \label{eq:wick-bilinear}
\end{align}
From \eqref{eq:B-reduced}–\eqref{eq:wick-bilinear},
\begin{align}
    \mathsf{B} &= \frac{1}{d}\sum_{s=0}^{d-1} \p{\tilde{v}(\mathcal{T}_s\tilde{v})^\top + (\mathcal{T}_s\tilde{v})\tilde{v}^\top + \langle \tilde{v},\mathcal{T}_s\tilde{v}\rangle I} \mathcal{T}_s.
    \label{eq:B-after-wick}
\end{align}
Since $(\mathcal{T}_s\tilde{v})^\top \mathcal{T}_s=\tilde{v}^\top$, we can rewrite the first term as:
\begin{align}
    \tilde{v}(\mathcal{T}_s\tilde{v})^\top \mathcal{T}_s=\tilde{v} \tilde{v}^\top.
\end{align}
Thus \eqref{eq:B-after-wick} becomes
\begin{align}
    \mathsf{B} &= \frac{1}{d}\sum_{s=0}^{d-1} \p{\tilde{v}\tilde{v}^\top  +  (\mathcal{T}_s\tilde{v}) \tilde{v}^\top \mathcal{T}_s + \langle \tilde{v},\mathcal{T}_s\tilde{v}\rangle \mathcal{T}_s}.
\end{align}
Finally, we combine back the terms $\mathsf{A} - \mathsf{B}$. Since $\frac{1}{d}\sum_{s=0}^{d-1}\tilde{v}\tilde{v}^\top=\tilde{v}\tilde{v}^\top$, the $\tilde{v}\tilde{v}^\top$ terms cancel between $\mathsf{A}$ and $\mathsf{B}$, giving
\begin{align}
    K_{2,\mathrm{resp}} = \mathsf{A}-\mathsf{B} = -\frac{1}{d}\left[\sum_{s=0}^{d-1}\mathcal{T}_s \tilde{v} \tilde{v}^\top \mathcal{T}_s^\top + \sum_{s=0}^{d-1}(\mathcal{T}_s\tilde{v}) \tilde{v}^\top \mathcal{T}_s + \sum_{s=0}^{d-1}\langle \tilde{v},\mathcal{T}_s \tilde{v}\rangle \mathcal{T}_s\right],
\end{align}
which is precisely \eqref{eq:K2resp-closed}.
\end{proof}

\paragraph{Combining $K_{2,\mathrm{sig}}$ and $K_{2,\mathrm{resp}}$ into the Jacobian expansion $J(\beta)$.} 
After deriving $K_{2,\mathrm{resp}}$ in Lemma~\ref{lem:Delta2-closed}, now we combine it with $K_{2,\mathrm{sig}}$. Recall from~\eqref{eq:K2sig},
\begin{align}
    K_{2,\mathrm{sig}} &= \frac{1}{d}\sum_{s=0}^{d-1} (\mathcal{T}_s v)(\mathcal{T}_s v)^\top - (\Pi_{\mathrm{mean}}v)(\Pi_{\mathrm{mean}}v)^\top,
\end{align}
and write $v=\bar{v}+\tilde{v}$ with $\bar{v}=\Pi_{\mathrm{mean}}v$ and $\tilde{v}=P_0 v$.
Since $\mathcal{T}_s\bar{v}=\bar{v}$ and $\frac{1}{d}\sum_s \mathcal{T}_s \tilde{v} = \Pi_{\mathrm{mean}}\tilde{v}=0$, we have
\begin{align}
    K_{2,\mathrm{sig}} = \frac{1}{d}\sum_{s=0}^{d-1} \mathcal{T}_s \tilde{v} \tilde{v}^\top \mathcal{T}_s^\top. \label{eqn:K2-sig-simplified}
\end{align}
From Lemma~\ref{lem:Delta2-closed},
\begin{align}
    K_{2,\mathrm{resp}} = -\frac{1}{d}\left[ \sum_{s=0}^{d-1}\mathcal{T}_s \tilde{v} \tilde{v}^\top \mathcal{T}_s^\top + \sum_{s=0}^{d-1}(\mathcal{T}_s\tilde{v}) \tilde{v}^\top \mathcal{T}_s + \sum_{s=0}^{d-1}\big\langle \tilde{v},\mathcal{T}_s \tilde{v}\big\rangle \mathcal{T}_s \right]. \label{K2-resp-simplified}
\end{align}
Adding~\eqref{eqn:K2-sig-simplified} and~\eqref{K2-resp-simplified}, and noting that the first sums cancel, we obtain the combined quadratic coefficient
\begin{align}
    K_{2,\mathrm{sig}} + K_{2,\mathrm{resp}} = -\frac{1}{d}\left[\sum_{s=0}^{d-1}(\mathcal{T}_s\tilde{v}) \tilde{v}^\top \mathcal{T}_s + \sum_{s=0}^{d-1}\big\langle \tilde{v},\mathcal{T}_s \tilde{v}\big\rangle \mathcal{T}_s\right]. \label{eqn:explcit-sum-of_Ks}
\end{align}
Then, combining~\eqref{eqn:explcit-sum-of_Ks} with~\eqref{eq:J-second-order-main} results,
\begin{align}
    J(\beta) &= J(0) + \frac{\beta^2}{\sigma^2}\p{K_{2,\mathrm{sig}}+K_{2,\mathrm{resp}}} + O \p{\frac{\beta^4}{\sigma^4}}
    \\
    &= (I-\Pi_{\mathrm{mean}}) - \frac{\beta^2}{\sigma^2} \frac{1}{d} \left[\sum_{s=0}^{d-1}\p{\mathcal{T}_s \tilde{v}} \tilde{v}^\top \mathcal{T}_s + \sum_{s=0}^{d-1}\big\langle \tilde{v},\mathcal{T}_s \tilde{v}\big\rangle \mathcal{T}_s \right] + O \p{\frac{\beta^4}{\sigma^4}},
\end{align}
proving~\eqref{eqn:Jacobian-spatial-domain-exp}.

\subsection{Proof of Proposition~\ref{prop:K2-MRA}}
In the previous subsection we showed that the real-space expansion of the Jacobian is
\begin{align}\label{eq:J-real-again}
    J(\beta) = (I-\Pi_{\mathrm{mean}}) - \frac{\beta^2}{\sigma^2} \frac{1}{d}  \left[\sum_{s=0}^{d-1}\p{\mathcal{T}_s \tilde{v}} \tilde{v}^\top \mathcal{T}_s + \sum_{s=0}^{d-1}\big\langle \tilde{v},\mathcal{T}_s \tilde{v}\big\rangle \mathcal{T}_s\right] + O \p{\frac{\beta^4}{\sigma^4}}.
\end{align}
Define
\begin{align}
    S_0 \triangleq \frac{1}{d}\sum_{s=0}^{d-1} (\mathcal{T}_s \tilde{v}) \tilde{v}^{\top}\mathcal{T}_s, \label{eqn:app-C72}
    \\
    S_1 \triangleq \frac{1}{d}\sum_{s=0}^{d-1}\langle \tilde{v},\mathcal{T}_s \tilde{v}\rangle \mathcal{T}_s, \label{eqn:app-C73}
\end{align}
so that the order-$\beta^2$ term in \eqref{eq:J-real-again} equals $-\frac{\beta^2}{\sigma^2}(S_0+S_1)$.

Let $F\in\mathbb{C}^{d\times d}$ be the unitary DFT matrix with columns $\{f_k\}_{k=0}^{d-1}$, as defined in~\eqref{eqn:DFT-col}, so $F^\ast F=FF^\ast=I$ and $f_0=d^{-1/2}\mathbf{1}$. Write $\s{V} \triangleq F^\ast v \in \mathbb{C}^d$ and $\s{\tilde{V}} \triangleq F^\ast \tilde{v}$; then $\s{V}_k = \s{\tilde{V}}_k$ for every $k \neq 0$, and $\s{\tilde{V}}_0 = 0$. For a vector $u$, let $\diag(u)$ denote the diagonal matrix with the entries of $u$ on its diagonal. Then, the circular shifts $\{\mathcal{T}_s\}_{s=0}^{d-1}$ diagonalize in the DFT basis:
\begin{align}\label{eq:shift-diag}
    F^\ast \mathcal{T}_s F = \Lambda_s, \qquad \Lambda_s = \diag \p{e^{-2\pi isk/d}}_{k=0}^{d-1}.
\end{align}

\paragraph{Fourier image of $S_1$.}
Using \eqref{eq:shift-diag} and $F^\ast \tilde{v} = \s{\tilde{V}}$, we have
\begin{align}
    F^\ast S_1 F = \frac{1}{d}\sum_{s=0}^{d-1} \langle F^\ast\tilde{v}, F^\ast \mathcal{T}_s \tilde{v}\rangle  F^\ast \mathcal{T}_s F
    & = \frac{1}{d}\sum_{s=0}^{d-1} \langle \s{\tilde{V}}, F^\ast \mathcal{T}_s F F^\ast \tilde{v} \rangle \Lambda_s 
    \\ & = 
    \frac{1}{d}\sum_{s=0}^{d-1} \langle \s{\tilde{V}}, \Lambda_s \s{\tilde{V}}\rangle \Lambda_s. \label{eqn:app-C75}
\end{align}
Now $\langle \s{\tilde{V}},\Lambda_s \s{\tilde{V}}\rangle=\sum_{m=0}^{d-1} |\s{\tilde{V}}_m|^2 e^{-2\pi i sm/d}$. Thus the $(k,k)$ entry of $F^\ast S_1 F$ is
\begin{align}
    \frac{1}{d}\sum_{s=0}^{d-1} \sum_{m=0}^{d-1} |\s{\tilde{V}}_m|^2 e^{-2\pi i s(m+k)/d} = \sum_{m=0}^{d-1} |\s{\tilde{V}}_m|^2 \underbrace{\p{\frac{1}{d}\sum_{s=0}^{d-1}e^{-2\pi is(m+k)/d}}}_{= \mathbf{1}\{m+k\equiv 0 \pmod d\}} = |\s{\tilde{V}}_{-k}|^2. \label{eqn:app-C76}
\end{align}
Since $v$ is real and $|\s{\tilde{V}}_{-k}|=|\s{\tilde{V}}_k|$, \eqref{eqn:app-C76} equals $|\s{\tilde{V}}_k|^2$. Off-diagonal entries (i.e., $(k_1, k_2)$, for $k_1 \neq k_2$) vanish by the same orthogonality identity.
Hence, combining~\eqref{eqn:app-C75}–\eqref{eqn:app-C76} yields
\begin{align}\label{eq:S3-hat}
    F^\ast S_1 F = \diag \p{|\s{\tilde{V}}|^2}.
\end{align}

\paragraph{Fourier image of $S_0$.}
By \eqref{eq:shift-diag},
\begin{align}
    F^\ast S_0 F &= \frac{1}{d}\sum_{s=0}^{d-1} \p{F^\ast \mathcal{T}_s \tilde{v}} \p{F^\ast \tilde{v}}^\ast  \p{F^\ast \mathcal{T}_s F} = \frac{1}{d}\sum_{s=0}^{d-1} (\Lambda_s \s{\tilde{V}}) \s{\tilde{V}}^\ast \Lambda_s.
\end{align}
Its $(k,\ell)$ entry equals
\begin{align}
    \frac{1}{d}\sum_{s=0}^{d-1} (\Lambda_s \s{\tilde{V}})_k  \overline{\s{\tilde{V}}_\ell}  (\Lambda_s)_{\ell\ell} = \frac{1}{d}\sum_{s=0}^{d-1} e^{-2\pi is(k+\ell)/d} \s{\tilde{V}}_k \overline{\s{\tilde{V}}_\ell} = \s{\tilde{V}}_k \overline{\s{\tilde{V}}_\ell} \, \mathbf{1}\{k+\ell\equiv 0 \pmod d\}. \label{eqn:app-C80}
\end{align}
Equivalently,
\begin{align} \label{eq:S2-hat}
    (F^{\ast} S_{0} F)_{k\ell} =
    \begin{cases}
        \s{\tilde{V}}_k\,\overline{\s{\tilde{V}}_\ell}, & k+\ell \equiv 0 \pmod d,\\ 0, & \text{otherwise},
    \end{cases}
\end{align}
so $S_0$ couples only the frequency pairs $\{k,-k\}$.

\paragraph{Fourier image of $J(\beta)$.}
Combining~\eqref{eq:S3-hat}–\eqref{eq:S2-hat} with \eqref{eq:J-real-again}, and using that $F^\ast (I-\Pi_{\mathrm{mean}})F$ is $1$ on for all $k \neq 0$ frequencies and $0$ for $k = 0$, yields
\begin{align}
    \widehat{J}_{kk} = 1-\frac{\beta^2}{\sigma^2}\,|\s{{V}}_k|^2 + O \left(\frac{\beta^4}{\sigma^4}\right) \triangleq a_k + O \left(\frac{\beta^4}{\sigma^4}\right),
\end{align}
for every $k \neq 0$. For $\ell\equiv -k \pmod d$ with $k\neq 0$, only $S_0$ contributes, giving
\begin{align}
    \widehat{J}_{k,-k} = -\frac{\beta^2}{\sigma^2}\,\s{V}_k^2 + O \left(\frac{\beta^4}{\sigma^4}\right) \triangleq b_k + O \left(\frac{\beta^4}{\sigma^4}\right),
\end{align}
and the symmetric entry satisfies
\begin{align}
    \widehat{J}_{-k,k} = \overline{b_k} + O \left(\frac{\beta^4}{\sigma^4}\right).
\end{align}
All other entries are zero to this order. This coincides with the block--pair matrix in \eqref{eq:Jhat_explicit_matrix_prop}.

\subsection{Proof of Corollary~\ref{cor:block-spectral}} \label{sec:proof-of-eigenstructure}
Recall that we assume that $d$ is odd. By~\eqref{eq:block-2x2}, $\widehat{J}(\beta)$ is block–diagonal with respect to
\begin{align}
    \mathbb{R}^d = \mathrm{span}\{f_0\}\ \oplus \bigoplus_{k=1}^{(d-1)/2}\mathrm{span}\{f_k,f_{-k}\},
\end{align}
where the Jacobian block $\{k,-k\}$, we have,
\begin{align}
    \widehat{J}(\beta)\big|_{\{k,-k\}} =  I_2 - \frac{\beta^2}{\sigma^2}
    \begin{bmatrix}
    |\s{V}_k|^2 & \s{V}_k^2\\[2pt]
    \overline{\s{V}_k^2} & |\s{V}_k|^2
    \end{bmatrix}
     + \widehat{R}^{(k)}(\beta), \label{eqn:app-C85}
\end{align}
where $\|\widehat{R}^{(k)}(\beta)\|_F=O (\frac{\beta^4}{\sigma^4})$. 

\paragraph{Eigen-structure of each non-mean component pair.}
Denote by $K_2$ the second order expansion matrix in~\eqref{eqn:app-C85}, that is,
\begin{align}
    K_2 \triangleq 
    \begin{bmatrix}
    |\s{V}_k|^2 & \s{V}_k^2\\[2pt]
    \overline{\s{V}_k^2} & |\s{V}_k|^2
    \end{bmatrix}
    = |\s{V}_k|^2
    \begin{bmatrix}
    1 & e^{2i\phi_k}\\[2pt]
    e^{-2i\phi_k} & 1
\end{bmatrix}.
\end{align}
where $\phi_k \triangleq \phi_{\s{V}}[k]$. Using the characteristic polynomial of $K_2$ yields,
\begin{align}
    \det(K_2-\lambda I) =\p{|\s{V}_k|^2-\lambda}^2-|\s{V}_k|^4 = \lambda\p{\lambda-2|\s{V}_k|^2},
\end{align}
so the eigenvalues are $\{0, 2|\s{V}_k|^2\}$. 
Next, we derive the eigenvectors. Set
\begin{align}
    u_k=\frac{1}{\sqrt2} \begin{bmatrix}e^{i\phi_k}\\ e^{-i\phi_k}\end{bmatrix},
    \qquad
    w_k=\frac{1}{i\sqrt2} \begin{bmatrix}e^{i\phi_k}\\ - e^{-i\phi_k}\end{bmatrix}.
\end{align}
They are orthonormal since
\begin{align}
    \|u_k\|^2=\frac{1}{2} \left(|e^{i\phi_k}|^2+|e^{-i\phi_k}|^2\right)=1,
    \quad \|w_k\|^2=\frac{1}{2} \left(|e^{i\phi_k}|^2+|e^{-i\phi_k}|^2\right)=1,
\end{align}
and
\begin{align}
    \langle u_k,w_k\rangle = \frac{1}{2i}\p{e^{-i\phi_k}e^{i\phi_k}-e^{i\phi_k}e^{-i\phi_k}}=0.
\end{align}
Using $\s{V}_k^2=|\s{V}_k|^2 e^{2i\phi_k}$ and $\overline{\s{V}_k^2}=|\s{V}_k|^2 e^{-2i\phi_k}$,
\begin{align}
    K_2 u_k
    &=\frac{1}{\sqrt2}
    \begin{bmatrix}
    |\s{V}_k|^2 e^{i\phi_k}+\s{V}_k^2 e^{-i\phi_k}\\[4pt]
    \overline{\s{V}_k^2} e^{i\phi_k}+|\s{V}_k|^2 e^{-i\phi_k}
    \end{bmatrix}
    =
    \frac{|\s{V}_k|^2}{\sqrt2}
    \begin{bmatrix}
    e^{i\phi_k}+e^{2i\phi_k}e^{-i\phi_k}\\[2pt]
    e^{-2i\phi_k}e^{i\phi_k}+e^{-i\phi_k}
    \end{bmatrix}\\
    &=
    \frac{|\s{V}_k|^2}{\sqrt2}
    \begin{bmatrix}
    2e^{i\phi_k}\\[2pt]
    2e^{-i\phi_k}
    \end{bmatrix}
    =2|\s{V}_k|^2 u_k.
\end{align}
Hence $\lambda_{u_k}=2|\s{V}_k|^2$. For the action on $w_k$, we have,
\begin{align}
    K_2 w_k
    &=\frac{1}{i\sqrt2}
    \begin{bmatrix}
    |\s{V}_k|^2 e^{i\phi_k}-\s{V}_k^2 e^{-i\phi_k}\\[4pt]
    \overline{\s{V}_k^2} e^{i\phi_k}-|\s{V}_k|^2 e^{-i\phi_k}
    \end{bmatrix}
    =
    \frac{|\s{V}_k|^2}{i\sqrt2}
    \begin{bmatrix}
    e^{i\phi_k}-e^{2i\phi_k}e^{-i\phi_k}\\[2pt]
    e^{-2i\phi_k}e^{i\phi_k}-e^{-i\phi_k}
    \end{bmatrix}
    = 0.
\end{align}
Hence $\lambda_{w_k}=0$.
Thus, $K_2$ is an Hermitian matrix with eigenvectors
\begin{align}
    u_k=\frac{1}{\sqrt{2}} \begin{bmatrix}e^{i\phi_k}\\ \  e^{-i\phi_k}\end{bmatrix},
    \qquad
    w_k=\frac{1}{i\sqrt{2}} \begin{bmatrix}e^{i\phi_k}\\ -e^{-i\phi_k}\end{bmatrix},
\end{align}
and corresponding eigenvalues,
\begin{align}
    \lambda_{u_k}=2|\s{V}_k|^2,
    \qquad \lambda_{w_k}=0.
\end{align}
Substituting $K_2$ back into~\eqref{eqn:app-C85} gives
\begin{align}
    F^\ast J(\beta)F u_k &   =  \p{1-2 \frac{\beta^2}{\sigma^2}|\s{V}_k|^2+O(\frac{\beta^4}{\sigma^4})}u_k
    \\ 
    F^\ast J(\beta)F w_k &   =  \p{1+O(\frac{\beta^4}{\sigma^4})}w_k,
\end{align}
where the remainder is given by using Weyl’s inequality applied to the Hermitian perturbation $\widehat{R}^{(k)}(\beta)$.

\paragraph{Mean component.}
By the Jacobian expansion representation in~\eqref{eq:Jhat_explicit_matrix_prop}, it is clear that the mean component $k = 0$ satisfies,
\begin{align}
    \widehat{J}(\beta) f_0 =  0 + O \p{\frac{\beta^4}{\sigma^4}},
\end{align}
since $(I-f_0 f_0^\ast)f_0=0$ and the quadratic term vanishes on $f_0$. In addition, Theorem~\ref{thm:mra-lowSNR-iteration-bias-to-init} implies that the mean component is recovered correctly after a single iteration, independently of the SNR, that is, the empirical Jacobian annihilates the mean direction $\widehat{J}(\beta) f_0 = 0$.

\paragraph{Spectral radius.}
Finally, the spectral radius follows because each non–mean $2\times2$ block has one eigenvalue
$1+O(\beta^4/\sigma^4)$, while the mean eigenvalue is $0+O(\beta^4/\sigma^4)$. Hence
\begin{align}
    \rho(J(\beta)) = 1 + O(\beta^4/\sigma^4).
\end{align}
Unitary conjugation by $F$ preserves eigenvalues, completing the proof.

\subsection{Proof of Theorem~\ref{thm:lowSNR-two-phase}} \label{sec:proof-of-two-phases}
Set $x^\star=\beta v$ with $\|v\|_2=1$ and work in the neighborhood $\mathcal{U}_\beta$ given by Theorem~\ref{thm:EM-MRA-spectral-radius}.  
Let $e^{(t)} \triangleq \hat{x}^{(t)}-x^\star$. Since $M\in C^\infty$ and $J(\cdot)$ is continuous, a mean–value expansion around $x^\star$ yields
\begin{align}
    e^{(t+1)} = J(\beta) e^{(t)} + R \p{e^{(t)}}, \qquad \|R(e)\|_2 = O({\|e\|_2^2)}, \label{eqn:error-recursion}
\end{align}
where we abbreviate $J(\beta)\equiv J(x^\star)$. Recall the definition of the subspaces $\mathcal{V}_{\mathrm{ctr}}$ and $\mathcal{V}_{\mathrm{flat}}$ in~\eqref{eqn:Vctr-def}--\eqref{eqn:Vflat-def}. Because these are orthogonal invariant subspaces for the leading blocks of $J(\beta)$ in~\eqref{eqn:error-recursion}, we obtain
\begin{align}
    e_{\mathrm{ctr}}^{(t+1)} &  =  J(\beta) e_{\mathrm{ctr}}^{(t)}  +  \Pi_{\mathrm{ctr}} R \p{e^{(t)}},
    \\
    e_{\mathrm{flat}}^{(t+1)} &  =  J(\beta) e_{\mathrm{flat}}^{(t)}  +  \Pi_{\mathrm{flat}} R \p{e^{(t)}},
    \\
    e_{\mathrm{mean}}^{(t+1)} &  =  \Pi_{\mathrm{mean}} R  \p{e^{(t)}},
\end{align}
where $\Pi_{\mathrm{ctr}},\Pi_{\mathrm{flat}},\Pi_{\mathrm{mean}}$ are the orthogonal projectors onto the respective subspaces.

\paragraph{(1) Early transient on $\mathcal{V}_{\mathrm{ctr}}$.}
By Proposition~\ref{prop:K2-MRA} and Corollary~\ref{cor:block-spectral}, on $\mathcal{V}_{\mathrm{ctr}}$, the eigenvalues of $J(\beta)$ are given by
\begin{align}
    \lambda_{u_k} = 1-2 \frac{\beta^2}{\sigma^2}|\s{V}_k|^2 + O \p{\frac{\beta^4}{\sigma^4}},\label{eq:ctr-eigs}
\end{align}
for $k \neq 0$. Restrict $J(\beta)$ to $\mathcal{V}_{\mathrm{ctr}}$. Since the restriction is diagonalizable over $\mathbb{R}$ with an orthonormal eigenbasis $\{F u_k\}$, its operator norm is controlled by the extremal eigenvalues on $\mathcal{V}_{\mathrm{ctr}}$. From~\eqref{eq:ctr-eigs}, there exists $C>0$ such that for all $z\in\mathcal{V}_{\mathrm{ctr}}$,
\begin{align}
    \p{1-2 \frac{\beta^2}{\sigma^2}\max_{k\ne 0}|\s{V}_k|^2 - C\frac{\beta^4}{\sigma^4}} \|z\|_2 \leq  \|J(\beta)z\|_2 \leq \p{1-2 \frac{\beta^2}{\sigma^2}\min_{k\ne 0}|\s{V}_k|^2 + C\frac{\beta^4}{\sigma^4}} \|z\|_2.
    \label{eq:ctr-ratio-band}
\end{align}
Using invariance of $\mathcal{V}_{\mathrm{ctr}}$ for $J(\beta)$, we have $J(\beta)e_{\mathrm{ctr}}^{(t)}\in\mathcal{V}_{\mathrm{ctr}}$ and $\Pi_{\mathrm{ctr}}J(\beta)e^{(t)} = J(\beta)e_{\mathrm{ctr}}^{(t)}$.
Thus
\begin{align}\label{eq:ctr-one-step-with-r}
    \|e_{\mathrm{ctr}}^{(t+1)}\|_2 \le \|J(\beta)e_{\mathrm{ctr}}^{(t)}\|_2 + \|R(e^{(t)})\|_2,
\end{align}
and similarly a one–step inequality from below by projecting onto a least–contracting eigenvector in $\mathcal{V}_{\mathrm{ctr}}$.

Next we bound the remainder $R(e^{(t)})$. By~\eqref{eq:remainder-quadratic}, there exist constants $c>0$ and $r_0$ such that for all $\| e\| \le r_0$,
\begin{align}
    \|R(e)\|_2 \leq  c\|e\|_2^2. \label{eqn:app-C114}
\end{align}
Fix $\delta_0\in(0,1)$ and define $t_\ast$ as in~\eqref{eqn:t_ast_def}. For $0 \le t<t_\ast$ we have, by definition,
\begin{align}
    \|e_{\mathrm{ctr}}^{(t)}\|_2  \ge  \delta_0 \|e^{(t)}\|_2.
\end{align}
Choose a radius $r_0(\beta)\le r_0$ such that
\begin{align}\label{eq:r0-beta-choice}
    \|e^{(0)}\|_2 \le  r_0(\beta), \qquad c\,r_0(\beta) \le c_1\,\delta_0\,\frac{\beta^4}{\sigma^4},
\end{align}
where $c$ is the constant from~\eqref{eqn:app-C114}, and $c_1 > 0$ is a constant depending only on $(v,d)$. Choosing $\|e^{(0)}\|_2$ that satisfy~\eqref{eq:r0-beta-choice} is possible by shrinking the initialization radius.
By assumption, the initialization lies in $\mathcal{U}_\beta$ which is contracting, thus, $\|e^{(t)}\|_2 \le \|e^{(0)}\|_2$, and from~\eqref{eq:r0-beta-choice} $\|e^{(t)}\|_2 \le r_0(\beta)$, for all $t < t_\ast$

Using~\eqref{eqn:app-C114}, \eqref{eq:r0-beta-choice}, and the dominance $\|e^{(t)}\|_2 \le \delta_0^{-1} \|e_{\mathrm{ctr}}^{(t)}\|_2$ for $t<t_\ast$, we obtain
\begin{align}
    \|R(e^{(t)})\|_2 \le c\|e^{(t)}\|_2^2 \le c\,\delta_0^{-1}\,r_0(\beta)\,\|e_{\mathrm{ctr}}^{(t)}\|_2 \le c_1\,\frac{\beta^4}{\sigma^4}\,\|e_{\mathrm{ctr}}^{(t)}\|_2.
\end{align}
Combining this with the linear bounds on $\|J(\beta)e_{\mathrm{ctr}}^{(t)}\|_2$ and the lower one–step estimate yields the two–sided ratio band~\eqref{eq:ctr-ratio-band} for all $0 \le t<t_\ast$, after absorbing $c_1(\beta^4/\sigma^4)$ into the $O(\beta^4/\sigma^4)$ term, giving
\begin{align}
    \frac{\|e_{\mathrm{ctr}}^{(t+1)}\|}{\|e_{\mathrm{ctr}}^{(t)}\|} \in \pp{ 1-2 \frac{\beta^2}{\sigma^2}\max_{k\neq 0}|\s{V}_k|^2 - C'\frac{\beta^4}{\sigma^4} ,\ 1-2 \frac{\beta^2}{\sigma^2}\min_{k\neq 0}|\s{V}_k|^2 + C'\frac{\beta^4}{\sigma^4} }.
\end{align}

\paragraph{(2) Asymptotic tail on $\mathcal{V}_{\mathrm{flat}}$.} 
By Theorem~\ref{thm:EM-MRA-spectral-radius}, the population EM operator $M$ is $C^2$ in a neighborhood of $x^\star$, and its linearization at $x^\star$ is given by the symmetric Jacobian $J(\beta)$.  
By Corollary~\ref{cor:block-spectral}, $F w_{k_{\max}}$ is an eigenvector of $J(\beta)$ associated with the eigenvalue $\rho(J(\beta))$, and our assumption $\langle e^{(0)},F w_{k_{\max}}\rangle\neq 0$ ensures a nonzero projection on the dominant eigendirection.

Applying Theorem~\ref{thm:Lyap-symmetric} with $u_1 = F w_{k_{\max}}$ therefore yields
\begin{align}
    \lim_{t\to\infty} \limsup_{\substack{e^{(0)}\to 0\\ \langle e^{(0)},F w_{k_{\max}}\rangle\neq 0}} \frac{\|e^{(t+1)}\|_2}{\|e^{(t)}\|_2} = \rho(J(\beta)).
\end{align}
Finally, Corollary~\ref{cor:block-spectral}(3) provides the low-SNR expansion of the spectral radius,
\begin{align}
    \rho(J(\beta)) = 1 - \kappa_{\max}\frac{\beta^4}{\sigma^4} + O  \p{\frac{\beta^6}{\sigma^6}},
\end{align}
for some $\kappa_{\max}\ge 0$ depending only on $v$ and $d$.  Combining the two displays gives
\begin{align}
    \lim_{t \to \infty} \lim_{e^{(0)} \to 0} \frac{\|e^{(t+1)}\|_2}{\|e^{(t)}\|_2} = 1-\kappa_{\max} \frac{\beta^4}{\sigma^4} + O  \p{\frac{\beta^6}{\sigma^6}},
\end{align}
which is the claimed asymptotic tail behavior.
Combining the two parts proves the claimed two–phase behavior.

\subsection{Proof of Corollary~\ref{cor:flat-iter-lower}} \label{sec:proof-of-iteration-complexity}
Work in the setting of Theorem~\ref{thm:lowSNR-two-phase} and let $J(\beta)$ denote the Jacobian of the population EM operator at $x^\star$. By Corollary~\ref{cor:block-spectral}(3), there exists $\kappa_{\max}\ge 0$ such that
\begin{align}
    \rho \p{J(\beta)} = 1 - \kappa_{\max}\frac{\beta^{4}}{\sigma^{4}} + O  \p{\frac{\beta^{6}}{\sigma^{6}}}.
\end{align}
The above expansion implies, by a second–order Taylor expansion of the logarithm at~$1$,
\begin{align}
    \log \rho \p{J(\beta)} = -\kappa_{\max}\frac{\beta^{4}}{\sigma^{4}} + O \p{\frac{\beta^{6}}{\sigma^{6}}}.
\end{align}
Hence there exist constants $\beta_0>0$ and $c_1,c_2>0$ (depending only on $v,d$) such that for all $0<\beta\le \beta_0$,
\begin{align}\label{eq:log-rho-bounds}
    c_1\,\frac{\beta^{4}}{\sigma^{4}} \le \bigl|\log \rho \p{J(\beta)}\bigr| \le c_2\,\frac{\beta^{4}}{\sigma^{4}}.
\end{align}

By Theorem~\ref{thm:lowSNR-two-phase}, after the initial transient the EM dynamics enter a neighborhood of $x^\star$ in which the linear approximation governed by $J(\beta)$ is valid. Under the genericity assumption that the initial error $e^{(0)}=\hat{x}^{(0)}-x^\star$ has nonzero projection on the eigenspace corresponding to $\rho \p{J(\beta)}$, we may apply Corollary~\ref{cor:iter-complex-flow} with $J=J(\beta)$ to the tail regime. Thus, for $\|e^{(0)}\|_2$ and $\varepsilon_{\mathrm{abs}}$ small enough, any $t$ satisfying $\|e^{(t)}\|_2\le \varepsilon_{\mathrm{abs}}$ must obey
\begin{align}\label{eq:t-lb-rho}
    t \gtrsim \frac{1}{\bigl|\log \rho \p{J(\beta)}\bigr|} \,\log  \p{\frac{\|e^{(0)}\|_2}{\varepsilon_{\mathrm{abs}}}},
\end{align}
where the implicit constant is absolute (independent of $\beta$, $e^{(0)}$, and $\varepsilon_{\mathrm{abs}}$).

Combining \eqref{eq:t-lb-rho} with the upper bound in \eqref{eq:log-rho-bounds} yields
\begin{align}
    t \ge \frac{1}{c_+}\,\frac{\sigma^{4}}{\beta^{4}} \,\log  \p{\frac{\|e^{(0)}\|_2}{\varepsilon_{\mathrm{abs}}}},
\end{align}
for all $0<\beta\le\beta_0$, where $c_{+}>0$ is a constant depending only on $v,d$ (through $\kappa_{\max}$ and the absolute constant in~\eqref{eq:t-lb-rho}). This is exactly \eqref{eq:necessary-iter-flat}, and completes the proof.

\subsection{Proof of Proposition~\ref{prop:mag-phase-first-order}}\label{sec:proof-Fourier-phases-magnitudes-convergence}

Fix $k \in \{1,\dots,(d-1)/2\}$ and write $\s{X}^\star[k] = r_k e^{i\phi_k}$, with $r_k \triangleq |\s{X}^\star[k]| > 0$, and $\phi_k \triangleq \phi_{\s{X}^\star}[k]$.
Denote the difference of the $k$-th Fourier component by 
$\delta_k = \s{X}[k] - \s{X}^\star[k]$, and assume it satisfies
\begin{align}
    |\delta_k|\le c\,r_k,
    \label{eq:polar-lemma-smallness}
\end{align}
for some fixed $0<c<1$.
Write
\begin{align}
    e^{-i\phi_k}\delta_k = a_k+ib_k, \qquad a_k,b_k\in\mathbb{R}.
    \label{eq:polar-lemma-ab-def}
\end{align}
Throughout the proof we use
\begin{align}
    |\delta_k|^2 = a_k^2 + b_k^2,
    \label{eq:polar-lemma-delta-a-b}
\end{align}
which follows from \eqref{eq:polar-lemma-ab-def}.

Define the magnitude and phase errors
\begin{align}
    m_k(\delta_k) &\triangleq \big|\s{X}^\star[k]+\delta_k\big|-r_k = \big|\s{X}[k]\big|-\big|\s{X}^\star[k]\big|,
    \label{eq:polar-lemma-m-def}\\
    p_k(\delta_k) &\triangleq \arg\p{\s{X}^\star[k]+\delta_k }-\phi_k = \phi_{\s{X}}[k] - \phi_{\s{X}^\star}[k],
    \label{eq:polar-lemma-p-def}
\end{align}
where $\arg(\cdot)\in(-\pi,\pi]$ denotes the principal argument.
Before proving the proposition, we state and prove an auxiliary lemma.

\begin{lem}[Polar expansion around a nonzero Fourier coefficient]
\label{lem:polar-expansion}
There exists a constant $C>0$, depending only on $c$, such that
\begin{align}
    \big|m_k(\delta_k) - a_k\big| &\le C\,\frac{|\delta_k|^2}{r_k},
    \label{eq:polar-lemma-mag}\\[4pt]
    \big|p_k(\delta_k) - b_k/r_k\big| &\le C\,\frac{|\delta_k|^2}{r_k^2}.
    \label{eq:polar-lemma-phase}
\end{align}
\end{lem}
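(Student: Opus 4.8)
The plan is to reduce both estimates to a single real two-variable Taylor expansion by factoring out the unit phase $e^{i\phi_k}$. Writing
\begin{align*}
    \s{X}^\star[k]+\delta_k
    = e^{i\phi_k}\bigl(r_k + e^{-i\phi_k}\delta_k\bigr)
    = e^{i\phi_k}\bigl(r_k+a_k+ib_k\bigr),
\end{align*}
the modulus is unchanged and the principal argument is merely shifted by $\phi_k$ (the factor being unimodular), so by \eqref{eq:polar-lemma-m-def}--\eqref{eq:polar-lemma-p-def},
\begin{align*}
    m_k(\delta_k) = \sqrt{(r_k+a_k)^2+b_k^2}-r_k,
    \qquad
    p_k(\delta_k) = \arctan\!\Bigl(\tfrac{b_k}{r_k+a_k}\Bigr).
\end{align*}
First I would use the smallness hypothesis \eqref{eq:polar-lemma-smallness}: since $|a_k|,|b_k|\le|\delta_k|\le c\,r_k$ with $c<1$, we have $r_k+a_k\ge(1-c)r_k>0$, so the point $r_k+a_k+ib_k$ lies in the open right half-plane, the principal argument is well-defined and smooth there, and no branch ambiguity arises.

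Next I would rescale by $r_k$. Setting $u\triangleq a_k/r_k$ and $v\triangleq b_k/r_k$, we have $|u|,|v|\le c$ and, by \eqref{eq:polar-lemma-delta-a-b}, $|\delta_k|^2/r_k^2=u^2+v^2$. Because $m_k=r_k\,F(u,v)$ with $F(u,v)\triangleq\sqrt{(1+u)^2+v^2}-1$ and $a_k=r_k u$, the magnitude claim \eqref{eq:polar-lemma-mag} is equivalent to $|F(u,v)-u|\le C(u^2+v^2)$; similarly, since $p_k=G(u,v)$ with $G(u,v)\triangleq\arctan\!\bigl(v/(1+u)\bigr)$ and $b_k/r_k=v$, the phase claim \eqref{eq:polar-lemma-phase} is equivalent to $|G(u,v)-v|\le C(u^2+v^2)$. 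A direct computation gives $F(0,0)=G(0,0)=0$, $\nabla F(0,0)=(1,0)$, and $\nabla G(0,0)=(0,1)$, so $u$ and $v$ are precisely the first-order Taylor polynomials of $F$ and $G$ at the origin, which is the source of the leading terms $a_k$ and $b_k/r_k$ in the statement.

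The remaining step is the quadratic remainder bound, obtained from Taylor's theorem with the Lagrange (or integral) form of the remainder. On the compact square $K_c\triangleq\{(u,v):|u|,|v|\le c\}$ both $F$ and $G$ are $C^\infty$, since the only singularities of $\sqrt{(1+u)^2+v^2}$ and $\arctan(v/(1+u))$ occur at $(u,v)=(-1,0)$, which is excluded because $c<1$; hence their Hessians are bounded on $K_c$ by a constant $C_0=C_0(c)$. This yields $|F(u,v)-u|\le\tfrac12 C_0(u^2+v^2)$ and likewise for $G$, and multiplying back by the appropriate power of $r_k$ (one factor of $r_k$ for the magnitude, none for the phase) gives \eqref{eq:polar-lemma-mag} and \eqref{eq:polar-lemma-phase} with $C=C_0(c)/2$. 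The one point to watch, and the only mild subtlety here, is \emph{uniformity}: the constant must depend on $c$ alone and not on $r_k$ or $k$. This is guaranteed precisely by the rescaling, since after passing to $(u,v)$ neither the functions $F,G$ nor the domain $K_c$ retain any dependence on $r_k$.
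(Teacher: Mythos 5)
Your proof is correct and follows essentially the same route as the paper's: rotate by $e^{-i\phi_k}$ to reduce to the real functions $\sqrt{(r_k+a_k)^2+b_k^2}$ and $\arctan\bigl(b_k/(r_k+a_k)\bigr)$, then apply a second-order Taylor expansion at the origin, using $r_k+a_k\ge(1-c)r_k>0$ to stay away from the singularity. The one difference is in bookkeeping rather than substance: by passing to the dimensionless variables $(u,v)=(a_k/r_k,\,b_k/r_k)$ you get uniform Hessian bounds for free on the fixed compact square $\{|u|,|v|\le c\}$, whereas the paper bounds the Hessians of the unscaled functions directly (obtaining $C_1/r_k$ and $C_2/r_k^2$ from denominator lower bounds) and tracks the powers of $r_k$ by hand — both schemes produce identical estimates with a constant depending only on $c$.
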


\begin{proof}[Proof of Lemma~\ref{lem:polar-expansion}]
We prove the lemma in three steps.

\paragraph{Step 1: Rotation to the real axis.}
Set
\begin{align}
    z^\star &\triangleq \s{X}^\star[k] = r_k e^{i\phi_k}, \\ z &\triangleq z^\star + \delta_k = \s{X}[k],
\end{align}
and define the rotated variable
\begin{align}
    \tilde{z} \triangleq e^{-i\phi_k} z = e^{-i\phi_k}\p{\s{X}^\star[k]+\delta_k } = r_k + e^{-i\phi_k}\delta_k.
    \label{eq:polar-lemma-ztilde-def}
\end{align}
Writing $e^{-i\phi_k}\delta_k = a_k+ib_k$ as in \eqref{eq:polar-lemma-ab-def}, we have
\begin{align}
    \tilde{z} = r_k + a_k + ib_k.
    \label{eq:polar-lemma-ztilde-ab}
\end{align}
Since rotation preserves magnitudes,
\begin{align}
    |\tilde{z}| &= |z|, \qquad
    \arg(\tilde{z}) = \arg(z)-\phi_k.
    \label{eq:polar-lemma-rot-invariance}
\end{align}
Combining \eqref{eq:polar-lemma-m-def}, \eqref{eq:polar-lemma-p-def} and
\eqref{eq:polar-lemma-rot-invariance}, we may rewrite the magnitude
and phase errors as
\begin{align}
    m_k(\delta_k) &= |\tilde{z}|-r_k,
    \label{eq:polar-lemma-m-ztilde}\\
    p_k(\delta_k) &= \arg(\tilde{z}).
    \label{eq:polar-lemma-p-ztilde}
\end{align}
Hence it suffices to analyze the real function of $(a_k,b_k)$ associated with $\tilde{z} = r_k + a_k + ib_k$.

\paragraph{Step 2: Magnitude expansion.}
Define
\begin{align}
    f(a_k,b_k) & \triangleq |\tilde{z}| = \sqrt{(r_k+a_k)^2 + b_k^2}.
    \label{eq:polar-lemma-f-def}
\end{align}
By Taylor's theorem with integral remainder applied at $(0,0)$, there exists $(\xi_a,\xi_b)$ on the segment joining $(0,0)$ and $(a,b)$ such that
\begin{align}
    f(a_k,b_k) & = f(0,0) + \nabla f(0,0)\cdot (a_k,b_k)^\top + \frac{1}{2} (a_k,b_k)\,\nabla^2 f(\xi_a,\xi_b)\,(a_k,b_k)^\top.
    \label{eq:polar-lemma-f-taylor}
\end{align}
As $f(0,0)=r_k$, and a direct computation gives
\begin{align}
    \nabla f(a_k,b_k) &= \frac{1}{\sqrt{(r_k+a_k)^2+b_k^2}}
       \begin{bmatrix}
           r_k + a_k \\[2pt] b_k
       \end{bmatrix},
    \label{eq:polar-lemma-grad-f}\\
    \nabla f(0,0)
    &= \frac{1}{r_k}
       \begin{bmatrix}
           r_k \\[2pt] 0
       \end{bmatrix}
     = \begin{bmatrix} 1 \\[2pt] 0 \end{bmatrix}.
    \label{eq:polar-lemma-grad-f-zero}
\end{align}
Thus, substituting~\eqref{eq:polar-lemma-grad-f-zero} into~\eqref{eq:polar-lemma-f-taylor} results that the linearization of $f$ at $(0,0)$ is $r_k + a_k$.

To control the nonlinear remainder, we use the condition~\eqref{eq:polar-lemma-smallness}. For $|\delta_k|=\sqrt{a_k^2+b_k^2}\le c r_k$ we have
\begin{align}
    r_k + a_k &\ge r_k - |a_k| \ge r_k - |\delta_k| \ge (1-c) r_k > 0,
    \label{eq:polar-lemma-lower-bound-real-part}
\end{align}
and hence
\begin{align}
    (r_k+a_k)^2 + b_k^2 \ge (1-c)^2 r_k^2.
    \label{eq:polar-lemma-denominator-bound}
\end{align}
A direct computation of the Hessian $\nabla^2 f(a_k,b_k)$ shows that its entries are rational functions of $(r_k+a_k,b_k)$ whose denominators are powers of $((r_k+a_k)^2+b_k^2)^{3/2}$. Using \eqref{eq:polar-lemma-denominator-bound}, we deduce that there exists a constant $C_1>0$ (depending only on $c$) such that
\begin{align}
    \|\nabla^2 f(a_k,b_k)\| \le \frac{C_1}{r_k}.
    \label{eq:polar-lemma-hess-f-bound}
\end{align}
for all $a_k^2+b_k^2 \le c^2 r_k^2$.
Using \eqref{eq:polar-lemma-grad-f-zero},
\eqref{eq:polar-lemma-hess-f-bound}, and \eqref{eq:polar-lemma-delta-a-b} in
\eqref{eq:polar-lemma-f-taylor} yields
\begin{align}
    \big|f(a_k,b_k)-(r_k+a_k)\big| &\le \frac{1}{2} \|\nabla^2 f(\xi_a,\xi_b)\|(a_k^2+b_k^2) \notag\\ &\le C\,\frac{a_k^2+b_k^2}{r_k} =  C\,\frac{|\delta_k|^2}{r_k},
    \label{eq:polar-lemma-mag-remainder}
\end{align}
for some constant $C>0$ depending only on $c$. Combining \eqref{eq:polar-lemma-m-ztilde}, \eqref{eq:polar-lemma-f-def}, and \eqref{eq:polar-lemma-mag-remainder}, we obtain
\begin{align}
    \big|m_k(\delta_k)-a_k\big| &= \big|(f(a_k,b_k)-r_k)-a_k \big| \le C\,\frac{|\delta_k|^2}{r_k},
\end{align}
which proves \eqref{eq:polar-lemma-mag}.

\paragraph{Step 3: Phase expansion.}
Define
\begin{align}
    g(a_k,b_k) &\triangleq \arg(\tilde{z}) = \arg\p{(r_k+a_k)+ib_k} = \arctan \p{\frac{b_k}{r_k+a_k}},
    \label{eq:polar-lemma-g-def}
\end{align}
where we work in the neighborhood $\sqrt{a_k^2+b_k^2}\le c r_k$ so that the branch of $\arg$ is single–valued and smooth. 
Applying Taylor's theorem at $(0,0)$, there exists $(\zeta_a,\zeta_b)$ on the line segment between $(0,0)$ and $(a,b)$ such that
\begin{align}
    g(a_k,b_k) &= g(0,0) + \nabla g(0,0)\cdot(a_k,b_k)^\top + \frac{1}{2} (a_k,b_k)\,\nabla^2 g(\zeta_a,\zeta_b)\,(a_k,b_k)^\top.
    \label{eq:polar-lemma-g-taylor}
\end{align}
As $g(0,0)=0$, and
\begin{align}
    \frac{\partial g}{\partial a_k}(a_k,b_k) &= -\,\frac{b_k}{(r_k+a_k)^2 + b_k^2},
    \label{eq:polar-lemma-ga}\\
    \frac{\partial g}{\partial b_k}(a_k,b_k) &= \frac{r_k+a_k}{(r_k+a_k)^2 + b_k^2}.
    \label{eq:polar-lemma-gb}
\end{align}
Evaluating at $(0,0)$ gives
\begin{align}
    \nabla g(0,0)
    &= \begin{bmatrix}
        0 \\[2pt] 1/r_k
       \end{bmatrix}.
    \label{eq:polar-lemma-grad-g-zero}
\end{align}
Thus, substituting~\eqref{eq:polar-lemma-grad-g-zero} into~\eqref{eq:polar-lemma-g-taylor} results that the linearization of $g$ at $(0,0)$ is $\frac{b_k}{r_k}$.

As before, \eqref{eq:polar-lemma-smallness} and \eqref{eq:polar-lemma-denominator-bound} ensure that $(r_k+a_k)^2+b_k^2$ is uniformly bounded away from zero on the ball $a_k^2+b_k^2\le c^2 r_k^2$. A direct computation of the Hessian $\nabla^2 g(a_k,b_k)$ shows that its entries are rational functions whose denominators are powers of $(r_k+a_k)^2+b_k^2$. Consequently, there exists $C_2>0$ (depending only on $c$) such that
\begin{align}
    \|\nabla^2 g(a_k,b_k)\| \le \frac{C_2}{r_k^2},
    \label{eq:polar-lemma-hess-g-bound}
\end{align}
for all $a_k^2+b_k^2 \le c^2 r_k^2$

Using \eqref{eq:polar-lemma-grad-g-zero}, \eqref{eq:polar-lemma-hess-g-bound}, and \eqref{eq:polar-lemma-delta-a-b} in \eqref{eq:polar-lemma-g-taylor} yields
\begin{align}
    \Big|g(a_k,b_k)-\frac{b_k}{r_k}\Big| &\le \frac{1}{2} \|\nabla^2 g(\zeta_a,\zeta_b)\|(a_k^2+b_k^2) \notag\\ &\le C\,\frac{a_k^2+b_k^2}{r_k^2} = C\,\frac{|\delta_k|^2}{r_k^2},
    \label{eq:polar-lemma-phase-remainder}
\end{align}
for some constant $C>0$ depending only on $c$.
Finally, recalling from \eqref{eq:polar-lemma-p-ztilde} that $p_k(\delta_k)=g(a_k,b_k)$, inequality \eqref{eq:polar-lemma-phase-remainder} is exactly \eqref{eq:polar-lemma-phase}.
This completes the proof.
\end{proof}

Now we are ready to prove the proposition.
Fix $k\in\{1,\dots,(d-1)/2\}$.
Set $e\triangleq x-x^\star$ and recall that 
\begin{align}
    \delta_k \triangleq \s{X}[k]-\s{X}^\star[k] = (F^\ast e)[k]
\end{align}
is the perturbation of the $k$-th Fourier coefficient.

\paragraph{Step 1: Decomposition of $\delta_k$ in the $\{u_k,w_k\}$ block.}
By Corollary~\ref{cor:block-spectral}, the vectors $u_k,w_k$ span the
$2$-dimensional invariant subspace corresponding to the frequency
pair $\{k,-k\}$, and in the $\{f_k,f_{-k}\}$ Fourier basis they are given by
\begin{align}
    u_k = \frac{1}{\sqrt{2}}
       \begin{bmatrix}
           e^{i\phi_k}\\[2pt] e^{-i\phi_k}
       \end{bmatrix},
    \qquad
    w_k = \frac{1}{i\sqrt{2}}
       \begin{bmatrix}
           e^{i\phi_k}\\[2pt] -\,e^{-i\phi_k}
       \end{bmatrix}.
    \label{eq:Fourier-uk-wk}
\end{align}

Since $F$ is unitary and $\{u_k,w_k\}_{k}$ are orthonormal and supported on
disjoint frequency pairs, we may decompose
\begin{align}
    e = \sum_{j=1}^{(d-1)/2} \alpha_j u_j + \beta_j w_j + e_{\mathrm{mean}},
\end{align}
where $e_{\mathrm{mean}}$ is the mean component, and we have defined,
\begin{align}
    \alpha_j = \langle e,u_j\rangle, \qquad \beta_j = \langle e,w_j\rangle.
\end{align}
Applying $F^\ast$ and taking the $k$-th coordinate, only the $j=k$ terms contribute at frequency $k$, yielding
\begin{align}
    \delta_k = (F^\ast e)[k] &= \alpha_k (F^\ast u_k)[k] + \beta_k (F^\ast w_k)[k].
\end{align}
Using~\eqref{eq:Fourier-uk-wk}, we obtain
\begin{align}
    \delta_k &= \alpha_k \,\frac{e^{i\phi_k}}{\sqrt{2}} + \beta_k \,\frac{-\,i e^{i\phi_k}}{\sqrt{2}} = \frac{e^{i\phi_k}}{\sqrt{2}}\p{\alpha_k - i\beta_k}.
\label{eq:delta-k-alpha-beta}
\end{align}

\paragraph{Step 2: Rotation and application of the polar expansion lemma.}
Consider the rotated perturbation
\begin{align}
    e^{-i\phi_k}\delta_k = a_k + ib_k
\end{align}
as in Lemma~\ref{lem:polar-expansion}.  From \eqref{eq:delta-k-alpha-beta} we deduce
\begin{align}
    e^{-i\phi_k}\delta_k = \frac{1}{\sqrt{2}}\p{\alpha_k - i\beta_k},
\end{align}
so that
\begin{align}
    a_k = \frac{\alpha_k}{\sqrt{2}} = \frac{1}{\sqrt{2}}\langle e,u_k\rangle, \qquad
    b_k = -\,\frac{\beta_k}{\sqrt{2}} = -\,\frac{1}{\sqrt{2}}\langle e,w_k\rangle.
    \label{eq:a-b-alpha-beta}
\end{align}

Lemma~\ref{lem:polar-expansion} yields
\begin{align}
    m_k(\delta_k) = a_k + R^{(\mathrm{mag})}_k(e),
    \label{eq:mag-lemma-remainder}\\[4pt]
    p_k(\delta_k) = \frac{b_k}{r_k} + R^{(\mathrm{ph})}_k(e),
    \label{eq:phase-lemma-remainder}
\end{align}
where,
\begin{align}
    \big|R^{(\mathrm{mag})}_k(e)\big| &\le C_1\,\frac{|\delta_k|^2}{r_k}, \\
    \big|R^{(\mathrm{ph})}_k(e)\big| &\le C_2\,\frac{|\delta_k|^2}{(r_k)^2},    
\end{align}
for some constants $C_1,C_2>0$ depending only on the parameter $c$ in Lemma~\ref{lem:polar-expansion}. Substituting \eqref{eq:a-b-alpha-beta} into \eqref{eq:mag-lemma-remainder} and \eqref{eq:phase-lemma-remainder} gives
\begin{align}
    m_k(\delta_k) &= \frac{1}{\sqrt{2}}\langle e,u_k\rangle + R^{(\mathrm{mag})}_k(e),
    \label{eq:mag-alpha-form}\\
    p_k(\delta_k) &= -\,\frac{1}{\sqrt{2}\,r_k}\langle e,w_k\rangle + R^{(\mathrm{ph})}_k(e),
    \label{eq:phase-beta-form}
\end{align}
with remainders as in \eqref{eq:mag-lemma-remainder}–\eqref{eq:phase-lemma-remainder}.

\paragraph{Step 3: Uniform $O(\|e\|^2)$ bounds.}
Since $F^\ast$ is unitary, we have $|\delta_k| = |(F^\ast e)[k]|\le \|e\|_2$, hence
\begin{align}
    \big|R^{(\mathrm{mag})}_k(e)\big| &\le C_1\,\frac{\|e\|_2^2}{r_k},\\
    \big|R^{(\mathrm{ph})}_k(e)\big| &\le C_2\,\frac{\|e\|_2^2}{r_k^2}.
\end{align}
By assumption $\s{X}^\star[k]\neq 0$ for $1\le k\le (d-1)/2$, and there are only finitely many such $k$, so 
\begin{align}
    r_{\min} \triangleq \min_{1\le k\le (d-1)/2} r_k > 0.
\end{align}
Thus, for all such $k$,
\begin{align}
    \big|R^{(\mathrm{mag})}_k(e)\big| &\le \frac{C_1}{r_{\min}}\,\|e\|_2^2,\\ \big|R^{(\mathrm{ph})}_k(e)\big| &\le \frac{C_2}{r_{\min}^2}\,\|e\|_2^2.
\end{align}
Absorbing these into a single constant $C>0$ (depending only on $x^\star$ and $d$) yields the desired uniform $O(\|e\|_2^2)$ bounds in $k$.

Finally, since $F^\ast$ is unitary, for every $k$ we have $|\delta_k|=|(F^\ast e)[k]|\le \|e\|_2$.
Let $r_{\min}\triangleq \min_{1\le k\le (d-1)/2}|\s{X}^\star[k]|>0$.
Choose $\rho\le c\,r_{\min}$, so that $\|e\|_2\le \rho$ implies $|\delta_k|\le c\,|\s{X}^\star[k]|$ for all $k$, and therefore the assumptions of Lemma~\ref{lem:polar-expansion} hold uniformly in $k$.
Combining \eqref{eq:mag-alpha-form}--\eqref{eq:phase-beta-form} with the bounds
$|\delta_k|\le \|e\|_2$ and $r_k\ge r_{\min}$ yields
\begin{align}
    \max_{1\le k\le (d-1)/2}\Big(|R_k^{\mathrm{mag}}(e)|+|R_k^{\mathrm{ph}}(e)|\Big)\le C\|e\|_2^2
\end{align}
for a constant $C$ depending only on $x^\star$ and $d$.
This proves \eqref{eq:delta-rk-alpha-statement}--\eqref{eq:delta-phik-beta-statement} and completes the proof.

\section{Bias to initialization}
\label{sec:appendix-bias-to-initalization}

\subsection{Oddness of the population EM map}
\begin{lem}[Oddness of the population EM map]
\label{lem:M-odd-in-beta}
Consider the MRA model in~\eqref{eqn:mainModel1D}, parameterized in the low-SNR regime in $\beta \in \mathbb{R}$, as in~\eqref{eq:mra-beta-scaling}
\begin{align}
    Y_\beta = \mathcal{T}_S(\beta v) + \xi,
\end{align}
where $S$ is uniform on $\{0,\dots,d-1\}$ and $\xi \sim \mathcal N(0,\sigma^2 I_d)$.
Recall the population EM map~\eqref{eq:Phi-pop-1d}, parameterized for each $\beta \in \mathbb{R}$, 
\begin{align}
    M_\beta(x)\;\triangleq\;\mathbb{E}\Big[\sum_{\ell=0}^{d-1}\gamma_\ell(x;Y_\beta)\,z_\ell(Y_\beta)\Big], \qquad z_\ell(Y)\triangleq \mathcal{T}_\ell^{-1}Y,
\end{align}
Then for all $\beta$ and all $x$,
\begin{align}\label{eq:M-odd-identity}
    M_{-\beta}(-x)\;=\;-\,M_\beta(x).
\end{align}
In particular, fixing $v,\hat v$ and setting $x=\beta\hat v$ and $x^\star=\beta v$, the map $\beta\mapsto M_\beta(\beta\hat v)$ is odd, hence its Taylor expansion at $0$ contains only odd powers of $\beta$.
\end{lem}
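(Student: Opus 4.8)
The plan is to exploit the reflection symmetry of the Gaussian noise, $-\xi \stackrel{\mathcal{D}}{=} \xi$, together with two elementary equivariance identities for the responsibilities and the back-shifted observations. First I would record how $\gamma_\ell$ and $z_\ell$ transform under a simultaneous sign flip of their arguments. A direct inspection of the softmax definition~\eqref{eqn:softmaxGamma} shows that flipping the sign of both $x$ and $Y$ leaves each inner product $\langle Y,\mathcal{T}_\ell x\rangle$ invariant, so that $\gamma_\ell(-x;-Y)=\gamma_\ell(x;Y)$; meanwhile $z_\ell(Y)=\mathcal{T}_\ell^{-1}Y$ is linear, hence $z_\ell(-Y)=-z_\ell(Y)$. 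These are the only two structural facts the argument requires.

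Next I would perform the change of variables. Starting from $M_{-\beta}(-x)=\mathbb{E}_{S,\xi}\big[\sum_{\ell}\gamma_\ell(-x;Y_{-\beta})\,z_\ell(Y_{-\beta})\big]$ with $Y_{-\beta}=\mathcal{T}_S(-\beta v)+\xi$, I would substitute $\xi\mapsto-\xi$ inside the expectation. This substitution is legitimate because $S$ is independent of the isotropic Gaussian $\xi$, so $(S,\xi)\stackrel{\mathcal{D}}{=}(S,-\xi)$; its purpose is that $\mathcal{T}_S(-\beta v)-\xi=-\big(\mathcal{T}_S(\beta v)+\xi\big)=-Y_\beta$, where $Y_\beta$ denotes the observation at parameter $+\beta$. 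Thus after the substitution the common argument of $\gamma_\ell$ and $z_\ell$ collapses to the single vector $-Y_\beta$.

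Finally I would apply the two identities from the first step: $\gamma_\ell(-x;-Y_\beta)=\gamma_\ell(x;Y_\beta)$ and $z_\ell(-Y_\beta)=-z_\ell(Y_\beta)$. Pulling the overall minus sign out of the sum and the expectation yields $M_{-\beta}(-x)=-M_\beta(x)$, which is exactly~\eqref{eq:M-odd-identity}. For the ``in particular'' claim I would fix $v,\hat v$, set $x=\beta\hat v$, and define $g(\beta)\triangleq M_\beta(\beta\hat v)$; the identity then gives $g(-\beta)=M_{-\beta}(-\beta\hat v)=-M_\beta(\beta\hat v)=-g(\beta)$, so $g$ is odd and its Maclaurin expansion contains only odd powers of $\beta$. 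I do not expect a genuine obstacle here: the proof is a short symmetry argument, and the only point that demands care is the sign bookkeeping in the change of variables—verifying that flipping \emph{both} $\beta$ and $\xi$ reproduces precisely $-Y_\beta$ rather than a different sign pattern—together with the (already available) integrability from Lemma~\ref{lem:Jacobian-M} that justifies manipulating the expectation.
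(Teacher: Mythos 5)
Your proposal is correct and follows essentially the same route as the paper's proof: both rest on the Gaussian symmetry $\xi\stackrel{\mathcal{D}}{=}-\xi$ (equivalently, your change of variables $\xi\mapsto-\xi$, which gives $Y_{-\beta}\stackrel{\mathcal{D}}{=}-Y_\beta$), combined with the sign-flip invariance $\gamma_\ell(-x;-Y)=\gamma_\ell(x;Y)$ and the linearity $z_\ell(-Y)=-z_\ell(Y)$. The sign bookkeeping you flag is handled identically in the paper, and your explicit treatment of the ``in particular'' claim matches what the paper leaves implicit.
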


\begin{proof}[Proof of Lemma~\ref{lem:M-odd-in-beta}]
Since $\xi\stackrel{\mathcal{D}}{=}-\xi$ and $\mathcal{T}_S$ is linear,
\begin{align}
    Y_{-\beta}=\mathcal{T}_S(-\beta v)+\xi = -\mathcal{T}_S(\beta v)+\xi \stackrel{\mathcal{D}}{=} -\mathcal{T}_S(\beta v)-\xi = -Y_\beta.
\end{align}
Thus $Y_{-\beta}\stackrel{d}{=}-Y_\beta$.

Because $z_\ell(\cdot)$ is linear, $z_\ell(-Y)=-z_\ell(Y)$.
Moreover, under $(x,Y)\mapsto(-x,-Y)$ the logits $s_\ell$~\eqref{eq:s-logits} satisfy
\begin{align}
    s_\ell(-x;-Y)=\frac{1}{\sigma^2}\langle z_\ell(-Y),-x\rangle = \frac{1}{\sigma^2}\langle z_\ell(Y),x\rangle = s_\ell(x;Y),
\end{align}
so the softmax responsibilities are unchanged:
$\gamma_\ell(-x;-Y)=\gamma_\ell(x;Y)$ for all $\ell$.
Thus, we have,
\begin{align}
    M_{-\beta}(-x) & =\mathbb{E}\Big[\sum_{\ell}\gamma_\ell(-x;Y_{-\beta})\,z_\ell(Y_{-\beta})\Big] = \mathbb{E}\Big[\sum_{\ell}\gamma_\ell(-x;-Y_\beta)\,z_\ell(-Y_\beta)\Big]\\    &=\mathbb{E}\Big[\sum_{\ell}\gamma_\ell(x;Y_\beta)\,(-z_\ell(Y_\beta))\Big] = -\,\mathbb{E}\Big[\sum_{\ell}\gamma_\ell(x;Y_\beta)\,z_\ell(Y_\beta)\Big] = -\,M_\beta(x),
\end{align}
which proves~\eqref{eq:M-odd-identity}. The final claim about vanishing even powers follows from the oddness.
\end{proof}

\subsection{Proof of Theorem~\ref{thm:mra-lowSNR-iteration-bias-to-init}} \label{sec:proof-of-bias-to-initalization}

Recall the definition of the population EM map from~\eqref{eq:Phi-pop-1d}, 
\begin{align} \label{eqn:EM-population-map-bias}
    M(x)=\mathbb{E} \Big[\sum_{\ell=0}^{d-1} \gamma_\ell(x;Y)\,z_\ell(Y)\Big], \qquad z_\ell(Y) \triangleq \mathcal{T}_\ell^{-1}Y,
\end{align}
where the expectation is over $(S,\xi)$ in the MRA model $Y=\mathcal{T}_S x^\star+\xi$~\eqref{eqn:mainModel1D} with $S$ uniform on $\{0,\dots,d-1\}$ and $\xi \sim \mathcal{N}(0,\sigma^2 I_d)$.

\paragraph{Step 1: Mean component is corrected in one step.}
Define $\bar{z}(Y)\triangleq \frac{1}{d}\sum_{\ell=0}^{d-1} z_\ell(Y)$.
Since shifting and averaging produces the constant vector of sample mean,
\begin{align}\label{eq:shift-average-is-mean}
\bar{z}(Y)=\frac{1}{d}\sum_{\ell=0}^{d-1}\mathcal{T}_\ell^{-1}Y
=\Pi_{\mathrm{mean}}Y.
\end{align}
Using the facts that $\sum_{\ell}\gamma_\ell(x;Y)=1$ and $\Pi_{\mathrm{mean}}z_\ell(Y)=\Pi_{\mathrm{mean}}Y$, which follow because the mean is invariant under cyclic shifts, we get
\begin{align}
    \Pi_{\mathrm{mean}}M(x)  =\mathbb{E}\Big[\sum_{\ell}\gamma_\ell(x;Y)\,\Pi_{\mathrm{mean}}z_\ell(Y)\Big] = \mathbb{E}\big[\Pi_{\mathrm{mean}}Y\big] = \Pi_{\mathrm{mean}}\mathbb{E}[Y].
\end{align}
Because $S$ is uniform, $\mathbb{E}[\mathcal{T}_S x^\star]=\Pi_{\mathrm{mean}}x^\star$ and $\mathbb{E}[\xi]=0$, hence $\mathbb{E}[Y]=\Pi_{\mathrm{mean}}x^\star$ and therefore
\begin{align}\label{eq:mean-fixed-point-exact}
    \Pi_{\mathrm{mean}}M(x)=\Pi_{\mathrm{mean}}x^\star \qquad \text{for all }x.
\end{align}
In particular, $\Pi_{\mathrm{mean}}\hat{x}^{(1)}=\Pi_{\mathrm{mean}}x^\star$ for every initialization $\hat{x}^{(0)}$, proving~\eqref{eq:mean-one-step}.
Moreover, applying~\eqref{eq:mean-fixed-point-exact} inductively yields
$\Pi_{\mathrm{mean}}\hat{x}^{(t)}=\Pi_{\mathrm{mean}}x^\star$ for all $t\ge1$.

\paragraph{Step 2: A cubic drift bound for the non-mean component.}
We next show that, after removing the mean correction, the population EM map is identity up to cubic order:
\begin{align}\label{eq:key-cubic-expansion}
    M(\beta \hat{v}) =\Pi_{\mathrm{mean}}(\beta v) + (I-\Pi_{\mathrm{mean}})(\beta \hat{v}) +E(\beta;\hat{v},v),
\end{align}
where
\begin{align}
    \|E(\beta;\hat{v},v)\|\leq C (d,P) \frac{\beta^3}{\sigma^2},
\end{align}
uniformly in $\|v\|,\|\hat{v}\|\le P$. Fix $Y$ and apply Lemma~\ref{lem:resp-near-uniform-MRA-gamma} to $\gamma_\ell(x;Y)$:
\begin{align} \label{eqn:gamma-expansion-bias}
    \gamma_\ell(x;Y)=\frac{1}{d}+\frac{1}{d}\eta_\ell(Y) +\frac{1}{2d}\big(\eta_\ell(Y)^2-\overline{\eta^2}(Y)\big)+R_\ell(\beta;Y),
\end{align}
where $\eta_\ell(Y)$ are the centered logits~\eqref{eq:eta-def}, $\overline{\eta^2}(Y)=\frac{1}{d}\sum_j \eta_j(Y)^2$, and $R_\ell(\beta; Y)$ is the uniform remainder given by~\eqref{eq:MRA-R-bound}.
Let us define,
\begin{align}
    A_1(x) & \triangleq \mathbb{E}\Big[\frac{1}{d}\sum_{\ell}\eta_\ell(Y)\,z_\ell(Y)\Big],
    \\  
    A_2(x) & \triangleq \mathbb{E}\Big[\frac{1}{2d}\sum_{\ell}\big(\eta_\ell(Y)^2-\overline{\eta^2}(Y)\big)\,z_\ell(Y)\Big],
    \\ 
    A_3(x) & \triangleq \mathbb{E}\Big[\sum_{\ell}R_\ell(\beta;Y)\,z_\ell(Y)\Big].
\end{align}
Then, substituting~\eqref{eqn:gamma-expansion-bias} into $M(x)$~\eqref{eqn:EM-population-map-bias} and using~\eqref{eq:shift-average-is-mean} yields the decomposition
\begin{align}
    M(x) = \Pi_{\mathrm{mean}}x^\star + A_1(x) + A_2(x) + A_3(x),
    \label{eq:M-decomp-A123}
\end{align}
where we used $\mathbb{E}[\bar{z}(Y)]=\Pi_{\mathrm{mean}}\mathbb{E}[Y]=\Pi_{\mathrm{mean}}x^\star$. Clearly, $\|A_3(x)\|\le C_3 (d,P)\,\beta^3/\sigma^2$ from the bound of the remainder $R_\ell (\beta; Y)$~\eqref{eq:MRA-R-bound}. It remains to bound the terms $A_1, A_2$.

\medskip \noindent 
\textit {\underline{Claim 1: $A_1(x)=(I-\Pi_{\mathrm{mean}})x + O(\beta^3 / \sigma^2)$.}}
For the MRA logits~\eqref{eq:eta-def}, we have
\begin{align}
    \eta_\ell(Y) = \frac{1}{\sigma^2}\Big(\langle z_\ell(Y),x\rangle-\frac{1}{d}\sum_{j=0}^{d-1}\langle z_j(Y),x\rangle\Big) = \frac{1}{\sigma^2}\,\langle z_\ell(Y)-\bar{z}(Y),\,x\rangle.
\end{align}
Therefore
\begin{align} 
    A_1(x)
    &=\frac{1}{d\sigma^2}\mathbb{E}\Big[\sum_{\ell}\langle z_\ell-\bar{z},x\rangle\,z_\ell\Big] =\frac{1}{d\sigma^2}\mathbb{E}\Big[\Big(\sum_{\ell} z_\ell z_\ell^\top - d\,\bar{z}\,\bar{z}^\top\Big)x\Big]
    \\ &= \frac{1}{\sigma^2} \Big[ \mathbb{E}[YY^\top]-\Pi_{\mathrm{mean}}\mathbb{E}[YY^\top]\Pi_{\mathrm{mean}} \Big] x, \label{eqn:eqn:app_E15}
\end{align}
where we have used $\mathbb{E}[z_\ell z_\ell^\top]=\mathbb{E}[YY^\top]$ because $S$ is uniform. 
Since $Y=\mathcal{T}_S(\beta v)+\xi$ with $\mathbb{E}[\xi\xi^\top]=\sigma^2 I_d$ and $\mathbb{E}[\xi]=0$,
\begin{align}\label{eqn:app_E_16}
    \mathbb{E}[YY^\top]=\sigma^2 I_d + \beta^2\,\mathbb{E}\big[\mathcal{T}_S v v^\top \mathcal{T}_S^\top\big].
\end{align}
The second term has operator norm at most $\beta^2\|v\|^2\le \beta^2 P^2$. Therefore, substituting~\eqref{eqn:app_E_16} into~\eqref{eqn:eqn:app_E15} 
\begin{align}
    A_1(x) = (I-\Pi_{\mathrm{mean}})x+\frac{1}{\sigma^2}\Delta_C x,
\end{align}
where,
\begin{align}
    \|\Delta_C\|_{\operatorname{op}}\le C_1\,\beta^2,
\end{align}
with $C_1$ independent of $\beta$ and $\sigma$. Therefore, as $x = \beta \hat{v}$, with $\|v \| \leq P$,
\begin{align}
    A_1(x) = (I-\Pi_{\mathrm{mean}})x+O(\beta^3 / \sigma^2),
\end{align}
which proves Claim~1.

\medskip \noindent
\textit {\underline{Claim 2: $\|A_2(x)\|\le C_2(d, P) \beta^3 / \sigma^2$.}}
Using $\eta_\ell(Y)=\sigma^{-2}\langle z_\ell-\bar{z},x\rangle$ and $x=\beta\hat{v}$, each term $\big(\eta_\ell(Y)^2-\overline{\eta^2}(Y)\big)z_\ell(Y)$ is cubic in $Y$, that is, of the form
\begin{align}
    \frac{\beta^2}{\sigma^4}  \mathcal{L}_\ell(Y,Y,Y)
\end{align}
where $\mathcal{L}_\ell$ is a trilinear form in $Y$. 
Clearly, since $\mathcal{L}_\ell$ is a trilinear form in $Y$ given in \eqref{eqn:mainModel1D}, and due to the oddness of the noise term $\xi$, the expectation of $\mathcal{L}_\ell(Y,Y,Y)$ has no constant term in $\beta$, and the leading order is $\beta$, that is,
\begin{align}
    \big\|\mathbb{E}[\mathcal{L}_\ell(Y,Y,Y)]\big\|_{\mathrm{op}} \leq \tilde{C}_2 (d,P) \sigma^2 \beta.
\end{align}
Therefore, by linearity of expectation and the triangle inequality for operator norms, we have,
\begin{align}\label{eq:A2-third-moment-reduction}
    \|A_2(x)\| \le C_2(d, P) \,\frac{\beta^3}{\sigma^2}.
\end{align}
Combining Claims~1--2 with~\eqref{eq:M-decomp-A123} yields the expansion~\eqref{eq:key-cubic-expansion}.

\paragraph{Step 3: Iterate the cubic drift bound and conclude.}
Let $\hat{x}^{(t+1)}=M(\hat{x}^{(t)})$ be the population iterates. Fix $T\in\mathbb{N}$ and apply~\eqref{eq:key-cubic-expansion} at $x=\hat{x}^{(t)}=\beta\hat{v}^{(t)}$. Recall by assumption that $\|\hat{v}^{(0)}\|\le P$ and $\|v\|\le P$. 
By induction, there exists $\beta_0=\beta_0(d,P,T)>0$ such that for all $\beta\in(0,\beta_0]$,
\begin{align}\label{eq:vt-bounded-horizon}
    \max_{0\le t\le T}\|\hat{v}^{(t)}\| \le 2P.
\end{align}
Consequently, for all $t\le T-1$ and all $\beta\le\beta_0$, we may apply~\eqref{eq:key-cubic-expansion} at
$x=\hat{x}^{(t)}=\beta\hat{v}^{(t)}$ (with $\|\hat{v}^{(t)}\|\le 2P$) to obtain
\begin{align}\label{eq:one-step-drift}
    \hat{x}^{(t+1)} = \Pi_{\mathrm{mean}}x^\star + (I-\Pi_{\mathrm{mean}})\hat{x}^{(t)} + E_t, \qquad \|E_t\|\le C(d,P) \beta^3 / \sigma^2.
\end{align}
Subtract $\hat{x}^{(t)}$ from both sides:
\begin{align}
    \hat{x}^{(t+1)}-\hat{x}^{(t)} = \Pi_{\mathrm{mean}}\big(x^\star- \hat{x}^{(t)}\big)+E_t.
\end{align}
Summing over $t\in\{0,\ldots,T-1\}$ and using telescoping Sum gives
\begin{align}
    \hat{x}^{(T)}-\hat{x}^{(0)} - \Pi_{\mathrm{mean}}\big(x^\star - \hat{x}^{(0)}\big) &=\sum_{t=0}^{T-1}E_t +\sum_{t=1}^{T-1} \Pi_{\mathrm{mean}} \big(x^\star-\hat{x}^{(t)}\big).
\end{align}
By Step~1, $\Pi_{\mathrm{mean}}\hat{x}^{(t)}=\Pi_{\mathrm{mean}}x^\star$ for all $t\ge1$, so the second sum is identically zero.
Therefore
\begin{align}
    \Big\|\hat{x}^{(T)}-\hat{x}^{(0)} - \Pi_{\mathrm{mean}}\big(x^\star - \hat{x}^{(0)}\big)\Big\| \le \sum_{t=0}^{T-1}\|E_t\| \le C(d,P) T \frac{\beta^3}{\sigma^2}.
\end{align}
Dividing both sides by $\|\hat{x}^{(0)}\|=\beta\|\hat{v}^{(0)}\|$ and by $\mathrm{SNR}\asymp \beta^2 / \sigma^2$ to obtain
\begin{align}
    \frac{1}{\mathrm{SNR}}\frac{\big\|\hat{x}^{(T)}-\hat{x}^{(0)}- \Pi_{\mathrm{mean}}\big(x^\star - \hat{x}^{(0)}\big)\big\|}{\|\hat{x}^{(0)}\|} \le C(d,P) T,
\end{align}
and taking $\mathrm{SNR}\to 0$ yields~\eqref{eq:mra-iter-bound}. Finally, if $\Pi_{\mathrm{mean}}\hat{x}^{(0)}=\Pi_{\mathrm{mean}}x^\star$, then the mean offset vanishes and the same argument yields~\eqref{eq:mra-aligned-bound}.

\section{Preliminaries: Finite-sample analysis} \label{sec:preliminaries-finite-sample}

This section collects preliminaries for the finite-sample analysis leading to Proposition~\ref{lem:subgaussian-epsn}. Our arguments rely on standard non-asymptotic tools from empirical process theory and concentration of measure, see, e.g., \cite{van1996weak}.

\paragraph{Notations.}
Let $Y=\mathcal{T}_S x^\star+\xi \in \mathbb{R}^d$ with $S$ uniform on $\{0,\dots,d-1\}$, where $\{\mathcal{T}_\ell\}_{\ell=0}^{d-1}$ are the circular shifts on $\mathbb{R}^d$, and $\xi\sim\mathcal{N}(0,\sigma^2 I_d)$. Let $M(x)\triangleq  \mathbb{E} \phi_x(Y)$ and $M_n(x)\triangleq  \frac{1}{n}\sum_{i=1}^{n} \phi_x(Y_i)$ be the population and empirical EM updates, where $\phi_x(Y)$
\begin{align}
    \phi_x(Y)\triangleq \sum_{\ell=0}^{d-1}\gamma_\ell(x;Y) \mathcal{T}_\ell^{-1}Y,
    \qquad
    \gamma_\ell(x;Y)=\frac{\exp(\langle Y, \mathcal{T}_\ell x\rangle/\sigma^2)}{\sum_{r=0}^{d-1}\exp(\langle Y,\mathcal{T}_r x\rangle/\sigma^2)}. \label{eqn:app-gamma-ell-def}
\end{align}
Throughout, $\|\cdot\|$ denotes the Euclidean norm on $\mathbb{R}^d$, and for a matrix $A\in\mathbb{R}^{d\times d}$ we write
\begin{align}
    \|A\|_{\mathrm{op}} \triangleq \sup_{\|u\|_2=1} \|Au\|_2
\end{align}
for the operator norm induced by the Euclidean norm (equivalently, the largest singular value of $A$).

\begin{remark}
We mention here that the proof of Proposition~\ref{lem:subgaussian-epsn} relies only on the orthogonality of $\{\mathcal{T}_\ell\}$, and as so extends to any finite orthogonal group action.\end{remark}

\paragraph{Diameter, covering numbers, and Dudley’s entropy bound.}
We collect standard definitions and results from empirical process theory (see, e.g.,~\cite{van1996weak}).
Throughout, $(T,d)$ denotes a nonempty separable metric space.

\begin{definition}[Diameter and Euclidean diameter]\label{def:diameters}
The diameter of $(T,d)$ is
\begin{align}
    \mathrm{diam}(T,d)\triangleq \sup_{s,t\in T} d(s,t)\ \in[0,\infty].
    \label{eq:def-diam-Td}
\end{align}
For a bounded set $\mathcal{B}\subset\mathbb{R}^p$, its Euclidean diameter is
\begin{align}
    \mathrm{diam}(\mathcal{B})\triangleq \sup_{x,z\in\mathcal{B}}\|x-z\|_2.
    \label{eq:def-diam-B}
\end{align}
\end{definition}

\begin{definition}[$\varepsilon$-net and covering number~\cite{van1996weak}]\label{def:covering-number}
Fix $\varepsilon>0$. A set $\mathcal{N}\subseteq T$ is an \emph{$\varepsilon$-net} of $(T,d)$ if for every $t\in T$ there exists $s\in\mathcal{N}$ such that $d(t,s)\le \varepsilon$. The \emph{covering number} of $(T,d)$ at scale $\varepsilon$ is
\begin{align}
    N(T,d;\varepsilon)\triangleq \min\big\{|\mathcal{N}|:\ \mathcal{N}\subseteq T\ \text{is an $\varepsilon$-net of }(T,d)\big\}.
    \label{eq:def-covering-number}
\end{align}
\end{definition}

\begin{lem}[Covering bound in Euclidean space~\cite{vershynin2018high}]\label{lem:euclid-covering}
There exists a universal constant $c_0>0$ such that for every bounded $\mathcal{B}\subset\mathbb{R}^p$ and every $\varepsilon>0$,
\begin{align}
    N\!\left(\mathcal{B},\|\cdot\|_2; \varepsilon\right) \leq   \left(\frac{c_0\,\mathrm{diam}(\mathcal{B})}{\varepsilon}\right)^{p}.
    \label{eq:euclid-covering-diam}
\end{align}
In particular, if $\mathcal{B}\subseteq\{x\in\mathbb{R}^p:\|x\|_2\le R\}$ then $\mathrm{diam}(\mathcal{B})\le 2R$, and hence
\begin{align}
    N\!\left(\mathcal{B},\|\cdot\|_2; \varepsilon\right) \leq   \left(\frac{2c_0 R}{\varepsilon}\right)^{p}.
    \label{eq:euclid-covering-R}
\end{align}
\end{lem}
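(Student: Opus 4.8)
The plan is to prove Lemma~\ref{lem:euclid-covering} by the classical volumetric packing argument: reduce the covering number to a packing number, then bound the latter by a disjointness-of-balls volume comparison. Throughout write $D\triangleq\diam(\mathcal{B})$ and let $B(x,r)$ denote the closed Euclidean ball of radius $r$ centered at $x$, so that $\vol\p{B(x,r)}=r^{p}\,\vol\p{B(0,1)}$ by translation invariance and homogeneity of Lebesgue measure. If $\mathcal{B}$ is empty the covering number is $0$ and the bound is trivial, so I assume $\mathcal{B}\neq\emptyset$ and fix an anchor $x_0\in\mathcal{B}$; by definition of the diameter, $\mathcal{B}\subseteq B(x_0,D)$. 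First I would reduce covering to packing: choose a maximal $\varepsilon$-separated subset $\{x_1,\dots,x_N\}\subseteq\mathcal{B}$, meaning $\norm{x_i-x_j}>\varepsilon$ for $i\neq j$ and maximal with respect to inclusion (such a finite set exists since $\mathcal{B}$ is bounded). By maximality, no point of $\mathcal{B}$ can be added, so every $x\in\mathcal{B}$ lies within distance $\varepsilon$ of some $x_i$; hence $\{x_1,\dots,x_N\}$ is an $\varepsilon$-net and $N(\mathcal{B},\norm{\cdot}_2;\varepsilon)\le N$.

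The second step is the volume comparison. Because the centers are $\varepsilon$-separated, the balls $\{B(x_i,\varepsilon/2)\}_{i=1}^{N}$ have pairwise disjoint interiors: if $y$ lay in two of them, the triangle inequality would give $\norm{x_i-x_j}<\varepsilon$, a contradiction. Since each $x_i\in B(x_0,D)$, every such ball is contained in the enlarged ball $B(x_0,D+\varepsilon/2)$, again by the triangle inequality. Summing volumes over the disjoint family and cancelling the common factor $\vol\p{B(0,1)}$ yields
\[
    N\,\p{\tfrac{\varepsilon}{2}}^{p} \le \p{D+\tfrac{\varepsilon}{2}}^{p},
\]
and rearranging gives $N\le\p{1+2D/\varepsilon}^{p}$.

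Finally I would convert this to the stated form. In the regime of interest $0<\varepsilon\le D$, one bounds $1+2D/\varepsilon\le 3D/\varepsilon$, so the claim holds with the universal constant $c_0=3$, giving~\eqref{eq:euclid-covering-diam}; the complementary case $\varepsilon>D$ is degenerate, since then a single point already covers $\mathcal{B}$ and $N=1$. The ``in particular'' claim is then immediate: if $\mathcal{B}\subseteq\{x:\norm{x}_2\le R\}$, then $D=\diam(\mathcal{B})\le 2R$, and substituting $D\le 2R$ into~\eqref{eq:euclid-covering-diam} produces~\eqref{eq:euclid-covering-R}.

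I do not expect a genuine obstacle here, as this is a textbook volumetric estimate; the only care required is bookkeeping in the degenerate cases. Specifically, one must ensure the anchor point exists (nonempty $\mathcal{B}$), that the disjoint half-balls are genuinely contained in the enlarged ball, and that the additive ``$+1$'' in $\p{1+2D/\varepsilon}^{p}$ is absorbed into $c_0$ only in the relevant regime $\varepsilon\le D$. These edge cases ($\mathcal{B}$ a singleton, or $\varepsilon$ exceeding $D$) are precisely where the clean power-law bound becomes vacuous or trivial, and I would dispatch them explicitly at the outset so that the volume argument is applied only in the nondegenerate range where it is tight.
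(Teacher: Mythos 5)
Your proof is correct, but it is worth noting that the paper itself never proves this lemma: it is imported verbatim from Vershynin's textbook \cite{vershynin2018high} and used as a black box inside the Dudley chaining argument (Lemma~\ref{lem:symm-chaining-osc}). What you have supplied is the standard self-contained volumetric argument — maximal $\varepsilon$-separated set gives a net, disjoint half-radius balls fit inside the $\p{D+\varepsilon/2}$-enlargement, volume comparison gives $N\le\p{1+2D/\varepsilon}^{p}$, and $1+2D/\varepsilon\le 3D/\varepsilon$ when $\varepsilon\le D$ — which is essentially the proof in the cited reference, so the two routes coincide in substance; yours simply makes the dependence $c_0=3$ explicit, which the paper never needs.

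One point deserves sharper treatment than your closing remark gives it. For $\varepsilon>D$ you correctly observe $N=1$, but you describe the power-law bound there as ``vacuous or trivial.'' It is neither: when $\varepsilon>c_0 D$ (in particular whenever $\mathcal{B}$ is a singleton, so $D=0$), the right-hand side of \eqref{eq:euclid-covering-diam} is strictly less than $1$ while $N=1$, so the inequality as literally stated is \emph{false}. This is a defect of the lemma's statement (a common abuse in the literature), not of your argument, and it is harmless in context: the only place the paper invokes the lemma is inside the entropy integral $\int_0^{\mathrm{diam}(\mathcal{B},d)}\sqrt{\log N(\mathcal{B},d;\varepsilon)}\,\mathrm{d}\varepsilon$, where $\varepsilon$ never exceeds the diameter. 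The cleanest fix is either to restrict the statement to $0<\varepsilon\le\mathrm{diam}(\mathcal{B})$, or to prove the unconditional bound $N\le\p{1+2D/\varepsilon}^{p}$ and absorb the additive $1$ only when quoting the power-law form; your proof already contains everything needed for either repair.
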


\begin{definition}[Sub-Gaussian increments]\label{def:subg-increments}
A stochastic process $\{X_t\}_{t\in T}$ is said to have \emph{sub-Gaussian increments} with respect to $d$ if
for all $s,t\in T$ and all $\lambda\in\mathbb{R}$,
\begin{align}
    \mathbb{E}\pp{e^{\lambda(X_t-X_s)}} \leq  \exp\left(\frac{\lambda^2}{2}\,d(t,s)^2\right).
    \label{eq:def-subg-increments}
\end{align}
\end{definition}

\begin{thm}[Dudley’s entropy integral bound~\cite{van1996weak}]\label{thm:dudley-entropy}
Let $\{X_t\}_{t\in T}$ be a separable stochastic process with sub-Gaussian increments with respect to the metric $d$.
Then there exists a universal constant $C>0$ such that for any fixed $t_0\in T$,
\begin{align}
    \mathbb{E}\sup_{t\in T}\big|X_t-X_{t_0}\big| \leq C \int_{0}^{\mathrm{diam}(T,d)} \sqrt{\log N(T,d;\varepsilon)}\,\mathrm{d}\varepsilon.
    \label{eq:dudley-entropy}
\end{align}
\end{thm}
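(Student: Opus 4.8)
The plan is to establish the bound by the classical \emph{chaining} argument, which reduces the continuous supremum to a telescoping sum of maxima over finite nets. The only probabilistic input is the increment hypothesis~\eqref{eq:def-subg-increments}: taking $\lambda$ of both signs and optimizing shows that each increment is centered and satisfies the sub-Gaussian tail bound
\begin{align*}
\mathbb{P}\big(|X_t-X_s|\ge u\big)\le 2\exp\!\Big(-\tfrac{u^2}{2\,d(t,s)^2}\Big),\qquad u>0,
\end{align*}
equivalently $\|X_t-X_s\|_{\psi_2}\lesssim d(t,s)$ in sub-Gaussian Orlicz norm. Alongside this I would use the elementary finite maximal inequality: for centered variables $Z_1,\dots,Z_N$ with $\max_i\|Z_i\|_{\psi_2}\le K$, a union bound over the tail estimate followed by integration gives $\mathbb{E}\max_{i\le N}|Z_i|\le CK\sqrt{\log(N+1)}$. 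Everything else is a deterministic geometric estimate on $(T,d)$.

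Next I would set up the multiscale approximation. Writing $D\triangleq\mathrm{diam}(T,d)$ and $\varepsilon_k\triangleq 2^{-k}D$, for each $k\ge 0$ fix a minimal $\varepsilon_k$-net $T_k$ with $|T_k|=N_k\triangleq N(T,d;\varepsilon_k)$, taking $T_0=\{t_0\}$ (a legitimate $D$-net since $D$ is the diameter), and let $\pi_k(t)$ denote a nearest point of $t$ in $T_k$, so $d(t,\pi_k(t))\le\varepsilon_k$. Separability lets me restrict to a countable dense index set and guarantees $X_{\pi_k(t)}\to X_t$, which justifies the telescoping identity
\begin{align*}
X_t-X_{t_0}=\sum_{k\ge 1}\big(X_{\pi_k(t)}-X_{\pi_{k-1}(t)}\big).
\end{align*}
Each link obeys $d(\pi_k(t),\pi_{k-1}(t))\le\varepsilon_k+\varepsilon_{k-1}\le 2\varepsilon_{k-1}$, hence has $\psi_2$-norm $\lesssim\varepsilon_{k-1}$, and as $t$ ranges over $T$ the link takes at most $N_kN_{k-1}\le N_k^2$ distinct values.

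The crux is to control the supremum level by level and then reassemble. Applying the finite maximal inequality at scale $k$ to the at most $N_k^2$ distinct links gives
\begin{align*}
\mathbb{E}\sup_{t\in T}\big|X_{\pi_k(t)}-X_{\pi_{k-1}(t)}\big|\le C\,\varepsilon_{k-1}\sqrt{\log(N_k^2+1)}\le C'\,\varepsilon_{k-1}\sqrt{\log(N_k+1)},
\end{align*}
and summing over $k\ge 1$ yields $\mathbb{E}\sup_t|X_t-X_{t_0}|\le C'\sum_{k\ge 1}\varepsilon_{k-1}\sqrt{\log(N_k+1)}$. Finally I would pass from this dyadic sum to the entropy integral: since $\varepsilon\mapsto\sqrt{\log N(T,d;\varepsilon)}$ is nonincreasing and $N(\varepsilon)\ge N_k$ on $[\varepsilon_{k+1},\varepsilon_k]$, the bound $\varepsilon_{k-1}\sqrt{\log N_k}=4\,\varepsilon_{k+1}\sqrt{\log N_k}\le 4\int_{\varepsilon_{k+1}}^{\varepsilon_k}\sqrt{\log N(T,d;\varepsilon)}\,\mathrm{d}\varepsilon$ (using $\varepsilon_k-\varepsilon_{k+1}=\varepsilon_{k+1}=\varepsilon_{k-1}/4$) lets the telescoped intervals recover $4\int_0^{D}\sqrt{\log N(T,d;\varepsilon)}\,\mathrm{d}\varepsilon$, giving the claim after absorbing universal constants. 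I expect the main obstacle to be purely technical bookkeeping: aligning the discrete dyadic scales with the continuous integral while tracking the $\sqrt{\log(N_k^2)}\le 2\sqrt{\log N_k}$ factors, and carefully invoking separability to interchange the infinite telescoping sum, the supremum, and the expectation. Both are standard but require care to make rigorous.
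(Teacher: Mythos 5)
The paper does not prove this theorem at all: it is imported as a black box from van der Vaart--Wellner \cite{van1996weak} (it sits in the preliminaries precisely so it can be invoked in Lemma~\ref{lem:symm-chaining-osc}), so there is no internal proof to compare against. Your dyadic chaining argument is exactly the classical proof of Dudley's bound found in that reference: Chernoff to get sub-Gaussian tails from the increment MGF, nearest-point projections onto minimal $2^{-k}\mathrm{diam}(T,d)$-nets, a telescoping decomposition justified by separability, the finite sub-Gaussian maximal inequality level by level, and monotonicity of $\varepsilon\mapsto N(T,d;\varepsilon)$ to convert the dyadic sum into the entropy integral. All of these steps are sound. The single spot where the argument as written has a small hole is the substitution of $\sqrt{\log N_k}$ for $\sqrt{\log(N_k+1)}$ in the integral comparison: these are not comparable when $N_k=1$, since then $\log N_k=0$ while the level-$k$ links can still contribute on the order of $\varepsilon_{k-1}$. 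This is a standard, fixable technicality rather than a genuine gap: whenever $\varepsilon<\mathrm{diam}(T,d)/2$ one has $N(T,d;\varepsilon)\ge 2$, so $N_k=1$ can occur only for $k\le 1$; those at most two link contributions are each $O(\mathrm{diam}(T,d))$, and they are absorbed into the right-hand side using
\begin{align}
    \int_{0}^{\mathrm{diam}(T,d)}\sqrt{\log N(T,d;\varepsilon)}\,\mathrm{d}\varepsilon
    \;\ge\; \frac{\mathrm{diam}(T,d)}{2}\,\sqrt{\log 2},
\end{align}
at the cost of enlarging the universal constant $C$.
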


\subsection{Lipschitz property of the EM map }
\begin{lem}[Lipschitz property of the EM map] \label{lem:Lipschitz-in-x}
Recall the definitions of $\gamma_\ell(x;Y)$ and $\phi_x(Y)$ in~\eqref{eqn:app-gamma-ell-def}. Then, for all $x,z\in\mathbb{R}^d$ and all $Y\in\mathbb{R}^d$,
\begin{align}\label{eq:LipParam}
    \|\phi_x(Y)-\phi_z(Y)\| \leq \frac{2}{\sigma^2} \|Y\|^2 \|x-z\|,
\end{align}
and for every $x\in\mathbb{R}^d$,
\begin{align}\label{eq:JacBound}
    \big\|\nabla_x \phi_x(Y)\big\|_{\mathrm{op}} \leq \frac{2}{\sigma^2} \|Y\|^2.
\end{align}
\end{lem}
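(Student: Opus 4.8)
The plan is to establish the Jacobian bound \eqref{eq:JacBound} first and then derive the Lipschitz estimate \eqref{eq:LipParam} from it by a mean-value argument along the segment joining $z$ to $x$. Since $\phi_x(Y)=\sum_{\ell}\gamma_\ell(x;Y)\,z_\ell$ with $z_\ell=\mathcal{T}_\ell^{-1}Y$ is smooth in $x$ (by the softmax composition argument in the proof of Lemma~\ref{lem:Jacobian-M}), both claims reduce to a single operator-norm estimate that is uniform in $x$.

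First I would compute the per-sample Jacobian explicitly. Because $z_\ell$ does not depend on $x$, differentiating $\phi_x(Y)=\sum_\ell \gamma_\ell(x;Y)\,z_\ell$ and substituting the softmax-derivative identity \eqref{eq:dGamma-proof}, namely $\nabla_x\gamma_\ell(x;Y)=\tfrac{1}{\sigma^2}\gamma_\ell(x;Y)(z_\ell-\bar z)$ with $\bar z=\sum_r \gamma_r(x;Y)\,z_r$, yields
\[
    \nabla_x \phi_x(Y)=\frac{1}{\sigma^2}\sum_{\ell=0}^{d-1}\gamma_\ell(x;Y)\,z_\ell\,(z_\ell-\bar z)^\top .
\]
To bound its operator norm, I use orthonormality of each $\mathcal{T}_\ell$, which gives $\|z_\ell\|=\|Y\|$ for all $\ell$; since $\bar z$ is a convex combination of the $z_\ell$ (the weights $\gamma_\ell$ are nonnegative and sum to one), $\|\bar z\|\le\|Y\|$, and hence $\|z_\ell-\bar z\|\le 2\|Y\|$. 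Each summand is a rank-one matrix with $\|z_\ell(z_\ell-\bar z)^\top\|_{\mathrm{op}}=\|z_\ell\|\,\|z_\ell-\bar z\|\le 2\|Y\|^2$, so the triangle inequality together with $\sum_\ell\gamma_\ell=1$ gives
\[
    \|\nabla_x\phi_x(Y)\|_{\mathrm{op}}\le\frac{1}{\sigma^2}\sum_{\ell}\gamma_\ell(x;Y)\,\|z_\ell\|\,\|z_\ell-\bar z\|\le\frac{2}{\sigma^2}\|Y\|^2,
\]
which is \eqref{eq:JacBound}. The decisive feature is that this estimate depends on $Y$ only and holds uniformly over all $x\in\mathbb{R}^d$.

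Finally, \eqref{eq:LipParam} follows from the fundamental theorem of calculus along $x_t=z+t(x-z)$, $t\in[0,1]$: writing $\phi_x(Y)-\phi_z(Y)=\int_0^1 \nabla_x\phi_{x_t}(Y)\,(x-z)\,\mathrm{d}t$ and applying the uniform Jacobian bound gives $\|\phi_x(Y)-\phi_z(Y)\|\le \sup_{t\in[0,1]}\|\nabla_x\phi_{x_t}(Y)\|_{\mathrm{op}}\,\|x-z\|\le \tfrac{2}{\sigma^2}\|Y\|^2\|x-z\|$.

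This is essentially a direct computation, and I do not anticipate a serious obstacle. The only points requiring care are transcribing the softmax derivative correctly (already available from \eqref{eq:dGamma-proof}) and observing that the Jacobian bound is uniform in $x$, which is precisely what lets the supremum over the segment be controlled by the same $2\|Y\|^2/\sigma^2$. I would deliberately avoid invoking the centering identity $\sum_\ell\gamma_\ell(z_\ell-\bar z)=0$ here: although it symmetrizes the Jacobian (and in fact sharpens the constant to $1/\sigma^2$), the cruder rank-one triangle-inequality bound already delivers the stated constant $2/\sigma^2$ with less bookkeeping.
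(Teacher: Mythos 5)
Your proposal is correct and follows essentially the same route as the paper's proof: both compute the per-sample Jacobian via the softmax-derivative identity, bound it uniformly by $\tfrac{2}{\sigma^2}\|Y\|^2$ using $\|z_\ell\|=\|Y\|$, $\|\bar z\|\le\|Y\|$, and $\sum_\ell\gamma_\ell=1$, and then obtain the Lipschitz estimate by integrating along the segment $x_t=z+t(x-z)$. Your closing observation that the centering identity would sharpen the constant to $1/\sigma^2$ is also accurate, though, as you note, unnecessary for the stated bound.
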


\begin{proof}[Proof of Lemma~\ref{lem:Lipschitz-in-x}]
First, we note that only the weights $\gamma_\ell(x;Y)$ depend on the parameter $x$; thus, differentiating $\phi_x(Y)=\sum_\ell \gamma_\ell(x;Y) \mathcal{T}_\ell^{-1}Y$ with respect to $x$ yields,
\begin{align}
    \nabla_x \phi_x(Y) = \sum_{\ell=0}^{d-1} (\mathcal{T}_\ell^{-1}Y)\big(\nabla_x \gamma_\ell(x;Y)\big)^\top.
    \label{eqn:gradient-of-phi}
\end{align}
 Define $s_\ell(x;Y)  \triangleq \frac{1}{\sigma^2} \langle Y,\mathcal{T}_\ell x\rangle$. 
Thus,
\begin{align}
    \nabla_x s_\ell(x;Y)  = \frac{1}{\sigma^2} \mathcal{T}_\ell^{-1} Y.
\end{align}
Then, the softmax gradient satisfies
\begin{align}
    \nabla_x \gamma_\ell(x;Y) & = \gamma_\ell(x;Y) \left(\nabla_x s_\ell(x;Y)-\sum_{r=0}^{d-1}\gamma_r(x;Y) \nabla_x s_r(x;Y)\right) \\ & = \frac{1}{\sigma^2} \gamma_\ell(x;Y) \left(\mathcal{T}_\ell^{-1} Y-\sum_{r=0}^{d-1}\gamma_r(x;Y) \mathcal{T}_r^{-1} Y\right). \label{eqn:app-gradient-gamma}
\end{align}
Using triangle inequality, $\|\mathcal{T}_r^{-1} Y\|=\|Y\|$ and $\sum_r\gamma_r=1$, we have,
\begin{align}
    \left\|\mathcal{T}_\ell^{-1} Y-\sum_{r}\gamma_r \mathcal{T}_r^{-1} Y\right\| & \le \|\mathcal{T}_\ell^{-1} Y\|+\left\|\sum_{r}\gamma_r \mathcal{T}_r^{-1} Y\right\| \\ & \le\|Y\|+\sum_{r}\gamma_r\|\mathcal{T}_r^{-1} Y\|\le 2\|Y\|. \label{eqn:triangle-inequality-1}
\end{align}
Therefore, substituting~\eqref{eqn:triangle-inequality-1} into~\eqref{eqn:app-gradient-gamma} yields,
\begin{align}\label{eq:grad-gamma-bound}
    \big\|\nabla_x \gamma_\ell(x;Y)\big\| \leq \frac{2}{\sigma^2} \gamma_\ell(x;Y) \|Y\|.
\end{align}
Using~\eqref{eqn:gradient-of-phi},  \eqref{eq:grad-gamma-bound}, triangle-inequality, 
and $\|\mathcal{T}_\ell^{-1}Y\|=\|Y\|$ yields,
\begin{align}
    \big\|\nabla_x \phi_x(Y)\big\|_{\mathrm{op}} & \; \le \; \sum_{\ell=0}^{d-1}\big\|\nabla_x \gamma_\ell(x;Y)\big\| \|\mathcal{T}_\ell^{-1}Y\|
    \\ & \; \le \; \sum_{\ell=0}^{d-1}\frac{2}{\sigma^2} \gamma_\ell(x;Y) \|Y\| \|Y\| = \frac{2}{\sigma^2} \|Y\|^2,
\end{align}
which is \eqref{eq:JacBound}. The mean-value theorem along $x_t=z+t(x-z)$, $t\in[0,1]$, gives
\begin{align}
    \|\phi_x(Y)-\phi_z(Y)\| \; \leq \; \int_0^1 \big\|\nabla_x \phi_{x_t}(Y)\big\|_{\mathrm{op}} \|x-z\| dt \; \leq \; \frac{2}{\sigma^2} \|Y\|^2 \|x-z\|,
\end{align}
establishing \eqref{eq:LipParam}.
\end{proof}

\subsection{Expectation and tail bounds for the empirical EM}
\begin{lem}[Expectation and tail bounds for the empirical EM]
\label{lem:EF-bounds-MRA-fixedx}
Let $Y=\mathcal{T}_S x^\star+\xi \in \mathbb{R}^d$ be an MRA observation, with $S$ uniform on $\{0,\dots,d-1\}$, and $\xi\sim\mathcal{N}(0,\sigma^2 I_d)$. Recall the definitions of $\gamma_\ell(x;Y)$ and $\phi_x(Y)$ in~\eqref{eqn:app-gamma-ell-def}. 
Define for the i.i.d. observations $\{ Y_i\}_{i=1}^{n} \sim Y$:
\begin{align}
    F(Y_1,\ldots,Y_n) & \triangleq \Big\|\frac{1}{n}\sum_{i=1}^{n} \phi_x(Y_i) - \mathbb{E} \phi_x(Y)\Big\|_2,
    \\
    Z & \triangleq\phi_x(Y)-\mathbb{E} \phi_x(Y),
    \\\Sigma_x & \triangleq\operatorname{Cov}(Z) = \mathbb{E} \pp{ZZ^\top}.
\end{align}
Set $V \triangleq \|x^\star\|_2^2+\sigma^2 d$. Then, the following statements hold.
\begin{enumerate}
\item \emph{(Second moment and expectation.)} For all $n\ge1$,
\begin{align}\label{eq:EF-upper-second-moment}
    \mathbb{E} F^2   =  \frac{1}{n} \mathrm{tr} \Sigma_x \leq \frac{V}{n},
\end{align}
and,
\begin{align}
    \mathbb{E} F \leq  \sqrt{\frac{\mathrm{tr} \Sigma_x}{n}} \leq \sqrt{\frac{V}{n}}. \label{eqn:expectation-of-F}
\end{align}

\item \emph{(Directional scalar deviation.)}
For any fixed $u\in\mathbb{S}^{d-1}$ and any $\eta\in(0,1)$,
\begin{align}\label{eq:scalar-quantile}
    \Big|\Big\langle u,\ \frac{1}{n}\sum_{i=1}^{n} \phi_x(Y_i) - \mathbb{E} \phi_x(Y)\Big\rangle\Big| \leq C_1 \sqrt{\frac{V}{n}} \sqrt{\log\frac{2}{\eta}},
\end{align}
with probability $\ge 1-\eta$ for a global constant $C_1$.

\item \emph{(Uniform bound.)}
Let $\mathcal{N}\subset\mathbb{S}^{d-1}$ be any $1/2$-net with $\log(|\mathcal{N}|)\le Cd$, for a constant $C$.
Then for any $\delta\in(0,1)$, with probability at least $1-\delta$,
\begin{align}\label{eq:anchor-term}
    \max_{u\in\mathcal{N}} \Big|\Big\langle u,\ \frac{1}{n}\sum_{i=1}^{n} \phi_x(Y_i) - \mathbb{E} \phi_x(Y)\Big\rangle\Big|\leq C_2 \sqrt{\frac{V}{n}}\ \sqrt{ d+\log\frac{2}{\delta} }.
\end{align}
and,
\begin{align}\label{eq:vector-fixed-x}
   \mathbb{P}\ \p{F \le\ C_3 \sqrt{\frac{V}{n}}\ \sqrt{ d+\log\frac{2}{\delta} }} \geq 1-\delta,
\end{align}
for global constants $C_2, C_3$.
\end{enumerate}
\end{lem}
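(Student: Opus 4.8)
The plan is to route all three parts through the i.i.d.\ decomposition
$\frac1n\sum_{i=1}^n\phi_x(Y_i)-\mathbb{E}\phi_x(Y)=\frac1n\sum_{i=1}^n Z_i$, where the $Z_i\triangleq\phi_x(Y_i)-\mathbb{E}\phi_x(Y)$ are i.i.d., centered, with common covariance $\Sigma_x$. The single structural fact driving everything is the pointwise contraction $\|\phi_x(Y)\|_2\le\|Y\|_2$: since $\phi_x(Y)=\sum_\ell\gamma_\ell(x;Y)\,\mathcal{T}_\ell^{-1}Y$ is a convex combination (the $\gamma_\ell$ sum to one) of the vectors $\mathcal{T}_\ell^{-1}Y$, each of Euclidean norm $\|Y\|_2$ by orthogonality of the shifts, the triangle inequality gives the bound. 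Combined with $\mathbb{E}\|Y\|_2^2=\|x^\star\|_2^2+\sigma^2 d=V$ (the cross term vanishes as $\xi$ is centered and independent of $S$, and $\|\mathcal{T}_S x^\star\|_2=\|x^\star\|_2$), this yields all the required second-moment control.

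For item~(1) I would expand $\mathbb{E}F^2=\mathbb{E}\|\tfrac1n\sum_i Z_i\|_2^2=\tfrac1{n^2}\sum_{i,j}\mathbb{E}\langle Z_i,Z_j\rangle$; independence and centering annihilate the off-diagonal terms, leaving $\tfrac1n\mathbb{E}\|Z\|_2^2=\tfrac1n\tr\Sigma_x$. Then $\tr\Sigma_x=\mathbb{E}\|Z\|_2^2\le\mathbb{E}\|\phi_x(Y)\|_2^2\le\mathbb{E}\|Y\|_2^2=V$, where the first step is $\mathrm{Var}\le$ second moment and the second is the contraction above; this is \eqref{eq:EF-upper-second-moment}. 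Jensen's inequality $\mathbb{E}F\le\sqrt{\mathbb{E}F^2}$ then delivers \eqref{eqn:expectation-of-F}.

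For item~(2), fix $u\in\mathbb{S}^{d-1}$ and set $W\triangleq\langle u,\phi_x(Y)\rangle$, so $|W|\le\|\phi_x(Y)\|_2\le\|Y\|_2$. The crux is that $\|Y\|_2$ is sub-Gaussian with parameter $\lesssim\sqrt V$: conditionally on $S$, the map $\xi\mapsto\|\mathcal{T}_S x^\star+\xi\|_2$ is $1$-Lipschitz, so Gaussian Lipschitz concentration (e.g.\ \cite{ledoux2013probability}) gives sub-Gaussian fluctuations of scale $\sigma$ about a conditional mean bounded by $\sqrt{\mathbb{E}[\|Y\|_2^2\mid S]}=\sqrt V$; since $\sigma\le\sqrt V$, the sub-Gaussian norm of $\|Y\|_2$ is $\lesssim\sqrt V$ uniformly in $S$. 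Hence $W$, and after centering $W-\mathbb{E}W$, is sub-Gaussian with parameter $\lesssim\sqrt V$. Averaging the i.i.d.\ centered summands gives $\|\tfrac1n\sum_i(W_i-\mathbb{E}W)\|_{\psi_2}\lesssim\sqrt{V/n}$ and the tail $\mathbb{P}(|\tfrac1n\sum_iW_i-\mathbb{E}W|>t)\le2\exp(-cnt^2/V)$; inverting at confidence level $\eta$ produces \eqref{eq:scalar-quantile}, valid for every $\eta\in(0,1)$.

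Item~(3) is then routine discretization. Applying~(2) with $\eta=\delta/|\mathcal{N}|$ and union-bounding over the net yields, with probability $\ge1-\delta$, the bound for every $u\in\mathcal{N}$ with $\log(2|\mathcal{N}|/\delta)=\log|\mathcal{N}|+\log(2/\delta)\le Cd+\log(2/\delta)$ replacing $\log(2/\eta)$; absorbing constants gives \eqref{eq:anchor-term}. Finally, writing $v^\star\triangleq\frac1n\sum_i\phi_x(Y_i)-\mathbb{E}\phi_x(Y)$ and invoking the standard $\tfrac12$-net comparison $\|v^\star\|_2\le2\max_{u\in\mathcal{N}}\langle u,v^\star\rangle$ converts the net control into a bound on $F=\|v^\star\|_2$, establishing \eqref{eq:vector-fixed-x}. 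The one genuinely non-cosmetic step is item~(2): one must resist bounding $\|Y\|_2^2$ as a merely sub-exponential chi-square, and instead exploit Lipschitz Gaussian concentration so that the deterministic envelope $|W|\le\|Y\|_2$ upgrades to a sub-Gaussian tail with the correct $\sqrt V$ scaling, which is precisely what produces the advertised $\sqrt{V/n}$ rate rather than a heavier-tailed one.
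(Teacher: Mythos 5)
Your proposal is correct and follows essentially the same route as the paper: the same variance expansion with vanishing cross terms and the contraction $\|\phi_x(Y)\|_2\le\|Y\|_2$ for item (1), the same sub-Gaussian bound on $\langle u,\phi_x(Y)\rangle$ obtained from Lipschitz/norm concentration of $\|Y\|_2$ at scale $\sqrt{V}$ for item (2), and the same union bound over the $1/2$-net plus the comparison $\|v\|_2\le 2\max_{u\in\mathcal{N}}|\langle u,v\rangle|$ for item (3). No gaps.
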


\begin{proof}[Proof of Lemma~\ref{lem:EF-bounds-MRA-fixedx}]
We prove each item of Lemma~\ref{lem:EF-bounds-MRA-fixedx} separately. 

\paragraph{(1) Second moment and expectation.}
Let $Z_i=\phi_x(Y_i)-\mathbb{E}\phi_x(Y)$, which are i.i.d. by assumption and satisfy $\mathbb{E}Z_i=0$. Then
\begin{align}
    \mathbb{E}F^2  &= \mathbb{E}\Big\|\frac{1}{n}\sum_{i=1}^n Z_i\Big\|^2 = \frac{1}{n^2}\sum_{i=1}^n \mathbb{E}\|Z_i\|^2 = \frac{1}{n}\operatorname{tr}\Sigma_x,
\end{align}
where the cross-terms vanish for $i\neq j$ by independence and centering; indeed, note that $\mathbb{E}\langle Z_i,Z_j\rangle=\langle \mathbb{E}Z_i,\mathbb{E}Z_j\rangle=0$.
Next, using the triangle inequality, $\|\mathcal{T}_\ell^{-1}Y\|=\|Y\|$, and $\sum_\ell \gamma_\ell(x;Y)=1$, we have
\begin{align}
    \|\phi_x(Y)\| \le \sum_{\ell}\gamma_\ell(x;Y)\,\|\mathcal{T}_\ell^{-1}Y\| = \|Y\|.
\end{align}
Therefore,
\begin{align}
    \operatorname{tr}\Sigma_x = \mathbb{E}\|Z\|^2 \le \mathbb{E}\|\phi_x(Y)\|^2 \le \mathbb{E}\|Y\|^2 = \|x^\star\|^2+\sigma^2 d = V,
\end{align}
which yields~\eqref{eq:EF-upper-second-moment}. For the expectation, Jensen's inequality gives
\begin{align}
    (\mathbb{E}F)^2\le \mathbb{E}F^2, \label{eqn:-jensen-inequality-quadratic}
\end{align}
and combining with~\eqref{eq:EF-upper-second-moment} yields~\eqref{eqn:expectation-of-F}.

\paragraph{(2) Directional deviation.}
Fix $u\in\mathbb{S}^{d-1}$ and define
\begin{align}
    X_i \triangleq \Big\langle u,\ \phi_x(Y_i)-\mathbb{E}\phi_x(Y)\Big\rangle.
\end{align}
By definition, $\{X_i\}_{i=1}^n$ are i.i.d. and centered. As above, we have $|\langle u,\phi_x(Y)\rangle|\le \|\phi_x(Y)\|\le \|Y\|$, and hence
\begin{align}\label{eq:dom}
    |X_i| \le \|\phi_x(Y_i)\|+\mathbb{E}\|\phi_x(Y)\| \le \|Y_i\|+\mathbb{E}\|Y\| \triangleq \|Y_i\|+m .
\end{align}
Moreover, by Cauchy--Schwarz inequality,
\begin{align}\label{eq:Ez}
m=\mathbb{E}\|Y\|\le \sqrt{\mathbb{E}\|Y\|^2}=\sqrt{V}.
\end{align}
Since $\xi\sim\mathcal{N}(0,\sigma^2 I_d)$, there exists a universal constant $C>0$ such that for all $t\ge 0$,
\begin{align}\label{eq:norm-conc}
    \mathbb{P}\big(|\|Y\|-\mathbb{E}\|Y\||\ge t\big) \le 2\exp\left(-\frac{t^2}{C\sigma^2}\right).
\end{align}
By~\eqref{eq:dom}, for every $t\ge 0$,
\begin{align}
    \mathbb{P}\left(|X_i|\ge t\right) \le \mathbb{P}\left(\|Y\|\ge t-m\right) & = \mathbb{P}\left(\|Y\|-\mathbb{E}\|Y\|\ge t-2m\right)
    \\ & \le \mathbb{P}\left(\|Y\|-\mathbb{E}\|Y\|\ge (t-2m)_+\right),
\end{align}
where $(a)_+\triangleq \max\{a,0\}$. Combining this with~\eqref{eq:norm-conc} yields
\begin{align}\label{eq:Xi-tail-intermediate}
    \mathbb{P}\left(|X_i|\ge t\right) \le 2\exp\left(-\frac{(t-2m)_+^2}{C\sigma^2}\right).
\end{align}
Now, to upper bound the right-hand side of \eqref{eq:Xi-tail-intermediate} by a sub-Gaussian tail in $t$, we split into two regimes. First, if $t\le 4m$, then choosing $C'>0$ sufficiently large makes
$2\exp\big(-t^2/(C'(\sigma+m)^2)\big)\ge 1$ and the desired bound is trivial. Second, if $t>4m$, then $(t-2m)_+\ge t/2$ and $\sigma\le \sigma+m$, so~\eqref{eq:Xi-tail-intermediate} implies
\begin{align}
    \mathbb{P}\left(|X_i|\ge t\right) \le 2\exp\left(-\frac{t^2}{4C\sigma^2}\right) \le 2\exp\left(-\frac{t^2}{C'(\sigma+m)^2}\right).
\end{align}
Combining, for all $t\ge 0$,
\begin{align}\label{eq:Xi-subg}
    \mathbb{P}\left(|X_i|\ge t\right) \le 2\exp\left(-\frac{t^2}{C'(\sigma+m)^2}\right),
\end{align}
so each $X_i$ is sub-Gaussian with parameter
\begin{align}\label{eq:tau-final}
    \tau \le C_0(\sigma+m) \le C_0' \sqrt{V}.
\end{align}
Therefore, by the standard concentration bound for i.i.d. centered sub-Gaussian random variables, we have for all $t>0$
\begin{align}\label{eq:subg-average}
    \mathbb{P}\left(\left|\frac{1}{n}\sum_{i=1}^{n} X_i\right|\ge t\right) \le 2\exp\left(-c\,\frac{n t^2}{\tau^2}\right),
\end{align}
for some universal constant $c>0$. Equivalently, with $s=\log\frac{2}{\eta}$,
\begin{align}\label{eq:avg-quantile}
    \mathbb{P}\left(\left|\frac{1}{n}\sum_{i=1}^{n} X_i\right| \le C\frac{\tau}{\sqrt{n}}\sqrt{s}\right)\ge 1-\eta.
\end{align}
Plugging~\eqref{eq:tau-final} and $\tau\le C_0'\sqrt{V}$ into~\eqref{eq:avg-quantile} yields~\eqref{eq:scalar-quantile}.

\paragraph{(3) Union over a $1/2$-net.}
Apply~\eqref{eq:scalar-quantile} to each $u\in\mathcal{N}$ with $\eta=\delta/|\mathcal{N}|$ and use a union bound. Since $\log|\mathcal{N}|\le Cd$, we have $|\mathcal{N}|\le r^d$ for $r=e^{C}$, hence with probability at least $1-\delta$,
\begin{align}
    \max_{u\in\mathcal{N}} \Big|\Big\langle u,\ \frac{1}{n}\sum_{i=1}^{n}\phi_x(Y_i)-\mathbb{E}\phi_x(Y)\Big\rangle\Big|  &\le  C_2\sqrt{\frac{V}{n}}\sqrt{\log\frac{2|\mathcal{N}|}{\delta}}
    \\
    &\le C_3\sqrt{\frac{V}{n}}\sqrt{d+\log\frac{2}{\delta}},
\end{align}
which is~\eqref{eq:anchor-term}. Finally, for any $v\in\mathbb{R}^d$, since $\mathcal{N}$ is a $1/2$-net of $\mathbb{S}^{d-1}$,
\begin{align}
    \|v\|\le 2\max_{u\in\mathcal{N}}|\langle u,v\rangle|,
\end{align}
and applying this to $v=\frac{1}{n}\sum_{i=1}^{n}\phi_x(Y_i)-\mathbb{E}\phi_x(Y)$ yields~\eqref{eq:vector-fixed-x}.
\end{proof}

\subsection{Symmetrization and chaining bound}

\begin{lem}[Symmetrization and chaining bound]\label{lem:symm-chaining-osc}
Let $Y=\mathcal{T}_S x^\star+\xi\in\mathbb{R}^d$ be an MRA observation~\eqref{eqn:mainModel1D}, where $S$ is uniform on $\{0,\dots,d-1\}$
and $\xi\sim\mathcal{N}(0,\sigma^2 I_d)$. Recall the definitions of $\gamma_\ell(x;Y)$ and $\phi_x(Y)$ in~\eqref{eqn:app-gamma-ell-def}.
Let $\mathcal{B}\subset\mathbb{R}^d$ be compact and satisfy $\mathcal{B}\subseteq\{x:\|x\|_2\le R\}$, and fix an anchor $x_0\in\mathcal{B}$.
For any $u\in\mathbb{S}^{d-1}$, conditionally on i.i.d. observations $\{Y_i\}_{i=1}^n\sim Y$, define
\begin{align}\label{eq:def-Wxu}
    W_x^{(u)} \triangleq \frac{1}{n}\sum_{i=1}^n \varepsilon_i\,\big\langle u,\phi_x(Y_i)\big\rangle,
    \qquad x\in\mathcal{B},
\end{align}
where $\{\varepsilon_i\}_{i=1}^n$ are i.i.d. Rademacher signs~\cite{koltchinskii2011oracle,ledoux2013probability},
independent of $\{Y_i\}_{i=1}^n$. Introduce the random scale factor
\begin{align}\label{eq:def-Ln}
    L_n \triangleq \frac{2}{\sigma^2}\left(\frac{1}{n}\sum_{i=1}^n \|Y_i\|_2^4\right)^{1/2}.
\end{align}
Then the following statements hold for every fixed $u\in\mathbb{S}^{d-1}$:
\begin{enumerate}
\item \emph{(Sub-Gaussian increments.)}
Conditionally on $\{Y_i\}_{i=1}^n$, the process $\{W_x^{(u)}\}_{x\in\mathcal{B}}$ has sub-Gaussian increments in the sense of Definition~\ref{def:subg-increments} with respect to the metric
\begin{align}
    d(x,z)\triangleq \frac{L_n}{\sqrt{n}}\|x-z\|_2,\qquad x,z\in\mathcal{B}.
\end{align}
Equivalently, for all $\lambda\in\mathbb{R}$ and all $x,z\in\mathcal{B}$,
\begin{align}\label{eq:sg-increments}
    \mathbb{E}_\varepsilon \exp\Big\{\lambda\big(W_x^{(u)}-W_z^{(u)}\big)\Big\} \leq \exp\left(\frac{\lambda^2}{2}\,\frac{L_n^2}{n}\,\|x-z\|_2^2\right).
\end{align}

\item \emph{(Dudley chaining.)}
Conditionally on $\{Y_i\}_{i=1}^n$,
\begin{align}\label{eq:anchored-chaining-simplified}
    \mathbb{E}_\varepsilon \sup_{x\in\mathcal{B}}\big|W_x^{(u)}-W_{x_0}^{(u)}\big| \ \le\ C\,\frac{L_n}{\sqrt{n}}\,R\,\sqrt{d},
\end{align}
for a universal constant $C>0$.
\end{enumerate}
\end{lem}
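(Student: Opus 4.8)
The plan is to regard $\{W_x^{(u)}\}_{x\in\mathcal{B}}$ as a Rademacher process with the data $\{Y_i\}_{i=1}^n$ frozen, so that conditionally on $\{Y_i\}$ all the randomness is carried by the i.i.d.\ signs $\{\varepsilon_i\}$ and $L_n$ is a deterministic scale. Part~(1) then reduces to the elementary moment-generating-function bound for linear combinations of independent Rademacher variables, and Part~(2) follows by feeding the resulting increment metric into Dudley's entropy integral (Theorem~\ref{thm:dudley-entropy}).

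For Part~(1), I would write the increment as a weighted sum of signs, $W_x^{(u)}-W_z^{(u)}=\tfrac1n\sum_{i=1}^n\varepsilon_i c_i$ with $c_i\triangleq\langle u,\phi_x(Y_i)-\phi_z(Y_i)\rangle$, and invoke the standard conditional bound $\mathbb{E}_\varepsilon\exp(\lambda\,n^{-1}\sum_i\varepsilon_i c_i)\le\exp(\tfrac{\lambda^2}{2n^2}\sum_i c_i^2)$. It then remains to control $\sum_i c_i^2$: by Cauchy--Schwarz $|c_i|\le\|\phi_x(Y_i)-\phi_z(Y_i)\|$ since $\|u\|_2=1$, and the parameter-Lipschitz estimate $\|\phi_x(Y)-\phi_z(Y)\|\le 2\sigma^{-2}\|Y\|_2^2\,\|x-z\|_2$ from Lemma~\ref{lem:Lipschitz-in-x} gives $\sum_i c_i^2\le 4\sigma^{-4}\|x-z\|_2^2\sum_i\|Y_i\|_2^4$. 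Substituting the definition $L_n^2=4\sigma^{-4}\cdot\tfrac1n\sum_i\|Y_i\|_2^4$ collapses this to $\tfrac{1}{n^2}\sum_i c_i^2\le \tfrac{L_n^2}{n}\|x-z\|_2^2$, which is precisely the exponent $\tfrac{\lambda^2}{2}d(x,z)^2$ with $d(x,z)=\tfrac{L_n}{\sqrt n}\|x-z\|_2$, establishing~\eqref{eq:sg-increments}.

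For Part~(2), the key observation is that $d$ is merely a scalar rescaling of the Euclidean metric by the (conditionally fixed) factor $L_n/\sqrt n$, so both the diameter and the covering numbers inherit their Euclidean counterparts cleanly. Using $\mathcal{B}\subseteq\{\|x\|_2\le R\}$ we get $\mathrm{diam}(\mathcal{B},d)\le 2L_nR/\sqrt n\triangleq D$, and $N(\mathcal{B},d;\varepsilon)=N(\mathcal{B},\|\cdot\|_2;\tfrac{\sqrt n}{L_n}\varepsilon)\le(c_0D/\varepsilon)^d$ by Lemma~\ref{lem:euclid-covering}. Dudley's bound then yields $\mathbb{E}_\varepsilon\sup_x|W_x^{(u)}-W_{x_0}^{(u)}|\le C\sqrt d\int_0^D\sqrt{\log(c_0D/\varepsilon)}\,\mathrm{d}\varepsilon$; the change of variables $\varepsilon=Ds$ factors out the scale, leaving $D\sqrt d\int_0^1\sqrt{\log(c_0/s)}\,\mathrm{d}s$, and the remaining integral is a finite universal constant. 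Collecting terms and recalling $D=2L_nR/\sqrt n$ gives the claimed bound~\eqref{eq:anchored-chaining-simplified}.

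Both parts are essentially mechanical once the right tools are assembled; I do not expect a genuine obstacle. The only points requiring care are keeping the conditioning on $\{Y_i\}$ explicit (so that $L_n$, and hence the metric $d$, is a deterministic quantity when Dudley's theorem is applied) and verifying that the entropy integral converges — a standard estimate resting on $\int_0^1\sqrt{\log(1/s)}\,\mathrm{d}s=\tfrac{\sqrt\pi}{2}<\infty$, together with a trivial bound for the $\log c_0$ contribution.
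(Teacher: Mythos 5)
Your proposal is correct and follows essentially the same route as the paper's proof: Hoeffding's lemma for Rademacher sums combined with the Lipschitz bound of Lemma~\ref{lem:Lipschitz-in-x} to obtain the sub-Gaussian increments, then Dudley's entropy integral with the scaled-Euclidean covering numbers and a change of variables to extract the factor $L_n R \sqrt{d}/\sqrt{n}$. The points you flag as requiring care (keeping the conditioning on $\{Y_i\}$ explicit so the metric is deterministic, and the finiteness of $\int_0^1 \sqrt{\log(1/s)}\,\mathrm{d}s$) are exactly the ones handled in the paper's argument.
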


\begin{proof}[Proof of Lemma~\ref{lem:symm-chaining-osc}]
We condition throughout on the observations $\{Y_i\}_{i=1}^n$.

\paragraph{Step 1: sub-Gaussian increments.}
By Lemma~\ref{lem:Lipschitz-in-x}, for all $x,z\in\mathbb{R}^d$ and all $y\in\mathbb{R}^d$,
\begin{align}\label{eq:Lipx}
    \big\|\phi_x(y)-\phi_z(y)\big\|_2 \leq \frac{2}{\sigma^2}\,\|y\|_2^2\,\|x-z\|_2.
\end{align}
Fix $x,z\in\mathcal{B}$. Writing
\begin{align}
    W_x^{(u)}-W_z^{(u)} = \frac{1}{n}\sum_{i=1}^n \varepsilon_i\,\big\langle u,\phi_x(Y_i)-\phi_z(Y_i)\big\rangle \triangleq \sum_{i=1}^n \varepsilon_i a_i,
\end{align}
we have, since $\|u\|_2=1$,
\begin{align}\label{eq:def-ai-bound}
    |a_i| \leq \frac{1}{n}\big\|\phi_x(Y_i)-\phi_z(Y_i)\big\|_2 \ \overset{\eqref{eq:Lipx}}{\le}\ \frac{2}{\sigma^2 n}\,\|Y_i\|_2^2\,\|x-z\|_2.
\end{align}
By Hoeffding's lemma for Rademacher sums (see, e.g.,~\cite[Lemma~2.2.8]{van1996weak}),
\begin{align}\label{eq:hoeffding-rademacher}
    \mathbb{E}_\varepsilon \exp\left\{\lambda\sum_{i=1}^n \varepsilon_i a_i\right\} \leq \exp\left(\frac{\lambda^2}{2}\sum_{i=1}^n a_i^2\right).
\end{align}
Combining~\eqref{eq:def-ai-bound}--\eqref{eq:hoeffding-rademacher} yields
\begin{align}
    \sum_{i=1}^n a_i^2 \leq  \frac{4}{\sigma^4 n^2}\,\|x-z\|_2^2 \sum_{i=1}^n \|Y_i\|_2^4= \frac{L_n^2}{n}\,\|x-z\|_2^2, \label{eqn:app-E58}
\end{align}
where we used the definition of $L_n$ in~\eqref{eq:def-Ln}. Substituting ~\eqref{eqn:app-E58} into~\eqref{eq:hoeffding-rademacher} proves~\eqref{eq:sg-increments}, which is the sub-Gaussian increment condition of Definition~\ref{def:subg-increments} for the metric $d(x,z)=(L_n/\sqrt{n})\|x-z\|_2$.

\paragraph{Step 2: Dudley’s entropy bound.}
Since $\{W_x^{(u)}\}_{x\in\mathcal{B}}$ has sub-Gaussian increments with respect to the metric
$d(x,z)=(L_n/\sqrt{n})\|x-z\|_2$, we may apply Dudley’s entropy integral bound
(Theorem~\ref{thm:dudley-entropy}) on the index set $T=\mathcal{B}$ with anchor $t_0=x_0$. Using the definitions of diameter
and covering number in~\eqref{eq:def-diam-Td} and~\eqref{eq:def-covering-number}, we obtain
\begin{align}\label{eq:dudley-applied}
    \mathbb{E}_\varepsilon \sup_{x\in\mathcal{B}} \big|W_x^{(u)}-W_{x_0}^{(u)}\big| \leq C \int_{0}^{\mathrm{diam}(\mathcal{B},d)} \sqrt{\log N(\mathcal{B},d;\varepsilon)}\,\mathrm{d}\varepsilon,
\end{align}
where $\mathrm{diam}(\mathcal{B},d)=\sup_{x,z\in\mathcal{B}} d(x,z)$.

We bound the entropy integral by reducing to Euclidean covering numbers (Definition~\ref{def:covering-number}). Since $d(x,z)=(L_n/\sqrt{n})\|x-z\|_2$,
a scaling argument gives
\begin{align}\label{eq:cover-scale}
    N(\mathcal{B},d;\varepsilon)= N\left(\mathcal{B},\|\cdot\|_2;\frac{\sqrt{n}}{L_n}\varepsilon\right).
\end{align}
Applying Lemma~\ref{lem:euclid-covering} and using $\mathrm{diam}(\mathcal{B})\le 2R$ yields, for all $\varepsilon>0$,
\begin{align}\label{eq:cover-bound-d}
    N(\mathcal{B},d;\varepsilon) \leq \left(\frac{c_0\,\mathrm{diam}(\mathcal{B})}{(\sqrt{n}/L_n)\varepsilon}\right)^d \leq \left(\frac{2c_0\,R\,L_n}{\sqrt{n}\,\varepsilon}\right)^d.
\end{align}
Moreover,
\begin{align}\label{eq:diam-d}
    \mathrm{diam}(\mathcal{B},d) =  \frac{L_n}{\sqrt{n}}\,\mathrm{diam}(\mathcal{B}) \leq \frac{2RL_n}{\sqrt{n}}.
\end{align}
Substituting~\eqref{eq:cover-bound-d} into~\eqref{eq:dudley-applied} and performing the change of variables
$\varepsilon=(L_n/\sqrt{n})t$ with $t\in[0,\mathrm{diam}(\mathcal{B})]$, we obtain
\begin{align}
    \nonumber \int_{0}^{\mathrm{diam}(\mathcal{B},d)} \sqrt{\log N(\mathcal{B},d;\varepsilon)}\,\mathrm{d}\varepsilon
    &\le
    \int_{0}^{2RL_n/\sqrt{n}} \sqrt{\,d\,\log\left(\frac{2c_0RL_n}{\sqrt{n}\,\varepsilon}\right)}\,\mathrm{d}\varepsilon
    \\ \nonumber
    &=
    \frac{L_n}{\sqrt{n}}\sqrt{d}\int_{0}^{2R} \sqrt{\log\left(\frac{2c_0R}{t}\right)}\,\mathrm{d}t
    \\
    &\le C'\,\frac{L_n}{\sqrt{n}}\,R\,\sqrt{d},
\end{align}
where the last inequality uses $\int_0^{1}\sqrt{\log(c/u)}\,\mathrm{d}u<\infty$ for any fixed $c>0$ and absorbs constants into $C'$.
Combining with~\eqref{eq:dudley-applied} proves~\eqref{eq:anchored-chaining-simplified}.
\end{proof}

\subsection{Finite-sample uniform deviation for the EM operator}

\begin{proposition}
[Finite-sample uniform deviation for the EM operator]
\label{prop:finite-max-u}
Let $Y=\mathcal{T}_S x^\star+\xi \in \mathbb{R}^d$ be an MRA observation~\eqref{eqn:mainModel1D}, with $S$ uniform on $\{0,\dots,d-1\}$, and $\xi\sim\mathcal{N}(0,\sigma^2 I_d)$. Recall the definitions of $\gamma_\ell(x;Y)$ and $\phi_x(Y)$ in~\eqref{eqn:app-gamma-ell-def}. 
Let $\mathcal{B}\subset\{x\in\mathbb{R}^d:\|x\|_2\le R\}$ be compact, fix $x_0\in\mathcal{B}$, and let $M_n$ and $M$ denote the empirical and population EM operators, respectively. Assume that $\|x^\star \| \leq C\sigma \sqrt{d}$, and that $n \geq C'\p{\log(2/\delta)}^{3}$, for some constant $C, C'$.
Then, there exist universal constants $C_1,C_2>0$ such that
with probability at least $1-\delta$,
\begin{align}
   \sup_{u \in \mathbb{S}^{d-1}} & \sup_{x\in\mathcal{B}} \abs{\big\langle u, (M_n-M)(x)-(M_n-M)(x_0)\big\rangle} \\ & \leq C_1 \frac{R\sqrt d}{\sqrt n} d \p{1+C_2\sqrt{\frac{\log(2/\delta)}{n}}}. \label{eq:finite-max-union}
\end{align}

\end{proposition}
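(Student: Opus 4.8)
The plan is to view the left-hand side as the supremum of a single $\mathbb{R}^d$-valued centered empirical process, anchored so that it vanishes at $x_0$, and then bound it by symmetrization followed by conditional Dudley chaining. Writing $h_x(Y)\triangleq \phi_x(Y)-\phi_{x_0}(Y)$, note that $(M_n-M)(x)-(M_n-M)(x_0)=\frac{1}{n}\sum_{i=1}^n\p{h_x(Y_i)-\E h_x(Y)}\triangleq \mathbb{G}_n(x)$, that $h_{x_0}\equiv 0$, and that $\sup_{u\in\mathbb{S}^{d-1}}|\langle u,\mathbb{G}_n(x)\rangle|=\|\mathbb{G}_n(x)\|_2$, so the target equals $\sup_{x\in\mathcal{B}}\|\mathbb{G}_n(x)\|_2$. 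First I would apply the standard symmetrization inequality to pass to the Rademacher process $W_x^{(u)}$ of Lemma~\ref{lem:symm-chaining-osc}, noting $W_x^{(u)}-W_{x_0}^{(u)}=\frac{1}{n}\sum_i\varepsilon_i\langle u,h_x(Y_i)\rangle$, and replace $\sup_{u}$ by a maximum over a fixed $1/2$-net $\mathcal{N}\subset\mathbb{S}^{d-1}$ with $\log|\mathcal{N}|\lesssim d$, using $\|v\|_2\le 2\max_{u\in\mathcal{N}}\langle u,v\rangle$.

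Next, I would condition on the data $\{Y_i\}_{i=1}^n$ and invoke Lemma~\ref{lem:symm-chaining-osc}: for each fixed $u$ the process $W_x^{(u)}-W_{x_0}^{(u)}$ has sub-Gaussian increments in the metric $d(x,z)=\frac{L_n}{\sqrt n}\|x-z\|_2$, and its Dudley integral is bounded by $C\frac{L_n}{\sqrt n}R\sqrt d$, where $L_n=\frac{2}{\sigma^2}\p{\frac{1}{n}\sum_i\|Y_i\|_2^4}^{1/2}$. Since $\varepsilon\mapsto\sup_x|W_x^{(u)}-W_{x_0}^{(u)}|$ is Lipschitz in the Rademacher signs with constant controlled by $\mathrm{diam}(\mathcal{B},d)\le 2\frac{L_n}{\sqrt n}R$, Rademacher concentration furnishes a sub-Gaussian tail around the Dudley mean; a union bound over $\mathcal{N}$ costs only $\sqrt{\log|\mathcal{N}|}\lesssim\sqrt d$, of the same order. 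Conditionally on the data this gives $\sup_{u,x}|W_x^{(u)}-W_{x_0}^{(u)}|\lesssim \frac{L_n}{\sqrt n}R\sqrt d$ with high probability over $\varepsilon$, and after desymmetrization an analogous bound for $\sup_x\|\mathbb{G}_n(x)\|_2$.

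The remaining ingredient is to replace the random scale $L_n$ by its deterministic order. Under $\|x^\star\|_2\lesssim\sigma\sqrt d$ one has $\E\|Y\|_2^4\asymp\sigma^4 d^2$, hence $\E L_n^2\asymp d^2$ and the typical value of $L_n$ is of order $d$; substituting $L_n\asymp d$ into $\frac{L_n}{\sqrt n}R\sqrt d$ reproduces the claimed $\frac{Rd^{3/2}}{\sqrt n}$. To make this rigorous at the stated confidence I would apply a Bernstein-type inequality to the i.i.d. average $\frac{1}{n}\sum_i\|Y_i\|_2^4$ and show $L_n\le C d\p{1+C_2\sqrt{\frac{\log(2/\delta)}{n}}}$ with probability at least $1-\delta/2$; intersecting this event with the conditional Rademacher event from the previous step yields \eqref{eq:finite-max-union}, with the factor $\p{1+C_2\sqrt{\log(2/\delta)/n}}$ originating entirely from the fluctuation of $L_n$.

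The main obstacle is precisely this control of $L_n$: the summands $\|Y_i\|_2^4$ are quartic in a Gaussian vector, hence sub-Weibull of order $1/2$ (tails decaying like $\exp(-c\,t^{1/2})$), which precludes sub-Gaussian or even sub-exponential Bernstein bounds. I would instead use a generalized Bernstein--Orlicz / sub-Weibull tail inequality (e.g.,~\cite{bong2023tight,vershynin2018high}); the sample-size threshold $n\gtrsim(\log(2/\delta))^3$ is exactly the regime in which the sub-Gaussian part of that inequality dominates, keeping the deviation of $\frac{1}{n}\sum_i\|Y_i\|_2^4$ within a constant factor of its mean. A secondary subtlety is the bookkeeping between the two randomness sources: the data fix the chaining metric through $L_n$, while the chaining and concentration steps are conditional on the data, so the final statement requires composing the conditional high-probability bound over $\varepsilon$ with the unconditional control of $L_n$, together with a desymmetrization step that transfers the Rademacher bound back to the original centered process via its envelope $\frac{4R}{\sigma^2}\|Y\|_2^2$.
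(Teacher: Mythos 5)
Your overall architecture matches the paper's: you anchor the process at $x_0$, symmetrize, run conditional Dudley chaining through Lemma~\ref{lem:symm-chaining-osc} with the random scale $L_n$, and control $L_n$ by sub-Weibull$(1/2)$ concentration, correctly identifying both the role of the threshold $n\gtrsim \p{\log(2/\delta)}^3$ and the origin of the factor $\p{1+C_2\sqrt{\log(2/\delta)/n}}$. The genuine gap is in the symmetrization/desymmetrization bookkeeping, which you flag yourself but resolve incorrectly. You propose to condition on the data, prove a high-probability-over-$\varepsilon$ bound for the Rademacher process (Dudley mean, plus bounded-differences concentration in the signs, plus a union bound over a $1/2$-net in $u$), and then ``desymmetrize.'' But once you condition on $\{Y_i\}_{i=1}^n$, the target quantity $\sup_{x\in\mathcal{B}}\big\|(M_n-M)(x)-(M_n-M)(x_0)\big\|_2$ is a fixed number, while the Rademacher signs are independent auxiliary randomness; a statement holding with high probability over $\varepsilon$ gives no control whatsoever over that number. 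The only things that transfer back are (a) a bound on the conditional \emph{expectation} $\mathbb{E}_\varepsilon \sup_x|\cdot|$, or (b) a tail inequality over the \emph{joint} randomness in $(Y,\varepsilon)$ established before conditioning. Your order of operations composes neither.

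The paper takes route (a): the ghost-sample-plus-Jensen inequality
\begin{align}
    \sup_{x\in\mathcal{B}}\abs{(P_n-P)\Delta_x}
    = \sup_{x\in\mathcal{B}}\abs{\p{P_n-\mathbb{E}_{Y'}[P_n']}\Delta_x}
    \le \mathbb{E}_{Y'}\,\sup_{x\in\mathcal{B}}\abs{(P_n-P_n')\Delta_x}
\end{align}
holds pointwise in the data, so after Rademacher symmetrization the conditional bound is a sum of Rademacher/ghost \emph{expectations}, each controlled by Dudley's integral. This bound is deterministic given the data and does not depend on $u$, so no concentration over $\varepsilon$, no net over $u$, and no union bound are needed; the only residual randomness is $L_n$ (the ghost copy $L_n'$ disappears by Jensen, $\mathbb{E}_{Y'}[L_n']\lesssim d$), which is then handled exactly as you describe. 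If you insist on your route, you must instead invoke the probability-level (Gin\'e--Zinn type) symmetrization inequality for tails applied before conditioning, which additionally requires verifying the pointwise condition $\mathbb{P}\p{|(P_n-P)\Delta_x|>t/2}\le 1/2$ for each fixed $x$ (a Chebyshev computation using the envelope $\tfrac{4R}{\sigma^2}\|Y\|_2^2$ you mention); that fix is legitimate but strictly heavier machinery than the paper needs. As written, ``conditional high probability over $\varepsilon$, then desymmetrize'' is not a valid step, and it is the one place your proposal fails.
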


\begin{proof} [Proof of Proposition~\ref{prop:finite-max-u}]
Let $M(x)\triangleq  \mathbb{E} \phi_x(Y)$ and $M_n(x)\triangleq  \frac{1}{n}\sum_{i=1}^{n} \phi_x(Y_i)$ be the population and empirical EM updates in the MRA model. Then, with this notation, we have,

\begin{align}
    \sup_{x\in\mathcal{B}} & \ \abs{\big\langle u,(M_n-M)(x)-(M_n-M)(x_0)\big\rangle} \\ & \qquad = \sup_{x\in\mathcal{B}} \abs{ \big\langle u,\frac{1}{n}\sum_{i=1}^{n}\p{\phi_x(Y_i)-\mathbb{E}\phi_x(Y)} -\frac{1}{n}\sum_{i=1}^{n}\p{\phi_{x_0}(Y_i)-\mathbb{E}\phi_{x_0}(Y)}\big\rangle}.
\end{align}

\paragraph{Step 1: Conditional ghost--sample symmetrization and chaining.}
Fix the observations $\{Y_i\}_{i=1}^n$. For $u\in\mathbb{S}^{d-1}$ and an anchor $x_0\in\mathcal{B}$, define
\begin{align}
    \Delta_x(y) \triangleq \big\langle u,\ \phi_x(y)-\phi_{x_0}(y)\big\rangle, \qquad x\in\mathcal{B},\ y\in\mathbb{R}^d,
\end{align}
and write $P f=\mathbb{E}[f(Y)]$, $P_n f=\frac{1}{n}\sum_{i=1}^n f(Y_i)$. Let $\{Y'_i\}_{i=1}^n$ be an i.i.d. ghost sample independent of $\{ Y_i\}_{i=1}^{n}$, and let $\{\varepsilon_i\}_{i=1}^n$ be i.i.d. Rademacher signs, independent of both $\{Y_i\}$, and $\{Y'_i\}$. Then, conditioning on $\{Y_i\}$,
\begin{align}
    \sup_{x\in\mathcal{B}} & \abs{\Big\langle u,\frac{1}{n}\sum_{i=1}^{n}\p{\phi_x(Y_i)-\mathbb{E}\phi_x(Y)} -\frac{1}{n}\sum_{i=1}^{n}\p{\phi_{x_0}(Y_i)-\mathbb{E}\phi_{x_0}(Y)}\Big\rangle}
    \\ & = \sup_{x\in\mathcal{B}}\abs{(P_n-P)\Delta_x}
    \\ & = \sup_{x\in\mathcal{B}}\abs{(P_n-\mathbb{E}_{Y'} [P'_n])\Delta_x}
\end{align}
where $P_n' f=\frac{1}{n}\sum_{i=1}^n f(Y'_i)$. Then, by supremum–expectation inequality,
\begin{align}
    \sup_{x\in\mathcal{B}}  \abs{(P_n-\mathbb{E}_{Y'} [P'_n])\Delta_x} \le \mathbb{E}_{Y'}\ \sup_{x\in\mathcal{B}}\abs{(P_n-P_n')\Delta_x}
    \label{eq:ghost-pointwise-1}
\end{align}
Now apply standard Rademacher symmetrization~\cite{koltchinskii2011oracle, ledoux2013probability}, conditionally on $\{Y_i\}_{i=1}^{n}$:
\begin{align}
    \mathbb{E}_{Y'}\ \sup_{x\in\mathcal{B}}\abs{(P_n-P_n')\Delta_x} \le \mathbb{E}_{Y',\varepsilon}\ \sup_{x\in\mathcal{B}}\abs{\frac{1}{n}\sum_{i=1}^n \varepsilon_i\p{\Delta_x(Y_i)-\Delta_x(Y'_i)}}.
    \label{eq:ghost-pointwise-2}
\end{align}
Using the triangle inequality in \eqref{eq:ghost-pointwise-2} yields
\begin{align}
    \sup_{x\in\mathcal{B}}\abs{(P_n-P)\Delta_x} \leq \mathbb{E}_{\varepsilon}\ \sup_{x\in\mathcal{B}}\abs{\frac{1}{n}\sum_{i=1}^n \varepsilon_i \Delta_x(Y_i)} + \mathbb{E}_{Y'}\mathbb{E}_{\varepsilon}\ \sup_{x\in\mathcal{B}}\abs{\frac{1}{n}\sum_{i=1}^n \varepsilon_i \Delta_x(Y'_i)}.
    \label{eq:ghost-decomp}
\end{align}
Define, for the fixed sample and the ghost sample respectively,
\begin{align}
    W_x^{(u)} &   \triangleq  \frac{1}{n}\sum_{i=1}^n \varepsilon_i \langle u,\phi_x(Y_i)\rangle,
    \\
    W_x^{(u)\prime} &   \triangleq  \frac{1}{n}\sum_{i=1}^n \varepsilon_i \langle u,\phi_x(Y'_i)\rangle .
\end{align}
Then \eqref{eq:ghost-decomp} becomes
\begin{align}
    \sup_{x\in\mathcal{B}}\abs{(P_n-P)\Delta_x} \leq \mathbb{E}_{\varepsilon}\ \sup_{x\in\mathcal{B}} \abs{\p{W_x^{(u)}-W_{x_0}^{(u)}}} + \mathbb{E}_{Y'}\mathbb{E}_{\varepsilon}\ \sup_{x\in\mathcal{B}}\abs{W_x^{(u)\prime}-W_{x_0}^{(u)\prime}}.
    \label{eq:ghost-decomp-W}
\end{align}

In addition, by Lemma~\ref{lem:symm-chaining-osc}, conditionally on the respective samples,
\begin{align}
    \mathbb{E}_{\varepsilon}\ & \sup_{x\in\mathcal{B}}\abs{W_x^{(u)}-W_{x_0}^{(u)}} \leq  C \frac{L_n}{\sqrt n} R \sqrt d,
    \\
    \mathbb{E}_{\varepsilon} & \ \sup_{x\in\mathcal{B}}\abs{W_x^{(u)\prime}-W_{x_0}^{(u)\prime}} \leq C \frac{L_n'}{\sqrt n} R \sqrt d,
    \label{eq:anchored-chaining-two}
\end{align}
where
\begin{align}
    L_n \triangleq \frac{2}{\sigma^2}\p{\frac{1}{n}\sum_{i=1}^{n}\|Y_i\|^4}^{1/2},
    \qquad L_n' \triangleq \frac{2}{\sigma^2}\p{\frac{1}{n}\sum_{i=1}^{n}\|Y'_i\|^4}^{1/2}.
\end{align}

\paragraph{Step 2: Bounding $\mathbb{E} [L_n']$.}
Taking expectation in the second inequality of \eqref{eq:anchored-chaining-two} with respect to $\{Y'_i\}$ and using Jensen’s inequality (concavity of $t\mapsto\sqrt t$),
\begin{align}
    \mathbb{E}_{Y'}\pp{L_n'} \leq \frac{2}{\sigma^2} \sqrt{\mathbb{E}\|Y\|^4}.
    \label{eq:Lnprime-expect-raw}
\end{align}
For $Y=\mathcal{T}_S x^\star+\xi$, $\xi\sim\mathcal{N}(0,\sigma^2 I_d)$, we have,
\begin{align}
    \mathbb{E}\|Y\|^4   =  \|x^\star\|^4 + 2\sigma^2(d+2)\|x^\star\|^2 + \sigma^4 d(d+2),
    \label{eq:Y4-moment}
\end{align}
so that under the assumption that $\|x^\star  \|\leq  C\sigma \sqrt{d}$,
\begin{align}
    \mathbb{E}_{Y'}[L_n']  \leq  \frac{2}{\sigma^2}\sqrt{\mathbb{E}\|Y\|^4} \leq  \tilde{C}d,
    \label{eq:Lnprime-expect}
\end{align}
for a global constant $\tilde{C}$. 
Combining \eqref{eq:ghost-decomp-W}, \eqref{eq:anchored-chaining-two}, and \eqref{eq:Lnprime-expect} yields the pointwise (conditional on $\{Y_i\}$) bound
\begin{align}
    \sup_{x\in\mathcal{B}} & \abs{\Big\langle u,\frac{1}{n}\sum_{i=1}^{n}\p{\phi_x(Y_i)-\mathbb{E}\phi_x(Y)} -\frac{1}{n}\sum_{i=1}^{n}\p{\phi_{x_0}(Y_i)-\mathbb{E}\phi_{x_0}(Y)}\Big\rangle}
    \\ &  \leq  C \frac{R\sqrt d}{\sqrt n} \p{L_n + \mathbb{E}_{Y'}[L_n']} \leq  C \frac{R\sqrt d}{\sqrt n} \p{L_n + \tilde{C} d}.
    \label{eq:step1-final}
\end{align}

\paragraph{Step 3: Removal of conditioning (control of $L_n$).}
Set
\begin{align}\label{eq:def-mu-Delta}
    \mu \triangleq \mathbb{E}\|Y\|_2^4, \qquad \Delta_n \triangleq \frac{1}{n}\sum_{i=1}^n \|Y_i\|_2^4-\mu,
\end{align}
so that
\begin{align}\label{eq:Ln-mu-Delta}
    L_n = \frac{2}{\sigma^2}\sqrt{\mu+\Delta_n}.
\end{align}

Let $Y=\mathcal{T}_Sx^\star+\xi$ with $\xi\sim\mathcal{N}(0,\sigma^2I_d)$.
Since $\mathcal{T}_Sx^\star$ is deterministic conditional on $S$, Gaussian concentration implies that for all $t\ge 0$,
\begin{align}\label{eq:gauss-conc-norm}
    \mathbb{P}\Big(\|Y\|_2 \ge \mathbb{E}\|Y\|_2 + \sigma\sqrt{2t}\Big) \leq  e^{-t}.
\end{align}
Using $\mathbb{E}\|Y\|_2\le \|\mathcal{T}_Sx^\star\|_2+\mathbb{E}\|\xi\|_2 \le \|x^\star\|_2+\sigma\sqrt{d}$, we obtain from \eqref{eq:gauss-conc-norm} that for all $t\ge 0$,
\begin{align}\label{eq:norm-upper-prob}
    \mathbb{P}\Big(\|Y\|_2 \ge \|x^\star\|_2+\sigma\sqrt{d}+\sigma\sqrt{2t}\Big) \le e^{-t}.
\end{align}
Consequently, for all $t\ge 0$,
\begin{align}\label{eq:tail-Y4}
    \mathbb{P} \Big(\|Y\|_2^4 \ge \big(\|x^\star\|_2+\sigma\sqrt{d}+\sigma\sqrt{2t}\big)^4\Big) \leq  e^{-t}.
\end{align}
In particular, under the standing assumption $\|x^\star\|_2\le C\sigma\sqrt{d}$, there exists a universal constant $A>0$ such that
\begin{align}\label{eq:tail-Y4-simplified}
    \mathbb{P}\Big(\|Y\|_2^4 \ge A\sigma^4(d+t)^2\Big) \leq  e^{-t}
    \qquad\text{for all }t\ge 0.
\end{align}
This tail behavior is sub-Weibull($1/2$): there exists a universal constant $v_4\asymp \sigma^4 d^2$ such that the centered random variable $X\triangleq \|Y\|_2^4-\mu$ satisfies a Bernstein-type tail bound of the form (see, e.g., \cite{bong2023tight})
\begin{align}\label{eq:X-subweibull-tail}
    \mathbb{P}\big(|X|\ge r\big) \leq 2\exp\left(-c\,\min\left\{\frac{r^2}{v_4^2},\;\Big(\frac{r}{v_4}\Big)^{1/2}\right\}\right)
    \qquad\text{for all }r\ge 0,
\end{align}
for a universal constant $c>0$.

Let $X_i\triangleq \|Y_i\|_2^4-\mu$. Then $\{X_i\}_{i=1}^n$ are independent, mean-zero, and satisfy \eqref{eq:X-subweibull-tail}
with the same scale $v_4$.
Applying the Bernstein-type concentration inequality for weighted sums of independent sub-Weibull($\alpha$) variables with $\alpha=\frac{1}{2}$ to $\sum_{i=1}^n a_iX_i$ with $a_i\equiv 1/n$ yields (see e.g.,~\cite{bong2023tight}): for any $\delta\in(0,1)$ and $s\triangleq \log\frac{2}{\delta}$, with probability at least $1-\delta$,
\begin{align}\label{eq:Delta-bernstein-subweibull}
    |\Delta_n| =\left|\frac{1}{n}\sum_{i=1}^n X_i\right| \le C\,v_4\,\max\left\{\sqrt{\frac{s}{n}}, \frac{s^2}{n}\right\},
\end{align}
for a universal constant $C>0$.
Moreover, if $n\ge c\,s^3$ (so that $s^2/n\le \sqrt{s/n}$), then \eqref{eq:Delta-bernstein-subweibull} simplifies to
\begin{align}\label{eq:Delta-bernstein-simplified}
    |\Delta_n| \le C\,v_4\,\sqrt{\frac{s}{n}},
\end{align}
with probability at least $1-\delta$.

On the event \eqref{eq:Delta-bernstein-subweibull}, use the elementary inequality $\sqrt{a+b}\le \sqrt{a}+\frac{|b|}{2\sqrt{a}}$ valid for $a>0$ to obtain from \eqref{eq:Ln-mu-Delta}:
\begin{align}\label{eq:Ln-dev-no-psi}
    L_n \le\frac{2}{\sigma^2}\sqrt{\mu} +\frac{1}{\sigma^2}\frac{|\Delta_n|}{\sqrt{\mu}}.
\end{align}
Finally, the explicit fourth-moment identity yields,
\begin{align}
    \mu=\mathbb{E}\|Y\|_2^4  = \|x^\star\|_2^4 + 2\sigma^2(d+2)\|x^\star\|_2^2 + \sigma^4 d(d+2).
\end{align}
Recall that $v_4 \asymp \sigma^4 d^2$, and that by assumption, we have $\sqrt{\mu}\asymp \sigma^2 d$,  under $\|x^\star\|_2\le C\sigma\sqrt d$, hence $\sqrt{\mu} / \sigma^2 \le \tilde C d$ and $\frac{v_4}{\sigma^2\sqrt{\mu}}\lesssim d$.
Combining these estimates with \eqref{eq:Ln-dev-no-psi} and \eqref{eq:Delta-bernstein-subweibull} gives, with probability at least $1-\delta$,
\begin{align}
    L_n \le \tilde C d \Big(1 + C'\max\{\sqrt{s/n},\,s^2/n\}\Big),
\end{align}
where $s=\log\frac{2}{\delta}$. In particular, if $n\ge c\,s^3$ then
\begin{align}
    L_n \le \tilde C d \Big(1 + C'\sqrt{s/n}\Big).
\end{align}
Substituting this bound on $L_n$ into \eqref{eq:step1-final} yields the desired high-probability deviation bound.

\paragraph{Step 4: Supremum over $u \in \mathbb{S}^{d-1}$.}
Observe that the high-probability event used in~\eqref{eq:step1-final} to remove conditioning, depends only on the sample $\{Y_i\}_{i=1}^n$ (equivalently, only on $\frac{1}{n}\sum_{i=1}^n \|Y_i\|_2^4$), and is therefore independent of the choice of direction $u$.
Moreover, the symmetrization and chaining bound from Lemma~\ref{lem:symm-chaining-osc} is uniform over all unit directions in the sense that its proof uses only $\|u\|_2=1$ and introduces no constants depending on $u$. Consequently, the pointwise bound~\eqref{eq:step1-final} holds simultaneously for every $u\in\mathbb{S}^{d-1}$.
Therefore, with probability at least $1-\delta$,
\begin{align}
    \sup_{u \in \mathbb{S}^{d-1}}\ \sup_{x\in\mathcal{B}} \abs{\big\langle u,(M_n-M)(x)-(M_n-M)(x_0)\big\rangle} \leq C_1 \frac{R\sqrt d}{\sqrt n}\left[d + C_2 d \sqrt{\frac{s}{n}}\right],
    \label{eq:maxK-no-union}
\end{align}
where $s=\log\frac{2}{\delta}$.
This completes the proof.
\end{proof}

\section{Finite-sample analysis} \label{sec:finite-sample-analysis-results}
This appendix proves the finite–sample main results for EM in MRA at low SNR and assembles them into the main lower–bound conclusions.

\subsection{Proof of Theorem~\ref{thm:finite-sample-tracking-groundtruth}} \label{sec:proofOfFiniteSampleTracking}

We prove Theorem~\ref{thm:finite-sample-tracking-groundtruth} in two steps.
\paragraph{Step 1: One–step deviation recursion.}
Let $e_t\triangleq \|\hat{x}^{(t)} - x^\star\|_2$. Using the triangle inequality and then Assumption~\ref{assump:finite-sample-EM} (A1)--(A2), 
\begin{align}
    e_{t+1}
    &=  \big\| M_n(\hat{x}^{(t)};\mathcal{Y}) - x^\star \big\|_2 
    \\
    &\le \big\|M_n(\hat{x}^{(t)};\mathcal{Y}) - M(\hat{x}^{(t)}) \big\|_2
      + \big\|M(\hat{x}^{(t)}) - M(x^\star)\big\|_2
      + \big\|M(x^\star) - x^\star\big\|_2 \\
    &\le \underbrace{\sup_{x\in\mathcal{B}}  \big\| M_n(x;\mathcal{Y}) - M(x)\big\|}_{\le  \varepsilon_M(n, \delta)\ \text{ on }\mathcal{E}}
       \ +\ \underbrace{\big\|M(\hat{x}^{(t)}) - M(x^\star)\big\|}_{\le  \kappa \big\|\hat{x}^{(t)} - x^\star\big \| \ \text{ by (A1)}}
       \ +\ \underbrace{\big\|M(x^\star) - x^\star\big\|}_{= 0} \\
    &\le \kappa  e_t + \varepsilon_M(n, \delta).
\end{align}

\paragraph{Step 2: Unrolling the recursion.}
By induction,
\begin{align}
    e_t \leq \kappa^t e_0 + \sum_{s=0}^{t-1} \kappa^s  \varepsilon_M(n, \delta)= \kappa^t e_0 + \frac{1-\kappa^t}{1-\kappa} \varepsilon_M(n, \delta),
    \qquad t=0,1,\dots,T,
\end{align}
which is exactly \eqref{eq:tracking-to-groundtruth}. Taking $\limsup_{t\to\infty}$ and using $\kappa\in[0,1)$ yields
\begin{align}
    \limsup_{t\to\infty} e_t \leq \frac{\varepsilon_M(n, \delta)}{1-\kappa},
\end{align}
establishing \eqref{eq:asymptotic-bias-floor}.
Finally, by Assumption~\ref{assump:finite-sample-EM} (A3) and the definition of the iteration, $\hat{x}^{(t)}\in\mathcal{B}$ for all $t\le T$ on $\mathcal{E}$, so that Assumption~\ref{assump:finite-sample-EM} (A1)–(A2) are applicable at each step. This completes the proof.

\subsection{Proof of Proposition~\ref{lem:subgaussian-epsn}} \label{sec:proofOfUniformBoundGauss}

The proof of the proposition is based on the results obtained in Appendix~\ref{sec:preliminaries-finite-sample}. We prove the proposition in three steps.
Write $M(x)\triangleq \mathbb{E}[\phi_x(Y)]$ and $M_n(x)\triangleq \frac{1}{n}\sum_{i=1}^n \phi_x(Y_i)$.
Fix an anchor $x_0\in\mathcal{B}$. By the triangle inequality,
\begin{align}
    \sup_{x\in\mathcal{B}} & \|M_n(x)-M(x)\|_2 \\ & \le \|M_n(x_0)-M(x_0)\|_2 + \sup_{x\in\mathcal{B}}\|(M_n-M)(x)-(M_n-M)(x_0)\|_2.
    \label{eq:decomp-anchor-osc}
\end{align}
Let us upper bound each one of the two terms at the right-hand side of \eqref{eq:decomp-anchor-osc}, starting with the first term. Let $V\triangleq (M_n-M)(x_0)\in\mathbb{R}^d$.
Using $\|V\|_2=\sup_{u\in\mathbb{S}^{d-1}}|\langle u,V\rangle|$, we may apply Lemma~\ref{lem:EF-bounds-MRA-fixedx} at the fixed point $x_0$. Concretely, there exists a universal constant $C>0$ such that,
with probability at least $1-\delta/2$,
\begin{align}
    \|M_n(x_0)-M(x_0)\|_2 \leq C\,\sigma\sqrt{\frac{d}{n}}\sqrt{d+\log\frac{2}{\delta}}.
    \label{eq:anchor-term-bound}
\end{align}
As for the second term at the right-hand side of \eqref{eq:decomp-anchor-osc}, we have,
\begin{align}
    \|(M_n-M)(x)-(M_n-M)(x_0)\|_2 = \sup_{u\in\mathbb{S}^{d-1}} \left|\big\langle u,(M_n-M)(x)-(M_n-M)(x_0)\big\rangle\right|.
\end{align}
Therefore,
\begin{align}
    \nonumber \sup_{x\in\mathcal{B}} & \|(M_n-M)(x)-(M_n-M)(x_0)\|_2 \\ & = \sup_{u\in\mathbb{S}^{d-1}}\sup_{x\in\mathcal{B}} \left|\big\langle u,(M_n-M)(x)-(M_n-M)(x_0)\big\rangle\right|.
    \label{eq:osc-dual}
\end{align}
Applying Proposition~\ref{prop:finite-max-u} with confidence level $\delta/2$, we obtain that with probability at least $1-\delta/2$,
\begin{align}
    \sup_{x\in\mathcal{B}}\|(M_n-M)(x)-(M_n-M)(x_0)\|_2 \leq C_1\,\frac{R\sqrt{d}}{\sqrt{n}}\,d\left(1+C_2\sqrt{\frac{\log(2/\delta)}{n}}\right).
    \label{eq:osc-term-bound}
\end{align}
Using \eqref{eq:decomp-anchor-osc}, \eqref{eq:anchor-term-bound}, and~\eqref{eq:osc-term-bound}, with probability at least $1-\delta$, we have
\begin{align}
    \sup_{x\in\mathcal{B}}\|M_n(x)-M(x)\|_2 \le C\,\sigma\sqrt{\frac{d}{n}}\sqrt{d+\log\frac{2}{\delta}} + C_1\,\frac{R\,d^{3/2}}{\sqrt{n}}\left(1+C_2\sqrt{\frac{\log(2/\delta)}{n}}\right).\label{eq:final-term-bound}
\end{align}
Finally, for $n \geq C_2^2 \log(2/\delta)$, we can simplify the right-hand side of \eqref{eq:final-term-bound} as follows
\begin{align}
    \sup_{x\in\mathcal{B}}\|M_n(x)-M(x)\|_2 \le C\,\sigma\sqrt{\frac{d}{n}}\sqrt{d+\log\frac{2}{\delta}} + 2C_1\frac{R\,d^{3/2}}{\sqrt{n}}.
\end{align}

\subsection{Necessary conditions on the EM sample complexity} \label{sec:app-sample-complexity}
In this subsection, we prove a general necessary condition for the sample complexity of empirical EM fixed points. Under mild few assumptions listed below, we prove that no empirical fixed point inside the basin can approach $x^\star$ beyond the intrinsic statistical noise floor. In particular, even if EM converges algorithmically, it cannot beat the finite-sample floor unless $n$ is large enough.

To capture a lower bound on the sample complexity, it is necessary to track EM along its \emph{slowest} contractive direction. Let $\lambda_{\max}<1$ be the largest non-mean eigenvalue of the population Jacobian $J(x^\star)$ with unit eigenvector $u_{\max}\perp\mathbf{1}$, and define the rank-one projector
\begin{align}
    P \triangleq u_{\max}u_{\max}^\top .
\end{align}
The subspace $T=\mathrm{range}(P)$ is $J(x^\star)$-invariant and the linearized dynamics reduce to
\begin{align}
    P J(x^\star) P = \lambda_{\max} P.
\end{align}
Therefore, a lower bound on the projected fluctuation $P\big(M_n(x^\star;\mathcal{Y})-M(x^\star)\big)$ yields, via an amplification factor $(1-\lambda_{\max})^{-1}$, a corresponding lower bound on the attainable estimation error.

\begin{assum}[Conditions for the sample-complexity bound]
\label{assum:gap-sb-mr-P-general}
Let $\mathcal{B}\subset\mathbb{R}^d$ be a compact basin containing $x^\star$.
Let $M:\mathbb{R}^d\to\mathbb{R}^d$ and $M_n(\cdot;\mathcal{Y})$ denote the population and empirical EM operators, respectively. Let $J(x^\star)$ denote the population Jacobian at $x^\star$, and assume $J(x^\star)=J(x^\star)^\top\succeq 0$.
Let $\lambda_{\max}$ be the largest non-mean eigenvalue of $J(x^\star)$ with corresponding unit eigenvector $u_{\max}\perp\mathbf{1}$, and assume $\lambda_{\max}<1$. Set $P\triangleq u_{\max}u_{\max}^\top$. Fix $p_0>0$, and assume that on an event $\Omega$ with $\mathbb{P}(\Omega)\ge p_0$ the following hold:

\begin{enumerate}
\item \emph{(B0) Slow-mode gap scaling.}
There exist constants $c_\lambda,C_\lambda>0$ and an integer $k \ge 1$ such that
\begin{align}
    c_\lambda\,\mathrm{SNR}^{k} \leq  1-\lambda_{\max} \leq C_\lambda\,\mathrm{SNR}^{k}.
    \label{eq:gap-scaling-general}
\end{align}

\item \emph{(B1) Pointwise fluctuation at the truth.}
There exist constants $c_0>0$ and $\alpha\ge 0$ such that
\begin{align}\label{eq:fluctuations-lower-bound-general}
    \frac{\big\|P\big(M_n(x^\star;\mathcal{Y})-M(x^\star)\big)\big\|}{\|x^\star\|} \geq     c_0\,\frac{\mathrm{SNR}^{\alpha}}{\sqrt{n}} \qquad\text{on }\Omega.
\end{align}

\item \emph{(B2) Regularity near $x^\star$.}
There exist constants $K_R,K_E>0$ and a radius $R_0>0$ such that for all $\Delta$ with
$\|\Delta\|\le R_0 \|x^\star\|$ and $x^\star+\Delta\in\mathcal{B}$,
\begin{align}
    \|M(x^\star+\Delta)-M(x^\star)-J(x^\star)\Delta\|
    &\le K_R\|\Delta\|^2,
    \label{eq:reg-pop-general}\\
    \big\| [M_n(x^\star+\Delta;\mathcal{Y})-M(x^\star+\Delta)] - [M_n(x^\star;\mathcal{Y})-M(x^\star)] \big\|
    &\le \frac{K_E}{\sqrt{n}} \|\Delta\|.
    \label{eq:reg-emp-general}
\end{align}
\end{enumerate}
\end{assum}

\begin{thm}[Sample complexity lower bound (general scaling)]
\label{thm:necessary-small-target-P-general}
Assume Assumption~\ref{assum:gap-sb-mr-P-general} holds and $\|x^\star\|>0$.
Choose $R_0>0$ such that
\begin{align}
    0<R_0 \leq \min\left\{\frac{c_0\,\mathrm{SNR}^{\alpha}}{4K_E},\ \frac{1-\lambda_{\max}}{2K_R\,\|x^\star\|}\right\}.
    \label{eq:smallness-P-general}
\end{align}
Then, on the event $\Omega$ from Assumption~\ref{assum:gap-sb-mr-P-general} (so $\mathbb{P}(\Omega)\ge p_0>0$), every empirical fixed point $x_n^\star\in\mathcal{B}$ (i.e., $M_n(x_n^\star)=x_n^\star$) satisfying
$\|x_n^\star-x^\star\|/\|x^\star\|\le R_0$
obeys the normalized error floor
\begin{align}
\label{eq:local-floor-general}
    \frac{\|x_n^\star-x^\star\|}{\|x^\star\|} \geq  \frac{c_0}{2} \frac{\mathrm{SNR}^{\alpha}}{1-\lambda_{\max}}\frac{1}{\sqrt{n}}.
\end{align}
Consequently, to achieve any target relative accuracy $\|x_n^\star-x^\star\|/\|x^\star\|\le \varepsilon_{\mathrm{abs}}\in(0,R_0]$, it is necessary that
\begin{align}
\label{eq:necessary-clean-P-general}
    n \geq  \frac{C}{\varepsilon_{\mathrm{abs}}^{2}}\cdot \frac{\mathrm{SNR}^{2\alpha}}{(1-\lambda_{\max})^2},
\end{align}
for a constant $C>0$ independent of $n$ and $\mathrm{SNR}$. Equivalently, under the gap scaling Assumption~\ref{assum:gap-sb-mr-P-general}(B0), 
\begin{align}
\label{eq:necessary-clean-P-general-explicit}
    n \gtrsim \mathrm{SNR}^{-2(k-\alpha)}\varepsilon_{\mathrm{abs}}^{-2}.
\end{align}
\end{thm}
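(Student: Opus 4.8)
The plan is to exploit the empirical fixed-point identity $M_n(x_n^\star;\mathcal{Y})=x_n^\star$ together with the population fixed-point identity $M(x^\star)=x^\star$, and to linearize both operators around $x^\star$ so that the fixed-point relation becomes an approximate linear equation for the error $\Delta\triangleq x_n^\star-x^\star$. Writing $M_n(x^\star+\Delta)=M(x^\star+\Delta)+\big[M_n(x^\star+\Delta)-M(x^\star+\Delta)\big]$ and inserting the first-order population expansion, I would first derive the exact identity
\begin{align}
(I-J(x^\star))\,\Delta = \zeta + r_{\mathrm{pop}} + r_{\mathrm{emp}},
\end{align}
where $\zeta\triangleq M_n(x^\star;\mathcal{Y})-M(x^\star)$ is the fluctuation of the empirical operator at the truth, $r_{\mathrm{pop}}\triangleq M(x^\star+\Delta)-M(x^\star)-J(x^\star)\Delta$ is the population second-order remainder controlled by \eqref{eq:reg-pop-general}, and $r_{\mathrm{emp}}$ is the differenced empirical fluctuation controlled by \eqref{eq:reg-emp-general}.

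The second step is to collapse this $d$-dimensional relation onto the slow eigendirection. Since $J(x^\star)$ is symmetric and $u_{\max}$ is its unit eigenvector with eigenvalue $\lambda_{\max}$, the rank-one projector $P=u_{\max}u_{\max}^\top$ satisfies $PJ(x^\star)=\lambda_{\max}P$, and hence $P(I-J(x^\star))=(1-\lambda_{\max})P$. Applying $P$ to the identity above yields the scalar-type relation
\begin{align}
(1-\lambda_{\max})\,P\Delta = P\zeta + P r_{\mathrm{pop}} + P r_{\mathrm{emp}},
\end{align}
which exposes the amplification factor $(1-\lambda_{\max})^{-1}$: the small spectral gap along $u_{\max}$ inflates the projected statistical fluctuation into the estimation error.

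The third step is the norm bookkeeping. Taking norms, using that $P$ is a contraction ($\|Pv\|\le\|v\|$) and the triangle inequality, I would lower-bound the left side via the fluctuation bound \eqref{eq:fluctuations-lower-bound-general} of (B1), and upper-bound the two remainders using $\|\Delta\|\le R_0\|x^\star\|$ together with the two smallness conditions in \eqref{eq:smallness-P-general}: the choice $R_0\le (1-\lambda_{\max})/(2K_R\|x^\star\|)$ forces $K_R\|\Delta\|^2\le\tfrac12(1-\lambda_{\max})\|\Delta\|$, while $R_0\le c_0\,\mathrm{SNR}^{\alpha}/(4K_E)$ forces $K_E\|\Delta\|/\sqrt n\le \tfrac14 c_0\,\mathrm{SNR}^{\alpha}\|x^\star\|/\sqrt n$. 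Substituting these into the inequality $c_0\,\mathrm{SNR}^{\alpha}\|x^\star\|/\sqrt n\le (1-\lambda_{\max})\|\Delta\|+K_R\|\Delta\|^2+K_E\|\Delta\|/\sqrt n$ and rearranging gives $\tfrac34 c_0\,\mathrm{SNR}^{\alpha}\|x^\star\|/\sqrt n\le \tfrac32(1-\lambda_{\max})\|\Delta\|$, which is precisely the floor \eqref{eq:local-floor-general}. Inverting this relation for $n$ at the target accuracy $\varepsilon_{\mathrm{abs}}$ yields \eqref{eq:necessary-clean-P-general} with $C=c_0^2/4$, and substituting the gap scaling (B0), $1-\lambda_{\max}\le C_\lambda\,\mathrm{SNR}^{k}$, produces the SNR-explicit form \eqref{eq:necessary-clean-P-general-explicit}. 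All steps are carried out on the event $\Omega$, so the conclusion holds with probability at least $p_0$.

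The main obstacle is ensuring that both remainder terms are genuinely subdominant, so that the gap $1-\lambda_{\max}$ is the only quantity multiplying $\|\Delta\|$ at leading order; this is exactly what the two-part smallness condition \eqref{eq:smallness-P-general} is engineered to guarantee, its first part taming the population curvature $r_{\mathrm{pop}}$ and its second taming the $n$-dependent empirical drift $r_{\mathrm{emp}}$. A secondary care point is that the projection must use a genuine eigenvector of $J(x^\star)$ rather than an approximate slow direction, so that the commutation $P(I-J(x^\star))=(1-\lambda_{\max})P$ holds exactly; this is precisely why the argument is run along $u_{\max}$ with the rank-one projector $P$ instead of along a more general invariant subspace, where the amplification factor would otherwise be contaminated by cross terms.
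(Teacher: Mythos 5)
Your proposal is correct and follows essentially the same route as the paper's proof: the identical fixed-point decomposition $(I-J(x^\star))\Delta=\zeta+r_{\mathrm{pop}}+r_{\mathrm{emp}}$, projection onto the slow eigendirection via $P(I-J(x^\star))=(1-\lambda_{\max})P$, and the same norm bookkeeping in which the two parts of \eqref{eq:smallness-P-general} absorb the curvature and empirical-drift remainders, yielding the floor with the factor $\tfrac{3}{4}c_0 \le \tfrac{3}{2}(1-\lambda_{\max})\|\Delta\|/(\mathrm{SNR}^{\alpha}\|x^\star\|/\sqrt n)$ and the constant $C=c_0^2/4$. The only cosmetic difference is that you derive the identity in the full space before projecting, while the paper applies $P$ from the outset; the two are equivalent.
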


\begin{proof}[Proof of Theorem~\ref{thm:necessary-small-target-P-general}]
We prove the statement in steps.
Fix an empirical fixed point $x_n^\star\in\mathcal{B}$, set $\Delta \triangleq x_n^\star-x^\star$, and recall that $M(x^\star)=x^\star$.

\paragraph{Step 1: Fixed-point decomposition.}
From $M_n(x_n^\star)=x_n^\star$ we have
\begin{align}
    P\Delta &= P\big(M_n(x_n^\star)-x^\star\big) \notag\\
    &= P\big(M_n(x_n^\star)-M(x_n^\star)\big) + P\big(M(x_n^\star)-M(x^\star)\big).
\end{align}
Writing $x_n^\star=x^\star+\Delta$ and adding--subtracting $J(x^\star)\Delta$ gives
\begin{align}
    P\Delta &= P\big(M_n(x^\star+\Delta)-M(x^\star+\Delta)\big) + P\big(M(x^\star+\Delta)-M(x^\star)-J(x^\star)\Delta\big) + PJ(x^\star)\Delta.
\end{align}
Rearranging,
\begin{align}
    P\big(I-J(x^\star)\big)\Delta = P\big(M_n(x^\star+\Delta)-M(x^\star+\Delta)\big) + PR(\Delta),
\end{align}
where $PR(\Delta)\triangleq P\big(M(x^\star+\Delta)-M(x^\star)-J(x^\star)\Delta\big)$.
Finally, add and subtract the fluctuation at the truth to obtain
\begin{align}
    P\big(I-J(x^\star)\big)\Delta &= T_P + PR(\Delta) + PE(\Delta),
    \label{eq:fpid-expanded-P-general}
\end{align}
where
\begin{align}
    T_P \triangleq P\big(M_n(x^\star)-M(x^\star)\big), \quad PE(\Delta)\triangleq P\Big([M_n-M](x^\star+\Delta)-[M_n-M](x^\star)\Big).
\end{align}

\paragraph{Step 2: Projected bounds.}
Taking norms and applying the triangle inequality to~\eqref{eq:fpid-expanded-P-general} yields
\begin{align}
    \|T_P\| &\le \|P(I-J(x^\star))\Delta\|+\|PR(\Delta)\|+\|PE(\Delta)\|.
    \label{eq:step1-P-general}
\end{align}
By $PJ(x^\star)P=\lambda_{\max}P$,
\begin{align}
    \|P(I-J(x^\star))\Delta\| = (1-\lambda_{\max})\|P\Delta\| \le (1-\lambda_{\max})\|\Delta\|.
    \label{eq:gap-ineq-P-up-general}
\end{align}
On the event $\Omega$, Assumption (B1) gives
\begin{align}
    \|T_P\| = \big\|P(M_n(x^\star)-M(x^\star))\big\| \ge c_0\,\mathrm{SNR}^{\alpha}\,\frac{\|x^\star\|}{\sqrt{n}}.
    \label{eq:SB-P-general}
\end{align}
Moreover, Assumption (B2) implies that for all $\|\Delta\|\le R_0\|x^\star\|$ with $x^\star+\Delta\in\mathcal{B}$,
\begin{align}
    \|PR(\Delta)\| &\le K_R\|\Delta\|^2, \\ \|PE(\Delta)\| &\le \frac{K_E}{\sqrt{n}}\|\Delta\|.
    \label{eq:MR-P-general}
\end{align}
Plugging~\eqref{eq:gap-ineq-P-up-general}, \eqref{eq:MR-P-general}, and \eqref{eq:SB-P-general} into~\eqref{eq:step1-P-general} yields
\begin{align}
    c_0\,\mathrm{SNR}^{\alpha}\,\frac{\|x^\star\|}{\sqrt{n}} \le (1-\lambda_{\max})\|\Delta\| + K_R\|\Delta\|^2 + \frac{K_E}{\sqrt{n}}\|\Delta\|.
    \label{eq:master-again-P-general}
\end{align}

\paragraph{Step 3: Error floor and necessary sample size.}
Assume the local regime $\|\Delta\|\le R_0\|x^\star\|$ with $R_0$ satisfying~\eqref{eq:smallness-P-general}.
Then
\begin{align}
    K_R\|\Delta\|^2 \le K_R(R_0\|x^\star\|)\|\Delta\| \le \frac{1-\lambda_{\max}}{2}\,\|\Delta\|,
    \\
    \frac{K_E}{\sqrt{n}}\|\Delta\| \le \frac{K_E}{\sqrt{n}}(R_0\|x^\star\|) \le \frac{c_0}{4}\,\mathrm{SNR}^{\alpha}\,\frac{\|x^\star\|}{\sqrt{n}}.
\end{align}
Using these bounds in~\eqref{eq:master-again-P-general} gives
\begin{align}
    c_0\,\mathrm{SNR}^{\alpha}\,\frac{\|x^\star\|}{\sqrt{n}} &\le \frac{3}{2}(1-\lambda_{\max})\|\Delta\| + \frac{c_0}{4}\,\mathrm{SNR}^{\alpha}\,\frac{\|x^\star\|}{\sqrt{n}}.
\end{align}
Rearranging yields
\begin{align}
    \frac{\|\Delta\|}{\|x^\star\|} \ge \frac{c_0}{2}\cdot \frac{\mathrm{SNR}^{\alpha}}{1-\lambda_{\max}}\cdot\frac{1}{\sqrt{n}},
\end{align}
which is~\eqref{eq:local-floor-general}.
Finally, imposing $\|\Delta\|/\|x^\star\|\le \varepsilon_{\mathrm{abs}}$ implies
\begin{align}
    \sqrt{n}\geq  \frac{c_0}{2}\cdot \frac{\mathrm{SNR}^{\alpha}}{(1-\lambda_{\max})\,\varepsilon_{\mathrm{abs}}},
\end{align}
equivalently
\begin{align}
    n \geq  \frac{c_0^2}{4}\cdot \frac{\mathrm{SNR}^{2\alpha}}{(1-\lambda_{\max})^2}\cdot \varepsilon_{\mathrm{abs}}^{-2},
\end{align}
which is~\eqref{eq:necessary-clean-P-general} after absorbing constants into $C$.
If Assumption~\ref{assum:gap-sb-mr-P-general}(B0) holds, substituting $1-\lambda_{\max}\asymp \mathrm{SNR}^{k}$ yields~\eqref{eq:necessary-clean-P-general-explicit}.
\end{proof}

\subsection{Empirical fluctuations lower bound for the MRA}

Next we verify that Assumption~\ref{assum:gap-sb-mr-P-general}(B1) holds in the MRA setting. 
As in Section~\ref{subsec:local-convergence}, we work in the low-SNR regime $x^\star=\beta v$ and $\|v\|_2=1$, with $\beta/\sigma\ll 1$.

\paragraph{Real-space representation of the $\{k,-k\}$ Fourier block.}
Throughout this subsection, we work with the real-space representation of the Jacobian eigenvectors on each non-mean Fourier block (equivalently to the complex $\{k,-k\}$ block vectors in~\eqref{eq:uk-wk-def}). 
Fix a non-mean frequency $k\in\{1,\dots,(d-1)/2\}$ and write the DFT coefficient of $v$ at frequency $k$ as $\s{V}[k]=|\s{V}[k]|e^{i\phi_k}$.
The conjugate-symmetric pair $\{k,-k\}$ corresponds to a real two-dimensional invariant subspace of $\mathbb{R}^d$. Taking the inverse Fourier transform of the $\{k,-k\}$-block basis in~\eqref{eq:uk-wk-def} yields a cosine--sine basis for this subspace. 
Concretely, define the real vectors $u_k,w_k\in\mathbb{R}^d$ by
\begin{align}\label{eq:ukwk-realspace}
    u_k[m] \triangleq \sqrt{\frac{2}{d}}\cos\Big(\frac{2\pi k m}{d}+\phi_k\Big),
    \qquad w_k[m] \triangleq \sqrt{\frac{2}{d}}\sin\Big(\frac{2\pi k m}{d}+\phi_k\Big),
\end{align}
for $m=0,1,\dots,d-1$, so that $\{u_k,w_k\}$ is an orthonormal basis of this real block, and agrees with the complex two-coordinate representation in~\eqref{eq:uk-wk-def}.

Let $\{\mathcal{T}_\ell\}_{\ell=0}^{d-1}$ be the circular shifts, $(\mathcal{T}_\ell x)[t]=x[t-\ell]$ (indices modulo $d$), and set $\theta_\ell \triangleq \frac{2\pi k\ell}{d}$.
Then shifting $u_k$ and $w_k$ in~\eqref{eq:ukwk-realspace} simply subtracts $\theta_\ell$ from the phase. Using the trigonometric identities $\cos(a-b)=\cos a\cos b+\sin a\sin b$ and $\sin(a-b)=\sin a\cos b-\cos a\sin b$ with $a=\frac{2\pi k m}{d}+\phi_k$ and $b=\theta_\ell$, we obtain for every $\ell$:
\begin{align}\label{eq:shift-acts-rotation}
    \mathcal{T}_\ell u_k &= \cos(\theta_\ell)\,u_k + \sin(\theta_\ell)\,w_k,\\
    \mathcal{T}_\ell w_k &= -\sin(\theta_\ell)\,u_k + \cos(\theta_\ell)\,w_k.
\end{align}
Thus, on the $2$D subspace $\mathrm{span}\{u_k,w_k\}$, the shift operator $\mathcal{T}_\ell$ acts as a real planar rotation by angle $\theta_\ell$.

\begin{lem}
\label{lem:first-order-cancel-wk}
Fix $Y\in\mathbb{R}^d$. With $w_k$ defined by \eqref{eq:ukwk-realspace}, we have for every $k\in\{1,\dots,(d-1)/2\}$,
\begin{align}
    \label{eq:first-order-cancel}
    \sum_{\ell=0}^{d-1}\Big\langle \mathcal{T}_\ell^{-1}Y,v\Big\rangle \Big\langle \mathcal{T}_\ell^{-1}Y,w_k\Big\rangle = 0.
\end{align}
\end{lem}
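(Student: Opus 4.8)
The plan is to reduce the sum to a character sum over $\mathbb{Z}_d$ and then exploit the fact that, by construction, the phase $\phi_k$ aligns $v$ with $u_k$ and makes it orthogonal to $w_k$ inside the block $\mathrm{span}\{u_k,w_k\}$. First I would use unitarity of the shifts to rewrite each factor as $\langle \mathcal{T}_\ell^{-1}Y, v\rangle = \langle Y, \mathcal{T}_\ell v\rangle$ and $\langle \mathcal{T}_\ell^{-1}Y, w_k\rangle = \langle Y, \mathcal{T}_\ell w_k\rangle$, and then pass to the Fourier domain. Writing $\s{Y}=F^\ast Y$, $\s{V}=F^\ast v$ and $\s{W}=F^\ast w_k$, the shift theorem expresses both quantities as functions of $\ell\in\mathbb{Z}_d$,
\begin{align}
g(\ell)\triangleq \langle Y,\mathcal{T}_\ell v\rangle = \sum_{m=0}^{d-1} G_m\, e^{2\pi i m\ell/d}, \qquad h(\ell)\triangleq \langle Y,\mathcal{T}_\ell w_k\rangle = \sum_{m=0}^{d-1} H_m\, e^{2\pi i m\ell/d},
\end{align}
with $G_m=\s{Y}[m]\,\overline{\s{V}[m]}$ and $H_m=\s{Y}[m]\,\overline{\s{W}[m]}$. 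The essential structural input is that $w_k$ is supported, in Fourier, only on the conjugate pair $\{k,-k\}$ (exactly its definition via \eqref{eq:uk-wk-def}--\eqref{eq:ukwk-realspace}), so $H_m=0$ for $m\notin\{k,-k\}$.

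Next I would apply discrete orthogonality of characters, $\sum_{\ell=0}^{d-1} e^{2\pi i j\ell/d} = d\,\mathbf{1}\{j\equiv 0\bmod d\}$, to the product $\sum_\ell g(\ell)h(\ell)$. Only the resonant pairs $m+m'\equiv 0$ with $m'\in\{k,-k\}$ survive, collapsing the double sum to
\begin{align}
\sum_{\ell=0}^{d-1}\langle \mathcal{T}_\ell^{-1}Y,v\rangle\,\langle \mathcal{T}_\ell^{-1}Y,w_k\rangle = d\bigl(G_{-k}H_k+G_k H_{-k}\bigr) = 2d\,\operatorname{Re}\bigl(\overline{G_k}\,H_k\bigr),
\end{align}
where the last equality uses the conjugate symmetries $G_{-k}=\overline{G_k}$ and $H_{-k}=\overline{H_k}$ (valid since $g,h$ are real). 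Here I would note that $d$ odd and $1\le k\le (d-1)/2$ guarantee $2k\not\equiv 0\bmod d$, so $k$ and $-k$ are genuinely distinct frequencies and $\s{W}$ really has two separate support points.

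Finally, I would substitute the explicit block coefficients. From \eqref{eq:uk-wk-def} one has $\s{W}[k]=-\,i\,e^{i\phi_k}/\sqrt2$ and $\s{V}[k]=|\s{V}[k]|\,e^{i\phi_k}$, whence
\begin{align}
\overline{G_k}\,H_k = |\s{Y}[k]|^2\,\s{V}[k]\,\overline{\s{W}[k]} = |\s{Y}[k]|^2\,\frac{i}{\sqrt2}\,|\s{V}[k]|,
\end{align}
which is purely imaginary; its real part vanishes, so the whole sum is zero. The conceptual crux---and the only obstacle beyond bookkeeping of DFT sign conventions---is precisely this cancellation: it is the Fourier manifestation of the phase alignment $\langle v,w_k\rangle=0$, i.e., that within $\mathrm{span}\{u_k,w_k\}$ the vector $v$ points purely along $u_k$. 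Equivalently, one can argue geometrically from \eqref{eq:shift-acts-rotation}: $\mathcal{T}_\ell$ acts as a planar rotation by $\theta_\ell=2\pi k\ell/d$ on this block, so the block contribution expands into $\sin(2\theta_\ell)$ and $\cos(2\theta_\ell)$ terms that sum to zero over $\ell$, while the contribution from the orthogonal complement involves only frequencies $m\neq\pm k$ and hence cannot resonate with the $\pm k$ content of $w_k$. I would keep the Fourier version as the primary argument since it handles both contributions uniformly.
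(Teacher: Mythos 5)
Your proof is correct and follows essentially the same route as the paper's: both pass to the Fourier domain, use the shift structure to collapse the $\ell$-sum onto the resonant frequencies $\pm k$, and conclude from the phase alignment of $v$ with $u_k$ (equivalently $\langle v,w_k\rangle=0$). The only difference is packaging — the paper organizes the computation as the scalar action of the circulant matrix $A(Y)=\frac{1}{d}\sum_{\ell}\mathcal{T}_\ell^{-1}(YY^\top)\mathcal{T}_\ell$ on the invariant block $\mathrm{span}\{f_k,f_{-k}\}$, so that the sum becomes $|\s{Y}[k]|^2\langle v,w_k\rangle$, whereas you carry out the character sum and the coefficient product $\s{V}[k]\,\overline{\s{W}[k]}$ explicitly; both hinge on the identical cancellation.
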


\begin{proof}[Proof of Lemma~\ref{lem:first-order-cancel-wk}]
Define
\begin{align}
\label{eq:AY-def}
    A(Y)\triangleq \frac{1}{d} \sum_{\ell=0}^{d-1}\mathcal{T}_\ell^{-1}(YY^\top)\mathcal{T}_\ell .
\end{align}
Using the identity $(a^\top b)(a^\top c)=b^\top(aa^\top)c$ with
$a=\mathcal{T}_\ell^{-1}Y$, $b=v$, and $c=w_k$, we obtain
\begin{align}
    \label{eq:sum-as-quadratic}
    \frac{1}{d}\sum_{\ell=0}^{d-1}\Big\langle \mathcal{T}_\ell^{-1}Y,v\Big\rangle \Big\langle \mathcal{T}_\ell^{-1}Y,w_k\Big\rangle
    &= \frac{1}{d} \sum_{\ell=0}^{d-1} v^\top\Big((\mathcal{T}_\ell^{-1}Y)(\mathcal{T}_\ell^{-1}Y)^\top\Big)w_k \\
    &= v^\top\Big(\frac{1}{d}\sum_{\ell=0}^{d-1}\mathcal{T}_\ell^{-1}(YY^\top)\mathcal{T}_\ell\Big)w_k = v^\top A(Y)\,w_k . \nonumber
\end{align}

The matrix $A(Y)$ commutes with every shift $\mathcal{T}_r$ (since the sum in \eqref{eq:AY-def} is taken over all shifts),
hence $A(Y)$ is circulant and is diagonalized by the DFT basis $\{f_m\}_{m=0}^{d-1}$:
\begin{align}
\label{eq:AY-eigs}
    A(Y)f_m = \lambda_m f_m,  \qquad \lambda_m = \,|\s Y[m]|^2 .
\end{align}
Because $Y\in\mathbb{R}^d$, its Fourier coefficients satisfy conjugate symmetry, so $\lambda_k=\lambda_{-k}$.
Therefore $A(Y)$ acts as a scalar on the real two-dimensional invariant subspace
$\mathrm{span}\{f_k,f_{-k}\}$, and since $w_k\in \mathrm{span}\{f_k,f_{-k}\}$ (by construction in \eqref{eq:ukwk-realspace}),
\begin{align}
    \label{eq:AY-on-wk}
    A(Y)w_k = \lambda_k w_k = |\s Y[k]|^2\,w_k .
\end{align}
Combining \eqref{eq:sum-as-quadratic} and \eqref{eq:AY-on-wk} yields
\begin{align}
\label{eq:reduce-to-vwk}
    v^\top A(Y)w_k = |\s Y[k]|^2\,\langle v,w_k\rangle .
\end{align}

Finally, the real block basis $\{u_k,w_k\}$ in \eqref{eq:ukwk-realspace} is chosen so that the $\{\pm k\}$ Fourier block of $v$ is aligned with $u_k$ (equivalently, \eqref{eq:uk-wk-def}), hence $\langle v,w_k\rangle=0$. Substituting into \eqref{eq:reduce-to-vwk} gives $v^\top A(Y)w_k=0$, which together with \eqref{eq:sum-as-quadratic} proves \eqref{eq:first-order-cancel}.
\end{proof}

Recall the definitions~\eqref{eq:s-logits}--\eqref{eq:eta-def},
\begin{align}
    s_\ell(Y)=\frac{\beta}{\sigma^2}\langle z_\ell,v\rangle, \qquad \eta_\ell(Y)=s_\ell(Y)-\overline{s}(Y), \qquad \overline{\eta(Y)z}=\frac{1}{d}\sum_{\ell=0}^{d-1}\eta_\ell(Y)\,z_\ell. \label{eq:eta-def-lem}
\end{align}

\begin{proposition}
\label{prop:uk-plateau-updated}
Let $M$ and $M_n(\cdot;\mathcal{Y})$ be the population and empirical EM operators, and define $ Z \triangleq M_n(x^\star;\mathcal{Y})-M(x^\star)$.
Assume the low-SNR scaling $x^\star=\beta v$ with $\|v\|_2=1$, and fix $\sigma>0$. Assume moreover that the Fourier spectrum of $v$ is nonvanishing on non-mean frequencies:
\begin{align}
    \label{eq:V-nonvanish-prop}
    \s{V}[k]\neq 0 \qquad \text{for all } k\in\{1,\dots,(d-1)/2\}.
\end{align}
Let $K\subset\{1,\dots,(d-1)/2\}$ and let $P$ be the orthogonal projector onto $\mathrm{span}\{w_k:\,k\in K\}$, where $\{w_k\}$ are defined by \eqref{eq:ukwk-realspace}.
Then there exists $\beta_0>0$ such that for all $0<\beta\le \beta_0$ and every fixed $\eta\in(0,1)$, there exist constants $c_2,C_4>0$ for which the following holds: for any choice of $k_0\in K$,
\begin{align}
    \label{eq:Prob-lb-updated}
    \mathbb{P}\left(\frac{\|PZ\|_2}{\|x^\star\|_2} \geq \frac{\eta\sqrt{c_2}\,\tau_{k_0}}{\sqrt{n}}\cdot\frac{\beta}{\sigma} \right) \geq \frac{(1-\eta^2)^2}{C_4},
\end{align}
where $\tau_{k_0}^2 > 0$.
\end{proposition}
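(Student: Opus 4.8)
The plan is to reduce the vector lower bound to a one–dimensional anti–concentration statement along a single flat direction $w_{k_0}$, and then to identify the exact low–SNR scaling of the corresponding per–sample variance. Write $Z=\frac1n\sum_{i=1}^n\big(\phi_{x^\star}(Y_i)-M(x^\star)\big)$, an i.i.d.\ average of centered vectors, and set $g_i\triangleq\langle \phi_{x^\star}(Y_i)-M(x^\star),\,w_{k_0}\rangle$. Since $w_{k_0}$ has unit norm and lies in $\mathrm{range}(P)$, we have $\|PZ\|_2\ge|\langle Z,w_{k_0}\rangle|=\big|\frac1n\sum_i g_i\big|$, so it suffices to lower bound the scalar average. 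The $g_i$ are i.i.d.\ and centered, and I would record that $\langle M(x^\star),w_{k_0}\rangle=0$ (the truth is a fixed point, $M(x^\star)=x^\star=\beta v$, and $w_{k_0}\perp v$ by the block construction \eqref{eq:ukwk-realspace}), so that $g\triangleq\langle\phi_{x^\star}(Y),w_{k_0}\rangle$ is itself centered with variance $v_{k_0}^2\triangleq\mathrm{Var}(g)$.

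The probability bound would then come from the Paley--Zygmund inequality applied to $W\triangleq\big(\frac1n\sum_i g_i\big)^2\ge 0$. Using independence and centering, $\mathbb{E}[W]=v_{k_0}^2/n$ and $\mathbb{E}[W^2]=\frac{1}{n^3}\mathbb{E}[g^4]+\frac{3(n-1)}{n^3}(v_{k_0}^2)^2$, whence $\mathbb{E}[W^2]/(\mathbb{E}[W])^2\le \kappa+3\triangleq C_4$, where $\kappa=\mathbb{E}[g^4]/(v_{k_0}^2)^2$ is the kurtosis of $g$. Paley--Zygmund gives $\mathbb{P}\big(W\ge\eta^2\mathbb{E}[W]\big)\ge(1-\eta^2)^2/C_4$, i.e.\ $\mathbb{P}\big(|\langle Z,w_{k_0}\rangle|\ge \eta\,v_{k_0}/\sqrt n\big)\ge(1-\eta^2)^2/C_4$. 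To make $C_4$ a genuine constant independent of $n,\beta,\sigma$, I would bound $\kappa$ by closeness to the leading term: $g$ is, to leading order in $\beta$, a fixed (degree–three) polynomial $\beta^2 Q$ in the Gaussian noise (see below), and Gaussian hypercontractivity bounds the kurtosis of $Q$ by a universal constant. For $\beta\le\beta_0$ the kurtosis of $g$ then stays within a constant factor, keeping $C_4$ uniform. This is one place where the threshold $\beta_0$ enters.

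The crux is to pin down the leading order of $v_{k_0}^2$ and to show it is strictly positive of size $\asymp\beta^4/\sigma^2$. Substituting the second–order responsibility expansion of Lemma~\ref{lem:resp-near-uniform-MRA-gamma} into $\phi_{x^\star}(Y)=\sum_\ell\gamma_\ell(x^\star;Y)\,z_\ell$ and projecting onto $w_{k_0}$, the zeroth–order term $\langle\bar z,w_{k_0}\rangle=\langle\Pi_{\mathrm{mean}}Y,w_{k_0}\rangle$ vanishes because $w_{k_0}\perp\mathbf{1}$, and, crucially, the first–order term $\frac1d\sum_\ell\eta_\ell(Y)\langle z_\ell,w_{k_0}\rangle$ vanishes identically for every $Y$ by Lemma~\ref{lem:first-order-cancel-wk}. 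Hence the leading contribution is the second–order piece $\frac{1}{2d}\sum_\ell\big(\eta_\ell(Y)^2-\overline{\eta^2}(Y)\big)\langle z_\ell,w_{k_0}\rangle=\beta^2 Q(\xi)+O(\beta^3)$, where $Q(\xi)\triangleq\frac{1}{2d}\sum_\ell\big((\eta_\ell^{(0)})^2-\overline{(\eta^{(0)})^2}\big)\langle\mathcal{T}_\ell^{-1}\xi,w_{k_0}\rangle$ is an odd (degree–three) polynomial in $\xi$ with $\mathbb{E}[Q(\xi)]=0$ by Gaussian parity. Therefore $v_{k_0}^2=\beta^4\,\mathbb{E}[Q(\xi)^2]\,(1+O(\beta))$, and I would set $\tau_{k_0}^2\triangleq\sigma^2\,\mathbb{E}[Q(\xi)^2]$ so that $v_{k_0}^2\ge c_2\,\tau_{k_0}^2\,\beta^4/\sigma^2$ for $\beta\le\beta_0$ with $c_2=\tfrac12$. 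Combined with the previous paragraph, the threshold $\eta\,v_{k_0}/\sqrt n\ge \beta\cdot\eta\sqrt{c_2}\,\tau_{k_0}\,\beta/(\sigma\sqrt n)$, divided by $\|x^\star\|_2=\beta$, matches \eqref{eq:Prob-lb-updated}.

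The main obstacle is the strict positivity $\tau_{k_0}^2>0$, equivalently $\mathbb{E}[Q(\xi)^2]>0$, i.e.\ that $Q$ is not the zero polynomial. I would establish this by an Isserlis (Wick) expansion of the degree–six Gaussian moment $\mathbb{E}[Q^2]=\frac{1}{4d^2}\sum_{a,b}\mathbb{E}\big[((\eta_a^{(0)})^2-\overline{(\eta^{(0)})^2})((\eta_b^{(0)})^2-\overline{(\eta^{(0)})^2})\,\langle\mathcal{T}_a^{-1}\xi,w_{k_0}\rangle\langle\mathcal{T}_b^{-1}\xi,w_{k_0}\rangle\big]$, diagonalizing the circular–shift covariance structure in the Fourier basis exactly as in the proof of Proposition~\ref{prop:K2-MRA}. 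This reduces $\mathbb{E}[Q^2]$ to an explicit expression in the magnitudes $\{|\s{V}[m]|^2\}$, and the nonvanishing assumption $\s{V}[k_0]\neq 0$ then forces the coupling coefficient at frequency $k_0$ to be strictly positive, yielding $\mathbb{E}[Q^2]>0$. This Wick bookkeeping, together with checking that the $O(\beta^3)$ remainder cannot cancel the leading term for $\beta\le\beta_0$, is the most delicate part of the argument; the anti–concentration and kurtosis steps are then routine.
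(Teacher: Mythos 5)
Your overall architecture coincides with the paper's own proof: you reduce $\|PZ\|_2$ to the scalar projection $\langle Z,w_{k_0}\rangle$, use the second-order responsibility expansion (Lemma~\ref{lem:resp-near-uniform-MRA-gamma}) together with $w_{k_0}\perp\mathbf{1}$ and the exact first-order cancellation of Lemma~\ref{lem:first-order-cancel-wk} to isolate a degree-three Gaussian polynomial $\beta^2 Q(\xi)$ as the leading term, identify the per-sample variance scaling $\asymp\beta^4\tau_{k_0}^2/\sigma^2$, and close with Paley--Zygmund plus Gaussian hypercontractivity to keep the fourth-to-second moment ratio uniform. All of this matches the paper's Steps 1--3, and your moment computation for the averaged statistic is correct.

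The one place you genuinely diverge is the positivity $\tau_{k_0}^2>0$, and there your argument has a gap. You propose a full Wick/Isserlis expansion of $\mathbb{E}[Q^2]$ and assert that "the nonvanishing assumption $\s{V}[k_0]\neq 0$ then forces the coupling coefficient at frequency $k_0$ to be strictly positive." This is not justified as stated: the Wick expansion of this degree-six moment is a \emph{signed} sum (the centering term $-\overline{(\eta^{(0)})^2}$ and cross-frequency couplings contribute with both signs), so strict positivity does not follow from inspecting one coefficient, and in any case positivity cannot follow from $\s{V}[k_0]\neq 0$ alone --- the quadratic term couples frequency $k_0$ to \emph{pairs} of other frequencies, so one needs products of the form $\s{V}[m]\s{V}[k_0-m]$ to be nonzero, i.e., the full assumption \eqref{eq:V-nonvanish-prop}. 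The paper's Lemma~\ref{lem:tau-positive} sidesteps exactly this cancellation issue: instead of computing $\mathbb{E}[Q^2]$, it writes $Q$ (there, $G_{k_0}(g)$) in Fourier coordinates, exhibits a single monomial $\overline{\s{g}[k_0]\,\s{g}[m_\star]\,\s{g}[k_0-m_\star]}\,\s{V}[m_\star]\s{V}[k_0-m_\star]$ with $m_\star\notin\{0,k_0\}$ whose coefficient is nonzero, concludes that $Q$ is a nonzero polynomial, and then uses that a nonzero polynomial of a Gaussian vector with full-support density cannot be almost surely constant, hence has strictly positive variance. If you replace your Wick bookkeeping by this "nonzero monomial $\Rightarrow$ nonzero polynomial $\Rightarrow$ positive variance" argument (or, alternatively, actually carry out the Wick sum and verify no cancellation, which is substantially more work), your proof is complete and matches the paper's.
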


\begin{proof}[Proof of Proposition~\ref{prop:uk-plateau-updated}]
We split the argument into three steps.

\paragraph{Step 1: Second-order responsibility expansion and first-order cancellation on $w_k$.}
Recall the per-sample EM update
\begin{align}
    \label{eq:phi-def}
    \phi_{x^\star}(Y)\;\triangleq\;\sum_{\ell=0}^{d-1}\gamma_\ell^{(\beta)}(Y)\,z_\ell,
    \qquad z_\ell\triangleq \mathcal{T}_\ell^{-1}Y, \qquad M_n(x^\star;\mathcal{Y})=\frac{1}{n}\sum_{i=1}^n \phi_{x^\star}(Y_i).
\end{align}
By Lemma~\ref{lem:resp-near-uniform-MRA-gamma}, for each fixed $Y$,
\begin{align}
    \label{eq:gamma-2nd}
    \gamma_\ell^{(\beta)}(Y) = \frac{1}{d}+\frac{1}{d}\eta_\ell(Y) + \frac{1}{2d}\big(\eta_\ell(Y)^2-\overline{\eta^2}(Y)\big)+R_\ell(\beta;Y),
\end{align}
with the uniform remainder bound \eqref{eq:MRA-R-bound}. Substituting \eqref{eq:gamma-2nd} into \eqref{eq:phi-def} and using $\frac{1}{d}\sum_{\ell=0}^{d-1}z_\ell=\Pi_{\mathrm{mean}}Y$ yields
\begin{align}
    \label{eq:phi-exp-2nd}
    \phi_{x^\star}(Y) &= \Pi_{\mathrm{mean}}Y +\overline{\eta(Y)z} +\frac{1}{2}\,\overline{\big(\eta(Y)^2-\overline{\eta^2}(Y)\big)z} + R(Y),
\end{align}
where $R(Y)\triangleq \sum_{\ell=0}^{d-1}R_\ell(\beta;Y)\,z_\ell$.

Fix $k\in\{1,\dots,(d-1)/2\}$. Since $w_k$ is a non-mean Fourier vector, we have
\begin{align}
    \label{eq:mean-orth-wk}
    \langle \Pi_{\mathrm{mean}}Y,w_k\rangle=0.
\end{align}
Moreover, by Lemma~\ref{lem:first-order-cancel-wk},
\begin{align}
    \big\langle \overline{\eta(Y)z},\,w_k\big\rangle \equiv 0.
\end{align}
Combining \eqref{eq:phi-exp-2nd}--\eqref{eq:first-order-cancel}, the leading contribution to $\langle \phi_{x^\star}(Y),w_k\rangle$ comes from the quadratic term $\frac{1}{2}\,\overline{\big(\eta(Y)^2-\overline{\eta^2}(Y)\big)z}$.

\paragraph{Step 2: Leading cubic Gaussian term and its variance.}
Write $Y=\mathcal{T}_S(\beta v)+\xi$ with $\xi=\sigma g$ and $g\sim\mathcal{N}(0,I_d)$.
From the definition of $\eta_\ell(Y)$ (via $s_\ell(Y)$ and centering), we have the low-SNR expansion
\begin{align}
    \label{eq:eta-exp}
    \eta_\ell(Y) &=\frac{\beta}{\sigma}\,A_\ell(g) + O\left(\frac{\beta^2}{\sigma^2}\right),
    \qquad A_\ell(g)\triangleq \big\langle g,\mathcal{T}_\ell v\big\rangle -\frac{1}{d}\sum_{r=0}^{d-1}\big\langle g,\mathcal{T}_r v\big\rangle,
\end{align}
where the $O(\beta^2/\sigma^2)$ term is uniform in $\ell$ for fixed $d$. Also,
\begin{align}
    \label{eq:z-exp}
    z_\ell=\mathcal{T}_\ell^{-1}Y = \sigma\,\mathcal{T}_\ell^{-1}g + O(\beta),
\end{align}
uniformly in $\ell$. Plugging \eqref{eq:eta-exp}--\eqref{eq:z-exp} into the quadratic term of \eqref{eq:phi-exp-2nd} gives
\begin{align}
    \label{eq:wk-leading-cubic}
    \big\langle \phi_{x^\star}(Y),w_k\big\rangle &= \frac{\beta^2}{\sigma}\,G_k(g) + r_k,
\end{align}
where
\begin{align}
    \label{eq:Gk-def}
    G_k(g) &\triangleq \left\langle \frac{1}{2d}\sum_{\ell=0}^{d-1}\Big(A_\ell(g)^2-\overline{A(g)^2}\Big)\,\mathcal{T}_\ell^{-1}g,\; w_k \right\rangle,
\end{align}
and the remainder satisfies, for fixed $d$ and $k$,
\begin{align}
    \label{eq:rk-bound}
    \mathbb{E}[r_k^2] &=\; O\left(\frac{\beta^6}{\sigma^2}\right) \qquad\text{as }\beta/\sigma\to 0.
\end{align}
The random variable $G_k(g)$ is a centered polynomial of total degree $3$ in the Gaussian vector $g$; define
\begin{align}
    \label{eq:tau-def}
    \tau_k^2 \triangleq \mathrm{Var}(G_k(g)).
\end{align}
By Lemma~\ref{lem:tau-positive} and the nonvanishing spectrum assumption on $v$, we have $\tau_k^2>0$ for every
$k\in\{1,\dots,(d-1)/2\}$.

\paragraph{Step 3: Paley--Zygmund lower bound for $\|PZ\|_2$.}
Fix $k_0\in K$ and set
\begin{align}
    \label{eq:Q-def}
    Q \triangleq \langle w_{k_0},Z\rangle = \frac{1}{n}\sum_{i=1}^n\Big(\langle \phi_{x^\star}(Y_i),w_{k_0}\rangle-\mathbb{E}\langle \phi_{x^\star}(Y),w_{k_0}\rangle\Big).
\end{align}
Using \eqref{eq:wk-leading-cubic}--\eqref{eq:rk-bound} and independence of the samples,
\begin{align}
    \label{eq:Q2-scale}
    \mathbb{E}[Q^2] &=\frac{1}{n}\mathrm{Var}\big(\langle \phi_{x^\star}(Y),w_{k_0}\rangle\big) = \frac{\beta^4}{\sigma^2}\cdot\frac{\tau_{k_0}^2}{n}\,\big(1+o(1)\big)
\qquad\text{as }\beta/\sigma\to 0.
\end{align}
Moreover, since $Q$ is an average of degree-$3$ Gaussian polynomials, hypercontractivity implies the moment comparison
\begin{align}
    \label{eq:Q4-bound}
    \mathbb{E}[Q^4] \leq  C_4\,\mathbb{E}[Q^2]^2,
\end{align}
where $C_4>0$ depends only on $d$ and $v$ (in particular, not on $n,\beta,\sigma$).
Applying Paley--Zygmund to $U=Q^2$ gives, for any $\eta\in(0,1)$,
\begin{align}
    \label{eq:PZ}
    \mathbb{P}\left(Q^2\ge \eta^2\,\mathbb{E}[Q^2]\right) &\ge (1-\eta^2)^2\,\frac{\mathbb{E}[Q^2]^2}{\mathbb{E}[Q^4]} \geq \frac{(1-\eta^2)^2}{C_4}.
\end{align}
On this event, using $\|PZ\|_2\ge |\langle w_{k_0},PZ\rangle|=|Q|$ and $\|x^\star\|_2=\beta$, we obtain
\begin{align}
    \label{eq:finish}
    \frac{\|PZ\|_2}{\|x^\star\|_2} \geq \frac{|Q|}{\beta} \geq \frac{\eta}{\beta}\sqrt{\mathbb{E}[Q^2]} \geq \frac{\eta\sqrt{c_2}\,\tau_{k_0}}{\sqrt{n}}\cdot\frac{\beta}{\sigma},
\end{align}
for some constant $c_2>0$ (absorbing the $(1+o(1))$ factor from \eqref{eq:Q2-scale} into $c_2$ by choosing $\beta_0$ small enough). Combining \eqref{eq:PZ} and \eqref{eq:finish} yields \eqref{eq:Prob-lb-updated}.
\end{proof}

\begin{lem}
\label{lem:tau-positive}
Fix $k\in\{1,\dots,(d-1)/2\}$ and let $g\sim\mathcal{N}(0,I_d)$.
Assume that the Fourier spectrum of $v$ is nonvanishing on non-mean frequencies~\eqref{eq:V-nonvanish-prop}. Then $\tau_k^2\triangleq \mathrm{Var}(G_k(g))>0$.
\end{lem}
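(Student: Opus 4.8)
The plan is to prove $\tau_k^2=\mathrm{Var}(G_k(g))>0$ by showing that the cubic polynomial $G_k$ in \eqref{eq:Gk-def} does not vanish identically. Since $g\sim\mathcal N(0,I_d)$ is nondegenerate, the variance of any polynomial in $g$ is zero precisely when that polynomial is almost surely constant, i.e.\ equal to its mean as a polynomial. Here $G_k$ is the product of the degree-two factor $A_\ell(g)^2-\overline{A(g)^2}$ with the degree-one factor $\langle \mathcal T_\ell^{-1}g,w_k\rangle$, hence has odd total degree, so $\mathbb E[G_k]=0$ by Gaussian parity. Consequently $\tau_k^2=\mathbb E[G_k^2]$, and the claim reduces to establishing $G_k\not\equiv 0$ as a polynomial in $g$.

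To detect non-vanishing without tracking every monomial, I would use the Wiener chaos decomposition $G_k=G_k^{(1)}+G_k^{(3)}$ into its first- and third-order Hermite components (the even components are absent by parity), so that $\tau_k^2=\|G_k^{(1)}\|_{L^2}^2+\|G_k^{(3)}\|_{L^2}^2$ by orthogonality of the chaoses. It then suffices to show that at least one of these two components is nonzero.

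The key computation is carried out in the discrete Fourier basis, in which the circular shifts diagonalize (cf.\ \eqref{eq:AY-eigs}). Writing $\s G\triangleq F^\ast g$, one has $A_\ell(g)=\sum_{m\neq 0}\s G[m]\,\overline{\s V[m]}\,e^{2\pi i m\ell/d}$ (the centering removes the $m=0$ frequency), and $\langle\mathcal T_\ell^{-1}g,w_k\rangle$ is supported on the coefficients at frequencies $\{k,-k\}$. When the outer sum $\tfrac1{2d}\sum_\ell$ is applied, the $\ell$-dependent phases combine into a discrete orthogonality relation that collapses the triple frequency sum onto the selection rule $m_1+m_2\equiv \pm k\pmod d$. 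This expresses $G_k$ as an explicit cubic form in the Gaussian coordinates $\{\s G[m]\}_{m\neq 0}$ whose coefficients are products of the form $\overline{\s V[m_1]}\,\overline{\s V[m_2]}$ times fixed combinatorial constants; by the nonvanishing-spectrum hypothesis \eqref{eq:V-nonvanish-prop} every such spectral factor is nonzero. I would then exhibit a single surviving monomial (a genuine degree-three Hermite monomial in the real Gaussian coordinates) whose coefficient is a nonzero multiple of a product of $\s V$-entries; since distinct Hermite monomials are $L^2$-orthogonal, one nonzero coefficient forces the corresponding chaos component, and hence $G_k$, to be nonzero, giving $\tau_k^2>0$.

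The main obstacle is the cancellation bookkeeping in this last step: one must verify that the chosen monomial's coefficient is not annihilated by the mean-subtraction term $\overline{A^2}$, nor by the symmetrization between the $+k$ and $-k$ contributions or between the two $A_\ell$ factors. I would control this by selecting the contributing frequency pair $(m_1,m_2)$ asymmetrically (e.g.\ with $m_1\neq m_2$ and $m_1,m_2\notin\{0,\pm k\}$ whenever the dimension permits), so that the subtracted-mean term and the symmetric partners land on different monomials and cannot cancel. In the low-dimensional degenerate cases where no such frequency triple is available, I would instead read off a nonzero coefficient of $G_k^{(1)}$ directly, noting that Lemma~\ref{lem:first-order-cancel-wk} only forces the \emph{unweighted} sum $\sum_\ell A_\ell(g)\langle\mathcal T_\ell^{-1}g,w_k\rangle$ to vanish and therefore does not kill the $\ell$-weighted first-chaos contraction that appears in $G_k^{(1)}$.
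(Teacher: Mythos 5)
Your proposal is correct and follows essentially the same route as the paper's proof: both expand $G_k$ in the DFT basis where the shifts diagonalize, use the convolution selection rule $m_1+m_2\equiv\pm k\pmod d$, and exhibit a single cubic monomial whose coefficient is a product of nonvanishing $\s{V}$-entries, concluding that $G_k$ is a non-constant polynomial and hence has positive variance under the Gaussian density. Your additional machinery (Wiener chaos orthogonality, the asymmetric choice of $(m_1,m_2)$, and the first-chaos fallback for small $d$) is not needed: a single $m_\star\in\mathbb{Z}_d\setminus\{0,k\}$ always exists for $d\ge 3$, the mean-subtraction term $\overline{A^2}$ contributes nothing at frequency $\pm k$ (it only shifts the zero-frequency coefficient, and $\langle\sum_\ell\mathcal{T}_\ell^{-1}g,w_k\rangle=0$), and the $+k$ and $-k$ contributions are complex conjugates of one another, so they sum to a nonzero real polynomial rather than cancel.
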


\begin{proof}[Proof of Lemma~\ref{lem:tau-positive}]
Recall $A_\ell(g)=\langle g,\mathcal{T}_\ell v\rangle-\overline{\langle g,\mathcal{T} v\rangle}$ and
\begin{align}
\label{eq:Gk-repeat}
G_k(g)=\left\langle \frac{1}{2d}\sum_{\ell=0}^{d-1}\Big(A_\ell(g)^2-\overline{A(g)^2}\Big)\,\mathcal{T}_\ell^{-1}g, \; w_k \right\rangle.
\end{align}
Consider the DFT over $\ell$ of the sequence $\{A_\ell(g)\}_{\ell=0}^{d-1}$.
By standard correlation identities,
\begin{align}
    \label{eq:Ahat-linear}
    \s{A}[m] = \overline{\s{g}[m]}\,\s{V}[m] \quad \text{for } m\neq 0,
    \qquad
    \s{A}[0]=0.
\end{align} 
Hence the $k$-th Fourier coefficient of the squared sequence
$A_\ell(g)^2$ is
\begin{align}
\label{eq:A2hat-conv}
\s{(A^2)}[k] \;=\;\sum_{m\in\Z_d}\s{A}[m]\;\s{A}[k-m].
\end{align}
Moreover, since $w_k$ is supported on the frequency pair $\{\pm k\}$, the inner product in \eqref{eq:Gk-repeat}
extracts precisely the $\pm k$ Fourier modes in $\ell$ from the weights $\{A_\ell(g)^2-\overline{A(g)^2}\}$, and yields the representation
\begin{align}
\label{eq:Gk-fourier-form}
G_k(g)
=\;c\Big(\overline{\s{g}[k]}\,\s{(A^2)}[k]\Big)+\overline{c}\Big(\overline{\s{g}[-k]}\,\s{(A^2)}[-k]\Big),
\end{align}
for some constant $c\neq 0$ depending only on the convention and on $\s W_k[\pm k]$ (in particular $c\neq 0$ since $\s W_k[k]\neq 0$ by construction).

Choose any index $m_\star\in\Z_d\setminus\{0,k\}$ (e.g. $m_\star=1$ if $k\neq 1$, and $m_\star=2$ if $k=1$).
Then the single summand $m=m_\star$ in \eqref{eq:A2hat-conv} contributes the monomial
\begin{align}
\label{eq:nonzero-monomial}
\overline{\s{g}[k]\,\s{g}[m_\star]\,\s{g}[k-m_\star]}\;\s{V}[m_\star]\s{V}[k-m_\star]
\end{align}
to $\overline{\s{g}[k]}\,\s{(A^2)}[k]$. Under \eqref{eq:V-nonvanish-prop} we have
$\s{V}[m_\star]\s{V}[k-m_\star]\neq 0$, so the coefficient of the monomial in \eqref{eq:nonzero-monomial}
inside $G_k(g)$ is nonzero (since also $c\neq 0$ in \eqref{eq:Gk-fourier-form}).
Therefore $G_k(g)$ is a nonzero polynomial in the Gaussian Fourier coordinates of $g$.

Since the law of $g$ has a density on $\mathbb{R}^d$, a nonzero polynomial cannot be almost surely constant;
hence $\mathrm{Var}(G_k(g))>0$, i.e. $\tau_k^2>0$.
\end{proof}

\subsection{Proof of Theorem~\ref{thm:tight-1d-branch}}
\label{sec:proof-of-sample-complexity-mra}
The proof verifies Assumption~\ref{assum:gap-sb-mr-P-general} with $k=2$ and $\alpha=1/2$ and then applies Theorem~\ref{thm:necessary-small-target-P-general}.

\paragraph{(1) Pointwise fluctuation at the truth (Assumption~\ref{assum:gap-sb-mr-P-general}(B1)).}
Let
\begin{align}
\label{eq:Z-def-proof}
Z \;\triangleq\; M_n(x^\star;\mathcal{Y})-M(x^\star).
\end{align}
Let $k_{\max}$ denote a non-mean block attaining the largest non-mean eigenvalue $\lambda_{\max}$ of
the population Jacobian $J(x^\star)$, and let $u_{\max}$ be the associated unit eigenvector.
In the MRA model, $u_{\max}$ lies in the $w$-direction of the $\{\pm k_{\max}\}$ block, hence we may take
\begin{align}
\label{eq:P-def-proof}
u_{\max}=w_{k_{\max}},
\qquad
P\;\triangleq\;u_{\max}u_{\max}^\top \;=\; w_{k_{\max}}w_{k_{\max}}^\top,
\end{align}
see Corollary~\ref{cor:block-spectral}.

Assume the nonvanishing-spectrum condition $\s{V}[k]\neq 0$ for all $k\in\{1,\dots,(d-1)/2\}$.
Then Proposition~\ref{prop:uk-plateau-updated} (applied with $K=\{k_{\max}\}$) implies that for every fixed
$\eta\in(0,1)$ there exist constants $\beta_0>0$, $c_2>0$, $C_4>0$ and $\tau_{k_{\max}}>0$ such that for all
$0<\beta/\sigma\le \beta_0$,
\begin{align}
    \label{eq:lb-fluct-new}    \mathbb{P}\left(\frac{\|PZ\|_2}{\|x^\star\|_2}  \geq     \frac{\eta\sqrt{c_2}\,\tau_{k_{\max}}}{\sqrt{n}}\cdot\frac{\beta}{\sigma}\right)  \geq  p_{\mathrm{fluc}}, \qquad p_{\mathrm{fluc}}\triangleq \frac{(1-\eta^2)^2}{C_4}.
\end{align}
Thus Assumption~\ref{assum:gap-sb-mr-P-general}(B1) holds with exponent $\alpha=1/2$.

\paragraph{(2) Local regularity near $x^\star$ (Assumption~\ref{assum:gap-sb-mr-P-general}(B2)).}
As established in the same manner as in Section~\ref{subsec:local-convergence},
there exist constants $K_R,K_E>0$ and a radius $R_0>0$ such that for all $\Delta$ with
$\|\Delta\|\le R_0\|x^\star\|$ and $x^\star+\Delta\in\mathcal{B}$,
\begin{align}
\label{eq:reg-pop-proof}
\|M(x^\star+\Delta)-M(x^\star)-J(x^\star)\Delta\|
&\le K_R\|\Delta\|^2,
\\
\label{eq:reg-emp-proof}
\big\| [M_n-M](x^\star+\Delta)-[M_n-M](x^\star) \big\|
&\le \frac{K_E}{\sqrt{n}}\|\Delta\|
\qquad \text{on an event $\Omega_E$ with $\mathbb{P}(\Omega_E)\ge 1-\delta$,}
\end{align}
for any fixed $\delta>0$, using Proposition~\ref{prop:finite-max-u}.
This verifies Assumption~\ref{assum:gap-sb-mr-P-general}(B2).

\paragraph{(3) Error floor from Theorem~\ref{thm:necessary-small-target-P-general}.}
Let $\widehat x$ be any empirical fixed point in the basin, $M_n(\widehat x;\mathcal{Y})=\widehat x$, satisfying
$\|\widehat x-x^\star\|/\|x^\star\|\le R_0$.
Define the event
\begin{align}
\label{eq:Omega-def-proof}
\Omega \;\triangleq\; \Omega_E \cap
\left\{
\frac{\|PZ\|_2}{\|x^\star\|_2}
 \geq 
\frac{\eta\sqrt{c_2}\,\tau_{k_{\max}}}{\sqrt{n}}\cdot\frac{\beta}{\sigma}
\right\}.
\end{align}
By \eqref{eq:lb-fluct-new} and $\mathbb{P}(\Omega_E)\ge 1-\delta$,
\begin{align}
\label{eq:Omega-prob-proof}
\mathbb{P}(\Omega) \geq  p_{\mathrm{fluc}}-\delta \;\triangleq\; p_0 \;>\;0
\end{align}
for any $\delta\in(0,p_{\mathrm{fluc}})$.
On $\Omega$, Theorem~\ref{thm:necessary-small-target-P-general} (with $\alpha=1/2$) yields
\begin{align}
\label{eq:floor-proof}
\frac{\|\widehat x-x^\star\|_2}{\|x^\star\|_2}
 \geq 
\frac{c_0}{2}\cdot \frac{\mathrm{SNR}}{1-\lambda_{\max}}\cdot\frac{1}{\sqrt n},
\qquad
c_0\triangleq \eta\sqrt{c_2}\,\tau_{k_{\max}}.
\end{align}

\paragraph{(4) Low-SNR scaling of the spectral gap (Assumption~\ref{assum:gap-sb-mr-P-general}(B0)).}
By Proposition~\ref{prop:K2-MRA} and Corollary~\ref{cor:block-spectral},
\begin{align}
\label{eq:gap-scaling-proof}
1-\lambda_{\max}
\;=\;
\kappa_{\max}\frac{\beta^{4}}{\sigma^{4}}
+o\left(\frac{\beta^{4}}{\sigma^{4}}\right)
\qquad \text{as }\beta/\sigma\to 0,
\end{align}
with $\kappa_{\max}>0$ for the MRA model under the standing nondegeneracy assumptions.
Substituting \eqref{eq:gap-scaling-proof} into \eqref{eq:floor-proof} gives, for $\beta/\sigma$ small enough,
\begin{align}
\label{eq:floor-final-proof}
\frac{\|\widehat x-x^\star\|_2}{\|x^\star\|_2}
\;\gtrsim\;
\frac{1}{\sqrt n}\cdot
\frac{\beta/\sigma}{\beta^4/\sigma^4}
\;=\;
\frac{1}{\sqrt n}\cdot\frac{\sigma^3}{\beta^3}.
\end{align}
Therefore, if $\|\widehat x-x^\star\|/\|x^\star\|\le \varepsilon_{\mathrm{abs}}$ with
$\varepsilon_{\mathrm{abs}}\in(0,R_0]$, then \eqref{eq:floor-final-proof} forces
\begin{align}
\varepsilon_{\mathrm{abs}}
\;\gtrsim\;
\frac{1}{\sqrt n}\cdot\frac{\sigma^3}{\beta^3}
\qquad\Longrightarrow\qquad
n\;\gtrsim\; \frac{\sigma^6}{\beta^6}\cdot \frac{1}{\varepsilon_{\mathrm{abs}}^2}.
\end{align}
Absorbing all fixed constants (depending only on $d$ and the normalized spectrum of $v$) into $C_{\mathrm{nec}}$
yields \eqref{eq:necessary-sample-1d}.

\section{High-SNR regime: Proof of Proposition \ref{prop:relationBetweenSoftAndHard}} \label{sec:highSNRasymptotics}

We will prove Proposition \ref{prop:relationBetweenSoftAndHard} under the general MRA model in \eqref{eqn:mainModel}, for a general compact group $G$ acting on a finite-dimensional vector space $V$. The results for $G = \mathbb{Z}_d$ then follow as a special case. Analogous to the definition of the hard-assignment estimator update in \eqref{eqn:generalizedEfNEqn}, we define the update rule at iteration $t+1$ as,
    \begin{align}
        \hat{x}_\mathrm{hard}^{(t+1)}\triangleq \frac{1}{n} \sum_{i=1}^{n} \hat{g}_{i,t}^{-1} \cdot y_i, \label{eqn:E1}
    \end{align}
where in accordance to \eqref{eqn:OptShiftRealSpace}, the group element that best aligns with $y_i$ is given by,
     \begin{align}
        {\hat{g}}_{i,t}\triangleq \underset{g \in G}{\argmax} {    \langle{y_i}, g \cdot \hat{x}^{(t)} \rangle}.
    \label{eqn:E2}
    \end{align}
Since $G$ is a compact group, the maximizer over the compact group $G$ is attained. For the EM estimator, similarly to the update rule in \eqref{eqn:generalizedEfNEqnSoft}, we define,
    \begin{align}
        \hat{x}_{\mathrm{soft}}^{(t+1)}\triangleq \frac{1}{n}\sum_{i=1}^{n} \int_{g \in G}{ \gamma_{i,t} \p{g} \p{g^{-1} \cdot y_i}} \mathrm{d} \mu \p{g} \label{eqn:E3},
    \end{align}
where $\mu\p{g}$ denotes the uniform Haar measure on $G$. The soft-assignment weights $\gamma_{i,t} \p{g}$ are defined analogously to \eqref{eqn:softmaxGamma} as,
    \begin{align} 
        \gamma_{i,t}(g) \triangleq \frac{\exp\left( y_i^\top (g \cdot \hat{x}^{(t)}) / \sigma^2 \right)}{\int_{g' \in G} \exp\left( y_i^\top (g' \cdot \hat{x}^{(t)}) / \sigma^2 \right) \mathrm{d}\mu(g')}. \label{eqn:E4} 
    \end{align}

\paragraph{Proof of \eqref{eqn:equivalenceSoftAndHardHighSNR}.} 
We begin by proving the first part of the proposition, which establishes the equivalence between the EM and hard-assignment estimators in the high SNR regime, i.e., as $\sigma \to 0$, as stated in \eqref{eqn:equivalenceSoftAndHardHighSNR}. Specifically, we aim to show that,
\begin{align}
    \nonumber \lim_{\sigma \to 0} & \ \int_{g \in G} { \gamma_{i,t} \p{g} \p{g^{-1} \cdot y_i}} \ \mathrm{d}g 
    \\ & = \lim_{\sigma \to 0} \ \frac{ \int_{g \in G} \exp \p{y_i^\top \p{g \cdot \hat{x}^{(t)}}/\sigma^2} \p{g^{-1} \cdot y_i} \ \mathrm{d}g}{\int_{g \in G} \exp \p{y_i^\top \p{g \cdot \hat{x}^{(t)}} / \sigma^2 } \ \mathrm{d}g} = \hat{g}_{i,t}^{-1} \cdot \p{\lim_{\sigma \to 0}  y_i}, \label{eqn:E5}
\end{align}
for every $i \in \pp{n}$, and for every $t$. To prove this, we first state the following result.
\begin{thm}[{\cite[Theorem 5.10]{robert1999monte}}]\label{thm:auxThm}
Consider $h$ a real-valued function defined on a closed and bounded set $\Theta$. If there exists a unique solution $\theta^{*}$ satisfying 
\begin{align}
    \theta^{*} = \underset{\theta \in \Theta} {\argmax} \ h(\theta),
\end{align}
then,
\begin{align}
     \underset{\lambda \to \infty} {\lim} \frac{\int_{\Theta} \theta e^{\lambda h(\theta)}d\theta}{\int_{\Theta} e^{\lambda h(\theta)}d\theta} = \theta^{*}.
\end{align}
\end{thm}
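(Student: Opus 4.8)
The statement is the classical Laplace-type concentration result: as the inverse-temperature $\lambda\to\infty$, the Gibbs probability measure $\pi_\lambda(\mathrm{d}\theta)\triangleq e^{\lambda h(\theta)}\,\mathrm{d}\theta\,/\,Z_\lambda$, with normalizer $Z_\lambda\triangleq\int_\Theta e^{\lambda h(\theta)}\,\mathrm{d}\theta$, collapses onto the unique maximizer $\theta^\ast$. The plan is to recognize the ratio in the statement as the mean $\E_{\pi_\lambda}[\theta]=\int_\Theta \theta\,\pi_\lambda(\mathrm{d}\theta)$, and to prove $\E_{\pi_\lambda}[\theta]\to\theta^\ast$ by controlling the first absolute moment of $\theta-\theta^\ast$ under $\pi_\lambda$. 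Concretely, I would first reduce via the triangle/Jensen inequality to $\norm{\E_{\pi_\lambda}[\theta]-\theta^\ast}\le\E_{\pi_\lambda}\!\pp{\norm{\theta-\theta^\ast}}$, so that it suffices to show the right-hand side vanishes as $\lambda\to\infty$. Throughout I take $h$ continuous on the compact set $\Theta$ (implicit in the hypotheses and satisfied in the application) and write $M\triangleq h(\theta^\ast)=\max_{\theta\in\Theta}h(\theta)$.

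The core step is a two-region split of this moment. Fix $\varepsilon>0$ and a radius $\delta\in(0,\varepsilon/2)$, and set $B\triangleq\{\theta\in\Theta:\norm{\theta-\theta^\ast}<\delta\}$. Bounding $\norm{\theta-\theta^\ast}\le\delta$ on $B$ and $\norm{\theta-\theta^\ast}\le\diam(\Theta)$ on $\Theta\setminus B$ gives $\E_{\pi_\lambda}\!\pp{\norm{\theta-\theta^\ast}}\le\delta+\diam(\Theta)\,\pi_\lambda(\Theta\setminus B)$, so everything reduces to showing the escape mass $\pi_\lambda(\Theta\setminus B)\to0$. For the numerator of this mass, compactness of the closed set $\Theta\setminus B$ together with continuity and uniqueness of the maximizer yields $M_\delta\triangleq\sup_{\theta\in\Theta\setminus B}h(\theta)<M$, whence $\int_{\Theta\setminus B}e^{\lambda h}\le e^{\lambda M_\delta}\vol(\Theta)$. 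For the denominator $Z_\lambda$, continuity of $h$ at $\theta^\ast$ furnishes a smaller ball $B'\subseteq B$ on which $h>M-\eta$ with $\eta\triangleq(M-M_\delta)/2>0$, giving $Z_\lambda\ge e^{\lambda(M-\eta)}\vol(B')$. Dividing, $\pi_\lambda(\Theta\setminus B)\le\frac{\vol(\Theta)}{\vol(B')}\,e^{-\lambda\eta}\to0$, since the exponent $M_\delta-M+\eta=-\eta$ is strictly negative.

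Combining the two bounds, for every $\varepsilon>0$ I would first fix $\delta<\varepsilon/2$ as above and then take $\lambda$ large enough that $\diam(\Theta)\,\pi_\lambda(\Theta\setminus B)<\varepsilon/2$, yielding $\norm{\E_{\pi_\lambda}[\theta]-\theta^\ast}<\varepsilon$ and hence the claimed limit. The main obstacle---and the point requiring the most care---is the denominator lower bound, which hinges on the implicit regularity that every neighborhood of $\theta^\ast$ carries positive Lebesgue mass, i.e.\ $\vol(B')>0$. If $\theta^\ast$ were an isolated maximizer lying on a measure-zero portion of $\partial\Theta$, this bound (and indeed the conclusion) could fail; the result is therefore to be understood under the standing assumption that $\theta^\ast$ lies in the support of the reference measure and $h$ is continuous there. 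Both hold in the MRA application, where $h(g)=y_i^\top(g\cdot\hat x^{(t)})$ is continuous on the compact group $G$ and the integration is against (positive) Haar measure. A secondary point is that the argument controls the vector mean directly through $\norm{\cdot}$, so no smoothness of $h$ beyond continuity is needed, and the entire reduction applies verbatim with $\mathrm{d}\theta$ replaced by Haar measure.
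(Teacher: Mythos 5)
The paper contains no proof of this statement: it is imported verbatim, with citation, from Robert (1999, Theorem~5.10), and used as a black box in Appendix~H to establish the high-SNR equivalence of EM and hard assignment. Judged on its own merits, your argument is the standard Laplace-type concentration proof, and it is correct: you write the ratio as the mean of the Gibbs measure $\pi_\lambda \propto e^{\lambda h}\,\mathrm{d}\theta$, reduce by Jensen to $\E_{\pi_\lambda}\pp{\norm{\theta-\theta^\ast}}$, split into a $\delta$-ball $B$ and its complement, and bound the escape mass by $\p{\vol(\Theta)/\vol(B')}e^{-\lambda\eta}$, where the strict gap $\eta>0$ comes from compactness of $\Theta\setminus B$, continuity of $h$, and uniqueness of the maximizer; the final $\varepsilon$-bookkeeping is also right.

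Your two caveats deserve emphasis because they are genuinely necessary rather than pedantic: as literally stated (an arbitrary real-valued $h$ on a closed bounded set), the theorem is false. For instance, if $h$ equals $1$ at $\theta^\ast$ and $0$ elsewhere, then $\theta^\ast$ is the unique maximizer, yet the ratio equals the centroid of $\Theta$ for every $\lambda$, since the maximizer carries zero Lebesgue mass. So continuity (or at least upper semicontinuity away from $\theta^\ast$ together with a lower bound for $h$ on a positive-measure neighborhood of $\theta^\ast$) and the condition $\vol(B')>0$ are implicit hypotheses of the cited result. Both hold in the paper's application, where $h(g)=y_i^\top(g\cdot\hat{x}^{(t)})$ is continuous on a compact group and the Haar reference measure charges every nonempty open set. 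One further point in your favor: in the paper the theorem is actually applied with the integrand $g^{-1}$ rather than the integration variable itself; your two-region argument yields the needed generalization $\E_{\pi_\lambda}\pp{f(\theta)}\to f(\theta^\ast)$ for any continuous $f$ with no extra work, which the literal statement does not cover.
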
 

Define the function $h_{i,t}: G \to \mathbb{R}$ as,
\begin{align}
    \nonumber h_{i,t} \p{g} & \triangleq \langle {y_i}, g \cdot \hat{x}^{(t)} \rangle \\ & = \langle {\xi_i}, g \cdot \hat{x}^{(t)} \rangle + \langle {g_{i}} \cdot x, g \cdot \hat{x}^{(t)} \rangle,
\end{align}
where we have used $y_i = g_i \cdot x + \xi_i$, as defined in \eqref{eqn:mainModel}.
By this definition, and the definition of ${\hat{g}}_{i,t}$ in 
\eqref{eqn:E2}, we have,
\begin{align}
    {\hat{g}}_{i,t} \triangleq \underset{g \in G}{\argmax} \  h_{i,t} \p{g}.
\end{align}
By the assumption of the proposition, $\hat{g}_{i,t}$ is unique, and as  $G$ is compact, the conditions of Theorem~\ref{thm:auxThm} are satisfied with $\Theta = G$, $\theta^\star = \hat{g}_{i,t}$, and $h = h_{i,t}$, for every realization of the noise vector $\xi_i$. Therefore, applying Theorem~\ref{thm:auxThm}, we obtain,
\begin{align}
    \nonumber \lim_{\sigma \to 0} &\int_{g \in G} { \gamma_{i,t} \p{g}g^{-1}}\mathrm{d}g 
    \\ & = \lim_{\sigma \to 0} \ \frac{ \int_{g \in G} \exp \p{y_i^\top \p{g \cdot \hat{x}^{(t)}}/\sigma^2} g^{-1} \ \mathrm{d}g}{\int_{g \in G} \exp \p{y_i^\top \p{g \cdot \hat{x}^{(t)}} / \sigma^2 } \ \mathrm{d}g} \label{eqn:C9}
    \\ & = \hat{g}_{i,t}^{-1}, \label{eqn:C10}
\end{align}
where \eqref{eqn:C9} follows from the definition of $\gamma_{i,t} \p{g}$ in \eqref{eqn:E4}, and follows \eqref{eqn:C10} from Theorem~\ref{thm:auxThm}. As \eqref{eqn:C10} holds for every realization of the noise vector $\xi_i$, this convergence is almost sure.

Finally, substituting the expression in \eqref{eqn:C10} involving the observed data $y_i$ yields,
\begin{align}
     \lim_{\sigma \to 0} & \ \int_{g \in G} { \gamma_{i,t} \p{g}\p{g^{-1} \cdot y_i}}\mathrm{d}g = \hat{g}_{i,t}^{-1} \cdot \p{ \lim_{\sigma \to 0} y_i}, \label{eqn:E11}
\end{align}
which proves \eqref{eqn:E5}. Since the left-hand side of \eqref{eqn:E11} corresponds to the EM update in \eqref{eqn:E3}, and the right-hand side corresponds to the hard-assignment update in \eqref{eqn:E1}, this establishes the equivalence in the high-SNR limit as stated in \eqref{eqn:equivalenceSoftAndHardHighSNR}.

\paragraph{Proof of \eqref{eqn:highSNRconvergneToSignal}.} Our next goal is to establish the convergence result stated in \eqref{eqn:highSNRconvergneToSignal}, which, using the notation of the general MRA model, can be rewritten as,
\begin{align} 
    \lim_{\sigma \to 0} \ \hat{x}_{\mathrm{hard}}^{(1)} = \lim_{\sigma \to 0} \ \frac{1}{n} \sum_{i=1}^{n} \hat{g}_{i,0}^{-1} \cdot y_i = \tilde{g} \cdot x, \label{eqn:E14_2} 
\end{align} 
for some element $\tilde{g} \in G$
, where ${\hat{g}}_{i,t}$ is defined in in \eqref{eqn:E2}.
In the high-SNR regime, \eqref{eqn:E2} simplifies to,
    \begin{align}
        \lim_{\sigma \to 0} \ {\hat{g}}_{i,t} & = \lim_{\sigma \to 0} \ \underset{g \in G}{\argmax} \pp{\langle {g_i \cdot x}, g \cdot \hat{x}^{(t)} \rangle + \langle  \xi_i,  g \cdot \hat{x}^{(t)}  \rangle}
        \\ & = \underset{g \in G}{\argmax} \langle {g_i \cdot x}, g \cdot \hat{x}^{(t)} \rangle \label{eqn:E14}
        \\ & \triangleq \hat{g}_{i,t,\sigma=0},  \label{eqn:E15} 
    \end{align}
where \eqref{eqn:E14} follows from the almost sure convergence of $\xi_i \to 0$, as $\sigma \to 0$. Under the assumption stated in Proposition \ref{prop:relationBetweenSoftAndHard}, the cross-correlation between $x$ and $\hat{x}^{(t)}$ has a unique maximizer, and so it follows from the definition in \eqref{eqn:E15} that the aligned versions of the observations are identical. Specifically, for any $i_1,i_2 \in \pp{n}$, we have,
\begin{align}
    \hat{g}_{i_1,t,\sigma=0} ^{-1} \cdot \p{g_{i_1} \cdot x} = \hat{g}_{i_2,t,\sigma=0} ^{-1} \cdot \p{g_{i_2} \cdot x}.
\end{align}
Thus,
\begin{align}
       \lim_{\sigma \to 0} \ \frac{1}{n} \sum_{i=1}^{n} {\hat{g}}_{i,0}^{-1} \cdot y_i & = \ \hat{g}_{1,0,\sigma=0} ^{-1} \cdot \p{g_1 \cdot x}  
       \\ & = \p{\hat{g}_{1,0,\sigma=0} ^{-1} \cdot g_1} \cdot x = \tilde{g} \cdot x,
    \end{align}
where $\tilde{g} \triangleq \hat{g}_{1,0,\sigma=0} ^{-1} \cdot g_1$. This completes the proof of \eqref{eqn:highSNRconvergneToSignal}.

\section{Population EM approximation in low SNR } \label{sec:lowSNRapproxMain}

\subsection{Global population EM approximation} \label{subsec:global-convergence}
Here, we show an approximation (developed in detail in Appendix~\ref{sec:lowSNRapprox}), which is not restricted only to a basin of attraction around $x^\star$, as in the local (Jacobian-based) analysis. Instead, it provides a general low–SNR expansion of the population EM map \eqref{eq:Phi-pop-1d}, obtained by a Taylor expansion of an expectation of a ratio. 
The expansion is carried out to second order in $\| \hat{x}^{(t)} \| / \sigma$ and $\| x^\star \| /\sigma$. As we demonstrate empirically in Figure \ref{fig:8}(a), this approximation closely matches the true EM update as the number of observations $n$ increases.

To represent the resulting approximation, we introduce the notation $\hat{x}_\ell^{(t)} \triangleq \mathcal{T}_\ell  \hat{x}^{(t)}$, denoting the cyclic shift of the current estimate at iteration $t$. In this context, we define the softmax weights with respect to the true underlying signal $x^\star$ and the estimate $\hat{x}^{(t)}$ as follows,
\begin{align}
    w_\ell^{(t)} \triangleq \frac{\exp\left((x^\star)^\top \hat{x}_\ell^{(t)}\right)}{\sum_{r=0}^{d-1} \exp\left((x^\star)^\top \hat{x}_r^{(t)}\right)}, \label{eqn:softMaxWeights}
\end{align}
for each $\ell \in \{0, \dots, d-1\}$. Note that $\sum_{\ell=0}^{d-1} w_\ell^{(t)} = 1$. Under the assumptions and notation above, the EM update can be approximated as,
\begin{align}
    \hat{x}^{(t+1)} \approx 
    \underbrace{\hat{x}^{(t)} + \sum_{\ell=0}^{d-1} w_\ell^{(t)} \left( \mathcal{T}_\ell^{-1} x^\star \right)}_{\text{First-order terms}} 
    + 
    \underbrace{h_1\left( \hat{x}^{(t)} \right) \sum_{\ell=0}^{d-1} w_\ell^{(t)} \left( \mathcal{T}_\ell^{-1} x^\star \right) - h_2\left( \hat{x}^{(t)} \right)}_{\text{Second-order correction terms}}, \label{eqn:2.12}
\end{align}
where $h_1, h_2$ are given by,
\begin{align}
    h_1\left( \hat{x}^{(t)} \right) \triangleq \sum_{r_1, r_2 = 0}^{d-1} w_{r_1}^{(t)} w_{r_2}^{(t)} \exp \left( \langle \hat{x}_{r_1}^{(t)}, \hat{x}_{r_2}^{(t)} \rangle \right) 
    - \sum_{r=0}^{d-1} w_r^{(t)} \exp \left( \langle \hat{x}_\ell^{(t)}, \hat{x}_r^{(t)} \rangle \right), \label{eqn:h1Def}
\end{align}
and
\begin{align}
    h_2\left( \hat{x}^{(t)} \right) \triangleq \sum_{\ell, r = 0}^{d-1} w_\ell^{(t)} w_r^{(t)} \left( \mathcal{T}_\ell^{-1}  \hat{x}_r^{(t)} \right) \exp \left( \langle \hat{x}_\ell^{(t)}, \hat{x}_r^{(t)} \rangle \right). \label{eqn:h2Def}
\end{align}
The first-order terms correspond to the linear contribution of the weights $w_\ell^{(t)}$ in the EM update, while higher-order terms involve nonlinear (e.g., quadratic or higher) combinations of these weights. Specifically, the first-order term can be interpreted as updating the current estimate by adding a weighted average of cyclically shifted versions of the true signal $x^\star$, where the weights $w_\ell^{(t)}$ are defined by the softmax in~\eqref{eqn:softMaxWeights}.

\paragraph{Empirical validation of the approximation.}  
Figure~\ref{fig:8}(a) compares the low-SNR analytic approximation $x_{\s{approx}}^{(t+1)}$, derived in~\eqref{eqn:2.12}, with empirical estimates obtained via Monte Carlo simulations. Specifically, it shows the approximation error between $x_{\s{approx}}^{(t+1)}$ and the EM estimator $x_{\mathrm{soft}}^{(t+1)}$, defined in~\eqref{eqn:generalizedEfNEqnSoft}, as a function of the number of observations $n$.

Since the analytic approximation is derived in the asymptotic regime $n \to \infty$, while simulations are performed at finite $n$, the observed discrepancy can be attributed to two main sources: (1) the inherent error in the analytic expansion itself, and (2) finite-sample effects arising from a limited number of observations. The latter is expected to introduce an error of order $O(1/n)$ per iteration. As these errors accumulate across iterations, the total approximation error grows approximately as $O(t^2/n)$, a behavior similar to the error accumulation seen in the Einstein from Noise phenomenon discussed in Section~\ref{sec:EfN}.

The results in Figure~\ref{fig:8}(a) show that the approximation becomes increasingly accurate as $n$ grows, indicating that the dominant source of error is due to finite-sample effects rather than limitations of the analytic expansion. Moreover, within the range of $n$ considered, no saturation in this trend is observed, further supporting the validity of the analytic approximation in the large-$n$ regime.

\begin{figure*}[!t]
    \centering
    \includegraphics[width=1.0 \linewidth]{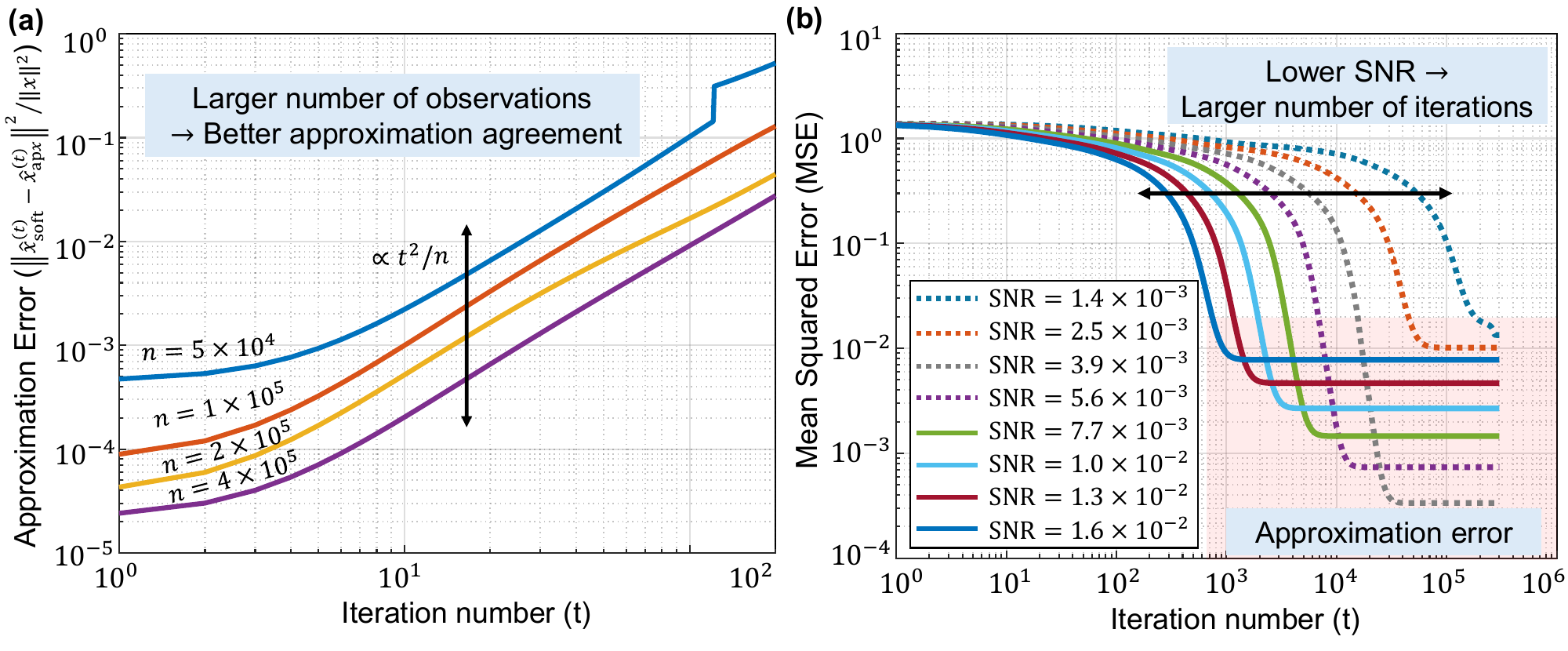}
    \caption{\textbf{Low signal-to-noise ratio: Analytic approximation and computational cost of the expectation-maximization (EM) algorithm.} \textbf{(a)} Approximation error $\| \hat{x}_{\mathrm{soft}}^{(t)} - \hat{x}_{\s{approx}}^{(t)} \|^2$ between the Monte Carlo estimate $\hat{x}_{\mathrm{soft}}^{(t)}$ (from Algorithm~\ref{alg:generalizedEfNsoft}) and the analytic approximated EM update $\hat{x}_{\s{approx}}^{(t)}$ from~\eqref{eqn:2.12}, which is valid as $n \to \infty$. As $n$ increases, the empirical estimates approach the analytic prediction, with the error decreasing proportionally to $t^2/n$, where $t$ is the iteration number, indicating that finite sample size is the primary source of deviation. Simulations were run with $d = 8 \times 8$, $\mathrm{SNR} = 2 \times 10^{-3}$, and averaged over 32 trials. \textbf{(b)} Computational cost in the low-SNR regime. As SNR decreases, convergence requires substantially more iterations. Results are based on the analytic update from~\eqref{eqn:2.12}, corresponding to the $n \to \infty$ regime, with the same image size $d = 8 \times 8$.}    
    \label{fig:8} 
\end{figure*}

\paragraph{Computational complexity in the low-SNR regime.} 
In Figure~\ref{fig:8}(b), we use the analytic approximation from \eqref{eqn:2.12} to investigate the convergence behavior of the EM algorithm as $n \to \infty$ in the low-SNR regime. Starting from a random initialization, we simulate EM updates using~\eqref{eqn:2.12} across several SNR levels. The results show that the number of iterations needed for convergence grows rapidly as the SNR decreases. In some cases, it requires hundreds of thousands of iterations to converge. Consequently, computational constraints may pose a significant barrier to applying the EM algorithm effectively in low-SNR settings.

\subsection{Population EM approximation derivation} \label{sec:lowSNRapprox}
Here, we assume that $\sigma^2 = 1$ is fixed, and treat the norm of $x$ as the variable of interest. Recall that by SLLN, the update rule for EM can be represented by
\begin{align}
    \hat{x}^{(t+1)} = \frac{1}{n}\sum_{i=1}^{n} \sum_{\ell=0}^{d-1}{ \gamma_\ell(\hat{x}^{(t)}, y_i) \p{\mathcal{T}_{\ell}^{-1} y_i}} & \xrightarrow[]{\s{a.s.}} \sum_{\ell=0}^{d-1}\mathbb{E} \pp{{ \gamma_{1,t}^{(\ell)} \p{\mathcal{T}_{\ell}^{-1} (\mathcal{T}_{\ell_1} y_1)}}} \label{eqn:C0}
    \\ & = \sum_{\ell=0}^{d-1}\mathbb{E} \pp{{ \gamma_{1,t}^{(\ell)} \p{\mathcal{T}_{\ell}^{-1} y_1}}}, \label{eqn:C1}
\end{align}
as $n \to \infty$, where $\gamma_{1,t}^{(\ell)}$ is defined in \eqref{eqn:softmaxGamma}, and $y_i = x + \xi_i \sim \mathcal{N}\p{x, I_{d \times d}}$. The transition from~\eqref{eqn:C0} to~\eqref{eqn:C1}, in which the translation operator $\mathcal{T}_{\ell_1}$  is omitted, is justified by the cyclic symmetry of the model: since the distribution of the observations is invariant under cyclic shifts, the expectation remains unchanged. 
Our aim is to approximate the expectation value in the right-hand-side of \eqref{eqn:C1} for the low SNR regime.
For that purpose,  we use an approximation that is based on the following approximation of the expected value of the ratio between two random variables $A$ and $B$, which is second-order approximation of the expectation of a ratio of random variables (see, e.g., second-order Taylor expansions for ratios \cite{stuart2010kendall}):
\begin{align}
   \mathbb{E} \pp{\frac{A}{B}} \approx \frac{\mu_A}{\mu_B} \p{1 - \frac{\mathrm{Cov} \p{A,B}}{\mu_A \mu_B} + \frac{\sigma_B^2}{\mu_B^2}} = \frac{\mu_A}{\mu_B} - \frac{\mathrm{Cov}\p{A,B}}{\mu_B^2} + \frac{\sigma_B^2 \mu_A}{\mu_B^3}. \label{eqn:ratioExpectationApproximation}
\end{align}

To apply this, we define:
\begin{align}
    A_\ell \triangleq y_1\exp(y_1^\top \hat{x}_\ell^{(t)}), \label{eqn:C3}
\end{align}
for each term in the numerator of \eqref{eqn:C1}, and 
\begin{align}
     B = \sum_{r=0}^{d-1} \exp(y_1^\top \hat{x}_r^{(t)}), \label{eqn:C4}
\end{align}
for the denominator in \eqref{eqn:C1}.
In addition, we define $ \hat{x}_\ell^{(t)} \triangleq \mathcal{T}_\ell  \hat{x}^{(t)} $ as the cyclic shift of the estimate $\hat{x}^{(t)}$ at iteration $t$. Note that as the cyclic shift preserves the norm, we have,
\begin{align}
    \|\hat{x}_\ell^{(t)}\| = \|\hat{x}^{(t)}\|, \label{eqn:C4_2}
\end{align}
for every $\ell \in \pp{d}$. In addition, define,
\begin{align}
    w_\ell^{(t)} \triangleq \frac{\exp(x^\top \hat{x}_\ell^{(t)})}{\sum_{r=0}^{d-1} \exp(x^\top \hat{x}_r^{(t)})}, \label{eqn:C5}
\end{align}
for every $\ell \in \pp{d}$, which is the softmax weight that appears when computing expectations over random rotations. Note that $\sum_{\ell=0}^{d-1} w_\ell^{(t)} = 1$. For notations simplicity, we will omit the $(t)$ superscript.

For the following computations, we use the following properties of a Gaussian expectation of a vector $z \sim \mathcal{N}(x, I_{d \times d})$:
\begin{align}
    \mathbb{E}[z \exp(z^\top t)] = (t + x) \exp(x^\top t + \norm{t}^2/2), \label{eqn:C6}  
\end{align}
and
\begin{align}
    \mathbb{E}[\exp(z^\top t)] = \exp(x^\top t + \norm{t}^2/2). \label{eqn:C6_2}  
\end{align}

\paragraph{Step 1: Compute $\mu_{A_\ell} = \mathbb{E}[A_\ell]$.}
Applying \eqref{eqn:C6} on $\mu_{A_\ell}$ yields, together with \eqref{eqn:C4_2}:
\begin{align}
    \mu_{A_\ell} &= \mathbb{E}\pp{ y_1\exp(y_1^\top \hat{x}_\ell^{(t)})} 
    = (\hat{x}_\ell^{(t)} + x) \cdot \exp\p{x^\top \hat{x}_\ell^{(t)} + \|\hat{x}^{(t)}\|^2/2}. \label{eqn:C7}  
\end{align}

\paragraph{Step 2: Compute $\mu_B = \mathbb{E}[B]$.}
Each term in the sum in \eqref{eqn:C4} has expectation $\exp(x^\top \hat{x}_r^{(t)} + \|\hat{x}_r^{(t)}\|^2/2)$, so:
\begin{align}
    \mu_B = \sum_{r=0}^{d-1} \exp(x^\top \hat{x}_r^{(t)} + \|\hat{x}^{(t)}\|^2/2). \label{eqn:C8} 
\end{align}
We obtain:
\begin{align}
    \frac{\mu_{A_\ell}}{\mu_B} = \frac{\p{\hat{x}_\ell^{(t)} + x} \cdot \exp \p{\norm{\hat{x}^{(t)}}^2/2} \exp \ppp{x^\top \hat{x}_\ell^{(t)}} }{ \exp \p{\norm{\hat{x}^{(t)}}^2/2} \sum_{r=0}^{d-1}\exp \ppp{x^\top \hat{x}_r^{(t)}}} = w_\ell \p{\hat{x}_\ell^{(t)} + x}, \label{eqn:C8_2}
\end{align}
where $w_\ell$ is defined in \eqref{eqn:C5}.

\paragraph{Step 3: Compute $\sigma_B^2 = \mathrm{Var}(B)$.}
We expand:
\begin{align}
    \sigma_B^2 &= \mathbb{E}[B^2] - \mu_B^2. \label{eqn:C11}
\end{align}
From linearity of the expectation, and applying \eqref{eqn:C6_2}, we have,
\begin{align}
    \mathbb{E} \pp{B^2} & = \mathbb{E} \pp{\ppp{\sum_{r=0}^{d-1} \exp \p{y_1^\top \hat{x}_{r}^{(t)}} }^2}  
    \\ & = \sum_{r_1,r_2=0}^{d-1} \mathbb{E} \pp{\exp \p{y_1^\top \p{\hat{x}_{r_1}^{(t)}+\hat{x}_{r_2}^{(t)}}} }
    \\ & = \sum_{r_1,r_2=0}^{d-1} \exp \ppp{x^\top \p{\hat{x}_{r_1}^{(t)} + \hat{x}_{r_2}^{(t)}} +\|\hat{x}_{r_1}^{(t)} + \hat{x}_{r_2}^{(t)}\|^2/2 }
    . \label{eqn:C12}
\end{align}
Then, substituting \eqref{eqn:C12}, and \eqref{eqn:C8} into \eqref{eqn:C11},
\begin{align}
    \sigma_B^2 &= \mathbb{E}[B^2] - \mu_B^2 \\
    &=  \exp( \|\hat{x}^{(t)}\|^2) \sum_{r_1, r_2=0}^{d-1} \exp\left(x^\top(\hat{x}_{r_1}^{(t)} + \hat{x}_{r_2}^{(t)})\right) \left( \exp(\langle \hat{x}_{r_1}^{(t)}, \hat{x}_{r_2}^{(t)} \rangle) - 1 \right). \label{eqn:C14}
\end{align}
Normalizing by $\mu_B^2$ gives:
\begin{align}
    \frac{\sigma_B^2}{\mu_B^2} = & \frac{\sum_{r_1,r_2=0}^{d-1} \exp \ppp{x^\top \p{\hat{x}_{r_1}^{(t)} + \hat{x}_{r_2}^{(t)}}} \p{\exp  \ppp{\langle \hat{x}_{r_1}^{(t)}, \hat{x}_{r_2}^{(t)} \rangle}-1}}{\sum_{r_1,r_2=0}^{d-1}\exp \ppp{x^\top \p{\hat{x}_{r_1}^{(t)} + \hat{x}_{r_2}^{(t)}}}}
    \\ & = \sum_{r_1,r_2 = 0}^{d-1} w_{r_1}w_{r_2} \p{\exp  \ppp{\langle \hat{x}_{r_1}^{(t)}, \hat{x}_{r_2}^{(t)} \rangle}-1}, \label{eqn:C15}
\end{align}
where $w_\ell$ is defined in \eqref{eqn:C5}.

\paragraph{Step 4: Compute $\s{Cov}(A_\ell,B)$.}
By definition,
\begin{align}
    \s{Cov}(A_\ell,B) &= \mathbb{E}[A_\ell B] - \mu_{A_\ell} \mu_B. \label{eqn:C16}
\end{align}
Then, applying \eqref{eqn:C6} on the first term in the right-hand-side of \eqref{eqn:C16}, yields,
\begin{align}
    \mathbb{E}[A_\ell B] &= \sum_{r=0}^{d-1} \mathbb{E}\left[ y_1 \exp(y_1^\top(\hat{x}_\ell^{(t)} + \hat{x}_r^{(t)})) \right] \\
    &= \sum_{r=0}^{d-1} (\hat{x}_\ell^{(t)} + \hat{x}_r^{(t)} + x) \exp(x^\top(\hat{x}_\ell^{(t)} + \hat{x}_r^{(t)})) \exp(\|\hat{x}_\ell^{(t)} + \hat{x}_r^{(t)}\|^2/2). \label{eqn:C17}
\end{align}
Combining \eqref{eqn:C7} and \eqref{eqn:C8}:
\begin{align}
    \mu_{A_\ell}\mu_{B} & = (\hat{x}_\ell^{(t)} + x) \cdot \exp\p{x^\top \hat{x}_\ell^{(t)} + \|\hat{x}^{(t)}\|^2/2} \cdot \sum_{r=0}^{d-1} \exp(x^\top \hat{x}_r^{(t)} + \|\hat{x}^{(t)}\|^2/2)
    \\ & = \exp( \|\hat{x}^{(t)}\|^2) \cdot \sum_{r=0}^{d-1} (\hat{x}_\ell^{(t)} + x) \exp(x^\top(\hat{x}_\ell^{(t)} + \hat{x}_r^{(t)})). \label{eqn:C18}
\end{align}
Combining \eqref{eqn:C17}, \eqref{eqn:C18}, into \eqref{eqn:C16}, and dividing by $\mu_B^2$, we obtain:
\begin{align}
    \frac{\s{Cov}(A_\ell,B)}{\mu_B^2} &= w_\ell (\hat{x}_\ell^{(t)} + x) \sum_{r=0}^{d-1} w_r \left( \exp(\langle \hat{x}_\ell^{(t)}, \hat{x}_r^{(t)} \rangle) - 1 \right) \\
    &\quad + w_\ell \sum_{r=0}^{d-1} w_r \hat{x}_r^{(t)} \exp(\langle \hat{x}_\ell^{(t)}, \hat{x}_r^{(t)} \rangle). \label{eqn:C19}
\end{align}

\paragraph{Step 5: Plug into Approximation \eqref{eqn:ratioExpectationApproximation}.} Substituting the approximations into the right-hand-side of \eqref{eqn:C1}
\begin{align}
    \sum_{\ell=0}^{d-1} \mathcal{T}_{\ell}^{-1} \p{\mathbb{E} \pp{{ y_1\gamma_{1,t}^{(\ell)}}}}.
\end{align}
For the first term in \eqref{eqn:ratioExpectationApproximation}, we obtain from \eqref{eqn:C8_2},
\begin{align}
    \sum_{\ell=0}^{d-1} \mathcal{T}_\ell^{-1}  \p{\frac{\mu_{A_\ell}}{\mu_B}} = \hat{x}^{(t)} + \sum_{\ell=0}^{d-1}w_\ell \p{\mathcal{T}_\ell^{-1}  x}, \label{eqn:C20}
\end{align}
where we have used $\sum_{\ell=0}^{d-1} w_\ell = 1$.

For the second term in \eqref{eqn:ratioExpectationApproximation}, we obtain from \eqref{eqn:C19},
\begin{align}
    \sum_{\ell=0}^{d-1} \mathcal{T}_\ell^{-1}  \p{\frac{\s{cov}\p{A_\ell,B}}{\mu_B^2}} = \sum_{\ell,r=0}^{d-1} & w_\ell w_r \p{\hat{x}^{(t)} + \mathcal{T}_\ell^{-1}  x} \p{\exp \ppp{\langle \hat{x}_\ell^{(t)}, \hat{x}_r^{(t)} \rangle} - 1}
\nonumber    \\ & + \sum_{\ell,r=0}^{d-1}  w_\ell w_r  \p{\mathcal{T}_\ell^{-1}  \hat{x}_r^{(t)}} \exp \ppp{\langle \hat{x}_\ell^{(t)}, \hat{x}_r^{(t)} \rangle}. \label{eqn:C21}
\end{align}
Finally, for the third term in \eqref{eqn:ratioExpectationApproximation}, we obtain from \eqref{eqn:C15}:
\begin{align}
    \sum_{\ell=0}^{d-1} & \mathcal{T}_\ell^{-1}  \p{\frac{\sigma_B^2 \mu_{A_\ell}}{\mu_B^3}} 
    \nonumber\\ & = \p{\hat{x}^{(t)} + \sum_{\ell=0}^{d-1}w_\ell \p{\mathcal{T}_\ell^{-1}  x}} \p{\sum_{r_1,r_2 = 0}^{d-1} w_{r_1}w_{r_2} \p{\exp  \ppp{\langle \hat{x}_{r_1}^{(t)}, \hat{x}_{r_2}^{(t)} \rangle}-1}}. \label{eqn:C22}
\end{align}
Substituting \eqref{eqn:C20}--\eqref{eqn:C22} into \eqref{eqn:ratioExpectationApproximation}, yields,
\begin{align}
    \sum_{\ell=0}^{d-1} \mathcal{T}_\ell^{-1}  \mathbb{E}\left[\frac{A_\ell}{B}\right] 
    &\approx \sum_{\ell=0}^{d-1} \mathcal{T}_\ell^{-1}  \left( \frac{\mu_{A_\ell}}{\mu_B} \right)
    - \sum_{\ell=0}^{d-1} \mathcal{T}_\ell^{-1}  \left( \frac{\s{Cov}(A_\ell, B)}{\mu_B^2} \right)
    + \sum_{\ell=0}^{d-1} \mathcal{T}_\ell^{-1}  \left( \frac{\sigma_B^2 \mu_{A_\ell}}{\mu_B^3} \right) \nonumber\\
    &= \hat{x}^{(t)} + \sum_{\ell=0}^{d-1} w_\ell \left( \mathcal{T}_\ell^{-1}  x \right) \\
    &\quad - \sum_{\ell, r=0}^{d-1} w_\ell w_r 
    \left( \hat{x}^{(t)} + \mathcal{T}_\ell^{-1}  x \right) 
    \left( \exp\left( \left\langle \hat{x}_\ell^{(t)}, \hat{x}_r^{(t)} \right\rangle \right) - 1 \right) \nonumber \\
    &\quad - \sum_{\ell, r=0}^{d-1} w_\ell w_r 
    \left( \mathcal{T}_\ell^{-1}  \hat{x}_r^{(t)} \right) 
    \exp\left( \left\langle \hat{x}_\ell^{(t)}, \hat{x}_r^{(t)} \right\rangle \right) \nonumber \\
    &\quad + \left( \hat{x}^{(t)} + \sum_{\ell=0}^{d-1} w_\ell \left( \mathcal{T}_\ell^{-1}  x \right) \right)
    \left( \sum_{r_1, r_2 = 0}^{d-1} w_{r_1} w_{r_2} 
    \left( \exp\left( \left\langle \hat{x}_{r_1}^{(t)}, \hat{x}_{r_2}^{(t)} \right\rangle \right) - 1 \right) \right). \nonumber \label{eqn:C23}
\end{align}
Using simple algebra steps, and using the fact that $\sum_{\ell=0}^{d-1}w_\ell = 1$, we obtain,
\begin{align}
    \hat{x}^{(t+1)} \approx & \ \hat{x}^{(t)} + \sum_{\ell=0}^{d-1}w_\ell \p{\mathcal{T}_\ell^{-1}  x}
    \\ & + \sum_{\ell=0}^{d-1}w_\ell \p{\mathcal{T}_\ell^{-1}  x} \pp{\sum_{r_1,r_2 = 0}^{d-1} w_{r_1}w_{r_2} \exp  \ppp{\langle \hat{x}_{r_1}^{(t)}, \hat{x}_{r_2}^{(t)} \rangle} - \sum_{r=0}^{d-1}w_r\exp \ppp{\langle \hat{x}_\ell^{(t)}, \hat{x}_r^{(t)} \rangle}} \nonumber
    \\ & - \sum_{\ell,r=0}^{d-1}  w_\ell w_r  \p{\mathcal{T}_\ell^{-1}  \hat{x}_r^{(t)}} \exp \ppp{\langle \hat{x}_\ell^{(t)}, \hat{x}_r^{(t)} \rangle}. \nonumber
\end{align}
Identifying the terms $h_1, h_2$, defined in \eqref{eqn:h1Def}, \eqref{eqn:h2Def}, respectively, completes the approximation derivation.

\section{Preliminaries: Einstein from Noise} \label{sec:EfNpreliminaries}
Before we delve into the proofs of Theorems~\ref{thm:1} and \ref{thm:2}, we fix several notations and definitions, and prove auxiliary results that will be used in the proofs. For simplicity of notations, in the single-iteration analysis, we omit the explicit dependence on the iteration index and denote $x = \hat{x}^{(t)}$, and $\hat{x} = \hat{x}^{(t+1)}$ for two successive iterations.  Furthermore, throughout the Einstein from Noise analysis, we will assume that $\sigma^2 = 1$.

\subsection{Notations} 

We derive the EM update rule in the Einstein from Noise regime, when the SNR is zero. Recall the definitions of $\gamma_\ell(\hat{x}^{(t)}, y_i)$, and $\hat{x}^{(t)} = x$ in \eqref{eqn:softmaxGamma} and
\eqref{eqn:generalizedEfNEqnSoft}, respectively. In the absence of a true signal (i.e., $\mathrm{SNR} = 0$), we have $y_i = \xi_i$, and the the update equations of the soft-assignment weights simplifies as follows,
    \begin{align}
        \gamma_{i,\ell} \triangleq \gamma_\ell(x, \xi_i) =  \frac{\exp \p{\xi_i^\top \p{\mathcal{T}_\ell  {x}}/\sigma^2}}{\sum_{r=0}^{d-1} \exp \p{\xi_i^\top \p{\mathcal{T}_r  {x}}/\sigma^2 }}, \label{eqn:appx_E1}
    \end{align}
for $i=0,\ldots,n-1$, and $\ell = 0, \dots, d-1$. Thus, the corresponding estimator $\hat{x}$ is then given by,
    \begin{align}
        \hat{x}\triangleq \frac{1}{n}\sum_{i=1}^{n} \sum_{\ell=0}^{d-1}{ \gamma_{i,\ell} \, \mathcal{T}_{\ell}^{-1} \xi_i} \label{eqn:appx_E2}.
    \end{align}

\paragraph{Fourier space conjugate-symmetry relation.} Recall the definitions of the Fourier transforms of $x$ and $\xi_i$ from~\eqref{FourierSpace}, and recall that the signal length $d$ is assumed to be even. Note that since $x$ and $\xi_i$ are real-valued, their Fourier coefficients satisfy the conjugate-symmetry relation: 
\begin{align}
    \s{X}[k]=\overline{\s{X}[d-k]},\quad 
    \s{N}_i[k]=\overline{\s{N}_i[d-k]}. \label{eqn:conjRelation}
\end{align}
In particular, $|\s{N}_i[k]| = |\s{N}_i[d-k]|$ and $\phi_{\s{N}_i}[k] = -\phi_{\s{N}_i}[d-k]$, which implies that only the first $d/2 + 1$ components of $\s{N}[k]$ are statistically independent.

\paragraph{The estimator in Fourier space.} 
With the definitions above, the estimation process can be expressed conveniently in the Fourier domain. Since a cyclic shift in the signal domain corresponds to multiplication by a linear phase factor in the Fourier domain, the EM update in~\eqref{eqn:generalizedEfNEqnSoft} admits an equivalent frequency-domain representation. In particular, by linearity of the Fourier transform and its shift property, the $k^{\mathrm{th}}$ Fourier coefficient of the updated estimate can be written as:
\begin{align}
    \hat{\s{X}}[k] = \frac{1}{n} \sum_{i=1}^{n} \sum_{\ell=0}^{d-1} \gamma_{i,\ell} \cdot \s{N}_i[k] \exp \ppp{j\frac{2\pi k \ell}{d}}.
\end{align}
or equivalently,
\begin{align}
    \s{\hat{X}}[k] & =  \frac{1}{n} \sum_{i=1}^{n} \sum_{\ell=0}^{d-1} \gamma_{i, \ell} \cdot \abs{\s{N}_i[k]} e^{j\phi_{\s{N}_i}[k]} e^{j\frac{2\pi k}{d}\ell},        \label{eqn:estimatorFourierRepresentation_pre}
\end{align}
for $k \in \pp{d}$. It is important to note that $\gamma_{i, \ell}$ captures the dependency on the estimates from the previous iteration ${x}$, as well as the connections between the different spectral components. 

Our goal is to analyze the phase and magnitude of the estimator $\hat{\s{X}}$ in \eqref{eqn:estimatorFourierRepresentation_pre}. Simple manipulations reveal that, for any $0\leq k\leq d-1$, the estimator's phases are given by,
\begin{align}
    \phi_{\s{\hat{X}}}[k] = \phi_{\s{X}}[k] + \arctan \left( \frac{\sum_{i=1}^{n}\abs{\s{N}_i[k]} \sum_{\ell=0}^{d-1} \gamma_{i,\ell} \sin \left( \phi_{i,\ell}[k] \right)}{\sum_{i=1}^{n}\abs{\s{N}_i[k]} \sum_{\ell=0}^{d-1} \gamma_{i, \ell} \cos \left( \phi_{i,\ell}[k] \right)} \right ),\label{eqn:estimatorPhase}
\end{align}
where we have defined,
\begin{align}
    \phi_{i,\ell}[k] \triangleq \frac{2\pi k \ell} {d} +  \phi_{\s{N}_i}[k] - \phi_{\s{X}}[k].    \label{eqn:phaseDifferenceTerm}
\end{align}

\paragraph{The estimator as $n \to \infty$.} Recall the definition of $\phi_{i,\ell}[k]$ in \eqref{eqn:phaseDifferenceTerm}. Then, following \eqref{eqn:estimatorFourierRepresentation_pre}, and simple algebraic manipulation,
    \begin{align}
        \hat{\s{X}}[k] & = \frac{1}{n} \sum_{i=1}^{n} \sum_{\ell=0}^{d-1} \gamma_{i,\ell} \, \s{N}_i[k] \exp \ppp{j\frac{2\pi k \ell}{d}} \nonumber
        \\ & = \frac{1}{n} \sum_{i=1}^{n} \sum_{\ell=0}^{d-1} \gamma_{i,\ell} \abs{\s{N}_i[k]} e^{j\phi_{\s{N}_i}[k]} e^{j\frac{2\pi k}{d} \ell} \nonumber
        \\ & = \frac{e^{j\phi_{\s{{X}}}[k]}}{n} \sum_{i=1}^{n} \sum_{\ell=0}^{d-1} \gamma_{i,\ell} \abs{\s{N}_i[k]} e^{j\phi_{\s{N}_i}[k]} e^{j\frac{2\pi k}{d}\ell} e^{-j\phi_{\s{X}}[k]}
        \nonumber\\ & = \frac{e^{j\phi_{\s{{X}}}[k]}}{n} \sum_{i=1}^{n} \sum_{\ell=0}^{d-1} \gamma_{i,\ell} \abs{\s{N}_i[k]} e^{j\phi_{i,\ell}[k]}. 
        \label{eqn:A11} 
    \end{align}
By applying the strong law of large numbers (SLLN) on the right-hand-side of \eqref{eqn:A11}, for $n \to \infty$, we have,
\begin{align}
   \s{\hat{X}}[k] e^{-j\phi_{\s{X}}[k]} & =  \frac{1}{n} { \sum_{i=1}^{n} \sum_{\ell=0}^{d-1} \gamma_{i,\ell}  \abs{\s{N}_i[k]} e^{j\phi_{i,\ell}[k]}} 
   \nonumber \\ & \xrightarrow[]{\s{a.s.}} 
    \mathbb{E} \left[ \abs{\s{N}[k]} \sum_{\ell=0}^{d-1} \gamma_{\ell} \cos\p{\phi_{\ell}[k]} \right] + j \mathbb{E} \left[ \abs{\s{N}[k]} \sum_{\ell=0}^{d-1} \gamma_{\ell} \sin\p{\phi_{\ell}[k]} \right],\label{eqn:strongLLN}
\end{align}
where we have used the fact that the sequences of random variables $\{\abs{\s{N}_i[k]} \sin \left( \phi_{i,\ell}[k] \right)\}_{i=1}^{n}$ and
$\{\abs{\s{N}_i[k]} \cos \left( \phi_{i,\ell}[k] \right)\}_{i=1}^{n}$ are i.i.d. with finite mean and variances. From now on, for notational simplicity, we will omit the index ``$i$" when taking expectations and define $\gamma_\ell \triangleq \gamma_{1,\ell}$, $\xi \triangleq \xi_1$, and $\phi_{\ell}[k] \triangleq \phi_{i,\ell}[k]$.

\subsection{Cyclo-stationary correlations} \label{sec:cylocStationaryProcess}
We will rely on the observation that correlation terms $\xi^\top \p{\mathcal{T}_\ell{x}}$ between the noise $\xi$ and the different shifts of the signal $x$, form a Gaussian random vector, which we denote by $\s{S}$. Specifically, the $r^{\s{th}}$ entry of this vector is defined as,
 \begin{align}
    {\s{S}_{r}} &\triangleq \langle \xi, {\mathcal{T}_r{x}} \rangle , \label{eqn:appx_E4}
\end{align}
for $0\leq r\leq d-1$. In the Fourier domain, this expression becomes,
 \begin{align}
    {\s{S}_{r}} &\triangleq \langle \xi, {\mathcal{T}_r  {x}} \rangle \nonumber \\ & = \langle \mathcal{F} \ppp{\xi}, \mathcal{F} \ppp{\mathcal{T}_r x} \rangle\nonumber
    \\ & =  \sum_{k=0}^{d-1} {\abs{{\s{X}}[k]} \abs{\s{N}[k]}   \cos \left( \frac{2\pi kr}{d} +  \phi_{\s{N}}[k] - \phi_{\s{X}}[k] \right)} \nonumber
    \\ & = \sum_{k=0}^{d-1} {\abs{{\s{X}}[k]} \abs{\s{N}[k]} \cos \left(\phi_{r}[k] \right)}, \label{eqn:appx_E5}
\end{align}
where $\phi_{r}$ is defined in \eqref{eqn:phaseDifferenceTerm}. The vector $\s{S} \triangleq (\s{S}_{0},\s{S}_{1},\ldots,\s{S}_{d-1})^T$ follows a multivariate Gaussian distribution $\mathcal{N} \p{0, \Sigma}$ where the covariance matrix $\Sigma$ is circulant. Consequently, $\s{S}$ forms a cyclo-stationary Gaussian process. 
Moreover, the eigenvalues of $\Sigma$ are defined by $\ppp{\abs{\s{X}}[k]^2}_{k \in \pp{d}}$ \cite{balanov2024einstein}. Therefore, under the assumption that $\s{X}[k] \neq 0$ for every $k \in \pp{d}$, the covariance matrix $\Sigma$ is full-rank with rank $d$.

\subsection{Properties of Gaussian random vectors}

In this subsection, we state and prove auxiliary results about certain properties of Gaussian random vectors. Let us denote the function $\gamma_{\ell}^{(\beta)}:\mathbb{R}^d \to \mathbb{R}$, parameterized by $\beta\in\mathbb{R}_+$, as follows,
\begin{align}
   \gamma_{\ell}^{(\beta)}\p{\s{Q}} \triangleq \frac{ \exp{\p{\beta Q_\ell}}}{\sum_{r=0}^{d-1}\exp{\p{\beta Q_r}}}, \label{eqn:softMaxFuncDef1}
\end{align}
for $\ell\in[d]$, where $\s{Q} = (Q_0,Q_1,\ldots,Q_{d-1})\in\mathbb{R}^d$. 

\begin{lem} \label{lem:cycloStationary}
    Let ${\s{X}} = \p{X_0, X_1,\ldots, X_{d-1}}^T$ be a zero-mean cyclo-stationary Gaussian vector, with  ${\s{X}} \sim \calN \p{0, \Sigma_{\s{X}}}$, such that,
        \begin{align}
            \mathbb{E}\p{X_{\ell_1} X_{\ell_2}} = \rho_{\abs{{\ell_1}-{\ell_2}}\s{mod} \ d}. \label{eqn:cycloStationartyRelation}
        \end{align}
    Recall the definition of $\gamma_{\ell}^{(\beta)}$ in \eqref{eqn:softMaxFuncDef1}. Then,
        \begin{enumerate}
        \item For every $\beta>0$ and $\ell \in \pp{d}$, 
        \begin{align}
            \mathbb{E}\pp{\gamma_{\ell}^{(\beta)}\p{\s{X}}}= \frac{1}{d}. \label{eqn:expectedProbabilityUniform}
        \end{align}
        \item For every $\beta>0$, $\ell_1, \ell_2,\tau \in \pp{d}$,
        \begin{align}            \mathbb{E}\pp{\gamma_{\ell_1}^{(\beta)}\p{\s{X}}\cdot \gamma_{\ell_2}^{(\beta)}\p{\s{X}}} = \mathbb{E}\pp{\gamma_{\ell_1+\tau}^{(\beta)}\p{\s{X}}\cdot \gamma_{\ell_2+\tau}^{(\beta)}\p{\s{X}}}, \label{eqn:equalNumeratorCyclic}
        \end{align}
        where the sums $\ell + \tau$ are taken modulo $d$.
    \end{enumerate}
\end{lem}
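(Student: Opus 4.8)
The plan is to exploit two structural symmetries and reduce both claims to a single change-of-variables argument. Everything rests on the circulant structure of $\Sigma_{\s{X}}$, which renders the law of $\s{X}$ invariant under cyclic relabeling of coordinates, combined with the fact that the softmax map $\gamma^{(\beta)}$ is equivariant under such relabelings.

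First I would introduce the cyclic-shift operator on coordinate vectors: for $\tau\in[d]$ set $(X^{(\tau)})_\ell \triangleq X_{(\ell+\tau)\bmod d}$. Using \eqref{eqn:cycloStationartyRelation} I would check that $\mathbb{E}[X^{(\tau)}_{\ell_1}X^{(\tau)}_{\ell_2}] = \rho_{|\ell_1-\ell_2|\bmod d}$, so $\s{X}^{(\tau)}$ and $\s{X}$ share the same covariance; since both are centered Gaussian, $\s{X}^{(\tau)}\stackrel{\mathcal{D}}{=}\s{X}$. Next I would record the equivariance identity $\gamma_\ell^{(\beta)}(\s{X}^{(\tau)}) = \gamma_{(\ell+\tau)\bmod d}^{(\beta)}(\s{X})$, which follows because the denominator $\sum_{r}\exp(\beta X_r)$ is unchanged when the summation index is shifted over the full residue system $[d]$. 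I would also note that both expectations in the lemma are automatically finite, since $0\le\gamma_\ell^{(\beta)}\le 1$.

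For the first claim I would combine these two facts to obtain $\mathbb{E}[\gamma_{\ell+\tau}^{(\beta)}(\s{X})] = \mathbb{E}[\gamma_\ell^{(\beta)}(\s{X}^{(\tau)})] = \mathbb{E}[\gamma_\ell^{(\beta)}(\s{X})]$, so $\ell\mapsto\mathbb{E}[\gamma_\ell^{(\beta)}(\s{X})]$ is constant in $\ell$. Since the softmax coordinates form a partition of unity, $\sum_{\ell=0}^{d-1}\gamma_\ell^{(\beta)}(\s{X})=1$; taking expectations and using the constancy yields $d\,\mathbb{E}[\gamma_\ell^{(\beta)}(\s{X})]=1$, which is \eqref{eqn:expectedProbabilityUniform}. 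For the second claim the identical two ingredients apply verbatim to the product, giving $\mathbb{E}[\gamma_{\ell_1+\tau}^{(\beta)}(\s{X})\gamma_{\ell_2+\tau}^{(\beta)}(\s{X})] = \mathbb{E}[\gamma_{\ell_1}^{(\beta)}(\s{X}^{(\tau)})\gamma_{\ell_2}^{(\beta)}(\s{X}^{(\tau)})] = \mathbb{E}[\gamma_{\ell_1}^{(\beta)}(\s{X})\gamma_{\ell_2}^{(\beta)}(\s{X})]$, which is exactly \eqref{eqn:equalNumeratorCyclic}.

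I do not anticipate a genuine obstacle here: the argument is a clean invariance/equivariance reduction rather than a computation. The only place requiring care is the bookkeeping of the modular shift convention and verifying entry-by-entry that the shifted covariance coincides with the original; once that is settled, both statements drop out immediately.
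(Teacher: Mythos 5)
Your proposal is correct and follows essentially the same route as the paper's proof: distributional invariance of the cyclo-stationary Gaussian vector under cyclic index shifts, combined with softmax equivariance (the denominator being a sum over the full residue system) and the partition-of-unity identity $\sum_{\ell}\gamma_{\ell}^{(\beta)}=1$. The only cosmetic difference is that you verify the shift invariance by matching covariances of the shifted Gaussian vector, whereas the paper invokes the invariance of the cdf directly; these are equivalent for centered Gaussians.
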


\begin{proof}[Proof of Lemma~\ref{lem:cycloStationary}]    

By definition, due to \eqref{eqn:cycloStationartyRelation}, the Gaussian vector $\s{X}$ is cyclo-stationary. Therefore, by the definition of cyclo-stationary Gaussian vectors, its cumulative distribution function $F_{\s{X}}$ is invariant under cyclic shifts \cite{durrett2019probability, balanov2024einstein}, i.e.,
    \begin{align}   
         F_{\s{X}}\left(z_0, z_1,\ldots,z_{d-1} \right) = F_{\s{X}}\left( z_\tau, z_{\tau+1},\ldots,z_{\tau+d-1} \right),
         \label{eqn:stationarityTimeShift}
    \end{align}
for any $\tau \in \mathbb{Z}$, where the indices are taken modulo $d$. Therefore, the following holds for any $\tau \in \pp{d}$,
    \begin{align}
        \mathbb{E}\pp{\gamma_{\ell}^{(\beta)} \p{\s{X}}} & = \mathbb{E} \pp{\frac{\exp\p{\beta X_\ell}}{\sum_{r = 0}^{d-1} \exp\p{\beta X_r}}} \nonumber
        \\ & = \mathbb{E} \pp{\frac{\exp\p{\beta X_{\ell+\tau}}}{\sum_{r = 0}^{d-1} \exp\p{\beta X_{r+\tau}}}} \nonumber 
        \\ & = \mathbb{E} \pp{\frac{\exp\p{\beta X_{\ell+\tau}}}{\sum_{r = 0}^{d-1} \exp\p{\beta X_{r}}}} \nonumber \\
        &= \mathbb{E}\pp{\gamma_{\ell+\tau}\p{\s{X}}},
    \end{align}
where the second equality is due to the cyclo-stationary invariance property of $\s{X}$, and the third equality is due to the fact that sum in the denominator is over all the entries of $\s{X}$. This proves \eqref{eqn:equalNumeratorCyclic}. For \eqref{eqn:expectedProbabilityUniform}, we note that  since $\mathbb{E}\pp{\gamma_{\ell}^{(\beta)} \p{\s{X}}} = \mathbb{E}\pp{\gamma_{\ell+\tau} \p{\s{X}}}$, for every $\tau \in \pp{d}$, as well as due to the property that $\sum_{\ell=0}^{d-1} \gamma_{\ell}^{(\beta)} \p{\s{X}} = 1$, we get,
\begin{align}
    1 = \mathbb{E}\pp{\sum_{\ell=0}^{d-1} \gamma_{\ell}^{(\beta)} \p{\s{X}}} = d \cdot \mathbb{E}\pp{\gamma_{\ell}^{(\beta)} \p{\s{X}}},
\end{align}
for every $\ell\in[d]$, as claimed.

Finally, the second part of the lemma is obtained using similar arguments. Specifically, following \eqref{eqn:stationarityTimeShift}, we have,
\begin{align}
    \mathbb{E}\pp{\gamma_{\ell_1}^{(\beta)}\p{\s{X}}\cdot \gamma_{\ell_2}^{(\beta)}\p{\s{X}}} & = \mathbb{E} \pp{\frac{\exp\p{\beta X_{\ell_1}}}{\sum_{r = 0}^{d-1} \exp\p{\beta X_r}}\frac{\exp\p{\beta X_{\ell_2}}}{\sum_{r = 0}^{d-1} \exp\p{\beta X_r}}} \nonumber
    \\ & = \mathbb{E} \pp{\frac{\exp\p{\beta X_{\ell_1+\tau}}}{\sum_{r = 0}^{d-1} \exp\p{\beta X_{r+\tau}}}\frac{\exp\p{\beta X_{\ell_2+\tau}}}{\sum_{r = 0}^{d-1} \exp\p{\beta X_{r+\tau}}}} \nonumber
    \\ & = \mathbb{E} \pp{\frac{\exp\p{\beta X_{\ell_1+\tau}}}{\sum_{r = 0}^{d-1} \exp\p{\beta X_{r}}}\frac{\exp\p{\beta X_{\ell_2+\tau}}}{\sum_{r = 0}^{d-1} \exp\p{\beta X_{r}}}} \nonumber
    \\ & = \mathbb{E}\pp{\gamma_{\ell_1+\tau}^{(\beta)}\p{\s{X}}\cdot \gamma_{\ell_2+\tau}^{(\beta)}\p{\s{X}}},
\end{align}
for every $\ell_1, \ell_2, \tau \in \pp{d}$.

\end{proof}

\begin{lem} \label{lem:positivityOfSoftMax}
    Fix $d \geq 2$. Let ${\s{X}} = \p{X_0, X_1,\ldots, X_{d-1}}^T$ be a zero-mean Gaussian vector, with  ${\s{X}} \sim \calN \p{0, \Sigma_{\s{X}}}$. Recall the definition of $\gamma_{\ell}^{(\beta)}$ in \eqref{eqn:softMaxFuncDef1}. Then, for every $\beta > 0$, and $\ell_1, \ell_2 \in \pp{d}$,
        \begin{align}
            0 < \mathbb{E}\pp{\gamma_{{\ell_1}}^{(\beta)} \p{\s{X}} \cdot \gamma_{{\ell_2}}^{(\beta)}\p{\s{X}}} < 1. \label{eqn:app_E33_1}
        \end{align}
\end{lem}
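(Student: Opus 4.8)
The plan is to reduce the statement to two elementary facts: a deterministic (pointwise) two-sided bound on the softmax values, and the basic measure-theoretic principle that a strictly positive integrable random variable has strictly positive expectation. Crucially, no nondegeneracy of $\Sigma_{\s{X}}$ is needed; the argument works for an arbitrary (possibly degenerate) zero-mean Gaussian law, using only that $\mathbf{X}$ takes values in $\mathbb{R}^d$ and that $d\ge 2$.

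First I would record the pointwise bounds. Fix $\beta>0$ and $\ell\in[d]$, and let $\mathbf{x}=(x_0,\dots,x_{d-1})\in\mathbb{R}^d$ be any realization. Since every term $\exp(\beta x_r)$ is strictly positive, the denominator $\sum_{r=0}^{d-1}\exp(\beta x_r)$ is strictly positive, so $\gamma_\ell^{(\beta)}(\mathbf{x})>0$. Moreover, because $d\ge 2$, the denominator strictly exceeds the numerator, as it contains at least one additional strictly positive summand $\exp(\beta x_{r})$ with $r\neq\ell$; hence $\gamma_\ell^{(\beta)}(\mathbf{x})<1$. Combining these,
\begin{align}
    0 < \gamma_\ell^{(\beta)}(\mathbf{x}) < 1, \qquad \forall\, \mathbf{x}\in\mathbb{R}^d,\ \ell\in[d].
\end{align}
Consequently, setting $g(\mathbf{x})\triangleq \gamma_{\ell_1}^{(\beta)}(\mathbf{x})\,\gamma_{\ell_2}^{(\beta)}(\mathbf{x})$ for the two fixed indices $\ell_1,\ell_2\in[d]$, the product of two quantities in $(0,1)$ again lies in $(0,1)$:
\begin{align}
    0 < g(\mathbf{x}) < 1, \qquad \forall\, \mathbf{x}\in\mathbb{R}^d.
\end{align}

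Next I would transfer these strict pointwise inequalities to the expectation. Since $g(\mathbf{X})\in(0,1)$ it is bounded and therefore integrable under the Gaussian law $\mathcal{N}(0,\Sigma_{\s{X}})$. For the lower bound, the random variable $g(\mathbf{X})$ is strictly positive (everywhere, hence almost surely), and the expectation of a strictly positive integrable random variable is strictly positive, giving $\mathbb{E}[g(\mathbf{X})]>0$. For the upper bound, apply the same principle to $1-g(\mathbf{X})$: this is strictly positive a.s.\ and integrable, so $\mathbb{E}[1-g(\mathbf{X})]>0$, i.e.\ $\mathbb{E}[g(\mathbf{X})]<1$. Together these yield
\begin{align}
    0 < \mathbb{E}\pp{\gamma_{\ell_1}^{(\beta)}(\mathbf{X})\,\gamma_{\ell_2}^{(\beta)}(\mathbf{X})} < 1,
\end{align}
which is exactly the claimed bound \eqref{eqn:app_E33_1}.

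There is essentially no substantive obstacle here; the only point requiring a line of justification is the passage from the strict pointwise inequalities to strict inequalities for the expectation, which relies on the standard fact that $\mathbb{E}[Z]>0$ whenever $Z$ is integrable and $Z>0$ almost surely (applied once to $g(\mathbf{X})$ and once to $1-g(\mathbf{X})$). I would emphasize that the hypothesis $d\ge 2$ is what guarantees the strict upper bound $\gamma_\ell^{(\beta)}<1$ (for $d=1$ the softmax is identically $1$), and that the result is uniform in $\ell_1,\ell_2$ and does not depend on the structure of $\Sigma_{\s{X}}$, so in particular it holds for the cyclo-stationary correlation matrix arising from the Einstein-from-Noise analysis.
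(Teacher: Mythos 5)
Your proof is correct and follows essentially the same route as the paper's: establish the strict pointwise bounds $0<\gamma_\ell^{(\beta)}(\mathbf{x})<1$ (using $d\ge 2$), conclude the product lies in $(0,1)$ pointwise, and pass to the expectation. In fact, your explicit justification that strict pointwise inequalities transfer to strict inequalities in expectation (via applying the positive-expectation principle to both $g(\mathbf{X})$ and $1-g(\mathbf{X})$) is slightly more careful than the paper's, which simply "takes expectations of both sides."
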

\begin{proof}[Proof of Lemma~\ref{lem:positivityOfSoftMax}]
By the definition of $\gamma_\ell^{(\beta)}$, 
\begin{align} 
    \gamma_{\ell}^{(\beta)}(\s{X}) \triangleq \frac{ \exp{\left(\beta X_\ell\right)}}{\sum_{r=0}^{d-1} \exp{\left(\beta X_r\right)}}. 
\end{align} 
For any realization of the Gaussian vector $\s{X}$, it is clear that
\begin{align} 
    0 < \gamma_{\ell}^{(\beta)}(\s{X}) < 1, \quad \text{for each} , \ell \in {0, 1, \ldots, d-1}. 
\end{align} 
Hence, for any realization of $\s{X}$, we have 
\begin{align} 
    0 < \gamma_{{\ell_1}}^{(\beta)}(\s{X}) \cdot \gamma_{{\ell_2}}^{(\beta)}(\s{X}) < 1. \label{eqn:app_E36} 
\end{align} 
Taking the expectation of both sides in \eqref{eqn:app_E36}, we obtain the desired result.

\end{proof}
Finally, we state the following known Gaussian integration by parts lemma \cite{ross2011fundamentals}, which will be used to prove Proposition \ref{prop:realSpaceConvergence}.

\begin{lem} \label{lemma:3} Let $F: \mathbb{R}^d \to \mathbb{R}$ be a $C^1$ function, and ${\s{X}} = \p{X_0, X_1,\ldots, X_{d-1}}^T$ be a zero-mean Gaussian random vector. Then, for every $0 \leq i \leq d-1$,
    \begin{align}
       \mathbb{E} \pp{X_i F\p{\s{X}}} = \sum_{j=0}^{d-1}\mathbb{E}\p{X_iX_j} \mathbb{E}\p{\frac{\partial F}{\partial x_j} \p{\s{X}}}.
    \end{align}
\end{lem}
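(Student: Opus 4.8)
The plan is to establish this multivariate Gaussian integration-by-parts (Stein) identity by reducing it to the scalar standard-Gaussian case, a route that conveniently sidesteps any need to assume that the covariance $\Sigma \triangleq \mathbb{E}\p{\s{X}\s{X}^\top}$ is invertible. First I would factor the covariance as $\Sigma = AA^\top$ for some matrix $A$ (for instance $A=\Sigma^{1/2}$), and represent $\s{X}\stackrel{\mathcal{D}}{=}A\s{Z}$, where $\s{Z}=\p{Z_0,\dots,Z_{m-1}}^\top\sim\mathcal{N}\p{0,I_m}$ has independent standard-normal coordinates. Setting $G\p{\s{Z}}\triangleq F\p{A\s{Z}}$, the chain rule gives $\partial_k G\p{\s{Z}}=\sum_j A_{jk}\,\partial_j F\p{A\s{Z}}$, while the $i$-th coordinate reads $X_i=\sum_k A_{ik}Z_k$.

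Next I would prove the one-dimensional building block: for a $C^1$ function $G$ of a standard Gaussian vector with controlled growth, $\mathbb{E}\pp{Z_k G\p{\s{Z}}}=\mathbb{E}\pp{\partial_k G\p{\s{Z}}}$. This follows from the elementary identity $z_k\,\phi\p{\s{z}}=-\partial_k \phi\p{\s{z}}$ for the standard density $\phi$, followed by integration by parts in the $z_k$ variable, with the boundary terms vanishing by Gaussian tail decay. Because the coordinates of $\s{Z}$ are independent, this scalar statement extends coordinatewise without further work.

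Combining these pieces, I would compute
\begin{align}
    \mathbb{E}\pp{X_i F\p{\s{X}}}
    = \sum_k A_{ik}\,\mathbb{E}\pp{Z_k G\p{\s{Z}}}
    = \sum_k A_{ik}\,\mathbb{E}\pp{\partial_k G\p{\s{Z}}}
    = \sum_k A_{ik}\sum_j A_{jk}\,\mathbb{E}\pp{\partial_j F\p{\s{X}}},
\end{align}
and then reorder the double sum so that the coefficient of $\mathbb{E}\pp{\partial_j F\p{\s{X}}}$ becomes $\sum_k A_{ik}A_{jk}=\p{AA^\top}_{ij}=\Sigma_{ij}=\mathbb{E}\p{X_iX_j}$, which is exactly the claimed identity. (For nondegenerate $\Sigma$ one may instead argue directly from $\nabla p=-\Sigma^{-1}x\,p$, but the factorization route is what makes the degenerate case automatic, which matters because the cyclo-stationary covariance of Section~\ref{sec:cylocStationaryProcess} is only guaranteed full rank under the nonvanishing-spectrum assumption.)

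The main obstacle is the regularity and integrability bookkeeping required to legitimately differentiate under the expectation and to discard the boundary flux in the integration by parts: one needs $F$ and $\nabla F$ to grow at most polynomially, so that $F\p{\s{z}}\phi\p{\s{z}}$ and $\partial_j F\p{\s{z}}\phi\p{\s{z}}$ are integrable and the boundary term vanishes at infinity. In the intended application $F$ is a product of bounded softmax responsibilities with at most polynomially growing factors, so these conditions are met; accordingly I would record them as a standing growth hypothesis on $F$ (consistent with the cited Stein's-method reference~\cite{ross2011fundamentals}) rather than re-derive the dominated-convergence estimates in full.
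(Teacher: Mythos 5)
Your proposal is correct. Note, however, that the paper does not actually prove this lemma: it states it as a known result (Gaussian integration by parts / multivariate Stein's lemma) and cites the Stein's-method reference \cite{ross2011fundamentals}, so there is no in-paper argument to compare against. Your reconstruction via the covariance factorization $\Sigma = AA^\top$, reduction to a standard Gaussian vector $\s{Z}$, the coordinatewise scalar identity $\mathbb{E}\pp{Z_k G(\s{Z})}=\mathbb{E}\pp{\partial_k G(\s{Z})}$, and the chain-rule bookkeeping $\sum_k A_{ik}A_{jk}=\Sigma_{ij}$ is a standard and fully valid derivation of exactly the stated identity. Two points in your write-up are genuine improvements over the bare statement: (i) the factorization route makes the identity automatic for degenerate (rank-deficient) covariances, which is relevant since the cyclo-stationary covariance in the paper's application is only guaranteed full rank under the nonvanishing-spectrum assumption; and (ii) you correctly observe that the lemma as stated, with $F$ merely $C^1$, is slightly too generous — one needs an integrability/growth hypothesis (e.g., polynomial growth of $F$ and $\nabla F$) to differentiate under the expectation and to kill the boundary terms, and this hypothesis is satisfied in the paper's application because the responsibilities are bounded softmax functions. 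So your proof is sound, and it supplies the justification the paper delegates to the literature.
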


\subsection{The convergence of the Einstein from Noise estimator} \label{sec:convergenceOfEfNestimator}
We now derive the asymptotic form of the estimator $\hat{x}$, as $n \to \infty$.
Applying the SLLN to \eqref{eqn:appx_E2}, we have,
\begin{align}
    \hat{x} = \frac{1}{n}\sum_{i=1}^{n} \sum_{\ell=0}^{d-1}{ \gamma_{i,\ell}  \mathcal{T}_{\ell}^{-1} \xi_i} \xrightarrow[]{\s{a.s.}} \mathbb{E} \pp{\sum_{\ell=0}^{d-1}{ \gamma_{\ell} \mathcal{T}_{\ell}^{-1} \xi}}, \label{eqn:app_E10}
\end{align}
as $n \to \infty$. 
We now present our first result.
\begin{proposition} [Convergence in real space]
\label{prop:realSpaceConvergence}
    Recall the definition of $\gamma_{\ell}$ in \eqref{eqn:appx_E1}. Then, as $n \to \infty$, 
    \begin{align}
        \hat{x} \xrightarrow[]{\s{a.s.}}  x - \sum_{r,\ell=0}^{d-1} \p{\mathcal{T}_{r-\ell} \cdot x} \mathbb{E} \pp{\gamma_{\ell} \gamma_{r}}. \label{eqn:realSpaceConvergence}
    \end{align}
\end{proposition}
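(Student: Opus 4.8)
The plan is to evaluate the limiting expectation in \eqref{eqn:app_E10}, namely $\hat{x}\xrightarrow[]{\s{a.s.}}\mathbb{E}\pp{\sum_{\ell=0}^{d-1}\gamma_\ell \mathcal{T}_\ell^{-1}\xi}$, coordinate by coordinate, converting each factor of $\xi$ into a derivative of the softmax weights via Gaussian integration by parts. Throughout, recall that under the Einstein from Noise model with $\sigma^2=1$ we have $\xi\sim\mathcal{N}(0,I_d)$, so $\mathbb{E}[\xi_i\xi_j]=\delta_{ij}$, and that $\gamma_\ell=\gamma_\ell(x,\xi)$ from \eqref{eqn:appx_E1} is a smooth bounded function of $\xi$ with $\sum_{\ell=0}^{d-1}\gamma_\ell=1$. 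First I would fix a coordinate $j$ and use the shift convention $[\mathcal{T}_\ell^{-1}z]_j=z_{(j+\ell)\bmod d}$ to write the $j$-th entry of the target as $\sum_{\ell}\mathbb{E}[\gamma_\ell\,\xi_{(j+\ell)}]$.

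Next I would apply Lemma~\ref{lemma:3} with $\s{X}=\xi$. Because the covariance is the identity, that identity collapses to $\mathbb{E}[\xi_i F(\xi)]=\mathbb{E}[\partial F/\partial \xi_i]$ for any $C^1$ function $F$ with suitable integrability. Applying this with $F=\gamma_\ell$ and $i=(j+\ell)\bmod d$, together with the standard softmax derivative
\begin{align}
    \frac{\partial \gamma_\ell}{\partial \xi_i}
    = \gamma_\ell\p{(\mathcal{T}_\ell x)_i - \sum_{r=0}^{d-1}\gamma_r (\mathcal{T}_r x)_i},
\end{align}
yields $\mathbb{E}[\gamma_\ell\,\xi_{(j+\ell)}] = \mathbb{E}[\gamma_\ell]\,(\mathcal{T}_\ell x)_{(j+\ell)} - \sum_r \mathbb{E}[\gamma_\ell\gamma_r]\,(\mathcal{T}_r x)_{(j+\ell)}$. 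The integrability needed to invoke Lemma~\ref{lemma:3} is immediate: $|\gamma_\ell|\le 1$ and the bracketed factor in the derivative is bounded by $2\max_m |x_m|$, so both $\xi_i\gamma_\ell$ and $\partial_i\gamma_\ell$ are dominated by integrable functions of $\xi$.

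The remaining work is index bookkeeping using the cyclic-shift convention $[\mathcal{T}_s x]_j = x_{(j-s)\bmod d}$. One checks $(\mathcal{T}_\ell x)_{(j+\ell)}=x_j$ and $(\mathcal{T}_r x)_{(j+\ell)}=x_{(j+\ell-r)}=(\mathcal{T}_{r-\ell}x)_j$. Substituting and summing over $\ell$, the first term becomes $x_j\sum_\ell \mathbb{E}[\gamma_\ell]=x_j$ (using $\sum_\ell\gamma_\ell=1$ and linearity of expectation), while the second term assembles into $\sum_{\ell,r}\mathbb{E}[\gamma_\ell\gamma_r]\,(\mathcal{T}_{r-\ell}x)_j$. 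Reading this back as a vector identity over all $j$ gives exactly \eqref{eqn:realSpaceConvergence}. The only genuinely delicate point is maintaining consistency between the two shift conventions and the modular arithmetic in the indices; the analytic content (Stein's identity plus the softmax derivative) is otherwise routine, and no estimates beyond the already-established almost-sure limit \eqref{eqn:app_E10} are required.
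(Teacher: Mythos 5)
Your proposal is correct and follows essentially the same route as the paper's proof: starting from the almost-sure limit \eqref{eqn:app_E10}, applying Gaussian integration by parts (Lemma~\ref{lemma:3}) with identity covariance, inserting the softmax derivative, and simplifying via $\sum_{\ell}\gamma_\ell=1$ and $\mathcal{T}_\ell^{-1}\mathcal{T}_r=\mathcal{T}_{r-\ell}$. The only difference is presentational: you carry out the computation coordinate-by-coordinate, while the paper works in vector form, and your explicit integrability check for invoking Stein's identity is a small (welcome) addition that the paper leaves implicit.
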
 

\begin{proof}[Proof of Proposition~\ref{prop:realSpaceConvergence}]
To prove Proposition~\ref{prop:realSpaceConvergence}, we begin by applying Lemma~\ref{lemma:3} to the right-hand side of \eqref{eqn:app_E10}, yielding
\begin{align}
    \mathbb{E}\pp{\xi \gamma_{\ell}} =  \sum_{j=0}^{d-1}\mathbb{E}\p{\xi_i\xi_j} \mathbb{E}\p{\frac{\partial \gamma_{\ell}}{\partial \xi_j}}. \label{eqn:K4}
\end{align}
Since the noise vector $\xi$ has identity covariance, we have $\mathbb{E}\pp{\xi_i \xi_j} = \delta_{ij}$, and thus
\begin{align}
    \mathbb{E}\pp{\xi \gamma_{\ell}} =   \mathbb{E}\pp{\frac{\partial \gamma_{\ell}}{\partial \xi}}. \label{eqn:K5}
\end{align}
A straightforward calculation shows that 
\begin{align}
    \frac{\partial \gamma_{\ell}}{\partial \xi} = \p{\mathcal{T}_\ell  x}  \gamma_{\ell} - \sum_{r=0}^{d-1} \p{\mathcal{T}_r  x}   \gamma_{\ell} \gamma_{r}. \label{eqn:K6}
\end{align}
Substituting \eqref{eqn:K5}-\eqref{eqn:K6} into \eqref{eqn:K4}, we obtain
\begin{align}
    \mathbb{E}\pp{\xi \gamma_{\ell}} = \p{\mathcal{T}_\ell  x}  \mathbb{E} \pp{\gamma_{\ell}} - \sum_{r=0}^{d-1} \p{\mathcal{T}_r  x} \mathbb{E}\pp{\gamma_{\ell} \gamma_{r}}. \label{eqn:K7}
\end{align}
Substituting this result into the right-hand side of \eqref{eqn:app_E10} gives
\begin{align}
     \mathbb{E} & \pp{\sum_{\ell=0}^{d-1}{ \gamma_{\ell} \p{\mathcal{T}_{\ell}^{-1} \xi}}}  \nonumber
     \\ & = \sum_{\ell=0}^{d-1} \mathcal{T}_\ell^{-1}  \pp{\p{\mathcal{T}_\ell  x} \mathbb{E} \pp{\gamma_\ell} } - \sum_{r,\ell=0}^{d-1} \mathcal{T}_\ell^{-1}  \pp{\p{\mathcal{T}_r x} \mathbb{E} \pp{\gamma_\ell \gamma_r}} \nonumber
     \\ & =  x - \sum_{r,\ell=0}^{d-1} \p{\mathcal{T}_{r-\ell} x} \mathbb{E} \pp{\gamma_{\ell} \gamma_{r}}, \label{eqn:app_E19}
\end{align}
where in the final step we use the identity $\sum_{\ell=0}^{d-1}\gamma_\ell = 1$, and $\mathcal{T}_\ell^{-1}  \p{\mathcal{T}_r  x} = \mathcal{T}_{r-\ell} x$. This completes the proof.

\end{proof}

\subsection{Fourier space convergence}
Following Proposition \ref{prop:realSpaceConvergence}, we have the following convergence in Fourier space.
\begin{proposition} [Convergence in Fourier space]
\label{prop:fourierSpaceConvergence}
    Recall the definition of $\gamma_{\ell}$ in \eqref{eqn:appx_E1}. Then, for every $k \in \pp{d}$,
    \begin{align}
        \hat{\s{X}}[k] \xrightarrow[]{\s{a.s.}}  \s{X}[k] \pp{1 - d \, \sum_{\ell=0}^{d-1} \cos \p{\frac{2\pi k \ell}{d}} \mathbb{E} \pp{\gamma_{0} \gamma_{\ell}}}, \label{eqn:fourierSpaceConvergence}
    \end{align}
    as $n \to \infty$.
\end{proposition}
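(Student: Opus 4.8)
The plan is to derive the Fourier-space statement directly from the real-space convergence already established in Proposition~\ref{prop:realSpaceConvergence}, by applying the DFT and then exploiting the cyclo-stationary symmetry of the responsibilities. First I would apply $F^\ast$ to both sides of \eqref{eqn:realSpaceConvergence}. The key elementary fact is the shift-to-phase identity of the DFT: under the conventions in \eqref{eqn:DFT-def} and the definition $[\mathcal{T}_m z]_j = z_{(j-m)\bmod d}$, a change of summation index gives $(F^\ast \mathcal{T}_m x)[k] = e^{-j 2\pi k m/d}\,\s{X}[k]$ for every integer $m$. Applying this with $m=r-\ell$ to the term $\mathcal{T}_{r-\ell}x$ turns \eqref{eqn:realSpaceConvergence} into
\begin{align}
    \hat{\s{X}}[k] = \s{X}[k]\left[\,1 - \sum_{r,\ell=0}^{d-1} \mathbb{E}\!\left[\gamma_\ell \gamma_r\right]\, e^{-j 2\pi k (r-\ell)/d}\right].
\end{align}

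Next I would collapse the double sum into a single sum using Lemma~\ref{lem:cycloStationary}. Since $\sigma^2=1$, the responsibilities satisfy $\gamma_\ell = \gamma_\ell^{(1)}(\s{S})$ with $\s{S}_r = \langle \xi,\mathcal{T}_r x\rangle$, and $\s{S}$ is a cyclo-stationary Gaussian vector (Section~\ref{sec:cylocStationaryProcess}). Hence \eqref{eqn:equalNumeratorCyclic}, applied with the shift $\tau=-\ell$, gives $\mathbb{E}[\gamma_\ell \gamma_r] = \mathbb{E}[\gamma_0 \gamma_{(r-\ell)\bmod d}]$, so the cross-moment depends only on the lag $m \triangleq (r-\ell)\bmod d$. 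Writing $c_m \triangleq \mathbb{E}[\gamma_0\gamma_m]$ and observing that, for each fixed $\ell$, the map $r \mapsto (r-\ell)\bmod d$ is a bijection of $\{0,\dots,d-1\}$ while $e^{-j 2\pi k(r-\ell)/d}$ is $d$-periodic in $r-\ell$, the inner sum over $r$ equals $\sum_{m=0}^{d-1} c_m\, e^{-j 2\pi k m/d}$ independently of $\ell$; summing over $\ell$ then contributes a factor $d$.

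Finally I would discard the imaginary part. By \eqref{eqn:equalNumeratorCyclic} (shift by $\tau=m$) together with the commutativity of the product, $c_{(d-m)\bmod d} = \mathbb{E}[\gamma_0\gamma_{-m}] = \mathbb{E}[\gamma_m\gamma_0] = c_m$, so $\{c_m\}$ is symmetric under $m\mapsto (d-m)\bmod d$. Since $\sin(2\pi k(d-m)/d) = -\sin(2\pi k m/d)$, the terms indexed by $m$ and $d-m$ cancel pairwise in $\sum_{m} c_m \sin(2\pi k m/d)$; the $m=0$ term vanishes, and for even $d$ the $m=d/2$ term vanishes as well because $\sin(\pi k)=0$. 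Consequently $\sum_{m=0}^{d-1} c_m\, e^{-j 2\pi k m/d} = \sum_{m=0}^{d-1} c_m \cos(2\pi k m/d)$, which after substitution yields exactly \eqref{eqn:fourierSpaceConvergence}.

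I expect the main obstacle to be bookkeeping rather than anything conceptual: making the reindexing $r\mapsto (r-\ell)\bmod d$ rigorous (invoking $d$-periodicity of the exponential to replace $r-\ell$ by its residue) and handling the parity subtlety in the cancellation — in particular the $m=d/2$ term for even $d$ — so that the imaginary part provably vanishes for every $k$. The one structural point to state carefully is that $\{\gamma_\ell\}$ genuinely inherits cyclo-stationarity from $\s{S}$, which is what legitimizes invoking Lemma~\ref{lem:cycloStationary}; this follows from the circulant covariance of $\s{S}$ exhibited in \eqref{eqn:appx_E5}, so no further nondegeneracy hypothesis is needed for this step.
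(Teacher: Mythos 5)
Your proposal is correct and follows essentially the same route as the paper's proof: apply the DFT to the real-space limit from Proposition~\ref{prop:realSpaceConvergence}, invoke the cyclo-stationarity of $\{\gamma_\ell\}$ (via Lemma~\ref{lem:cycloStationary}, legitimized by the circulant covariance of $\s{S}$) to collapse the double sum into $d\sum_{m}\mathbb{E}[\gamma_0\gamma_m]e^{\pm j2\pi km/d}$, and then kill the imaginary part using the lag symmetry $\mathbb{E}[\gamma_0\gamma_m]=\mathbb{E}[\gamma_0\gamma_{-m}]$ together with the oddness of sine. Your treatment is in fact slightly more careful than the paper's on the bookkeeping points (the explicit reindexing argument and the $m=0$, $m=d/2$ terms, which the paper drops implicitly since the sine vanishes there), and the sign of the exponent — which differs from the paper's display — is immaterial since only the cosine part survives.
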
 

\begin{proof}[Proof of Proposition~\ref{prop:fourierSpaceConvergence}]
Taking the Fourier transform of both sides of \eqref{eqn:realSpaceConvergence}, and applying the continuous mapping theorem, we obtain: 
\begin{align} 
    \hat{\s{X}}[k] \xrightarrow[]{\text{a.s.}} \s{X}[k] - \s{X}[k] \sum_{r,\ell=0}^{d-1} \exp\left( j \frac{2\pi k (r - \ell)}{d} \right) \mathbb{E} \left[ \gamma_r \gamma_\ell \right]. \label{eqn:app_E38} 
\end{align}
We now apply Lemma~\ref{lem:cycloStationary}. Recall the definition of the Gaussian vector $\s{S}$ in \eqref{eqn:appx_E4}. By this definition, the softmax weights are given by
\begin{align}
    \gamma_{\ell}\triangleq \frac{\exp \p{\s{S}_{\ell}}}{\sum_{r=0}^{d-1} \exp \p{\s{S}_{r}}}. 
\end{align}
This is equivalent to the softmax function $\gamma_\ell^{(\beta=1)} \p{\s{S}}$ defined in \eqref{eqn:softMaxFuncDef1}, with $\beta = 1$. In particular, $\s{S}$ is cyclo-stationary Gaussian vector, thus is satisfies the conditions of \ref{lem:cycloStationary} (Appendix \ref{sec:cylocStationaryProcess}). Therefore, according to Lemma~\ref{lem:cycloStationary}, we have the identity 
\begin{align} 
    \mathbb{E} \left[ \gamma_r \gamma_\ell \right] = \mathbb{E} \left[ \gamma_{r+\tau} \gamma_{\ell+\tau} \right], 
\end{align} 
for all $\ell, r, \tau \in \ppp{0, \dots, d-1}$ (with indices taken modulo $d$). Using this shift-invariance, the double sum in \eqref{eqn:app_E38} simplifies to,
\begin{align} 
    \sum_{r,\ell=0}^{d-1} \exp\left( j \frac{2\pi k (r - \ell)}{d} \right) \mathbb{E} \left[ \gamma_r \gamma_\ell \right] = d  \sum_{\tau=0}^{d-1} \exp\left( j \frac{2\pi k \tau}{d} \right) \mathbb{E} \left[ \gamma_0  \gamma_\tau \right]. \label{eqn:app_E40}
\end{align}

\paragraph{The imaginary part vanishes.}
We now show that the imaginary part of the expression in \eqref{eqn:app_E40} vanishes, 
\begin{align}
    \sum_{\tau=0}^{d-1} \sin \p{\frac{2\pi k \tau}{d}} \mathbb{E} \pp{\gamma_{0} \gamma_{\tau}} = 0. \label{eqn:app_E41}
\end{align}
From Lemma~\ref{lem:cycloStationary}, we have the following symmetry property,
\begin{align}
    \mathbb{E} \pp{\gamma_{0} \gamma_{\tau}} = \mathbb{E} \pp{\gamma_{0} \gamma_{-\tau}},  \label{eqn:app_E42}
\end{align}
for every $\tau \in \pp{d}$. Using the odd symmetry of the sine function, we compute,
\begin{align}
    \sum_{\tau=0}^{d-1} \sin & \p{\frac{2\pi k \tau}{d}} \mathbb{E} \pp{\gamma_{0} \gamma_{\tau}} \label{eqn:app_E43}
    \\ & = \sum_{\tau=1}^{d/2-1} \sin \p{\frac{2\pi k \tau}{d}} \mathbb{E} \pp{\gamma_{0} \gamma_{\tau}} + \sum_{\tau=d/2+1}^{d-1} \sin \p{\frac{2\pi k \tau}{d}} \mathbb{E} \pp{\gamma_{0} \gamma_{\tau}} \label{eqn:app_E44}
    \\ & = \sum_{\tau=1}^{d/2-1} \sin \p{\frac{2\pi k \tau}{d}} \mathbb{E} \pp{\gamma_{0}  \gamma_{\tau}} - \sum_{\tau=1}^{d/2-1} \sin \p{\frac{2\pi k \tau}{d}} \mathbb{E} \pp{\gamma_{0} \gamma_{-\tau}} \label{eqn:app_E45}
    \\ & = \sum_{\tau=1}^{d/2-1} \sin \p{\frac{2\pi k \tau}{d}} \mathbb{E} \pp{\gamma_{0}  \gamma_{\tau}} - \sum_{\tau=1}^{d/2-1} \sin \p{\frac{2\pi k \tau}{d}} \mathbb{E} \pp{\gamma_{0} \gamma_{\tau}}  \label{eqn:app_E46}
    \\ & = 0. \label{eqn:app_E47}
\end{align}
where the second equality is because the sine function is odd, and the third equality follows from the symmetry in \eqref{eqn:app_E42}. Therefore, substituting \eqref{eqn:app_E40} and \eqref{eqn:app_E41} into \eqref{eqn:app_E38} completes the proof.

\end{proof}

\subsection{Positive definiteness of the responsibility covariance matrix}

\begin{lem}[Positive definiteness of the responsibility covariance matrix]
\label{lem:gamma-cov-PD}
Fix $d \ge 2$ and let $x\in\mathbb{R}^d$ be a nonzero signal with DFT coefficients $\s{X}[k]$. 
Assume that $\s{X}[k] \neq 0$ for all $k = 0,\dots,d-1$. 
Recall the definition $\gamma_{\ell} \triangleq \gamma_\ell(x,\xi)$ from~\eqref{eqn:appx_E1}, and define the matrix $C(x) \in \mathbb{R}^{d \times d}$ by
\begin{align}
    [C(x)]_{\ell_1, \ell_2} 
     \triangleq  
    \mathbb{E}\pp{ \gamma_{\ell_1}(x,\xi)\,\gamma_{\ell_2}(x,\xi) },
    \qquad \ell_1, \ell_2 \in \{0,1,\ldots,d-1\}.
    \label{eq:def-Cx}
\end{align}
Then $C(x)$ is symmetric positive definite; that is,
\begin{align}
    v^\top C(x)\, v > 0, 
    \qquad \text{for all } v\in\mathbb{R}^d\setminus\{0\}.
\end{align}
Moreover, $C(x)$ is circulant with strictly positive eigenvalues, and in particular
\begin{align}
    \sum_{\tau=0}^{d-1} 
    \cos \left(\frac{2\pi k \tau}{d} \right) 
    \,\mathbb{E}\pp{ \gamma_0 \gamma_\tau } 
     >  0,
    \qquad k=0,\dots,d-1.
    \label{eqn:resposnibilities-covaraince-matrix-positive-eigenvalues}
\end{align}
\end{lem}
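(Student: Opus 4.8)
The plan is to recognize $C(x)$ as a Gram (second-moment) matrix of the responsibility vector and then to separate the two assertions---positive definiteness and the circulant eigenvalue formula---by independent arguments. Symmetry of $C(x)$ is immediate since $\mathbb{E}[\gamma_{\ell_1}\gamma_{\ell_2}]$ is symmetric in $(\ell_1,\ell_2)$. For positive semidefiniteness, for any $v\in\mathbb{R}^d$ I would introduce $G_v\triangleq\sum_{\ell=0}^{d-1} v_\ell\,\gamma_\ell(x,\xi)$ and write
\begin{align}
    v^\top C(x)\,v
    = \sum_{\ell_1,\ell_2=0}^{d-1} v_{\ell_1}v_{\ell_2}\,\mathbb{E}\pp{\gamma_{\ell_1}\gamma_{\ell_2}}
    = \mathbb{E}\pp{G_v^2} \ge 0 .
\end{align}
Thus the entire difficulty is concentrated in upgrading this to strict positivity, i.e.\ in showing that $\mathbb{E}[G_v^2]=0$ forces $v=0$.

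For the strict inequality I would argue that the responsibilities are linearly independent as functions of the latent Gaussian scores. Recall from \eqref{eqn:appx_E4}--\eqref{eqn:appx_E5} that $\gamma_\ell$ depends on $\xi$ only through the cyclo-stationary Gaussian vector $\s{S}$, with $\gamma_\ell = e^{\s{S}_\ell}/\sum_{r} e^{\s{S}_r}$, and that the covariance of $\s{S}$ is circulant with eigenvalues $\{|\s{X}[k]|^2\}_{k}$. Under the hypothesis $\s{X}[k]\neq 0$ for all $k$, this covariance is full rank, so $\s{S}$ has a strictly positive density on all of $\mathbb{R}^d$. Now $\mathbb{E}[G_v^2]=0$ means $G_v=0$ almost surely; since $s\mapsto\sum_\ell v_\ell\,e^{s_\ell}/\sum_r e^{s_r}$ is continuous and the law of $\s{S}$ has full support, almost-sure vanishing upgrades to vanishing at every $s\in\mathbb{R}^d$. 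Clearing the strictly positive common denominator $\sum_r e^{s_r}$ yields $\sum_\ell v_\ell e^{s_\ell}=0$ for all $s$; under the change of variables $u_\ell=e^{s_\ell}$, which is a bijection onto $(0,\infty)^d$, this is a linear form in $u$ that vanishes on a nonempty open set, whence $v=0$. This establishes $C(x)\succ 0$.

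It remains to identify the circulant structure and the eigenvalues. By Lemma~\ref{lem:cycloStationary}, equation~\eqref{eqn:equalNumeratorCyclic}, we have $\mathbb{E}[\gamma_{\ell_1}\gamma_{\ell_2}]=\mathbb{E}[\gamma_{\ell_1+\tau}\gamma_{\ell_2+\tau}]$ for every $\tau$, so each entry depends only on $(\ell_1-\ell_2)\bmod d$ and $C(x)$ is circulant. Being real, symmetric, and circulant, $C(x)$ is diagonalized by the DFT basis with real eigenvalues
\begin{align}
    \lambda_k = \sum_{\tau=0}^{d-1}\cos\!\p{\frac{2\pi k\tau}{d}}\,\mathbb{E}\pp{\gamma_0\gamma_\tau},
    \qquad k=0,\dots,d-1,
\end{align}
where the sine contribution cancels by the symmetry $\mathbb{E}[\gamma_0\gamma_\tau]=\mathbb{E}[\gamma_0\gamma_{-\tau}]$, exactly as in the proof of Proposition~\ref{prop:fourierSpaceConvergence}. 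Since positive definiteness of a circulant matrix is equivalent to strict positivity of all its eigenvalues, the bound $C(x)\succ 0$ proved above yields $\lambda_k>0$, which is precisely \eqref{eqn:resposnibilities-covaraince-matrix-positive-eigenvalues}.

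I expect the strict positivity step to be the crux. The two delicate points are the passage from almost-sure to pointwise vanishing, which is where the full-support property---and hence the nonvanishing-spectrum assumption $\s{X}[k]\neq 0$---is essential, and the reduction of softmax linear dependence to exponential linear dependence by clearing the denominator. If any Fourier coefficient vanished, $\s{S}$ would concentrate on a proper subspace, the support argument would break, and the conclusion itself could fail.
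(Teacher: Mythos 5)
Your proof is correct and follows essentially the same route as the paper's: the Gram-matrix identity $v^\top C(x)\,v=\mathbb{E}\left[(v^\top\gamma)^2\right]$, full support of the Gaussian score vector guaranteed by the nonvanishing-spectrum assumption, and the circulant/DFT diagonalization converting positive definiteness into strictly positive cosine eigenvalues. The only difference is in the final linear-independence step, where the paper argues that the softmax image has nonempty interior in the probability simplex, whereas you clear the strictly positive denominator and note that a linear form in $(e^{s_0},\dots,e^{s_{d-1}})$ vanishing on all of $(0,\infty)^d$ forces $v=0$ — a slightly more explicit rendering of the same idea.
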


\begin{proof}[Proof of Lemma~\ref{lem:gamma-cov-PD}]
Define the random vector $\gamma(x,\xi)\in\mathbb{R}^d$ as the concatenation of the responsibilities:
\begin{align}
    \gamma(x,\xi) 
     \triangleq  (\gamma_0, \gamma_1 , \ldots, \gamma_{d-1}).
    \label{eqn:gamma-conc}
\end{align}
By construction,
\begin{align}
    v^\top C(x)\, v
     =  
    \mathbb{E}\pp{(v^\top \gamma(x,\xi))^2}  
     \ge  0,
\end{align}
so $C(x)$ is always symmetric positive semidefinite. To establish positive definiteness, it suffices to show that
\begin{align}
    v^\top C(x)\, v = 0 
    \quad \Longrightarrow \quad v=0.
\end{align}

Suppose $v^\top C(x)\,v = 0$ for some $v\in\mathbb{R}^d$. Then
\begin{align}
    0 
     =  v^\top C(x)\, v
     =  \mathbb{E}\pp{(v^\top \gamma(x,\xi))^2}
\end{align}
implies that $v^\top \gamma(x,\xi) = 0$ almost surely in $\xi$.

\paragraph{Positive definiteness of $C(x)$.}
Define the score vector $z(\xi)\in\mathbb{R}^d$ by
\begin{align}
    z_\ell(\xi) \triangleq \,\xi^\top \mathcal{T}_\ell x, \qquad \ell=0,\dots,d-1,
    \label{eq:def-z}
\end{align}
so that $z(\xi)$ is a linear transform of $\xi$:
\begin{align}
    z(\xi) = L\,\xi,
\end{align}
where the $\ell$-th row of $L$ is $(\mathcal{T}_\ell x)^\top$. The columns of $L^\top$ are the shifted copies $\mathcal{T}_\ell x$, so $L$ is precisely the circulant matrix generated by $x$.  
By our assumption that $\s{X}[k]\neq 0$ for all $k$, this circulant matrix is invertible (see Section~\ref{sec:cylocStationaryProcess}). Consequently,
\begin{align}
    z(\xi) = L\xi \sim \mathcal{N}(0,\Sigma_z),
    \qquad \Sigma_z = L L^\top \succ 0,
\end{align}
and $z$ has a Gaussian density that is strictly positive everywhere on $\mathbb{R}^d$. In particular, the support of $z$ is all of $\mathbb{R}^d$.

By definition~\eqref{eqn:appx_E1}, each $\gamma_\ell(x,\xi)$ is a smooth function of the scores $\{z_\ell(\xi)\}_{\ell=0}^{d-1}$ (via a softmax transformation), and for every realization of $z$ we have $\gamma_\ell(x,\xi) > 0$ and $\sum_{\ell=0}^{d-1} \gamma_\ell(x,\xi) = 1$.
As $z$ ranges over $\mathbb{R}^d$, the vector $\gamma(x,\xi)$ visits a subset of the probability simplex with nonempty interior. Therefore, if a linear form $v^\top \gamma(x,\xi)$ vanishes almost surely, it must vanish on a set with nonempty interior in that simplex, which forces $v=0$. Hence $v^\top C(x)\,v>0$ for every nonzero $v$, so $C(x)$ is positive definite.

\paragraph{Circulant matrix with positive eigenvalues.}
Finally, by the shift-invariance of the construction in $\ell$ (see Section~\ref{sec:cylocStationaryProcess}), the process $\{\gamma_\ell\}_{\ell=0}^{d-1}$ is stationary, and thus $C(x)$ is circulant with entries
\begin{align}
    [C(x)]_{\ell_1,\ell_2}
    = \mathbb{E}[\gamma_{\ell_1}\gamma_{\ell_2}]
    = \mathbb{E}[\gamma_0 \gamma_{\ell_2-\ell_1}],
\end{align}
where indices are taken modulo $d$. The eigenvalues of a real symmetric circulant matrix are given by
\begin{align}
    \lambda_k = \sum_{\tau=0}^{d-1} \cos \p{\frac{2\pi k \tau}{d}}\,\mathbb{E}[\gamma_0 \gamma_\tau],
    \qquad k=0,\dots,d-1.
\end{align}
Since $C(x)\succ 0$, all $\lambda_k$ are strictly positive, yielding
\begin{align}
    \sum_{\tau=0}^{d-1} 
    \cos \left(\frac{2\pi k \tau}{d}\right)\mathbb{E}[\gamma_0 \gamma_\tau] > 0,
    \qquad k=0,\dots,d-1,
\end{align}
which is exactly~\eqref{eqn:resposnibilities-covaraince-matrix-positive-eigenvalues}.
\end{proof}

\subsection{Convergence of the Fourier phases} 

\begin{proposition}\label{prop:2} 
Let $d \geq 2$, and suppose that $\s{X}[k] \neq 0$ for all $0 < k \leq d - 1$. Then, for every $0 < k \leq d - 1$, we have 
\begin{align} 
    \s{\hat{X}}[k] \xrightarrow[]{\s{a.s.}} \alpha_k \s{X}[k], 
\end{align} 
where $\alpha_k \in (0,1)$ is a real constant. 
\end{proposition}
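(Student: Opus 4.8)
The plan is to read off $\alpha_k$ directly from the Fourier-space limit already established in Proposition~\ref{prop:fourierSpaceConvergence}, and then certify that this constant lies in $(0,1)$ by combining the positive-definiteness of the responsibility covariance matrix from Lemma~\ref{lem:gamma-cov-PD} with a short strict-contraction estimate. First I would invoke Proposition~\ref{prop:fourierSpaceConvergence}, which yields, for every $0<k\le d-1$,
\[
    \s{\hat{X}}[k] \xrightarrow[]{\s{a.s.}} \s{X}[k]\pp{1 - d\sum_{\ell=0}^{d-1}\cos\p{\frac{2\pi k\ell}{d}}\mathbb{E}\pp{\gamma_0\gamma_\ell}}.
\]
Defining $\alpha_k \triangleq 1 - d\sum_{\ell=0}^{d-1}\cos\p{\frac{2\pi k\ell}{d}}\mathbb{E}\pp{\gamma_0\gamma_\ell}$, this quantity is manifestly a real number (it is a finite real-weighted sum of real moments), so the convergence $\s{\hat{X}}[k]\to\alpha_k\s{X}[k]$ is immediate. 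It then remains only to locate $\alpha_k$ in the open interval $(0,1)$.

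The key observation is that $\lambda_k \triangleq \sum_{\ell=0}^{d-1}\cos\p{\frac{2\pi k\ell}{d}}\mathbb{E}\pp{\gamma_0\gamma_\ell}$ is exactly the $k$-th eigenvalue of the circulant responsibility covariance matrix $C(x)$ from Lemma~\ref{lem:gamma-cov-PD}, so that $\alpha_k = 1 - d\lambda_k$. The upper bound $\alpha_k<1$ follows at once from the strict positivity $\lambda_k>0$ recorded in~\eqref{eqn:resposnibilities-covaraince-matrix-positive-eigenvalues}. For the lower bound $\alpha_k>0$, I would use the quadratic-form representation of the eigenvalue,
\[
    \lambda_k = \frac{1}{d}\,\mathbb{E}\pp{\abs{\sum_{\ell=0}^{d-1}\gamma_\ell\, e^{j\frac{2\pi k\ell}{d}}}^2},
\]
which is obtained by evaluating $C(x)=\mathbb{E}[\gamma\gamma^\top]$ against the unit Fourier vector $f_k$ and collecting the $d$ equal diagonal contributions using the stationarity of $\{\gamma_\ell\}$ (Lemma~\ref{lem:cycloStationary}). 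Since $\{\gamma_\ell\}$ is a probability vector, the triangle inequality gives $\abs{\sum_\ell\gamma_\ell e^{j2\pi k\ell/d}}\le \sum_\ell\gamma_\ell = 1$; taking expectations yields $\lambda_k\le 1/d$, hence $\alpha_k\ge 0$. Upgrading this to the strict inequality $\alpha_k>0$ requires strictness of the triangle bound almost surely.

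The only genuinely new ingredient, and the main point to handle carefully, is therefore this strict contraction $\lambda_k<1/d$. The triangle inequality is an equality precisely when all phases $e^{j2\pi k\ell/d}$ with $\gamma_\ell>0$ coincide. By Lemma~\ref{lem:positivityOfSoftMax}, every responsibility satisfies $\gamma_\ell>0$ almost surely, so the convex combination is never concentrated; and for $k\neq 0$ the unit-modulus numbers $\{e^{j2\pi k\ell/d}\}_{\ell=0}^{d-1}$ are not all equal (already $e^{j2\pi k/d}\neq 1$ for $1\le k\le d-1$). Consequently $\abs{\sum_\ell\gamma_\ell e^{j2\pi k\ell/d}}<1$ on a set of full measure, which forces $\mathbb{E}\pp{\abs{\sum_\ell\gamma_\ell e^{j2\pi k\ell/d}}^2}<1$ and hence $\lambda_k<1/d$. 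I would make sure to state this strictly-positive-density argument explicitly, since it is exactly what rules out the boundary values $\alpha_k\in\{0,1\}$ and closes the proof.
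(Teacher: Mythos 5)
Your proof is correct, and its overall skeleton matches the paper's: both identify $\alpha_k$ from Proposition~\ref{prop:fourierSpaceConvergence} and both obtain $\alpha_k<1$ from the strict positivity of the circulant eigenvalues of $C(x)$ in Lemma~\ref{lem:gamma-cov-PD}, i.e., inequality~\eqref{eqn:resposnibilities-covaraince-matrix-positive-eigenvalues}. Where you genuinely diverge is the lower bound $\alpha_k>0$. The paper works at the level of expectations: it invokes Lemma~\ref{lem:positivityOfSoftMax} to get $\mathbb{E}[\gamma_0\gamma_\tau]>0$ for every $\tau$, and then bounds
\begin{align}
    \Bigl|\sum_{\tau=0}^{d-1}\cos\Bigl(\tfrac{2\pi k\tau}{d}\Bigr)\,\mathbb{E}[\gamma_0\gamma_\tau]\Bigr|
    \;<\;\sum_{\tau=0}^{d-1}\mathbb{E}[\gamma_0\gamma_\tau]=\mathbb{E}[\gamma_0]=\tfrac{1}{d},
\end{align}
where strictness comes from the positivity of all the weights together with the cosines not all sharing a common sign of modulus one. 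You instead pass through the quadratic-form identity $\lambda_k=f_k^\ast C(x)f_k=\tfrac{1}{d}\,\mathbb{E}\bigl|\sum_\ell\gamma_\ell e^{j2\pi k\ell/d}\bigr|^2$ (valid since $C(x)$ is circulant by the stationarity in Lemma~\ref{lem:cycloStationary}) and argue strictness \emph{pointwise}: for every noise realization the softmax weights are strictly positive and the unit-modulus phases are not all equal for $k\neq 0$, so the modulus is strictly below one everywhere, and taking expectations gives $\lambda_k<1/d$. Both arguments are valid; yours has the advantage that the strict inequality is certified realization-by-realization (avoiding the mild case analysis on which cosines attain $\pm1$, e.g.\ when $d=2$), while the paper's is shorter because it never needs the quadratic-form representation. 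One small note: the identity $f_k^\ast C(x)f_k=\tfrac{1}{d}\,\mathbb{E}\bigl|\sum_\ell\gamma_\ell e^{j2\pi k\ell/d}\bigr|^2$ is a direct computation requiring no stationarity; Lemma~\ref{lem:cycloStationary} is needed only to conclude that this quadratic form equals the eigenvalue $\lambda_k$ (i.e., that $f_k$ is an eigenvector of $C(x)$), so you should attribute it to that step rather than to the "collection of diagonal contributions."
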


This result implies that the Fourier phases of $\s{\hat{X}}[k]$ converge to those of $\s{X}[k]$, since $\alpha_k$ is strictly positive and real.
\begin{proof}[Proof of Proposition~\ref{prop:2}] 
By \eqref{eqn:fourierSpaceConvergence}, we have,
\begin{align} 
    \hat{\s{X}}[k] \xrightarrow[]{\text{a.s.}} \s{X}[k] \pp{1 - d \, \sum_{\tau=0}^{d-1} \cos\left(\frac{2\pi k \tau}{d} \right) \mathbb{E} \left[ \gamma_0 \gamma_\tau \right]}. \label{eqn:app_G2} 
\end{align}
We now apply Lemma~\ref{lem:positivityOfSoftMax}. Recall the definition of the Gaussian vector $\s{S}$ in \eqref{eqn:appx_E4}. By this definition, the softmax weights are given by,
\begin{align}
    \gamma_{\ell}\triangleq \frac{\exp \p{\s{S}_{\ell}}}{\sum_{r=0}^{d-1} \exp \p{\s{S}_{r}}}. \label{eqn:appx_E57}
\end{align}
This is equivalent to the softmax function $\gamma_\ell^{(\beta=1)} \p{\s{S}}$ defined in \eqref{eqn:softMaxFuncDef1}, with $\beta = 1$. 
Then, by Lemma \ref{lem:positivityOfSoftMax}, $\mathbb{E} \pp{\gamma_0 \gamma_\tau} > 0$, for every $\tau \in \pp{d}$. This implies that,
\begin{align} 
    \left| \sum_{\tau=0}^{d-1} \cos \left( j \frac{2\pi k \tau}{d} \right) \mathbb{E} \left[ \gamma_{0} \gamma_{\tau} \right] \right| < \sum_{\tau=0}^{d-1} \mathbb{E} \left[ \gamma_{0}  \gamma_{\tau} \right] = \mathbb{E}[\gamma_0] = \frac{1}{d}. 
\end{align} 
In particular, we have,
\begin{align} 
    1 - d  \, \sum_{\ell=0}^{d-1} \cos \left( \frac{2\pi k \ell}{d} \right) \mathbb{E} \left[ \gamma_0 \gamma_\ell \right] > 0. \label{eqn:app_E7} 
\end{align} 
Combining the convergence in~\eqref{eqn:app_G2} with the positivity in~\eqref{eqn:app_E7}, we conclude that for every $k \in \pp{d}$, 
\begin{align} 
    \hat{\s{X}}[k] \xrightarrow[]{\s{a.s.}} \alpha_k \s{X}[k], 
\end{align} 
where 
\begin{align}
    \alpha_k = 1 - d \, \sum_{\ell=0}^{d-1} \cos \left( \frac{2\pi k \ell}{d} \right) \mathbb{E} \left[ \gamma_0 \gamma_\ell \right] > 0, \label{eqn:app-A65}
\end{align}
is a real constant. 

Next, to show that $\alpha_k < 1$, we invoke Lemma~\ref{lem:gamma-cov-PD} and~\eqref{eqn:resposnibilities-covaraince-matrix-positive-eigenvalues}
\begin{align}
    \sum_{\tau=0}^{d-1} \cos\left(\frac{2\pi k \tau}{d} \right) \mathbb{E} \left[ \gamma_0 \gamma_\tau \right] > 0. \label{eqn:app-A66}
\end{align}
Combining~\eqref{eqn:app-A66} with~\eqref{eqn:app-A65} shows that $
\alpha_k < 1$. This completes the proof and establishes the convergence of the Fourier phases. 
\end{proof}

\subsection{Positive correlation} 
\begin{lem} \label{lem:positiveCorrelation}
Let $d \in \mathbb{N}$, and suppose that $\s{X}[k] \neq 0$ for all $0 < k \leq d - 1$. Then,  
\begin{align} 
   \langle \hat{x}, x \rangle > 0, 
\end{align} 
almost surely, as $n \to \infty$. 
\end{lem}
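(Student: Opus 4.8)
The plan is to pass to the Fourier domain and reduce the statement to the coordinatewise almost-sure convergence already established in Proposition~\ref{prop:2} (equivalently Proposition~\ref{prop:fourierSpaceConvergence}). Since the unitary DFT matrix $F$ preserves inner products, Parseval's identity gives
$$
\langle \hat{x}, x\rangle
= \langle \s{\hat{X}}, \s{X}\rangle
= \sum_{k=0}^{d-1}\overline{\s{\hat{X}}[k]}\,\s{X}[k],
$$
so it suffices to control the finitely many Fourier coordinates and let $n\to\infty$.

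First I would record the limit of each coordinate. By Proposition~\ref{prop:fourierSpaceConvergence}, for every $k\in\pp{d}$ one has $\s{\hat{X}}[k]\xrightarrow[]{\s{a.s.}}\alpha_k\,\s{X}[k]$ with the real scalar $\alpha_k = 1 - d\sum_{\ell=0}^{d-1}\cos(2\pi k\ell/d)\,\mathbb{E}[\gamma_0\gamma_\ell]$. Intersecting the $d$ probability-one events yields joint almost-sure convergence of the whole vector $\s{\hat{X}}$; since the bilinear map $(\s{\hat{X}},\s{X})\mapsto \sum_{k}\overline{\s{\hat{X}}[k]}\s{X}[k]$ is continuous and each $\alpha_k$ is real, the continuous mapping theorem gives
$$
\langle \hat{x}, x\rangle
\xrightarrow[]{\s{a.s.}}
\sum_{k=0}^{d-1}\alpha_k\,|\s{X}[k]|^2 .
$$

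Next I would evaluate the two types of terms. For the mean frequency $k=0$, using $\cos(0)=1$ together with $\sum_{\ell}\gamma_\ell = 1$ gives $\sum_{\ell}\mathbb{E}[\gamma_0\gamma_\ell]=\mathbb{E}[\gamma_0]=1/d$, hence $\alpha_0 = 1 - d\cdot(1/d) = 0$; so the $k=0$ summand contributes exactly $0$. For every $k$ with $0<k\le d-1$, Proposition~\ref{prop:2} guarantees $\alpha_k\in(0,1)$, while by hypothesis $\s{X}[k]\neq 0$, so $\alpha_k|\s{X}[k]|^2>0$. Since $d\ge 2$, at least one nonzero frequency is present, and summing the strictly positive contributions yields $\sum_{k=0}^{d-1}\alpha_k|\s{X}[k]|^2 = \sum_{k=1}^{d-1}\alpha_k|\s{X}[k]|^2 > 0$, which proves $\langle\hat{x},x\rangle>0$ almost surely in the limit.

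The argument is essentially a bookkeeping consequence of the spectral results in this section, so I do not expect a genuine analytic obstacle. The only point requiring care is the treatment of the mean component: one must verify that the vanishing eigenvalue $\alpha_0=0$ simply removes the $k=0$ term rather than entering with a negative sign. This has the pleasant effect that no assumption on $\s{X}[0]$ is needed, and strict positivity of the correlation is inherited directly from the nonzero modes $\{\s{X}[k]\}_{k=1}^{d-1}$.
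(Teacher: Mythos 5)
Your proof is correct and follows essentially the same route as the paper's: Parseval's identity, the coordinatewise almost-sure convergence $\s{\hat{X}}[k]\xrightarrow[]{\s{a.s.}}\alpha_k\,\s{X}[k]$ from Proposition~\ref{prop:2}, the continuous mapping theorem, and strict positivity of the nonzero-frequency contributions $\alpha_k|\s{X}[k]|^2$. If anything, your explicit handling of the mean frequency (computing $\alpha_0=0$ so that the $k=0$ term drops out and no assumption on $\s{X}[0]$ is needed) is more careful than the paper's proof, which sums over all $k\in\pp{d}$ while citing a proposition that only covers $k\neq 0$.
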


\begin{proof}[Proof of Proposition~\ref{lem:positiveCorrelation}] 
By the definition of the inner product and the unitarity of the discrete Fourier transform, we have
    \begin{align}
        \langle \hat{x}, x \rangle = \langle \mathcal{F} \ppp{\hat{x}}, \mathcal{F} \ppp{x} \rangle = \sum_{k=0}^{d-1} \s{\hat{X}}[k] \overline{\s{{X}}[k]}.
        \label{eqn:app_E64}
    \end{align}
From Proposition~\ref{prop:2}, we know that $\s{\hat{X}}[k] \xrightarrow[]{\s{a.s.}} \alpha_k \s{X}[k]$, where each $\alpha_k > 0$ is a real constant. Applying the continuous mapping theorem, it follows that
\begin{align}
    \sum_{k=0}^{d-1} \s{\hat{X}}[k] \overline{\s{{X}}[k]} \xrightarrow[]{\s{a.s.}} \sum_{k=0}^{d-1} \alpha_k | \s{X}[k]|^2 > 0. \label{eqn:app_E65}
\end{align}
Combining \eqref{eqn:app_E64}, and \eqref{eqn:app_E65} completes the proof.
\end{proof}

\subsection{Small-signal expansion}

\begin{lem}[Small-signal expansion of $\alpha_k$]
\label{prop:alpha-k-small-signal}
Recall the definition of $\alpha_k$ from~\eqref{eqn:app-A65}
\begin{align}
    \alpha_k = 1 - d \, \sum_{\ell=0}^{d-1} \cos \left( \frac{2\pi k \ell}{d} \right) \mathbb{E} \left[ \gamma_0 \gamma_\ell \right].
\end{align}
Let $x=\beta v$ with $\beta\to 0$ and $\xi\sim\mathcal{N}(0,I_d)$, so that
\begin{align}
    \gamma_\ell \triangleq \gamma_\ell^{(\beta)}(\xi) =\frac{\exp(\beta\langle \xi,\mathcal{T}_\ell v\rangle)} {\sum_{r=0}^{d-1}\exp(\beta\langle \xi,\mathcal{T}_r v\rangle)} .
\end{align}
Then for every $k \neq 0$, as $\beta \to 0$,
\begin{align}
    \alpha_k = 1-|\s{X}[k]|^2 + O (\beta^4).
    \label{eq:alpha-k-X-expansion-unitary}
\end{align}
\end{lem}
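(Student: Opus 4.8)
The plan is to expand the contraction coefficient from \eqref{eqn:app-A65}, namely $\alpha_k = 1 - d\sum_{\ell=0}^{d-1}\cos(2\pi k\ell/d)\,\mathbb{E}[\gamma_0\gamma_\ell]$, to second order in $\beta$. Writing $x=\beta v$ gives $\s{X}[k]=\beta\s{V}[k]$, so $|\s{X}[k]|^2=\beta^2|\s{V}[k]|^2$, and it suffices to prove $d\sum_\ell\cos(2\pi k\ell/d)\,\mathbb{E}[\gamma_0\gamma_\ell]=\beta^2|\s{V}[k]|^2+O(\beta^4)$ for $k\neq0$. At $\beta=0$ the responsibilities are uniform, $\gamma_\ell\equiv 1/d$, so $\mathbb{E}[\gamma_0\gamma_\ell]=1/d^2$; since $\sum_\ell\cos(2\pi k\ell/d)=0$ for $k\not\equiv 0\pmod d$ (orthogonality of characters), the $O(1)$ contribution cancels and I only need the $O(\beta^2)$ coefficient, together with a guarantee that the remainder is $O(\beta^4)$ rather than merely $O(\beta^3)$.

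First I would establish that $\beta\mapsto g_\ell(\beta)\triangleq\mathbb{E}\big[\gamma_0^{(\beta)}(\xi)\,\gamma_\ell^{(\beta)}(\xi)\big]$ is an \emph{even} function of $\beta$. This follows from the identity $\gamma_\ell^{(\beta)}(\xi)=\gamma_\ell^{(-\beta)}(-\xi)$ — the logits $s_\ell=\beta\langle\xi,\mathcal{T}_\ell v\rangle$ are invariant under $(\beta,\xi)\mapsto(-\beta,-\xi)$ — combined with $\xi\stackrel{\mathcal{D}}{=}-\xi$. Smoothness of $g_\ell$ with integrable $\beta$-derivatives (justified by dominated convergence using Gaussian moments, exactly as in the proof of Lemma~\ref{lem:Jacobian-M}) then forces the Taylor expansion of $g_\ell$ at $0$ to contain only even powers; in particular the $O(\beta^3)$ term vanishes, so $g_\ell(\beta)=\tfrac1{d^2}+c_{0\ell}\,\beta^2+O(\beta^4)$. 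This parity/regularity step is the crux of the argument: it is what upgrades the error from the $O(\beta^3)$ afforded by the bare softmax remainder in Lemma~\ref{lem:resp-near-uniform-MRA-gamma} to the claimed $O(\beta^4)$.

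Next I would compute $c_{0\ell}$ from the second-order softmax expansion of Lemma~\ref{lem:resp-near-uniform-MRA-gamma} (with $\sigma^2=1$). Because $s_\ell=\beta\langle\xi,\mathcal{T}_\ell v\rangle$ is exactly linear in $\beta$, the centered logits are $\eta_\ell=\beta\,\eta_\ell^{(0)}$ with $\eta_\ell^{(0)}=\langle\xi,\mathcal{T}_\ell v\rangle-\overline{\langle\xi,\mathcal{T}v\rangle}$. Multiplying the expansions $\gamma_r=\tfrac1d+\tfrac1d\eta_r+\tfrac1{2d}(\eta_r^2-\overline{\eta^2})+R_r$ for indices $r\in\{0,\ell\}$ and taking expectations, the $O(\beta)$ terms drop since $\mathbb{E}[\eta_r^{(0)}]=0$, and the self-quadratic terms $\mathbb{E}[\eta_r^2-\overline{\eta^2}]$ drop because $\{\eta_\ell^{(0)}\}$ is cyclo-stationary, so $\mathbb{E}[\eta_r^2]=\mathbb{E}[\overline{\eta^2}]$. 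This leaves $c_{0\ell}=\tfrac1{d^2}\,\mathbb{E}[\eta_0^{(0)}\eta_\ell^{(0)}]$, consistent with evenness.

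Finally, since $\mathbb{E}[\xi\xi^\top]=I$, one has $\mathbb{E}[\langle\xi,v\rangle\langle\xi,\mathcal{T}_\ell v\rangle]=\langle v,\mathcal{T}_\ell v\rangle\triangleq\rho_\ell$, the circular autocorrelation of $v$, and a short computation gives $\mathbb{E}[\eta_0^{(0)}\eta_\ell^{(0)}]=\rho_\ell-\bar\rho$ with $\bar\rho=\tfrac1d\sum_r\rho_r$. Substituting into $\alpha_k$, the $1/d^2$ and $\bar\rho$ pieces again annihilate against $\sum_\ell\cos(2\pi k\ell/d)$ for $k\neq0$, while the remaining piece is $\tfrac1d\sum_\ell\cos(2\pi k\ell/d)\,\rho_\ell=|\s{V}[k]|^2$ by the Wiener--Khinchin identity (the unitary DFT of the real, even autocorrelation equals the power spectrum, with $\sum_\ell e^{-j2\pi k\ell/d}\rho_\ell=\sum_\ell\cos(2\pi k\ell/d)\rho_\ell$). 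Hence $\alpha_k=1-\beta^2|\s{V}[k]|^2+O(\beta^4)=1-|\s{X}[k]|^2+O(\beta^4)$. I expect the only genuine obstacle to be the parity/regularity step of the second paragraph; the remaining cancellations are routine character-orthogonality and stationarity bookkeeping.
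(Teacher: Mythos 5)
Your proof is correct and follows essentially the same route as the paper's: the same parity argument (via $\xi\stackrel{\mathcal{D}}{=}-\xi$ and $\gamma_\ell^{(\beta)}(\xi)=\gamma_\ell^{(-\beta)}(-\xi)$) to upgrade the softmax remainder from $O(\beta^3)$ to $O(\beta^4)$, the same second-order expansion from Lemma~\ref{lem:resp-near-uniform-MRA-gamma} with identical cancellations of the linear and self-quadratic terms, and the same identification of the quadratic coefficient with the power spectrum. The only (cosmetic) difference is that you evaluate $\tfrac{1}{d}\sum_\ell\cos(2\pi k\ell/d)\,\mathbb{E}[\eta_0^{(0)}\eta_\ell^{(0)}]$ via the explicit autocorrelation $\rho_\ell-\bar\rho$ and Wiener--Khinchin, whereas the paper phrases the same computation through the eigenvalues of the circulant covariance $\Cov(\eta)$.
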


\begin{proof}[Proof of Lemma~\ref{prop:alpha-k-small-signal}]
Since $\xi\sim\cN(0,I_d)$ is symmetric, $\xi\stackrel{\mathcal{D}}= -\xi$. Therefore,
\begin{align}
    \mathbb{E} \pp{\gamma_0^{(\beta)}(\xi)\gamma_\ell^{(\beta)}(\xi)}
    = \mathbb{E} \pp{\gamma_0^{(\beta)}(-\xi)\gamma_\ell^{(\beta)}(-\xi)}
    = \mathbb{E} \pp{\gamma_0^{(-\beta)}(\xi)\gamma_\ell^{(-\beta)}(\xi)}.
\end{align}
Hence $\mathbb{E}[\gamma_0^{(\beta)}\gamma_\ell^{(\beta)}]$ is an even function of $\beta$, and so is $\alpha_k(\beta)$. Consequently its Taylor expansion contains only even powers:
\begin{align}
    \alpha_k(\beta)= 1 -c_k\beta^2+O(\beta^4).
\end{align}
It remains to derive the $c_k$ coefficient. By Lemma~\ref{lem:resp-near-uniform-MRA-gamma},
\begin{align}
    \gamma_\ell = \frac{1}{d}
    +\frac{1}{d}\,\eta_\ell
    +\frac{1}{2d}\p{\eta_\ell^2-\overline{\eta^2}}
    +R_\ell,
    \qquad 
    |R_\ell| \le C|\beta|^3\|\xi\|^3\|v\|^3 ,
    \label{eq:gamma-expansion-unitary-compact}
\end{align}
where $\eta_\ell=s_\ell-\bar{s}$ with $s_\ell=\beta\langle \xi,\mathcal{T}_\ell v\rangle$.
Expanding \eqref{eq:gamma-expansion-unitary-compact} to second order and using that $\eta_\ell=O(\beta)$, we obtain
\begin{align}
    \gamma_0\gamma_\ell &=\Big(\frac{1}{d}+\frac{1}{d}\eta_0+\frac{1}{2d}(\eta_0^2-\overline{\eta^2})+O(\beta^3)\Big) \Big(\frac{1}{d}+\frac{1}{d}\eta_\ell+\frac{1}{2d}(\eta_\ell^2-\overline{\eta^2})+O(\beta^3)\Big) \\
    &=\frac{1}{d^2} + \frac{1}{d^2}(\eta_0+\eta_\ell) + \frac{1}{d^2}\eta_0\eta_\ell + \frac{1}{2d^2}(\eta_0^2-\overline{\eta^2}) + \frac{1}{2d^2}(\eta_\ell^2-\overline{\eta^2}) + O(\beta^3),
\end{align}
where all remaining products are of order at least $\beta^3$.
Taking expectations, the linear terms vanish since $\mathbb{E}[\eta_\ell]=0$. Moreover,
\begin{align}
    \mathbb{E}[\eta_\ell^2-\overline{\eta^2}] =\mathbb{E}[\eta_\ell^2]-\frac{1}{d}\sum_{j=0}^{d-1}\mathbb{E}[\eta_j^2]=0,
\end{align}
because the vector $\eta$ is shift-invariant (its covariance is circulant), hence $\mathbb{E}[\eta_j^2]$ is the same for all $j$.
Therefore,
\begin{align}
    \mathbb{E}[\gamma_0\gamma_\ell] =\frac{1}{d^2}+\frac{1}{d^2}\mathbb{E}[\eta_0\eta_\ell]+O(\beta^3).
\end{align}
Substituting into the definition of $\alpha_k$ and using $\sum_{\ell=0}^{d-1}\cos(2\pi k\ell/d)=0$ for $k\neq 0$, the constant term cancels and we obtain
\begin{align}
    \alpha_k &=1-d\sum_{\ell=0}^{d-1}\cos\Big(\frac{2\pi k\ell}{d}\Big)\mathbb{E}[\gamma_0\gamma_\ell] \\
    &=1-\frac{1}{d}\sum_{\ell=0}^{d-1}\cos\Big(\frac{2\pi k\ell}{d}\Big)\mathbb{E}[\eta_0\eta_\ell]+O(\beta^4),
    \qquad k\neq 0. 
    \label{eq:alpha-k-M-compact}
\end{align}

Now $\Cov(s_\ell,s_m)=\beta^2\rho_{\ell-m}$ with $\rho_r=\langle v,\mathcal{T}_r v\rangle$, so $\Cov(s)=\beta^2\Sigma_v$ where $\Sigma_v$ is circulant. Since $\eta=\Pi_{\mathrm{mean}} s$ with $\Pi_{\mathrm{mean}}=I-\frac{1}{d}\mathbf 1\mathbf 1^\top$ (which is also circulant),
\begin{align}
    \Cov(\eta) = \beta^2\Pi_{\mathrm{mean}}\Sigma_v\Pi_{\mathrm{mean}}
\end{align}
is circulant. Under the unitary DFT on the shift index, the eigenvalues of $\Sigma_v$ are $d|\s{V}[k]|^2$, and $\Pi_{\mathrm{mean}}$ projects away the $k=0$ mode and acts as identity on $k\neq 0$. Therefore,
\begin{align}
    \lambda_\eta[k] =\beta^2 d |\s{V}[k]|^2,\qquad k\neq 0.
\end{align}
For a real symmetric circulant matrix, the cosine transform of its first row equals its Fourier eigenvalues, so the sum in~\eqref{eq:alpha-k-M-compact} equals $\lambda_\eta[k]$.
Thus, for $k\neq 0$,
\begin{align}
    \alpha_k = 1-\frac{1}{d}\,\lambda_\eta[k]+O(\beta^4) = 1-\beta^2|\s{V}[k]|^2+O(\beta^4).
\end{align}
Since $\s{X}[k]=\beta\s{V}[k]$, we have $\beta^2|\s{V}[k]|^2=|\s{X}[k]|^2$, giving
\begin{align}
    \alpha_k = 1-|\s{X}[k]|^2+O(\beta^4).
\end{align}
This proves~\eqref{eq:alpha-k-X-expansion-unitary}.
\end{proof}

\subsection{High-dimentional settings}
\begin{proposition} \label{prop:auxToTheorem2Main}
Denote by $\hat{x}$ the estimator as defined in \eqref{eqn:appx_E2}, and define
\begin{align}
    \rho_{\abs{\ell_1-\ell_2}} \triangleq \langle \mathcal{T}_{\ell_1} x, \mathcal{T}_{\ell_2} x \rangle. \label{eqn:app_H1}
\end{align}
Assume that $\rho_\ell \to 0$, as $\ell \to \infty$.
Then,
    \begin{align}
        \lim_{d \to \infty} \lim_{n \to \infty} \, (\hat{x} - x) \to 0,
    \end{align}
    where the convergence is in probability. 
\end{proposition}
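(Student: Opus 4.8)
The plan is to collapse the inner limit $n\to\infty$ using the population formula already established, and then reduce the outer limit $d\to\infty$ to a purely analytic statement about a convex combination of shifts. First I would invoke Proposition~\ref{prop:realSpaceConvergence}, which gives almost surely (hence in probability)
\begin{align}
    \hat{x} - x \xrightarrow[n\to\infty]{} -D_d, \qquad D_d \triangleq \sum_{r,\ell=0}^{d-1}(\mathcal{T}_{r-\ell}x)\,\mathbb{E}[\gamma_\ell\gamma_r].
\end{align}
Since $D_d$ is deterministic, the proposition reduces to showing $\|D_d\|_2\to 0$. Using the cyclo-stationarity identity $\mathbb{E}[\gamma_\ell\gamma_r]=\mathbb{E}[\gamma_0\gamma_{(r-\ell)\bmod d}]$ from Lemma~\ref{lem:cycloStationary} and grouping the double sum by the lag $\tau=(r-\ell)\bmod d$ (each occurring $d$ times), I would rewrite
\begin{align}
    D_d=\sum_{\tau=0}^{d-1}p_\tau\,\mathcal{T}_\tau x, \qquad p_\tau \triangleq d\,\mathbb{E}[\gamma_0\gamma_\tau]\ge 0,
\end{align}
noting that $\sum_\tau p_\tau = d\,\mathbb{E}[\gamma_0\sum_\tau\gamma_\tau]=d\,\mathbb{E}[\gamma_0]=1$ by $\sum_\ell\gamma_\ell=1$ and $\mathbb{E}[\gamma_0]=1/d$. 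Thus $D_d$ is a probability-weighted average of the shifts of $x$.

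Next I would bound $\|D_d\|_2^2$ through the autocorrelations $\rho_\sigma=\langle x,\mathcal{T}_\sigma x\rangle$. Expanding and setting $\sigma=(\tau-\tau')\bmod d$,
\begin{align}
    \|D_d\|_2^2 = \sum_{\tau,\tau'}p_\tau p_{\tau'}\,\rho_{(\tau-\tau')\bmod d} = \sum_{\sigma=0}^{d-1}\rho_\sigma\,a_\sigma, \qquad a_\sigma\triangleq\sum_{\tau'}p_{\tau'+\sigma}p_{\tau'},
\end{align}
where $a_\sigma\ge 0$ and $\sum_\sigma a_\sigma=1$. Splitting $\sigma$ into a near-diagonal band $\{\sigma\le L\}\cup\{\sigma\ge d-L\}$ and the complementary middle range, and using $|\rho_\sigma|\le\rho_0=\|x\|_2^2$ (Cauchy--Schwarz), $a_\sigma\le\max_\tau p_\tau$, and $\sum_\sigma a_\sigma=1$, I obtain
\begin{align}
    \|D_d\|_2^2 \le (2L+1)\,\rho_0\,\max_\tau p_\tau + \max_{L<\sigma<d-L}|\rho_\sigma|.
\end{align}
Under the fixed-energy assumption~\eqref{eqn:init_fixed_energy} the quantity $\rho_0$ is bounded uniformly in $d$, so once $\max_\tau p_\tau\to 0$ is established, sending $d\to\infty$ for fixed $L$ kills the first term and then sending $L\to\infty$ kills the second by the decorrelation hypothesis (cf.~\eqref{eqn:rho_decay_assump}, applied precisely in the order $\lim_{L\to\infty}\limsup_{d\to\infty}$).

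The \emph{main obstacle} is the softmax estimate $\max_\tau p_\tau\to 0$. By $2\gamma_0\gamma_\tau\le\gamma_0^2+\gamma_\tau^2$ together with the stationarity $\mathbb{E}[\gamma_\tau^2]=\mathbb{E}[\gamma_0^2]$ (Lemma~\ref{lem:cycloStationary}), it suffices to prove $p_0=d\,\mathbb{E}[\gamma_0^2]\to 0$, i.e.\ $\mathbb{E}[\gamma_0^2]=o(1/d)$. Rather than controlling the lower tail of the denominator $Z=\sum_r e^{\s{S}_r}$ by concentration, I would use the deterministic AM--GM bound $Z\ge d\exp(\tfrac1d\sum_r\s{S}_r)$. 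Writing $\s{S}_0=\langle\xi,x\rangle$ and $\tfrac1d\sum_r\s{S}_r=\langle\xi,\Pi_{\mathrm{mean}}x\rangle$, the difference $\s{S}_0-\tfrac1d\sum_r\s{S}_r=\langle\xi,\tilde x\rangle$ with $\tilde x=(I-\Pi_{\mathrm{mean}})x$ is a single centered Gaussian of variance $\|\tilde x\|_2^2\le\rho_0$, so that
\begin{align}
    \gamma_0\le\frac{1}{d}\exp\!\Big(\s{S}_0-\tfrac1d\textstyle\sum_r\s{S}_r\Big) \implies \mathbb{E}[\gamma_0^2]\le\frac{1}{d^2}\,\mathbb{E}\big[e^{2\langle\xi,\tilde x\rangle}\big]=\frac{e^{2\|\tilde x\|_2^2}}{d^2}\le\frac{e^{2\rho_0}}{d^2}.
\end{align}
Hence $p_0\le e^{2\rho_0}/d\to 0$, which combined with the previous display yields $\|D_d\|_2\to 0$. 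The one delicate point is the interplay of the limits: every constant ($\rho_0$ and the exponential moment) must be bounded uniformly in $d$, which is exactly what the bounded-energy assumption guarantees, while the decorrelation hypothesis enters only after the $d\to\infty$ limit has been taken.
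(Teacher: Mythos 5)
Your proof is correct, but it follows a genuinely different route from the paper's. The paper proves the claim term by term: starting from the SLLN representation $\hat{x}\to\mathbb{E}\,[\sum_\ell \gamma_\ell\,\mathcal{T}_\ell^{-1}\xi]$, it shows $\mathbb{E}[\gamma_\ell\,\xi]\to\frac{1}{d}\mathcal{T}_\ell x$ by first proving that the softmax denominator concentrates, $\frac{1}{d}\sum_r e^{\langle\xi,\mathcal{T}_r x\rangle}\to e^{\|x\|^2/2}$, via Bernstein's weak law of large numbers (this is where the decorrelation hypothesis $\rho_\ell\to 0$ enters), and then passing through the continuous mapping theorem and uniform integrability to identify the limit as $\frac{1}{d}\mathcal{T}_\ell x$ by an explicit Gaussian expectation. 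You instead start from the paper's other population formula (Proposition~\ref{prop:realSpaceConvergence}), exploit cyclo-stationarity to rewrite the bias as a convex combination $D_d=\sum_\tau p_\tau\,\mathcal{T}_\tau x$ with $p_\tau\ge 0$, $\sum_\tau p_\tau=1$, and then kill $\|D_d\|_2$ by a lag-splitting argument whose engine is the deterministic AM--GM bound $\gamma_0\le\frac{1}{d}e^{\langle\xi,\tilde x\rangle}$ together with a Gaussian MGF computation, giving $\max_\tau p_\tau\le e^{2\rho_0}/d$. Your route buys several things: it is fully quantitative (an explicit bound $\|D_d\|_2^2\le(2L+1)\rho_0 e^{2\rho_0}/d+\max_{L<\sigma<d-L}|\rho_\sigma|$, whereas the paper's argument is purely asymptotic); it avoids invoking a sequence LLN on what is really a triangular array (the signal $x$ changes with $d$, a point the paper glosses over); and after the first step it involves only deterministic quantities, so no stochastic-convergence bookkeeping is needed. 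It also makes explicit two hypotheses that the proposition's statement leaves implicit but that both proofs in fact require: the uniform-in-$d$ energy bound (your appeal to~\eqref{eqn:init_fixed_energy}) and the iterated-limit form of the decorrelation condition~\eqref{eqn:rho_decay_assump}, which is the form actually used when the proposition is invoked in Theorem~\ref{thm:2}. The paper's approach, in exchange, identifies the population limit of each term $\mathbb{E}[\gamma_\ell\,\xi]$ exactly, rather than only bounding the aggregate bias.
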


\begin{proof}[Proof of Proposition~\ref{prop:auxToTheorem2Main}]    
To prove the proposition, we use a similar technique as in \cite{balanov2025confirmation}. Applying SLLN on $\hat{x}$ defined in \eqref{eqn:appx_E2} results:
\begin{align}
    \hat{x}\triangleq \frac{1}{n}\sum_{i=1}^{n} \sum_{\ell=0}^{d-1}{ \gamma_{i,\ell} \mathcal{T}_{\ell}^{-1} \xi_i} \xrightarrow[]{\text{a.s.}} \mathbb{E} \pp{\sum_{\ell=0}^{d-1}{ \gamma_{\ell} \mathcal{T}_{\ell}^{-1} \xi}}. \label{eqn:app_H3}
\end{align}
We next show that under the assumptions of the proposition, we have,
\begin{align}
    \lim_{d \to \infty} \mathbb{E} \pp{\gamma_\ell  \xi} - \frac{1}{d} \mathcal{T}_\ell  x = 0. \label{eqn:app_H4}
\end{align}
To prove this, we will use Bernstein's law of large numbers \cite{durrett2019probability}. 
\begin{proposition} \label{prop:4}(Bernstein's LLN)
Let $ Y_1, Y_2, \ldots $ be a sequence of random variables with finite expectation $\mathbb{E}\p{Y_j} = \mu < \infty$, and uniformly bounded variance $\text{Var} \p{Y_j} < K < \infty$ for every $j \geq 1$, and $\text{Cov} \p{Y_i, Y_j} \to 0$, as $\abs{i-j} \to \infty$. Then,
    \begin{align}
       \frac{1}{n}\sum_{i=1}^{n} Y_i \xrightarrow[]{\calP} \mu,
    \end{align}
as $n \to \infty$.
\end{proposition}

We also use the following result, which we prove later on in this subsection.
\begin{lem} \label{lemma:5}
Assume $\rho_\ell \to 0$, as $\ell \to \infty$. Then,
    \begin{align}
       \frac{1}{d}\sum_{r=0}^{d-1} e^{{ \langle{\xi}, \mathcal{T}_r  x \rangle}} - \mathbb{E}\pp{e^{ \langle{\xi}, \mathcal{T}_1 x \rangle}} \xrightarrow[]{\calP} 0 \quad \Longrightarrow 
       \quad \frac{1}{d}\sum_{r=0}^{d-1} e^{{ \langle{\xi}, \mathcal{T}_r  x \rangle}} - {e^{\frac{\norm{x}^2}{2}}} \xrightarrow[]{\calP} 0,
       \label{eqn:lemma5}
    \end{align}
as $d \to \infty$.
\end{lem}
Define the sequence of random variables,
    \begin{align}
      Z_d \triangleq \xi \gamma_\ell = \frac{1}{d} { \frac{\xi e^{{ \langle{\xi}, \mathcal{T}_\ell  x \rangle}}}{\frac{1}{d}\sum_{r=0}^{d-1} e^{{ \langle{\xi}, \mathcal{T}_r  x \rangle}}}}, \label{eqn:asymptoticLsoftAssignA164}
    \end{align}
indexed by $d$. Lemma \ref{lemma:5} implies that the denominator in \eqref{eqn:asymptoticLsoftAssignA164} converges, as $d\to\infty$, to ${e^{\frac{\norm{x}_2^2}{2}}} > 0$ in probability. Thus, applying the continuous mapping theorem, 
    \begin{align}
      Z_d - \frac{1}{d} \xi e^{{ \langle{\xi}, \mathcal{T}_\ell  x  \rangle}-\frac{\norm{x}^2}{2}} \xrightarrow[]{\calP} 0,
    \end{align}
as $d \to \infty$. Now, since the sequence $\{Z_d\}_d$ is uniformly integrable for every $d$ (since $\abs{Z_d} \leq \abs{\xi}$ and $\mathbb{E}\abs{\xi} < \infty$), we also have that $\{Z_d\}_d$ converges in expectation, namely, 
    \begin{align}
      \lim_{d \to \infty} \mathbb{E} \pp{Z_d} - \frac{1}{d} \mathbb{E} \pp{\xi e^{{ \langle{\xi}, \mathcal{T}_\ell x \rangle}-\frac{\norm{x}_2^2}{2}}} = 0. \label{eqn:nominatorConvergenceForLargeL_0}
    \end{align}
As $\mathbb{E} \pp{\xi e^{{ \langle{\xi}, \mathcal{T}_\ell x \rangle}-\frac{\norm{x}_2^2}{2}}} = \mathcal{T}_\ell x$, we have,
    \begin{align}
      \lim_{d \to \infty} \mathbb{E} \pp{Z_d} - \frac{1}{d} \mathcal{T}_\ell x  = 0. \label{eqn:nominatorConvergenceForLargeL}
    \end{align}
Thus, combining \eqref{eqn:asymptoticLsoftAssignA164} and \eqref{eqn:nominatorConvergenceForLargeL} proves \eqref{eqn:app_H4}.
Thus, substituting \eqref{eqn:nominatorConvergenceForLargeL} into \eqref{eqn:app_H3}, we get,
    \begin{align}
       \lim_{d \to \infty} \lim_{n \to \infty} \hat{x} = \lim_{d \to \infty} \sum_{\ell=0}^{d-1}  \mathbb{E} \pp{ \gamma_{\ell} \p{\mathcal{T}_{\ell}^{-1} \xi}} = \lim_{d \to \infty} \frac{1}{d} \sum_{\ell = 0}^{d-1} \mathcal{T}_\ell^{-1}  \p{\mathcal{T}_\ell  x}.
    \end{align}
As $\mathcal{T}_\ell^{-1} (\mathcal{T}_\ell x) = x$, for every $x$ and $\ell \in \pp{d}$, we have,
\begin{align}
    \lim_{d \to \infty} \lim_{n \to \infty} \, (\hat{x}  - x) = 0,
\end{align}
as claimed. 
\end{proof}

It is left to prove Lemma \ref{lemma:5}.
\begin{proof}[Proof of Lemma \ref{lemma:5}]
Let us denote $Y_\ell = e^{{ \langle{\xi}, \mathcal{T}_\ell  x \rangle}}$. In order to apply Proposition \ref{prop:4} in our case, we need to show that the expectation of $Y_\ell$ is finite, its variance is uniformly bounded, and that the covariance decays to zero, as $d \to \infty$. For the expectation, we have for every $\ell \in \pp{d}$,
    \begin{align}
       \mathbb{E}\pp{e^{{ \langle{\xi}, \mathcal{T}_\ell x \rangle}}} = e^{\frac{\norm{x}^2}{2}} < \infty.
    \end{align}
The variance is given by,
    \begin{align}
       \s{Var}\pp{e^{{ \langle{\xi}, \mathcal{T}_\ell x \rangle}}} = e^{2\norm{x}^2} - e^{\norm{x}^2} < \infty.
    \end{align}
Finally, for the covariance we have,
    \begin{align}
       \nonumber \s{cov} \p{e^{{ \langle{\xi}, \mathcal{T}_{\ell_1} x \rangle}},e^{{ \langle{\xi}, \mathcal{T}_{\ell_2} x \rangle}}} & = \mathbb{E}\pp{e^{{ \langle{\xi}, \mathcal{T}_{\ell_1} x  + \mathcal{T}_{\ell_2} x  \rangle}}} - \mathbb{E}\pp{e^{{ \langle{\xi}, \mathcal{T}_{\ell_1} x  \rangle}}}\mathbb{E}\pp{e^{{ \langle{\xi}, \mathcal{T}_{\ell_2} x  \rangle}}} \\ 
       & = e^{{{\norm{x}^2 + { \langle{\mathcal{T}_{\ell_1} x , \mathcal{T}_{\ell_2} x  \rangle}}}}} - e^{{{\norm{x}^2}}}.
    \end{align}
By assumption, we have ${ \langle{\mathcal{T}_{\ell_1} x , \mathcal{T}_{\ell_2} x \rangle}} \to 0$, as $\abs{\ell_1 - \ell_2} \to \infty$. Thus,
    \begin{align}
       \nonumber \s{cov} \p{e^{{ \langle{\xi}, \mathcal{T}_{\ell_1} x  \rangle}},e^{{ \langle{\xi}, \mathcal{T}_{\ell_2} x  \rangle}}} \to 0.
    \end{align}
Therefore, all the assumptions of Proposition \ref{prop:4} are satisfied, which proves \eqref{eqn:lemma5}, as required.
\end{proof}

\section{ Finite-dimensional Einstein from Noise} \label{sec:finite-dimentional-EfN}

\subsection{Proof of Theorem~\ref{thm:1}}\label{thm:proofs1}

\paragraph{Contraction of Fourier components.}
We begin by establishing the convergence of the magnitudes of the estimator. By Proposition~\ref{prop:2}, whose assumptions are satisfied in our setting, we have,
\begin{align}
    \s{\hat{X}}^{(t+1)}[k] \xrightarrow[]{\text{a.s.}} \alpha_k^{(t)} \, \s{\hat{X}}^{(t)}[k],
\end{align}
for a real constant $\alpha_k^{(t)} \in (0,1)$. This directly proves the magnitude convergence result stated in \eqref{eqn:magnitudeConvergenceAsymptoticM}.

\paragraph{Fourier phases preservation.}
Next, we address the convergence of the Fourier phases. Since $\alpha_k^{(t)} > 0$, the phase of $\s{\hat{X}}^{(t+1)}[k]$ converges to the phase of $\s{\hat{X}}^{(t)}[k]$. Specifically, we obtain,
\begin{align}
    \phi_{\s{\hat{X}}^{(t+1)}}[k] \xrightarrow[]{\text{a.s.}} \phi_{\s{\hat{X}}^{(t)}}[k],
\end{align}
which establishes \eqref{eqn:fourierPhasesConvergence}.

\paragraph{Positive correlation.}
Finally, by Lemma~\ref{lem:positiveCorrelation}, which holds under the same assumptions, we have,
\begin{align}
    \langle \hat{x}^{(t+1)}, \hat{x}^{(t)} \rangle > 0,
\end{align}
which confirms \eqref{eqn:positiveCorrelation} and completes the proof of Theorem~\ref{thm:1}.

\subsection{Proof of Corollary~\ref{thm:3.5}}\label{sec:proofOfMultiIterationEfN}

\paragraph{Multi-iteration Fourier phases.}
Denote by $\hat{x}^{(t)}$ the $t^{\s{th}}$ iteration of the population EM map in~\eqref{eq:Phi-pop-1d}, and denote the initialization by $\hat{x}^{(0)}$. According to Theorem \ref{thm:1}, and under the assumptions of the proposition, for consecutive iterations, as $n \to \infty$,
    \begin{align}
        \phi_{{\s{\hat{X}}}^{(t+1)}}[k] \xrightarrow[]{\s{a.s.}}  \phi_{{\s{\hat{X}}}^{(t)}}[k].
    \end{align}
Therefore, by induction, we have,
    \begin{align}
        \phi_{{\s{\hat{X}}}^{(T)}}[k] - \phi_{{\s{\hat{X}}}^{(0)}}[k] \xrightarrow[]{\s{a.s.}} 0,
    \end{align}
for every $T<\infty$.

\paragraph{Magnitudes converge to zero as $t\to\infty$.}
Fix a frequency $k\in\{1,\dots,d-1\}$ and define $r_t \triangleq  \big|\s{\hat{X}}^{(t)}[k]\big|^2$. By Theorem~\ref{thm:1}(1), for every $t \ge 0$ there exists $\alpha_k^{(t)}\in(0,1)$ such that $\s{\hat{X}}^{(t+1)}[k]=\alpha_k^{(t)}\s{\hat{X}}^{(t)}[k]$, hence $r_{t+1}=(\alpha_k^{(t)})^2 r_t$. In particular, $(r_t)_{t\ge0}$ is strictly decreasing and bounded below by $0$, and therefore converges to a limit $r_\infty\ge 0$.

We claim that $r_\infty=0$. Suppose for contradiction that $r_\infty>0$.
Then $r_t\ge r_\infty/2$ for all sufficiently large $t$. Since the population EM map is continuous and the population iterates remain bounded (indeed, by Theorem~\ref{thm:1}(3) the iterates form a positively correlated contracting trajectory, so $\|\hat{x}^{(t)}\|$ is nonincreasing and thus uniformly bounded), the sequence $\{\hat{x}^{(t)}\}_{t\ge0}$ has at least one limit point, say $x^\infty$ along a subsequence.
By construction,
\begin{align}
    \big|\s{\hat{X}}^\infty[k]\big|^2 = r_\infty>0.
\end{align}
The contraction factor $\alpha_k(x)$ defining the population update at frequency $k$ is a continuous function of $x$, and by Theorem~\ref{thm:1}(1) satisfies $\alpha_k(x)\in(0,1)$ whenever $\s{X}[k]\neq 0$. Hence $\alpha_k(x^\infty)<1$, and by continuity there exist $\eta>0$ and $t_0$ such that
\begin{align}
    \alpha_k^{(t)} = \alpha_k(x^{(t)})  \le  1-\eta,\qquad\forall\, t\ge t_0 .
\end{align}
Consequently,
\begin{align}
    r_t \le (1-\eta)^{2(t-t_0)}\,r_{t_0}\xrightarrow[t\to\infty]{}0,
\end{align}
contradicting $r_\infty>0$. Therefore $r_\infty=0$, i.e.
\begin{align}
    \lim_{t\to\infty}\big|\s{X}^{(t)}[k]\big|=0.
\end{align}

\subsection{Proof of Theorem~\ref{thm:low-mag-rate}} \label{sec:proofOfEfNlowSNR}

Before proving the Theorem, we prove the following auxiliary proposition.

\begin{proposition}[Multi--iteration decay]
\label{prop:multi_iter_logistic}
Fix $d\ge 2$ and a frequency $k\in\{1,\dots,d-1\}$. 
Along the population EM trajectory, the $k$-th Fourier coefficient obeys the scalar recursion
\begin{align}
    \s{\hat{X}}^{(t+1)}[k] = \alpha_k^{(t)}\,\s{\hat{X}}^{(t)}[k], 
    \qquad t=0,1,2,\dots,
    \label{eqn:popultation-EM-trajectory}
\end{align}
where $\alpha_k^{(t)}$ is the population contraction factor evaluated at $x^{(t)}$. Let $x=\beta v$ with $\beta\to 0$, and assume the iterates remain in a low--signal neighborhood so that Lemma~\ref{prop:alpha-k-small-signal} applies uniformly, namely for all $t \ge 0$,
\begin{align}
    \alpha_k^{(t)}  =  1-|\s{\hat{X}}^{(t)}[k]|^2 + O(\beta^4).
    \label{eqn:alpha_k-low-SNR}
\end{align}
Let $r_t\triangleq|\s{\hat{X}}^{(t)}[k]|^2$.  
Then for every $\varepsilon\in(0,1)$ there exists $\delta(\varepsilon)>0$ such that, whenever $r_0\le \delta(\varepsilon)$, the sequence $(r_t)$ satisfies for all integers $T\ge 0$
\begin{align}
    \frac{r_0}{1+2(1+\varepsilon)r_0T}
     \le r_T \le 
    \frac{r_0}{1+2(1-\varepsilon)r_0T}.
    \label{eq:logistic-sandwich-unitary}
\end{align}
Consequently,
\begin{align}
    \frac{|\s{\hat{X}}^{(T)}[k]|}{|\s{\hat{X}}^{(0)}[k]|} \asymp \frac{1}{\sqrt{1+2T\,|\s{\hat{X}}^{(0)}[k]|^2}},
\end{align}
uniformly in $T \ge 0$ as $r_0 \to 0$.
\end{proposition}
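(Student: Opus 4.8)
The plan is to reduce the vector recursion to a scalar one for the squared magnitude $r_t \triangleq |\s{\hat{X}}^{(t)}[k]|^2$, and then to linearize the dynamics by passing to reciprocals, where the inverse-linear ($1/T$) decay becomes a nearly arithmetic progression. Squaring the recursion~\eqref{eqn:popultation-EM-trajectory} gives $r_{t+1} = (\alpha_k^{(t)})^2\, r_t$. Substituting the low-magnitude expansion~\eqref{eqn:alpha_k-low-SNR}, i.e. $\alpha_k^{(t)} = 1 - r_t + O(r_t^2)$ (where the $O(\beta^4)$ remainder is controlled uniformly by $C r_t^2$, using $|\s{\hat{X}}^{(t)}[k]|^2 = r_t$), and expanding the square yields the scalar recursion
\begin{align}
    r_{t+1} = r_t\p{1 - 2r_t + O(r_t^2)} = r_t - 2r_t^2 + O(r_t^3).
\end{align}

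First I would record that $(r_t)$ is strictly decreasing: by Theorem~\ref{thm:1} each contraction factor satisfies $\alpha_k^{(t)}\in(0,1)$ whenever $\s{\hat{X}}^{(t)}[k]\neq 0$, so $r_{t+1} = (\alpha_k^{(t)})^2 r_t < r_t$ and hence $r_t \le r_0$ for all $t$. This monotonicity is what makes all subsequent remainder bounds uniform in $t$. Next I would introduce $u_t \triangleq 1/r_t$ and compute
\begin{align}
    u_{t+1} = \frac{1}{r_t\p{1 - 2r_t + O(r_t^2)}} = u_t\p{1 + 2r_t + O(r_t^2)} = u_t + 2 + O(r_t),
\end{align}
where I used the geometric-series expansion $(1-x)^{-1} = 1 + x + O(x^2)$ for $x = 2r_t + O(r_t^2)$, together with $u_t r_t = 1$ and $u_t\cdot O(r_t^2) = O(r_t)$. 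The per-step increment of $u_t$ is therefore $2$, up to an error bounded by $C r_t \le C r_0$.

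The decisive step is to convert this into the two-sided band. Given $\varepsilon\in(0,1)$, I would choose $\delta(\varepsilon) \triangleq 2\varepsilon/C$ so that $r_0 \le \delta(\varepsilon)$ forces $|u_{t+1} - u_t - 2| \le C r_t \le C r_0 \le 2\varepsilon$ for \emph{every} $t$, hence
\begin{align}
    2(1-\varepsilon) \le u_{t+1} - u_t \le 2(1+\varepsilon).
\end{align}
Telescoping from $0$ to $T$ gives $u_0 + 2(1-\varepsilon)T \le u_T \le u_0 + 2(1+\varepsilon)T$; substituting $u_0 = 1/r_0$, inverting (all quantities positive), and clearing $r_0$ yields exactly the sandwich~\eqref{eq:logistic-sandwich-unitary}. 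Taking square roots and using $\sqrt{r_T/r_0} = |\s{\hat{X}}^{(T)}[k]|/|\s{\hat{X}}^{(0)}[k]|$, both envelopes behave as $(1+2Tr_0)^{-1/2}$ as $r_0\to 0$, giving the claimed $\asymp$ statement uniformly in $T$.

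I expect the main obstacle to be the uniform-in-$t$ bookkeeping of the remainder rather than any single estimate: one must verify that the $O(\beta^4)$ term in~\eqref{eqn:alpha_k-low-SNR} is genuinely $O(r_t^2)$ along the trajectory (so that the reciprocal increment error is $O(r_t)$ and not merely $O(1)$), and that its constant $C$ is independent of $t$. This is precisely where the monotone decrease $r_t \le r_0$ is essential: it simultaneously keeps every iterate in the low-magnitude region where the expansion is valid and caps the per-step reciprocal error by $2\varepsilon$, preventing the $\varepsilon$-slack from degrading as $T$ grows.
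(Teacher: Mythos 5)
Your proof is correct and follows essentially the same route as the paper's: both reduce to the scalar recursion $r_{t+1} = r_t - 2r_t^2 + O(r_t^3)$, use the monotone decrease $r_t \le r_0$ to make the remainder bounds uniform in $t$, and exploit the reciprocal linearization of the logistic-type decay before inverting to get the sandwich. The only cosmetic difference is that you telescope $u_t = 1/r_t$ directly, whereas the paper sandwiches $r_t$ by induction between two explicit comparison sequences defined by precisely the reciprocal difference equations you derive; the mathematical content is identical, and you even flag the same subtle point (promoting the $O(\beta^4)$ remainder to $O(r_t^2)$) that the paper's proof uses without comment.
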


\begin{proof}[Proof of Proposition~\ref{prop:multi_iter_logistic}]
From~\eqref{eqn:popultation-EM-trajectory} and~\eqref{eqn:alpha_k-low-SNR}, write $\alpha_k^{(t)}=1-r_t+E_t$ with $|E_t|\le C r_t^2$ for a uniform constant $C>0$. Then,
\begin{align}
    r_{t+1}
    &= |\s{\hat{X}}^{(t+1)}[k]|^2
     = (\alpha_k^{(t)})^2 r_t
     = (1-r_t+E_t)^2 r_t
     \nonumber\\
    &= r_t - 2r_t^2 + \Delta_t,
    \label{eq:r-rec-unitary}
\end{align}
where $\Delta_t=r_t^3+2E_t r_t(1-r_t)+E_t^2 r_t$ satisfies $|\Delta_t|\le c_1 r_t^3$
for a constant $c_1$ depending only on $C$ and the low--signal radius.

Fix $\varepsilon\in(0,1)$. For $\delta(\varepsilon)$ small enough the map is contracting, 
so $r_t\le r_0$ for all $t$. Choosing $\delta(\varepsilon)$ additionally so that 
$c_1 r_0\le \varepsilon$ yields
\begin{align}
    |\Delta_t|\le 2 \varepsilon r_t^2,
\end{align}
and therefore
\begin{align}
    2(1-\varepsilon)r_t^2 \le r_t-r_{t+1} \le 2(1+\varepsilon)r_t^2,
    \qquad t\ge 0.
    \label{eq:disc-ineq-multi-unitary}
\end{align}

Next, introduce the comparison sequences $(u_t)$ and $(v_t)$ 
\begin{align}
    \frac{1}{u_{t+1}}-\frac{1}{u_t} & =2 (1-\varepsilon) ,\quad u_0=r_0,
    \\
    \frac{1}{v_{t+1}}-\frac{1}{v_t} & = 2(1+\varepsilon) ,\quad v_0=r_0.
\end{align}
These sequences solve explicitly to
\begin{align}
    u_t=\frac{r_0}{1+2(1-\varepsilon)r_0 t}, \qquad v_t=\frac{r_0}{1+2(1+\varepsilon)r_0 t}.
\end{align}
Using~\eqref{eq:disc-ineq-multi-unitary} and a standard induction argument,
$v_t \le r_t \le u_t$ for all $t \ge 0$, and evaluating at $t=T$
gives~\eqref{eq:logistic-sandwich-unitary}. The final asymptotic relation for $|\s{\hat{X}}^{(T)}[k]|$ follows immediately by taking square roots.
\end{proof}

We are now ready to prove the Theorem. By Corollary~\ref{thm:3.5}, we have for every $t\ge 0$ and $k\neq 0$,
\begin{align}
    |\s{\hat{X}}^{(t+1)}[k]|\le |\s{\hat{X}}^{(t)}[k]|.
\end{align}
Hence $\|\hat{x}^{(t+1)}\|_2\le \|\hat{x}^{(t)}\|_2$ by Parseval. 
Therefore, if $\|\hat{x}^{(0)}\|\le \delta$, then $\|\hat{x}^{(t)}\|\le \delta$ for all $t\ge 0$.
In particular, provided $\delta$ is chosen sufficiently small, every iterate lies in the small--signal neighborhood where Lemma~\ref{prop:alpha-k-small-signal} applies.

\paragraph{Successive-step expansion.}
Fix $t \ge 0$ and $1\le k\le d-1$. Apply Lemma~\ref{prop:alpha-k-small-signal} to $x=\hat{x}^{(t)}$. Since $\|\hat{x}^{(t)}\|\le \delta$ and $\delta$ is small, the lemma yields
\begin{align}
    \alpha_k^{(t)}
    = 1-|\s{\hat{X}}^{(t)}[k]|^2 + O (\beta^4),
    \qquad k\neq 0,
\end{align}
Writing the remainder as $E_{t,k}$ gives \eqref{eqn:alpha-low-mag-expansion} with $|E_{t,k}|\le C|\s{\hat{X}}^{(t)}[k]|^4$.

\paragraph{Uniform multi-iteration bounds.}
Fix $k\neq 0$ and set $r_t\triangleq|\s{ \hat{X}}^{(t)}[k]|^2$.  
By the recursion \eqref{eqn:popultation-EM-trajectory} from Proposition~\ref{prop:multi_iter_logistic} and the expansion in part (1), the assumptions of Proposition~\ref{prop:multi_iter_logistic} are satisfied whenever $r_0$ is small enough. Therefore, for any $\varepsilon\in(0,1)$ there exists $\delta(\varepsilon)>0$ such that if $r_0\le \delta(\varepsilon)$, then for all $T\ge 0$,
\begin{align}
    \frac{r_0}{1+2(1+\varepsilon)r_0T}
    \le r_T \le 
    \frac{r_0}{1+2(1-\varepsilon)r_0T}.
\end{align}
Taking square roots and recalling $r_t=|\s{\hat{X}}^{(t)}[k]|^2$ yields exactly~\eqref{eqn:multi-iter-logistic-bounds}.

\section{High-dimensional Einstein from Noise}\label{sec:proofOfHighDimensionEfN}

The proof of the theorem is based on invoking Proposition~\ref{prop:auxToTheorem2Main} over $T$ successive iterations. Fix $1 \le T<\infty$. For each dimension $d$ and iteration index $t$, define the population EM iterates by
\begin{align}
     \hat{x}_d^{(t+1)} = M_d \big(\hat{x}_d^{(t)}\big),
\end{align}
where $M_d$ denotes the population EM operator in dimension $d$ (i.e., the $n\to\infty$ limit of one empirical EM step). Recall the shift--autocorrelations from~\eqref{eqn:app_H1_1_rewrite}
\begin{align}
    \rho^{(t)}_{d,\ell} \triangleq \big\langle \mathcal{T}_\ell \hat{x}_d^{(t)}, \hat{x}_d^{(t)}\big\rangle,
    \qquad \ell=0,\dots,d-1.
\end{align}

\paragraph{Step 1: Successive iterations.}
Fix $t\in\{0,\dots,T-1\}$ and consider one population EM update in dimension $d$ with current signal $x=\hat{x}_d^{(t)}$. Let $\hat{x}$ denote the empirical one-step estimator defined in~\eqref{eqn:appx_E2} based on $n$ samples. By definition of the population map,
\begin{align}
    M_d \big(\hat{x}_d^{(t)}\big) = \lim_{n\to\infty}\hat{x}.
\end{align}

To invoke Proposition~\ref{prop:auxToTheorem2Main}, note that for any $\ell_1,\ell_2$,
\begin{align}
    \rho_{|\ell_1-\ell_2|} \triangleq \big\langle \mathcal{T}_{\ell_1}x,\mathcal{T}_{\ell_2}x\big\rangle = \big\langle \mathcal{T}_{|\ell_1-\ell_2|}x,x\big\rangle,
\end{align}
so $\rho_\ell$ is precisely the inner product between $x$ and its $\ell$-shift.
Therefore, when $x=\hat{x}_d^{(t)}$, the correlation sequence appearing in~\eqref{eqn:app_H1} is exactly
\begin{align}
    \rho^{(t)}_{d,\ell} = \big\langle \mathcal{T}_{\ell}\hat{x}^{(t)}_d,\hat{x}^{(t)}_d\big\rangle,
    \qquad \ell\in\{0,\dots,d-1\}.
    \label{eqn:app_H1_1_rewrite-2}
\end{align}
By Assumption~\eqref{eqn:rho_decay_assump}, the decay condition $\rho_\ell\to 0$ as $\ell\to\infty$ in~\eqref{eqn:app_H1} holds along the sequence $\{\rho^{(t)}_{d,\ell}\}$ as $d\to\infty$; thus, Proposition~\ref{prop:auxToTheorem2Main} yields
\begin{align}
    \lim_{d\to\infty}\lim_{n\to\infty}\,(\hat{x}-x)=0
    \qquad\text{in probability}.
\end{align}
Since $x=\hat{x}_d^{(t)}$ and $\hat{x}_d^{(t+1)}=M_d(\hat{x}_d^{(t)})=\lim_{n\to\infty}\hat{x}$, we conclude that
\begin{align}
    \hat{x}_d^{(t+1)}-\hat{x}_d^{(t)} =\lim_{n\to\infty}(\hat{x}-x) \xrightarrow[d\to\infty]{\mathcal{P}} 0.
\end{align}
Because the left-hand side is deterministic (for fixed $d$), this convergence in probability is equivalent to convergence in norm, and hence
\begin{align}
    \lim_{d\to\infty}\big\|\hat{x}_d^{(t+1)}-\hat{x}_d^{(t)}\big\|_2 = 0,
\end{align}
which proves~\eqref{eqn:succ_iter_highdim}.

\paragraph{Step 2: Multi-iteration bound.}
By the triangle inequality and a telescoping sum,
\begin{align}
    \big\|\hat{x}_d^{(T)}-\hat{x}_d^{(0)}\big\|_2 \le \sum_{t=0}^{T-1}\big\|\hat{x}_d^{(t+1)}-\hat{x}_d^{(t)}\big\|_2 .
\end{align}
By Step~1, for each fixed $t$ the summand converges to $0$ as $d\to\infty$.
Since $T$ is fixed and finite, letting $d\to\infty$ yields
\begin{align}
    \lim_{d\to\infty}\big\|\hat{x}_d^{(T)}-\hat{x}_d^{(0)}\big\|_2 = 0,
\end{align}
establishing~\eqref{eqn:multi_iter_highdim}.

\section{Finite-sample Einstein from Noise}
\label{sec:finite-sample-EfN-Main}

\subsection{Fourier phases convergence rate} \label{sec:FourierPhasesConvergenceRate}

\paragraph{Notation.}
In this section, we prove several auxiliary statements needed in the proof of Theorem \ref{thm:1}. Then, we establish a few additional notations which will be used to prove the convergence rate of the Fourier phases.
Specifically, by \eqref{eqn:strongLLN},
\begin{align}
   \s{\hat{X}}[k] e^{-j\phi_{\s{X}}[k]} \xrightarrow[]{\s{a.s.}} 
    \mathbb{E} \left[ \abs{\s{N}[k]} \sum_{\ell=0}^{d-1} \gamma_{\ell} \cos\p{\phi_{\ell}[k]} \right] + j \mathbb{E} \left[ \abs{\s{N}[k]} \sum_{\ell=0}^{d-1} \gamma_{\ell} \sin\p{\phi_{\ell}[k]} \right],\label{eqn:app_F1}
\end{align}
where $\phi_{i,\ell}[k]$ is defined in \eqref{eqn:phaseDifferenceTerm}. We denote for every $0 \leq k \leq d-1$,
\begin{align}
    \mu_{\s{A},k} &\triangleq \mathbb{E}\left[ \abs{\s{N}[k]} \sum_{\ell=0}^{d-1} \gamma_{\ell} \sin (\phi_{ \ell}[k]) \right],  \label{eqn:muA}
    \\
    \mu_{\s{B},k} &\triangleq  \mathbb{E}\left[ \abs{\s{N}[k]} \sum_{\ell=0}^{d-1} \gamma_{\ell} \cos (\phi_{\ell}[k])  \right], \label{eqn:muB}
\end{align}
the imaginary and real parts of the right-hand-side of \eqref{eqn:app_F1}, respectively. In addition, we denote,
\begin{align}
    \sigma_{\s{A},k}^2 \triangleq \s{Var}\p{\abs{\s{N}[k]} \sum_{\ell=0}^{d-1} \gamma_{\ell} \sin (\phi_{\ell}[k]) },
    \label{eqn:sigmaA}
    \\ \sigma_{\s{B},k}^2 \triangleq \s{Var}\p{\abs{\s{N}[k]} \sum_{\ell=0}^{d-1} \gamma_{\ell} \cos (\phi_{\ell}[k]) }.
    \label{eqn:sigmaB}    
\end{align}
By Proposition \ref{prop:2}, $\s{\hat{X}}[k] \xrightarrow[]{\s{a.s.}} \alpha_k \s{X}[k]$, for a positive real constant $\alpha_k$. This in turn implies that,  $\mu_{\s{A},k} = 0$ while $\mu_{\s{B},k} > 0$. Consequently, by \eqref{eqn:app_F1}, as $n \to \infty$, the estimator converges to a non-vanishing signal, and its Fourier phases converge those of the signal $x$.

Next, for every $0 \leq k \leq d-1$,
\begin{align}
    \s{A}_{n}[k] \triangleq \frac{1}{\sqrt{n}}\sum_{i=1}^{n}\abs{\s{N}_i[k]} \sum_{\ell=0}^{d-1} \gamma_{i,\ell} \sin \left( \phi_{i, \ell}[k] \right), \label{eqn:AMdef}
\end{align}
and,
\begin{align}
    \s{B}_{n}[k] \triangleq \frac{1}{n}\sum_{i=1}^{n}\abs{\s{N}_i[k]} \sum_{\ell = 0}^{d-1} \gamma_{i,\ell} \cos \left( \phi_{i, \ell}[k] \right). \label{eqn:BMdef}
\end{align}
Note that $\s{A}_{n}[k]$ is normalized by $1/\sqrt{n}$ instead of $1/n$, to facilitate the analysis of the convergence rate. Additionally, we define the following Gaussian random variable $\s{Q}_k$,
\begin{align}
   \s{Q}_k \sim \calN \left( 0, \frac{\sigma_{\s{A},k}^2}{\mu_{\s{B},k}^2} \right),
   \label{eqn: A51}
\end{align}
for every $0 \leq k \leq d-1$.

\paragraph{Fourier phases convergence rate.}
\label{sec:convergenceInExpectationOfFourierPhases} 

\begin{proposition}
\label{prop:convergeneInExpectation} 
Recall the definitions of $\mu_{\s{B},k}$, and $\sigma_{\s{A},k}^2$ in \eqref{eqn:muB}, and \eqref{eqn:sigmaB}, respectively. Assume that $\s{X}[k] \neq 0$, for all $0< k \leq d-1$. Then, as $n \to \infty$,
    \begin{align}
       \lim_{n\to\infty} \frac{\mathbb{E} |\phi_{\s{\hat{X}}}[k] - \phi_{\s{X}}[k]|^2}{1/n} = \frac{\sigma_{\s{A},k}^2}{\mu_{\s{B},k}^2}.
        \label{eqn:B19}
    \end{align}
\end{proposition}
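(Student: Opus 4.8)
The plan is to reduce the phase error to an explicit ratio of empirical averages, identify its limiting Gaussian law by the central limit theorem (CLT), and then promote convergence in distribution to convergence of the second moment via a uniform-integrability argument. Starting from the exact phase identity \eqref{eqn:estimatorPhase} and recalling the definitions \eqref{eqn:AMdef}--\eqref{eqn:BMdef}, I would write the numerator there as $\sqrt{n}\,\s{A}_n[k]$ and the denominator as $n\,\s{B}_n[k]$, so that the phase error takes the closed form
\begin{align}
    \phi_{\s{\hat{X}}}[k] - \phi_{\s{X}}[k] = \arctan\!\p{\frac{\s{A}_n[k]}{\sqrt{n}\,\s{B}_n[k]}}.
\end{align}
Setting $W_n \triangleq \sqrt{n}\,\p{\phi_{\s{\hat{X}}}[k] - \phi_{\s{X}}[k]}$, the target \eqref{eqn:B19} is exactly the statement that $\mathbb{E}[W_n^2] \to \sigma_{\s{A},k}^2/\mu_{\s{B},k}^2$.

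Next I would establish the distributional limit. The summands defining $\s{A}_n[k]$ are i.i.d.\ and centered, since $\mu_{\s{A},k}=0$ by Proposition~\ref{prop:2}, with finite variance $\sigma_{\s{A},k}^2$ because $\abs{\s{N}[k]}$ is Rayleigh (finite moments of all orders) and the remaining factors are bounded; the classical CLT then gives $\s{A}_n[k] \xrightarrow[]{\mathcal{D}} \calN(0,\sigma_{\s{A},k}^2)$, while the SLLN gives $\s{B}_n[k]\xrightarrow[]{\s{a.s.}}\mu_{\s{B},k}>0$. Since the argument of the $\arctan$ vanishes and $\arctan(x)=x+O(x^3)$, a Taylor expansion combined with Slutsky's theorem yields
\begin{align}
    W_n = \sqrt{n}\arctan\!\p{\frac{\s{A}_n[k]}{\sqrt{n}\,\s{B}_n[k]}} = \frac{\s{A}_n[k]}{\s{B}_n[k]} + o_{\mathcal{P}}(1) \xrightarrow[]{\mathcal{D}} \s{Q}_k,
\end{align}
with $\s{Q}_k\sim\calN(0,\sigma_{\s{A},k}^2/\mu_{\s{B},k}^2)$ as in \eqref{eqn: A51}, so that $\mathbb{E}[\s{Q}_k^2]=\sigma_{\s{A},k}^2/\mu_{\s{B},k}^2$.

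The main work, and the step I expect to be the principal obstacle, is to upgrade $W_n\xrightarrow[]{\mathcal{D}}\s{Q}_k$ to $\mathbb{E}[W_n^2]\to\mathbb{E}[\s{Q}_k^2]$, which requires uniform integrability of $\{W_n^2\}$. The difficulty is that $W_n$ depends nonlinearly on $\s{B}_n[k]$, whose denominator may in principle be small, so I would split on the concentration event $\mathcal{G}_n\triangleq\{\s{B}_n[k]\ge \mu_{\s{B},k}/2\}$. On $\mathcal{G}_n$, the elementary bound $\abs{\arctan(x)}\le\abs{x}$ gives $W_n^2\le 4\,\s{A}_n[k]^2/\mu_{\s{B},k}^2$; invoking a Rosenthal (or Marcinkiewicz--Zygmund) inequality together with the finiteness of all moments of the summands yields $\sup_n\mathbb{E}[\abs{\s{A}_n[k]}^{2+\delta}]<\infty$ for some $\delta>0$, so $\{\s{A}_n[k]^2\}$, and hence $\{W_n^2\mathbf{1}_{\mathcal{G}_n}\}$, are uniformly integrable. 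On the complement, $\abs{W_n}\le(\pi/2)\sqrt{n}$ holds trivially, while a concentration bound for the bounded-moment summands gives $\P(\mathcal{G}_n^c)\le e^{-cn}$, so that $\mathbb{E}[W_n^2\mathbf{1}_{\mathcal{G}_n^c}]\le (\pi^2/4)\,n\,e^{-cn}\to 0$. Combining the uniform integrability on $\mathcal{G}_n$ with the distributional limit, and using $\P(\mathcal{G}_n)\to1$, gives $\mathbb{E}[W_n^2\mathbf{1}_{\mathcal{G}_n}]\to\mathbb{E}[\s{Q}_k^2]$; adding the vanishing complement contribution yields $\mathbb{E}[W_n^2]\to\sigma_{\s{A},k}^2/\mu_{\s{B},k}^2$, which is exactly \eqref{eqn:B19}.
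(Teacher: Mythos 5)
Your proof is correct and follows essentially the same strategy as the paper's: both reduce the phase error to $\arctan\bigl(\s{A}_n[k]/(\sqrt{n}\,\s{B}_n[k])\bigr)$, split the expectation on an event where the denominator $\s{B}_n[k]$ is bounded away from zero, show the complementary event contributes negligibly (the paper's Lemma~\ref{lem:B5} gives $O(1/n^2)$, your exponential-concentration bound gives $n e^{-cn}$), and identify the limit on the good event via the normal limit of $\s{A}_n[k]/\s{B}_n[k]$ (the paper's Lemma~\ref{lem:B6}). The only difference is that the paper defers the proofs of those two lemmas to prior work, whereas you supply self-contained details (CLT, Slutsky, Rosenthal-based uniform integrability), which is a faithful reconstruction of the same argument rather than a different route.
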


\begin{proof}[Proof of Proposition~\ref{prop:convergeneInExpectation}]
The proof of this proposition is similar to the proof found in \cite{balanov2024einstein}.
Recall the definitions of $\s{A}_{n}[k]$ and $\s{B}_{n}[k]$ in \eqref{eqn:AMdef} and \eqref{eqn:BMdef}, respectively, and of $\s{Q}_k$ in \eqref{eqn: A51}. Then, using the phase difference expression in \eqref{eqn:estimatorPhase}, it follows that establishing \eqref{eqn:B19} is equivalent to proving the following:
\begin{align}
        \lim_{n\to\infty}\frac{ \mathbb{E} \pp{\arctan^2 {\left(\frac{1}{\sqrt{n}}\frac{\s{A}_{n}[k]}{\s{B}_{n}[k]}\right)}}}{ \frac{1}{{n}}\mathbb{E} \pp{ {\s{Q}^2_k}}} = 1, \label{eqn:B20}
    \end{align}
    for every $0 \leq k \leq d-1$. Recall by the definition of $\s{Q}_k$ in \eqref{eqn: A51} that $\mathbb{E} \pp{\s{Q}_k^2} = \sigma_{\s{A},k}^2 / \mu_{\s{B},k}^2$, which is equivalent to the right-hand-side of \eqref{eqn:B19}.

For brevity, we fix $k$, and denote $\s{A}_{n} = \s{A}_{n}[k]$, $\s{B}_{n} = \s{B}_{n}[k]$, $\mu_{\s{B}} = \mu_{\s{B},k}$, $\sigma_{\s{A}}^2 = \sigma_{\s{A},k}^2$.
Using \eqref{eqn:estimatorPhase} it is clear that,
\begin{align}
    \sqrt{n}\cdot\tan\pp{\phi_{\s{\hat{X}}}[k] - \phi_{\s{X}}[k]} = \frac{\frac{1}{\sqrt{n}}\sum_{i=1}^{n}\abs{\s{N}_i[k]} \sum_{\ell=0}^{d-1} \gamma_{i, \ell} \sin \left( \phi_{i, \ell}[k] \right)}{\frac{1}{n}\sum_{i=1}^{n}\abs{\s{N}_i[k]} \sum_{\ell=0}^{d-1} \gamma_{i,\ell} \cos \left( \phi_{i, \ell}[k] \right)}\triangleq\frac{\s{A}_n}{\s{B}_n}. \label{eqn:arctanExpression}
\end{align}
It is important to note that the denominator $\s{B}_n$ can be zero with positive probability, implying that the expression in \eqref{eqn:arctanExpression} may diverge with non-zero probability. Therefore, it is necessary to control the occurrence of such events. To this end, $\s{B}_n \xrightarrow[]{\s{a.s.}}  \mu_\s{B}$, by SLLN (see Section \ref{sec:convergenceOfEfNestimator}), where $\mu_\s{B}$ is defined in \eqref{eqn:muB}. Fix $0 < \epsilon < \mu_\s{B}$, and proceed by decomposing as follows:
\begin{align}
    \mathbb{E} \pp{\arctan^2 {\left(\frac{1}{\sqrt{n}}\frac{\s{A}_n}{\s{B}_n}\right)}}& = \mathbb{E} \pp{\arctan^2 {\left(\frac{1}{\sqrt{n}}\frac{\s{A}_n}{\s{B}_n}\right) \mathbbm{1}_{\abs{\s{B}_n} > \epsilon}}}  \nonumber\\ &\quad\quad+\mathbb{E} \pp{\arctan^2 {\left(\frac{1}{\sqrt{n}}\frac{\s{A}_n}{\s{B}_n}\right) \mathbbm{1}_{\abs{\s{B}_n} < \epsilon}}} \label{eqn:splittingArgTanIntoTwoTerms}.
\end{align}
The next result shows that the second term at the r.h.s. of \eqref{eqn:splittingArgTanIntoTwoTerms} converges to zero with rate $O(1/n^2)$.
\begin{lem} \label{lem:B5}
    The following inequality holds,
    \begin{align}
        \mathbb{E} \pp{\arctan^2 {\left(\frac{1}{\sqrt{n}}\frac{\s{A}_n}{\s{B}_n}\right) \mathbbm{1}_{\abs{\s{B}_n} < \epsilon}}} \leq \frac{D}{n^2}, \label{eqn:B23}
    \end{align}
    for a finite $D > 0$.
\end{lem}
In addition, we have the following asymptotic relation for the last term in \eqref{eqn:splittingArgTanIntoTwoTerms}.
\begin{lem} \label{lem:B6}
    The following asymptotic relation hold,
        \begin{align}
         \lim_{n\to\infty}\frac{ \mathbb{E} \pp{\arctan^2 {\left(\frac{1}{\sqrt{n}}\frac{\s{A}_n}{\s{B}_n}  \right) \mathbbm{1}_{\abs{\s{B}_n} > \epsilon}}}}{ \frac{1}{{n}}\mathbb{E} \pp{ {\s{Q}^2_k}}} = 1 \label{eqn:arctanWithPositiveEventConvergenceToOne1}.
    \end{align}
\end{lem}
The proofs of these lemmas can be found in \cite{balanov2024einstein}.
Substituting \eqref{eqn:B23} and \eqref{eqn:arctanWithPositiveEventConvergenceToOne1} in \eqref{eqn:splittingArgTanIntoTwoTerms}, leads to \eqref{eqn:B20}, which completes the proof.
\end{proof}

\subsection{Fourier phases convergence at low-magnitude expansion}

Here, we approximate~\eqref{eqn: A51} in the low-magnitude approximation when $x = \beta v$, where $\beta \to 0$. Let us define
\begin{align} \label{eqn:app_M15}
    C_k(x) \triangleq \frac{\sigma_{\s{A},k}^2 (x)}{\mu_{\s{B},k}^2 (x)},
\end{align}
as the right-hand-side of~\eqref{eqn:B19}. 

\subsubsection{Scaling of the Fourier phase convergence constant}
Our next lemma shows that $C_k(x) = \Theta(\beta^2)$. 

\begin{lem}[Low-magnitude expansions of $\mu_{\s B,k}$ and $\sigma_{\s{A},k}^2$]
\label{lem:muB-sigmaA-low-magnitude-short}
Assume $\s{X}[k]\neq 0$ for every $k \neq 0$, and fix a nonzero frequency $k\neq 0$. Let $x=\beta v$ with $\beta \to 0$ and $\xi\sim\mathcal{N}(0,I_d)$.
Recall the definitions of $\mu_{\s B,k}(x)$ in~\eqref{eqn:muB} and $\sigma_{\s{A},k}^2(x)$ in~\eqref{eqn:sigmaA}. Then:
\begin{align}
    \mu_{\s B,k}(x) &= |\s{X}[k]| (1-|\s{X}[k]|^2+O(\beta^4)),
    \label{eq:muB_low_signal}
    \\
    \sigma_{\s{A},k}^2 (x) &= \beta^4\,\operatorname{Var}\left( |\s{N}[k]|\sum_{\ell=0}^{d-1}\frac{1}{2d}\,\langle\xi,\mathcal{T}_\ell v\rangle^2\,\sin(\phi_\ell[k])\right) +O(\beta^6), \label{eq:sigmaA_low_signal_short}
\end{align}
where the $O(\beta^4)$ and $O(\beta^6)$ terms are uniform over $\|v\|_2\le 1$ (for fixed $d,k$).
\end{lem}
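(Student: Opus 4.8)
The plan is to handle the two expansions by completely different routes: the expansion of $\mu_{\s B,k}$ follows from already-established convergence results, while the expansion of $\sigma_{\s A,k}^2$ requires a genuine $\beta$-expansion whose punchline is the vanishing of the first-order term. For $\mu_{\s B,k}$ I would avoid a direct expansion. Recall from the convergence~\eqref{eqn:app_F1} that $\s{\hat X}[k]\,e^{-j\phi_{\s X}[k]}\to\mu_{\s B,k}+j\mu_{\s A,k}$, while Proposition~\ref{prop:fourierSpaceConvergence} and Proposition~\ref{prop:2} give $\s{\hat X}[k]\to\alpha_k\,\s X[k]$ with $\alpha_k\in(0,1)$ \emph{real}. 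Matching real and imaginary parts forces $\mu_{\s A,k}=0$ and $\mu_{\s B,k}=\alpha_k\,|\s X[k]|$ exactly. Substituting the small-signal expansion $\alpha_k=1-|\s X[k]|^2+O(\beta^4)$ of Lemma~\ref{prop:alpha-k-small-signal} then gives~\eqref{eq:muB_low_signal}, with the $O(\beta^4)$ uniform over $\|v\|_2\le 1$ because that lemma's remainder is.

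For the variance, write $W_k\triangleq|\s N[k]|\sum_{\ell=0}^{d-1}\gamma_\ell\sin(\phi_\ell[k])$, so $\sigma_{\s A,k}^2=\mathrm{Var}(W_k)$. Since $|\s N[k]|$ and $\phi_\ell[k]$ do not depend on $\beta$, all $\beta$-dependence lives in $\gamma_\ell$, and I would insert the responsibility expansion of Lemma~\ref{lem:resp-near-uniform-MRA-gamma}, namely $\gamma_\ell=\tfrac1d+\tfrac1d\eta_\ell+\tfrac1{2d}(\eta_\ell^2-\overline{\eta^2})+R_\ell$ with $\eta_\ell=\beta\,\eta_\ell^{(1)}$, $\eta_\ell^{(1)}=\langle\xi,\mathcal{T}_\ell v\rangle-\overline{\langle\xi,\mathcal{T}v\rangle}$. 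The order-$\beta^0$ contribution is $|\s N[k]|\tfrac1d\sum_\ell\sin(\phi_\ell[k])$, which vanishes identically for $k\neq 0$ because $\sum_\ell e^{j2\pi k\ell/d}=0$.

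The crux—and the main obstacle—is that the order-$\beta$ term also vanishes, which is exactly what turns the naive $O(\beta^2)$ variance into $O(\beta^4)$. After dropping the $\ell$-independent piece $\overline{\langle\xi,\mathcal{T}v\rangle}$ (killed by $\sum_\ell\sin(\phi_\ell[k])=0$), the first-order term is proportional to $\sum_\ell a_\ell\sin(\phi_\ell[k])$ with $a_\ell\triangleq\langle\xi,\mathcal{T}_\ell v\rangle$. I would prove this is zero for \emph{every} $\xi$ by passing to the Fourier domain: the cross-correlation sequence $a_\ell$ has DFT $\widehat a[-k]=d\,\overline{\s N[k]}\,\s V[k]$, whence
\begin{align}
    \sum_{\ell=0}^{d-1} a_\ell\,\sin(\phi_\ell[k])
    = \mathrm{Im}\Big(e^{j(\phi_{\s N}[k]-\phi_{\s X}[k])}\,\widehat a[-k]\Big)
    = \mathrm{Im}\big(d\,|\s N[k]|\,|\s V[k]|\big) = 0,
\end{align}
since the phase factors $e^{-j\phi_{\s N}[k]}$ and $e^{j\phi_{\s X}[k]}$ cancel against the leading exponential. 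This is precisely the first-order magnitude/phase decoupling anticipated by Proposition~\ref{prop:mag-phase-first-order}: the tangential (phase) direction does not feel the signal at first order.

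Given this cancellation, the same identity shows that in $(\eta_\ell^{(1)})^2$ both the cross term $-2\overline{\langle\xi,\mathcal{T}v\rangle}\,\langle\xi,\mathcal{T}_\ell v\rangle$ and the mean-subtraction $\overline{(\eta^{(1)})^2}$ drop out after pairing with $\sin(\phi_\ell[k])$, so the order-$\beta^2$ part of $W_k$ equals, for every $\xi$, exactly $\beta^2 G_k$ with $G_k=|\s N[k]|\sum_\ell\tfrac1{2d}\langle\xi,\mathcal{T}_\ell v\rangle^2\sin(\phi_\ell[k])$. Thus $W_k=\beta^2 G_k+r_k$ with $r_k=O(\beta^3)$ (controlled by the uniform remainder of Lemma~\ref{lem:resp-near-uniform-MRA-gamma} and the finite moments of $\xi$), giving $\mathrm{Var}(W_k)=\beta^4\,\mathrm{Var}(G_k)+O(\beta^5)$. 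Finally I would upgrade $O(\beta^5)$ to $O(\beta^6)$ by a parity argument: under $\xi\mapsto-\xi$ one has $W_k(\beta;-\xi)=-W_k(-\beta;\xi)$ and $\xi\stackrel{\mathcal D}{=}-\xi$, so $\mathrm{Var}(W_k)$ is a smooth \emph{even} function of $\beta$ (smoothness and differentiation under the expectation justified as in the proof of Lemma~\ref{lem:Jacobian-M}), which forces its odd-order Taylor coefficients to vanish. This yields~\eqref{eq:sigmaA_low_signal_short}, with the error uniform over $\|v\|_2\le 1$ by the uniform remainder bounds.
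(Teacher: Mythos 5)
Your proof is correct and follows essentially the same route as the paper's: the identification $\mu_{\s B,k}=\alpha_k\,|\s{X}[k]|$ combined with Lemma~\ref{prop:alpha-k-small-signal}, the responsibility expansion of Lemma~\ref{lem:resp-near-uniform-MRA-gamma}, the vanishing of $\sum_{\ell}\sin(\phi_\ell[k])$ for $k\neq 0$, the first-order cancellation via the correlation/DFT identity (the key point being that $e^{-j\phi_{\s X}[k]}\s{V}[k]$ is real), and the reduction $\sum_\ell\eta_\ell^2\sin(\phi_\ell[k])=\sum_\ell s_\ell^2\sin(\phi_\ell[k])$ all match the paper's Steps 1--4.

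The one place you genuinely improve on the paper is the remainder order. The paper's final step asserts that controlling cross-terms with $\mathbb{E}\|\xi\|^6<\infty$ gives $O(\beta^6)$, but a naive Cauchy--Schwarz bound on the cross-covariance $\operatorname{Cov}\big(\beta^2 G_k,\,r_k\big)$ with $r_k=O(\beta^3\|\xi\|^3)$ only yields $O(\beta^5)$. Your parity argument --- $W_k(\beta;-\xi)=-W_k(-\beta;\xi)$ together with $\xi\stackrel{\mathcal{D}}{=}-\xi$ makes $\sigma_{\s{A},k}^2$ a smooth even function of $\beta$, so its odd Taylor coefficients vanish --- rigorously closes this gap; equivalently, one can observe that the cubic term of $W_k$ paired with $\sin(\phi_\ell[k])$ is even in $\xi$ while $G_k$ is odd, so the would-be $\beta^5$ cross-term has zero expectation. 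Either formulation is a worthwhile addition to the paper's terse Step 5.
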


\begin{proof}[Proof of Lemma~\ref{lem:muB-sigmaA-low-magnitude-short}]
We prove in steps.
\paragraph{Step 1: Expansion of $\mu_{\s B,k}(x)$.}
By~\eqref{eqn:muB} and Proposition~\ref{prop:2}, $\s{\hat{X}}[k]\xrightarrow[]{\mathrm{a.s.}}\alpha_k(x)\s{X}[k]$, hence
\begin{align}
    \mu_{\s B,k}=\mathbb{E}[|\s{\hat{X}}[k]|]=\alpha_k(x)\,|\s{X}[k]|.
    \label{eq:muB_alpha_def}
\end{align}
Lemma~\ref{prop:alpha-k-small-signal} gives, in the low-magnitude regime $x=\beta v$,
\begin{align}
    \alpha_k(x)=1-|\s{X}[k]|^2+O(\beta^4).
    \label{eq:alpha_low_mag}
\end{align}
Combining \eqref{eq:muB_alpha_def}--\eqref{eq:alpha_low_mag} yields \eqref{eq:muB_low_signal}.

\paragraph{Step 2: Softmax expansion and removal of $\ell$--independent terms.}
By~\eqref{eqn:sigmaA},
\begin{align}
    \sigma_{\s{A},k}^2 =\operatorname{Var}\left(|\s{N}[k]|\sum_{\ell=0}^{d-1}\gamma_\ell\,\sin(\phi_\ell[k])\right).    \label{eq:sigmaA_def_start}
\end{align}
Using Lemma~\ref{lem:resp-near-uniform-MRA-gamma}, for $s_\ell=\beta\langle\xi,\mathcal{T}_\ell v\rangle$,
$\bar s=\frac{1}{d}\sum_{\ell=0}^{d-1}s_\ell$, $\eta_\ell=s_\ell-\bar s$, and
$\overline{\eta^2}=\frac{1}{d}\sum_{\ell=0}^{d-1}\eta_\ell^2$, we have
\begin{align}
    \gamma_\ell =\frac{1}{d}+\frac{1}{d}\eta_\ell+\frac{1}{2d}\big(\eta_\ell^2-\overline{\eta^2}\big)+R_\ell, \qquad |R_\ell|\le C|\beta|^3\|\xi\|^3\|v\|^3.    \label{eq:gamma_expansion_app}
\end{align}
Since $k\neq 0$, the sinusoid $\sin(\phi_\ell[k])$ has zero mean over $\ell$, namely
\begin{align}
    \sum_{\ell=0}^{d-1}\sin(\phi_\ell[k])=0.
    \label{eq:sin_sum_zero}
\end{align}
Therefore any $\ell$--independent term in \eqref{eq:gamma_expansion_app} vanishes after summation against
$\sin(\phi_\ell[k])$, and in particular the $\frac{1}{d}$ term and the $-\overline{\eta^2}$ term drop.

\paragraph{Step 3: The linear term cancels.}
Let $\Delta_k\triangleq \phi_{\s{N}}[k]-\phi_{\s{X}}[k]$. Using $\sin(u)=\mathrm{Im}(e^{iu})$,
\begin{align}
    |\s{N}[k]|\sum_{\ell=0}^{d-1}s_\ell\sin\Big(\tfrac{2\pi k\ell}{d}+\Delta_k\Big)
    &= |\s{N}[k]|\;\mathrm{Im}\left(e^{i\Delta_k}\sum_{\ell=0}^{d-1}s_\ell e^{2\pi i k\ell/d} \right).    \label{eq:linear_term_as_imag}
\end{align}
Since $s_\ell=\beta\langle\xi,\mathcal{T}_\ell v\rangle$, we write
\begin{align}
    \sum_{\ell=0}^{d-1}s_\ell e^{2\pi i k\ell/d} = \beta\sum_{\ell=0}^{d-1}\langle\xi,\mathcal{T}_\ell v\rangle e^{2\pi i k\ell/d}.
    \label{eq:s_as_corr}
\end{align}
A standard correlation/DFT identity (for $k\neq 0$) gives
\begin{align}
    \sum_{\ell=0}^{d-1}\langle\xi,\mathcal{T}_\ell v\rangle e^{2\pi i k\ell/d} = d\sqrt d\,\overline{\s{N}[k]}\,\s{V}[k].    \label{eq:corr_dft_identity}
\end{align}
Substituting \eqref{eq:s_as_corr}--\eqref{eq:corr_dft_identity} into \eqref{eq:linear_term_as_imag} yields
\begin{align}
    |\s{N}[k]|\sum_{\ell=0}^{d-1}s_\ell\sin\Big(\tfrac{2\pi k\ell}{d}+\Delta_k\Big) &= |\s{N}[k]|\;\mathrm{Im}\left( e^{i\Delta_k}\cdot \beta\, d\sqrt d\,\overline{\s{N}[k]}\,\s{V}[k]\right)
    \nonumber\\
    &=
    \beta\,d\sqrt d\,|\s{N}[k]|^2\; \mathrm{Im}\left(e^{-i\phi_{\s{X}}[k]}\,\s{V}[k]\right),    \label{eq:linear_term_reduced}
\end{align}
where we used $e^{i\Delta_k}\overline{\s{N}[k]}=e^{-i\phi_{\s{X}}[k]}|\s{N}[k]|$.
Finally, since $\s{X}[k]=\beta\s{V}[k]$, we have $\phi_{\s{X}}[k]=\phi_{\s{V}}[k]$, hence
\begin{align}
    e^{-i\phi_{\s{X}}[k]}\,\s{V}[k]=|\s{V}[k]|\in\mathbb{R}
    \quad\Longrightarrow\quad
    \mathrm{Im}\left(e^{-i\phi_{\s{X}}[k]}\,\s{V}[k]\right)=0.
    \label{eq:imag_zero}
\end{align}
Combining \eqref{eq:linear_term_reduced}--\eqref{eq:imag_zero} shows that the entire linear contribution is $0$ almost surely.

\paragraph{Step 4: Quadratic leading term.}
From \eqref{eq:gamma_expansion_app} together with \eqref{eq:sin_sum_zero} and the linear cancellation above,
\begin{align}
    |\s{N}[k]|\sum_{\ell=0}^{d-1}\gamma_\ell\sin(\phi_\ell[k])
    &=
    |\s{N}[k]|\sum_{\ell=0}^{d-1}\frac{1}{2d}\,\eta_\ell^2\,\sin(\phi_\ell[k]) \;+ \;O(\beta^3\|\xi\|^3).
    \label{eq:quad_term_eta2}
\end{align}
Now expand $\eta_\ell^2=(s_\ell-\bar s)^2=s_\ell^2-2\bar s\,s_\ell+\bar s^2$. Using \eqref{eq:sin_sum_zero}
and the linear cancellation (applied to $\sum_\ell s_\ell\sin(\phi_\ell[k])$), we get
\begin{align}
    \sum_{\ell=0}^{d-1}\eta_\ell^2\sin(\phi_\ell[k]) = \sum_{\ell=0}^{d-1}s_\ell^2\sin(\phi_\ell[k]).
    \label{eq:eta2_to_s2}
\end{align}
Substituting \eqref{eq:eta2_to_s2} into \eqref{eq:quad_term_eta2}, and recalling $s_\ell=\beta\langle\xi,\mathcal{T}_\ell v\rangle$,
we obtain
\begin{align}
    |\s{N}[k]|\sum_{\ell=0}^{d-1}\gamma_\ell\sin(\phi_\ell[k]) &= \beta^2\,|\s{N}[k]|\sum_{\ell=0}^{d-1}\frac{1}{2d}\,\langle\xi,\mathcal{T}_\ell v\rangle^2\,\sin(\phi_\ell[k]) \;+\;O(\beta^3\|\xi\|^3). \label{eq:leading_quadratic_form}
\end{align}

\paragraph{Step 5: Taking variance.}
Applying $\operatorname{Var}(\cdot)$ to \eqref{eq:leading_quadratic_form} and using $\mathbb{E}\|\xi\|^6<\infty$ to control the remainder and cross-terms gives
\begin{align}
    \sigma_{\s{A},k}^2 &= \beta^4\,\operatorname{Var}\left(|\s{N}[k]|\sum_{\ell=0}^{d-1}\frac{1}{2d}\,\langle\xi,\mathcal{T}_\ell v\rangle^2\,\sin(\phi_\ell[k]) \right) +O(\beta^6),
\end{align}
which is exactly \eqref{eq:sigmaA_low_signal_short}, which completes the proof of the statement. 
We note that for $\s{X}[k] \neq 0$, the $\beta^4$ coefficient is non-vanishing, that is,
\begin{align}
    \operatorname{Var}\left(|\s{N}[k]|\sum_{\ell=0}^{d-1}\frac{1}{2d}\,\langle\xi,\mathcal{T}_\ell v\rangle^2\,\sin(\phi_\ell[k])\right) > 0.
\end{align} 
\end{proof}

\subsubsection{Successive population EM iterations}
Next, we analyze the relation between two successive iterations of $C_k(x)$, assuming the successive iterations follow the population EM map. Explicitly, for $\hat{x} = M(x)$, where $M$ is the population EM map we would analyze the scaling of $|C_k(\hat{x}) - C_k(x)|$.

\begin{lem}[One-step population drift of $C_k$ in the low-magnitude regime]
\label{lem:Ck-relative-drift}
For $x\in\mathbb{R}^d$ recall the definition of $C_k(x)$ from~\eqref{eqn:app_M15}. Let $\hat{x}=M(x)$ be the population EM update, and  $x=\beta v$ with $\beta \to 0$, and $\|v \| \leq P$. Assume that $\min_{k \neq 0} |\s{V}[k]| \geq c_0 > 0$. Then there exist constants $\beta_0>0$ and $C<\infty$ (depending only on $d,k,c_0, P$) such that for all $0<\beta\le\beta_0$
\begin{align}
\label{eq:Ck-relative-drift}
    |C_k(M(x)) - C_k(x)| \le C\,\beta^4.
\end{align}
\end{lem}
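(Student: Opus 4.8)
The plan is to reduce the statement to the leading-order behaviour of $C_k$ in $\beta$, and then to exploit the near-identity, phase-preserving action of the population EM map $M$ in the Einstein-from-Noise regime. First I would combine the two expansions of Lemma~\ref{lem:muB-sigmaA-low-magnitude-short}. Since $\s{X}[k]=\beta\s{V}[k]$ with $|\s{V}[k]|\ge c_0$, equation~\eqref{eq:muB_low_signal} gives $\mu_{\s{B},k}^2(x)=\beta^2|\s{V}[k]|^2\bigl(1+O(\beta^2)\bigr)$, while~\eqref{eq:sigmaA_low_signal_short} gives $\sigma_{\s{A},k}^2(x)=\beta^4 W_k(v)+O(\beta^6)$, where $W_k(v)$ abbreviates the strictly positive variance coefficient of~\eqref{eq:sigmaA_low_signal_short}. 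Dividing, and using $|\s{V}[k]|\ge c_0$ to keep the denominator bounded away from zero, yields
\begin{align}
    C_k(x)=\beta^2\,h_k(v)+O(\beta^4),\qquad h_k(v)\triangleq \frac{W_k(v)}{|\s{V}[k]|^2},
\end{align}
with the remainder uniform over the set $\{\,\|v\|\le 2P,\ |\s{V}[k]|\ge c_0/2\,\}$ (constants depending only on $d,k,c_0,P$). The coefficient $W_k(v)$ is a variance of a cubic Gaussian polynomial whose coefficients depend smoothly on $v$ and on the unit phase $\overline{\s{V}[k]}/|\s{V}[k]|$, so $h_k$ is $C^1$, hence Lipschitz, on this compact cross-section; this is precisely where the nonvanishing hypothesis $|\s{V}[k]|\ge c_0$ enters.

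Second, I would use the structure of $M$ on these iterates. By Theorem~\ref{thm:1}, $M$ acts diagonally in the Fourier basis on every non-mean mode, $\s{\hat{X}}[m]=\alpha_m(x)\,\s{X}[m]$ with the phases frozen, and by Lemma~\ref{prop:alpha-k-small-signal}, $\alpha_m(x)=1-|\s{X}[m]|^2+O(\beta^4)=1+O(\beta^2)$. Moreover, taking the mean in Proposition~\ref{prop:realSpaceConvergence} and using $\sum_{\ell,r}\mathbb{E}[\gamma_\ell\gamma_r]=\mathbb{E}[(\sum_\ell\gamma_\ell)^2]=1$ shows that $\Pi_{\mathrm{mean}}M(x)=0$, i.e. the population EM update annihilates the mean in one step. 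Restricting to the mean-free manifold $\Pi_{\mathrm{mean}}x=0$ (the regime in which the low-magnitude EfN iterates live, since the mean is corrected after the first step), both $x$ and $\hat{x}=M(x)$ are mean-free. Writing $\hat{x}=\beta\hat{v}$ with $\hat{v}=\hat{x}/\beta$, the non-mean Fourier components satisfy $\s{\hat{V}}[m]=\alpha_m(x)\s{V}[m]=(1+O(\beta^2))\s{V}[m]$, so $\hat{v}$ lies in the same cross-section (in particular $|\s{\hat{V}}[k]|\ge c_0/2$ for $\beta$ small) and
\begin{align}
    \|\hat{v}-v\|_2 = O(\beta^2).
\end{align}

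Finally I would combine the two steps. Applying the uniform expansion of Step~1 to both $x=\beta v$ and $\hat{x}=\beta\hat{v}$ and subtracting,
\begin{align}
    \bigl|C_k(M(x))-C_k(x)\bigr| \le \beta^2\,\bigl|h_k(\hat{v})-h_k(v)\bigr|+O(\beta^4) \le \beta^2\,L\,\|\hat{v}-v\|_2+O(\beta^4)=O(\beta^4),
\end{align}
where $L$ is the Lipschitz constant of $h_k$ on the cross-section and I have used $\|\hat{v}-v\|_2=O(\beta^2)$. Absorbing all constants (depending only on $d,k,c_0,P$) into $C$ gives the claim. The main obstacle is the mean component: because the first EM step forces $\Pi_{\mathrm{mean}}\hat{x}=0$ irrespective of $\Pi_{\mathrm{mean}}x$, and $W_k$ depends on $\|\Pi_{\mathrm{mean}}x\|_2^2$, a nonzero input mean would produce an $O(\beta^2)$ jump in $C_k$; the $O(\beta^4)$ rate therefore genuinely requires working on the mean-free invariant manifold, where the displacement $\hat{v}-v$ is the truly small $O(\beta^2)$ quantity. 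The secondary technical point is the uniform Lipschitz bound on $h_k$, which hinges on the lower bound $|\s{V}[k]|\ge c_0$ keeping the denominator, and hence $h_k$ and its derivative, bounded along the whole trajectory.
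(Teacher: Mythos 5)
Your argument is essentially correct on the mean-free slice and reaches the paper's bound, but it swaps one key ingredient. Where the paper obtains the one-step displacement bound $\|(I-\Pi_{\mathrm{mean}})(\hat{v}-v)\|=O(\beta^2)$ by invoking the general low-SNR drift theorem (Theorem~\ref{thm:mra-lowSNR-iteration-bias-to-init} and~\eqref{eq:mra-iter-bound}, applied with one iteration), you derive it from the Einstein-from-Noise-specific structure: the exact diagonal Fourier action of $M$ (Theorem~\ref{thm:1}) combined with the expansion $\alpha_m=1-|\s{X}[m]|^2+O(\beta^4)$ (Lemma~\ref{prop:alpha-k-small-signal}), giving $|\s{\hat{X}}[m]-\s{X}[m]|=O(\beta^3)$ mode by mode. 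Both give the same $O(\beta^3)$ non-mean drift; your route is more self-contained within the noise-only model and makes the phase-freezing explicit, while the paper's route reuses machinery that also covers nonzero SNR. The first step (the expansion $C_k(\beta v)=\beta^2 h_k(v)+O(\beta^4)$ and the Lipschitz property of $h_k$ on the cross-section $\{|\s{V}[k]|\ge c_0/2\}$) and the final combination are the same as in the paper's proof.

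The flaw is your treatment of the mean component. You claim that a nonzero input mean would produce an $O(\beta^2)$ jump in $C_k$ because "$W_k$ depends on $\|\Pi_{\mathrm{mean}}x\|_2^2$," and you therefore restrict the lemma to the mean-free manifold. This is incorrect, and it matters because the lemma as stated allows arbitrary $v$ with $\|v\|\le P$. The softmax responsibilities are invariant under adding a common constant to all logits, and $\langle \xi,\mathcal{T}_\ell\,\Pi_{\mathrm{mean}}x\rangle$ is the same for every $\ell$; hence $\gamma_\ell(x,\xi)=\gamma_\ell((I-\Pi_{\mathrm{mean}})x,\xi)$, and since $\phi_{\s{X}}[k]$ for $k\neq 0$ is also unaffected by the mean, $C_k(x)$ depends only on $(I-\Pi_{\mathrm{mean}})x$. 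The same holds for the leading coefficient in~\eqref{eq:sigmaA_low_signal_short}: writing $\langle\xi,\mathcal{T}_\ell v\rangle=\langle\xi,\mathcal{T}_\ell\tilde v\rangle+\langle\xi,\Pi_{\mathrm{mean}}v\rangle$, the square of the mean term drops because $\sum_{\ell}\sin(\phi_\ell[k])=0$, and the cross term drops by the same almost-sure linear cancellation used in Step~3 of the proof of Lemma~\ref{lem:muB-sigmaA-low-magnitude-short}. Consequently the one-step annihilation of the mean ($\Pi_{\mathrm{mean}}M(x)=0$, which you correctly derive from Proposition~\ref{prop:realSpaceConvergence}) is simply invisible to $C_k$, and your Lipschitz step applied to the mean-free parts proves the lemma for general $v$ with no restriction. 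So the "main obstacle" you identify does not exist; removing that paragraph and adding the mean-invariance observation closes the gap and recovers the full statement.
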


\begin{proof}[Proof of Lemma~\ref{lem:Ck-relative-drift}]
Let $x=\beta v$ and $\hat{x}=M(x)$ and write $\hat{x}=\beta \hat{v}$.
By \eqref{eq:muB_low_signal}--\eqref{eq:sigmaA_low_signal_short}, 
\begin{align}
    C_k(\beta v)=\frac{\beta^4 a_k(v)+O(\beta^6)}{\beta^2 b_k(v)\,(1+O(\beta^2))} = \beta^2\,\frac{a_k(v)}{b_k(v)} + O(\beta^4). \label{eqn:app_M35}
\end{align}
where we have defined,
\begin{align}
    a_k(v) & \triangleq \operatorname{Var}\left( |\s{N}[k]|\sum_{\ell=0}^{d-1}\frac{1}{2d}\,\langle\xi,\mathcal{T}_\ell v\rangle^2\,\sin(\phi_\ell[k])\right), \label{eqn:ak_v_def}
    \\ 
    b_k(v) & \triangleq |\s{V}[k]|^2. \label{eqn:bk_v_def}
\end{align}
Define $c_k(v)\triangleq a_k(v)/b_k(v)$; by the nondegeneracy assumption, $\min_{k \neq 0} |\s{V}[k]| \geq c_0 > 0$, we have, $c_k(v)\in(0,\infty)$.

The drift bound in Theorem~\ref{thm:mra-lowSNR-iteration-bias-to-init} and~\eqref{eq:mra-iter-bound} implies that the nonzero-frequency part of $\hat{x}-x$ has norm $O(\beta^3)$.
Equivalently, after rescaling by $\beta$,
\begin{align}
    \label{eq:vplus-v}
    \|(I-\Pi_{\mathrm{mean}})(\hat{v}-v)\|\le C_0\,\beta^2.
\end{align}
In particular, for fixed $k\neq 0$,
\begin{align}
    \label{eq:Vk-lip}
    |\s{\hat{V}}[k]-\s{V}[k]|\le \|\hat{v}-v\|\le C_0\,\beta^2.
\end{align}

The map $v\mapsto b_k(v)=|\s{V}[k]|^2$~\eqref{eqn:bk_v_def} is smooth and Lipschitz on $\{ \|v\|\le P\}$.
Moreover, $a_k(v)$~\eqref{eqn:ak_v_def} is a variance of a polynomial functional of $v$, hence it is also smooth and Lipschitz on bounded sets.
Therefore, on any neighborhood where $b_k(v)$ is bounded away from $0$, the ratio $c_k(v)=a_k(v)/b_k(v)$ is Lipschitz.
Since $b_k(v)>0$ is fixed, there exists $\rho>0$ and $L_k<\infty$ such that whenever $\|u-v\|\le \rho$,
\begin{align}
    \label{eq:ck-lip}
    |c_k(u)-c_k(v)|\le L_k\,\|u-v\|.
\end{align}
For $\beta$ small enough, \eqref{eq:vplus-v} ensures $\|\hat{v}-v\|\le \rho$, so \eqref{eq:ck-lip} applies.

By~\eqref{eqn:app_M35}, we have,
\begin{align}
    C_k(\beta \hat{v})-C_k(\beta v) = \beta^2\big(c_k(\hat{v})-c_k(v)\big)+O(\beta^4).
\end{align}
By \eqref{eq:ck-lip} and \eqref{eq:vplus-v}, $|c_k(\hat{v})-c_k(v)|\le L_k\|\hat{v}-v\|\le C\,\beta^2$. Hence
\begin{align}
\label{eq:Ck-diff-abs}
    |C_k(\beta \hat{v})-C_k(\beta v)| \le C\,\beta^4.
\end{align}
which is \eqref{eq:Ck-relative-drift}, completing the proof. 
\end{proof}

\subsubsection{Successive empirical EM iterations}
Next, we analyze the relation between two successive iterations of $C_k(x)$, assuming the successive iterations follow the \emph{empirical} EM map. Explicitly, for $\hat{x} = M_n(x)$, where $M_n$ is the empirical EM map we would analyze the scaling of $|C_k(\hat{x}) - C_k(x)|$.

\begin{lem}[One-step empirical drift of $C_k$ in the low-magnitude regime]
\label{lem:Ck-empirical-drift}
Recall $C_k(\cdot)$ from~\eqref{eqn:app_M15}. Fix $k\neq 0$ and let $x=\beta v$ with $\beta\to 0$ and $\|v\|_2\le P$. Assume $\min_{k\neq 0}|\s V[k]|\ge c_0>0$. Let $\hat{x}_{\mathrm{pop}} \triangleq M(x)$ and $\hat{x} \triangleq M_n(x)$. There exist constants $\beta_0>0$ and $C<\infty$ (depending only on $d,k,P,c_0$) such that for all $0<\beta\le \beta_0$,
\begin{align}
    \label{eq:Ck-empirical-triangle}
    |C_k(\hat{x})-C_k(x)| \le C\,\beta^4 + C\,\beta\,\|\hat{x}-\hat{x}_{\mathrm{pop}}\|_2 .
\end{align}
In particular, on any event on which $\|M_n(x)-M(x)\|_2\le c\,\beta^3$, we have
\begin{align}
\label{eq:Ck-empirical-O4}
    |C_k(M_n(x))-C_k(x)| \le C'\,\beta^4 .
\end{align}
\end{lem}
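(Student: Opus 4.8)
The plan is to bound the empirical one-step drift by inserting the population update $\hat{x}_{\mathrm{pop}}=M(x)$ as an intermediate point and applying the triangle inequality,
\begin{align}
    |C_k(\hat{x})-C_k(x)| \le |C_k(\hat{x})-C_k(\hat{x}_{\mathrm{pop}})| + |C_k(\hat{x}_{\mathrm{pop}})-C_k(x)|.
\end{align}
The second term on the right is exactly the population drift controlled in Lemma~\ref{lem:Ck-relative-drift}, so it is at most $C\beta^4$ for all $\beta\le\beta_0$. The entire task therefore reduces to showing that the first (empirical fluctuation) term is at most $C\beta\,\|\hat{x}-\hat{x}_{\mathrm{pop}}\|_2$, after which the claimed bound \eqref{eq:Ck-empirical-triangle} follows by addition.

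For the fluctuation term I would reuse the low-magnitude factorization established in Lemma~\ref{lem:muB-sigmaA-low-magnitude-short}, namely $C_k(\beta w)=\beta^2 c_k(w)+O(\beta^4)$ with $c_k(w)=a_k(w)/b_k(w)$, where the remainder is uniform over a fixed bounded ball $\{\|w\|_2\le 2P\}$. The key preliminary step is to check that both normalized iterates remain in this regime. Writing $\hat{x}_{\mathrm{pop}}=\beta\hat{v}_{\mathrm{pop}}$ and $\hat{x}=\beta\hat{v}$, Theorem~\ref{thm:1} together with the low-magnitude expansion $\alpha_k=1-O(\beta^2)$ gives $\hat{v}_{\mathrm{pop}}=v+O(\beta^2)$, hence $\|\hat{v}_{\mathrm{pop}}\|_2\le 2P$ for small $\beta$; and since $\hat{v}-\hat{v}_{\mathrm{pop}}=\beta^{-1}(\hat{x}-\hat{x}_{\mathrm{pop}})$, the empirical direction $\hat{v}$ also lies in this ball on the event where $\hat{x}$ is close to $\hat{x}_{\mathrm{pop}}$. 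Applying the factorization at both points and subtracting yields
\begin{align}
    |C_k(\hat{x})-C_k(\hat{x}_{\mathrm{pop}})| \le \beta^2\,|c_k(\hat{v})-c_k(\hat{v}_{\mathrm{pop}})| + O(\beta^4).
\end{align}
Because $\min_{k\neq 0}|\s V[k]|\ge c_0$ keeps $b_k$ bounded away from zero on $\{\|w\|_2\le 2P\}$, the ratio $c_k$ is $L_k$-Lipschitz there (exactly as argued in the proof of Lemma~\ref{lem:Ck-relative-drift}), so $|c_k(\hat{v})-c_k(\hat{v}_{\mathrm{pop}})|\le L_k\,\|\hat{v}-\hat{v}_{\mathrm{pop}}\|_2=L_k\beta^{-1}\|\hat{x}-\hat{x}_{\mathrm{pop}}\|_2$. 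Substituting gives $\beta^2\cdot L_k\beta^{-1}\|\hat{x}-\hat{x}_{\mathrm{pop}}\|_2=L_k\beta\|\hat{x}-\hat{x}_{\mathrm{pop}}\|_2$, which is the desired fluctuation bound after absorbing $L_k$ and the $O(\beta^4)$ into the constants.

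Finally, the \emph{in particular} claim is immediate: on the event $\|M_n(x)-M(x)\|_2=\|\hat{x}-\hat{x}_{\mathrm{pop}}\|_2\le c\beta^3$, the fluctuation contribution becomes $C\beta\cdot c\beta^3=Cc\,\beta^4$, so the whole bound collapses to $C'\beta^4$, establishing \eqref{eq:Ck-empirical-O4}. I expect the main obstacle to be the \emph{uniformity} of the factorization remainder at the perturbed, data-dependent direction $\hat{v}$: one must ensure that the $O(\beta^4)$ term in $C_k(\beta w)=\beta^2 c_k(w)+O(\beta^4)$ holds with a constant uniform over the whole ball rather than only at the fixed template direction $v$, and that $\hat{v}$ provably stays inside that ball (equivalently, that $\|\hat{x}\|_2\asymp\beta$ does not blow up under the empirical map). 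This is precisely what forces restricting to the good event in \eqref{eq:Ck-empirical-O4}, since without a deviation bound on $M_n(x)-M(x)$ the empirical direction need not be controlled and the Lipschitz estimate could fail.
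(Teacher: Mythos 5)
Your proposal is correct and follows essentially the same route as the paper's proof: the same triangle-inequality decomposition through $\hat{x}_{\mathrm{pop}}$, the population term handled by Lemma~\ref{lem:Ck-relative-drift}, and the fluctuation term bounded via the low-magnitude factorization $C_k(\beta w)=\beta^2 c_k(w)+O(\beta^4)$ together with Lipschitzness of $c_k$ on a region where $b_k\ge c_0^2$, yielding the $O(\beta)$ local Lipschitz constant for $C_k$. Your explicit flag that the data-dependent direction $\hat{v}$ must be kept inside the bounded ball (which the good event guarantees) is the same implicit requirement the paper uses when it restricts to $\|\hat{x}\|_2,\|\hat{x}_{\mathrm{pop}}\|_2\le 2\beta$.
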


\begin{proof}[Proof of Lemma~\ref{lem:Ck-empirical-drift}]
Write $\hat{x}_{\mathrm{pop}}=M(x)$ and $\hat{x}=M_n(x)$. By the triangle inequality,
\begin{align}
    |C_k(\hat{x})-C_k(x)| \le |C_k(\hat{x}_{\mathrm{pop}})-C_k(x)| + |C_k(\hat{x})-C_k(\hat{x}_{\mathrm{pop}})|.
\label{eq:triangle_Ck_emp}
\end{align}
The first term is the population drift, bounded by Lemma~\ref{lem:Ck-relative-drift}:
\begin{align}
    |C_k(\hat{x}_{\mathrm{pop}})-C_k(x)| \le C\,\beta^4.
\label{eq:pop_term_beta4}
\end{align}

For the second term, we use the low-magnitude expansion~\eqref{eqn:app_M35}: for any $y$ with $\|y\|_2\le 2\beta$ we may write $y=\|y\|_2\,u$ with $\|u\|_2=1$ and obtain
\begin{align}
    C_k(y)=\|y\|_2^2\,c_k(u)+O(\|y\|_2^4), \qquad c_k(u)=\frac{a_k(u)}{b_k(u)},
\label{eq:Ck_local_form}
\end{align}
where $a_k(\cdot),b_k(\cdot)$ are defined in~\eqref{eqn:ak_v_def}--\eqref{eqn:bk_v_def}.
On the set $\{\|u\|_2\le P,\ b_k(u)\ge c_0^2\}$, the map $u\mapsto c_k(u)$ is Lipschitz, hence $y\mapsto C_k(y)$ is locally Lipschitz with constant $O(\beta)$ on the ball $\{ \|y\|_2\le 2\beta\}$.
Therefore, for $\beta$ small enough (so that $\|\hat{x}\|_2,\|\hat{x}_{\mathrm{pop}}\|_2\le 2\beta$),
\begin{align}
    |C_k(\hat{x})-C_k(\hat{x}_{\mathrm{pop}})| \le C\,\beta\,\|\hat{x}-\hat{x}_{\mathrm{pop}}\|_2,
    \label{eq:Ck_Lip_beta}
\end{align}
for a constant $C=C(d,k,P,c_0)$.
Combining~\eqref{eq:triangle_Ck_emp}--\eqref{eq:Ck_Lip_beta} gives~\eqref{eq:Ck-empirical-triangle}.
If additionally $\|\hat{x}-\hat{x}_{\mathrm{pop}}\|_2\le c\,\beta^3$, then the second term is $O(\beta^4)$ and \eqref{eq:Ck-empirical-O4} follows.
\end{proof}

\subsection{Proof of Proposition~\ref{prop:finite_sample_one_step}} \label{sec:finite_sample_one_step_proof}
To derive the convergence rate in \eqref{eq:phase_mse_conditional_limit}, recall the definitions of $\mu_{\s{B},k}$ and $\sigma_{\s{A},k}^2$ in \eqref{eqn:muB} and \eqref{eqn:sigmaB}. As shown in Appendix~\ref{sec:FourierPhasesConvergenceRate}, $\mu_{\s{B},k} > 0$ and $\sigma_{\s{A},k}^2 < \infty$. Applying Proposition~\ref{prop:convergeneInExpectation}, we obtain, conditioned on $\hat{x}^{(t)}$
\begin{align}
    \lim_{n \to \infty} \frac{\mathbb{E}\left[ \left| \phi_{\s{\hat{X}}^{(t+1)}}[k] - \phi_{\s{\hat{X}}^{(t)}}[k] \right|^2 \mid  \hat{x}^{(t)} \right] }{1/n} = \frac{\sigma_{\s{A},k}^2}{\mu_{\s{B},k}^2} < \infty,
\end{align}
which verifies \eqref{eq:phase_mse_conditional_limit}.

\begin{remark}\label{remark:constant-Ck}
The constant $C_k(\hat{x}^{(t)})$ in \eqref{eq:phase_mse_conditional_limit} is explicitly given by:
\begin{align}
    C_k(\hat{x}^{(t)}) \triangleq \frac{\sigma_{\s{A},k}^2}{\mu_{\s{B},k}^2} = \frac{\operatorname{Var}\left( \abs{\s{N}[k]} \sum_{\ell=0}^{d-1} \gamma_{\ell}(\hat{x}^{(t)}; \xi) \sin \big(\frac{2\pi k \ell} {d} +  \phi_{\s{N}_i}[k] - \phi_{\s{\hat{X}}^{(t)}}[k]\big) \right)}{\left( \mathbb{E}\left[ \abs{\s{N}[k]} \sum_{\ell=0}^{d-1} \gamma_{\ell}(\hat{x}^{(t)}; \xi) \cos \big(\frac{2\pi k \ell} {d} +  \phi_{\s{N}_i}[k] - \phi_{\s{\hat{X}}^{(t)}}[k]\big)  \right] \right)^2}.
\end{align}
\end{remark}

\subsection{Proof of Proposition~\ref{prop:low-magnitude-scaling-of-phases-convergence-rate}}\label{sec:proof-of-low-magnitude-scaling-of-phases}
Fix $t\in\{0,\dots,T-1\}$ and Let $x=\hat{x}^{(t)}$ and write $x=\beta{v}$ with $\beta=\|x\|_2$ and $\|v \|_2=1$.
By Proposition~\ref{prop:finite_sample_one_step} and Remark~\ref{remark:constant-Ck}, the one-step phase-drift constant is
\begin{align}
    C_k(\hat{x}^{(t)})=\frac{\sigma_{\s{A},k}^2(\hat{x}^{(t)})}{\mu_{\s B,k}^2(\hat{x}^{(t)})}.
\end{align}

\paragraph{Step 1: Low--magnitude scaling $C_k(\hat{x}^{(t)})=\Theta(\beta^2)$.}
Under the low--magnitude regime of Theorem~\ref{thm:low-mag-rate}, we have $\beta\to 0$.
Applying Lemma~\ref{lem:muB-sigmaA-low-magnitude-short} with $x=\beta v$ yields
\begin{align}
    \mu_{\s B,k}(\beta v)&=|\s X[k]|\,(1-|\s X[k]|^2+O(\beta^4)),\\
    \sigma_{\s{A},k}^2(\beta v)&=\beta^4\,a_k(v)+O(\beta^6),
\end{align}
where $a_k(v)$ is the variance functional in~\eqref{eq:sigmaA_low_signal_short}.
Since $|\s X[k]|=\beta|\s V[k]|$ and $|\s V[k]|>0$ for the considered frequency, we obtain
\begin{align}
    \mu_{\s B,k}^2(\beta v)=\beta^2|\s V[k]|^2\,(1+O(\beta^2)).
\end{align}
Therefore,
\begin{align}
    C_k(\beta v)=\frac{\beta^4 a_k(v)+O(\beta^6)}{\beta^2|\s V[k]|^2(1+O(\beta^2))} =\beta^2\frac{a_k(v)}{|\s V[k]|^2}+O(\beta^4).
\end{align}
On any set where $|\s V[k]|\ge c_0>0$ and $a_k(v)\in(0,\infty)$ (which holds for all non-degenerate modes), the ratio $a_k(v)/|\s V[k]|^2$ is bounded above and below by positive constants, hence $C_k(\hat{x}^{(t)})=\Theta(\beta^2)=\Theta(\|\hat{x}^{(t)}\|_2^2)$, proving~\eqref{eq:Ck_lowmag_scaling}.

\paragraph{Step 2: Slow variation across empirical iterates.}
Apply Lemma~\ref{lem:Ck-empirical-drift} to the empirical update $\hat{x}=M_n(x)=\hat{x}^{(t+1)}$.
On the event that the sample--population deviation satisfies
\begin{align}
    \|M_n(x)-M(x)\|_2 \le \varepsilon_M(n,\delta)\le c_1\beta^3,
\end{align}
which is equivalent to the assumption of $n \gtrsim \beta^{-6} \log(1/\delta)$ from the expression of $\varepsilon_M(n,\delta)$ in Proposition~\ref{lem:subgaussian-epsn} and~\eqref{eq:MRA-uniform-rate}, Lemma~\ref{lem:Ck-empirical-drift} and~\eqref{eq:Ck-empirical-O4} yields
\begin{align}
    |C_k(\hat{x}^{(t+1)})-C_k(\hat{x}^{(t)})| =|C_k(M_n(x))-C_k(x)| \le C'\,\beta^4 = C'\,\|\hat{x}^{(t)}\|_2^4,
\end{align}
which is exactly~\eqref{eq:slow_variation}.

\subsection{Proof of Proposition \ref{thm:lowerBoundAfterTiterations_rewrite}} \label{sec:proofOfPhaseErrorAfterTiterations}

For every $k \neq 0$ and $ t=0,\dots,T-1$, define the phase increments
\begin{align}
    \Delta_t  \triangleq  \phi_{\s{\hat{X}}^{(t+1)}}[k]-\phi_{\s{\hat{X}}^{(t)}}[k].
\end{align}
By telescoping,
\begin{align}
    \phi_{\s{\hat{X}}^{(T)}}[k]-\phi_{\s{\hat{X}}^{(0)}}[k] = \sum_{t=0}^{T-1}\Delta_t .
\end{align}

\paragraph{General bound.}
Using triangle inequality,
\begin{align}
    \p{\mathbb{E}\big|\phi_{\s{\hat{X}}^{(T)}}[k]-\phi_{\s{\hat{X}}^{(0)}}[k]\big|^2}^{1/2}
    = \p{\mathbb{E}\big|\textstyle\sum_{t=0}^{T-1}\Delta_t\big|^2}^{1/2}
    \le \sum_{t=0}^{T-1}\p{\mathbb{E}|\Delta_t|^2}^{1/2}.
\end{align}
Applying the assumption~\eqref{eqn:eps_assump_rewrite} gives
\begin{align}
    \p{\mathbb{E}\big|\phi_{\s{\hat{X}}^{(T)}}[k]-\phi_{\s{\hat{X}}^{(0)}}[k]\big|^2}^{1/2}  \le \sum_{t=0}^{T-1}\sqrt{\epsilon^{(t)}}, \label{eqn:telescoping-phase-noise-accum}
\end{align}
and squaring yields~\eqref{eqn:lowerBoundMultipleIterations_rewrite}.

\paragraph{Slowly varying envelope.}
Assume~\eqref{eqn:slowly_varying_envelope_rewrite}.  Write
\begin{align}
    \epsilon^{(t)}=\epsilon^{(0)}\p{1+\delta_t},
\end{align}
where,
\begin{align}
    \delta_t  = \frac{\alpha t}{\epsilon^{(0)}}+O \p{\frac{(\alpha t)^2}{\epsilon^{(0)}}},
    \qquad |\delta_t|\ll 1.
\end{align}
A Taylor expansion of the square root gives, uniformly for $\alpha t\ll1$,
\begin{align}
    \sqrt{\epsilon^{(t)}} &= \sqrt{\epsilon^{(0)}}\p{1+\frac{1}{2}\delta_t+O(\delta_t^2)}
    \\
    &= \sqrt{\epsilon^{(0)}} + \frac{\alpha t}{2\sqrt{\epsilon^{(0)}}} + O \p{\frac{(\alpha t)^2}{\sqrt{\epsilon^{(0)}}}}.
\end{align}
Summing from $t=0$ to $T-1$,
\begin{align}
    \sum_{t=0}^{T-1}\sqrt{\epsilon^{(t)}} &= T\sqrt{\epsilon^{(0)}} +\frac{\alpha}{2\sqrt{\epsilon^{(0)}}}\sum_{t=0}^{T-1}t + O \p{\frac{\alpha^2}{\sqrt{\epsilon^{(0)}}}\sum_{t=0}^{T-1}t^2}
    \\
    & = T\sqrt{\epsilon^{(0)}} + \frac{\alpha}{4\sqrt{\epsilon^{(0)}}}T(T-1) + O \p{\alpha^2 T^3}.
\end{align}
Plugging this into~\eqref{eqn:telescoping-phase-noise-accum} gives
\begin{align}
    \mathbb{E}\big|\phi_{\s{\hat{X}}^{(T)}}[k]-\phi_{\s{\hat{X}}^{(0)}}[k]\big|^2
    &\le \p{T\sqrt{\epsilon^{(0)}} + O(\alpha T^2) + O(\alpha^2T^3)}^2
    \\ &= T^2\epsilon^{(0)} + O(\alpha T^3) + O(\alpha^2T^4).
    \label{eq:phase-accum-slow}
\end{align}

\section{Mitigation of initialization bias using mini-batch optimization} \label{sec:adam}

In this section, we analyze mini-batch gradient methods as an alternative optimization approach aimed at mitigating Einstein from Noise phenomenon. Interestingly, while these methods do not exhibit the Ghost of Newton phenomenon (that is, they do not initially converge near Newton’s ground truth image and then diverge), the final reconstructions they produce is similar to the diverged outputs typically seen after EM has drifted away from Newton’s ground truth image.

We emphasize that, although mini-batching is a standard ingredient in stochastic optimization, in low-SNR MRA it mitigates a model-specific failure mechanism. In the Einstein from Noise regime, full-batch EM updates preserve the Fourier phases of the current iterate. This “phase-locking” behavior promotes template imprinting: spurious structure aligned with the initialization can be repeatedly reinforced across iterations even when the observations contain no signal. Mini-batching disrupts this reinforcement by making each update depend on a different random subset of observations, thereby perturbing the empirical alignment statistics from iteration to iteration and weakening the systematic propagation of the template’s Fourier phases. As a result, the reconstruction exhibits substantially less bias toward the initial template in the Einstein-from-Noise regime, consistent with Figure~\ref{fig:9}(b).


\subsection{The mini-batching procedure description}

In the mini-batching approach, only a small portion of the dataset, called a batch, is used at each iteration, rather than the entire dataset.
As analyzed in Section~\ref{sec:EfN}, the MSE of the Einstein from Noise effect, after $t$ iterations, scales as $t^2/n$, where $n$ is the total number of observations. This scaling suggests a dual benefit to using mini-batches: first, each iteration relies on fewer observations, which reduces the cumulative bias from previous estimates; second, more iterations are performed, mitigating the bias even further over time. As a result, the final structure is significantly less biased toward the initial template. Indeed, Figure~\ref{fig:9} demonstrates that, in the Einstein from Noise regime, mini-batching results in a substantially lower bias toward the initial template compared with the EM algorithm.

\paragraph{Mini-batching gradient-descent algorithms.} 
To empirically assess the effectiveness of the mini-batching strategy in mitigating the Einstein from Noise phenomenon, we employ a momentum-based mini-batch optimization procedure, as outlined in Algorithm~\ref{alg:momentumSGD}. This class of optimizers aims to maximize the same log-likelihood function as EM but uses gradients computed from mini-batches instead of the full dataset. In contrast to full-batch EM and hard-assignment methods, which process the entire dataset in each iteration, mini-batch methods update the parameters using only a subset of the data at each step. In our experiments, we focus on stochastic gradient descent (SGD) variants that incorporate momentum and adaptive learning rates, such as ADAM~\cite{kingma2014adam}, Ranger~\cite{wright2021ranger21}, and Yogi~\cite{zaheer2018adaptive}. We emphasize that mitigating the Einstein from Noise effect through the mini-batching optimization technique is a fundamental property, not merely an artifact of implementation choices, as elaborated in the previous paragraph. Further implementation details are provided in Appendix~\ref{sec:empiricalDetails}.

\begin{algorithm}[t!]
  \caption{\texttt{General momentum-based SGD for multi-reference alignment} \label{alg:momentumSGD}}
\textbf{Input:} Initial estimate $\hat{x}^{(0)} = x_{\text{template}}$, observations $\{y_i\}_{i=1}^{n}$, number of iterations $T$, learning rate schedule $\{\alpha_t\}_{t=0}^{T-1}$, momentum parameters $\beta_1$, $\beta_2$ and momentum update function $\mathrm{UpdateMoments}(\cdot)$, small constant $\epsilon > 0$, and mini-batch index sets $\{\mathcal{B}_t\}_{t=0}^{T-1}$ with $\mathcal{B}_t \subset \{0,1,\dots,n-1\}$.\\
\textbf{Output:} Final estimate $\hat{x}^{(T)}$.\\
\textbf{Initialize:} $m^{(0)} = 0$, $v^{(0)} = 0$.\\
\textbf{For each iteration $t = 0, \ldots, T-1$:}
\begin{enumerate}
    \item Sample a mini-batch of observations $\{ y_i \}_{i \in \mathcal{B}_t}$.
    \item For each $i \in \mathcal{B}_t$ and $\ell = 0, \ldots, d-1$, compute the soft-assignment probabilities:
    \begin{align}
        \gamma_\ell(\hat{x}^{(t)}, y_i) \triangleq \frac{\exp\left( y_i^\top \left( \mathcal{T}_\ell \hat{x}^{(t)} \right) \right)}{\sum_{r=0}^{d-1} \exp\left( y_i^\top \left( \mathcal{T}_r \hat{x}^{(t)} \right) \right)}
    \end{align}
    \item Compute the stochastic gradient:
    \begin{align}
        g^{(t)} \triangleq \hat{x}^{(t)} - \frac{1}{|\mathcal{B}_t|} \sum_{i \in \mathcal{B}_t} \sum_{\ell=0}^{d-1} \gamma_\ell(\hat{x}^{(t)}, y_i) \cdot \left( \mathcal{T}_\ell^{-1}  y_i \right)
    \end{align}
    \item Update moments using the optimizer-specific rule:
    \begin{align}
        m^{(t+1)}, v^{(t+1)} \gets \mathrm{UpdateMoments}(m^{(t)}, v^{(t)}, g^{(t)}, \beta_1, \beta_2, t+1)
    \end{align}
    \item Update the estimate:
    \begin{align}
        \hat{x}^{(t+1)} \gets \hat{x}^{(t)} - \alpha_t \cdot \frac{m^{(t+1)}}{\sqrt{v^{(t+1)}} + \epsilon}
    \end{align}
\end{enumerate}
\end{algorithm}

\paragraph{Comparison between EM, hard-assignment, and mini-batch SGD.}
Figure~\ref{fig:9} compares the reconstruction performance of EM, hard-assignment, and a momentum-based mini-batch SGD algorithm across a range of SNR values, illustrating the behavior of each method in the different regimes. For the SGD-based algorithm (the Ranger optimization method), we use mini-batches of size 256 and run for 800 iterations with a learning rate schedule based on exponential decay. 

\begin{figure*}[!t]
    \centering
    \includegraphics[width=0.95 \linewidth]{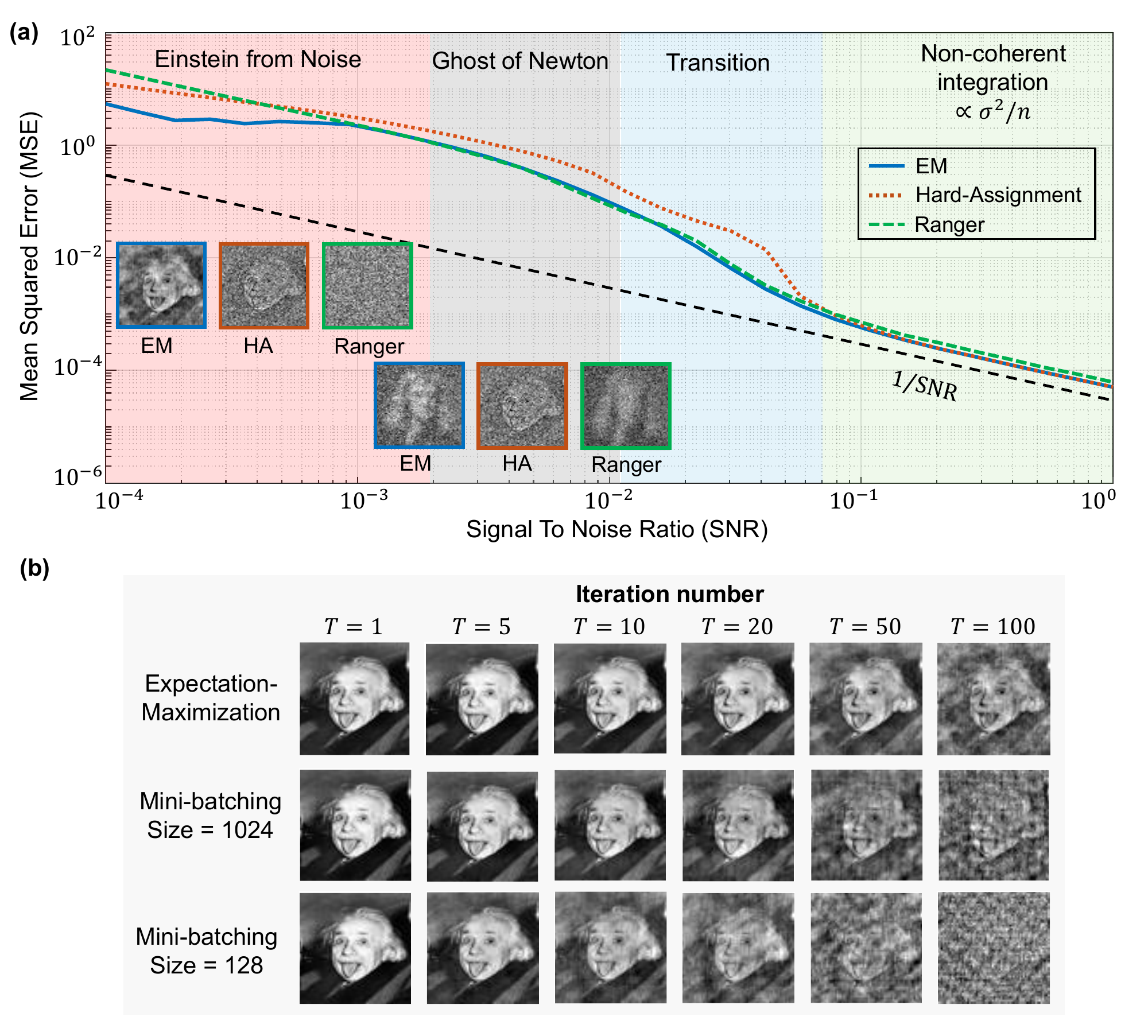}
    \caption{\textbf{Comparison of EM, hard-assignment, and mini-batch SGD-based reconstruction algorithms across varying SNR regimes.} \textbf{(a)} In the Einstein from Noise regime, the Ranger-based method exhibits the least bias toward the initial template (Einstein). In the Ghost-of-Newton regime, both EM and Ranger recover the ground truth structure (Newton), while the hard-assignment method erroneously converges to the initialization (Einstein). 
    Visually, EM achieves higher reconstruction accuracy than Ranger in this setting. In the high-SNR regime, corresponding to the non-coherent integration limit, the EM and hard-assignment methods perform similarly, while Ranger shows slightly reduced reconstruction accuracy. The simulation parameters are: image size of $d = 64 \times 64$, number of observations $n = 2 \times 10^4$, and $T = 200$ iterations. Each data point is averaged over 30 Monte Carlo trials.  \textbf{(b)} Effect of batch size on bias toward the Einstein template in the Einstein from Noise regime. Three setups are compared: EM with $n = 2 \times 10^4$ observations, and ADAM with batch sizes of 1024 and 128. A clear trend emerges: larger batch sizes increase bias toward the initial template. 
    }
    \label{fig:9} 
\end{figure*}

In the low-SNR regime, the mini-batch SGD method performs comparably to EM and significantly outperforms the hard-assignment variant. This is consistent with the fact that both EM and mini-batch SGD aim to optimize the same log-likelihood objective \eqref{eqn:logLikelihoodMRA}, whereas the hard-assignment objective function \eqref{eqn:mraHardTargetFunction} diverges from this objective under high noise due to its reliance on hard label assignments. In the high-SNR regime, where noise plays a less dominant role, the performance of EM and hard-assignment algorithms is equal, while the SGD-based method may exhibit slightly reduced reconstruction accuracy.

Figure~\ref{fig:9}(b) demonstrates that decreasing the batch size at each iteration significantly reduces bias, effectively mitigating the Einstein from Noise effect. This observation is consistent with the trends seen in Figure~\ref{fig:7} and the analysis in Section~\ref{sec:EfN}, which indicate that the Einstein from Noise effect becomes more pronounced as the number of observations increases.

In Appendix~\ref{sec:empiricalDetails}, we compare three mini-batch SGD optimizers: ADAM, Yogi, and Ranger. While all three achieve similar levels of reconstruction accuracy, Ranger shows the lowest bias toward the initial template (Einstein), suggesting greater robustness to initialization. Moreover, none of these optimizers exhibit the Ghost of Newton effect---namely, the non-monotonic MSE behavior observed with EM, where error initially decreases before increasing and drifting away from the true underlying image (Newton).

\paragraph{Computational cost.}
A key advantage of mini-batching is its ability to reduce computational cost significantly. For example, using a batch size of 256 instead of the entire dataset size of $n = 2 \times 10^4$ observations enables reducing the number of observations processed per iteration by two orders of magnitude. In our implementation, the mini-batch SGD method was run for $T = 800$ iterations, compared to $T = 200$ iterations for the full-batch EM algorithm. Despite the increased number of iterations, the overall runtime was reduced by more than an order of magnitude while maintaining comparable reconstruction performance and substantially reducing the Einstein from Noise bias (see additional details on computational costs and runtime in Appendix~\ref{sec:empiricalDetails}).

\paragraph{A hybrid approach.}
The complementary strengths of full-batch and mini-batch methods motivate hybrid strategies that seek to combine their respective advantages. While both full-batch EM and mini-batch SGD aim to maximize the likelihood function in \eqref{eqn:logLikelihoodMRA}, they differ in their trade-offs. Full-batch EM guarantees monotonic improvement of the log-likelihood but is computationally intensive and sensitive to initialization. In contrast, mini-batch methods offer greater efficiency and robustness to initialization, albeit without guarantees of monotonic convergence.

A practical hybrid approach begins with a mini-batch SGD, which reduces bias from the initialization and moves closer to the maximum likelihood solution. The resulting estimate then serves as the initialization for a small number of EM iterations, refining the reconstruction without requiring the full computational cost of EM with multiple iterations. Similarly, the hybrid approach can start with small batches and gradually increase their size as the iterations progress.  
This hybrid strategy is implemented in state-of-the-art cryo-EM software pipelines. For example, the RELION and cryoSPARC software~\cite{kimanius2021new,punjani2017cryosparc} use a two-stage process: first, an SGD-based algorithm with small, adaptively sized batches generates a 3D ab initio model. This initial model is then refined using EM to obtain a high-resolution reconstruction. The approach combines the scalability and reduced bias of mini-batching---effectively mitigating the Einstein from Noise phenomenon---with the statistical rigor and convergence guarantees of EM.

\subsection{Empirical simulation details} \label{sec:empiricalDetails}
Here, we detail the simulation parameters used for the reconstruction experiments employing the SGD-based optimizers ADAM \cite{kingma2014adam}, Ranger \cite{wright2021ranger21}, and Yogi \cite{zaheer2018adaptive}.
For the results shown in Figure~\ref{fig:9}(a), the following settings were used: a batch size of 256, and an exponentially decaying learning rate of the form $\alpha_t = \alpha_0 \exp(-\gamma t)$, with an initial value $\alpha_0 = 0.95$ and decay rate $\gamma = 0.99$. The total number of iterations was set to $T = 800$. The momentum parameters were $\beta_1 = 0.9$ and $\beta_2 = 0.999$. For the Ranger optimizer, we additionally used a look-ahead update every $k=5$ iteration with a look-ahead step size of 0.5.

Figure~\ref{fig:app_SGDcomparison} compares the performance of ADAM, Yogi, and Ranger in terms of their sensitivity to the initialization of the Einstein template. All optimizers were run using the same batch size and hyperparameters as described above. Empirically, Ranger demonstrates the lowest bias toward the initial template, indicating greater robustness to initialization. The reconstruction accuracy (in terms of MSE) of all optimizers is similar.

\begin{figure*}[!t]
    \centering
    \includegraphics[width=0.8 \linewidth]{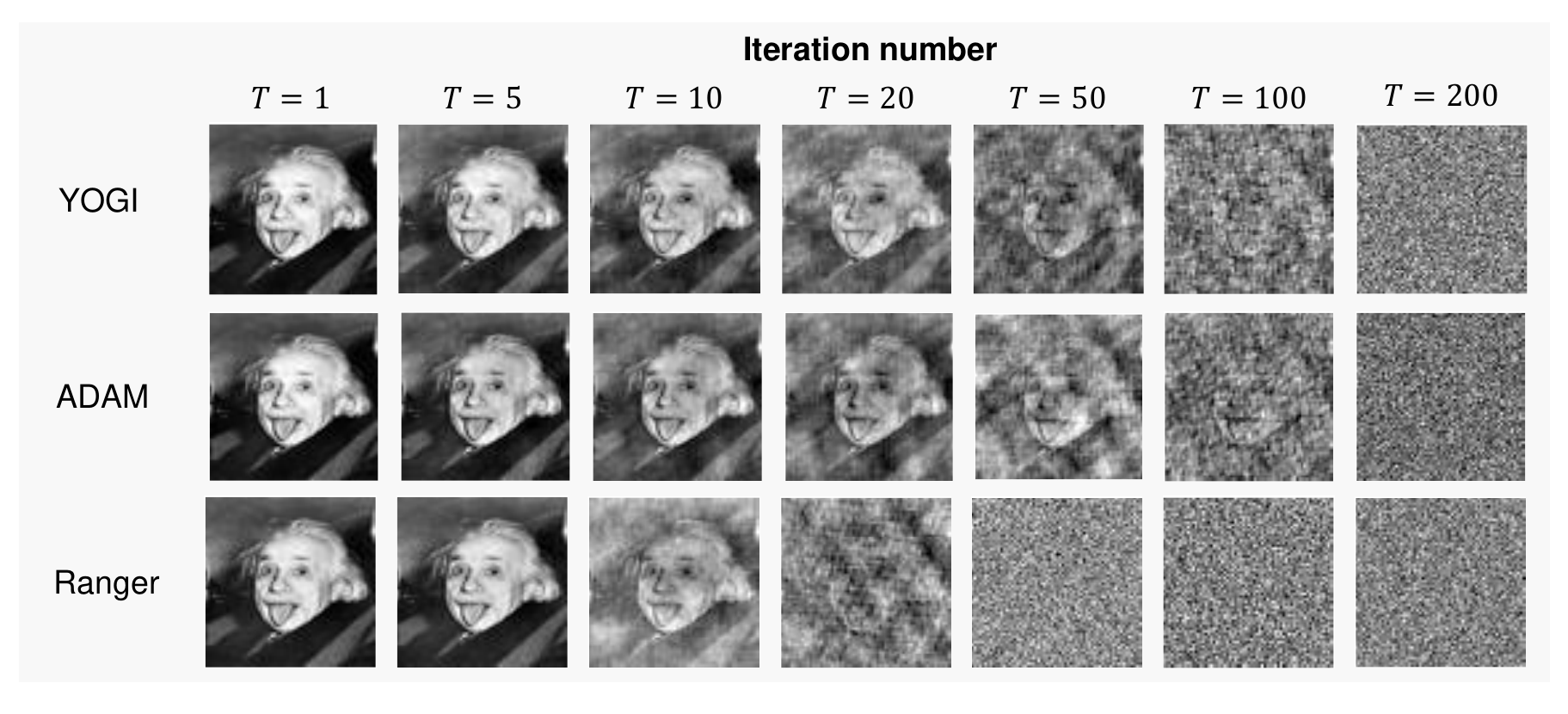}
    \caption{\textbf{Comparison of SGD optimizers and their sensitivity to the initial template.} The Ranger optimizer exhibits the least bias toward the initial Einstein template, compared to the other optimizers.}
    \label{fig:app_SGDcomparison} 
\end{figure*}

\paragraph{Computational running time comparison.}
The EM and hard-assignment algorithms were implemented in MATLAB. For images of size $d = 64 \times 64$, with $n = 2 \times 10^4$ observations and $T = 200$ iterations, a full run required approximately 15 minutes. In contrast, the mini-batch SGD optimizers, using batch sizes of 256 observations and $T = 800$ iterations, completed in about 0.5 minutes under the same conditions. All experiments were performed on a 12th-generation Intel Core i7-12700H CPU with 32 GB of RAM.

\paragraph{Hard-assignment in the intermediate SNR regime.} 
As discussed in the main text, empirical simulations suggest that the Ghost of Newton phenomenon arises exclusively in the EM algorithm. In contrast, the hard-assignment algorithm exhibits a sharp transition: it moves directly from reconstructing the initial Einstein template in the low-SNR regime (Einstein from Noise) to accurately recovering the Newton signal, without undergoing the gradual drift characteristic of the Ghost of Newton effect.

Figure~\ref{fig:GoN_hardAssign} illustrates this comparison between the EM and hard-assignment algorithms in the intermediate SNR regime. While the EM algorithm exhibits the Ghost of Newton behavior, the hard-assignment algorithm does not. Instead, it reconstructs the initial Einstein template, rather than the ground truth image of Newton. This behavior persists across a range of SNR values, as demonstrated in the simulations shown in Figure~\ref{fig:GoN_hardAssign}.

\begin{figure*}[!t]
    \centering
    \includegraphics[width=1.0 \linewidth]{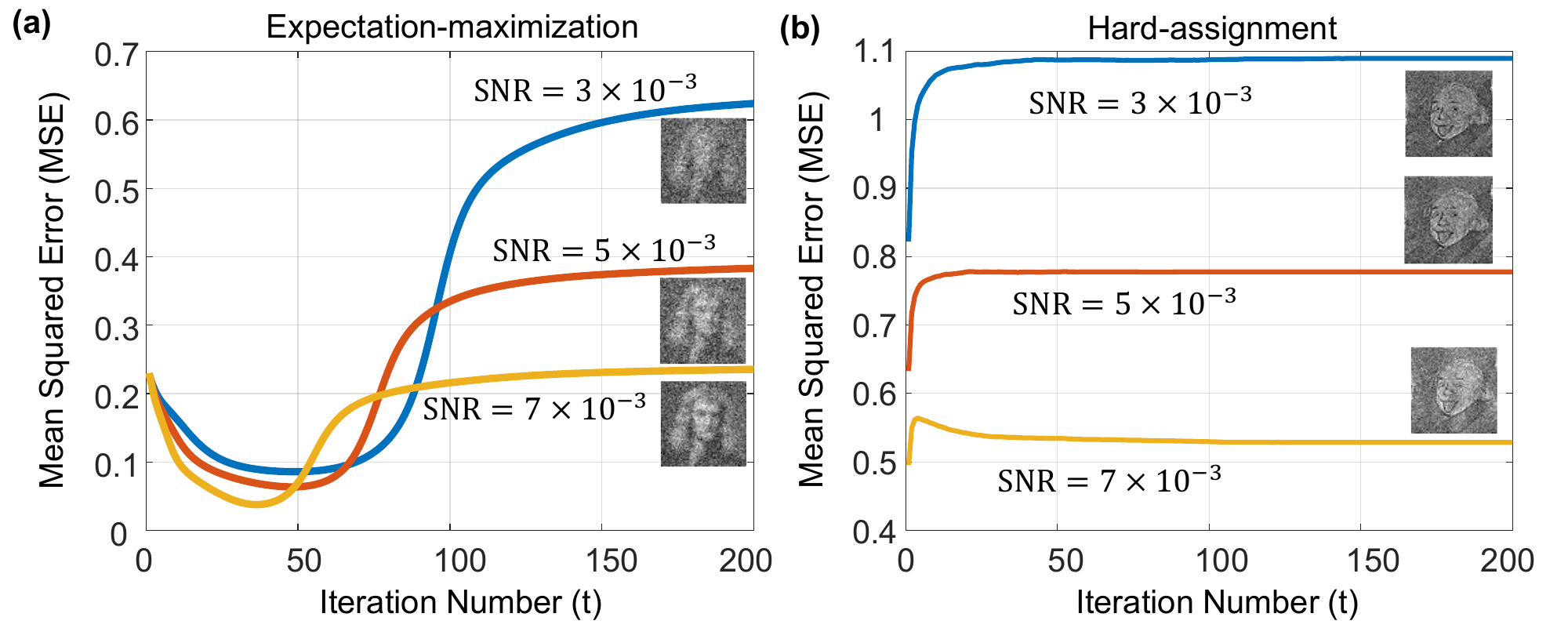}
    \caption{\textbf{Comparison between expectation-maximization (EM) and hard-assignment algorithms in the intermediate SNR regime.} Panels (a) and (b) compare the performance of the EM and hard-assignment algorithms, respectively, in the intermediate SNR regime, where the Ghost of Newton phenomenon emerges in the EM algorithm. Note that the y-axes differ in scale across the two panels. As analyzed in the main text, EM exhibits this effect, producing reconstructions that resemble a noisy version of the ground truth of Newton. In contrast, the hard-assignment algorithm does not display the Ghost of Newton behavior; instead, it exhibits the Einstein from Noise phenomenon, producing reconstructions that resemble the Einstein template rather than the true Newton signal. Simulation parameters: image size $d = 64 \times 64$, number of observations $n = 2 \times 10^4$.}
    \label{fig:GoN_hardAssign} 
\end{figure*}

\subsection{Relation between the Ghost of Newton and Einstein from Noise regimes}

To clarify how the Einstein from Noise and Ghost of Newton effects are connected, it is useful to decompose a single empirical EM update.  
Recall that $y_i=\mathcal{T}_{\ell_i}x^\star+\xi_i$ under~\eqref{eqn:mainModel1D}. Plugging this into the EM update~\eqref{eqn:generalizedEfNEqnSoft} yields the split
\begin{align}
    \hat{x}^{(t+1)} = \hat{x}_{\mathrm{signal}}^{(t+1)}+\hat{x}_{\mathrm{noise}}^{(t+1)},
    \label{eqn:combinedSignalAligned_rewrite}
\end{align}
where
\begin{align}
    \hat{x}_{\mathrm{signal}}^{(t+1)} &\triangleq \frac{1}{n}\sum_{i=1}^n\sum_{\ell=0}^{d-1} \gamma_\ell(\hat{x}^{(t)},y_i)\, \mathcal{T}_\ell^{-1}\p{\mathcal{T}_{\ell_i}x^\star},
    \label{eqn:alignedSignalTerm_rewrite}
    \\
    \hat{x}_{\mathrm{noise}}^{(t+1)} &\triangleq \frac{1}{n}\sum_{i=1}^n\sum_{\ell=0}^{d-1} \gamma_\ell(\hat{x}^{(t)},y_i)\, \mathcal{T}_\ell^{-1}\xi_i .
    \label{eqn:alignedNoiseTerm_rewrite}
\end{align}
Here $\gamma_\ell(\hat{x}^{(t)},y_i)$ are the responsibilities from~\eqref{eqn:softmaxGamma}.  
This decomposition makes explicit that EM simultaneously averages an \emph{aligned signal} term and an \emph{aligned noise} term, with the same data-dependent weights.

Figure~\ref{fig:4}(b-c) visualizes these two contributions across iterations. In the first iterations, the estimate $\hat{x}^{(t)}$ typically exhibits a hybrid appearance: it already contains recognizable Newton-like features from the aligned signal term~\eqref{eqn:alignedSignalTerm_rewrite}, yet it also retains visible Einstein-like structure inherited from the initialization. This mixture is exactly what one would expect from the Einstein from Noise mechanism, namely, the tendency of the responsibilities to align the noise with the current iterate, combined with genuine contraction of the signal toward $x^\star$ at nonzero SNR. As $t$ grows, the signal term continues to sharpen Newton, and the hybrid gradually becomes more Newton-dominated.

The crucial transition underlying the Ghost of Newton is that the aligned noise term~\eqref{eqn:alignedNoiseTerm_rewrite} evolves with the iterate. Once $\hat{x}^{(t)}$ becomes sufficiently Newton-like, the same alignment weights begin to ``organize'' the noise around this Newton-like structure. Consequently, $\hat{x}_{\mathrm{noise}}^{(t+1)}$ no longer resembles the original Einstein template; instead, it forms a structured, sample-dependent perturbation that visually echoes Newton. This structured noise then adds coherently to the update and can pull the iterate away from the ground truth, producing the degradation seen in Figure~\ref{fig:4}(a). 

Viewed through this lens, Einstein from Noise and Ghost of Newton are two manifestations of the same underlying mechanism, alignment of noise to the current estimate, operating in different SNR regimes. In the Einstein from Noise regime ($\mathrm{SNR}\approx 0$), there is no signal term to counteract this bias, so aligned noise alone reconstructs the template. In the intermediate-SNR regime, the signal term initially dominates and drives the estimate toward Newton, but the aligned noise gradually reorients to that evolving structure and eventually induces a drift away from $x^\star$.

\end{appendices}

\end{document}